\newtheorem{theorem}{Theorem}[]
\newtheorem{question}{Question}[]
\newtheorem{fact}{Fact}[]
\newtheorem{note}{Note}[]
\newtheorem{definition}{Definition}[]
\newtheorem{corollary}{Corollary}[]
\newtheorem{lemma}{Lemma}[]
\title{Minimum+1 Steiner Cuts and Dual Edge Sensitivity Oracle: Bridging the Gap between Global cut and (s,t)-cut}
\author[1]{Koustav Bhanja}
\affil[1]{Indian Institute of Technology Kanpur, India}
\affil[ ]{\texttt{kbhanja@cse.iitk.ac.in}}
\date{}
\begin{document}

\begin{titlepage}

\maketitle

\begin{abstract}
Let $G=(V,E)$ be an undirected multi-graph on $n=|V|$ vertices and $S\subseteq V$ be a Steiner set in $G$. 
Steiner cut is a fundamental concept; moreover, global cut ($|S|=n$), as well as $(s,t)$-cut ($|S|=2$), is just a special case of Steiner cut. 
We study Steiner cuts of capacity minimum+1, and as an important application, we provide a dual edge Sensitivity Oracle for Steiner mincut -- a compact data structure for efficiently reporting a Steiner mincut after failure/insertion of any pair of edges.

A compact data structure for cuts of capacity minimum+1 has been designed for both global cuts [STOC 1995] and $(s,t)$-cuts [TALG 2023]. Moreover, both data structures are also used crucially to design a dual edge Sensitivity Oracle for their respective mincuts. 
Unfortunately, except for these two extreme scenarios of Steiner cuts, no generalization of these results is known. Therefore, to address this gap, we present the following first results on Steiner cuts for any $S$ with $2\le |S|\le n$.

\begin{enumerate}
    \item \textbf{Data Structure for Minimum+1 Steiner Cut:} \textit{There is an ${\mathcal O}(n(n-|S|+1))$ space data structure that can determine in ${\mathcal O}(1)$ time whether a given pair of vertices are separated by a Steiner cut of capacity at least minimum+1. 
    It can report such a cut, if it exists, in ${\mathcal O}(n)$ time, which is worst case optimal.}
    
    \item \textbf{Dual Edge Sensitivity Oracle:} We design the following pair of data structures.\\  
     \textbf{(a)} \textit{There is an ${\mathcal O}(n(n-|S|+1))$ space data structure that, after the failure or insertion of any pair of edges in $G$, can report the capacity of Steiner mincut in ${\mathcal O}(1)$ time and a Steiner mincut in ${\mathcal O}(n)$ time, which is worst case optimal.}\\
    \textbf{(b)} \textit{ If we are interested in reporting only the capacity of $S$-mincut, there is a more compact data structure that occupies ${\mathcal O}((n-|S|)^2+n)$ space and can report the capacity of Steiner mincut in ${\mathcal O}(1)$ time after the failure or insertion of any pair of edges.}

    \item \textbf{Lower Bound for Sensitivity Oracle:} \textit{For undirected multi-graphs, for every Steiner set $S\subseteq V$, any data structure that, after the failure or insertion of any pair of edges, can report the capacity of Steiner mincut must occupy $\Omega((n-|S|)^2)$ bits of space, irrespective of the query time.} 
\end{enumerate}

\noindent
To arrive at our results, we provide several techniques, especially a generalization of the \textsc{3-Star Lemma} given by Dinitz and Vainshtein [SICOMP 2000], which is of independent interest.

Our results achieve the same space and time bounds of the existing results for the two extreme scenarios of Steiner cuts -- global and $(s,t)$-cut. In addition, the space occupied by our data structures in (1) and (2) reduces as $|S|$ tends to $n$. Also, they occupy \textit{subquadratic} space if $|S|$ is \textit{close} to $n$. 
\end{abstract}
\end{titlepage}

\tableofcontents{}
\pagebreak
\pagenumbering{arabic}
\newpage
\section{Introduction}
\label{sec : introduction}
 In the real world, networks (or graphs) experience changes as a result of the failure/insertion of edges/vertices. They may lead to a change in the solution of various graph problems. Although these failures/insertions may occur at any point in the network at any time, they are mostly transient in nature. The aim is to quickly report the solution of the given problem once any failure/insertion has occurred in the network. This motivates us to design Sensitivity Oracles for various graph problems. A \textit{Sensitivity Oracle} for a graph problem is a compact data structure that can efficiently report the solution of the given problem after the failure/insertion of a set of edges/vertices. Sensitivity Oracles have been designed for many fundamental graph problems -- shortest paths  \cite{DBLP:journals/talg/BaswanaCHR20, DBLP:conf/stoc/BiloCCC0KS23}, reachability  \cite{DBLP:conf/soda/ItalianoKP21, DBLP:conf/stoc/BaswanaCR16}, traversals \cite{DBLP:conf/podc/Parter15}, etc. 
The (minimum) cut of a graph is also a fundamental concept consisting 
of numerous applications \cite{DBLP:books/daglib/0069809}. In this manuscript, for undirected multi-graphs, we present (1) a compact data structure for Steiner cuts of capacity minimum+1 and, as an important application, (2) a dual edge Sensitivity Oracle for Steiner mincut -- a data structure for handling insertion/failure of any pair of edges. The results in $(1)$ and $(2)$ provide the first generalization to the existing results for global mincut \cite{DBLP:conf/stoc/DinitzN95} and $(s,t)$-mincut  \cite{DBLP:journals/talg/BaswanaBP23} while matching their bounds on both space and time. In addition, our Sensitivity Oracle in (2) is the first result on Steiner mincut for handling multiple insertions/failures of edges after the single edge Sensitivity Oracles in undirected multi-graphs \cite{DBLP:conf/soda/BaswanaP22, DBLP:conf/soda/DinitzV95, DBLP:conf/stoc/DinitzV94, DBLP:journals/siamcomp/DinitzV00}. 

Let $G=(V,E)$ be an undirected connected multi-graph on $n=|V|$ vertices and $m=|E|$ edges ($E$ is a multi-set). There are mainly two types of cuts in a graph -- global cut and $(s,t)$-cut. 
A \textit{(global) cut} is a set of vertices $C\subset V$. Let $C$ be any cut in $G$. Cut $C$ is said to \textit{separate} a pair of vertices $u,v$ if $u\in C$ and $v\in \overline{C}$ or vice versa. For any pair of vertices $s,t$, cut $C$ is said to be an \textit{$(s,t)$-cut} if $C$ separates $s$ and $t$. An edge $e=(u,v)$ is a \textit{contributing edge} of $C$ if $C$ separates endpoints $u$ and $v$ of $e$.
Let $S\subseteq V$ be a \textit{Steiner set}. A generalization of global cut and $(s,t)$-cut is captured using Steiner cut and is formally defined as follows.
\begin{definition} [Steiner cut] \label{def : steiner cut}
    A cut $C$ is said to be a Steiner cut if there is a pair of vertices $s_1,s_2\in S$ such that $s_1\in C$ and $s_2\notin C$.
\end{definition}
The \textit{edge-set} of a cut $C$ is the set of contributing edges of $C$. The \textit{capacity} of a cut $C$, denoted by $c(C)$, is the number of edges belonging to the edge-set of $C$. A Steiner cut having the least capacity is known as a Steiner mincut of $G$, denoted by {\em $S$-mincut.} We denote the capacity of $S$-mincut by $\lambda_S$.  
By Definition \ref{def : steiner cut}, the two well-known minimum cuts of a graph are just two special cases of $S$-mincut, namely, \textit{global mincut} when $S=V$ and $(s,t)$-\textit{mincut} when $S=\{s,t\}$. 

There has been extensive research in designing compact data structures for minimum cuts \cite{gomory1961multi, dinitz1976structure, DBLP:journals/mp/PicardQ80, DBLP:journals/jacm/KawarabayashiT19, DBLP:journals/anor/ChengH91} as well as cuts of capacity near minimum \cite{DBLP:conf/stoc/DinitzN95, DBLP:conf/focs/Benczur95, DBLP:journals/jacm/KawarabayashiT19, DBLP:journals/talg/BaswanaBP23} 
that can efficiently answer the following fundamental query. \\ 

\noindent
   \textsc{cut($u,v, \kappa$)}: \textit{For any given pair of vertices $u,v$, determine if the least capacity cut separating $u$ and $v$ has capacity $\kappa$. Then, report a cut $C$ of capacity $\kappa$ separating $u,v$, if it exists.}\\



\noindent
Dinitz, Karzanov, and Lomonosov \cite{dinitz1976structure} designed an ${\mathcal O}(n)$ space cactus structure for storing all global mincuts in undirected weighted graphs. Likewise, Picard and Queyranne \cite{DBLP:journals/mp/PicardQ80} designed an ${\mathcal O}(m)$ space directed acyclic graph (dag) for storing all $(s,t)$-mincuts in directed weighted graphs. Both structures act as a data structure for answering query \textsc{cut} in ${\mathcal O}(n)$ time 
for their respective mincuts. 
It has been observed that a quite richer analysis was required to arrive at a compact structure for all $S$-mincuts. In a series of remarkable results by Dinitz and Vainshtein \cite{DBLP:conf/stoc/DinitzV94, DBLP:conf/soda/DinitzV95, DBLP:journals/siamcomp/DinitzV00}, they showed that there is an ${\mathcal O}(\min\{n\lambda_S,m\})$ space structure, known as \textit{Connectivity Carcass}, for storing all $S$-mincuts in an undirected multi-graph, which generalizes the results for $(s,t)$-mincut \cite{DBLP:journals/mp/PicardQ80} and global mincut \cite{dinitz1976structure}. They present an incremental algorithm for maintaining the structure until the capacity of $S$-mincut increases. This structure also helps in answering query \textsc{cut} for $S$-mincut in ${\mathcal O}(m)$ time and is the first single edge Sensitivity Oracle for $S$-mincut in undirected multi-graph. 

There also exist compact data structures for both global cut \cite{DBLP:conf/stoc/DinitzN95, DBLP:journals/jacm/KawarabayashiT19, DBLP:conf/focs/Benczur95} and $(s,t)$-cut \cite{DBLP:journals/talg/BaswanaBP23} of capacity beyond the minimum, and they are applied to establish many important algorithmic results \cite{DBLP:journals/jacm/KawarabayashiT19, DBLP:conf/focs/Benczur95, DBLP:conf/stoc/DinitzN95, DBLP:journals/talg/BaswanaBP23, DBLP:journals/talg/GoranciHT18}. Specifically for cuts of capacity minimum+1, Dinitz and Nutov \cite{DBLP:conf/stoc/DinitzN95} gave an ${\mathcal O}(n)$ space $2$-level cactus model that stores all global cuts of capacity minimum+1 in undirected multi-graphs. Their structure answers query \textsc{cut} in ${\mathcal O}(n)$ time and is used crucially for incremental maintenance of minimum+2 edge connected components \cite{DBLP:conf/stoc/DinitzN95}, which generalizes the results of Galil and Italiano \cite{DBLP:journals/siamcomp/GalilI93}, and Dinitz and Westbrook \cite{DBLP:conf/istcs/Dinitz93, DBLP:journals/algorithmica/DinitzW98}.
Similarly, for minimum+1 $(s,t)$-cuts in multi-graphs, Baswana, Bhanja, and Pandey \cite{DBLP:journals/talg/BaswanaBP23} designed an ${\mathcal O}(n^2)$ space data structure that answers query \textsc{cut} in ${\mathcal O}(n)$ time. This data structure acts as the key component in the design of a dual edge Sensitivity Oracle for $(s,t)$-mincut. 
Unfortunately, there does not exist any data structure for Steiner cuts of capacity beyond the minimum. Therefore, to bridge the gap between $(s,t)$-cut \cite{DBLP:journals/talg/BaswanaBP23} and global cut \cite{DBLP:conf/stoc/DinitzN95}, the following question is raised for every Steiner set $S$, $2\le |S| \le n$.
\begin{question} \label{ques : data structure}
    For any undirected multi-graph, does there exist a compact data structure that can efficiently answer query $\textsc{cut}$ for Steiner cuts of capacity minimum+1?
\end{question}
Observe that the failure of a pair of edges reduces the capacity of $S$-mincut if and only if at least one failed edge is contributing to an $S$-mincut or the pair of failed edges are contributing to a Steiner cut of capacity minimum+1. Therefore, the study of Steiner cuts of capacity minimum+1 has an important application to the problem of designing a \textit{dual edge Sensitivity Oracle for $S$-mincut} formally defined as follows.
\begin{definition} [Dual edge Sensitivity Oracle] \label{def : dual edge sensitivity oracle}
    A dual edge Sensitivity Oracle for $S$-mincut is a compact data structure that can efficiently report an $S$-mincut and its capacity after the failure or insertion of any pair of edges in $G$. 
\end{definition}
For $S$-mincuts, the existing Sensitivity Oracles can handle insertion/failure of only a single edge, and are based on the Connectivity Carcass of Dinitz and Vainshtein \cite{DBLP:conf/stoc/DinitzV94, DBLP:conf/soda/DinitzV95, DBLP:journals/siamcomp/DinitzV00}, which dates back to 2000. Baswana and Pandey \cite{DBLP:conf/soda/BaswanaP22}, using Connectivity Carcass as the foundation \cite{DBLP:conf/stoc/DinitzV94, DBLP:conf/soda/DinitzV95, DBLP:journals/siamcomp/DinitzV00}, designed an ${\mathcal O}(n)$ space single edge Sensitivity Oracle for $S$-mincut in undirected multi-graphs. It can report the capacity of $S$-mincut and an $S$-mincut for the resulting graph in ${\mathcal O}(1)$ and ${\mathcal O}(n)$ time respectively. 

It is also important to design Sensitivity Oracles that can handle the insertions/failures of multiple edges/vertices. For undirected multi-graphs, by using single edge Sensitivity Oracle for $S$-mincut \cite{DBLP:conf/soda/BaswanaP22}, observe that the trivial data structure that can handle any $f>0$ edge failures occupies ${\mathcal O}(m^{f-1}n)$ space. To design a Sensitivity Oracle for handling multiple edge insertions/failures, a natural and also quite commonly taken approach is to first design a Sensitivity Oracle that can handle the insertions/failures of a pair of edges. 
It has been observed that a dual edge Sensitivity Oracle for several fundamental graph problems, namely, reachability \cite{DBLP:conf/icalp/Choudhary16, DBLP:conf/icalp/ChakrabortyCC22}, breadth-first search \cite{DBLP:conf/podc/Parter15, DBLP:conf/icalp/GuptaK17}, shortest path \cite{DBLP:conf/soda/DuanP09a}, 
often requires a significantly richer analysis compared to the single edge Sensitivity Oracles. Moreover, it either reveals the difficulty or assists in several ways to solve the problem in its generality.
Specifically, in the area of minimum cuts, the following dual edge Sensitivity Oracle by Baswana, Bhanja, and Pandey \cite{DBLP:journals/talg/BaswanaBP23} is the only existing Sensitivity Oracle that can handle multiple insertions/failures. For $(s,t)$-mincut, there is an ${\mathcal O}(n^2)$ space data structure that, after the insertion/failure of any pair of edges from a multi-graph, can report an $(s,t)$-mincut and its capacity in ${\mathcal O}(n)$ and ${\mathcal O}(1)$ time respectively.  So, for $(s,t)$-mincut in multi-graphs, by designing a dual edge Sensitivity Oracle, they improved the space by a factor of $\Omega(\frac{m}{n})$ over the trivial data structure for handling any $f>0$ edge failures. The ${\mathcal O}(n)$ space 2-level cactus model of Dinitz and Nutov \cite{DBLP:conf/stoc/DinitzN95} for minimum+1 global cuts immediately leads to an ${\mathcal O}(n)$ space data structure for handling dual edge failure for global mincuts. It can report the global mincut and its capacity for the resulting graph in ${\mathcal O}(n)$ and ${\mathcal O}(1)$ time, respectively. Therefore, the existing results or the results that can be derived from the existing results are only for the two extreme cases of $S$-mincut. 
So, to provide a generalization of these results to any given Steiner set $S$, $2\le |S| \le n$, the following question arises naturally.
\begin{question} \label{ques : dual edge}
    For any undirected multi-graph, does there exist a compact data structure that can efficiently report the capacity of $S$-mincut and an $S$-mincut after the insertion or failure of any pair of edges in $G$?
\end{question}

\begin{note}
    The problem of handling dual edge insertions is provided in Appendix \ref{app : dual edge insertion}. Henceforth, we discuss only the failure of a pair of edges.
\end{note}





\subsection{Related Works:} 
Several data structures have been designed for cuts of capacity near minimum. Let $\lambda$ be the capacity of global mincut.  Bencz\'ur \cite{DBLP:conf/focs/Benczur95} provided an ${\mathcal O}(n^2)$ space data structure for cuts of capacity within $\frac{6}{5}\lambda$. Karger \cite{DBLP:journals/jacm/Karger00} designed an ${\mathcal O}(n^2)$ space data structure that improved the result of Benczur \cite{DBLP:conf/focs/Benczur95} from $\frac{6}{5}\lambda$ to $\frac{3}{2}\lambda$. Karger \cite{DBLP:journals/jacm/Karger00} also showed that there is an ${\mathcal O}(\alpha n^{\lfloor 2\alpha\rfloor})$ space data structure for cuts of capacity within $\alpha\lambda$.

\noindent
Single edge Sensitivity Oracle for $S$-mincut \cite{DBLP:conf/soda/BaswanaP22, DBLP:conf/stoc/DinitzV94, DBLP:conf/soda/DinitzV95, DBLP:journals/siamcomp/DinitzV00} not only includes $(s,t)$-mincut \cite{DBLP:journals/mp/PicardQ80, DBLP:journals/talg/BaswanaBP23} and global mincut \cite{dinitz1976structure} as special cases, but it also acts as the foundation of single edge Sensitivity Oracles for all-pairs mincut in undirected unweighted graphs \cite{DBLP:conf/soda/BaswanaP22}. 

\noindent
Providing a generalization from the two extreme scenarios of the Steiner set ($S=V$ and $|S|=2$) has also been addressed for various problems, namely, computing Steiner mincut \cite{DBLP:conf/stoc/DinitzV94, DBLP:journals/siamcomp/DinitzV00, DBLP:conf/stoc/ColeH03, DBLP:conf/soda/HeHS24, DBLP:journals/corr/abs-1912-11103}, Steiner connectivity augmentation and splitting-off \cite{DBLP:conf/soda/CenH0P23}, construction of a cactus graph for Steiner mincuts \cite{DBLP:journals/siamcomp/DinitzV00, DBLP:conf/soda/HeHS24}.



\subsection{Organization of this manuscript:} This manuscript is organized as follows. Our main results are stated in Section \ref{sec : our results}. Section \ref{sec : Preliminaries} contains basic preliminaries. A detailed overview of our results and techniques is provided in Section \ref{sec : overview}. The full version of our manuscript, containing all the omitted proofs, is given in the appendix. 
\section{Our Results} \label{sec : our results}
In this manuscript, we answer both Question \ref{ques : data structure} and Question \ref{ques : dual edge} in the affirmative. Moreover, the space and time bounds for our results also match with the existing result for the two extreme Scenarios of Steiner cuts -- global cut \cite{DBLP:conf/stoc/DinitzN95} ($|S|=n$) and $(s,t)$-cut \cite{DBLP:journals/talg/BaswanaBP23} $(|S|=2)$. 
Let us denote a Steiner cut of capacity minimum+1 by \textit{$(\lambda_S+1)$ cut}.

One of our key technical contributions is a generalization of the crucial \textsc{3-Star Lemma} from the Seminal work of Dinitz and Vainshtein \cite{DBLP:conf/stoc/DinitzV94, DBLP:conf/soda/DinitzV95, DBLP:journals/siamcomp/DinitzV00}. We believe that this result is of independent interest.  
By heavily exploiting the generalized version of \textsc{3-star Lemma}, we arrive at the following data structure for efficiently answering query \textsc{cut} for $(\lambda_S+1)$ cuts.
This result provides an affirmative answer to Question \ref{ques : data structure}. 

\begin{theorem}[Data Structure] \label{thm : data structure}
    Let $G=(V,E)$ be an undirected multi-graph on $n$ vertices and $S\subseteq V$ be any Steiner set of $G$. Let $\lambda_S$ be the capacity of $S$-mincut. There is an ${\mathcal O}(n(n-|S|+1))$ space data structure that, given any pair of vertices $u,v$ in $G$, can determine in ${\mathcal O}(1)$ time whether the least capacity Steiner cut separating $u$ and $v$ has capacity $\lambda_S+1$. Moreover, the data structure can report a $(\lambda_S+1)$ cut $C$ separating $u$ and $v$ in ${\mathcal O}(n)$ time, if it exists. 
\end{theorem}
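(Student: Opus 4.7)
The plan is to build the data structure by layering a compact encoding of $(\lambda_S+1)$ cuts on top of the Connectivity Carcass of Dinitz and Vainshtein. First I would construct the Carcass for $(G,S)$, which partitions $V$ into cycle-equivalence classes (two vertices lie in the same class iff no $S$-mincut separates them) and organises the classes along a cactus-like skeleton whose size scales with the non-Steiner portion of $V$. For any query pair $(u,v)$, the Carcass already answers in $O(1)$ whether the least Steiner cut separating $u,v$ has capacity $\lambda_S$; in that case the answer to our query is a definite "no". Thus the real problem is, conditional on $u,v$ lying in the same Carcass class, to decide whether some Steiner cut of capacity exactly $\lambda_S+1$ separates them, and if so to output one.

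Next, I would invoke the generalised \textsc{3-Star Lemma} to describe the structure of the $(\lambda_S+1)$ cuts that are \emph{not} already $S$-mincuts. In the global case the analogous lemma is what permits Dinitz and Nutov to lay a second level on the cactus: inside each cactus node the minimum+1 cuts form a locally tree-/cycle-like family with bounded crossing. I expect the generalised version to do the same job relative to each Carcass class, forcing the restriction of every $(\lambda_S+1)$ cut to a class $W$ to behave like a cut of an auxiliary graph $H_W$ whose non-trivial part is spanned by $W$ together with the contact points to the non-Steiner vertices. A counting argument along the skeleton should then show that the total size of all these local structures, plus global vertex labels, is $O(n(n-|S|+1))$: there is $O(n)$ per-vertex label information for locating a vertex inside its class, while the local structures themselves absorb at most $O(n-|S|+1)$ extra information per vertex because their combinatorial size is controlled by the non-Steiner degree of the skeleton.

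For queries I would equip every vertex $v$ with a composite label encoding (i) its Carcass class, (ii) its position in the local cactus-style structure of that class, and (iii) succinct ancestor/side information with respect to a fixed orientation of that structure. A yes/no query is then decided in $O(1)$ by comparing the labels of $u$ and $v$, exactly in the spirit of cactus labelling but applied to the two-level object. To report an actual cut in $O(n)$ time, the labels pinpoint a single arc or pair of arcs in the local structure whose removal induces the bipartition; lifting this bipartition back through the Carcass quotient and expanding each class yields the vertex set of a $(\lambda_S+1)$ cut separating $u$ and $v$. The $O(n)$ bound is worst-case optimal since the output may have size $\Theta(n)$.

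The main obstacle is the second step: establishing, via the generalised \textsc{3-Star Lemma}, that within each Carcass class the family of $(\lambda_S+1)$ cuts really does admit a cactus-style compact encoding of the claimed size. In the pure global or pure $(s,t)$ settings the ambient structure is a single cactus or a single DAG, whereas here the Carcass is already a hybrid of a quotient and a skeleton, and one must simultaneously control (a) how a $(\lambda_S+1)$ cut can cross multiple classes of the Carcass, (b) how two such cuts can cross each other inside one class, and (c) why the total bookkeeping collapses to $O(n(n-|S|+1))$ rather than something larger. Getting a clean structural statement for these interactions is where the generalisation of the \textsc{3-Star Lemma} becomes essential, and it is the step on which the entire theorem rests.
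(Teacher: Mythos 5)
Your proposal correctly identifies two essential ingredients --- building on the Connectivity Carcass and using a generalisation of the \textsc{3-Star Lemma} --- but the central mechanism you propose is precisely the one the paper argues does not work, and you leave the hard part as an unverified conjecture.

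You write that, inside each Carcass class, the $(\lambda_S+1)$ cuts should ``form a locally tree-/cycle-like family with bounded crossing'' analogous to the Dinitz--Nutov 2-level cactus, so that a cactus-style labelling answers queries in $O(1)$. The paper explicitly rules this out: the Dinitz--Nutov argument crucially uses that every corner set of two crossing minimum+1 \emph{global} cuts is again a global cut and hence has capacity $\geq\lambda$. For Steiner cuts that fails --- a corner set can fail to be a Steiner cut at all and can have capacity strictly below $\lambda_S$ --- so the crossing structure is not controlled the way a cactus requires. Consequently the paper does \emph{not} encode the $(\lambda_S+1)$ cuts as a 2-level cactus. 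Instead it fixes a vertex $u$ and stores all \emph{nearest} $(\lambda_S+1)$ cuts $N_{S(\mathcal W)}(u)$, and proves via \textsc{Gen-3-Star Lemma} bounded-overlap statements (Lemmas~\ref{lem : three sets are nonempty}, \ref{lem : at most one bunch in common}, \ref{lem : a bunch belong to at most two cuts}, \ref{lem : storing at most two cuts are sufficient}): each vertex of $\mathcal W$ lies outside at most two of the stored cuts, and each auxiliary Steiner vertex lies outside at most two $m$-cuts. That overlap bound, not a cactus, is what gives the $O(|\overline{S(\mathcal W)}|+|S(\mathcal W)|)$ space per vertex and the $O(1)$ membership test by comparing at most two marks per vertex.

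Two further gaps. First, the contraction from $G$ to an auxiliary graph $G(\mathcal W)$ (contracting away all bunches around the Skeleton node of $\mathcal W$) is what makes the nearest-cut family tractable at all --- without it a $(\lambda_S+1)$ cut can cross many $S$-mincuts and the analysis does not localise to $\mathcal W$; your sketch gestures at an auxiliary $H_W$ but does not say what it is or why $(\lambda_S+1)$ cuts of $G$ correspond to $(\lambda_S+1)$ cuts of $H_W$ (the paper needs Lemma~\ref{lem : G(W)} for this). Second, your sketch handles only the case where $u,v$ lie in a ``Carcass class,'' but the paper needs separate arguments when the class is a stretched unit, a terminal non-Steiner unit, or a Steiner node with several Steiner vertices: it uses the covering technique of Baswana--Bhanja--Pandey to reduce stretched/terminal classes to the Singleton case, and splits a multi-Steiner class into graphs $G(\mathcal W)_1, G(\mathcal W)_2$ handling non-subdividing and subdividing cuts separately. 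None of these reductions appears in your outline, and they are not trivial: for example, a naive per-pair reuse of the $(s,t)$ data structure would cost $\Omega(n^2|S|^2)$ space, as the paper notes. In short, the idea ``generalised 3-Star controls crossings inside a class, hence a cactus labelling works'' is the step you flag as the main obstacle, and it is in fact a dead end; the paper replaces it with an overlap bound on nearest cuts plus graph-transformation reductions, which is where the real work lies.
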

Data structure in Theorem \ref{thm : data structure} occupies subquadratic, that is $o(n^2)$, space if $|S|=n-o(n)$.
For minimum+1 global cuts ($S=V$), it occupies only ${\mathcal O}(n)$ space, which matches with the $2$-level cactus model of Dintz and Nutov \cite{DBLP:conf/stoc/DinitzN95}. Moreover, for $S=\{s,t\}$ (the other extreme scenario), our data structure occupies ${\mathcal O}(n^2)$ space, which also matches with the existing best-known result on minimum+1 $(s,t)$-cut given by Baswana, Bhanja, and Pandey \cite{DBLP:journals/talg/BaswanaBP23}.  

For dual edge Sensitivity Oracle, we design a pair of data structures. Our first data structure helps in reporting an $S$-mincut in worst case optimal time; however, it requires ${\mathcal O}(|S|)$ time to report its capacity. If the aim is to report only the capacity of $S$-mincut, we present a more compact data structure that, interestingly, has a faster query time.
 These results on dual edge Sensitivity Oracle for $S$-mincut are formally stated in the following theorem, which answers Question \ref{ques : dual edge} in the affirmative.
\begin{theorem}[Sensitivity Oracle] \label{thm : dual edge failure}
     Let $G=(V,E)$ be an undirected multi-graph on $n=|V|$ vertices and $m=|E|$ edges. For any Steiner set $S\subseteq V$, 
     \begin{enumerate}
         \item  there is an ${\mathcal O}((n-|S|)^2+n)$ space data structure that can report the capacity of $S$-mincut for the resulting graph in ${\mathcal O}(1)$ time and 
         \item there is an ${\mathcal O}(n(n-|S|+1))$ space data structure that can report an $S$-mincut for the resulting graph in ${\mathcal O}(n)$ time
     \end{enumerate}
     after the insertion or failure of any pair of edges in $G$.
\end{theorem}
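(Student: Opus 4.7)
The plan is to reduce dual edge failure to a short case analysis on how the failed pair $\{e_1,e_2\}$ interacts with the families of $S$-mincuts and $(\lambda_S+1)$-cuts. Since every $S$-cut loses at most two edges from its edge-set, the new $S$-mincut capacity in $G\setminus\{e_1,e_2\}$ must lie in $\{\lambda_S-2,\lambda_S-1,\lambda_S\}$, and only cuts of original capacity $\lambda_S$ or $\lambda_S+1$ can be witnesses. Concretely, the answer is $\lambda_S-2$ iff some $S$-mincut contains both $e_1,e_2$; otherwise it is $\lambda_S-1$ iff some $S$-mincut contains exactly one of them or some $(\lambda_S+1)$-cut contains both; otherwise it is $\lambda_S$. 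Insertion is analogous, with the new capacity lying in $\{\lambda_S,\lambda_S+1,\lambda_S+2\}$ and the relevant decision being whether the inserted pair can be jointly avoided by an $S$-mincut or by a $(\lambda_S+1)$-cut.

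To answer the $S$-mincut primitives I would rely on the Connectivity Carcass of Dinitz and Vainshtein and the ${\mathcal O}(n)$-space single edge Sensitivity Oracle of Baswana and Pandey, both already available. For any edge $e$, these structures give a concise description of the $S$-mincuts containing $e$ via an interval in the carcass quotient, so classifying a pair $(e_1,e_2)$ as ``both in a common $S$-mincut'', ``exactly one'', or ``neither'' reduces to intersecting two such intervals in ${\mathcal O}(1)$ time. To answer the $(\lambda_S+1)$-cut primitive I would invoke Theorem~\ref{thm : data structure}: using the generalized \textsc{3-Star Lemma}, the question of whether $\{e_1,e_2\}$ lies in the edge-set of some $(\lambda_S+1)$-cut reduces to an ${\mathcal O}(1)$ number of queries of the form \textsc{cut}$(u,v,\lambda_S+1)$ over vertices drawn from the endpoints of $e_1,e_2$, each answered in ${\mathcal O}(1)$ time. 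Composing these three primitives yields the ${\mathcal O}(n(n-|S|+1))$-space data structure of part (2): once the triggering case is known, a witness $S$-cut is extracted in ${\mathcal O}(n)$ time, either directly from the carcass (if the case is triggered by an $S$-mincut) or from the reporting routine of Theorem~\ref{thm : data structure} (if it is triggered by a $(\lambda_S+1)$-cut). Since an $S$-cut may have $\Theta(n)$ vertices on each side, ${\mathcal O}(n)$ reporting time is worst-case optimal.

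For the more compact part (1) the key observation is that for the capacity-only query we never need to materialize a witness cut. The carcass already folds all Steiner vertices into a small quotient, so the combinatorial content not encoded there lives on the $n-|S|$ non-Steiner vertices. I would then repackage the $(\lambda_S+1)$-cut primitive as a yes/no table indexed by pairs of non-Steiner representatives of the endpoints of $e_1,e_2$, together with ${\mathcal O}(n)$ book-keeping to map each actual endpoint to its representative and to interact with the carcass. This gives ${\mathcal O}((n-|S|)^2+n)$ space, while each of the three primitives and therefore the whole query remains ${\mathcal O}(1)$, matching part (1). For insertion, the same primitives (negated appropriately) drive the symmetric case analysis.

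The hard part will be justifying this final compaction to ${\mathcal O}((n-|S|)^2+n)$. A direct application of Theorem~\ref{thm : data structure} to the $(\lambda_S+1)$-cut primitive costs ${\mathcal O}(n(n-|S|+1))$; shrinking this requires exploiting the generalized \textsc{3-Star Lemma} to argue that the $(\lambda_S+1)$-cut status of an edge pair $\{e_1,e_2\}$ depends on each Steiner endpoint only through its carcass quotient image, so that per-pair information may be hashed by pairs of non-Steiner representatives. Verifying this invariance uniformly across the configurations of $e_1,e_2$ (both endpoints Steiner, mixed, or both non-Steiner) and in every subcase of the three-way decision, and then threading these subroutines together so that every primitive answers in ${\mathcal O}(1)$, is where the bulk of the technical work will lie.
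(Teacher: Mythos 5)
Your high-level three-way case analysis (capacity drops by $0$, $1$, or $2$ according to which family of cuts contains which of the failed edges) matches the paper's overall framework, and your identification of the relevant primitives is the right set. But there are two places where you assert reductions that in fact do not hold, and filling those gaps constitutes most of the paper's technical work.

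The first and most serious gap is the claim that deciding whether $\{e_1,e_2\}$ both contribute to a \emph{single} $(\lambda_S+1)$-cut reduces to $\mathcal{O}(1)$ calls to $\textsc{cut}(u,v,\lambda_S+1)$ over the endpoints of $e_1,e_2$. The paper explicitly observes that this reduction fails: the data structure of Theorem~\ref{thm : data structure} can tell you, independently, that $e_1$ lies on some $(\lambda_S+1)$-cut and that $e_2$ lies on some $(\lambda_S+1)$-cut, but it cannot combine these into a statement about a \emph{common} cut, and nothing in the $\textsc{cut}$ interface lets it. The nearest $(\lambda_S+1)$-cuts $C_1$ (from $x$ to $y$) and $C_2$ (from $x'$ to $y'$) having the right separation pattern (Lemma~\ref{lem : new necessary condition}) is only a necessary condition; because $C_1\cup C_2$ may fail to be a Steiner cut for general $S$, one also needs the genuinely new sufficient condition that a Steiner vertex lies in $\overline{C_1\cup C_2}$ (Lemmas~\ref{lem : property p3}, \ref{lem : sufficient condition}, \ref{lem : condition of dual failure}). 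That condition is not expressible as a $\textsc{cut}$ query, and verifying it in $\mathcal{O}(1)$ time (rather than $\mathcal{O}(|S|)$) is exactly what forces the $S^*$ marking scheme of Lemma~\ref{lem : verifying third condition in constant}, which your compaction sketch gestures toward but does not supply.

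The second gap is the treatment of the case where neither failed edge lies inside a $(\lambda_S+1)$-class (Case~4 in the paper). You describe the decision ``do $e_1,e_2$ contribute to a common $S$-mincut?'' as intersecting two intervals in the carcass quotient in $\mathcal{O}(1)$ time using only structures ``already available.'' This is not so: the carcass occupies only $\mathcal{O}(n)$ space and records, per edge, a proper path in the Skeleton, but two edges being projected to intersecting paths is only a necessary condition (Lemma~\ref{lem : single $S$-mincut} makes the actual condition depend on nearest $(s,t)$-mincuts of the endpoints). To answer this in $\mathcal{O}(1)$ time the paper constructs the $\mathcal{O}((n-|S|)^2)$-space matrix $\mathcal{M}$ (Lemma~\ref{lem : matrix M}) whose very existence rests on the non-obvious consistency of nearest $(s,t)$-mincuts across bunches (Lemma~\ref{lem : respecting nearest cuts}). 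Your plan would need an analogue of both ingredients; without them the $\mathcal{O}(1)$ query time for part~(1), and even the $\mathcal{O}((n-|S|)^2+n)$ space bound, are unsupported.
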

The dual edge Sensitivity Oracle in Theorem \ref{thm : dual edge failure} also achieves the same bound on space and query time for the existing best-known results on the two extreme scenarios of $S$-mincuts, namely, global mincut ($|S|=n$) \cite{DBLP:conf/stoc/DinitzN95} and $(s,t)$-mincut ($|S|=2$) \cite{DBLP:journals/talg/BaswanaBP23}.   

In contrast to Theorem \ref{thm : dual edge failure}, we provide the following lower bound for dual edge Sensitivity Oracle. This makes the data structures in Theorem \ref{thm : dual edge failure} tight almost for the whole range of $S$.
\begin{theorem} [Lower Bound] \label{thm : lower bound on dual edge failure}
    Let $G=(V,E)$ be a (un)directed multi-graph on $n$ vertices. For every Steiner set $S\subseteq V$, any data structure that, after the insertion or failure of any pair of edges in $G$, can report the capacity of $S$-mincut must occupy $\Omega((n-|S|)^2)$ bits of space in the worst case, irrespective of query time. 
\end{theorem}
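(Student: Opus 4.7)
The plan is an information-theoretic encoding argument. Write $k := n-|S|$; the claim is trivial for $k \le 1$, so assume $k \ge 2$. I would construct a family $\mathcal{F} = \{G_M : M \in \{0,1\}^{p \times p}\}$ of $2^{p^2}$ multigraphs, with $p = \lfloor k/2 \rfloor$, all sharing the same vertex set $V$ and the same Steiner set $S$ of the prescribed size. The family is arranged so that for every pair $M \neq M'$ in $\{0,1\}^{p \times p}$ there exists an edge pair whose insertion or failure produces $S$-mincuts of distinct capacity in $G_M$ and $G_{M'}$. Any correct dual-edge sensitivity oracle must then assign distinct internal states to distinct members of $\mathcal{F}$, so its worst-case space is at least $\log_2 |\mathcal{F}| = p^2 = \Omega((n-|S|)^2)$ bits, which is the claimed bound.

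For the construction, split the $k$ non-Steiner vertices into $A = \{a_1, \ldots, a_p\}$ and $B = \{b_1, \ldots, b_p\}$, and designate two Steiner vertices $s, t \in S$. Attach every other Steiner vertex in $S \setminus \{s, t\}$ to $t$ by a bundle of edges of multiplicity $L$ chosen so large that no candidate $S$-mincut can place any vertex of $S \setminus \{s\}$ on the $s$-side; this pins the Steiner partition of every candidate $S$-mincut to $\{s\}$ versus $S \setminus \{s\}$. Next, install a fixed, $M$-independent set of base edges: calibrated small-multiplicity connections from $s$ into $A$, from $t$ into $B$, and a direct $s$--$t$ bundle, chosen so that the baseline value of $\lambda_S$ equals a fixed constant $\lambda$ realized uniquely by the trivial cut $\{s\}$. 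Finally, encode $M$ by placing an edge $(a_i, b_j)$ iff $M_{ij} = 1$; since the trivial cut $\{s\}$ is not touched by these edges, $\lambda_S$ is independent of $M$.

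To recover the bit $M_{ij}$ from the oracle for $G_M$, I would apply a carefully chosen dual-edge perturbation near $a_i$ and $b_j$ (a pair of failures, a pair of insertions, or one of each) that shifts the relative capacity of the trivial cut $\{s\}$ and the competing cut $\{s, a_i\}$ by exactly the value of $M_{ij}$, so that the perturbed $\lambda_S$ takes one of two distinct values according to $M_{ij}$. Iterating over all $p^2$ index pairs then recovers $M$ entirely. The main obstacle, and the technical heart of the argument, is the calibration of the bundle multiplicities so that (i) before any perturbation a single fixed cut is strictly the unique $S$-mincut, independent of $M$, and (ii) for each $(i,j)$ the chosen dual-edge perturbation makes the $S$-mincut capacity depend on exactly the single bit $M_{ij}$ and on no other entry of $M$; this bit-independence property is what forces the full $\Omega(k^2)$ bits and distinguishes the argument from a naive counting. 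Once this calibration is verified, the encoding argument concludes the proof for every Steiner set $S$; the directed case is obtained by replacing each undirected edge of the construction with an antiparallel pair of arcs.
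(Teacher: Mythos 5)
Your overall strategy --- an information-theoretic encoding argument over a family of graphs indexed by a $p\times p$ bit matrix, with a bipartite gadget on the $n-|S|$ non-Steiner vertices, two designated Steiner vertices $s,t$, and high-multiplicity bundles to pin the Steiner partition --- is the same strategy the paper uses. However, the paper does not attempt to force a unique baseline $S$-mincut. Instead it builds a \emph{balanced} DAG $\mathcal{D}(B)$ in which each internal node has equal in- and out-degree, so that $\mathcal{D}(B)$ is its own $(s,t)$-residual graph; it then invokes the Picard--Queyranne DAG characterization (every $(s,t)$-mincut is a $1$-transversal cut in $\mathcal{D}_{PQ}$) together with Lemma~\ref{lem : D_PQ is same as D} to reduce the question of ``does failing $(s,u)$ and $(v,t)$ drop $\lambda_S$ by $1$ or by $2$'' to reachability of $u$ from $v$ in $\mathcal{D}_{PQ}(\mathcal{D}(B))$, i.e., to the presence of edge $(u,v)$ in $B$. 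The transfer to undirected graphs is then a single black-box application of Lemma~\ref{lem : undirected graph exists}. That route never needs uniqueness of the mincut or any calibration of bundle multiplicities.

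Your plan has a concrete obstruction precisely at the step you flagged as the technical heart. With only $s$-to-$A$, $t$-to-$B$, $s$-to-$t$, and $M$-encoded $A$-to-$B$ edges, the cut $\{s,a_i\}$ has capacity
$c(\{s,a_i\}) = c(\{s\}) - \mathrm{edges}(s,a_i) + \mathrm{edges}(a_i,B)$,
and $\mathrm{edges}(a_i,B)$ is precisely the number of ones in row $i$ of $M$. So if $\mathrm{edges}(s,a_i)$ is a fixed small positive constant (as your ``$M$-independent base edges'' stipulate), then for any $M$ with a sparse row $i$ one gets $c(\{s,a_i\}) < c(\{s\})$, violating your requirement~(i) that $\{s\}$ be the strict unique $S$-mincut for every $M$. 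Trying to repair this by letting $\mathrm{edges}(s,a_i)$ track $\mathrm{edges}(a_i,B)$ makes the base edges $M$-dependent and makes $c(\{s\})$ itself vary with $M$, contradicting requirement~(i) from the other side. Similarly, the capacity of $\{s,a_i,b_j\}$ after your proposed failure of one $(s,a_i)$ and one $(b_j,t)$ edge equals $c(\{s\}) - \mathrm{edges}(s,a_i) + \mathrm{edges}(a_i,B) + \mathrm{edges}(b_j,A) + \mathrm{edges}(b_j,t) - 2M_{ij} - 1$, and whether this is below $c(\{s\})-1$ depends not only on $M_{ij}$ but also on the row-sum $\mathrm{edges}(a_i,B)$ and column-sum $\mathrm{edges}(b_j,A)$, which your discrimination step~(ii) needs to be insensitive to. These two requirements --- $M$-independent unique baseline mincut and per-bit discrimination --- are in tension, and the proposal as written gives no way to satisfy both simultaneously, so the argument does not yet go through. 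The paper sidesteps exactly this tension by choosing multiplicities that scale with the relevant degrees and relying on the $1$-transversal characterization rather than on uniqueness.
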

For $S$-mincuts, the only existing lower bound for dual edge Sensitivity Oracle is for $|S|=2$ by Baswana, Bhanja, and Pandey \cite{DBLP:journals/talg/BaswanaBP23}. Moreover, it is conditioned on the Reachability Hypothesis \cite{DBLP:conf/wads/GoldsteinKLP17, DBLP:journals/siamcomp/Patrascu11}. Our lower bound in Theorem \ref{thm : lower bound on dual edge failure} not only generalizes their result from $|S|=2$ to any $S\subseteq V$, but it also does not depend on any hypothesis, hence, an unconditional lower bound.
\begin{note}
    We present the following comparison between the upper and lower bound of our dual edge Sensitivity Oracles. 
\begin{itemize}
     \item  The space occupied by our Sensitivity Oracles in Theorem \ref{thm : dual edge failure}(1) and Theorem \ref{thm : dual edge failure}(2) fail to match the lower bound in Theorem \ref{thm : lower bound on dual edge failure} if $|S|$ is only $n-o(\sqrt{n})$ and $n-o(n)$, respectively. 
    \item Observe that Sensitivity Oracles in Theorem \ref{thm : dual edge failure}(1) and Theorem \ref{thm : dual edge failure}(2) occupy space only linear in $n$ and subquadratic in $n$ if $|S|$ is only $n-o(\sqrt{n})$ and $n-o(n)$, respectively.
\end{itemize} 
\end{note}

All the results of this manuscript are compactly presented in Table \ref{tab : results}.

\section{Preliminaries} \label{sec : Preliminaries}
In this section, we define a set of notations and basic properties of cuts, which are to be used throughout this manuscript.
\begin{itemize}
    \item  A vertex $u$ is a \textit{Steiner vertex} if $u\in S$; otherwise $u$ is a \textit{nonSteiner} vertex.
    \item $G\setminus e$ denote the graph obtained from $G$ after the removal of edge $e$.
    \item A cut $C$ is said to \textit{subdivide} a set $X\subseteq V$ if both $C\cap X$ and $\overline{C}\cap X$ are nonempty.
    \item It follows from the definition of capacity of cuts that \textit{capacity of an empty set} is zero.
\end{itemize}



\begin{lemma}[Sub-modularity of Cuts \cite{DBLP:journals/dam/NagamochiI00}] \label{submodularity of cuts}
    For any two sets $A,B\subset V$,\\
    \text{$(1)$}  $c(A)+c(B)\ge c(A\cap B)+c(A\cup B)$ and \text{$(2)$} $c(A)+c(B)\ge c(A\setminus B)+c(B\setminus A)$.
\end{lemma}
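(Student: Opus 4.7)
The plan is to prove both inequalities simultaneously by a uniform edge-classification argument. For each edge $e=(u,v) \in E$, I would determine the contribution of $e$ to every term appearing in (1) and (2), and show that in every possible case the sum of its contributions to the left-hand side is at least the sum of its contributions to the right-hand side. Summing over all edges then yields both inequalities.

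To make the case analysis transparent, I would first partition $V$ into the four atomic regions $R_1 = A\cap B$, $R_2 = A\setminus B$, $R_3 = B\setminus A$, and $R_4 = V\setminus(A\cup B)$. For any cut $X\subseteq V$, an edge $e=(u,v)$ contributes $1$ to $c(X)$ iff exactly one endpoint of $e$ lies in $X$. Hence for each of the six cuts appearing in the lemma, the contribution of $e$ is a deterministic function of the unordered pair $(R_i,R_j)$ in which its endpoints lie. There are only $\binom{4}{2}+4 = 10$ such pairs, and in the four cases $(R_i,R_i)$ the edge contributes $0$ to every term and may be ignored.

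Next I would fill in a small table with rows indexed by the remaining six pairs and columns $c(A), c(B), c(A\cap B), c(A\cup B), c(A\setminus B), c(B\setminus A)$. For instance, in the crucial pair $(R_2,R_3)$, both $c(A)$ and $c(B)$ receive $1$ while $c(A\cap B)$ and $c(A\cup B)$ receive $0$ (and $c(A\setminus B)$, $c(B\setminus A)$ each receive $1$), giving LHS $= 2$ and RHS $= 0$ for (1) and LHS $= 2$ and RHS $= 2$ for (2); similarly for $(R_1,R_4)$. In every row, a direct inspection verifies that the LHS contribution dominates the RHS contribution for both inequalities. Summing across all edges completes the proof.

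The proof is essentially bookkeeping, so there is no substantial obstacle; the only care needed is to ensure that the $10$-case enumeration is exhaustive and that edges with both endpoints inside a single region (including self-loops, which may exist in a multi-graph) are correctly assigned zero contribution to every term. Since the statement is a classical fact cited as \cite{DBLP:journals/dam/NagamochiI00}, I would present the argument compactly as a single case-analysis table rather than repeating the per-case arithmetic in prose.
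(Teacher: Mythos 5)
Your proof is correct. The paper does not prove this lemma itself---it cites it directly from Nagamochi--Ibaraki~\cite{DBLP:journals/dam/NagamochiI00} as a standard fact---so there is no in-paper argument to compare against; your per-edge counting over the four atomic regions $A\cap B$, $A\setminus B$, $B\setminus A$, $V\setminus(A\cup B)$ is the classical elementary proof, and the case you single out, endpoints in $A\setminus B$ and $B\setminus A$, is indeed the only pair that produces slack in inequality (1), with the dual pair $A\cap B$, $V\setminus(A\cup B)$ producing slack in inequality (2); all remaining pairs hold with equality in both. Your attention to self-loops contributing zero to every term is appropriate for the multi-graph setting. One small clarification worth making explicit in the table: when you write ``similarly for $(R_1,R_4)$,'' the situation is actually dual rather than the same---there LHS $=2$, RHS $=2$ for (1) and LHS $=2$, RHS $=0$ for (2)---so phrasing it as the dual case would be more precise.
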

The following lemma can be easily derived using Lemma \ref{submodularity of cuts}(1).
\begin{lemma} \label{lem : mincuts closed under int and uni}
    For a pair of $S$-mincut $C_1$ and $C_2$, if $C_1\cap C_2$ and $C_1\cup C_2$ are Steiner cuts, then both $C_1\cap C_2$ and $C_1\cup C_2$ are $S$-mincuts. 
\end{lemma}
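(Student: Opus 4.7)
The plan is to derive this lemma as a direct and short consequence of the submodularity inequality given in Lemma \ref{submodularity of cuts}(1), together with the definition of $\lambda_S$ as the minimum capacity over all Steiner cuts.

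First I would apply Lemma \ref{submodularity of cuts}(1) to the two Steiner mincuts $C_1$ and $C_2$, which yields
\[
c(C_1) + c(C_2) \;\ge\; c(C_1\cap C_2) + c(C_1\cup C_2).
\]
Since $C_1$ and $C_2$ are $S$-mincuts, the left-hand side equals $2\lambda_S$. Next I would use the hypothesis that $C_1\cap C_2$ and $C_1\cup C_2$ are both Steiner cuts: by minimality, each of them has capacity at least $\lambda_S$, so the right-hand side is at least $2\lambda_S$.

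Chaining these two inequalities gives $c(C_1\cap C_2) + c(C_1\cup C_2) = 2\lambda_S$, and since neither summand can drop below $\lambda_S$, each must be exactly $\lambda_S$. This forces both $C_1\cap C_2$ and $C_1\cup C_2$ to be $S$-mincuts, as desired.

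There is essentially no obstacle here; the only subtlety worth stating explicitly is that without the hypothesis that $C_1\cap C_2$ and $C_1\cup C_2$ are Steiner cuts, one could not lower-bound them by $\lambda_S$ (they might separate no pair of vertices in $S$, or even be empty, in which case their capacities would simply be $0$ and the equality would fail). The hypothesis is exactly what is needed to turn the submodular inequality into the claimed equality.
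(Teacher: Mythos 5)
Your proof is correct and follows exactly the route the paper indicates: the paper gives no explicit proof but states the lemma "can be easily derived using Lemma \ref{submodularity of cuts}(1)," which is precisely your submodularity-plus-minimality argument.
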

 Dinitz and Vainshtein \cite{DBLP:conf/stoc/DinitzV94, DBLP:conf/soda/DinitzV95, DBLP:journals/siamcomp/DinitzV00} defined the following concept of \textit{$(\lambda_S+1)$-connectivity class}, which is also referred to as a $(\lambda_S+1)$ $(s,t)$-class for $S=\{s,t\}$ in \cite{DBLP:journals/talg/BaswanaBP23}. 
\begin{definition} \label{def : lambda+1 S-class}
    Let $R$ be an equivalence relation defined on the vertex set of $G$ as follows. A pair of vertices $u$ and $v$ are related by $R$ if and only if $u$ and $v$ are not separated by any $S$-mincut. Every equivalence class of relation $R$ is called a $(\lambda_S+1)$-connectivity class.
\end{definition}
For brevity we denote a $(\lambda_S+1)$-connectivity class by $(\lambda_S+1)$ class henceforth. A $(\lambda_S+1)$ class ${\mathcal W}$ is said to be a \textit{Singleton} $(\lambda_S+1)$ class if ${\mathcal W}$ has exactly one Steiner vertex of $G$. 

Let us define the following notion of crossing cuts initially introduced by Dinitz, Karzanov, and Lomonosov \cite{dinitz1976structure}, and the concept of nearest $(\lambda_S+1)$ cuts of a vertex. 
\begin{definition}[Crossing cuts and Corner sets]
    For a pair of cuts $A,B$ in $G$, each of the four sets, $A\cap B$, $A\setminus B$, $B\setminus A$, and $A\cup B$, is called a corner set of $A$ and $B$. Cuts $A$ and $B$ are said to be \textit{crossing} if each corner set is nonempty. 
\end{definition}
\begin{definition} [Nearest $(\lambda_S+1)$ cut of a vertex]  \label{def : nearest minimum+1 cut}
    For a vertex $u$, a $(\lambda_S+1)$ cut $C$ is said to be a nearest $(\lambda_S+1)$ cut of $u$ if $u\in C$ and there is no $(\lambda_S+1)$ cut $C'$ such that $u\in C'$ and $C'\subset C$. We denote the set of all nearest $(\lambda_S+1)$ cut of vertex $u$ by $N_S(u)$.
\end{definition}
   The \textit{nearest $(\lambda_S+1)$ cut from a vertex $u$ to a vertex $v$} is defined as a nearest $(\lambda_S+1)$ cut $C$ of $u$ such that $u\in C$ and $v\in \overline{C}$.

\section{An Overview of Our Results and Techniques} \label{sec : overview}
In this section, we present an overview of our results, including the limitations of the existing works and the key concepts/techniques established to arrive at our results.


\subsection{A Generalization of 3-Star Lemma}
Let $C_1,C_2,$ and $C_3$ be three Steiner cuts of $G$. Each of the three sets $C_1\cap \overline{C_2}\cap \overline{C_3}$, $\overline{C_1}\cap C_2 \cap \overline{C_3}$, $\overline{C_1}\cap \overline{C_2}\cap C_3$ is called a \textit{star} cut. For a pair of crossing $S$-mincuts, at least two of the four corner sets are also $S$-mincuts. 
Dinitz and Vainshtein \cite{DBLP:conf/stoc/DinitzV94, DBLP:conf/soda/DinitzV95, DBLP:journals/siamcomp/DinitzV00}, exploiting this property of $S$-mincuts, established the following result for the set of $S$-mincuts.  \\
\noindent
    \textsc{3-Star Lemma:} \textit{For any three $S$-mincuts $C_1,C_2$, and $C_3$, if each of the star cuts formed by $C_1,C_2$, and $C_3$ contains a vertex from $S$, then $c(C_1\cap C_2\cap C_3)=0$.}\\
\noindent
\textsc{3-Star Lemma} turned out to be an essential tool for designing the Connectivity Carcass \cite{DBLP:conf/stoc/DinitzV94, DBLP:conf/soda/DinitzV95, DBLP:journals/siamcomp/DinitzV00} for $S$-mincuts. Unfortunately, for a pair of crossing $(\lambda_S+1)$ cuts, it is quite possible that none of the four corner sets is a $(\lambda_S+1)$ cut (refer to cuts $C_1$ and $C_2$ in Figure \ref{fig : 1st figure in overview}($i$)).
As a result, \textsc{3-Star Lemma} no longer holds for the set of $(\lambda_S+1)$ cuts. 

Interestingly, to design a tool for addressing $(\lambda_S+1)$ cuts, exploiting the relation between a $(\lambda_S+1)$ class and $(\lambda_S+1)$ cuts, we generalize the \textsc{3-Star Lemma} for $(\lambda_S+1)$ cuts as follows (refer to Theorem \ref{thm : gen 3 star} in full version). 
\begin{theorem}[\textsc{Gen-3-Star Lemma}] For any three $(\lambda_S+1)$ cuts $C_1,C_2$, and $C_3$, if each of the star cuts formed by $C_1,C_2$, and $C_3$ contains a vertex from $S$, then $c(C_1\cap C_2\cap C_3)\le 3$. Moreover, for any $(\lambda_S+1)$ class ${\mathcal W}$, if exactly $k$ of the three star cuts formed by $C_1,C_2$, and $C_3$ subdivide ${\mathcal W}$, then $c(C_1\cap C_2\cap C_3)\le 3-k$.
\end{theorem}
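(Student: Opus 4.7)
The plan is to reduce the lemma to two ingredients: capacity lower bounds on the star cuts that extract the ``$+k$'' improvement whenever they subdivide the $(\lambda_S+1)$-class $\mathcal{W}$, together with an edge-counting inequality that relates $c(C_1 \cap C_2 \cap C_3)$ to the sums $\sum_i c(C_i)$ and $\sum_i c(T_i)$.

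For the capacity bounds, since each star cut $T_i$ contains a Steiner vertex by hypothesis and the other two star cuts lie in $\overline{T_i}$ and also each contain a Steiner vertex, every $T_i$ is itself a Steiner cut, so $c(T_i) \ge \lambda_S$. If in addition $T_i$ subdivides a $(\lambda_S+1)$-class $\mathcal{W}$, then $T_i$ separates a pair of vertices of $\mathcal{W}$; by Definition~\ref{def : lambda+1 S-class} no $S$-mincut separates such a pair, so $T_i$ cannot be an $S$-mincut and hence $c(T_i) \ge \lambda_S + 1$. Summing, if exactly $k$ of the three star cuts subdivide $\mathcal{W}$, then $c(T_1) + c(T_2) + c(T_3) \ge 3\lambda_S + k$.

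Next I would establish the core inequality
\[
c(C_1) + c(C_2) + c(C_3) \;\ge\; c(T_1) + c(T_2) + c(T_3) + c(C_1 \cap C_2 \cap C_3)
\]
by comparing the contribution of each edge to both sides. The cuts $C_1, C_2, C_3$ induce a partition of $V$ into eight regions $R_\alpha$ indexed by $\alpha \in \{0,1\}^3$, where the $i$-th coordinate records membership in $C_i$; under this encoding $T_i = R_{\mathbf{e}_i}$ and $C_1 \cap C_2 \cap C_3 = R_{(1,1,1)}$. For an edge between $R_\alpha$ and $R_\beta$ with $\alpha \ne \beta$, the contribution to the left-hand side equals the Hamming distance $d_H(\alpha, \beta)$, while the contribution to the right-hand side equals $w(\alpha) + w(\beta)$, where $w(\gamma) := 1$ if the Hamming weight of $\gamma$ is $1$ or $3$ (equivalently, $R_\gamma \in \{T_1, T_2, T_3, C_1 \cap C_2 \cap C_3\}$) and $w(\gamma) := 0$ otherwise. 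The pointwise inequality $d_H(\alpha, \beta) \ge w(\alpha) + w(\beta)$ for every distinct pair $\alpha, \beta \in \{0,1\}^3$ is checked by a short case analysis over the possible pairs of Hamming weights.

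Combining the two ingredients with $c(C_i) = \lambda_S + 1$ yields $3(\lambda_S + 1) \ge 3\lambda_S + k + c(C_1 \cap C_2 \cap C_3)$, i.e.\ $c(C_1 \cap C_2 \cap C_3) \le 3 - k$; specializing to $k = 0$ recovers the unconditional bound. I anticipate that the main obstacle is verifying the pointwise Hamming inequality cleanly in the tight cases (specifically when $\{|\alpha|, |\beta|\} \subseteq \{1, 3\}$, e.g.\ $\alpha = (1,1,1)$ and $\beta$ a standard basis vector, where $d_H = 2 = w(\alpha) + w(\beta)$) and ensuring that $w$ captures exactly the regions appearing on the right-hand side. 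An alternative derivation via two applications of posimodularity (Lemma~\ref{submodularity of cuts}(2)) seems possible in principle but appears to lose the precision needed to hit the exact $3 - k$ bound, so the direct edge-counting argument seems most transparent.
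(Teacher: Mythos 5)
Your proof is correct and follows essentially the same route as the paper: lower-bound the star-cut capacities (at $\lambda_S$ when they are Steiner cuts, and at $\lambda_S+1$ when they additionally subdivide $\mathcal{W}$), then subtract from $\sum_i c(C_i) = 3(\lambda_S+1)$ via the inequality $\sum_i c(C_i) \ge c(C_1 \cap C_2 \cap C_3) + \sum_i c(T_i)$. The only difference is that the paper invokes this inequality as the Four Component Lemma of Bencz\'ur (Lemma~\ref{lem : four component lemma}) as a black box, whereas you reprove it from scratch by the Hamming-weight edge-counting argument, which is a correct and self-contained derivation of that auxiliary fact.
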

    

\noindent
\textsc{Gen-3-Star Lemma} is used crucially throughout this manuscript for establishing several structural results for $(\lambda_S+1)$ cuts. This, in turn, leads to the design of a compact data structure for answering query \textsc{cut} for $(\lambda_S+1)$ cuts and the design of a dual edge Sensitivity Oracle for $S$-mincut. 
\subsection{A Data Structure for Reporting Minimum+1 Steiner Cuts}
Let $\lambda$ be the capacity of global mincut. For any pair of crossing global cuts $A,B$ of capacity $\lambda+1$, each of the four corner sets must have capacity at least $\lambda$ because it is also a global cut. For $\lambda\ge 3$, using this insight, Dinitz and Nutov \cite{DBLP:conf/stoc/DinitzN95} designed the ${\mathcal O}(n)$ space $2$-level cactus model for storing all global cuts of capacity $\lambda+1$. 
Unfortunately, the property no longer holds for $(\lambda_S+1)$ cuts because a corner set for a pair of crossing $(\lambda_S+1)$ cuts is not necessarily a Steiner cut; hence, a corner set can have capacity strictly less than $\lambda_S$. As a result, it does not seem possible to extend the approach of \cite{DBLP:conf/stoc/DinitzN95} for designing a data structure for $(\lambda_S+1)$ cuts.
On the other hand,
for $S=\{s,t\}$, since there are exactly two Steiner vertices, the intersection, as well as union, of a pair of $(\lambda_{\{s,t\}}+1)$ $(s,t)$-cut is always an $(s,t)$-cut, and has capacity at least $\lambda_{\{s,t\}}$. Baswana, Bhanja, and Pandey \cite{DBLP:journals/talg/BaswanaBP23} crucially exploit this fact and present the following result. 
 For any $(\lambda_{\{s,t\}}+1)$ class ${\mathcal W}$ of $G$, there is an ${\mathcal O}(|{\mathcal W}|^2+n)$ space data structure that answers query \textsc{cut} for any pair of vertices in ${\mathcal W}$ in ${\mathcal O}(n)$ time. For any Steiner set $S\subseteq V$, storing their data structure for every pair of vertices in $S$ results in an ${\mathcal O}(n^2|S|^2)$ space data structure, which can be $\Omega(n^4)$ for any $|S|=\Omega(n)$.

We now present the overview of our data structure in Theorem \ref{thm : data structure} that occupies only ${\mathcal O}(n(n-|S|+1))$ space and answers query \textsc{cut} for $(\lambda_S+1)$ cuts in ${\mathcal O}(n)$ time. Let $u$ and $v$ be any pair of vertices. By Definition \ref{def : lambda+1 S-class}, the cut of the least capacity that separates $u,v$ is $\lambda_S+1$ only if $u,v$ belong to the same $(\lambda_S+1)$ class of $G$.
Henceforth, to design a compact data structure for efficiently answering query \textsc{cut} for $(\lambda_S+1)$ cuts, we consider $(\lambda_S+1)$ classes of $G$. It turns out that the main challenge arises in handling the Singleton $(\lambda_S+1)$ classes of $G$. We first present a data structure for a Singleton $(\lambda_S+1)$ class. Later, to arrive at a compact data structure for any general $(\lambda_S+1)$ class (containing multiple or no vertex from $S$), we show that the Connectivity Carcass of Dinitz and Vainshtein \cite{DBLP:conf/stoc/DinitzV94, DBLP:conf/soda/DinitzV95, DBLP:journals/siamcomp/DinitzV00} can be suitably augmented with the data structure for Singleton $(\lambda_S+1)$ class. The procedure of augmentation requires establishing several new insights on $S$-mincuts, revealing close relationships between $S$-mincuts and $(\lambda_S+1)$ cuts, and a careful application of the \textit{covering technique} given by Baswana, Bhanja, and Pandey \cite{DBLP:journals/talg/BaswanaBP23} for $(s,t)$-cuts. 

  \begin{figure}
 \centering
    \includegraphics[width=\textwidth]{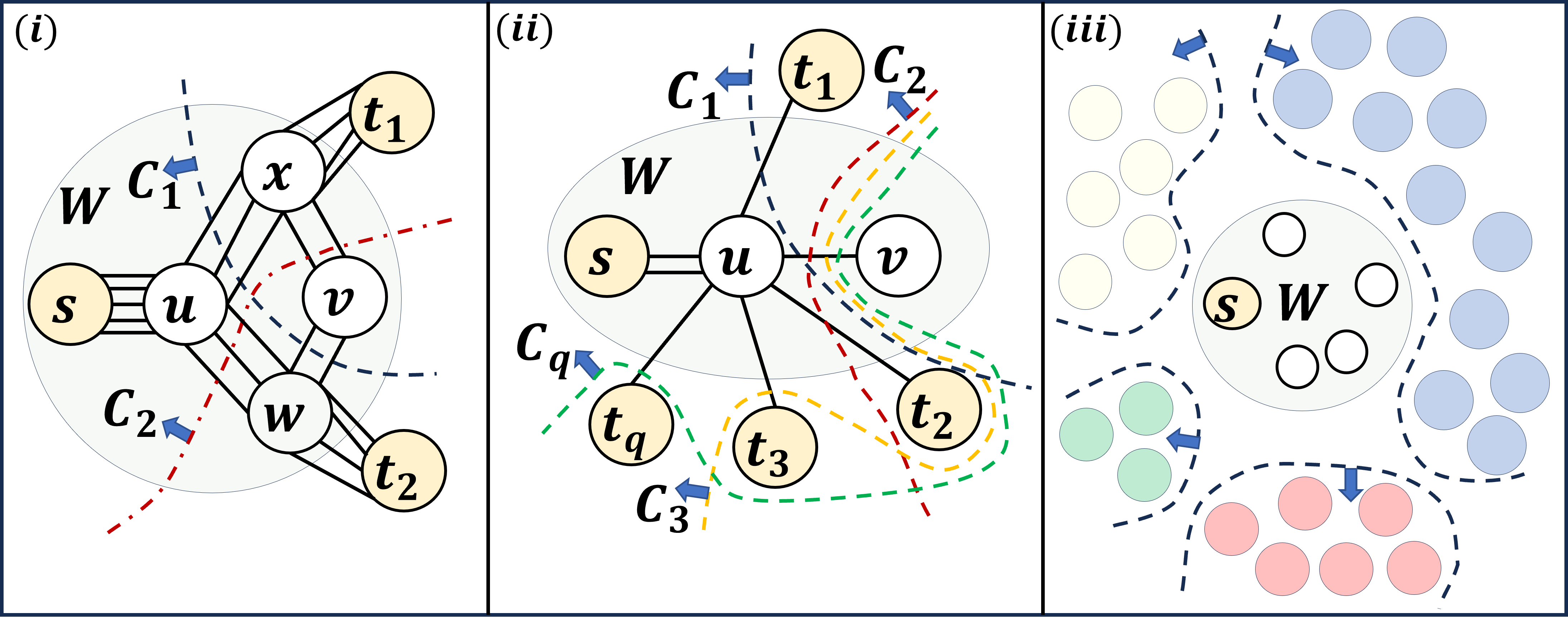} 
   \caption{The vertices 
   $\{s,t_1,t_2,\ldots t_q\}$ are Steiner vertices of $G({\mathcal W})$ and  $N_{S({\mathcal W})}(u)=\{C_1,C_2,\ldots, C_q\}$. $(i)$ Let $q=\Omega(|{\mathcal W}|)$. The capacity of global mincut of this graph is at least $4$ for $q\ge 2$. $C$ and $C'$ are Steiner cuts of capacity $2q+1$ and $2q$ respectively. The capacity of S-mincut is $2q$. Each cut $C_i$, $1\le i\le q$, has capacity $2q+1$; and moreover, $C_i$ has $t_2$ in $\overline{C_i}$. $(ii)$ Let $q=|S({\mathcal W})|$. For every pair of cuts $C_i,C_j$, $1\le i\ne j\le q$, from $N_{S({\mathcal W})}(u)$, a vertex $v$ from ${\mathcal W}$ belongs to $\overline{C_i\cup C_j}$. $(iii)$ Construction of $G({\mathcal W})$ from $G'$. The same color vertices not belonging to ${\mathcal W}$ are contracted into a Steiner vertex in $G({\mathcal W})$.}
  \label{fig : 1st figure in overview}
\end{figure}
\subsubsection{A Data Structure for a Singleton $(\lambda_S+1)$ class} \label{sec : overview : singleton data structure}
Suppose ${\mathcal W}$ is a Singleton $(\lambda_S+1)$ class and $s$ is the Steiner vertex belonging to ${\mathcal W}$. For any cut $C$, we assume that $s\in C$, otherwise consider $\overline{C}$. Let $C$ be a $(\lambda_S+1)$ cut that subdivides ${\mathcal W}$. Cut $C$ can cross multiple $S$-mincuts of $G$ and also contains vertices of $G$ that do not belong to ${\mathcal W}$. This makes it harder to analyze all the $(\lambda_S+1)$ cuts that subdivide ${\mathcal W}$. To circumvent this difficulty, we construct a \textit{smaller} graph $G({\mathcal W})$ from $G$ as follows. 
We first contract every $(\lambda_S+1)$ class, except ${\mathcal W}$, into a vertex. 
 Let $G'$ be the obtained graph. We construct graph $G({\mathcal W})$ from $G'$ as follows. Contract a pair of vertices $u,v$ if there is a Steiner mincut $C$ such that ${\mathcal W}\subseteq C$ and $u,v\in \overline{C}$. We repeat this contract operation in the resulting graph if it also has a Steiner mincut and a pair of vertices satisfying the same condition. 
The final obtained graph is $G({\mathcal W})$ (refer to Figure \ref{fig : 1st figure in overview}$(iii)$).
All vertices of $G({\mathcal W})$ that do not belong to ${\mathcal W}$, along with $s$, define the Steiner set of $G({\mathcal W})$, denoted by $S({\mathcal W})$.
Observe that in $G({\mathcal W})$, $\lambda_S$ is the capacity of Steiner mincut, and every Steiner mincut separates exactly one Steiner vertex from $s$.
It turns out that no $(\lambda_S+1)$ cut can cross a Steiner mincut in graph $G({\mathcal W})$ (refer to Figure \ref{fig : 1st figure in overview}). More importantly, we show that graph $G({\mathcal W})$ satisfies the following property.
\begin{lemma}[refer to Lemma \ref{lem : G(W)} in full version]
    A pair of vertices $u,v\in {\mathcal W}$ is separated by a $(\lambda_S+1)$ cut in $G$ if and only if there is a $(\lambda_S+1)$ cut in $G({\mathcal W})$ that separates $u,v$.
\end{lemma}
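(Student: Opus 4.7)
The plan is to prove the two directions of the biconditional separately, with the main technical work being an uncrossing argument in the forward direction.

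For the forward direction, I start with a $(\lambda_S+1)$ cut $C$ of $G$ that separates $u, v \in {\mathcal W}$; without loss of generality, $s \in C$. The strategy is to replace $C$ by a cut $\hat{C}$ of the same capacity that respects both contraction steps used to construct $G({\mathcal W})$, so that $\hat{C}$ descends to a $(\lambda_S+1)$ cut of $G({\mathcal W})$ separating $u, v$. Two kinds of obstructions must be removed. First, if $C$ subdivides some $(\lambda_S+1)$ class ${\mathcal W}' \ne {\mathcal W}$, I would pick a pair of $(\lambda_S+1)$ cuts witnessing that ${\mathcal W}'$ is a class, form a suitable triple with $C$, and invoke \textsc{Gen-3-Star Lemma} to force one corner of the configuration to have bounded capacity, still separate $u, v$, and not subdivide ${\mathcal W}'$. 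Second, if $C$ crosses an $S$-mincut $C^*$ with ${\mathcal W} \subseteq C^*$, I would apply sub-modularity to $C$ and $C^*$ and replace $C$ by the corner $C \cap C^*$ (which still separates $u, v$ and stops subdividing $\overline{C^*}$). Because $u, v$ share a $(\lambda_S+1)$ class, no $S$-mincut separates them, so after either reduction the new cut still has capacity strictly greater than $\lambda_S$, hence exactly $\lambda_S + 1$. Iterating these reductions yields $\hat{C}$.

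For the backward direction, given a $(\lambda_S+1)$ cut $\tilde{C}$ of $G({\mathcal W})$ separating $u, v \in {\mathcal W}$, the cut of $G$ obtained by un-contracting both contraction steps has the same capacity $\lambda_S + 1$ and separates $u, v$. The only remaining obligation is that this lifted cut is a Steiner cut of $G$, i.e., separates two vertices of $S$. The side of $\tilde{C}$ not containing $s$ contains some vertex of $S({\mathcal W})$ because $\tilde{C}$ is a Steiner cut of $G({\mathcal W})$; that vertex corresponds to a contracted cluster $Y$ of $G$ lying outside ${\mathcal W}$. I show that $Y \cap S \ne \emptyset$: fix any $y \in Y$ and take a nearest $S$-mincut $C^*$ of $y$ in $G$ with $s \in C^*$, which exists since $y \notin {\mathcal W}$ implies $y$ and $s$ are separated by some $S$-mincut. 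Since $C^*$ is an $S$-mincut and ${\mathcal W}$ is a $(\lambda_S+1)$ class, $C^*$ does not subdivide ${\mathcal W}$, so ${\mathcal W} \subseteq C^*$; since $C^*$ is a Steiner cut, there is some $t \in S \cap \overline{C^*}$. The second contraction rule, applied to the pair $y, t \in \overline{C^*}$ together with the $S$-mincut $C^*$ containing ${\mathcal W}$, forces $y$ and $t$ into the same cluster, so $t \in Y \cap S$, and the lifted cut separates $s$ from $t$.

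The main obstacle is the first step of the forward direction: preventing the chosen $(\lambda_S+1)$ cut from subdividing any $(\lambda_S+1)$ class other than ${\mathcal W}$. A pure sub-modularity argument involving only two $(\lambda_S+1)$ cuts does not suffice because a corner of two $(\lambda_S+1)$ cuts need not be a Steiner cut, so the bound $\lambda_S$ on Steiner cut capacity does not directly apply to it. This is precisely where the quantitative \textsc{Gen-3-Star Lemma}, which caps the capacity of the central star intersection by $3 - k$ depending on how many star cuts subdivide the $(\lambda_S+1)$ class in question, becomes indispensable, furnishing the slack needed to construct a replacement of capacity $\lambda_S + 1$ that no longer subdivides the offending class.
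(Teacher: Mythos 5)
Your forward direction has two real problems.

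First, your proposed first reduction (removing subdivision of some other $(\lambda_S+1)$ class ${\mathcal W}'$ via \textsc{Gen-3-Star}) addresses an obstruction that cannot arise. Lemma \ref{lem : subdivides at most one class}, a one-line sub-modularity consequence, shows no $(\lambda_S+1)$ cut subdivides more than one $(\lambda_S+1)$ class; since $C$ already subdivides ${\mathcal W}$, it subdivides no other class, so the contraction of the other classes never obstructs $C$. Invoking \textsc{Gen-3-Star} here is a dead end: even if the situation arose, the lemma only bounds a central intersection by a constant $\le 3$, which has no direct bearing on producing a replacement of capacity $\lambda_S+1$ separating $u,v$.

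Second, and more seriously, the reduction $C\mapsto C\cap C^{*}$ is not justified. You correctly get $c(C\cap C^{*})\ge \lambda_S+1$ (it is a Steiner cut that subdivides ${\mathcal W}$), but "strictly greater than $\lambda_S$, hence exactly $\lambda_S+1$" is a non sequitur — you also need an upper bound. The natural route is $c(C\cap C^{*})+c(C\cup C^{*})\le 2\lambda_S+1$ combined with $c(C\cup C^{*})\ge\lambda_S$, but the latter requires $C\cup C^{*}$ to be a Steiner cut, i.e.\ $\overline{C}\cap\overline{C^{*}}$ to contain a Steiner vertex, which may fail. When it fails, $\overline{C}\cap S\subseteq C^{*}$ and $\overline{C^{*}}\cap S\subseteq C$, so the \emph{diagonal} corners $C\setminus C^{*}$ and $C^{*}\setminus C$ are the Steiner-cut pair, and the right replacement is $C^{*}\setminus C$ (bounded via $C\setminus C^{*}$ and Lemma \ref{submodularity of cuts}(2)), not $C\cap C^{*}$. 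The paper's proof does exactly this case split — either the $\cap/\cup$ pair or the $\setminus$ pair are Steiner cuts, reduced to the first case by replacing $C$ with $\overline{C}$ — and your proposal silently assumes the favourable case. Without it, the cut you construct can have capacity strictly larger than $\lambda_S+1$. You also hand-wave the iteration ("iterating these reductions yields $\hat{C}$"); the paper instead runs an explicit induction on the number of bunches $C$ subdivides, which settles termination and well-definedness on $G({\mathcal W})$.

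Your backward direction is correct but overbuilt: by the paper's definition of $S({\mathcal W})$, a Steiner vertex of $G({\mathcal W})$ is by construction either a Steiner vertex of $G$ or a contracted node onto which a Steiner vertex of $G$ is mapped, so the lifted cut is immediately a Steiner cut of $G$ with unchanged capacity; the nearest-$S$-mincut detour is unnecessary.
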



\noindent
Henceforth, we consider graph $G({\mathcal W})$ instead of $G$. 
By construction, $(\lambda_S+1)$ class ${\mathcal W}$ of $G$ also appears as a $(\lambda_S+1)$ class in $G({\mathcal W})$. To design a compact data structure for answering query \textsc{cut} for $(\lambda_S+1)$ cuts of $G({\mathcal W})$, we pursue an approach of compactly storing all the nearest $(\lambda_S+1)$ cuts of vertices in ${\mathcal W}$. This approach has been found to be quite useful for $S=\{s,t\}$ \cite{DBLP:journals/talg/BaswanaBP23}.  
 Let $u$ be a vertex in ${\mathcal W}$. Unlike for $|S|=2$ 
 \cite{DBLP:journals/talg/BaswanaBP23}, it turns out that the nearest $(\lambda_S+1)$ cut of $u$ to even a single vertex in ${\mathcal W}$ might not be unique. This is because the union of a pair of nearest $(\lambda_S+1)$ cuts of $u$ is not necessarily a Steiner cut (for example, cuts $C_1,C_2$ in Figure \ref{fig : 1st figure in overview}$(ii)$). 
To achieve our goal, by exploiting sub-modularity of cuts (Lemma \ref{submodularity of cuts}), we first establish the following property for $N_{S({\mathcal W})}(u)$ (refer to Lemma \ref{lem : pair of bunches} in full version).  
\begin{lemma} [\textsc{Property ${\mathcal P}_1$}]
    If the union of a pair of cuts $C_1,C_2\in N_{S({\mathcal W})}(u)$ is a Steiner cut, then no vertex of ${\mathcal W}$ can belong to ${\overline{C_1\cup C_2}}$.
\end{lemma}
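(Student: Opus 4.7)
The plan is to argue by contradiction, leveraging sub-modularity together with the minimality built into the definition of $N_{S({\mathcal W})}(u)$. Using the sign convention $s\in C_1$ and $s\in C_2$, suppose that some $v\in{\mathcal W}$ lies in $\overline{C_1\cup C_2}$ while $C_1\cup C_2$ is a Steiner cut. I also take $C_1\neq C_2$ (a single nearest cut can subdivide ${\mathcal W}$ in general, as Figure \ref{fig : 1st figure in overview}$(ii)$ illustrates, so the property is to be read about two distinct cuts of $N_{S({\mathcal W})}(u)$).

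The first step is to verify that $C_1\cap C_2$ is itself a Steiner cut. Since $C_1\cup C_2$ is Steiner by hypothesis, some Steiner vertex $t$ satisfies $t\in \overline{C_1\cup C_2}\subseteq \overline{C_1\cap C_2}$; combined with $s\in C_1\cap C_2$, this shows that $C_1\cap C_2$ separates two Steiner vertices. Hence $c(C_1\cap C_2)\ge \lambda_S$ and $c(C_1\cup C_2)\ge \lambda_S$.

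The second step applies Lemma \ref{submodularity of cuts}(1) to obtain $c(C_1\cap C_2)+c(C_1\cup C_2)\le c(C_1)+c(C_2)=2(\lambda_S+1)$, and then uses the nearness condition to rule out $c(C_1\cap C_2)=\lambda_S+1$. Because $C_1$ and $C_2$ are both nearest $(\lambda_S+1)$ cuts of $u$ and $C_1\neq C_2$, neither can contain the other (the larger of two nested nearest cuts would cease to be nearest), so $C_1\cap C_2\subsetneq C_1$ with $u\in C_1\cap C_2$; were $C_1\cap C_2$ itself a $(\lambda_S+1)$ cut, this would contradict the minimality of $C_1\in N_{S({\mathcal W})}(u)$. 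Consequently $c(C_1\cap C_2)\in\{\lambda_S\}\cup\{\lambda_S+2,\lambda_S+3,\ldots\}$. I expect this exclusion to be the main subtle point; the rest is standard sub-modular bookkeeping.

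Finally I split on the two remaining possibilities. If $c(C_1\cap C_2)=\lambda_S$, then $C_1\cap C_2$ is an $S$-mincut containing $u\in{\mathcal W}$; since no $S$-mincut can subdivide the $(\lambda_S+1)$ class ${\mathcal W}$ (Definition \ref{def : lambda+1 S-class}), I conclude ${\mathcal W}\subseteq C_1\cap C_2\subseteq C_1\cup C_2$, contradicting $v\in\overline{C_1\cup C_2}$. If instead $c(C_1\cap C_2)\ge \lambda_S+2$, the sub-modular inequality forces $c(C_1\cup C_2)\le \lambda_S$, and combined with the lower bound from the first step we get $c(C_1\cup C_2)=\lambda_S$, making $C_1\cup C_2$ itself an $S$-mincut. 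The same non-subdivision argument yields ${\mathcal W}\subseteq C_1\cup C_2$, again contradicting the choice of $v$. Either way we reach a contradiction, which establishes Property ${\mathcal P}_1$.
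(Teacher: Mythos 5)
Your proof is correct and rests on exactly the same pillars as the paper's argument: sub-modularity, the nearness of $C_1$ to rule out $C_1\cap C_2$ being a $(\lambda_S+1)$ cut, and the fact that no $S$-mincut subdivides a $(\lambda_S+1)$ class. The paper reaches the contradiction more directly by noting at the outset that both $C_1\cap C_2$ and $C_1\cup C_2$ separate $u\in{\mathcal W}$ from $v\in{\mathcal W}$ and hence subdivide ${\mathcal W}$, which immediately gives $c(C_1\cap C_2),c(C_1\cup C_2)\ge\lambda_S+1$ rather than merely $\ge\lambda_S$; sub-modularity then pins both at exactly $\lambda_S+1$ and the proper-subset/nearness contradiction for $C_1\cap C_2\subsetneq C_1$ closes the proof without your three-way case split on $c(C_1\cap C_2)$.
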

 For $S=\{s,t\}$, the union of a pair of nearest $(\lambda_S+1)$ cuts is always a Steiner cut. Hence, every pair of nearest $(\lambda_S+1)$ cut satisfies \textsc{Property ${\mathcal P}_1$}.  
 This insight was used crucially for designing the compact data structure for minimum+1 $(s,t)$-cuts \cite{DBLP:journals/talg/BaswanaBP23}. However, as mentioned above, for any $S$, $2< |S|\le n$, for any pair of cuts from $N_{S({\mathcal W})}(u)$, their union might not be a Steiner cut. In fact, there can be $\Omega(|S({\mathcal W})|)$ number of cuts from $N_{S({\mathcal W})}(u)$ such that the complement of their union can contain a subset of vertices from ${\mathcal W}$, which can be $\Omega(n)$ in the worst case (refer to Figure \ref{fig : 1st figure in overview}$(ii)$). To overcome this difficulty, we address the problem in two different scenarios -- firstly, when $\lambda_{\mathcal W}\ge 4$, and secondly, when $\lambda_{\mathcal W} \le 3$, where $\lambda_{\mathcal W}$ is the global mincut capacity of $G({\mathcal W})$.

Suppose graph $G({\mathcal W})$ has global mincut capacity at least $4$. Exploiting \textsc{Gen-3-Star Lemma} and \textsc{Property ${\mathcal P}_1$}, we present the following generalization of \textsc{Property ${\mathcal P}_1$} for cuts of $N_{S({\mathcal W})}(u)$ even when their union is not a Steiner cut (refer to Lemma \ref{lem : three sets are nonempty} in full version). 

\begin{lemma} \label{lem : overview : three cuts intersection}
    If the capacity of global mincut of $G({\mathcal W})$ is at least $4$, then for any three cuts $C_1,C_2,C_3$ from $N_{S({\mathcal W})}(u)$, no vertex of ${\mathcal W}$ can belong to $\overline{C_1\cup C_2\cup C_3}$.
\end{lemma}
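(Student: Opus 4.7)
The plan is to argue by contradiction by applying \textsc{Gen-3-Star Lemma} to the three complementary cuts $\overline{C_1}, \overline{C_2}, \overline{C_3}$, each of which is itself a $(\lambda_S+1)$ cut of the same capacity. Suppose for contradiction that some vertex $v \in \mathcal{W}$ lies in $\overline{C_1 \cup C_2 \cup C_3}$, so $v \in \overline{C_i}$ for all $i \in \{1,2,3\}$.

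The first step is pairwise: since $v \in \mathcal{W}$ lies in $\overline{C_i \cup C_j}$ for every pair $\{i,j\}\subseteq \{1,2,3\}$, Property $\mathcal{P}_1$ forces $C_i \cup C_j$ to not be a Steiner cut. Because each $C_i$ is itself a Steiner cut, $C_i$ contains at least one Steiner vertex, which sits in $C_i \cup C_j$; hence the non-Steinerness of $C_i \cup C_j$ can only mean that $\overline{C_i \cup C_j}$ is entirely Steiner-free. Next I would pass to the eight-corner decomposition of $V$ induced by $(C_1,C_2,C_3)$. The three pairwise conclusions jointly rule out Steiner vertices from the four corners $\overline{C_1}\cap\overline{C_2}\cap\overline{C_3}$, $C_1\cap\overline{C_2}\cap\overline{C_3}$, $\overline{C_1}\cap C_2\cap\overline{C_3}$, and $\overline{C_1}\cap\overline{C_2}\cap C_3$. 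Now each $\overline{C_i}$ must contain at least one Steiner vertex (since $C_i$ is a Steiner cut); combined with the previous restriction, this Steiner vertex can only lie in the unique remaining corner of $\overline{C_i}$, namely $\overline{C_i}\cap C_j \cap C_k$ with $\{i,j,k\}=\{1,2,3\}$. Hence all three star cuts formed by the triple $\overline{C_1},\overline{C_2},\overline{C_3}$ contain a Steiner vertex of $G(\mathcal{W})$.

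The final step is to invoke \textsc{Gen-3-Star Lemma} on the triple $\overline{C_1},\overline{C_2},\overline{C_3}$, whose hypothesis is now satisfied; this yields $c(\overline{C_1}\cap\overline{C_2}\cap\overline{C_3}) \le 3$. However, $\overline{C_1 \cup C_2 \cup C_3}$ contains $v$ while its complement contains $u$, making it a nontrivial cut of $G(\mathcal{W})$ whose capacity is at least the global mincut of $G(\mathcal{W})$, assumed to be at least $4$. The resulting $4 \le 3$ is the desired contradiction. The main obstacle I anticipate is the careful bookkeeping in the middle step, tracking which of the eight corners can house Steiner vertices and verifying that each $\overline{C_i}$ is restricted to exactly one admissible corner; the rest of the argument is a clean combination of Property $\mathcal{P}_1$ with \textsc{Gen-3-Star Lemma}, with the global mincut hypothesis entering only at the very end.
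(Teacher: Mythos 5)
Your proof is correct and follows essentially the same route as the paper's proof (Lemma~\ref{lem : three sets are nonempty}): apply Property~${\mathcal P}_1$ to each pair to rule out Steiner vertices from the four corners that meet some $\overline{C_i}\cap\overline{C_j}$, deduce that each $\overline{C_i}$ must place its Steiner vertex in the star corner $\overline{C_i}\cap C_j\cap C_k$, and then invoke the \textsc{Gen-3-Star Lemma} on $\overline{C_1},\overline{C_2},\overline{C_3}$ to bound $c(\overline{C_1}\cap\overline{C_2}\cap\overline{C_3})\le 3$, contradicting the global mincut assumption. The ``bookkeeping'' step you flagged is exactly the step the paper carries out, and your execution of it is sound.
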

\noindent
Lemma \ref{lem : overview : three cuts intersection} essentially states that, for any vertex $u$ of ${\mathcal W}$, there are at most two cuts $C_1,C_2$ of $N_{S({\mathcal W})}(u)$ such that $u$ belongs to $\overline{C_1\cup C_2}$. Using this insight, we can store $\overline{C}\cap {\mathcal W}$ for all $C\in N_{S({\mathcal W})}(u)$ in only ${\mathcal O}(|{\mathcal W}|)$ space. Unfortunately, the problem still remains in compactly storing Steiner vertices $\overline{C}\cap S({\mathcal W})$ for all $C\in N_{S({\mathcal W})}(u)$. 

Observe that a Steiner vertex can appear in multiple cuts of $N_{S({\mathcal W})}(u)$ even if the global mincut capacity is at least $4$ (refer to Figure \ref{fig : 1st figure in overview}$(i)$). Therefore, trivially storing all cuts of $N_{S({\mathcal W})}(u)$ would occupy ${\mathcal O}(|{\mathcal W}||S({\mathcal W})|)$ space. Interestingly, we present an ${\mathcal O}(|{\mathcal W}|+|S({\mathcal W})|)$ space data structure that stores all cuts from $N_{S({\mathcal W})}(u)$.
We begin our analysis by classifying the set of all cuts of $N_{S({\mathcal W})}(u)$ into two sets as follows.\\

\noindent
\textbf{A Classification of all Cuts from $N_{S({\mathcal W})}(u)$:} A cut $C\in N_{S({\mathcal W})}(u)$ is said to be a \textit{$l$-cut} if it has exactly one Steiner vertex in $\overline{C}$; otherwise, it is called a $m$-cut.\\

\noindent
It follows that storing $\overline{C}$ for every $l$-cut $C$ of $N_{S({\mathcal W})}(u)$ occupies ${\mathcal O}(|{\mathcal W}|+|S({\mathcal W})|)$ space. However, the challenge arises in storing all $m$-cuts compactly. It seems quite possible that there are
\textit{many} Steiner vertices that belong to multiple $m$-cuts from $N_{S({\mathcal W})}(u)$. 
Exploiting the structural properties of $G({\mathcal W})$, we establish the following lemma for a pair of $m$-cuts from $N_{S({\mathcal W})}(u)$  (refer to Lemma \ref{lem : at most one bunch in common} in full version).
\begin{lemma}[\textsc{Property ${\mathcal P}_2$}] \label{lem : overview : property p2} For any pair of $m$-cuts $C_1,C_2\in N_{S({\mathcal W})}(u)$, there cannot exist more than one Steiner vertex in $\overline{C_1\cup C_2}$.
\end{lemma}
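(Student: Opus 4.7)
The plan is to argue by contradiction: suppose $C_1 \ne C_2$ are distinct $m$-cuts in $N_{S({\mathcal W})}(u)$ and $\overline{C_1 \cup C_2}$ contains at least two Steiner vertices of $S({\mathcal W})$. The strategy is to combine sub-modularity of the cut function (Lemma \ref{submodularity of cuts}) with the defining structural property of $G({\mathcal W})$ stated in the overview -- namely that every Steiner mincut of $G({\mathcal W})$ isolates exactly one Steiner vertex from $s$. Both pieces are light tools; I do not expect to need the full \textsc{Gen-3-Star Lemma} here.

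First I would verify that $C_1 \cap C_2$ and $C_1 \cup C_2$ are both Steiner cuts, so each has capacity $\ge \lambda_S$. For the intersection: $s$ lies in $C_1 \cap C_2$ by convention, while $\overline{C_1} \subseteq \overline{C_1 \cap C_2}$ already supplies $\ge 2$ Steiner vertices because $C_1$ is an $m$-cut. For the union: $s \in C_1 \cup C_2$ and $\overline{C_1 \cup C_2}$ carries the two Steiner vertices by assumption. Sub-modularity then yields
$$c(C_1 \cap C_2) + c(C_1 \cup C_2) \;\le\; c(C_1) + c(C_2) \;=\; 2\lambda_S + 2.$$

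Next I would eliminate the possibility $c(C_1 \cap C_2) = \lambda_S$. If this held, $C_1 \cap C_2$ would be an $S$-mincut containing $s$, and the $G({\mathcal W})$-property would force $|\overline{C_1 \cap C_2} \cap S({\mathcal W})| = 1$; but $\overline{C_1 \cap C_2} \supseteq \overline{C_1}$ contributes $\ge 2$ Steiner vertices, a contradiction. Hence $c(C_1 \cap C_2) \ge \lambda_S + 1$, so the displayed inequality gives $c(C_1 \cup C_2) \le \lambda_S + 1$. The value $c(C_1 \cup C_2) = \lambda_S$ is likewise ruled out, this time directly by the hypothesis that $\overline{C_1 \cup C_2}$ houses two Steiner vertices (rather than one). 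So $c(C_1 \cup C_2) = \lambda_S + 1$ and, pinching, $c(C_1 \cap C_2) = \lambda_S + 1$ as well.

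Finally I would close with the nearest-cut property. Since $u \in C_1 \cap C_2$ and this set is now a $(\lambda_S+1)$ cut, it suffices to show $C_1 \cap C_2 \subsetneq C_1$. Distinctness of $C_1, C_2$ together with both being nearest $(\lambda_S+1)$ cuts of $u$ in $N_{S({\mathcal W})}(u)$ forbids either one from strictly containing the other (otherwise the smaller would witness a $(\lambda_S+1)$ cut of $u$ properly inside the larger), so $C_1 \setminus C_2$ and $C_2 \setminus C_1$ are both nonempty, giving the strict inclusion. The existence of a $(\lambda_S+1)$ cut $C_1 \cap C_2 \subsetneq C_1$ with $u \in C_1 \cap C_2$ then contradicts $C_1 \in N_{S({\mathcal W})}(u)$, completing the proof. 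The main obstacle is conceptual rather than technical: one has to recognize that the $m$-cut hypothesis on $C_1$ (two Steiner vertices in $\overline{C_1}$) is exactly what makes $C_1 \cap C_2$ ``too large'' to be an $S$-mincut in $G({\mathcal W})$, and that the two-Steiner-vertex hypothesis on $\overline{C_1 \cup C_2}$ plays the symmetric role for $C_1 \cup C_2$; once these two incompatibilities are spotted, sub-modularity alone does the rest.
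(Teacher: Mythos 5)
Your proof is correct and rests on the same three ingredients as the paper's proof of Lemma~\ref{lem : at most one bunch in common}: sub-modularity, the nearest-cut property of $N_{S({\mathcal W})}(u)$, and the $G({\mathcal W})$-structural fact that every $S$-mincut of $G({\mathcal W})$ isolates exactly one Steiner vertex from $s$. The difference is one of organization: the paper derives $c(C_1\cap C_2)\ge \lambda_S+2$ up front (implicitly because $C_1\cap C_2$ subdivides ${\mathcal W}$ and a $(\lambda_S+1)$-cut would violate nearness), forces $c(C_1\cup C_2)=\lambda_S$ via sub-modularity, and lands the contradiction against the $G({\mathcal W})$ structure, whereas you pinch both $c(C_1\cap C_2)$ and $c(C_1\cup C_2)$ to exactly $\lambda_S+1$ and then contradict nearness. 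One observation worth flagging: to rule out $c(C_1\cap C_2)=\lambda_S$ you invoke the $m$-cut hypothesis on $C_1$; this step is valid but unnecessary, because $C_1\cap C_2$ contains $s,u\in{\mathcal W}$ while some vertex $v\in{\mathcal W}\cap\overline{C_1}$ lies in its complement, so $C_1\cap C_2$ subdivides ${\mathcal W}$ and already cannot be an $S$-mincut. That is the route the paper takes, and it is why the paper's Lemma~\ref{lem : at most one bunch in common} holds for any pair of cuts in $N_{S({\mathcal W})}(u)$, not merely $m$-cuts.
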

We design a bipartite graph $B$ using the set of $m$-cuts and the nonSteiner vertices belonging to ${\mathcal W}$. Exploiting the structure of $B$ and the well-known \textit{girth conjecture} by Erdos \cite{erdos1964extremal} and Bondy and Simonovits \cite{bondy1974cycles}, we show that \textsc{Property ${\mathcal P}_2$} can ensure an ${\mathcal O}(|{\mathcal W}|\sqrt{|S({\mathcal W})|})$ bound on space for storing all $m$-cuts of $N_{S({\mathcal W})}(u)$. 


To achieve a stricter bound on space, we further explore the relation of Steiner vertices with at least three $m$-cuts $C_1,C_2,$ and $C_3$ of $N_{S({\mathcal W})}(u)$.
Interestingly, by exploiting \textsc{Property ${\mathcal P}_1$},  \textsc{Property ${\mathcal P}_2$}, and the definition of $m$-cuts, we are able to show that every star cut of the three $m$-cuts $\overline{C_1}, \overline{C_2}$, and $\overline{C_3}$ is a Steiner cut that subdivides $(\lambda_S+1)$ class ${\mathcal W}$. Finally, by applying \textsc{Gen-3-Star Lemma}, 
the following crucial result for $m$-cuts is established, which acts as the key tool for storing all $m$-cuts compactly (refer to Lemma \ref{lem : a bunch belong to at most two cuts} in full version).  
\begin{lemma}
    For any three $m$-cuts $C_1,C_2,$ and $C_3$ from $N_{S({\mathcal W})}(u)$, there cannot exist any Steiner vertex $t$ in $S({\mathcal W})$ such that $t$ belongs to $\overline{C_1\cup C_2\cup C_3}$.
\end{lemma}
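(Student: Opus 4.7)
The plan is to apply \textsc{Gen-3-Star Lemma} to the complementary triple $\overline{C_1},\overline{C_2},\overline{C_3}$, whose triple intersection is exactly $\overline{C_1\cup C_2\cup C_3}$. Each $\overline{C_i}$ has the same edge-set as $C_i$ and separates the same Steiner vertices, so it is itself a $(\lambda_S+1)$ cut. I argue by contradiction: assume some $t\in S({\mathcal W})$ satisfies $t\in\overline{C_1\cup C_2\cup C_3}$.

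First I would show that each star cut $\overline{C_i}\cap C_j\cap C_k$, for $\{i,j,k\}=\{1,2,3\}$, contains a Steiner vertex. Because $C_i$ is an $m$-cut, $\overline{C_i}\cap S({\mathcal W})$ has at least two elements; one is $t$, so pick a second vertex $t_i\in\overline{C_i}\cap S({\mathcal W})$ with $t_i\neq t$. \textsc{Property ${\mathcal P}_2$} applied to the $m$-cut pair $\{C_i,C_j\}$ forces $|\overline{C_i\cup C_j}\cap S({\mathcal W})|\le 1$, and the unique vertex it contains is $t$. So $t_i\notin\overline{C_j}$, i.e., $t_i\in C_j$, and symmetrically $t_i\in C_k$. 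Hence $t_i\in\overline{C_i}\cap C_j\cap C_k$, as required.

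Next I would show each star cut subdivides ${\mathcal W}$. The complementary side is immediate: $u\in C_1\cap C_2\cap C_3\cap{\mathcal W}$ witnesses it. For the star-cut side, pick $w\in\overline{C_i}\cap{\mathcal W}$, which exists because $C_i\in N_{S({\mathcal W})}(u)$ subdivides ${\mathcal W}$. Since $t\in\overline{C_i\cup C_j}\cap S({\mathcal W})$ and $s\in C_i\cup C_j$, the union $C_i\cup C_j$ is a Steiner cut, so \textsc{Property ${\mathcal P}_1$} gives $\overline{C_i\cup C_j}\cap{\mathcal W}=\emptyset$. This forces $w\in C_j$, and by symmetry $w\in C_k$, so $w\in\overline{C_i}\cap C_j\cap C_k\cap{\mathcal W}$.

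With each of the three star cuts being a Steiner cut that subdivides ${\mathcal W}$, \textsc{Gen-3-Star Lemma} yields $c(\overline{C_1\cup C_2\cup C_3})\le 3-3=0$. But $\overline{C_1\cup C_2\cup C_3}$ is a nonempty Steiner cut of the connected graph $G({\mathcal W})$ (it contains $t$; its complement contains $s$), so its capacity is strictly positive, a contradiction. The step I expect to warrant the most scrutiny is the assertion in the subdivision argument that $\overline{C_i}\cap{\mathcal W}\neq\emptyset$; this amounts to checking that the nearest $(\lambda_S+1)$ cuts collected in $N_{S({\mathcal W})}(u)$ actually subdivide ${\mathcal W}$, which is implicit in how this family is used in the data structure but should be pinned down directly from the construction of $G({\mathcal W})$. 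Should it fail, the weaker conclusion $c(\overline{C_1\cup C_2\cup C_3})\le 3$ from Step 1 alone still contradicts $\lambda_S\ge\lambda_{\mathcal W}\ge 4$ in the regime under consideration, so the lemma follows either way.
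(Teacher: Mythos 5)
Your proof is correct and follows essentially the same route as the paper's: both derive a distinct Steiner vertex in each star cut from \textsc{Property $\mathcal{P}_2$}, a vertex of ${\mathcal W}$ in each star cut from \textsc{Property $\mathcal{P}_1$}, and then close via the corollary to the \textsc{Gen-3-Star Lemma} to force $\overline{C_1}\cap\overline{C_2}\cap\overline{C_3}=\emptyset$. Your flagged concern about $\overline{C_i}\cap{\mathcal W}\neq\emptyset$ is settled exactly the way the paper settles it: $N_{S({\mathcal W})}(u)$ consists by definition of nearest $(\lambda_S+1)$ cuts from $u$ to nonSteiner vertices of ${\mathcal W}$, so each such cut subdivides ${\mathcal W}$.
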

This result for $m$-cuts ensures that every Steiner vertex of $S({\mathcal W})$ appears in $\overline{C_1\cup C_2}$ for at most two $m$-cuts $C_1,C_2\in N_{S({\mathcal W})}(u)$. Therefore, it leads to a data structure occupying ${\mathcal O}(|{\mathcal W}|+|S({\mathcal W})|)$ space  for answering query \textsc{cut} for all cuts in $N_{S({\mathcal W})}(u)$. 

Now, we consider graph $G({\mathcal W})$ with global mincut capacity at most $3$. Using \textsc{Gen-3-Star Lemma}, we show that, for any three cuts from $N_{S({\mathcal W})}(u)$, at least one of the three star cuts of them is an empty set (refer to Lemma \ref{lem : storing at most two cuts are sufficient} in full version). This ensures that, for every vertex in ${\mathcal W}$, it is sufficient to store at most two cuts from $N_{S({\mathcal W})}(u)$. Therefore, ${\mathcal O}(|S({\mathcal W})|+|{\mathcal W}|)$ bound on space holds for this scenario as well.  

Finally, by storing this data structure on cuts from $N_{S({\mathcal W})}(u)$ for every $u\in {\mathcal W}$, we obtain a data structure ${\mathcal Q}_S({\mathcal W})$. This data structure occupies ${\mathcal O}(|{\mathcal W}|^2+|{\mathcal W}||S({\mathcal W})|)$ space and
answers query \textsc{cut} for $(\lambda_S+1)$ cuts of $G({\mathcal W})$.
Moreover, by storing an ${\mathcal O}(n)$ space mapping of vertices of $G$ to $G({\mathcal W})$, the data structure ${\mathcal Q}_S({\mathcal W})$ can report a cut $C$ in $G$ separating any given pair of vertices in ${\mathcal W}$ in ${\mathcal O}(n)$ time (formally stated in Theorem \ref{thm : data structure for singleton class}).
\begin{note} \label{note : k vertices}
    If $\lambda_{G({\mathcal W})}> 3$, then, given any set of vertices $\{u,v_1,v_2,\ldots, v_k\}\in {\mathcal W}$, the data structure ${\mathcal Q}_S({\mathcal W})$ can determine in ${\mathcal O}(k)$ time whether there is a cut $C\in N_{S({\mathcal W})}(u)$ such that $u\in C$ and $v_1\ldots,v_k\in \overline{C}$. However, if $\lambda_{G({\mathcal W})}\le 3$, then data structure ${\mathcal Q}_S({\mathcal W})$ can determine if only a pair of vertices from ${\mathcal W}$ are separated by a $(\lambda_S+1)$ cut.
\end{note}

 



\subsubsection{A Data Structure for General $(\lambda_S+1)$ classes}
In graph $G$, storing our data structure ${\mathcal Q}_S({\mathcal W})$ (designed in Section \ref{sec : overview : singleton data structure}) for every Singleton $(\lambda_S+1)$ class ${\mathcal W}$ occupies overall ${\mathcal O}(n(n-|S|+1))$ space.
However, a $(\lambda_S+1)$ class can also contain either no Steiner vertex or multiple Steiner vertices of $G$. For a $(\lambda_S+1)$ class ${\mathcal W}$ containing no Steiner vertex, we apply the \textit{Covering} technique of  \cite{DBLP:journals/talg/BaswanaBP23} to transform ${\mathcal W}$ into two Singleton $(\lambda_S+1)$ classes. We then construct the data structure ${\mathcal Q}_S$ for each of them. 

Suppose $(\lambda_S+1)$ class ${\mathcal W}$ contains multiple Steiner vertices. A $(\lambda_S+1)$ cut of $G({\mathcal W})$ either subdivides the Steiner set belonging to ${\mathcal W}$ or it does not. Using this insight, we construct the following pair of graphs $G({\mathcal W})_1$ and $G({\mathcal W})_2$ from graph $G$. Firstly, graph $G({\mathcal W})_1$ preserves each $(\lambda_S+1)$ cut of $G({\mathcal W})$ that does not subdivide the Steiner set belonging to ${\mathcal W}$. We show that the data structure ${\mathcal Q}_S$ is sufficient for graph $G({\mathcal W})_1$. Secondly, graph $G({\mathcal W})_2$ preserves the remaining $(\lambda_S+1)$ cuts of $G({\mathcal W})$. We ensure that, for graph $G({\mathcal W})_2$, the capacity of Steiner mincut is $\lambda_S+1$. So, we use a data structure for Steiner mincut given by Baswana and Pandey \cite{DBLP:conf/soda/BaswanaP22} for answering query \textsc{cut} in ${\mathcal O}(n)$ time for $G({\mathcal W})_2$ (refer to Theorem \ref{thm : reporting $S$-mincut using skeleton and projection mapping} in full version). We also show that the space occupied by the data structure for $G({\mathcal W})_2$ for all $(\lambda_S+1)$ classes ${\mathcal W}$ containing multiple Steiner vertices of $G$ occupies ${\mathcal O}(n)$ space. Therefore, given a pair of vertices $u,v$ in a $(\lambda_S+1)$ class ${\mathcal W}$, if ${\mathcal W}$ is not a Singleton $(\lambda_S+1)$ class, we query both the data structure -- one for $G({\mathcal W})_1$ and the other for $G({\mathcal W})_2$. The obtained data structure of this section is summarized in Theorem \ref{thm : data structure}.

\subsection{Dual Edge Sensitivity Oracle For S-mincut} \label{sec : dual failure overview}

We begin by providing a brief overview of the ${\mathcal O}(n)$ space single edge Sensitivity Oracle for $S$-mincut in unweighted undirected graphs \cite{DBLP:conf/soda/BaswanaP22}. 
An edge is said to \textit{belong to a $(\lambda_S+1)$ class} if both endpoints of the edge belong to the $(\lambda_S+1)$ class. By Definition \ref{def : lambda+1 S-class}, upon failure of an edge $e$, the capacity of $S$-mincut decreases by $1$ if and only if edge $e$ does not belong to any $(\lambda_S+1)$ class. 
Therefore, the ${\mathcal O}(n)$ space mapping of vertices to $(\lambda_S+1)$ classes helps in reporting the capacity of $S$-mincut in ${\mathcal O}(1)$ time after the failure of any edge. 
Baswana and Pandey \cite{DBLP:conf/soda/BaswanaP22} established an ordering among the $(\lambda_S+1)$ classes of $G$ that do not contain any Steiner vertex. They showed that this ordering, along with 
the Connectivity Carcass of Dinitz and Vainshtein \cite{DBLP:conf/stoc/DinitzV94, DBLP:conf/soda/DinitzV95, DBLP:journals/siamcomp/DinitzV00} (refer to Appendix \ref{sec : extended preliminaries} for details), is sufficient to design an ${\mathcal O}(n)$ space single edge Sensitivity Oracle for $S$-mincut for reporting an $S$-mincut in ${\mathcal O}(n)$ time (the result of Baswana and Pandey \cite{DBLP:conf/soda/BaswanaP22} is formally stated in Theorem \ref{thm : single edge Sensitivity Oracle}).  

In the case of two edge failures, both failed edges can belong to the same $(\lambda_S+1)$ class ${\mathcal W}$ and contribute to a single $(\lambda_S+1)$ cut that subdivides ${\mathcal W}$. In this case, the capacity of $S$-mincut definitely reduces by $1$. Unfortunately, the existing approach of designing a single edge Sensitivity Oracle does not reveal anything about the cuts that subdivide $(\lambda_S+1)$ classes. 
In the following, we present an overview of our dual edge Sensitivity Oracle from Theorem \ref{thm : dual edge failure}, which includes the establishment of several properties for $(\lambda_S+1)$ cuts and careful utilization of the data structure for Singleton $(\lambda_S+1)$ class (Theorem \ref{thm : data structure for singleton class}).  

\subsubsection{An $O(n(n-|S|+1))$ Space Data Structure with $O(|S|)$ Query Time} \label{sec : overview : dual edge first}
Let $e=(x,y)$ and $e'=(x',y')$ be the pair of failed edges in $G$. 
Depending on the relation between failed edges and $(\lambda_S+1)$ classes, the following four cases are possible.
\begin{itemize}
    \item Case 1: both failed edges belong to the same $(\lambda_S+1)$ class.
    \item Case 2: both failed edges belong to two different $(\lambda_S+1)$ classes.
    \item Case 3: Exactly one of the two failed edges belongs to a $(\lambda_S+1)$ class.
    \item Case 4: None of the failed edges belong to any $(\lambda_S+1)$ class.
\end{itemize}
An ${\mathcal O}(n)$ space mapping of vertices to $(\lambda_S+1)$ classes is sufficient to determine in ${\mathcal O}(1)$ time which of the four cases has occurred after the failure of edges $e,e'$. We handle each of the four possible cases as follows.

In Case 2, we show that the capacity of $S$-mincut never decreases after the failure of the two edges. In Case 3, by Definition \ref{def : lambda+1 S-class}, 
the failure of the edge that belongs to a $(\lambda_S+1)$ class cannot reduce the capacity of $S$-mincut. Therefore, this case is equivalent to handling the failure of the single edge that does not belong to any $(\lambda_S+1)$ class. Case 4 requires a richer analysis compared to Case 2 and Case 3. To handle this case, we establish new insights for $S$-mincuts, which help in showing that Connectivity Carcass of Dinitz and Vainshtein \cite{DBLP:conf/stoc/DinitzV94, DBLP:conf/soda/DinitzV95, DBLP:journals/siamcomp/DinitzV00}
is sufficient to design a data structure for handling dual edge failures. So, for all these three cases, we show that there is an ${\mathcal O}((n-|S|)^2+n)$ space data structure that, after the failure of any pair of edges,
can report an $S$-mincut and its capacity for the resulting graph in ${\mathcal O}(n)$ and ${\mathcal O}(1)$ time respectively. 
Now, the main hurdle comes in handling Case 1. 

In Case 1, both failed edges $e,e'$ belong to a $(\lambda_S+1)$ class ${\mathcal W}$. So, the capacity of $S$-mincut decreases by $1$ if and only if both failed edges are contributing to a single $(\lambda_S+1)$ cut that subdivides ${\mathcal W}$. We consider ${\mathcal W}$ to be a Singleton $(\lambda_S+1)$ class. In a similar way to the data structure in Theorem \ref{thm : data structure}, we also extend this result to any general $(\lambda_S+1)$ classes.\\

\noindent
\textbf{Handling Dual Edge Failure in a Singleton $(\lambda_S+1)$ class:} Let $s$ be the only Steiner vertex in ${\mathcal W}$. For every cut $C$, without loss of generality, assume that $s\in C$; otherwise, consider $\overline{C}$. Similar to the analysis of data structure in Theorem \ref{thm : data structure}, we show that it is sufficient to work with graph $G({\mathcal W})$ for ${\mathcal W}$ instead of $G$. 
The objective is to efficiently determine the existence of a $(\lambda_S+1)$ cut $C$ of $G({\mathcal W})$ such that both failed edges $e$ and $e'$ are contributing to $C$. The data structure in Theorem \ref{thm : data structure} can determine in ${\mathcal O}(1)$ time whether each of the failed edges is contributing to a $(\lambda_S+1)$ cut. Unfortunately, it fails to determine whether both $e,e'$ contribute to a single $(\lambda_S+1)$ cut. Suppose $(\lambda_S+1)$ cut $C$ exists and we assume, without loss of generality, that $x,x'\in C$ and $y,y'\in \overline{C}$.
The other three possible cases are handled in the same way.

 Let $C_1$ and $C_2$ be nearest $(\lambda_S+1)$ cuts from $x$ to $y$ and from $x'$ to $y'$, respectively. The following lemma states the necessary condition (refer to Lemma \ref{lem : new necessary condition} in full version). 
 \begin{lemma} [Necessary Condition] \label{lem : overview : necessary condition}
     If both edges $e,e'$ contribute to a single $(\lambda_S+1)$ cut, then $y'\notin C_1$ and $y\notin C_2$.
 \end{lemma}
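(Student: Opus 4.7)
The plan is to prove $y' \notin C_1$ by contradiction; the companion statement $y \notin C_2$ then follows by interchanging the roles of the two failed edges (and of the cuts $C_1, C_2$). So assume, for contradiction, $y' \in C_1$. My aim is to exhibit a $(\lambda_S+1)$ cut $A$ with $x \in A \subsetneq C_1$, which directly violates the fact that $C_1 \in N_S(x)$ (Definition \ref{def : nearest minimum+1 cut}).

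The natural candidate is $A := C \cap C_1$. The set-theoretic facts are immediate: $x \in A$ since $x \in C \cap C_1$; $A \subsetneq C_1$ is witnessed by $y' \in C_1 \setminus C$; and $y \notin A$ because $y$ lies in neither $C$ nor $C_1$. Under the convention $s \in C$ and $s \in C_1$ we have $s \in A$, and because $C_1$ is a Steiner cut its complement $\overline{C_1} \subseteq \overline{A}$ contains a Steiner vertex. Hence $A$ is a Steiner cut separating $x$ from $y$. Since $x$ and $y$ lie in the same $(\lambda_S+1)$ class ${\mathcal W}$ (Definition \ref{def : lambda+1 S-class}), every Steiner cut separating them has capacity strictly greater than $\lambda_S$, which gives the lower bound $c(A) \ge \lambda_S + 1$.

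For the matching upper bound I would apply sub-modularity (Lemma \ref{submodularity of cuts}(1)):
\[
c(C \cap C_1) + c(C \cup C_1) \;\le\; c(C) + c(C_1) \;=\; 2(\lambda_S + 1).
\]
The set $C \cup C_1$ contains $x$ and excludes $y$, so whenever it is a Steiner cut it separates two vertices of the same $(\lambda_S+1)$ class, forcing $c(C \cup C_1) \ge \lambda_S + 1$ and hence $c(A) \le \lambda_S + 1$. Combined with the lower bound, this identifies $A$ as a $(\lambda_S+1)$ cut with $x \in A \subsetneq C_1$, the desired contradiction.

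The main obstacle I anticipate is the degenerate case in which $\overline{C} \cap \overline{C_1}$ contains no Steiner vertex, so that $C \cup C_1$ fails to be a Steiner cut and the upper bound above breaks down. In that regime every Steiner vertex of $\overline{C}$ must lie in $C_1$ and every Steiner vertex of $\overline{C_1}$ must lie in $C$, which already imposes strong rigidity. I would handle this case by switching to the posi-modular form (Lemma \ref{submodularity of cuts}(2)) applied to $C_1$ and $\overline{C}$, namely $c(C_1 \setminus C) + c(C \setminus C_1) \le c(C_1) + c(\overline{C}) = 2(\lambda_S + 1)$. Here $C_1 \setminus C$ contains $y'$ while $x$ lies on the opposite side, so $C_1 \setminus C$ is a Steiner cut separating two ${\mathcal W}$-vertices and has capacity $\ge \lambda_S + 1$; this pins $c(C \setminus C_1) \le \lambda_S + 1$, which in combination with the location of $x'$ produces a smaller $(\lambda_S+1)$ cut through $x$. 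To close any remaining sub-case uniformly, I would invoke the third cut $C_2$ together with the \textsc{Gen-3-Star Lemma} applied to $\{C, C_1, C_2\}$, using its structural conclusion to place the missing Steiner vertex in the appropriate star region and so certify that $C \cup C_1$ (or a suitable modification) is a Steiner cut.
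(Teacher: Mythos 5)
Your argument for the main case -- when $\overline{C}\cap\overline{C_1}$ does contain a Steiner vertex -- is correct: $C\cup C_1$ is then a Steiner cut that subdivides ${\mathcal W}$ (it separates $x$ from $y$), so $c(C\cup C_1)\ge\lambda_S+1$, and submodularity forces $c(C\cap C_1)=\lambda_S+1$, contradicting the minimality of $C_1$. You also correctly isolate the obstruction. But your handling of the degenerate case does not close, and this is where the real content of the lemma lives.

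In the degenerate case, the raw posi-modular inequality $c(C_1\setminus C)+c(C\setminus C_1)\le c(C)+c(C_1)=2(\lambda_S+1)$ is simply too weak. You correctly get $c(C_1\setminus C)\ge\lambda_S+1$ (it is a Steiner cut containing $y'$ but not $s$, both in ${\mathcal W}$) and $c(C\setminus C_1)\ge\lambda_S$ (Steiner cut), but $\lambda_S+1+\lambda_S=2\lambda_S+1\le 2\lambda_S+2$ produces no contradiction, and neither corner set contains $x$, so you cannot turn either into a ``smaller cut through $x$'' as you claim. (Your instantiation is also off: Lemma \ref{submodularity of cuts}(2) applied to $C_1$ and $\overline{C}$ yields $c(C_1\cap C)+c(\overline{C\cup C_1})$ on the right, not the corner sets you want -- you need it applied to $C$ and $C_1$ directly.) The missing idea is to exploit that $e=(x,y)$ contributes to \emph{both} $C$ and $C_1$ but to \emph{neither} corner set $C\setminus C_1$ nor $C_1\setminus C$: applying posi-modularity in $G\setminus e$ tightens the bound to $c(C_1\setminus C)+c(C\setminus C_1)\le 2\lambda_S$, which now contradicts the two lower bounds and shows the degenerate case cannot occur at all. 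This is exactly the content of the paper's \textsc{Property ${\mathcal P}_3$} (Lemma \ref{lem : property p3}), on which Lemma \ref{lem : necessary condition} -- and hence the present lemma -- is built. Your fallback appeal to \textsc{Gen-3-Star Lemma} on $\{C,C_1,C_2\}$ is not a substitute: that lemma requires each star region to contain a Steiner vertex (something you would first have to establish), and its conclusion bounds $c(C\cap C_1\cap C_2)$, which does not ``place a Steiner vertex'' in $\overline{C\cup C_1}$ nor certify $C\cup C_1$ as a Steiner cut.
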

 Let us assume that the capacity of global mincut for $G({\mathcal W})$ is at least $4$; later, we eliminate this assumption. The data structure in Theorem \ref{thm : data structure} can verify the necessary condition in Lemma \ref{lem : overview : necessary condition} in ${\mathcal O}(1)$ time (refer to Note \ref{note : k vertices}). 
 Suppose the conditions in Lemma \ref{lem : overview : necessary condition} holds for edges $e,e'$. Observe that $C_1\cap C_2$ is a cut that subdivides ${\mathcal W}$. Hence, it has capacity at least $\lambda_S+1$. For $S=\{s,t\}$, the union of cuts $C_1,C_2$ is always an $(s,t)$-cut and does not contain $y,y'$. Exploiting this fact, it is shown in \cite{DBLP:journals/talg/BaswanaBP23} that $C_1\cup C_2$ is an $(s,t)$-cut of capacity $\lambda_{\{s,t\}}+1$ in which both failed edges are contributing. Unfortunately, for any Steiner set $S\subseteq V$, there might not exist any Steiner vertex in $\overline{C_1\cup C_2}$. So, $C_1\cup C_2$ is not always a Steiner cut and can have capacity strictly less than $\lambda_S$.
 Moreover, even if $C_1\cup C_2$ is not a $(\lambda_S+1)$ cut, there might still exist a $(\lambda_S+1)$ cut in which both failed edges are contributing. Therefore, the data structure for nearest $(\lambda_S+1)$ cuts does not seem to work for answering dual edge failure queries. To overcome this hurdle, we establish the following crucial relation between vertices of ${\mathcal W}$ and a pair of crossing $(\lambda_S+1)$ cuts to which an edge belonging to ${\mathcal W}$ is contributing (refer to Lemma \ref{lem : property p3} in full version).
\begin{lemma}
    [\textsc{Property ${\mathcal P}_3$}] For any edge belonging to ${\mathcal W}$ and contributing to a pair of crossing $(\lambda_S+1)$ cuts $C$ and $C'$, neither $C\setminus C'$ nor $C'\setminus C$ contains any vertex from ${\mathcal W}$ if and only if each of the two sets $C\setminus C'$ and $C'\setminus C$ contains a Steiner vertex from $S({\mathcal W})$.
\end{lemma}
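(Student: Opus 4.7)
The plan is to prove the biconditional on the four-cell partition induced by the crossing cuts, namely $P_1=C\cap C'$, $P_2=C\setminus C'$, $P_3=C'\setminus C$, and $P_4=\overline{C\cup C'}$, all four nonempty by the crossing hypothesis. By complementing $C$ or $C'$ if necessary I may place the distinguished Steiner vertex $s\in\mathcal{W}\cap S(\mathcal{W})$ of the singleton class $\mathcal{W}$ into $P_1$. The forward direction (the ``only if'') is then immediate: every vertex of $G(\mathcal{W})$ lies either in $\mathcal{W}$ or in $S(\mathcal{W})\setminus\{s\}$, so once $(C\setminus C')\cap\mathcal{W}=\emptyset=(C'\setminus C)\cap\mathcal{W}$, each of the nonempty $P_2, P_3$ must meet $S(\mathcal{W})\setminus\{s\}$, i.e., contain a Steiner vertex.

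For the reverse direction (the ``if''), I would argue by contradiction. Fix $t_1\in P_2\cap S(\mathcal{W})$ and $t_2\in P_3\cap S(\mathcal{W})$, and suppose for a contradiction that some $z\in P_2\cap\mathcal{W}$ exists with $z\ne s$. Writing the edge as $(u,v)\in\mathcal{W}$ contributing to both cuts, it sits in one of two configurations: either the \emph{diagonal} one with $u\in P_1, v\in P_4$ (or vice versa), or the \emph{off-diagonal} one with $u\in P_2, v\in P_3$. In the diagonal case the edge witnesses $E(P_1,P_4)\ge 1$, so the sub-modular identity $c(C)+c(C')=c(P_2)+c(P_3)+2\,E(P_1,P_4)$ forces $c(P_2)+c(P_3)\le 2\lambda_S$. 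But $P_2$ is a Steiner cut (it contains $t_1$ while its complement contains $s$) that separates the two $\mathcal{W}$-vertices $z$ and $s$, so the defining property of the $(\lambda_S+1)$-class $\mathcal{W}$ gives $c(P_2)\ge\lambda_S+1$; likewise $P_3$ is a Steiner cut of capacity $\ge\lambda_S$. Their sum is $\ge 2\lambda_S+1$, the desired contradiction. A symmetric argument rules out $\mathcal{W}$-vertices in $P_3$.

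The off-diagonal case is the principal obstacle, because now $u,v$ themselves are the $\mathcal{W}$-vertices in $P_2,P_3$, so one cannot exploit a separate $z$. Applying sub-modularity the other way with $E(P_2,P_3)\ge 1$ yields $c(P_1)+c(P_4)\le 2\lambda_S$; since $P_1$ is a Steiner cut separating $s$ from $u,v\in\mathcal{W}$, one still has $c(P_1)\ge\lambda_S+1$, hence $c(P_4)\le\lambda_S-1$. If $P_4$ contained any Steiner vertex it would itself be a Steiner cut of capacity at least $\lambda_S$, an immediate contradiction; the residual possibility is $P_4\subseteq\mathcal{W}\setminus\{s\}$. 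To close this gap I would lean on the refined structure of $G(\mathcal{W})$---that every Steiner vertex $t\in S(\mathcal{W})\setminus\{s\}$ has degree exactly $\lambda_S$ (because it is isolated by an $S$-mincut in $G(\mathcal{W})$) and that every Steiner cut subdividing $\mathcal{W}$ has capacity at least $\lambda_S+1$---applying these to the augmented cuts $P_4\cup\{t\}$ for well-chosen Steiner vertices $t$, and, when the arithmetic allows, reinforcing with \textsc{Gen-3-Star Lemma} on a triple of $(\lambda_S+1)$-cuts drawn from $\{C,C',P_2,P_3\}$ whose star cuts all subdivide $\mathcal{W}$. I expect this last sub-case---where the bare sub-modular bookkeeping on $C,C'$ does not suffice---to be the main obstacle of the proof.
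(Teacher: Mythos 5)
Your forward direction and your treatment of the diagonal sub-case are correct and are, in substance, the paper's argument. The paper's converse proof deletes the edge $e_1$, notes $c_{G\setminus e_1}(C\setminus C')+c_{G\setminus e_1}(C'\setminus C)\le 2\lambda_S$ by sub-modularity of the $\lambda_S$-cuts $C,C'$ in $G\setminus e_1$, and then carries the bound back to $G$ by asserting that $e_1$ contributes to neither $C\setminus C'$ nor $C'\setminus C$; from $c(C\setminus C')\ge\lambda_S+1$ (Steiner cut subdividing $\mathcal W$) it then forces $c(C'\setminus C)\le\lambda_S-1$, contradicting that $C'\setminus C$ is a Steiner cut. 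Your version, working with the exact identity $c(C)+c(C')=c(C\setminus C')+c(C'\setminus C)+2E(C\cap C',\,\overline{C\cup C'})$ and noting $E(C\cap C',\overline{C\cup C'})\ge 1$ when the edge is diagonal, reaches the same inequality and the same contradiction. So that part is fine.

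The genuine issue is the off-diagonal configuration, and you are right that your argument does not close it: you only reach $c(\overline{C\cup C'})\le\lambda_S-1$ with $\overline{C\cup C'}$ nonempty and contained in $\mathcal W\setminus\{s\}$, which is not by itself absurd. But note that the paper's proof has the same hole: the sentence ``edge $e_1$ contributes neither to cut $C\setminus C'$ nor to cut $C'\setminus C$'' is true only when $e_1$ has one endpoint in $C\cap C'$ and the other in $\overline{C\cup C'}$, i.e., only in the diagonal configuration; if $e_1$ goes from $C\setminus C'$ to $C'\setminus C$ it contributes to both of those cuts, the carried-back inequality becomes $c(C\setminus C')+c(C'\setminus C)\le 2\lambda_S+2$, and no contradiction follows by this arithmetic. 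So the paper proves the lemma only under the unstated hypothesis that the edge is diagonal. This happens to be harmless for the paper's purposes: in both downstream uses (Lemma~\ref{lem : necessary condition}, where $p\in C\cap C'$ and $q\in\overline{C\cup C'}$, and Lemma~\ref{lem : sufficient condition}, where $x\in C\cap C_1$ and $y\in\overline{C\cup C_1}$) the edge is diagonal, so the applications are sound even though the lemma as stated is slightly overclaimed.

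Concretely, then: if you restrict to the diagonal configuration, your proposal is a complete and correct proof, and it takes essentially the paper's route (just phrased via the sub-modular identity rather than the delete-the-edge trick). For the off-diagonal case your gestures at degree-$\lambda_S$ Steiner vertices and \textsc{Gen-3-Star} do not in fact close the gap (the natural triples $\{C,C',C\cap C'\}$ or $\{C,C',\overline{C'}\}$ have an empty star cut, so the lemma's hypotheses fail), and the paper offers no argument there either; the clean fix is to add the diagonal hypothesis to the lemma statement, which is what every use of it needs.
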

We exploit \textsc{Property ${\mathcal P}_3$} crucially to show that failed edges $e,e'$ cannot contribute to any single $(\lambda_S+1)$ cut if there is no Steiner vertex in $\overline{C_1\cup C_2}$. In other words, the existence of a Steiner vertex in $\overline{C_1\cup C_2}$ is indeed a sufficient condition, which leads to the following result (refer to Lemma \ref{lem : condition of dual failure} in full version).
\begin{lemma} \label{lem : conditions of dual edge failures}
    Both failed edges $e=(x,y)$ and $e'=(x',y')$ contribute to a single $(\lambda_S+1)$ cut if and only if there exist a nearest $(\lambda_S+1)$ cut $C_1$ from $x$ to $y$ and a nearest $(\lambda_S+1)$ cut $C_2$ from $x'$ to $y'$ such that $y'\notin C_1$, $y\notin C_2$, and there exists a Steiner vertex $t\in \overline{C_1\cup C_2}$.
\end{lemma}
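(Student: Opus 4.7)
The plan is to prove both directions directly via submodularity of cuts combined with the defining property of the $(\lambda_S+1)$ class ${\mathcal W}$ -- namely, that no $S$-mincut separates two vertices of ${\mathcal W}$ -- rather than invoking \textsc{Property ${\mathcal P}_3$} directly (though the latter would yield an alternative route).

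For the only-if direction, suppose both $e$ and $e'$ contribute to some $(\lambda_S+1)$ cut $C^*$. By the convention $s\in C^*$, and without loss of generality $x,x'\in C^*$ and $y,y'\in\overline{C^*}$. I would iteratively shrink $C^*$ by replacing the current cut with any strictly smaller $(\lambda_S+1)$ cut still containing $x$, whenever possible; finiteness of $V$ forces termination at a nearest $(\lambda_S+1)$ cut $C_1\subseteq C^*$ of $x$. Since $C_1\subseteq C^*$, both $y$ and $y'$ lie in $\overline{C_1}$, so $C_1$ is a nearest $(\lambda_S+1)$ cut from $x$ to $y$ with $y'\notin C_1$. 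An entirely symmetric shrinking yields a nearest $(\lambda_S+1)$ cut $C_2\subseteq C^*$ from $x'$ to $y'$ with $y\notin C_2$. Because $C^*$ is a Steiner cut and $s\in C^*$, some Steiner vertex $t$ lies in $\overline{C^*}\subseteq \overline{C_1\cup C_2}$, completing this direction.

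For the if direction, given $C_1,C_2,t$ as in the hypothesis, I would show that $C_1\cup C_2$ itself is a $(\lambda_S+1)$ cut to which both $e,e'$ contribute. Contribution is immediate: $x\in C_1\subseteq C_1\cup C_2$ and, by hypothesis, $y\notin C_1$ and $y\notin C_2$, hence $y\in \overline{C_1\cup C_2}$; symmetrically for $e'$. The convention gives $s\in C_1\cap C_2$ and the Steiner vertex $t\in \overline{C_1\cup C_2}$ ensures that both $C_1\cup C_2$ and $C_1\cap C_2$ are Steiner cuts, so each has capacity at least $\lambda_S$. The crucial strengthening to $\lambda_S+1$ for each corner uses ${\mathcal W}$: if $C_1\cup C_2$ were an $S$-mincut, it would separate the endpoints $x,y\in {\mathcal W}$ of the contributing edge $e$, contradicting the definition of a $(\lambda_S+1)$ class; if $C_1\cap C_2$ were an $S$-mincut containing $s$, then since $y\in {\mathcal W}$ cannot be separated from $s$ by any $S$-mincut, we would need $y\in C_1\cap C_2$, contradicting $y\notin C_1$. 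Hence $c(C_1\cup C_2),c(C_1\cap C_2)\geq \lambda_S+1$, and the submodular inequality $c(C_1)+c(C_2)\geq c(C_1\cup C_2)+c(C_1\cap C_2)$ with $c(C_1)=c(C_2)=\lambda_S+1$ pins down $c(C_1\cup C_2)=\lambda_S+1$ exactly.

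I expect the delicate step to be the strict bound $c(C_1\cap C_2)\geq \lambda_S+1$, which requires locating a vertex of ${\mathcal W}$ (namely $y$) on the ``wrong side'' of a putative $S$-mincut: one must simultaneously use the convention $s\in C$ for every cut $C$, the membership $y\in{\mathcal W}$, and the nearness-condition $y\notin C_1$. Without this strict bound, submodularity only yields $c(C_1\cup C_2)\leq \lambda_S+2$, which would leave open the possibility that $C_1\cup C_2$ is a capacity-$(\lambda_S+2)$ Steiner cut and that no $(\lambda_S+1)$ cut contains both $e$ and $e'$ as contributing edges.
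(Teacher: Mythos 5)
Your proof is correct, and it is slightly leaner than the paper's. The $\Leftarrow$ direction is identical in spirit to the paper's Lemma~\ref{lem : existence of a steiner vertex}: you use the Steiner vertex $t$ to make $C_1\cup C_2$ and $C_1\cap C_2$ both Steiner cuts, then locate $s$ and $y$ on opposite sides of each to force capacity $\ge\lambda_S+1$ for both corners, and close with the submodular inequality. Your explicit identification of the $c(C_1\cap C_2)\ge\lambda_S+1$ step as the load-bearing one is accurate; the paper's own writeup of Lemma~\ref{lem : existence of a steiner vertex} only argues $c(C_1\cap C_2)\ge\lambda_S$, and your stronger observation (that $C_1\cap C_2$ also subdivides ${\mathcal W}$ because $s\in C_1\cap C_2$ and $y\notin C_1$) is exactly what is needed to actually pin down $c(C_1\cup C_2)=\lambda_S+1$. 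Where you genuinely diverge is the $\Rightarrow$ direction: the paper, after obtaining $C_i,C_j\subseteq C$, nevertheless invokes the contrapositive of Lemma~\ref{lem : sufficient condition} — whose forward direction rests on \textsc{Property~${\mathcal P}_3$} and a delicate crossing-cut analysis — to extract the Steiner vertex. You bypass this entirely with the elementary observation that $\overline{C^*}\subseteq\overline{C_1\cup C_2}$, so any Steiner vertex outside the Steiner cut $C^*$ already certifies the conclusion. This is a real simplification: it shows the Steiner-vertex half of the statement's $\Rightarrow$ direction follows from the containment $C_1,C_2\subseteq C^*$ alone, without appealing to \textsc{Property~${\mathcal P}_3$} (which the paper does need elsewhere, e.g., in Lemma~\ref{lem : sufficient condition} for the ${\mathcal O}(1)$-query mechanism, but not here). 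The trade-off is that the paper's route establishes the existence of the Steiner vertex for the fixed pair $C_1,C_2$ satisfying the necessary condition in Lemma~\ref{lem : new necessary condition}, which is what the query algorithm later exploits, whereas your route produces it only for the specific nested pair — which suffices for the existential statement of this lemma.
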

It follows that if the capacity of global mincut of $G({\mathcal W})$ is at least $4$, the data structure in Theorem \ref{thm : data structure} is sufficient for verifying every condition of Lemma \ref{lem : conditions of dual edge failures}. 
Exploiting \textsc{Property ${\mathcal P}_1$}, \textsc{Property ${\mathcal P}_3$} and the following result, we show that data structure in Theorem \ref{thm : data structure} also works if the capacity of global mincut of $G({\mathcal W})$ is at most $3$ (refer to Lemma \ref{lem : necessary condition} in full version).
\begin{lemma}
    Let $(p,q)$ be any edge of $G({\mathcal W})$ and $u$ be a vertex such that $p,q,u\in {\mathcal W}$. Vertex $u$ belongs to a nearest $(\lambda_S+1)$ cut from $p$ to $q$ if and only if there is no $(\lambda_S+1)$ cut $C$ of $G({\mathcal W})$ in which edge $(p,q)$ contributes and $u\notin C$.
\end{lemma}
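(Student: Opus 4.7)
The plan is to prove the two directions separately, with the forward direction being the work. For the easy ``if'' direction, since $p,q\in{\mathcal W}$ belong to the same $(\lambda_S+1)$ class and $(p,q)$ is an edge, the family of $(\lambda_S+1)$ cuts containing $p$ but not $q$ is nonempty, and any inclusion-minimal element of it is a nearest $(\lambda_S+1)$ cut from $p$ to $q$; by hypothesis such a cut must contain $u$.

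For the ``only if'' direction, fix a nearest $(\lambda_S+1)$ cut $C^*$ from $p$ to $q$ with $u\in C^*$, and assume for contradiction that $C$ is a $(\lambda_S+1)$ cut with $(p,q)$ contributing, oriented so that $p\in C$ and $q,u\notin C$. I will exhibit a $(\lambda_S+1)$ cut strictly contained in $C^*$ on the $p$-side, contradicting the nearness of $C^*$; the candidate is $D:=C\cap C^*$, which contains $p$ but neither $u$ nor $q$, so $p\in D\subsetneq C^*$. First, $C$ and $C^*$ must cross: the corner sets $C\cap C^*$, $C^*\setminus C$ and $\overline{C\cup C^*}$ contain $p$, $u$ and $q$ respectively, and $C\setminus C^*$ is nonempty because otherwise $C\subsetneq C^*$ would itself be a smaller $(\lambda_S+1)$ cut of $p$ separating $p$ from $q$, already contradicting the minimality of $C^*$. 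Since $(p,q)$ is an edge of ${\mathcal W}$ contributing to the crossing $(\lambda_S+1)$ cuts $C$ and $C^*$, and $u\in(C^*\setminus C)\cap{\mathcal W}$ witnesses that $C^*\setminus C$ contains a ${\mathcal W}$-vertex, \textsc{Property ${\mathcal P}_3$} (in contrapositive form) forbids both $C\setminus C^*$ and $C^*\setminus C$ from simultaneously containing a Steiner vertex of $S({\mathcal W})$.

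The heart of the argument is to deduce that $D$ and $C\cup C^*$ are both Steiner cuts. Both $C$ and $C^*$ are themselves Steiner cuts, so each must contain a Steiner vertex on each of its two sides. If $D$ contained no Steiner vertex, then the Steiner vertices of $C\cap S({\mathcal W})$ and $C^*\cap S({\mathcal W})$ would be forced to lie entirely in $C\setminus C^*$ and $C^*\setminus C$ respectively, violating \textsc{Property ${\mathcal P}_3$}; hence $D$ contains a Steiner vertex, and since $\overline{D}\supseteq\overline{C}$ contains one too, $D$ is a Steiner cut. A symmetric bookkeeping on the complementary side using Steiner vertices of $\overline{C}$ and $\overline{C^*}$ shows that $\overline{C\cup C^*}$ contains a Steiner vertex, so $C\cup C^*$ is likewise a Steiner cut.

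Finally, $D$ and $C\cup C^*$ are Steiner cuts that separate $p$ from $q$; since $p,q$ are in the same $(\lambda_S+1)$ class, neither can be an $S$-mincut, so $c(D)\ge\lambda_S+1$ and $c(C\cup C^*)\ge\lambda_S+1$. Submodularity yields
\[
c(D)+c(C\cup C^*)\le c(C)+c(C^*)=2(\lambda_S+1),
\]
forcing $c(D)=\lambda_S+1$. Thus $D$ is a $(\lambda_S+1)$ cut with $p\in D\subsetneq C^*$ (the proper containment witnessed by $u\in C^*\setminus D$), contradicting the nearness of $C^*$ and completing the proof. I expect the main obstacle to be the clean application of \textsc{Property ${\mathcal P}_3$} together with the case analysis over the four corner sets needed to conclude that both $D$ and $C\cup C^*$ are Steiner cuts; once this structural claim is in hand, the capacity bound via submodularity and the $(\lambda_S+1)$-class obstruction to being an $S$-mincut make the rest essentially automatic.
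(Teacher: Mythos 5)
Your proof is correct, but it takes a genuinely different route from the paper's. The paper argues directly to a contradiction: after (implicitly) replacing the offending cut $C'$ by a nearest $(\lambda_S+1)$ cut from $N_{S({\mathcal W})}(p)$ contained in it, it invokes \textsc{Property ${\mathcal P}_1$} to exclude Steiner vertices from $\overline{C}\cap\overline{C'}$ (because $q$ is a ${\mathcal W}$-vertex sitting there), which pushes the Steiner witnesses for $\overline{C}$ and $\overline{C'}$ into the diagonal corners $C'\setminus C$ and $C\setminus C'$; it then applies the direction of \textsc{Property ${\mathcal P}_3$} saying that if both diagonals carry a Steiner vertex then neither contains a ${\mathcal W}$-vertex, contradicting $u\in C\setminus C'$. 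You instead run the converse direction of \textsc{Property ${\mathcal P}_3$}: from the witness $u\in(C^*\setminus C)\cap{\mathcal W}$ you deduce that not both diagonals carry a Steiner vertex, and use that affirmatively -- it shows the Steiner witnesses of $\overline{C}$ and $\overline{C^*}$ cannot both be absorbed by the diagonals, so $\overline{C\cup C^*}$ retains a Steiner vertex, $C\cup C^*$ is a Steiner cut, and the submodularity squeeze pins $c(C\cap C^*)=\lambda_S+1$, contradicting the nearness of $C^*$. Your route avoids \textsc{Property ${\mathcal P}_1$} entirely (and sidesteps the paper's tacit reduction of $C'$ to a nearest cut, which is needed to invoke it), at the price of the extra corner-set bookkeeping and the submodularity step; the paper's argument is shorter. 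One small redundancy in your writeup: under the standing orientation $s$ lies in every $(\lambda_S+1)$ cut, so $s\in C\cap C^*$ is automatic, and the \textsc{Property ${\mathcal P}_3$} argument you run just to certify $D=C\cap C^*$ as a Steiner cut is unnecessary -- that bookkeeping only genuinely matters on the $\overline{C\cup C^*}$ side.
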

In conclusion, we have shown that for any capacity of global mincut of $G({\mathcal W})$, the data structure in Theorem \ref{thm : data structure} can determine the necessary condition in Lemma \ref{lem : overview : necessary condition} in ${\mathcal O}(1)$ time. To verify whether there is a Steiner vertex in $\overline{C_1\cup C_2}$, it requires ${\mathcal O}(|S|)$ time. Moreover, data structure in Theorem \ref{thm : data structure} can report $\overline{C_1}$ and $\overline{C_2}$ in ${\mathcal O}(n)$ time, which ensures that ${\overline{C_1}\cap \overline{C_2}}$ can be reported in ${\mathcal O}(n)$ time. Therefore, we have designed an ${\mathcal O}(n(n-|S|+1))$ space data structure that, after the failure of any pair of edges, can report the capacity of $S$-mincut and an $S$-mincut in ${\mathcal O}(|S|)$ time and ${\mathcal O}(n)$ time, respectively. This leads to Theorem \ref{thm : dual edge failure}(2).
\subsubsection{An $O((n-|S|)^2+n)$ Space Data Structure with $O(1)$ Query Time}
 After the failure of any pair of edges, the ${\mathcal O}(n(n-|S|+1))$ space data structure in Theorem \ref{thm : dual edge failure}(2) (designed in Section \ref{sec : overview : dual edge first}) takes ${\mathcal O}(|S|)$ time to report the capacity of $S$-mincut. For $S=V$, the query time is ${\mathcal O}(n)$, which is significantly inferior compared to the ${\mathcal O}(1)$ query time for reporting the capacity of global mincut after the failure of a pair of edges using the data structure in \cite{DBLP:conf/stoc/DinitzN95}. In this section, we present a data structure that is more compact than the data structure in Theorem \ref{thm : dual edge failure}(2). In addition, it can also report the capacity of $S$-mincut in ${\mathcal O}(1)$ time after the failure of a pair of edges; and hence, provides a generalization to the data structure of \cite{DBLP:conf/stoc/DinitzN95} from $S=V$ to any $S\subseteq V$. 


Recall that, only for handling Case 1 in Section \ref{sec : overview : dual edge first}, our data structure takes ${\mathcal O}(|S|)$ time to report the capacity of $S$-mincut. So, we consider the graph $G({\mathcal W})$ for a Singleton $(\lambda_S+1)$ class ${\mathcal W}$.  
Observe that the necessary condition (Lemma \ref{lem : overview : necessary condition}) can be verified in ${\mathcal O}(1)$ time. So, our objective is to determine in ${\mathcal O}(1)$ time whether there exists a Steiner vertex $t\in \overline{C_1\cup C_2}$ (the sufficient condition in Lemma \ref{lem : conditions of dual edge failures}).
For a single vertex $u\in {\mathcal W}$, \textsc{Property ${\mathcal P}_2$} (Lemma \ref{lem : overview : property p2}) ensures that there cannot be more than one Steiner vertex belonging to $\overline{C\cup C'}$, where $C,C'\in N_{S({\mathcal W})}(u)$. However, \textsc{Property ${\mathcal P}_2$} does not hold for a pair of vertices from ${\mathcal W}$. In other words, for a pair of vertices $u_1$ and $u_2$ in ${\mathcal W}$, there can be $\Omega(|S({\mathcal W})|)$ number of Steiner vertices that can belong to $\overline{C\cup C'}$, where $C\in N_{S({\mathcal W})}(u_1)$ and $C'\in N_{S({\mathcal W})}(u_2)$ (refer to Figure \ref{fig : 2nd figure in overview}). 
So, it might happen that the number of Steiner vertices belonging to each set $\overline{C_1}$ and $\overline{C_2}$ is ${\Omega}(|S({\mathcal W})|)$, but $(\overline{C_1\cup C_2})\cap S({\mathcal W})$ is ${\mathcal O}(1)$ only. Hence, $\Omega(|S({\mathcal W})|)$ time seems necessary to determine whether there is a Steiner vertex $t\in \overline{C_1\cup C_2}$, which is ${\mathcal O}(|S|)$ in the worst case.
In order to achieve ${\mathcal O}(1)$ query time, we provide a partition of the set of all cuts $\{N_{S({\mathcal W})}(u) ~|~ \forall u\in {\mathcal W}\}$ using the set of all $(\lambda_S+1)$ cuts and the Steiner set of $G({\mathcal W})$ as follows.

Let ${\mathcal C}_{\lambda_S+1}$ be the set of all $(\lambda_S+1)$ cuts of $G({\mathcal W})$ and let  ${\mathcal C}_s$ be the set of all cuts from $N_{S({\mathcal W})}(u)$ for all $u\in {\mathcal W}$. Our approach is to associate each cut of ${\mathcal C}_s$ with a \textit{small} set of Steiner vertices so that it helps in quickly determining whether a Steiner vertex is present in $ \overline{C_1\cup C_2}$. We show that there exists a subset $S^*$ of $S({\mathcal W})$ such that every cut in ${\mathcal C}_{\lambda_S+1}$ is associated with exactly one vertex from $S^*$ and satisfies the following interesting property 
\begin{lemma}[refer to Lemma \ref{lem : verifying third condition in constant} in full version]\label{lem : labeling of cuts}
Suppose $C_1$ and $C_2$ satisfy the necessary condition in Lemma \ref{lem : overview : necessary condition}, that is, $y'\notin C_1$ and $y\notin C_2$. Then, there is a Steiner vertex $t\in \overline{C_1\cup C_2}$ if and only if $C_1$ and $C_2$ are associated with the same Steiner vertex from $S^*$.
\end{lemma}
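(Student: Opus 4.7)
My plan is to build the labeling from an equivalence relation on $\mathcal{C}_{\lambda_S+1}$. Define $C \sim C'$ iff $\overline{C \cup C'} \cap S(\mathcal{W}) \ne \emptyset$, supplemented with reflexivity. Once this is shown to be a genuine equivalence relation whose every class admits a \emph{common} Steiner vertex sitting outside every cut in the class, pick one such common vertex $t^{*}(\mathcal{E}) \in S(\mathcal{W})$ for each class $\mathcal{E}$, set $S^{*} = \{t^{*}(\mathcal{E}) : \mathcal{E}\}$, and define the label $\ell(C) = t^{*}(\mathcal{E})$ for $C \in \mathcal{E}$. The ``if'' direction of the lemma is then free: $\ell(C_1) = \ell(C_2) = t^{*}$ forces $t^{*} \in \overline{C_1} \cap \overline{C_2} = \overline{C_1 \cup C_2}$ by construction. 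The ``only if'' direction reduces to the statement that $\sim$ is transitive enough to group any two cuts whose union-complement carries a Steiner vertex into the same class, which is where the real work lies.

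The heart of the proof is thus transitivity coupled with a Helly-type upgrade. For transitivity, suppose $t_{12} \in \overline{C_1 \cup C_2} \cap S(\mathcal{W})$ and $t_{23} \in \overline{C_2 \cup C_3} \cap S(\mathcal{W})$, and aim to produce a Steiner vertex in $\overline{C_1 \cup C_3}$. The easy sub-cases are $t_{12} \in \overline{C_3}$ and $t_{23} \in \overline{C_1}$, either of which delivers the required vertex directly. The hard sub-case forces $t_{12} \in \overline{C_1} \cap \overline{C_2} \cap C_3$ and $t_{23} \in C_1 \cap \overline{C_2} \cap \overline{C_3}$, i.e., both vertices lie in star cuts of the triple $C_1, C_2, C_3$. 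Here I would invoke \textsc{Gen-3-Star Lemma} on $C_1, C_2, C_3$, which bounds $c(C_1 \cap C_2 \cap C_3) \le 3 - k$, where $k$ counts the star cuts that subdivide $\mathcal{W}$. I would then combine this capacity bound with Property $\mathcal{P}_3$---which ties the absence of $\mathcal{W}$-vertices in the symmetric differences of two crossing $(\lambda_S+1)$ cuts to the presence of Steiner vertices in those differences---to either force the third star cut $\overline{C_1} \cap C_2 \cap \overline{C_3}$ to also carry a Steiner vertex (pushing $C_1 \cap C_2 \cap C_3$ down to a cut whose complement must contain a Steiner vertex from $S(\mathcal{W})$, giving the desired $t$), or else contradict the nearest-cut extremality that lies behind the necessary condition in Lemma \ref{lem : overview : necessary condition}.

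Having transitivity, the Helly upgrade---that every equivalence class $\mathcal{E}$ admits a single Steiner vertex common to $\overline{C}$ for all $C \in \mathcal{E}$---uses the same toolkit: Property $\mathcal{P}_2$ caps pairwise sharing at one Steiner vertex for $m$-cuts rooted at a common vertex, and iterating \textsc{Gen-3-Star Lemma} across any three cuts of the class propagates this to a single class-wide representative $t^{*}(\mathcal{E})$. This yields $S^{*}$ and hence the labeling $\ell$, completing both directions of the lemma. The main obstacle, as I see it, is precisely the pathological sub-case of transitivity: a bare Gen-3-Star application only bounds $c(C_1 \cap C_2 \cap C_3)$ and does not by itself drop a Steiner vertex into the needed corner, so Property $\mathcal{P}_3$ (together with careful bookkeeping of which star cuts subdivide $\mathcal{W}$) is essential to rule out the configurations in which the wanted Steiner vertex could fail to materialise; the proof must dovetail $\mathcal{P}_3$ with the Gen-3-Star capacity estimate rather than invoking either in isolation.
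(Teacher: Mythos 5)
Your proposal diverges from the paper's argument in a way that opens a genuine gap, and I don't think the gap can be closed along the route you sketch.

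The core difficulty is that you are trying to make the labeling come from an \emph{unconditional} equivalence relation on $\mathcal{C}_{\lambda_S+1}$: $C \sim C'$ iff $\overline{C \cup C'} \cap S(\mathcal{W}) \ne \emptyset$. But the paper never claims this relation is transitive, and there is no structural reason to expect it to be. Your proposed transitivity argument has to handle the case $t_{12}\in \overline{C_1}\cap\overline{C_2}\cap C_3$ and $t_{23}\in C_1\cap\overline{C_2}\cap\overline{C_3}$, and you acknowledge that \textsc{Gen-3-Star} alone only gives a bound on $c(C_1\cap C_2\cap C_3)$; it drops no Steiner vertex into $\overline{C_1\cup C_3}$. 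You then reach for Property $\mathcal{P}_3$, but $\mathcal{P}_3$ is a statement about an edge \emph{belonging to $\mathcal{W}$} that \emph{contributes to both} of a pair of crossing $(\lambda_S+1)$ cuts. In the transitivity step no such edge is in play: $C_1$ and $C_3$ are arbitrary members of $\mathcal{C}_{\lambda_S+1}$, possibly nearest cuts for different base vertices of $\mathcal{W}$, and you have supplied no edge simultaneously contributing to both. So $\mathcal{P}_3$ simply does not apply, and the transitivity claim is unproved. The same issue undermines your Helly-type upgrade — you would need a class-wide common representative, which again needs the unconditioned transitivity that is missing. Beyond the gap, the whole framework is heavier than what the statement requires: the equivalence relation would have to be a property of $\mathcal{C}_{\lambda_S+1}$ alone, yet the lemma only asserts a biconditional \emph{under} the necessary condition, which is a query-dependent hypothesis.

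The observation that actually powers the paper's proof — and that your proposal misses — is that the necessary condition makes $C_1\cap C_2$ itself a $(\lambda_S+1)$ cut. Indeed, the necessary condition forces $y,y'\in\overline{C_1\cup C_2}\cap\mathcal{W}$ while $s\in C_1\cap C_2$ and (since $e'$ does not contribute to $C_1$, etc.) $x\in \overline{C_2}$, so \emph{both} $C_1\cap C_2$ and $C_1\cup C_2$ subdivide $\mathcal{W}$, and submodularity then pins both at capacity exactly $\lambda_S+1$. The paper's $S^*$ is built by a single greedy pass (Algorithm \ref{alg : constructing S*}) that marks each cut of $\mathcal{C}_{\lambda_S+1}$ with exactly one vertex from $S^*$ (Lemma \ref{lem : a cut is marked with exactly one steiner vertex}). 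If $C_1$ and $C_2$ had distinct marks $s_1\ne s_2$ while $t\in\overline{C_1\cup C_2}$, then $s_1\in \overline{C_1}$ forces $s_1\in\overline{C_1\cap C_2}$, and symmetrically $s_2\in\overline{C_1\cap C_2}$; but $C_1\cap C_2$ is itself in $\mathcal{C}_{\lambda_S+1}$, and the greedy pass can mark it with only one of $s_1,s_2$ (removing the other from the pool at the same step), contradicting that both survive into $S^*$. This is a conditional uniqueness argument keyed to the fact that $C_1\cap C_2$ lands inside $\mathcal{C}_{\lambda_S+1}$, not a global classification of cuts by a representative Steiner vertex. To repair your proof you would need to abandon the equivalence-relation framing and instead first derive that $C_1\cap C_2\in\mathcal{C}_{\lambda_S+1}$ from the necessary condition, then argue about the greedy marking directly.
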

To verify the necessary condition in Lemma \ref{lem : overview : necessary condition}, we observe that the Steiner set $S({\mathcal W})$ is not required in the design of the data structure in Theorem \ref{thm : data structure}. This leads to an ${\mathcal O}((n-|S|)^2+n)$ space data structure that can answer the necessary condition in ${\mathcal O}(1)$ time.
Now, associating one vertex from $S^*$ to every cut in ${\mathcal C}_s$ occupies ${\mathcal O}((n-|S|)^2)$ space for all Singleton $(\lambda_S+1)$ classes of $G$. This helps in verifying the sufficient condition in ${\mathcal O}(1)$ time as well. Therefore, the capacity of $S$-mincut can be reported in ${\mathcal O}(1)$ time using an ${\mathcal O}((n-|S|)^2+n)$ space data structure after the failure of any pair of edges. This result, along with Theorem \ref{thm : dual edge failure}(2) and Theorem \ref{thm : dual edge insertion} in Appendix \ref{app : dual edge insertion} for handling insertions of any pair of edges, lead to Theorem \ref{thm : dual edge failure}. 



\subsection{Lower Bound for Dual Edge Sensitivity Oracle}
     It follows from Definition \ref{def : dual edge sensitivity oracle} that any dual edge Sensitivity Oracle for $S$-mincut answers the following query.\\
    \noindent
         \textsc{cap}$(e,e')$: \textit{Report the capacity of Steiner mincut after the failure of a pair of edges $e,e'$ in $G$.}\\     
     For $S=\{s,t\}$, Baswana, Bhanja, and Pandey \cite{DBLP:journals/talg/BaswanaBP23} established the following lower bound for answering query $\textsc{cap}$ conditioned on Reachability Hypothesis \cite{DBLP:conf/wads/GoldsteinKLP17, DBLP:journals/siamcomp/Patrascu11}. 
     For (un)directed multi-graph, any data structure that can answer query \textsc{cap} in ${\mathcal O}(1)$ time must occupy $\Tilde{ \Omega}(n^2)$ space in the worst case ($\Tilde{\Omega}$ hides polylogarithmic factors) unless Reachability Hypothesis \cite{DBLP:conf/wads/GoldsteinKLP17, DBLP:journals/siamcomp/Patrascu11} is violated.

      We first show that, for any given Steiner set $S\subseteq V$, any data structure that can answer query $\textsc{cap}$ in graph $G$ can also be used to answer query \textsc{cap} for $(s,t)$-mincut in a graph $G'=(V',E')$ on $|V'|={\mathcal O}(n-|S|)$ vertices, where $s,t\in V'$ are the only Steiner vertices of $G'$. This, along with the above-mentioned result of \cite{DBLP:journals/talg/BaswanaBP23}, can be used to establish the following conditional lower bound. For a (un)directed multi-graph, for every $S\subseteq V$, any data structure that can answer query cut in ${\mathcal O}(1)$ time must occupy $\Tilde{ \Omega}((n-|S|)^2)$ space in the worst case, unless Reachability Hypothesis \cite{DBLP:conf/wads/GoldsteinKLP17, DBLP:journals/siamcomp/Patrascu11} is violated. 
     
     Finally, and more importantly, we show by exploiting the structure of $G'$ that the Reachability Hypothesis \cite{DBLP:conf/wads/GoldsteinKLP17} is not required to establish an $\Omega((n-|S|)^2)$ lower bound on space for answering query \textsc{cap}. Hence, our obtained lower bound is unconditional. Moreover, the lower bound is also irrespective of the time taken for answering query \textsc{cap}. The result is formally stated in Theorem \ref{thm : lower bound on dual edge failure}.

\begin{table}[H]
\small
    \centering
    \begin{tabular}{|c|c|c|c|c|}
        \hline
         \textbf{Results} & Data Structure for & Query & Sensitivity Oracle & Sensitivity Oracle \\
         \textbf{} & Minimum+1 Cut (space) & \textsc{cut} (time) & for Reporting Capacity & for Reporting Cut \\
         \hline
           Global Cut \cite{DBLP:conf/stoc/DinitzN95}  &  &  & space: ${\mathcal O}(n)$ & space: ${\mathcal O}(n)$ \\
          ($|S|=n$) &  ${\mathcal O}(n)$ &  ${\mathcal O}(n)$ & time: ${\mathcal O}(1)$ & time: ${\mathcal O}(n)$\\
        
         \hline
         $(s,t)$-cut \cite{DBLP:journals/talg/BaswanaBP23} &   &  &  space: ${\mathcal O}(n^2)$ & space: ${\mathcal O}(n^2)$\\
          ($|S|=2$) & ${\mathcal O}(n^2)$ & ${\mathcal O}(n)$ & time: ${\mathcal O}(1)$ & time: ${\mathcal O}(n)$ \\
         \hline
        {\color{blue}\textbf{Steiner Cuts}} &  &  & {\color{blue}\textbf{space: $\mathbf{{\mathcal O}((n-|S|)^2+n)}$}} & {\color{blue}\textbf{space: $\mathbf{{\mathcal O}(n(n-|S|+1))}$}}  \\
        {\color{blue}\textbf{($2\le |S| \le n$)}} & {\color{blue} $\mathbf{{\mathcal O}(n(n-|S|+1))}$} & {\color{blue}$\mathbf{{\mathcal O}(n)}$} & {\color{blue}\textbf{time: $\mathbf{{\mathcal O}(1)}$}} & {\color{blue}\textbf{time:} $\mathbf{{\mathcal O}(n)}$}  \\
         \hline
    \end{tabular}    
    \caption{A comparison of our results with the existing results of extreme Scenarios of Steiner cut: global cut \cite{DBLP:conf/stoc/DinitzN95} and $(s,t)$-cut \cite{DBLP:journals/talg/BaswanaBP23}. Sensitivity Oracles are for handling insertion/failure of up to a pair of edges.}
    \label{tab : results}
\end{table}

\newpage
\section{Organization of the Full Version}
The full version of the manuscript is organized as follows. An extended preliminary consisting of an overview of Connectivity Carcass is provided in Appendix \ref{sec : extended preliminaries}. The generalization of $\textsc{3-Star-Lemma}$ is established in Appendix \ref{sec : 3 star lemma generalized}. A data structure for a Singleton $(\lambda_S+1)$ class is constructed in Appendix \ref{sec : data structure for singleton class}. Appendix \ref{sec : data structure complete} provides a data structure for answering query $\textsc{cut}$ for generic $(\lambda_S+1)$ classes. The dual edge Sensitivity Oracle is designed in Appendix \ref{sec : dual edge oracle} (dual edge failures) and Appendix \ref{app : dual edge insertion} (dual edge insertions). Appendix \ref{sec : lower bound oracle} contains the lower bound for dual edge Sensitivity Oracle. Finally, we conclude in Appendix \ref{sec : conclusion}. 

\section{Extended Preliminaries: An Overview of Connectivity Carcass} \label{sec : extended preliminaries}

In this section, we provide an overview of the ${\mathcal O}(\min\{n\lambda,m\})$ space Connectivity Carcass given by Dinitz and Vainshtein \cite{DBLP:conf/stoc/DinitzV94, DBLP:conf/soda/DinitzV95, DBLP:journals/siamcomp/DinitzV00}. 
Let ${\mathcal C}_S$ be the set of all $S$-mincuts of graph $G$.\\

\noindent
\textbf{Flesh Graph ${\mathcal F}_S$:} The Flesh graph is obtained from $G$ by contracting each $(\lambda_S+1)$ class of $G$ into a single node. It \textit{preserves} all the $S$-mincuts of $G$ and satisfies the following. A cut $C$ is an $S$-mincut in $G$ if and only if $C$ is a Steiner mincut in ${\mathcal F}_S$. We have a one-to-one mapping between nodes of ${\mathcal F}_S$ and $(\lambda_S+1)$ classes of $G$. \\


\noindent
\textbf{Quotient Mapping $\phi$:} The mapping of vertices from graph $G$ to nodes of Flesh graph ${\mathcal F}_S$ is called the quotient mapping and denoted by $\phi$.
 A node $\mu$ in ${\mathcal F}_S$ is called a \textit{Steiner node} if there exists at least one Steiner vertex $u$ in $G$ such that $\phi(u)=\mu$; otherwise, $\mu$ is called a \textit{nonSteiner node}. The following fact follows immediately from the construction of ${\mathcal F}_S$.
\begin{fact} \label{fact : quotient mapping}
    An edge $e=(u,v)$ in $G$ contributes to an $S$-mincut if and only if $\phi(u)\ne \phi(v)$. 
\end{fact}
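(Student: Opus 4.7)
The plan is to derive the biconditional directly from the definition of $(\lambda_S+1)$-connectivity classes (Definition \ref{def : lambda+1 S-class}), the definition of contributing edge (given just before Definition \ref{def : steiner cut}), and the construction of $\mathcal{F}_S$. The key observation is that, because $\mathcal{F}_S$ is obtained from $G$ by contracting each $(\lambda_S+1)$-connectivity class to a single node, the quotient map satisfies $\phi(u)=\phi(v)$ if and only if $u$ and $v$ lie in the same $(\lambda_S+1)$-connectivity class, which by Definition \ref{def : lambda+1 S-class} is equivalent to saying that no $S$-mincut separates $u$ and $v$.

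For the forward direction, I would assume that $e=(u,v)$ contributes to some $S$-mincut $C$. By the definition of a contributing edge, $C$ separates the endpoints $u$ and $v$. Hence $u$ and $v$ are separated by an $S$-mincut, so they belong to distinct $(\lambda_S+1)$-connectivity classes, which by the observation above yields $\phi(u) \ne \phi(v)$.

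For the reverse direction, I would assume $\phi(u) \ne \phi(v)$. Then $u$ and $v$ are in different $(\lambda_S+1)$-connectivity classes, so by Definition \ref{def : lambda+1 S-class} there exists an $S$-mincut $C$ that separates $u$ from $v$. Since the endpoints of $e$ lie on opposite sides of $C$, the edge $e$ is a contributing edge of $C$, and thus contributes to an $S$-mincut.

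There is no genuine technical obstacle here; the entire statement is a straightforward unpacking of the contraction underlying $\mathcal{F}_S$ together with the definitions. The only thing to be a little careful about is to explicitly invoke both implications of Definition \ref{def : lambda+1 S-class}, namely that the equivalence classes of the relation $R$ are exactly the maximal sets of vertices no two of which are separated by an $S$-mincut, so that \emph{belonging to different classes} and \emph{being separated by some $S$-mincut} are logically identical statements.
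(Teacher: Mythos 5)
Your proof is correct and matches the paper's implicit reasoning: the paper states this as a Fact that ``follows immediately from the construction of $\mathcal{F}_S$,'' and your argument is exactly the intended unpacking of Definition~\ref{def : lambda+1 S-class}, the definition of a contributing edge, and the fact that $\phi$ collapses each $(\lambda_S+1)$-connectivity class to a single node. Nothing is missing; the only thing one might add for extreme pedantry is a one-line check that the relation $R$ is transitive (so the classes are well-defined), which the paper also takes for granted.
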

We now define the following well-known cactus graph  and its minimal cuts.
\begin{definition}[Cactus Graph and its Minimal Cuts \cite{dinitz1976structure, DBLP:journals/siamcomp/DinitzV00}] \label{def : cactus and minimal cuts}
    An undirected graph is said to be a cactus graph if each edge in the graph belongs to at most one cycle. 
    An edge $e$ of a cactus is said to be a cycle-edge if $e$ is an edge of a cycle; otherwise, $e$ is called a tree-edge. A minimal cut of the cactus is either a tree-edge or a pair of cycle-edges from the same cycle.
\end{definition}
\begin{definition} [Proper Path in Cactus]
    A path in a cactus is said to be a proper path if every edge in the path belongs to at most one cycle in the cactus.
\end{definition}
The proper path between any pair of nodes $M, N$, denoted by $<M, N>$, in a cactus graph is unique, if it exists. \\

\noindent
\textbf{Skeleton ${\mathcal H}_S$:} It follows from Definition \ref{def : steiner cut} that every $S$-mincut $C$ partitions the Steiner set $S$ into two subsets $C\cap S$ and $\overline{C}\cap S$. There is an ${\mathcal O}(|S|)$ space cactus graph, known as \textit{Skeleton}, that compactly stores the Steiner partition formed by every $S$-mincut from ${\mathcal C}_S$. We denote the Skeleton by ${\mathcal H}_S$. \\

\noindent
Every Steiner node of Flesh graph is mapped to a unique node in the Skeleton. Note that there also exist nodes in the Skeleton to which no Steiner node is mapped. They are called the \textit{empty nodes} of the Skeleton. Skeleton satisfies the following property.  

\begin{lemma} [Lemma 8 and Theorem 3 of Section 2.4 in \cite{DBLP:conf/stoc/DinitzV94}]
    For any set $A\subseteq S$, $A$ is represented as a minimal cut in ${\mathcal H}_S$ if and only if there exists an $S$-mincut $C$ in $G$ such that $S\cap C=A$. 
    \label{lem : property of skeleton}
\end{lemma}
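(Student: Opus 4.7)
The plan is to reduce the problem to the Flesh graph ${\mathcal F}_S$ and then invoke the classical cactus representation theorem of Dinitz, Karzanov, and Lomonosov \cite{dinitz1976structure}, adapted to the Steiner setting via the \textsc{3-Star Lemma}. First, I would observe that since ${\mathcal F}_S$ preserves all $S$-mincuts of $G$ and every Steiner vertex of $G$ lies inside a unique Steiner node of ${\mathcal F}_S$, the Steiner partition $(C\cap S,\ S\setminus C)$ induced by an $S$-mincut $C$ is in one-to-one correspondence with a partition of the Steiner nodes of ${\mathcal F}_S$. Note that, by Definition \ref{def : lambda+1 S-class}, all Steiner vertices in a single $(\lambda_S+1)$ class must lie on the same side of every $S$-mincut, so this map is well-defined on partitions. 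Thus it suffices to prove the iff after replacing ``$S$-mincut in $G$'' by ``Steiner mincut in ${\mathcal F}_S$''.

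Next, I would show that the family $\mathcal{F}$ of Steiner-partitions of $S$ induced by $S$-mincuts enjoys enough closure to admit a cactus representation. By sub-modularity (Lemma \ref{submodularity of cuts}) combined with Lemma \ref{lem : mincuts closed under int and uni}, whenever $C_1, C_2$ are two crossing $S$-mincuts such that each of the four corner sets contains a Steiner vertex, all four corner sets are themselves $S$-mincuts; hence $\mathcal{F}$ is closed under the Boolean operations applied to the $A$-parts whenever the result still yields a bipartition of $S$ into two non-empty sides. More crucially, the \textsc{3-Star Lemma} rules out the one remaining obstruction to cactus representability, namely a triangular configuration of three pairwise crossing Steiner mincuts all of whose star cuts hit $S$, because it forces the triple intersection to have capacity zero, which is impossible in a connected graph.

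With these properties in hand, I would carry out the standard cactus construction: the partitions in $\mathcal{F}$ organize themselves into (i) chains of pairwise non-crossing partitions, giving rise to the tree-edges of the cactus, and (ii) maximal families of pairwise crossing partitions whose symmetric-difference structure forces them to sit on a common cycle. Each $(\lambda_S+1)$ class containing a Steiner vertex is then assigned to a unique node of the resulting cactus ${\mathcal H}_S$, and by Definition \ref{def : cactus and minimal cuts} a minimal cut of ${\mathcal H}_S$, either a tree-edge or a pair of cycle-edges from the same cycle, corresponds bijectively to one partition in $\mathcal{F}$. Both directions of the iff then follow: given an $S$-mincut $C$ with $C\cap S = A$, the partition $(A,\,S\setminus A)$ lies in $\mathcal{F}$ and is represented as a minimal cut; conversely, every minimal cut of ${\mathcal H}_S$ arises from a partition in $\mathcal{F}$ and hence from a genuine $S$-mincut $C$ in $G$ with $C\cap S = A$.

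The main obstacle will be the second step: showing that $\mathcal{F}$ admits a cactus representation and that no forbidden triangular configuration of three pairwise crossing Steiner partitions exists. This is exactly what the \textsc{3-Star Lemma} delivers; without it, $S$-mincuts could exhibit crossing patterns more intricate than tree-plus-cycle and no cactus representation would exist. Once this structural rigidity is established, the remainder follows along the lines of the classical cactus construction of \cite{dinitz1976structure} for global mincuts, together with routine bookkeeping to identify which nodes of the cactus are populated by Steiner nodes of ${\mathcal F}_S$ and which are the empty nodes.
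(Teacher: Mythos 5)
This lemma is not proved in the paper; it is cited verbatim as Lemma 8 and Theorem 3 of Section 2.4 in Dinitz and Vainshtein (STOC 1994), i.e., it is the defining property of the Skeleton component of the Connectivity Carcass, imported as background. So there is no in-paper proof to compare against, and the right move for a reader of this manuscript is simply to cite, not reprove. That said, your sketch is broadly a faithful roadmap of the Dinitz--Vainshtein construction (Flesh reduction, submodular closure of corner partitions, \textsc{3-Star Lemma} as the rigidity lemma, then the standard cactus assembly).

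There is, however, one genuine error in the middle step. You write that the \textsc{3-Star Lemma} ``rules out'' the triangular configuration of three pairwise crossing Steiner mincuts whose star cuts all hit $S$, ``because it forces the triple intersection to have capacity zero, which is impossible in a connected graph.'' This misreads the lemma. The \textsc{3-Star Lemma} does \emph{not} forbid that configuration; such configurations exist precisely when cuts sit on a common cycle of the cactus. What the lemma asserts is that the configuration forces $c(C_1\cap C_2\cap C_3)=0$, and since in a connected graph no nonempty proper subset has capacity zero, this means $C_1\cap C_2\cap C_3=\emptyset$ (the paper's convention even declares $c(\emptyset)=0$). That emptiness is exactly the structural rigidity the cactus construction \emph{uses} — it is what prevents three pairwise crossing partitions from nesting in a way incompatible with a cyclic arrangement — not a contradiction that eliminates the case. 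As written, your argument would conclude that no three Steiner mincuts can pairwise cross with all star cuts nonempty, which would make every crossing cactus cycle impossible and collapse the Skeleton to a tree. Replace ``rules out \ldots which is impossible'' with ``forces the triple intersection to be empty, which is the constraint that makes the family cactus-representable,'' and the sketch is sound at the level of detail offered.
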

For any graph $H$, $\text{deg}(v)$ denote the number of edges adjacent to a vertex $v$ in $H$. Since the number of edges in ${\mathcal H}_S$ is ${\mathcal O}(|S|)$, the following lemma holds trivially for Skeleton.
\begin{lemma} \label{lem : sum of deg is order S}
    The sum of the degree of all nodes in ${\mathcal H}_S$ is ${\mathcal O}(|S|)$.
\end{lemma}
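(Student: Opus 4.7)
The plan is to prove this lemma by a direct application of the classical handshake identity for undirected graphs, namely $\sum_{v} \text{deg}(v) = 2|E(H)|$ for any undirected graph $H$. Since the Skeleton ${\mathcal H}_S$ is, by construction, an undirected cactus graph (see Definition~\ref{def : cactus and minimal cuts} and the description of ${\mathcal H}_S$ just above), this identity applies directly to ${\mathcal H}_S$.

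First, I would invoke the fact, stated explicitly in the paragraph immediately preceding the lemma, that the number of edges of the Skeleton ${\mathcal H}_S$ is ${\mathcal O}(|S|)$; this is in turn a consequence of the ${\mathcal O}(|S|)$ space bound on ${\mathcal H}_S$ established by Dinitz and Vainshtein, because simply listing the edges of ${\mathcal H}_S$ already consumes $\Omega(|E({\mathcal H}_S)|)$ words. Then I would apply the handshake identity to ${\mathcal H}_S$: the sum over all nodes $v$ of $\text{deg}(v)$ equals $2|E({\mathcal H}_S)|$, which is ${\mathcal O}(|S|)$. This is precisely the claim.

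There is essentially no technical obstacle here; the lemma is a one-line corollary of the handshake identity combined with the previously quoted edge-count bound on the Skeleton. The only point one might wish to verify carefully is that the stated space bound on ${\mathcal H}_S$ is indeed a bound on its number of edges (as opposed to only the number of $S$-mincuts it encodes), but this is immediate from how cactus graphs are stored. Hence the proof reduces to a short invocation of the handshake lemma, and no case analysis or use of the \textsc{Gen-3-Star Lemma} is required.
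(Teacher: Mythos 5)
Your proof is correct and matches the paper's reasoning: the paper also observes that ${\mathcal H}_S$ has ${\mathcal O}(|S|)$ edges and then declares the degree-sum bound to "hold trivially," which is exactly the handshake identity you spell out. No differences in approach worth noting.
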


\noindent
We now introduce the following concept of $k$-junction, $k\ge 3$, which is used heavily in the analysis of the Sensitivity Oracle for dual edge insertions in Appendix \ref{app : dual edge insertion}.\\


\noindent
\textbf{$k$-junction:} A $k$-junction is defined on a cactus graph. A set of $k+1$, $k\ge 3$, nodes $\{u,u_1,u_2,\ldots,u_k\}$ is said to form a $k$-junction if there is a tree-edge between $(u_i,u)$. 
Node $u$ is called the \textit{core} node of the $k$-junction.\\

\noindent
Baswana and Pandey \cite{DBLP:conf/soda/BaswanaP22} introduced the following definition of intersection of two paths in a Skeleton.
\begin{definition} [Intersection of two paths (Definition 2.4 in \cite{DBLP:conf/soda/BaswanaP22})]
    A pair of paths $P_1$ and $P_2$ in Skeleton ${\mathcal H}_S$ are said to intersect if  
    \begin{itemize}
        \item there is a tree-edge in ${\mathcal H}_S$ belonging to both $P_1$ and $P_2$, or
        \item there is a cycle in ${\mathcal H}_S$ that shares edges with both $P_1$ and $P_2$
    \end{itemize}
\end{definition}
The following data structure is used to efficiently answer queries on intersections of a pair of paths. 
\begin{lemma} [Lemma 2.10 in \cite{DBLP:conf/soda/BaswanaP22}] \label{lem : lca queries on skeleton}
    There is an ${\mathcal O}(|S|)$ space data structure that, given any pair of paths $P_1$ and $P_2$ in ${\mathcal H}_S$, can determine in ${\mathcal O}(1)$ time whether $P_1$ intersects $P_2$. Moreover, if they intersect, the data structure can also report the intersection 
    (endpoints of an edge in common, endpoints of two cycle-edges in common, or a single node is in common)
    in ${\mathcal O}(1)$ time.      
\end{lemma}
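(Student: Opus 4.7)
The plan is to reduce every intersection query to a constant number of LCA and cyclic-order lookups on an auxiliary rooted tree built from the cactus ${\mathcal H}_S$. First I construct the \emph{block-cut tree} $T$ of ${\mathcal H}_S$: create a node $\nu_K$ for every cycle $K$ of ${\mathcal H}_S$, a node $\nu_e$ for every tree-edge $e$ of ${\mathcal H}_S$, and connect each $\nu_K$ (resp.\ $\nu_e$) to every ${\mathcal H}_S$-node lying on $K$ (resp.\ on $e$). Because each edge of a cactus belongs to at most one cycle, every block is either a tree-edge or a cycle, so $T$ is indeed a tree, and by Lemma~\ref{lem : sum of deg is order S} it has ${\mathcal O}(|S|)$ nodes and edges. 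Root $T$ arbitrarily and apply Bender–Farach–Colton to obtain an ${\mathcal O}(|S|)$-space LCA structure answering queries in ${\mathcal O}(1)$ time. In addition, for every cycle $K$ store a cyclic numbering of its nodes together with, for each $v\in K$, the pair of cycle-edges incident to $v$; this uses ${\mathcal O}(|K|)$ space per cycle and hence ${\mathcal O}(|S|)$ space overall.

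Any proper path $\langle M,N\rangle$ in ${\mathcal H}_S$ is encoded by the unique simple $T$-path between $M$ and $N$: every block-node that appears on this $T$-path contributes either the corresponding tree-edge to $\langle M,N\rangle$ or, for a cycle-node $\nu_K$, the unique arc of $K$ whose two endpoints are the ${\mathcal H}_S$-neighbors of $\nu_K$ on the $T$-path (when only one such neighbor is present, $\langle M,N\rangle$ merely visits a single node of $K$ and uses no cycle-edge of $K$). Given two paths $P_1=\langle M_1,N_1\rangle$ and $P_2=\langle M_2,N_2\rangle$, I compute the four LCAs of the endpoint pairs across $P_1$ and $P_2$ in $T$, which identifies in ${\mathcal O}(1)$ time all candidate shared block-nodes: the intersection of two simple paths in $T$ is itself a (possibly empty) subpath whose endpoints are among those four LCAs, so at most one tree-edge block and at most one cycle-node of interest need to be examined. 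A shared tree-edge block is immediately returned as a common tree-edge of $P_1$ and $P_2$, matching the first case of the conclusion.

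The remaining and most delicate case is a shared cycle-node $\nu_K$. Here I retrieve in ${\mathcal O}(1)$ time the two endpoints $(a_1,b_1)$ of the arc that $P_1$ uses on $K$, and likewise $(a_2,b_2)$ for $P_2$, from the stored ${\mathcal H}_S$-neighbors of $\nu_K$ along each $T$-path. Using the pre-stored cyclic indices of these four nodes, I test in ${\mathcal O}(1)$ time whether the two arcs overlap on $K$ and, according to the outcome, report a pair of common cycle-edges, a single common cycle-edge, or the single shared node, exactly as required by the statement. The main obstacle I foresee is the careful bookkeeping around degenerate configurations—when a path grazes $K$ at a single node versus genuinely traversing an arc, and when one arc is strictly contained in the other—but each such sub-case reduces to a constant-time comparison of the four cyclic indices on $K$, so both the ${\mathcal O}(1)$ query time and the ${\mathcal O}(|S|)$ space bound claimed in the lemma follow.
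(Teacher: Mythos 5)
This lemma is not proved in the paper at all: it is imported verbatim as Lemma~2.10 from Baswana and Pandey \cite{DBLP:conf/soda/BaswanaP22}, so there is no ``paper's own proof'' to compare against. Your reconstruction --- build the block-cut tree $T$ of the cactus, equip it with a constant-time LCA oracle, store a cyclic numbering per cycle, and reduce each query to $O(1)$ LCA lookups plus an $O(1)$ cyclic-order comparison --- is exactly the standard toolkit for path-intersection queries on a cactus and is almost certainly what the cited paper does; the space and time bounds fall out as you describe.

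Two technical points deserve tightening, though neither changes the asymptotics. First, the four cross-LCAs $\mathrm{LCA}(M_i,N_j)$ alone do not determine the intersection subpath of two $T$-paths; you also need the apexes $\mathrm{LCA}(M_1,N_1)$ and $\mathrm{LCA}(M_2,N_2)$ to test whether a candidate cross-LCA actually lies on \emph{both} $T$-paths (e.g.\ two $V$-shaped paths hanging off a common ancestor produce four coincident cross-LCAs even when the paths are disjoint). Second, recovering the two ${\mathcal H}_S$-neighbors of a shared cycle-node $\nu_K$ \emph{along each $T$-path} is a ``first node on the path from $\nu_K$ toward $M_i$'' query, which is a level-ancestor (or jump-pointer) operation, not an LCA; you need to state that this structure is also built, still in $O(|S|)$ space with $O(1)$ queries. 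With these two additions, your case analysis (shared tree-edge block; shared cycle-node with overlapping arcs, arcs sharing only a node, or disjoint arcs) is correct and covers the three reporting outcomes in the lemma's parenthetical.
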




\noindent
\textbf{Bunch:} Let $A\subset S$ and $(A,S\setminus A)$ be a partition of $S$ formed by a minimal cut of the Skeleton.  Let ${\mathcal B}$ be the set of all the $S$-mincuts $C$ with $C\cap S=A$. Set ${\mathcal B}$ is called a \textit{bunch} of graph $G$.  For a bunch ${\mathcal B}$, the corresponding \textit{Steiner partition} is denoted by $(S_{\mathcal B},S\setminus S_{\mathcal B})$.\\

\noindent
By Lemma \ref{lem : property of skeleton}, every minimal cut of the Skeleton corresponds to a bunch of $G$. The concept of a tight cut for a bunch is defined as follows. 
\begin{definition}[Tight cut for a bunch \cite{DBLP:journals/siamcomp/DinitzV00, DBLP:conf/soda/DinitzV95, DBLP:conf/stoc/DinitzV94}] \label{def : tight cut}
    An $S$-mincut $C$ belonging to a bunch ${\mathcal B}$ is said to be the tight cut of ${\mathcal B}$ from $S_{\mathcal B}$ to $S\setminus S_{\mathcal B}$ if $C\subseteq C'$ for every $S$-mincut $C'\in {\mathcal B}$ with $S_{\mathcal B}\subseteq C'$ and $S\setminus S_{\mathcal B} \subseteq \overline{C'}$. We denote this cut by $C(S_{\mathcal B})$.
\end{definition}
Let ${\mathcal H}_S^1$ and ${\mathcal H}_S^2$ be two subgraphs of Skeleton ${\mathcal H}_S$ formed after the removal of a tree-edge $(N,M)$ or a pair of cycle-edges $\{(N,M),(N',M')\}$ from a minimal cut $C$. Let ${\mathcal B}$ be the corresponding bunch of $C$. Without loss of generality, assume that for every vertex $s\in S_{\mathcal B}$, $\phi(s)$ is mapped to a node in ${\mathcal H}_S^1$. Without loss of generality, assume that node $N$ (likewise nodes $N$ and $N'$) belongs to ${\mathcal H}_S^1$, Then, we also denote the tight cut $C(S_{\mathcal B})$ by $C(N,(N,M))$ (likewise $C(N,N', (N,M), (N',M'))$). \\


\noindent
 \textbf{A Node of Flesh Distinguished by a Bunch:} For a node $\mu$ in Flesh graph, there may exist a bunch ${\mathcal B}$ such that $\mu$ appears in between two tight cuts of ${\mathcal B}$, that is, $\mu\in \overline{C(S_{\mathcal B})}\cap \overline{C(S_{\mathcal B})}$.  
Such a node $\mu$ is said to be \textit{distinguished} by bunch ${\mathcal B}$.  Depending on whether a node of the flesh graph is distinguished by a bunch or not, the set of nodes of Flesh is partitioned into two types of units defined as follows (refer to Section 2.3 in \cite{DBLP:conf/stoc/DinitzN95}).
\begin{definition} [Terminal unit and Stretched unit] \label{def : terminal and nonterminal unit}
    For any node $\mu$ in Flesh, if there exists a minimal cut in the Skeleton for which the corresponding bunch distinguishes $\mu$, then $\mu$ is called a \textit{Streched unit}; otherwise, $\mu$ is called a \textit{terminal unit}.
\end{definition}
It follows from Definition \ref{def : terminal and nonterminal unit} that every node of the Flesh graph is either a terminal unit or a Stretched unit. Observe that Steiner nodes cannot be distinguished by any bunch. Hence, they are always terminal units. Note that there might exist nonSteiner nodes that are not distinguished by any bunch; hence, by Definition \ref{def : terminal and nonterminal unit}, they are terminal units as well. \\  



\noindent
\textbf{Projection Mapping $\pi$:} Dinitz and Vainshtein \cite{DBLP:journals/siamcomp/DinitzV00} defines the following mapping, denoted by $\pi$,  between units of Flesh graph ${\mathcal F}_S$ and the Skeleton ${\mathcal H}_S$.
A unit $\mu$ of the Flesh is said to be projected to an edge $e$ in Skeleton, that is $e\in \pi(\mu)$, if one of the two holds -- (1) $e$ is a tree-edge and the bunch corresponding to the minimal cut defined by $e$ distinguishes $\mu$ or (2) $e$ is a cycle-edge of a cycle $O$ and, for every other edge $e'$ of cycle $O$, the bunch corresponding to the minimal cut defined by edges $e,e'$ distinguishes $\mu$.  
This mapping is called the \textit{projection mapping}. \\

\noindent
Every Stretched unit is projected to at least one edge in the Skeleton. Interestingly, the following property of projection mapping shows that for any stretched unit $\mu$, the edges belonging to $\pi(\mu)$ do not appear arbitrarily in the Skeleton. This property acts as the key tool for designing the compact structure \cite{DBLP:conf/soda/BaswanaP22} for efficiently answering single-edge sensitivity queries for Steiner mincut.
\begin{lemma} [Theorem 4 in \cite{DBLP:conf/stoc/DinitzV94} or Theorem 4.4 in \cite{DBLP:conf/soda/DinitzV95}] \label{lem : projection mapping of terminal and stretched units}
    For any node $\mu$ in ${\mathcal F}_S$, 
    \begin{enumerate}
        \item if $\mu$ is a terminal unit, then $\pi(\mu)$ is a unique node in ${\mathcal H}_S$. 
        \item if $\mu$ is a stretched unit, then $\pi(\mu)$ is a proper path in  ${\mathcal H}_S$. 
    \end{enumerate} 
\end{lemma}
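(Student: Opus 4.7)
My plan is to handle the two parts separately, reducing everything to combinatorial statements about the cactus ${\mathcal H}_S$ together with sub-modularity of cuts (Lemma \ref{submodularity of cuts}).

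For part (1), suppose $\mu$ is a terminal unit. By Lemma \ref{lem : property of skeleton} every minimal cut of ${\mathcal H}_S$ (a tree-edge, or a pair of cycle-edges in a common cycle) is in bijection with a bunch ${\mathcal B}$ of $G$, and the two tight cuts $C(S_{\mathcal B})$, $C(S \setminus S_{\mathcal B})$ in Definition \ref{def : tight cut} are contained one in the complement of the other. Since $\mu$ is terminal, Definition \ref{def : terminal and nonterminal unit} forces $\mu$ to lie either inside $C(S_{\mathcal B})$ or inside $C(S \setminus S_{\mathcal B})$, i.e.\ on one definite side of every minimal cut of the cactus. I would then invoke the classical fact that a consistent side-selection across every minimal cut of a cactus picks out a unique node of the cactus: at each cycle the chosen sides for the pairs of cycle-edges must come from one arc, and across tree-edges the choices are monotone, so the intersection of all ``chosen'' node-sets is a single node $\nu$. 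Defining $\pi(\mu):=\nu$ gives uniqueness.

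For part (2), let $\mu$ be stretched and let $E(\mu)$ be the set of Skeleton edges it projects to. I would establish three properties of $E(\mu)$ which together force it to be a proper path. \emph{Convexity:} for any two edges $e_1,e_2\in E(\mu)$, every edge of the (unique) proper path between them in ${\mathcal H}_S$ lies in $E(\mu)$. This follows by picking the tight cuts $C_1,C_2$ of the bunches ${\mathcal B}_1,{\mathcal B}_2$ realised by $e_1,e_2$ that witness distinguishing $\mu$, and using Lemma \ref{submodularity of cuts} on suitable intersections/differences to produce $S$-mincuts in every intermediate bunch which also distinguish $\mu$. \emph{No branching:} at no node of the cactus can three tree-edges of $E(\mu)$ meet, and at no cycle can edges of $E(\mu)$ occupy two disjoint arcs. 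This is the heart of the argument: if three edges met at a node, one would construct three $S$-mincuts from the corresponding bunches whose star-regions around $\mu$ contradict the sub-modularity bound $c(A)+c(B)\ge c(A\cap B)+c(A\cup B)$; this is exactly the ``three simultaneous distinguishings'' obstruction that parallels the \textsc{3-Star Lemma} in spirit. \emph{Cycle consistency:} on any single cycle of ${\mathcal H}_S$ shared by $E(\mu)$, the edges of $E(\mu)\cap O$ must form a contiguous arc, again by a sub-modularity argument on the two tight cuts carved by two non-adjacent cycle-edges.

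Combining convexity and no-branching yields that $E(\mu)$ is an (inclusion-minimal) connected acyclic sub-union of tree-edges and cycle-arcs in ${\mathcal H}_S$ with no branching vertex, i.e.\ a proper path in the sense of Definition~5 preceding the Skeleton discussion. The main obstacle I foresee is the no-branching step in part (2): a clean execution requires choosing, from within each of the three bunches meeting at a putative branching node, a concrete $S$-mincut separating $\mu$ in the ``right'' way, and then applying sub-modularity to all $\binom{3}{2}$ pairs to rule the configuration out; the cases split according to whether the three Skeleton edges are tree-edges, cycle-edges of the same cycle incident to the node, or a mixture, and each subcase needs its own sub-modularity calculation.
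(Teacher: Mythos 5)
The paper does not actually prove this lemma: it states it as a citation to Theorem~4 of Dinitz--Vainshtein (STOC~1994), equivalently Theorem~4.4 of their SODA~1995 paper, and every later use in the manuscript treats it as a black box. So there is no internal proof to compare against; what can be assessed is whether your sketch would stand as a proof on its own, or at least matches the structure of the Dinitz--Vainshtein argument.

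With that caveat, your plan does capture the right shape of the argument---reduce everything to side-selections on minimal cuts of the cactus and use sub-modularity of cuts to control how the bunches distinguishing a unit interact---but as written it has genuine gaps. For part (1) you invoke as a ``classical fact'' that a consistent choice of side on every minimal cut of a cactus pins down a unique node. That is not automatic: for a single cycle on $k$ nodes there are $\binom{k}{2}$ minimal cuts, and one must argue that the sides assigned by the tight cuts of a terminal unit agree with some single node's side-pattern (they cannot ``vote'' for node $A_1$ on one pair of cycle-edges and for $A_3$ on another). Showing that inconsistency would force the unit to be distinguished by some bunch is precisely the content of the lemma, and it already needs the same sub-modularity manipulations you defer to part (2). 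For part (2), you correctly isolate no-branching as the crux and observe the kinship with the 3-Star Lemma, but you explicitly leave it unexecuted: until the pairwise sub-modularity calculations are carried out, separately for three tree-edges meeting at a node, for cycle-edge configurations on a single cycle, and for the mixed cases, the claim that $E(\mu)$ is a path rather than a branching tree is asserted, not proved. You should also record that a stretched unit is by definition distinguished by at least one bunch, so $E(\mu)\neq\emptyset$ and the ``proper path'' conclusion is non-degenerate. As it stands the proposal is a reasonable blueprint consistent with the cited source, but not a proof.
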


\noindent
Since the proper path between a pair of nodes is unique, therefore, for any stretched unit $\mu$, it is sufficient to store only the endpoints of $\pi(\mu)$ (referred to as two coordinates of $\mu$ in Section 2.4 in \cite{DBLP:conf/stoc/DinitzV94}). This establishes the following lemma.
\begin{lemma} [\cite{DBLP:journals/siamcomp/DinitzV00, DBLP:conf/soda/DinitzV95, DBLP:conf/stoc/DinitzV94}] \label{lem : projection mapping}
    There is an ${\mathcal O}(n)$ space data structure that, given any unit $\mu$ of the Flesh graph, can report $\pi(\mu)$ in ${\mathcal O}(1)$ time. 
\end{lemma}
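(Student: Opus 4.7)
The plan is to invoke the dichotomy of Lemma~\ref{lem : projection mapping of terminal and stretched units} and observe that, in both cases, the projection $\pi(\mu)$ admits a constant-size encoding in the Skeleton ${\mathcal H}_S$. Since the number of units (that is, nodes of ${\mathcal F}_S$) is at most $n$, summing these constant-size encodings will give the claimed ${\mathcal O}(n)$-space data structure with ${\mathcal O}(1)$-time retrieval.

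First, I would make a single preprocessing pass over the units of the Flesh graph, distinguishing terminal units from stretched ones (this classification is part of the Connectivity Carcass construction and is available in linear time). For each terminal unit $\mu$, Lemma~\ref{lem : projection mapping of terminal and stretched units}(1) guarantees that $\pi(\mu)$ is a single node of ${\mathcal H}_S$, so I would store that node's identifier in an array slot keyed by $\mu$, using ${\mathcal O}(1)$ space per terminal unit.

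For each stretched unit $\mu$, Lemma~\ref{lem : projection mapping of terminal and stretched units}(2) guarantees that $\pi(\mu)$ is a proper path in the cactus ${\mathcal H}_S$. Here the key fact I would invoke is the uniqueness of proper paths in a cactus (noted in the paragraph preceding the lemma statement): the proper path $\langle M,N\rangle$ between two nodes $M,N$ of ${\mathcal H}_S$ is uniquely determined by the pair $(M,N)$. Hence it suffices to record only the two endpoint nodes of $\pi(\mu)$ (the so-called ``two coordinates'' of $\mu$) in the slot for $\mu$, again occupying ${\mathcal O}(1)$ space. Summed over the at most $n$ units of ${\mathcal F}_S$, the overall storage is ${\mathcal O}(n)$, and a query ``report $\pi(\mu)$'' is answered by a single indexed lookup in constant time, returning either the single-node image (terminal case) or the endpoint pair that encodes the proper path (stretched case).

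There is essentially no difficult step here, because the heavy structural work has already been done in Lemma~\ref{lem : projection mapping of terminal and stretched units}. The only subtle point I would be explicit about is the interpretation of ``report $\pi(\mu)$ in ${\mathcal O}(1)$ time'' when $\mu$ is stretched: the data structure returns the compact two-endpoint encoding of the proper path, which is precisely the interface consumed by subsequent Connectivity Carcass-based algorithms (for example, the single-edge Sensitivity Oracle of Baswana and Pandey~\cite{DBLP:conf/soda/BaswanaP22}). Explicit enumeration of the edges along the path, when needed, is a separate cactus-traversal step that takes time proportional to the path length and does not affect the ${\mathcal O}(n)$-space, ${\mathcal O}(1)$-query guarantee claimed by the lemma.
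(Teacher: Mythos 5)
Your proposal is correct and matches the paper's own reasoning: the paper likewise appeals to Lemma~\ref{lem : projection mapping of terminal and stretched units} for the terminal/stretched dichotomy and to the uniqueness of proper paths in a cactus to justify storing only the two endpoint ``coordinates'' per stretched unit. Your extra remark clarifying that ``report $\pi(\mu)$ in ${\mathcal O}(1)$ time'' means returning the compact two-endpoint encoding (with explicit edge enumeration being a separate, length-proportional traversal) is a useful and accurate reading of what the paper intends, but it does not change the argument.
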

For any vertex $u$ of $G$, $\phi(u)$ is a unique unit in the Flesh. Therefore, without causing any ambiguity, we can say that a vertex $u$ is projected to a proper path $P$ in ${\mathcal H}_S$ if $\pi(\phi(u))=P$. \\

\noindent
\textbf{Projection of an edge:} Let $e_1$ be an edge in the Skeleton. Let ${\mathcal B}$ be a bunch corresponding to a minimal cut $C$ of the Skeleton such that $e_1$ belongs to the edge-set of $C$. An edge $e$ of $G$ is said to be \textit{projected} to edge $e_1$ if $e$ contributes to an $S$-mincut belonging to ${\mathcal B}$. Then, by extending Lemma \ref{lem : projection mapping of terminal and stretched units}, the following property of edges of the Flesh graph is established in \cite{DBLP:conf/soda/BaswanaP22}.
\begin{lemma}[Section 2.3 in \cite{DBLP:conf/soda/BaswanaP22}]  \label{lem : edge is projected to a proper path}
    For any edge $e=(u,v)$ in Flesh graph ${\mathcal F}_S$, $\pi(e)$ is a proper path in ${\mathcal H}_S$. Moreover, $\pi(e)$ is the proper path that has $\pi(\phi(u))$ as its prefix and $\pi(\phi(v))$ as its suffix or vice versa. 
\end{lemma}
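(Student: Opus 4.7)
Let $\mu=\phi(u)$ and $\nu=\phi(v)$; since $e$ is an edge of ${\mathcal F}_S$, $\mu\ne\nu$. My plan is to give an explicit characterization of the skeleton edges in $\pi(e)$ and then verify this set is precisely the described proper path. I would first observe that a $(\lambda_S+1)$ class is by definition an equivalence class under ``not separated by any $S$-mincut'', so any single unit lies entirely on one side of every $S$-mincut. Consequently, $e=(u,v)$ contributes to an $S$-mincut $C'\in{\mathcal B}_{e_1}$ (where ${\mathcal B}_{e_1}$ is the bunch of the minimal cut of ${\mathcal H}_S$ containing the skeleton edge $e_1$) if and only if $C'$ places the units $\mu$ and $\nu$ on opposite sides, so $\pi(e)=\{\,e_1\in{\mathcal H}_S : \text{some }C'\in{\mathcal B}_{e_1}\text{ separates }\mu\text{ from }\nu\,\}$.

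The core step is to determine, for a fixed unit $\sigma$ and minimal cut $C$ of ${\mathcal H}_S$, the possible placements of $\sigma$ as we range over $S$-mincuts of ${\mathcal B}_C$. By Lemma \ref{lem : projection mapping of terminal and stretched units} together with the defining property of the projection mapping, a clean dichotomy should hold: if $C$ does \emph{not} cross $\pi(\sigma)$ (vacuously so when $\sigma$ is terminal and $\pi(\sigma)$ is a node sitting on one side of $C$), then $\sigma$ is entirely on the side of $C$ containing $\pi(\sigma)$ in every $S$-mincut of ${\mathcal B}_C$; whereas if $C$ is an edge of $\pi(\sigma)$ (or one cycle-edge of $C$ lies on $\pi(\sigma)$), then ${\mathcal B}_C$ distinguishes $\sigma$, i.e.\ $\sigma\subseteq\overline{C(S_{{\mathcal B}_C})}\cap\overline{C(S\setminus S_{{\mathcal B}_C})}$, and the two tight cuts, being themselves $S$-mincuts of ${\mathcal B}_C$, place $\sigma$ on opposite sides, so $\sigma$ can be sent to either side of a suitably chosen $S$-mincut of ${\mathcal B}_C$.

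Combining these observations, $e_1\in\pi(e)$ iff one of three conditions holds: (a) $e_1$ is an edge of $\pi(\mu)$, (b) $e_1$ is an edge of $\pi(\nu)$, or (c) the minimal cut containing $e_1$ crosses neither $\pi(\mu)$ nor $\pi(\nu)$ but places them on opposite sides of ${\mathcal H}_S$. In a cactus, the set of minimal cuts satisfying (c) is precisely the set of tree-edges and cycle-edge pairs on the unique proper path between the nearest endpoints of $\pi(\mu)$ and $\pi(\nu)$. Concatenating $\pi(\mu)$, this middle proper path, and $\pi(\nu)$ therefore produces a single proper path with $\pi(\mu)$ as prefix and $\pi(\nu)$ as suffix, establishing both assertions of the lemma.

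The main obstacle will be rigorously justifying the dichotomy in the second paragraph --- in particular, showing that whenever the minimal cut $C$ crosses $\pi(\sigma)$ there exist $S$-mincuts of ${\mathcal B}_C$ placing $\sigma$ on either side. This requires unfolding the definition of ``distinguishes'' in terms of the two tight cuts of ${\mathcal B}_C$ and appealing to the structure of stretched units from Lemma \ref{lem : projection mapping of terminal and stretched units}. A secondary subtlety arises in (c) because a minimal cut formed by a pair of cycle-edges partitions the skeleton nodes in two different ways depending on which cycle-edge is ``removed''; one must verify that the union of (a), (b), and (c) still forms a single proper path rather than a branching structure, which follows from the uniqueness of proper paths between any two nodes in a cactus.
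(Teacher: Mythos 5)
The paper does not prove this lemma; it is cited directly as Section~2.3 of \cite{DBLP:conf/soda/BaswanaP22}, so there is no in-paper argument to compare against. Evaluating your sketch on its own merits: the overall strategy --- characterize exactly which skeleton edges land in $\pi(e)$ in terms of $\pi(\mu)$, $\pi(\nu)$, and the connecting path, then argue the result is the claimed proper path --- is the natural one and plausibly how \cite{DBLP:conf/soda/BaswanaP22} proceeds.

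However, your self-diagnosis of where the difficulty lies is off, and the gap you do not flag is the serious one. The dichotomy you single out as the ``main obstacle'' is in fact nearly immediate: if the minimal cut $C$ does not meet $\pi(\sigma)$ then $\pi(\sigma)$ sits in one of the two pieces of ${\mathcal H}_S\setminus C$ and Lemma~\ref{lem : tight cut reporting} pins $\sigma$ inside the corresponding tight cut (hence on one fixed side of every $S$-mincut of ${\mathcal B}_C$); and if some edge of $C$ lies on $\pi(\sigma)$ then ${\mathcal B}_C$ distinguishes $\sigma$ directly by the definition of $\pi$, and the two tight cuts $C(S_{\mathcal B})$ and $\overline{C(S\setminus S_{\mathcal B})}$ are themselves $S$-mincuts of ${\mathcal B}_C$ placing $\sigma$ on opposite sides. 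What your argument actually does not establish is the case where the same bunch ${\mathcal B}_C$ distinguishes \emph{both} $\mu=\phi(u)$ and $\nu=\phi(v)$, i.e.\ when $\pi(\mu)$ and $\pi(\nu)$ share an edge of $C$. There your dichotomy only tells you that each of $\mu,\nu$ can individually be placed on either side by some $S$-mincut of ${\mathcal B}_C$; it gives no single $S$-mincut of ${\mathcal B}_C$ putting $\mu$ and $\nu$ on \emph{opposite} sides, which is exactly what is required to conclude $e_1\in\pi(e)$ in that case. Since the lemma asserts $\pi(\mu)\subseteq\pi(e)$, this case must be handled, and it requires a genuinely different argument (roughly, that distinct units lying strictly between the two tight cuts of a bunch always occupy distinct positions in the chain of $S$-mincuts inside that bunch, so one of those cuts separates them). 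Your ``(a)~or~(b)~or~(c)'' characterization of $\pi(e)$ is therefore not yet justified in one of its directions. You are right, separately, that assembling the three pieces into a single proper path with the correct prefix/suffix orientation requires care, but that step is secondary to the missing separation argument above.
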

Suppose an edge $e$ of ${\mathcal F}_S$ is projected to a cycle-edge $e_1$ of a cycle in Skeleton. In this case, for every minimal cut $C$ defined by edge $e_1$, edge $e$ contributes to at least one $S$-mincut belonging to the bunch corresponding to minimal cut $C$.\\

The following lemma helps in reporting a tight cut for a bunch. 
\begin{lemma} [Proposition 1 in  \cite{DBLP:conf/stoc/DinitzV94}] \label{lem : tight cut reporting}
    For any vertex $u\in V$, $\pi(\phi(u))$ is a subgraph of ${\mathcal H}_S^1$ if and only if $u\in C(S_{\mathcal B})$. 
\end{lemma}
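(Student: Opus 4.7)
The plan is to prove both directions by tracking whether the unit $\mu := \phi(u)$ of the flesh graph is \emph{distinguished} by the bunch ${\mathcal B}$, since being distinguished by ${\mathcal B}$ is precisely the obstruction to lying inside the tight cut $C(S_{\mathcal B})$. As a preliminary reduction, I would observe from Fact \ref{fact : quotient mapping} that $u$ and $\phi(u)$ always lie on the same side of every $S$-mincut (no $S$-mincut separates a $(\lambda_S+1)$ class). Hence $u\in C(S_{\mathcal B})$ if and only if $\mu\in C(S_{\mathcal B})$, which reduces the statement to a claim about units of the flesh graph, where the projection mapping is defined.

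For the forward direction ($\pi(\mu)\subseteq {\mathcal H}_S^1 \Rightarrow u\in C(S_{\mathcal B})$), I would argue that if $\pi(\mu)$ avoids every edge of the minimal cut defining ${\mathcal B}$ (i.e.\ $(N,M)$ in the tree-edge case, or both $(N,M)$ and $(N',M')$ in the cycle-edge case), then by the definition of projection, $\mu$ is \emph{not} distinguished by ${\mathcal B}$. Consequently every $S$-mincut in ${\mathcal B}$ places $\mu$ on the same side, and since all Steiner vertices of $S_{\mathcal B}$ are mapped to nodes of ${\mathcal H}_S^1$ by assumption, every cut in ${\mathcal B}$ places $\mu$ on the $S_{\mathcal B}$ side; in particular the tight cut does, giving $\mu\in C(S_{\mathcal B})$.

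For the converse ($u\in C(S_{\mathcal B}) \Rightarrow \pi(\mu)\subseteq {\mathcal H}_S^1$), I would split on the type of $\mu$ given by Lemma \ref{lem : projection mapping of terminal and stretched units}. If $\mu$ is a terminal unit, then $\pi(\mu)$ is a single node $\nu$; since $\mu\in C(S_{\mathcal B})$ places $\mu$ on the $S_{\mathcal B}$ side and $\mu$ is not distinguished by any bunch, $\nu$ must lie in ${\mathcal H}_S^1$. If $\mu$ is a stretched unit and $\pi(\mu)$ were to include an edge of the removed minimal cut, then by the projection condition $\mu$ would be distinguished by ${\mathcal B}$; by Definition \ref{def : terminal and nonterminal unit} this forces $\mu\in \overline{C(S_{\mathcal B})}$, contradicting $\mu\in C(S_{\mathcal B})$. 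Hence $\pi(\mu)$ stays entirely within ${\mathcal H}_S^1$.

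The main obstacle will be the cycle case of the converse: when $\pi(\mu)$ is a proper path traversing a cycle $O$ of ${\mathcal H}_S$, crossing from ${\mathcal H}_S^1$ to ${\mathcal H}_S^2$ requires the path to use the specific pair $\{(N,M),(N',M')\}$, and one must verify that this really forces $\mu$ to be distinguished by the specific bunch associated with that pair, rather than merely some bunch arising from a different pair of edges on $O$. This is handled by the cycle-edge clause of the projection definition, which states that $\mu$ is projected to a cycle-edge $e$ precisely when $\mu$ is distinguished by the bunch of $\{e,e'\}$ for \emph{every} other edge $e'\in O$; instantiating this with $e=(N,M)$ and $e'=(N',M')$ yields exactly what is needed.
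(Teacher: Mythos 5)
The paper cites this result as Proposition~1 of Dinitz and Vainshtein without supplying its own proof, so there is no in-paper argument to compare against; I'll evaluate your proposal on its own merits. The reduction from a vertex $u$ to the unit $\mu=\phi(u)$ via Fact~\ref{fact : quotient mapping} is sound, and organizing the proof around whether ${\mathcal B}$ distinguishes $\mu$ is the right framing. However, in both directions you tacitly assume the very correspondence you are trying to establish. ``${\mathcal B}$ does not distinguish $\mu$'' only yields $\mu\in C(S_{\mathcal B})$ \emph{or} $\mu\in C(S\setminus S_{\mathcal B})$, and ``$\pi(\mu)$ avoids the removed edges'' (for a connected proper path) only yields $\pi(\mu)\subseteq{\mathcal H}_S^1$ \emph{or} $\pi(\mu)\subseteq{\mathcal H}_S^2$. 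The substance of the lemma is that these two disjunctions line up -- that the $S_{\mathcal B}$-side of the tight cuts is the ${\mathcal H}_S^1$-side of the Skeleton. In the forward direction you dismiss this with ``since all Steiner vertices of $S_{\mathcal B}$ are mapped to nodes of ${\mathcal H}_S^1$,'' and in the terminal-unit case of the converse you simply assert ``$\nu$ must lie in ${\mathcal H}_S^1$''; both are restatements of the conclusion rather than derivations.

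For a Steiner terminal unit that step can in fact be discharged, because $\pi(\mu)$ is the Skeleton node holding the Steiner vertices of $\mu$, and those vertices belong to $S_{\mathcal B}\subseteq C(S_{\mathcal B})$. But for a nonSteiner terminal unit, and for a stretched unit whose proper path merely avoids the removed edges, the side of $\pi(\mu)$ remains undetermined by your argument. A cleaner route would be to prove the converse direction for \emph{both} tight cuts symmetrically and then obtain the forward direction by exclusion, using that ${\mathcal H}_S^1$ and ${\mathcal H}_S^2$ are disjoint; but this only relocates the missing work into the converse, where one still must exhibit a non-circular reason that $\pi(\mu)$ lands on the $S_{\mathcal B}$ side (e.g.\ by relating the endpoints of $\pi(\mu)$ to bunches adjacent to them). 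There is also a smaller quantifier mismatch in your forward direction's cycle case: $(N,M)\notin\pi(\mu)$ only tells you that \emph{some} pair $\{(N,M),e'\}$ fails to distinguish $\mu$, not the specific pair $\{(N,M),(N',M')\}$, so the projection definition does not directly give ``${\mathcal B}$ does not distinguish $\mu$''; you need a contrapositive argument through the proper-path structure of $\pi(\mu)$ instead.
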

It follows from Lemma \ref{lem : tight cut reporting} and Lemma \ref{lem : projection mapping} that, by querying the projection $\pi(\phi(u))$ of every vertex $u\in V$, it requires ${\mathcal O}(n)$ time to report $C(S_{\mathcal B})$ or $C(S\setminus S_{\mathcal B})$ for any bunch ${\mathcal B}$.

\noindent
Flesh, Skeleton, and Projection mapping together form the Connectivity Carcass.

\section{A Generalization of \textsc{3-Star Lemma}} \label{sec : 3 star lemma generalized}
In this section, we present a generalization of \textsc{3-Star Lemma} for $S$-mincuts given by Dinitz and Vainshtein \cite{DBLP:conf/stoc/DinitzV94, DBLP:conf/soda/DinitzV95, DBLP:journals/siamcomp/DinitzV00} to $(\lambda_S+1)$ cuts. This is used crucially to establish various structural and algorithmic results in the following sections. 

We begin by stating the following lemma, which can be seen as a generalization of the sub-modularity of cuts (Lemma \ref{submodularity of cuts}) property.
\begin{lemma} [Four Component Lemma \cite{DBLP:conf/focs/Benczur95}] \label{lem : four component lemma}
    For any three sets $A,B,C\subset V$, $c(A)+c(B)+c(C)\ge c(A\cap B\cap C)+c(\overline{A}\cap \overline{B}\cap C)+c(\overline{A}\cap B \cap \overline{C})+c(A\cap \overline{B}\cap \overline{C})$.
\end{lemma}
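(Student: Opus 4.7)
The plan is to prove the Four Component Lemma by a direct edge-counting argument: for each edge $e\in E$ I will show that its contribution to the left-hand side is at least its contribution to the right-hand side, and then sum over $E$.

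For every vertex $v\in V$, encode its memberships in $A,B,C$ as a triple $(\alpha_v,\beta_v,\gamma_v)\in\{0,1\}^3$ with $\alpha_v=1$ iff $v\in A$, and analogously for $\beta_v,\gamma_v$. Call this the \emph{octant label} of $v$, and its \emph{parity} the value $\alpha_v+\beta_v+\gamma_v\pmod 2$. The key observation is that the four sets on the right-hand side, namely $A\cap B\cap C$, $\overline{A}\cap\overline{B}\cap C$, $\overline{A}\cap B\cap\overline{C}$, and $A\cap\overline{B}\cap\overline{C}$, are precisely the four octants whose labels $(1,1,1),(0,0,1),(0,1,0),(1,0,0)$ have odd parity; the remaining four octants of $V$ have even parity. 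Now fix any edge $e=(u,v)$ and let $d\in\{0,1,2,3\}$ be the Hamming distance between the octant labels of $u$ and $v$. The contribution of $e$ to the LHS equals exactly $d$, since $e$ is a contributing edge of $X\in\{A,B,C\}$ iff $u$ and $v$ disagree on the corresponding coordinate. The contribution of $e$ to the RHS equals the number of odd-parity octants $X$ such that exactly one endpoint of $e$ lies in $X$; since flipping one coordinate toggles parity, the two endpoints share parity iff $d$ is even.

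A short case analysis finishes the argument. For $d=0$ both sides are $0$. For $d\in\{1,3\}$ the endpoints have opposite parity, so exactly one endpoint lies in an odd-parity octant and contributes $1$ to that one RHS term while the other three contribute $0$, giving RHS contribution $1\le d$. For $d=2$ the endpoints share parity: if both are odd-parity they lie in two distinct RHS octants and each contributes $1$ (RHS contribution $=2=d$); if both are even-parity then no RHS octant sees either endpoint (RHS contribution $=0\le d$). In every case the per-edge inequality holds, and summing over all edges of the multiset $E$ yields the lemma. The plan has no substantive obstacle beyond correctly identifying the four RHS sets as precisely the odd-parity octants of the $\{A,B,C\}$-partition of $V$, after which the four-case verification is mechanical; in particular, Lemma \ref{submodularity of cuts} is not needed (and is in fact a corollary of the statement being proved).
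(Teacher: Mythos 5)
Your proof is correct. The paper does not itself prove this lemma --- it is stated with a citation to Benczúr \cite{DBLP:conf/focs/Benczur95} --- so there is no in-paper argument to compare against; your per-edge parity argument stands as a clean, self-contained proof. The decisive observation, that the four right-hand-side sets are precisely the odd-parity octants of the $\{A,B,C\}$ indicator cube, reduces the lemma to the per-edge check that the Hamming distance $d$ between endpoint labels (the edge's LHS contribution) dominates the number of odd-parity octants containing exactly one endpoint (its RHS contribution), and your case analysis over $d\in\{0,1,2,3\}$ is complete; the only case with strict slack beyond $d=1,3$ is $d=2$ with both endpoints even-parity, which you handle. Your aside that both parts of Lemma~\ref{submodularity of cuts} follow by specializing $C=V$ and $C=\emptyset$ (using $c(V)=c(\emptyset)=0$ and $c(X)=c(\overline{X})$) is also accurate, so the argument is genuinely more elementary than a derivation through submodularity.
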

Let $C_1, C_2,$ and $C_3$ be any three $(\lambda_S+1)$ cuts of $G$ (refer to Figure \ref{fig : regions of three cuts}). These three cuts define the following eight disjoint subsets of $V$: $V_a=C_1\cap \overline{C_2}\cap \overline{C_3}$, $V_b=\overline{C_1}\cap C_2 \cap \overline{C_3}$, $V_c=\overline{C_1}\cap \overline{C_2}\cap C_3$, $V_{ab}=C_1\cap C_2\cap \overline{C_3}$, $V_{bc}=\overline{C_1}\cap C_2\cap C_3$, $V_{ac}=C_1\cap \overline{C_2}\cap C_3$, $V_{abc}=C_1\cap C_2\cap C_3$, and $V_{\Phi}=\overline{C_1\cup C_2\cup C_3}$.    
  \begin{figure}
 \centering
    \includegraphics[width=0.5\textwidth]{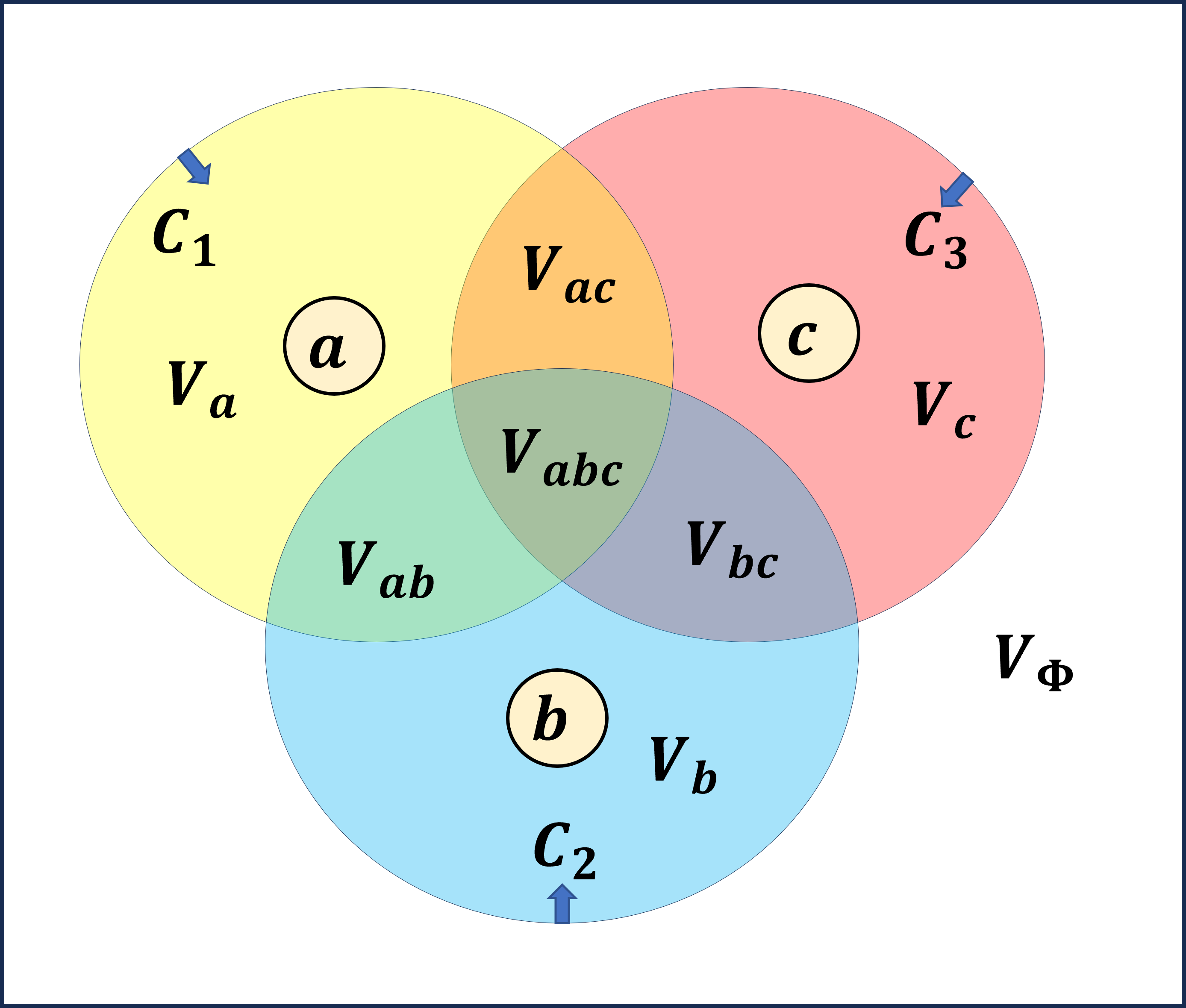} 
   \caption{Depicting all the eight regions formed by three $(\lambda_S+1)$ cuts $C_1,C_2,$ and $C_3$ of $G$.}
  \label{fig : regions of three cuts}. 
\end{figure}
\begin{theorem}[\textsc{Gen-3-Star Lemma}] \label{thm : gen 3 star} Suppose there are three Steiner vertices $a,b,c$ such that $a\in V_a$, $b\in V_b$, and $c\in V_c$.
Then, the following three properties hold.
    \begin{enumerate}
        \item $c(C_1\cap C_2 \cap C_3)\le 3$.
        \item For a $(\lambda_S+1)$ class ${\mathcal W}$ and an integer $k\in [0,3]$, if $(i)$ $u\in V_{\Phi}\cap {\mathcal W}$ 
        and $(ii)$ exactly $k$ of the three sets $V_a,V_b,$ and $V_c$ contains a vertex from ${\mathcal W}$, then $c(C_1\cap C_2\cap C_3)\le 3-k$. 
        \item Let $x,y,z\in \{a,b,c\}$ and $x\ne y\ne z$. 
        Vertices of $V_{xy}$ are adjacent to vertices of $V_{xy}, V_x$ and $V_y$. Moreover, there are at most two edges between $V_{xy}$ and $V\setminus (V_x\cup V_y\cup V_{xy})=V_z\cup V_{xz} \cup V_{yz}\cup V_{xyz}\cup V_{\Phi}$. 
    \end{enumerate}
\end{theorem}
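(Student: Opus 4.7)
The plan is to settle the three parts in order, driven by the Four Component Lemma for parts (1)--(2) and a symmetric double use of posimodularity for part (3). For part (1), applying Lemma \ref{lem : four component lemma} to $C_1,C_2,C_3$ gives
\[
3(\lambda_S+1)=c(C_1)+c(C_2)+c(C_3)\;\ge\; c(V_{abc})+c(V_a)+c(V_b)+c(V_c).
\]
Because $a\in V_a$ while the other distinguished Steiner vertices $b,c$ lie outside $V_a$, the set $V_a$ is a Steiner cut, so $c(V_a)\ge\lambda_S$; the same argument applies to $V_b$ and $V_c$. Subtracting $3\lambda_S$ from both sides yields $c(V_{abc})\le 3$.

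For part (2) the extra hypothesis $u\in V_\Phi\cap\mathcal{W}$ provides an anchor vertex. For any star cut $V_x$ with $x\in\{a,b,c\}$ that also contains a vertex $u'\in\mathcal{W}$, the Steiner cut $V_x$ separates the two vertices $u,u'$ that lie in the same $(\lambda_S+1)$ class. By Definition \ref{def : lambda+1 S-class}, vertices of the same class are never separated by an $S$-mincut, so $c(V_x)\ge \lambda_S+1$ whenever $V_x\cap\mathcal{W}\ne\emptyset$. Plugging $\lambda_S+1$ in place of $\lambda_S$ for the $k$ such star cuts in the inequality of part (1), while keeping $\lambda_S$ for the remaining $3-k$, improves the conclusion to $c(V_{abc})\le 3-k$.

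The main creative step is part (3). By the symmetry among $\{ab,ac,bc\}$ it suffices to treat $xy=ab$. I apply posimodularity (Lemma \ref{submodularity of cuts}(2)) to the pair $(C_1,C_3)$:
\[
2(\lambda_S+1)\;\ge\; c(C_1\setminus C_3)+c(C_3\setminus C_1)=c(V_a\cup V_{ab})+c(V_c\cup V_{bc}).
\]
The sets $V_a\cup V_{ab}=C_1\cap\overline{C_3}$ and $V_c\cup V_{bc}=\overline{C_1}\cap C_3$ contain $a$ and $c$ respectively and exclude the other two distinguished Steiner vertices, so both are Steiner cuts of capacity at least $\lambda_S$, forcing $c(V_a\cup V_{ab})\le \lambda_S+2$. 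Since $V_a$ and $V_{ab}$ are disjoint, a direct edge-count gives $c(V_a\cup V_{ab})=c(V_a)+c(V_{ab})-2\,E(V_a,V_{ab})$, where $E(X,Y)$ denotes the number of edges between $X$ and $Y$; combined with $c(V_a)\ge\lambda_S$ this yields $c(V_{ab})\le 2+2E(V_a,V_{ab})$. A symmetric application of posimodularity to $(C_2,C_3)$ gives $c(V_{ab})\le 2+2E(V_b,V_{ab})$.

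To finish, set $T:=E\bigl(V_{ab},\,V\setminus(V_a\cup V_b\cup V_{ab})\bigr)$ so that $c(V_{ab})=E(V_a,V_{ab})+E(V_b,V_{ab})+T$. The two capacity bounds become $T\le 2+E(V_a,V_{ab})-E(V_b,V_{ab})$ and $T\le 2+E(V_b,V_{ab})-E(V_a,V_{ab})$; adding them and halving gives $T\le 2$, the quantitative claim of part (3). The first sentence of part (3) is the qualitative restatement that, apart from these at most two exceptional edges, every edge leaving $V_{xy}$ stays inside $V_{xy}\cup V_x\cup V_y$. The hardest part is spotting that posimodularity applied to two carefully chosen pairs from $\{C_1,C_2,C_3\}$, together with the Steiner-cut lower bound inherited by the ``halves'' $C_i\cap\overline{C_j}$, produces exactly the two symmetric inequalities that drive the averaging step.
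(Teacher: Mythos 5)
Your proofs of parts (1) and (2) coincide with the paper's: both apply the Four Component Lemma to $C_1,C_2,C_3$, use the disjointness of $V_a,V_b,V_c$ together with the designated Steiner vertices $a,b,c$ to get $c(V_x)\ge\lambda_S$, and in part (2) upgrade those $k$ lower bounds to $\lambda_S+1$ because the corresponding star cuts subdivide $\mathcal{W}$.

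Part (3) is where you diverge, and your route is correct but genuinely different. The paper first derives an exact identity (its Lemma~\ref{lem : gamma value}): for crossing Steiner cuts $A,B$ with $c(A)=\lambda_S+p$, $c(B)=\lambda_S+q$, $c(A\cap B)=\lambda_S+i$, $c(A\cup B)=\lambda_S+j$, the number of edges between $A\setminus B$ and $B\setminus A$ equals $\frac{p+q-i-j}{2}$. It then applies this identity to the pair $D_1=C_2\cap\overline{C_3}=V_b\cup V_{ab}$ and $D_2=\overline{C_1}\cup C_3$, shows $c(D_1),c(D_2)\le\lambda_S+2$ by submodularity on each pair $(C_i,\overline{C_3})$, observes $D_1\cap D_2=V_b$ and $D_1\cup D_2$ are Steiner cuts so $i,j\ge 0$, and reads off $\gamma\le 2$ since $D_1\setminus D_2=V_{ab}$ and $D_2\setminus D_1=V\setminus(V_a\cup V_b\cup V_{ab})$. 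You instead avoid the $\gamma$-identity entirely: you apply posimodularity directly to the original triples, extract the two one-sided bounds $c(V_{ab})\le 2+2E(V_a,V_{ab})$ and $c(V_{ab})\le 2+2E(V_b,V_{ab})$ from the corner-set cuts $V_a\cup V_{ab}$ and $V_b\cup V_{ab}$, and finish by expanding $c(V_{ab})=E(V_a,V_{ab})+E(V_b,V_{ab})+T$ and averaging the two inequalities. Your version needs no auxiliary counting lemma and keeps the argument entirely inside the three given $(\lambda_S+1)$ cuts, which is cleaner; the paper's version pays the upfront cost of Lemma~\ref{lem : gamma value} but gets an exact formula that is potentially reusable elsewhere. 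Both yield $T\le 2$. Your reading of the first (informal) sentence of part (3) as the qualitative companion to the quantitative ``at most two edges'' bound is the natural one, and the paper likewise offers no separate justification for it.
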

\begin{proof}
Each of the three sets $V_a$, $V_b$, and $V_c$ contains a Steiner vertex. Moreover, observe that $V_a$, $V_b$, and $V_c$ are disjoint sets. 
Therefore, each of the three sets $V_a, V_b$, and $V_c$ defines a Steiner cut. 
This implies that each cut of the three cuts $V_a$, $V_b$, and $V_c$ has capacity at least $\lambda_S$. For cuts $C_1,C_2,$ and $C_3$, the following equation follows from Lemma \ref{lem : four component lemma}, 
\begin{equation} \label{eq : equation for 3 sets}
c(C_1)+c(C_2)+c(C_3)\ge c(C_1\cap C_2 \cap C_3) + c(V_a) + c(V_b) + c(V_c)
\end{equation} 
Since $c(C_i)= \lambda_S+1$ for each $i\in \{1,2,3\}$, it follows from Equation \ref{eq : equation for 3 sets} that $c(C_1\cap C_2 \cap C_3)\le 3$. This completes the proof of property (1). 

Suppose, without loss of generality, cut $V_a$ contains a vertex from ${\mathcal W}$. It is given that there is a vertex $u\in V_{\Phi}\cap {\mathcal W}$. Therefore, $V_a$ is a Steiner cut that subdivides the $(\lambda_S+1)$ class ${\mathcal W}$. Hence, the capacity of Steiner cut $V_a$ is at least $\lambda_S+1$. So, any $k\ge 0$ sets of the three sets, $V_a$, $V_b$, and $V_c$, containing a vertex from ${\mathcal W}$ has a capacity at least $\lambda_S+1$. This, using Equation \ref{eq : equation for 3 sets}, implies that $c(C_1\cap C_2\cap C_3)\le 3-k$. This completes the proof of property (2). The proof of property (3) is deferred to  Appendix \ref{app : gen 3 star extended} for better readability.
\end{proof}

We establish the following lemma as a corollary of Theorem \ref{thm : gen 3 star}(2).

\begin{lemma} \label{lem : cor of gen 3 star}
    For any three $(\lambda_S+1)$ cuts $C_1,C_2,$ and $C_3$, if each of the three sets $V_a, V_b,$ and $V_c$ contains a Steiner vertex and a vertex from the same $(\lambda_S+1)$ class ${\mathcal W}$ of $G$, then $C_1\cap C_2 \cap C_3=\emptyset$.
\end{lemma}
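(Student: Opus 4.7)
The plan is to strengthen the bound of Theorem~\ref{thm : gen 3 star}(2) in this special case by first showing that each of the three star cuts $V_a, V_b, V_c$ has capacity at least $\lambda_S+1$ under the hypothesis, then applying the Four Component Lemma verbatim to force $c(V_{abc}) = 0$, and finally invoking connectivity of $G$ to upgrade \emph{capacity zero} to \emph{empty set}.

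The key observation is that each star cut automatically subdivides the $(\lambda_S+1)$ class ${\mathcal W}$. Indeed, $V_a, V_b, V_c$ are pairwise disjoint by construction, and the hypothesis places a vertex of ${\mathcal W}$ in each of them; so for every $x \in \{a,b,c\}$, the ${\mathcal W}$-vertex inside any $V_y$ with $y \ne x$ must lie in $\overline{V_x}$, whence $V_x$ subdivides ${\mathcal W}$. Combined with the hypothesis that $V_x$ also contains a Steiner vertex, $V_x$ is a Steiner cut subdividing a $(\lambda_S+1)$ class. By Definition~\ref{def : lambda+1 S-class}, vertices in the same $(\lambda_S+1)$ class cannot be separated by any $S$-mincut, so such a cut must have capacity strictly greater than $\lambda_S$, i.e.\ $c(V_x) \ge \lambda_S+1$ for every $x \in \{a,b,c\}$.

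Plugging these three lower bounds into Lemma~\ref{lem : four component lemma} applied to $C_1, C_2, C_3$ — exactly the inequality used inside the proof of Theorem~\ref{thm : gen 3 star} — yields $3(\lambda_S+1) \ge c(V_{abc}) + 3(\lambda_S+1)$, so $c(V_{abc}) \le 0$, and non-negativity of cut capacities forces $c(V_{abc}) = 0$. Since $G$ is a connected multi-graph, any $X \subseteq V$ with $c(X)=0$ satisfies $X \in \{\emptyset, V\}$; the option $V_{abc} = V$ is ruled out because $V_a$ is nonempty (it contains a Steiner vertex) and disjoint from $V_{abc}$, leaving $V_{abc} = \emptyset$.

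I do not expect any serious obstacle in executing this plan. The only subtle point worth flagging is that the last implication genuinely needs connectivity of $G$: the capacity-only bound of Theorem~\ref{thm : gen 3 star}(2) alone does not produce a set-theoretic conclusion, since in general a vertex set of capacity zero need not be empty. Here it is the connectedness of $G$ together with the nonemptiness of $V_a$ that lets us rule out the degenerate case $V_{abc}=V$ and hence conclude $C_1 \cap C_2 \cap C_3 = \emptyset$ as sets.
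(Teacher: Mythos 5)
Your proof is correct and follows essentially the same route as the paper: establish that each star cut $V_a,V_b,V_c$ is a Steiner cut that subdivides $\mathcal{W}$ (hence has capacity at least $\lambda_S+1$), feed this into the Four Component Lemma inequality, and use connectedness to upgrade $c(V_{abc})=0$ to $V_{abc}=\emptyset$. The only cosmetic difference is that the paper invokes Theorem~\ref{thm : gen 3 star}(2) with $k=3$ (noting that hypothesis $(i)$ of that theorem can be dropped here) whereas you unfold that invocation back to the Four Component Lemma directly, and you are slightly more explicit in ruling out the degenerate case $V_{abc}=V$.
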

\begin{proof}
    Each of the three sets $\{V_a,V_b,V_c\}$ contains a Steiner vertex and a vertex from ${\mathcal W}$. Since $V_a,V_b,$ and $V_c$ are disjoint sets, therefore, each of them defines a Steiner cut and subdivides ${\mathcal W}$. Therefore, observe that for these three Steiner cuts $C_1,C_2,C_3$, Theorem \ref{thm : gen 3 star}(2) holds even if $C_1,C_2,C_3$ do not satisfy condition $(i)$ of Theorem \ref{thm : gen 3 star}(2) (there is no vertex $u\in V_{\Phi}\cap {\mathcal W}$). Now, by assigning the value of $k=3$ in Theorem \ref{thm : gen 3 star}(2), we get $c(C_1\cap C_2\cap C_3)=0$. Since graph $G$ is a connected graph, therefore, $C_1\cap C_2 \cap C_3$ is an empty set. 
\end{proof}

\section{A Data Structure for a Singleton $(\lambda_S+1)$ class} \label{sec : data structure for singleton class}
Let ${\mathcal W}$ be a Singleton $(\lambda_S+1)$ class that contains exactly one Steiner vertex $s\in S$. Without loss of generality, we assume that for every cut $C$, $s\in C$; otherwise, consider $\overline{C}$. In this section, we design a compact data structure that can answer the following query efficiently for any given pair of vertices $u,v\in {\mathcal W}$.
\begin{center}
    $\textsc{cut}_{\mathcal W}(u,v)\text{: report a $(\lambda_S+1)$ cut $C$ such that $s,u\in C$ and $v\in \overline{C}$, if exists.}$
\end{center}
\noindent
To design a compact data structure for answering query $\textsc{cut}_{\mathcal W}$, we aim to store all the nearest cuts $N_S(u)$ for every vertex $u\in {\mathcal W}$ (refer to Definition \ref{def : nearest minimum+1 cut}). Since $(\lambda_S+1)$ class ${\mathcal W}$ contains one Steiner vertex, it is represented by a Steiner node $\mu$ of the Flesh graph. 
By Lemma \ref{lem : projection mapping of terminal and stretched units}(1), let $N$ be the node of the Skeleton ${\mathcal H}_S$ such that $\pi(\mu)=N$. 
A node in a Skeleton can be adjacent to ${\mathcal O}(|S|)$ edges. Every tree-edge or pair of cycle-edges from the same cycle defines a minimal cut of ${\mathcal H}_S$ (refer to Definition \ref{def : cactus and minimal cuts}). By definition, for every minimal cut $\tilde C$ of ${\mathcal H}_S$, there is a bunch 
corresponding to $\tilde C$. Henceforth, without causing any ambiguity, we state that every node of Skeleton ${\mathcal H}_S$ is \textit{adjacent} to ${\mathcal O}(|S|)$ bunches. 

Suppose node $N$ is adjacent to exactly one bunch ${\mathcal B}$.
We establish the following property for $N_S(u)$ for every vertex $u\in {\mathcal W}$. 
\begin{lemma} \label{lem : for a single bunch data structure}
    Let $u$ be any vertex in ${\mathcal W}$. For a pair of cuts $C_1,C_2\in N_S(u)$, $\overline{C_1}\cap \overline{C_2}\cap {\mathcal W}=\phi$. 
\end{lemma}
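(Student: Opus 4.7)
The plan is to proceed by contradiction: suppose there exists $v\in \overline{C_1}\cap \overline{C_2}\cap {\mathcal W}$ with $v\neq u$ and $C_1\neq C_2$, and derive an absurdity in three steps.

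First, I would exploit sub-modularity together with the definition of a $(\lambda_S+1)$-class. Since each $C_i$ is a $(\lambda_S+1)$-cut (hence a Steiner cut) and $s\in C_i$ by convention, there exist Steiner vertices $t_i\in \overline{C_i}\cap S$. The set $C_1\cap C_2$ contains $s,u$ while its complement contains $t_1,v$; it is therefore a Steiner cut that separates $u$ from $v$. Because $u,v\in {\mathcal W}$ and ${\mathcal W}$ is a $(\lambda_S+1)$-class (Definition~\ref{def : lambda+1 S-class}), no $S$-mincut separates $u$ from $v$, so $c(C_1\cap C_2)\ge \lambda_S+1$. Sub-modularity (Lemma~\ref{submodularity of cuts}(1)) then forces $c(C_1\cup C_2)\le \lambda_S+1$.

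The core step, and the main obstacle, is to show that $C_1\cup C_2$ is itself a Steiner cut, so that $c(C_1\cup C_2)\ge \lambda_S$ as well. This is precisely where the hypothesis that $N$ is incident to exactly one bunch ${\mathcal B}$ enters. I would invoke the tight cut $C(S_{\mathcal B})$, an $S$-mincut with ${\mathcal W}\subseteq C(S_{\mathcal B})$ and the non-empty Steiner set $S\setminus S_{\mathcal B}$ lying in $\overline{C(S_{\mathcal B})}$. Applying sub-modularity to $(C_i, C(S_{\mathcal B}))$, and noting that $C_i\cap C(S_{\mathcal B})$ is a Steiner cut separating $u$ from $v$ and therefore has capacity at least $\lambda_S+1$, I would argue that either $C_i\subseteq C(S_{\mathcal B})$ (forced by the nearness of $C_i$, since otherwise $C_i\cap C(S_{\mathcal B})$ would be a strictly smaller $(\lambda_S+1)$-cut containing $u$), or $C_i$ wraps around to contain every vertex of $S\setminus S_{\mathcal B}$. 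In this second wrap-around subcase I would pick a Steiner vertex $t\in S_{\mathcal B}\setminus\{s\}$ lying in $\overline{C_i}$ (which must exist as $C_i$ is a Steiner cut but $C_i$ now covers all of $S\setminus S_{\mathcal B}$) and switch the auxiliary $S$-mincut to one separating $s$ from $t$; because $N$ is adjacent to the unique bunch ${\mathcal B}$, any such $S$-mincut belongs to a different bunch. A further sub-modularity/nearness argument then forces $C_i$ inside this alternative $S$-mincut. By taking a common witness across $C_1$ and $C_2$, constructed if necessary as an intersection of two $S$-mincuts via Lemma~\ref{lem : mincuts closed under int and uni}, I obtain a single $S$-mincut $\widehat{C}$ containing both $C_1$ and $C_2$, and therefore a Steiner vertex of $\overline{\widehat{C}}\subseteq \overline{C_1\cup C_2}$, confirming $C_1\cup C_2$ is a Steiner cut.

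With $C_1\cup C_2$ established as a Steiner cut, its capacity lies in $\{\lambda_S,\lambda_S+1\}$. If $c(C_1\cup C_2)=\lambda_S$, then $C_1\cup C_2$ is an $S$-mincut separating $u$ from $v$, contradicting the $(\lambda_S+1)$-class property of ${\mathcal W}$. If $c(C_1\cup C_2)=\lambda_S+1$, equality in sub-modularity yields $c(C_1\cap C_2)=\lambda_S+1$, making $C_1\cap C_2$ a $(\lambda_S+1)$-cut containing $u$; since $C_1$ and $C_2$ are distinct nearest cuts (so neither contains the other), $C_1\cap C_2\subsetneq C_1$, contradicting the nearness of $C_1$. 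Either way we reach a contradiction, completing the proof. The principal difficulty is the wrap-around subcase above, where the single-bunch hypothesis is indispensable for limiting how far a nearest $(\lambda_S+1)$-cut of $u$ can escape from $C(S_{\mathcal B})$.
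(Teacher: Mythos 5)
Your overall plan mirrors the paper's: establish that $C_1\cap C_2$ and $C_1\cup C_2$ are both Steiner cuts subdividing ${\mathcal W}$, invoke submodularity to force both capacities to $\lambda_S+1$, and then contradict the minimality of the nearest cut $C_1$. The paper dispatches the nontrivial intermediate step — that some Steiner vertex lies outside $C_1\cup C_2$ — in a single sentence using the single-bunch hypothesis. You try to re-derive it via the tight cut $C(S_{\mathcal B})$ and a dichotomy, which is a reasonable instinct, but your second branch (the ``wrap-around'' subcase) is where the argument fails.

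In that subcase you deduce that $C_i$ contains all of $S\setminus S_{\mathcal B}$, hence there is a Steiner vertex $t\in S_{\mathcal B}\setminus\{s\}$ lying in $\overline{C_i}$, and you then propose to ``switch the auxiliary $S$-mincut to one separating $s$ from $t$.'' No such $S$-mincut exists. Since ${\mathcal W}$ is a Singleton $(\lambda_S+1)$ class and $N$ (the Skeleton node representing ${\mathcal W}$) is adjacent to exactly one bunch, $N$ is a leaf of ${\mathcal H}_S$; hence $S_{\mathcal B}$ is precisely the set of Steiner vertices projecting onto $N$, which is $\{s\}$ (any $s'\ne s$ projecting to $N$ is never separated from $s$ by a Steiner partition and therefore by any $S$-mincut, forcing $s'\in{\mathcal W}$, contradicting that ${\mathcal W}$ is Singleton). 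So $S_{\mathcal B}\setminus\{s\}=\emptyset$ and the witness $t$ you pick does not exist; and even granting it, $s$ and $t$ would project to the same Skeleton node and thus be inseparable by every bunch, so the auxiliary $S$-mincut you reach for is a phantom. The correct move at that point is simply to observe that the wrap-around subcase is impossible — it contradicts $C_i$ being a Steiner cut with $s\in C_i$ once one notes $S_{\mathcal B}=\{s\}$ — not to launch a further submodularity/intersection argument on a nonexistent cut. With that repair the dichotomy collapses to $C_i\subseteq C(S_{\mathcal B})$ for both $i$, giving $S\cap C_i=\{s\}$ and hence the common Steiner vertex $t\notin C_1\cup C_2$ that the paper asserts directly; the remainder of your proof then closes correctly.
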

\begin{proof}
    Recall that, for bunch ${\mathcal B}$, the corresponding partition of Steiner set is $(S_{\mathcal B},S\setminus S_{\mathcal B})$. Without loss of generality, assume that $s\in S_{\mathcal B}$. Let $t$ be a Steiner vertex belonging to $S\setminus S_{\mathcal B}$. Since the cuts $C_1$ and $C_2$ are Steiner cuts and ${\mathcal B}$ is the only bunch that is adjacent to $N$, so $s\in C_1\cap C_2$ and $t\notin C_1\cup C_2$. We now give a proof by contradiction. Assume to the contrary that there is a vertex $v\in {\mathcal W}$ such that $v\in \overline{C_1}\cap \overline{C_2}$. Since $s,u\in C_1\cap C_2$ and $v\notin C_1\cup C_2$, therefore, $C_1\cap C_2$ as well as $C_1\cup C_2$ subdivides $(\lambda_S+1)$ class ${\mathcal W}$. Hence, both $C_1\cap C_2$ and $C_1\cup C_2$ have capacity at least $\lambda_S+1$. By sub-modularity of cuts (Lemma \ref{submodularity of cuts}(1)), $c(C_1\cap C_2)+c(C_1\cup C_2)\le 2\lambda_S+2$. This implies that both $C_1\cap C_2$ and $C_1\cup C_2$ are $(\lambda_S+1)$ cuts. Therefore, we have a $(\lambda_S+1)$ cut $C_1\cap C_2$ with $u\in C_1\cap C_2$, which is a proper subset of $C_1$, $C_2$, or both $C_1$ and $C_2$, a contradiction.
\end{proof}
Let $u$ be any vertex in ${\mathcal W}$ and $N_S(u)=\{C_1,C_2,\ldots,C_k\}$. Lemma \ref{lem : for a single bunch data structure} ensures that $\overline{C_i}\cap {\mathcal W}$ for all $i\in [k]$ can be stored using ${\mathcal O}(|{\mathcal W}|)$ space. This leads to an ${\mathcal O}(|{\mathcal W}|^2)$ space data structure that, given any pair of vertices $u,v$ in ${\mathcal W}$, can report $\overline{C}\cap {\mathcal W}$ in ${\mathcal O}(|\overline{C}\cap {\mathcal W}|)$ time such that $s,u\in C$ and $v\in \overline{C}$. However, in general, as discussed above, node $N$ can be adjacent to ${\mathcal O}(|S|)$ bunches. 
Therefore, it requires ${\mathcal O}(|{\mathcal W}|^2|S|)$ space in the worst case. 

To achieve a more compact space, we explore the relation between the nearest $(\lambda_S+1)$ cuts of $u$ and the bunches adjacent to node $N$. Without loss of generality, we consider that for any bunch ${\mathcal B}$, $s\in S_{\mathcal B}$. A $(\lambda_S+1)$ cut $C$ is said to \textit{subdivide a bunch} ${\mathcal B}$ if $C$ subdivides $\overline{C(S_{\mathcal B})}$ ($C(S_{\mathcal B})$ is a \textit{tight cut} for bunch ${\mathcal B}$, defined in Definition \ref{def : tight cut} in Appendix \ref{sec : extended preliminaries}).  
A $(\lambda_S+1)$ cut $C$ subdividing ${\mathcal W}$ may subdivide multiple bunches adjacent to node $N$. 
Therefore, to overcome this difficulty in analyzing $(\lambda_S+1)$ cuts subdividing ${\mathcal W}$, we construct the following graph $G({\mathcal W})$ from $G$. \\

 \noindent
\textbf{Construction of graph $G({\mathcal W})$:}
Let us select a bunch ${\mathcal B}_1$ adjacent to node $N$ and $C_1=\overline{C(S_{{\mathcal B}_1})}$. 
For any other bunch ${\mathcal B}_2$ adjacent to $N$, let $C_2=\overline{C(S_{{\mathcal B}_2})}$. It follows from the definition of bunches and construction of Skeleton that $C_2$ (likewise $C_1$) cannot subdivide the Steiner set $\overline{S_{{\mathcal B}_1}}$ (likewise $\overline{S_{{\mathcal B}_2}}$). But, $C_2$ can cross $C_1$, and there are only nonSteiner vertices in $C_1\cap C_2$. So, $\overline{S_{{\mathcal B}_1}}\subseteq C_1\cap \overline{C_2}$ and $\overline{S_{{\mathcal B}_2}}\subseteq \overline{C_1}\cap C_2$. Therefore, $C_1\cap \overline{C_2}$ and $C_1\cup \overline{C_2}$ are Steiner cuts. By Lemma \ref{lem : mincuts closed under int and uni}, $C_1\cup \overline{C_2}$ is a Steiner mincut and belongs to bunch ${\mathcal B}_2$.
We contract all the vertices belonging to $C_1$ into a single Steiner vertex. It follows that after contraction, bunch ${\mathcal B}_2$ and its Steiner partition remains unchanged. Moreover, the capacity of Steiner mincut remains the same. Now, we select another bunch adjacent to node $N$ in the resulting graph and repeat the contraction procedure. Finally, after processing every bunch adjacent to node $N$ in the similar way, we obtain graph $G({\mathcal W})$.\\


\noindent
A vertex $v$ of $G({\mathcal W})$ is a Steiner vertex if $v$ is a Steiner vertex in $G$ or $v$ is a contracted vertex to which a Steiner vertex of $G$ is mapped. We denote the Steiner set of $G({\mathcal W})$ by $S({\mathcal W})$ and nonSteiner vertices of $G({\mathcal W})$ by $\overline{S({\mathcal W})}$.  Observe that the capacity of $S$-mincut in $G({\mathcal W})$ is $\lambda_S$ and every nonSteiner vertex of $G({\mathcal W})$ is a nonSteiner vertex of $G$ that belongs to ${\mathcal W}$. 
Hence, without causing any ambiguity, we denote a Steiner cut of capacity $(\lambda_{S({\mathcal W})}+1)$ in $G({\mathcal W})$ as $(\lambda_S+1)$ cut. 
The following property is satisfied by graph $G({\mathcal W})$. 
\begin{lemma} \label{lem : G(W)}
   A $(\lambda_S+1)$ cut subdivides ${\mathcal W}$ into $({\mathcal W}_1,{\mathcal W}\setminus {\mathcal W}_1)$ in $G$ if and only if there exists a $(\lambda_S+1)$ cut that subdivides ${\mathcal W}$ into $({\mathcal W}_1,{\mathcal W}\setminus {\mathcal W}_1)$ in $G({\mathcal W})$. 
\end{lemma}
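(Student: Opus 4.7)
The plan is to establish both directions separately; the forward direction is the main technical work and reduces to a submodular uncrossing lemma that carries a $(\lambda_S+1)$ cut through each pair contraction used to build $G({\mathcal W})$.

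For the $(\Leftarrow)$ direction, I would lift a $(\lambda_S+1)$ cut $\tilde C$ of $G({\mathcal W})$ subdividing ${\mathcal W}$ as $({\mathcal W}_1,{\mathcal W}\setminus {\mathcal W}_1)$ to $G$ by replacing each contracted Steiner vertex with the corresponding blob of vertices of $G$. Since the lift respects every blob, no edge internal to a blob is cut, so its capacity in $G$ equals $c(\tilde C)=\lambda_S+1$; ${\mathcal W}$ is never touched by any contraction, so the subdivision is preserved; and the Steiner vertex of $G({\mathcal W})$ on the other side of $\tilde C$ corresponds to a blob containing at least one Steiner vertex of $G$, so the lift remains a Steiner cut.

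For the $(\Rightarrow)$ direction, I would express the construction of $G({\mathcal W})$ as a sequence of elementary pair contractions of vertices $u,v$ for which the current graph has a Steiner mincut $D$ with ${\mathcal W}\subseteq D$ and $u,v\in \overline{D}$, and do induction on the number of such contractions. The initial contractions of each $(\lambda_S+1)$ class ${\mathcal W}'\ne {\mathcal W}$ also fit this pattern because the cactus Skeleton ${\mathcal H}_S$ contains a minimal cut separating the node for ${\mathcal W}$ from the node for ${\mathcal W}'$, which corresponds to a Steiner mincut $D$ in $G$ with ${\mathcal W}\subseteq D$ and ${\mathcal W}'\subseteq \overline{D}$. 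The invariant is that the current graph admits a $(\lambda_S+1)$ cut subdividing ${\mathcal W}$ into $({\mathcal W}_1,{\mathcal W}\setminus {\mathcal W}_1)$; the inductive step is delivered by the uncrossing lemma below, after which the modified cut descends through the contraction of $u$ and $v$ because they end up on the same side. The uncrossing lemma is the following: given a $(\lambda_S+1)$ cut $C$ with that ${\mathcal W}$-subdivision and such a $D$, there is a $(\lambda_S+1)$ cut $C'$ with the same ${\mathcal W}$-subdivision for which $\overline{D}$ lies entirely on one side of $C'$. Assume without loss of generality $s\in C\cap D$; if $C$ and $D$ are not crossing, $C'=C$ works. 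Otherwise ${\mathcal W}_1\subseteq C\cap D$ and ${\mathcal W}\setminus {\mathcal W}_1\subseteq \overline{C\cap D}$, so $C\cap D$ is a Steiner cut subdividing ${\mathcal W}$ and hence $c(C\cap D)\ge \lambda_S+1$.

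I would then split on whether $\overline{C\cup D}$ contains a Steiner vertex. If it does, $C\cup D$ is a Steiner cut of capacity at least $\lambda_S$, and Lemma \ref{submodularity of cuts}(1) applied to $C,D$ collapses $c(C)+c(D)\ge c(C\cap D)+c(C\cup D)$ into $c(C\cap D)=\lambda_S+1$ and $c(C\cup D)=\lambda_S$; set $C'=C\cap D\subseteq D$, so $\overline{D}\subseteq \overline{C'}$. If it does not, every Steiner vertex of $\overline{C}$ lies in $D$ and every Steiner vertex of $\overline{D}$ lies in $C$; consequently $C\setminus D$ contains a Steiner vertex from $\overline{D}\cap S\subseteq C$ and is a Steiner cut, while $D\setminus C$ contains a Steiner vertex from $\overline{C}\cap S\subseteq D$ and additionally subdivides ${\mathcal W}$ (with ${\mathcal W}\setminus {\mathcal W}_1$ on its side and ${\mathcal W}_1$ on the complement side), so $c(D\setminus C)\ge \lambda_S+1$. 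Lemma \ref{submodularity of cuts}(2) then pins $c(C\setminus D)=\lambda_S$ and $c(D\setminus C)=\lambda_S+1$, and the complement $C'=\overline{D\setminus C}=C\cup \overline{D}$ is a $(\lambda_S+1)$ cut with the required ${\mathcal W}$-subdivision satisfying $\overline{D}\subseteq C'$.

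The main obstacle I anticipate is precisely this second subcase of the uncrossing lemma, where $\overline{C\cup D}$ contains no Steiner vertex: the natural shrink $C\cap D$ can have capacity strictly greater than $\lambda_S+1$, so one must move in the opposite direction via $C\cup \overline{D}$. Obtaining the tight lower bound $c(D\setminus C)\ge \lambda_S+1$ crucially exploits the subcase assumption $\overline{D}\cap S\subseteq C$ to upgrade $C\setminus D$ to a Steiner cut, after which the posi-modular inequality Lemma \ref{submodularity of cuts}(2) applied to $C$ and $D$ forces both capacities simultaneously and delivers the desired cut.
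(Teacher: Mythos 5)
Your proof is correct and follows essentially the same approach as the paper: an induction in which a single uncrossing step pushes the $(\lambda_S+1)$ cut past each contraction used to build $G({\mathcal W})$, with the case split handled by sub-modularity together with its posi-modular companion. The only organizational difference is granularity — you induct on individual pair contractions and state the two uncrossing subcases ($C\cap D$ versus $C\cup\overline{D}$) explicitly, while the paper inducts on the number of bunches adjacent to $N$ that the cut subdivides and reaches the posi-modular case by replacing $C$ with $\overline{C}$ before applying Lemma~\ref{submodularity of cuts}(1) — but the key lemma and the mechanism are the same.
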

\begin{proof}
    Suppose $(\lambda_S+1)$ cut $C$ subdivides ${\mathcal W}$ into $({\mathcal W}_1,{\mathcal W}\setminus {\mathcal W}_1)$ in $G$. 
    We now state the following assertion.
    
    \noindent
    ${\mathcal A}(i):$ If $C$ subdivides $i$ number of bunches adjacent to $N$, then there is a $(\lambda_S+1)$ cut in $G({\mathcal W})$ that subdivides ${\mathcal W}$ into $({\mathcal W}_1,{\mathcal W}\setminus {\mathcal W}_1)$.\\
    We prove, by induction on $i$, that ${\mathcal A}(i)$ holds for all $i\ge 1$. We first establish the base case, that is $i=1$.  
    Suppose $C$ subdivides $\overline{C(S_{\mathcal B})}$ for exactly one bunch ${\mathcal B}$ adjacent to $N$.
    Assume to the contrary that there is no $(\lambda_S+1)$ cut in $G({\mathcal W})$ that subdivides ${\mathcal W}$ into $({\mathcal W}_1,{\mathcal W}\setminus {\mathcal W}_1)$.
    Let  $C'=\overline{C(S_{\mathcal B})}$ and we know that $C'$ is an $S$-mincut. Since both $C$ and $C'$ are Steiner cuts, observe that either $C\cap C'$ and $C\cup C'$ are Steiner cuts or $C\setminus C'$ and $C'\setminus C$ are Steiner cuts. Without loss of generality, assume that $C\cap C'$ and $C\cup C'$ are Steiner cuts; otherwise, we can consider $\overline{C}$ instead of $C$. So, $c(C\cap C')$ and $c(C\cup C')$ are at least $\lambda_S$. Observe that either $C\cap C'$ or $C\cup C'$ subdivides ${\mathcal W}$. Suppose $C\cap C'$ subdivides ${\mathcal W}$.  The proof is similar if $C\cup C'$ subdivides ${\mathcal W}$. Since $C\cap C'$ subdivides ${\mathcal W}$, $c(C\cap C')$ is at least $\lambda_S+1$. By sub-modularity of cuts (Lemma \ref{submodularity of cuts}(1)), $c(C\cap C')+c(C\cup C')\le 2\lambda_S+1$. It follows that $c(C\cap C')=\lambda_S+1$ and $c(C\cup C')=\lambda$. Therefore, we have a $(\lambda_S+1)$ cut $C\cap C'$ that subdivides ${\mathcal W}$ into $({\mathcal W}_1,{\mathcal W}\setminus {\mathcal W}_1)$. Moreover, it follows from our construction of $G({\mathcal W})$ that after contracting $\overline{C(S_{\mathcal B})}$, in the resulting graph, $C\cap C'$ does not subdivide any other bunches adjacent to $N$. Therefore, $C\cap C'$ remains unaffected after constructing $G({\mathcal W})$, a contradiction.
    
    \noindent
    Suppose ${\mathcal A}(j)$ holds for all $j<i$. We now establish ${\mathcal A}(i)$. In the procedure of constructing $G({\mathcal W})$, let ${\mathcal B}$ be the first bunch for which $\overline{C(S_{\mathcal B})}$ is contracted out of
    all the $i$ bunches that cut $C$ subdivides. In a similar way to the base case, using cut $\overline{C(S_{\mathcal B})}$, we can show the following. In the resulting graph after contracting $\overline{C(S_{\mathcal B})}$, there is a $(\lambda_S+1)$ cut $C'$ that subdivides ${\mathcal W}$ into $({\mathcal W}_1,{\mathcal W}\setminus {\mathcal W}_1)$ and does not subdivide bunch ${\mathcal B}$. 
    So, in this graph, $C'$ subdivides at most $i-1$ bunches. Observe that cut $C'$ is also a $(\lambda_S+1)$ cut in $G$ and subdivides at most $i-1$ bunches adjacent to $N$. Therefore, by Induction Hypothesis, there is a $(\lambda_S+1)$ cut in $G({\mathcal W})$ that subdivides ${\mathcal W}$ into $({\mathcal W}_1,{\mathcal W}\setminus {\mathcal W}_1)$. Hence ${\mathcal A}(i)$ holds.    
    
    The proof of the converse part is evident. 
\end{proof}
It follows from Lemma \ref{lem : G(W)} that we can work with $G({\mathcal W})$ instead of $G$ for answering query $\textsc{cut}_{\mathcal W}$.
Observe that ${\mathcal W}$ appears as a $(\lambda_S+1)$ class of $G({\mathcal W})$ which contains Steiner vertex $s$.

For graph $G({\mathcal W})$, let $u\in {\mathcal W}$ and $N_{S({\mathcal W})}(u)=\{C_1,C_2,\ldots,C_k\}$, where $N_{S({\mathcal W})}(u)$ contains nearest $(\lambda_S+1)$ cuts for $u$ to every nonSteiner vertex in ${\mathcal W}$. 
The following lemma can be established in the same way as of Lemma \ref{lem : for a single bunch data structure}. 
\begin{lemma} [\textsc{Property ${\mathcal P}_1$}] \label{lem : pair of bunches}
   For any pair of cuts $C_1,C_2\in N_{S({\mathcal W})}(u)$, if there exists a Steiner vertex $t$ such that $t\in \overline{C_1}\cap \overline{C_2}$, then $\overline{C_1}\cap \overline{C_2}\cap {\mathcal W}=\emptyset$.  
\end{lemma}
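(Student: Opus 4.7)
The plan is to mimic the argument of Lemma~\ref{lem : for a single bunch data structure}, adapting it to the multi-bunch setting by using the hypothesized Steiner vertex $t \in \overline{C_1} \cap \overline{C_2}$ in place of the Steiner separator that arose automatically from uniqueness of the bunch. Accordingly, I would proceed by contradiction: assume that some $v \in \overline{C_1} \cap \overline{C_2} \cap \mathcal{W}$ exists (with $C_1 \neq C_2$, since otherwise the statement is degenerate and clearly not the intended content), and aim to produce a $(\lambda_S+1)$ cut that contains $u$ and sits strictly inside $C_1$, contradicting $C_1 \in N_{S(\mathcal{W})}(u)$.

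The calculation takes place on the two corner sets $C_1 \cap C_2$ and $C_1 \cup C_2$. By the standing convention we may assume $s \in C_1 \cap C_2$. Under the contradictory hypothesis, $C_1 \cap C_2$ contains the Steiner vertex $s$ and the vertex $u \in \mathcal{W}$ while excluding the Steiner vertex $t$ and the vertex $v \in \mathcal{W}$, and the analogous statement holds for $C_1 \cup C_2$. Both sets are therefore Steiner cuts that subdivide $\mathcal{W}$, so each has capacity at least $\lambda_S + 1$. Applying sub-modularity of cuts (Lemma~\ref{submodularity of cuts}(1)) then gives
\[
2(\lambda_S + 1) \;=\; c(C_1) + c(C_2) \;\ge\; c(C_1 \cap C_2) + c(C_1 \cup C_2) \;\ge\; 2(\lambda_S + 1),
\]
so equality is forced throughout and $C_1 \cap C_2$ is itself a $(\lambda_S + 1)$ cut. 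Since $C_1$ and $C_2$ are both nearest $(\lambda_S + 1)$ cuts of $u$, neither can be a proper subset of the other, so $C_1 \setminus C_2 \neq \emptyset$ and therefore $C_1 \cap C_2 \subsetneq C_1$. Hence $C_1 \cap C_2$ is a $(\lambda_S+1)$ cut that contains $u$ and lies strictly inside $C_1$, contradicting $C_1 \in N_{S(\mathcal{W})}(u)$.

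The main new ingredient, compared with Lemma~\ref{lem : for a single bunch data structure}, is guaranteeing that $C_1 \cup C_2$ remains a Steiner cut when the node $N$ of the Skeleton is adjacent to several bunches. In the single-bunch case, every $(\lambda_S+1)$ cut containing $s$ automatically excludes every Steiner vertex on the opposite side of the unique bunch, so $C_1 \cup C_2$ is automatically Steiner. In the general setting this fails: the Steiner partitions induced by $C_1$ and $C_2$ may differ, and without a common Steiner vertex outside $C_1 \cup C_2$ the lower bound $c(C_1 \cup C_2) \ge \lambda_S + 1$ collapses and the submodularity step breaks down. Pinpointing that this is exactly the condition encoded by the hypothesis ``there exists a Steiner vertex $t \in \overline{C_1} \cap \overline{C_2}$'' is the only genuine obstacle; once it is in hand, the rest of the proof is a direct transfer from the single-bunch argument.
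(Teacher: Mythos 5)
Your proposal is correct and matches the paper's intended argument: the paper simply states that this lemma ``can be established in the same way as Lemma~\ref{lem : for a single bunch data structure},'' and your write-up is a faithful transfer of that single-bunch proof, correctly recognizing that the hypothesized Steiner vertex $t \in \overline{C_1}\cap\overline{C_2}$ substitutes for the opposite-side Steiner vertex that the unique bunch supplied automatically in the earlier lemma. The submodularity computation, the forced equality making $C_1\cap C_2$ a $(\lambda_S+1)$ cut, and the contradiction with nearness are exactly the steps the paper has in mind.
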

Lemma \ref{lem : pair of bunches} essentially states that the complement set of a pair of nearest $(\lambda_S+1)$ cuts for a vertex does not contain any vertex from ${\mathcal W}$ only if their union is a Steiner cut. Unfortunately, there may exist a pair of cuts $A,A'$ from $N_{S({\mathcal W})}(u)$ such that $A\cup A'$ is not a Steiner cut. Moreover, it is possible that there exist vertices belonging to ${\mathcal W}$ which belong to $\overline{A\cup A'}$, that is, $\overline{A}\cap \overline{A'}\cap {\mathcal W}\ne \emptyset$ (refer to Figure \ref{fig : 1st figure in overview}$(ii)$). 
For every cut $C\in N_{S({\mathcal W})}(u)$, there can be ${\mathcal O}(|\overline{S({\mathcal W})}|)$ number of vertices. Moreover, $|N_{S({\mathcal W})}(u)|$ is ${\mathcal O}(|\overline{S({\mathcal W})}|)$. Therefore, to store all cuts from $\{N_{S({\mathcal W})}(u)~|~\forall u\in {\mathcal W}\}$ explicitly, it seems that ${\mathcal O}(|\overline{S({\mathcal W})}|^3+|S({\mathcal W})||\overline{S({\mathcal W})}|^2)$ space is required, which is $\Omega(n^3)$ in the worst case. Interestingly, exploiting \textsc{Gen-3-Star Lemma} (Theorem \ref{thm : gen 3 star}), we establish the following insight that helps in compactly storing all the vertices from $\overline{S({\mathcal W})}$. 
  \begin{figure}
 \centering
    \includegraphics[width=\textwidth]{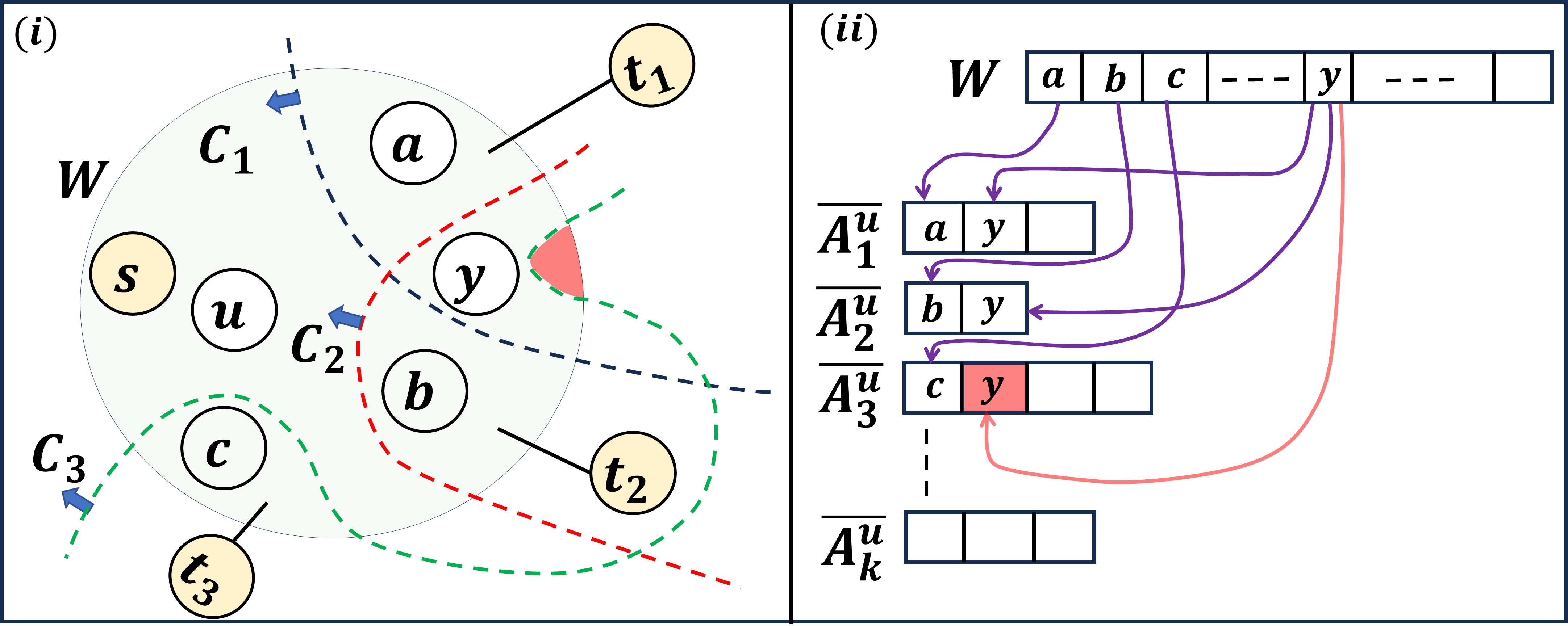} 
   \caption{$(i)$ The set $\overline{C_1\cup C_2\cup C_3}$
   of any three cuts $C_1,C_2,C_3\in N_{S({\mathcal W})}(u)$ is empty, shown in red region. $(ii)$ Data structure ${\mathcal N}_{{\mathcal W}}^{\ge 4}(u)$ for ${\mathcal W}$. Vertex $y$ cannot belong to more than two arrays.}
  \label{fig : data structure}. 
\end{figure}
\begin{lemma} \label{lem : three sets are nonempty}
    Let $C_1,C_2,$ and $C_3$ are any three cuts from $N_{S({\mathcal W})}(u)$. If the capacity of global mincut in $G({\mathcal W})$ is at least $4$, then $\overline{C_1}\cap \overline{C_2} \cap \overline{C_3}\cap {\mathcal W}$ is an empty set. 
\end{lemma}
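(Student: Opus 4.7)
The plan is to argue by contradiction, using \textsc{Property ${\mathcal P}_1$} (Lemma \ref{lem : pair of bunches}) to pin down the location of Steiner vertices relative to the triple $C_1,C_2,C_3$, and then apply \textsc{Gen-3-Star Lemma} (Theorem \ref{thm : gen 3 star}) to the complement triple $(\overline{C_1},\overline{C_2},\overline{C_3})$ to obtain a capacity upper bound that conflicts with the global mincut hypothesis $\lambda_{G({\mathcal W})}\ge 4$.

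Suppose, for contradiction, that there exists $v\in {\mathcal W}\cap \overline{C_1\cup C_2\cup C_3}$; then $v\in \overline{C_i}$ for every $i\in\{1,2,3\}$, so each $C_i$ subdivides ${\mathcal W}$ (it contains $s,u$ and misses $v$). In particular, each $C_i$ is a Steiner cut of $G({\mathcal W})$, so there exists a Steiner vertex $t_i\in \overline{C_i}\cap S({\mathcal W})$. Now for every pair $(C_i,C_j)$ with $i\ne j$, the vertex $v$ witnesses that $\overline{C_i}\cap \overline{C_j}\cap {\mathcal W}$ is nonempty; hence the contrapositive of \textsc{Property ${\mathcal P}_1$} forces $\overline{C_i}\cap \overline{C_j}\cap S({\mathcal W})=\emptyset$. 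Applying this to the two pairs involving $C_1$ forces $t_1\in C_2\cap C_3$, so $t_1\in \overline{C_1}\cap C_2\cap C_3$; symmetrically, $t_2\in C_1\cap \overline{C_2}\cap C_3$ and $t_3\in C_1\cap C_2\cap \overline{C_3}$.

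Next, I would invoke \textsc{Gen-3-Star Lemma} on the three $(\lambda_S+1)$ cuts $D_i:=\overline{C_i}$. Their star cuts are $D_i\cap \overline{D_j}\cap \overline{D_k}=\overline{C_i}\cap C_j\cap C_k$, which are exactly the three regions that host $t_1,t_2,t_3$. Thus the Steiner-vertex hypothesis of the lemma is satisfied, and Property (1) yields
\[
c\bigl(\overline{C_1}\cap \overline{C_2}\cap \overline{C_3}\bigr)\le 3.
\]
On the other hand, $\overline{C_1\cup C_2\cup C_3}=\overline{C_1}\cap \overline{C_2}\cap \overline{C_3}$ contains $v$ but excludes $s$, so it is a nontrivial cut of $G({\mathcal W})$. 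Under the hypothesis $\lambda_{G({\mathcal W})}\ge 4$, its capacity is at least $4$, contradicting the bound above. Hence no such $v$ exists.

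The only delicate point is verifying the Steiner-vertex hypothesis of \textsc{Gen-3-Star Lemma} for the complement triple; this is exactly what \textsc{Property ${\mathcal P}_1$} buys us, since pairwise exclusion of Steiner vertices from every $\overline{C_i}\cap \overline{C_j}$ leaves each $t_i$ with no choice but to occupy the unique region $\overline{C_i}\cap C_j\cap C_k$, which becomes a star cut after complementing. Once this placement is secured, the capacity contradiction via the global mincut assumption is immediate.
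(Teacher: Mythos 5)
Your argument is correct and follows the paper's own proof essentially line by line: assume a witness $v\in{\mathcal W}\cap\overline{C_1\cup C_2\cup C_3}$, use \textsc{Property ${\mathcal P}_1$} pairwise to force each Steiner vertex $t_i\in\overline{C_i}$ into the region $\overline{C_i}\cap C_j\cap C_k$, then apply Theorem~\ref{thm : gen 3 star}(1) to $\overline{C_1},\overline{C_2},\overline{C_3}$ to get $c(\overline{C_1}\cap\overline{C_2}\cap\overline{C_3})\le 3$, contradicting $\lambda_{G({\mathcal W})}\ge 4$. Your write-up is slightly more explicit about the contrapositive of \textsc{Property ${\mathcal P}_1$} and the identification of the star cuts after complementing, but the decomposition and the key lemmas invoked are identical to the paper's.
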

\begin{proof}
    Assume that for cuts $C_1,C_2$, and $C_3$, the set $\overline{C_1}\cap \overline{C_2} \cap \overline{C_3}\cap {\mathcal W}$ is nonempty (refer to Figure \ref{fig : data structure}$(i)$). So, for any pair of cuts $C_i,C_j$, $1\le i\ne j\le 3$, $\overline{C_i}\cap \overline{C_j}$ contains at least one vertex from ${\mathcal W}$. By Lemma \ref{lem : pair of bunches}, there cannot exist Steiner vertex $t\ne s$ such that $t\in \overline{C_i}\cap \overline{C_j}$. Note that $C_1,C_2,$ and $C_3$ are Steiner cuts. This implies that each of the three sets $\overline{C_1}\cap C_2 \cap C_3$, $C_1\cap \overline{C_2}\cap C_3$, and $C_1\cap C_2 \cap \overline{C_3}$ must contain at least one Steiner vertex. So, each of the three sets $\overline{C_1}\cap C_2 \cap C_3$, $C_1\cap \overline{C_2}\cap C_3$, and $C_1\cap C_2 \cap \overline{C_3}$ defines a Steiner cut. Therefore, for three cuts $\overline{C_1},\overline{C_2},$ and $\overline{C_3}$, by Theorem \ref{thm : gen 3 star}(1), we have $c(\overline{C_1}\cap \overline{C_2} \cap \overline{C_3})\le 3$, a contradiction because the capacity of global mincut in $G({\mathcal W})$ is at least $4$.      
\end{proof}
With the help of Lemma \ref{lem : three sets are nonempty}, we design the following data structure for all cuts $N_{S({\mathcal W})}(u)$.
\subsection*{Construction of Data Structure ${\mathcal N}^{\ge 4}_{\mathcal W}(u)$}
Suppose capacity of global mincut of $G({\mathcal W})$ is at least $4$. Data structure ${\mathcal N}^{\ge 4}_{\mathcal W}(u)$ for storing $N_{S({\mathcal W})}(u)$ is constructed in Algorithm \ref{alg : nearest cut data structure}. 
\begin{algorithm}[H]
\caption{Construction of Data structure ${\mathcal N}^{\ge 4}_{\mathcal W}(u)$}
\label{alg : nearest cut data structure}
\begin{algorithmic}[1]
\Procedure{\textsc{DataStuctureConstruction}$(u)$}{}
    \State $i=1$;
       \For{every cut $C_i$ in $N_{S({\mathcal W})}(u)$ with 
 $s\in C_i$}
          \State store $(\overline{C_i}\cap {\mathcal W})\cup (\overline{C_i}\cap S({\mathcal W}))$ in array ${\mathcal A}_i^u$; 
              
        \State $i\gets i+1$
    \EndFor
\EndProcedure
\end{algorithmic}
\end{algorithm}
\noindent
\textbf{Analyzing the space occupied by ${\mathcal N}^{\ge 4}_{\mathcal W}(u)$:} 
For any three cuts from $N_{S({\mathcal W})}(u)$, it follows from Lemma \ref{lem : three sets are nonempty} that 
every vertex in $\overline{S({\mathcal W})}$ appears in at most two arrays ${\mathcal A}_i^u, {\mathcal A}_j^u$, $1\le i\ne j\le k$, in the data structure ${\mathcal N}^{\ge 4}_{\mathcal W}(u)$ (refer to Figure \ref{fig : data structure}$(ii)$). 
This ensures that the data structure ${\mathcal N}^{\ge 4}_{\mathcal W}(u)$ occupies ${\mathcal O}(|\overline{S({\mathcal W})}|)$ space for storing vertices from $\overline{S({\mathcal W})}$.\\ 

We now analyze the space occupied by ${\mathcal N}_{\mathcal W}^{\ge 4}(u)$ for storing Steiner vertices belonging to $\overline{C}$ for each cut $C\in N_{S({\mathcal W})}(u)$. 
It is easy to observe that there can be at most $2|\overline{S({\mathcal W})}|$ cuts in $N_{S({\mathcal W})}(u)$. If, for every cut $C\in N_{S({\mathcal W})}(u)$, $\overline{C}$ contains ${\mathcal O}(|S({\mathcal W})|)$ Steiner vertices, then the space occupied by ${\mathcal N}_{\mathcal W}^{\ge 4}(u)$ is ${\mathcal O}(|\overline{S({\mathcal W})}||S({\mathcal W})|)$. So, it can be observed that, for all vertices in ${\mathcal W}$, the space occupied by ${\mathcal N}_{\mathcal W}^{\ge 4}$ is ${\mathcal O}((n-|S|)^2|S|)$ in the worst case. 
Interestingly, we show in two phases that the space occupied by ${\mathcal N}_{\mathcal W}^{\ge 4}(u)$ is ${\mathcal O}(|\overline{S({\mathcal W})}|+|S({\mathcal W})|)$.
\subsection{Achieving ${\mathcal O}(|\overline{S({\mathcal W})}|\sqrt{|S|})$ space bound for ${\mathcal N}_{\mathcal W}^{\ge 4}(u)$} \label{app : root S bound} 
We establish the following insight that reduces the space to ${\mathcal O}(|\overline{S({\mathcal W})}|\sqrt{|S({\mathcal W})|})$. 
\begin{lemma} [\textsc{Property ${\mathcal P}_2$}] \label{lem : at most one bunch in common}
    For any pair of cuts $C,C'\in N_{S({\mathcal W})}(u)$, there can be at most one Steiner vertex $t$ such that $t\in \overline{C} \cap \overline{C'}$.
\end{lemma}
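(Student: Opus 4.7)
The plan is to argue by contradiction: assume $\overline{C}\cap\overline{C'}$ contains two distinct Steiner vertices $t_1,t_2\in S({\mathcal W})$, and derive an impossibility using sub-modularity of cuts (Lemma~\ref{submodularity of cuts}), the nearest-cut property (Definition~\ref{def : nearest minimum+1 cut}), and the crucial structural property of $G({\mathcal W})$ noted in the overview: every Steiner mincut of $G({\mathcal W})$ separates exactly one Steiner vertex from $s$.

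First I would verify that both $C\cap C'$ and $C\cup C'$ are Steiner cuts. Since $s\in C\cap C'$ by our orientation convention and $t_1,t_2\in\overline{C}\cap\overline{C'}$, we have $s\in C\cap C'\subseteq C\cup C'$ while $t_1,t_2\in \overline{C\cup C'}\subseteq\overline{C\cap C'}$, so each of these cuts has Steiner vertices on both sides and therefore capacity at least $\lambda_S$. Next, since $C$ and $C'$ are distinct nearest $(\lambda_S+1)$ cuts of $u$, neither can strictly contain the other (that would contradict Definition~\ref{def : nearest minimum+1 cut} for the outer cut), so $C\cap C'\subsetneq C$ and $u\in C\cap C'$. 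The nearest-cut property then forbids $c(C\cap C')=\lambda_S+1$, for otherwise $C\cap C'$ would be a $(\lambda_S+1)$ cut containing $u$ and properly inside $C$. Combining this with the sub-modular inequality $c(C\cap C')+c(C\cup C')\le 2(\lambda_S+1)$ and the lower bounds $c(C\cap C'),c(C\cup C')\ge\lambda_S$, the only two surviving possibilities are (i) $c(C\cap C')=\lambda_S$, or (ii) $c(C\cap C')=\lambda_S+2$, which forces $c(C\cup C')=\lambda_S$.

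In possibility (i), $C\cap C'$ is itself a Steiner mincut of $G({\mathcal W})$, and its side $\overline{C\cap C'}$ contains both Steiner vertices $t_1$ and $t_2$ while $s\in C\cap C'$; in possibility (ii), the same holds for the Steiner mincut $C\cup C'$. Either way we exhibit a Steiner mincut of $G({\mathcal W})$ whose side opposite $s$ carries more than one Steiner vertex, directly contradicting the structural property of $G({\mathcal W})$ stated above. This contradiction shows that $\overline{C}\cap\overline{C'}$ contains at most one Steiner vertex, establishing \textsc{Property ${\mathcal P}_2$}.

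The main obstacle in this plan is ruling out the intermediate value $c(C\cap C')=\lambda_S+1$: if this were allowed, the sub-modular inequality would permit both $C\cap C'$ and $C\cup C'$ to sit at capacity $\lambda_S+1$ and no Steiner mincut would arise from the construction. Resolving it hinges on the strict containment $C\cap C'\subsetneq C$, which is precisely where distinctness of the two nearest cuts together with Definition~\ref{def : nearest minimum+1 cut} does the work; once that is established, the ``exactly one Steiner vertex per Steiner mincut'' property of $G({\mathcal W})$ closes both remaining cases cleanly.
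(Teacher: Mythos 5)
Your proof is correct and takes essentially the same route as the paper: both arguments use the nearest-cut definition to exclude $c(C\cap C')=\lambda_S+1$, sub-modularity of cuts, and the structural fact that every $S$-mincut of $G({\mathcal W})$ separates exactly one Steiner vertex from $s$. Your version is slightly more explicit in branching on whether $c(C\cap C')$ equals $\lambda_S$ or $\lambda_S+2$ and applying the structural fact in each branch, whereas the paper compresses this into a single assertion that $c(C\cap C')\ge\lambda_S+2$ and invokes the structural fact only at the end for $C\cup C'$.
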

\begin{proof}
    Suppose there is a pair of Steiner vertices $t_1,t_2$ such that $t_1,t_2\in \overline{C}\cap \overline{C'}$. Since $s$ belongs to both $C$ and $C'$,  so, $C\cap C'$ and $C\cup C'$ are Steiner cuts. This implies that $c(C\cap C')$, as well as $c(C\cup C')$, is at least $\lambda_S$. Since $C$ and $C'$ belong to $N_{S({\mathcal W})}(u)$, by Definition \ref{def : nearest minimum+1 cut}, $C$ and $C'$ cannot be subsets of each other. Since $u\in C\cap C'$ and $C\cap C'$ is a proper subset of both $C$ and $C'$, therefore, $c(C\cap C')\ge \lambda_S+2$.  By sub-modularity of cuts (Lemma \ref{submodularity of cuts}(1)), we have $c(C\cap C')+c(C\cup C')\le 2\lambda_S+2$. This implies that $c(C\cup C')= \lambda_S$, a contradiction because, by construction of $G({\mathcal W})$, there is no $S$-mincut $C$ such that $s\in C$ and $t_1,t_2\in \overline{C}$ or vice versa.   
\end{proof}
Let us construct a bipartite graph $B$ with vertex set $V_B=V_1 \cup V_2$. The vertex sets $V_1$ and $V_2$ of $B$ are defined as follows. For every cut $C$ from $N_{S({\mathcal W})}(u)$, there is a vertex in $V_1$. Similarly, for every Steiner vertex $t\ne s$ of $G({\mathcal W})$, there is a vertex in $V_2$. So, $B$ is a ${\mathcal O}(|\overline{S({\mathcal W})}|) \times {\mathcal O}(|S({\mathcal W})|)$ bipartite graph. We now define the edge set of $B$. Let $C$ be a cut in $N_{S({\mathcal W})}(u)$ represented by a vertex $v\in V_1$. Similarly, let $t\ne s$ be a Steiner vertex represented by a vertex $b\in V_2$. If $t\in \overline{C}$, then there is an edge between $v$ and $b$.

The \textit{girth} of a graph is the number of edges belonging to the smallest cycle of the graph. It follows from Lemma \ref{lem : at most one bunch in common} that there cannot exist any cycle of length $4$ in $B$. So, since $B$ is a bipartite graph, the girth of $B$ is strictly greater than $4$. Erdos \cite{erdos1964extremal} and Bondy and Simonovits \cite{bondy1974cycles} established the following property on the number of edges of a bipartite graph. 
\begin{lemma} [\cite{erdos1964extremal, bondy1974cycles}] \label{lem : girth is 4}
    Let ${\mathcal C}_B$ be the class of all $P$ by $Q$ bipartite graphs. For any graph $H\in {\mathcal C}_B$, if $H$ has ${\Omega}(P\sqrt{Q})$ edges, then $H$ has girth at most $4$.    
\end{lemma}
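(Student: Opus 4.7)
The plan is to prove the contrapositive via the standard K\H{o}v\'ari--S\'os--Tur\'an ``cherry'' double-count: assuming $H\in\mathcal{C}_B$ has girth strictly greater than $4$ (equivalently, contains no $C_4$, equivalently is $K_{2,2}$-free), I will show $e(H)=O(P\sqrt{Q})$, which contradicts the hypothesis $e(H)=\Omega(P\sqrt{Q})$ for a sufficiently large constant. Denote the two parts of $H$ by $X$ (of size $P$) and $Y$ (of size $Q$), and write $E:=e(H)$.

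The key object is a \emph{cherry}, defined as a path $u$--$y$--$v$ with center $y\in Y$ and distinct endpoints $u,v\in X$, and the argument is a two-way count of cherries. Summing over centers gives the total cherry count
\[
T \;=\; \sum_{y\in Y}\binom{d_y}{2},
\]
where $d_y$ denotes the degree of $y$ in $H$. The girth hypothesis forbids two distinct centers $y,y'\in Y$ from sharing the same endpoint-pair $\{u,v\}\subseteq X$, since otherwise $u,y,v,y'$ would close a $C_4$ in $H$. Hence each unordered pair in $X$ is the endpoint-pair of at most one cherry, giving the upper bound $T\le\binom{P}{2}\le P^2/2$.

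For the matching lower bound, I would apply convexity of $x\mapsto\binom{x}{2}$ (Jensen's inequality at the average $Y$-degree $E/Q$) to obtain
\[
T \;\ge\; Q\binom{E/Q}{2} \;=\; \frac{E(E-Q)}{2Q}.
\]
Combining the two bounds yields the quadratic inequality $E^2-EQ\le P^2Q$, whose positive root is at most $\tfrac{1}{2}(Q+\sqrt{Q^2+4P^2Q})\le Q+P\sqrt{Q}$. In the regime relevant to the application (where $P\gtrsim \sqrt{Q}$, which holds for the specific bipartite graph $B$ constructed in the paper), the additive $Q$ term is absorbed into $P\sqrt{Q}$, yielding $E=O(P\sqrt{Q})$ and contradicting $e(H)=\Omega(P\sqrt{Q})$. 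I do not anticipate any substantive obstacle; the only nontrivial idea is the forbidden-$C_4$ argument that turns the girth hypothesis into the upper bound $T\le\binom{P}{2}$, and the rest is standard convexity and algebra.
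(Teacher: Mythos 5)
The paper cites this lemma as a known extremal result of Erd\H{o}s and Bondy--Simonovits without supplying a proof, so there is no internal argument to compare against; your K\H{o}v\'ari--S\'os--Tur\'an cherry double-count is the standard proof of exactly this bound and is correct. You also rightly flag a small imprecision in the lemma as literally stated: the double count yields $e(H) \le Q + P\sqrt{Q}$, which is $O(P\sqrt{Q})$ only when $P = \Omega(\sqrt{Q})$, and for smaller $P$ a star $K_{1,Q}$ has $Q = \omega(P\sqrt{Q})$ edges yet no cycle at all. This is harmless for the paper: the bound $O(P\sqrt{Q}+Q)$ suffices for the intermediate $O(|\overline{S({\mathcal W})}|\sqrt{|S({\mathcal W})|})$ space estimate where the lemma is invoked, and that estimate is in any case superseded shortly afterward by the sharper $O(|\overline{S({\mathcal W})}|+|S({\mathcal W})|)$ bound.
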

Since girth of graph $B$ is strictly greater than $4$, therefore, by Lemma \ref{lem : girth is 4}, the bipartite graph $B$ can have ${\mathcal O}(|\overline{S({\mathcal W})}|\sqrt{|S({\mathcal W})|})$ edges. 
Hence, data structure ${\mathcal N}_{\mathcal W}^{\ge 4}$ occupy ${\mathcal O}(|\overline{S({\mathcal W})}|\sqrt{|S({\mathcal W})|})$ space in the worst case.

\subsection{Achieving ${\mathcal O}(|\overline{S({\mathcal W})}|+|S({\mathcal W})|)$ space bound for ${\mathcal N}_{\mathcal W}^{\ge 4}(u)$}
To establish a stricter bound on the space for ${\mathcal N}_{\mathcal W}^{\ge 4}(u)$, we present a classification of the set of all cuts from $N_{S({\mathcal W})}(u)$ into two types, namely, \textit{$l$-cut} and \textit{$m$-cut}. \\

\noindent
\textbf{A Classification of All Cuts From $N_{S({\mathcal W})}(u)$:} A cut $C\in N_{S({\mathcal W})}(u)$ is said to be a $l$-cut if there is exactly one Steiner vertex $t$ such that $t\in \overline{C}$, otherwise $C$ is called a \textit{$m$-cut}. \\

Since every $l$-cut $C_i$ stores at most one Steiner vertex in the array ${\mathcal A}_i^u$ (Algorithm \ref{alg : nearest cut data structure}), therefore, it is easy to observe that the space occupied by storing all the $l$-cuts of $N_{S({\mathcal W})}(u)$ is only ${\mathcal O}(|\overline{S({\mathcal W})}|+|S({\mathcal W})|)$ in data structure ${\mathcal N}_{\mathcal W}^{\ge 4}(u)$. 

We now consider the set of all $m$-cuts. It seems that there still exist multiple Steiner vertices that belong to more than one $m$-cuts. Interestingly, for every Steiner vertex, the following crucial lemma provides a strict bound on the number of $m$-cuts to which it does not belong.
\begin{lemma} \label{lem : a bunch belong to at most two cuts}
    For any Steiner vertex $t$, there can be at most two $m$-cuts $C,C'$ such that $t\in \overline{C}\cap \overline{C'}$. 
\end{lemma}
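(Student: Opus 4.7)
The plan is to argue by contradiction, assuming there exist three $m$-cuts $C_1,C_2,C_3 \in N_{S({\mathcal W})}(u)$ with $t \in \overline{C_1} \cap \overline{C_2} \cap \overline{C_3}$, and apply \textsc{Gen-3-Star Lemma} to the complementary triple $\overline{C_1},\overline{C_2},\overline{C_3}$ (which are also $(\lambda_S+1)$ cuts). Note that this lemma is stated in the regime where the global mincut capacity of $G({\mathcal W})$ is at least $4$, since the surrounding section is devoted to ${\mathcal N}_{{\mathcal W}}^{\ge 4}(u)$; this global mincut hypothesis will deliver the final contradiction.

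First I would exploit the definition of an $m$-cut together with \textsc{Property}~${\mathcal P}_2$ (Lemma~\ref{lem : at most one bunch in common}) to place a Steiner vertex in each of the three star cuts of $\overline{C_1},\overline{C_2},\overline{C_3}$. By definition, every $m$-cut $C_i$ satisfies $|\overline{C_i} \cap S({\mathcal W})| \ge 2$. One Steiner vertex in $\overline{C_i}$ is $t$ by hypothesis; let $t_i \ne t$ be another. If $t_i \in \overline{C_j}$ for some $j \ne i$, then $\overline{C_i} \cap \overline{C_j}$ contains two distinct Steiner vertices $t$ and $t_i$, contradicting \textsc{Property}~${\mathcal P}_2$. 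Hence $t_i \in \overline{C_i} \cap C_j \cap C_k$ where $\{i,j,k\}=\{1,2,3\}$, which is precisely one of the three star cuts of $\overline{C_1},\overline{C_2},\overline{C_3}$. So each star cut contains a Steiner vertex.

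Second, I would invoke \textsc{Gen-3-Star Lemma} (Theorem~\ref{thm : gen 3 star}(1)) on the three $(\lambda_S+1)$ cuts $\overline{C_1},\overline{C_2},\overline{C_3}$ to conclude
\[
c\bigl(\overline{C_1} \cap \overline{C_2} \cap \overline{C_3}\bigr) \le 3.
\]
On the other hand, the set $\overline{C_1} \cap \overline{C_2} \cap \overline{C_3}$ is non-empty (it contains $t$), and its complement $C_1 \cup C_2 \cup C_3$ is non-empty (it contains $s$, since $s \in C_i$ for every $i$), so this set defines a proper cut of $G({\mathcal W})$. Under the working assumption that the global mincut capacity of $G({\mathcal W})$ is at least $4$, this forces $c(\overline{C_1} \cap \overline{C_2} \cap \overline{C_3}) \ge 4$, contradicting the bound obtained from \textsc{Gen-3-Star Lemma}.

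The main obstacle I anticipate is only a bookkeeping one: ensuring that the auxiliary Steiner vertex $t_i \in \overline{C_i}$ is genuinely distinct from $t$ and cannot accidentally coincide with $s$ (which is ruled out because $s \in C_i$, so $s \notin \overline{C_i}$), and that the $t_i$'s land in different star cuts as claimed. Both issues are handled cleanly by \textsc{Property}~${\mathcal P}_2$. No new structural insight beyond the Gen-3-Star machinery and the $m$-cut definition seems to be required.
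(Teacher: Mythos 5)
Your proof is correct, and the first half coincides exactly with the paper's: you use \textsc{Property} ${\mathcal P}_2$ together with the definition of an $m$-cut to place a distinct Steiner vertex in each of the three star cuts of $\overline{C_1},\overline{C_2},\overline{C_3}$. The endgame, however, is genuinely different. The paper goes further and also uses \textsc{Property} ${\mathcal P}_1$ (Lemma~\ref{lem : pair of bunches}), combined with the fact that each $C_i\in N_{S({\mathcal W})}(u)$ is a \emph{nearest} cut and hence has some $v_i\in\overline{C_i}\cap{\mathcal W}$, to conclude that each star cut also contains a vertex of ${\mathcal W}$; it then applies Lemma~\ref{lem : cor of gen 3 star} to deduce $\overline{C_1}\cap\overline{C_2}\cap\overline{C_3}=\emptyset$ outright, contradicting $t$ being in the intersection. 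You instead stop at Theorem~\ref{thm : gen 3 star}(1), which only gives $c(\overline{C_1}\cap\overline{C_2}\cap\overline{C_3})\le 3$, and close the gap by invoking the standing assumption $\lambda_{G({\mathcal W})}\ge 4$. In the context where the lemma is actually used (inside the analysis of ${\mathcal N}_{{\mathcal W}}^{\ge 4}$) this is perfectly sound and arguably cleaner, since it avoids one auxiliary application of \textsc{Property} ${\mathcal P}_1$. The trade-off is that the paper's argument establishes the lemma with no hypothesis on the global mincut, so it holds as a free-standing statement, whereas yours is conditional on $\lambda_{G({\mathcal W})}\ge 4$ --- worth keeping in mind if the lemma were ever cited outside this section.
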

\begin{proof}
    Suppose there exist three $m$-cuts $C_1,C_2$, and $C_3$ such that $t\in \overline{C_1}\cap \overline{C_2}\cap \overline{C_3}$. By definition, for every $m$-cut $C\in \{C_1,C_2,C_3\}$, there are at least two Steiner vertices $t_1,t_2$ such that $t_1,t_2\in \overline{C}$. For any pair of integers $i,j\in [1,3]$ and $i\ne j$, by Lemma \ref{lem : at most one bunch in common}, there cannot be more than one Steiner vertex in $\overline{C_i} \cap \overline{C_j}$. Therefore, for the three cuts, $C_1,C_2,C_3$, there must exist at least three Steiner vertices $t_1,t_2,t_3$ such that $t_1\in \overline{C_1}\cap C_2 \cap C_3$, $t_2\in C_1 \cap \overline{C_2}\cap C_3$, and $t_3\in C_1\cap C_2 \cap \overline{C_3}$. Now, since $C_1,C_2,C_3$ are nearest cuts of $u$ that subdivide ${\mathcal W}$, there must exist vertices $v_1,v_2,v_3\in {\mathcal W}$ such that $v_1\in \overline{C_1}$, $v_2\in \overline{C_2}$, and $v_3\in \overline{C_3}$. It is also assumed that $t\in \overline{C_1}\cap \overline{C_2}\cap \overline{C_3}$. So, for any pair of integers $i,j\in [1,3]$ and $i\ne j$, by Lemma \ref{lem : pair of bunches}, there cannot be any vertex from ${\mathcal W}$ in $\overline{C_i}\cap \overline{C_j}$.    
    This implies that $v_1\in \overline{C_1}\cap C_2\cap C_3$, $v_2\in C_1\cap \overline{C_2} \cap C_3$ and $v_3\in C_1\cap C_2\cap \overline{C_3}$. So, we have ensured that each of the three cuts $\overline{C_1},\overline{C_2},$ and $\overline{C_3}$ contains at least one Steiner vertex of $G({\mathcal W})$ and at least one vertex from the same class ${\mathcal W}$. Therefore, for three cuts $\overline{C_1}, \overline{C_2},$ and $\overline{C_3}$, it follows from Lemma \ref{lem : cor of gen 3 star} that $\overline{C_1}\cap \overline{C_2}\cap \overline{C_2}=\emptyset$, a contradiction.
\end{proof}

It follows from Lemma \ref{lem : a bunch belong to at most two cuts} that a Steiner vertex can appear in at most two arrays ${\mathcal A}_i^u, {\mathcal A}_j^u$, $1\le i\ne j \le k$, in ${\mathcal N}_{\mathcal W}^{\ge 4}({\mathcal W})$. Therefore, it requires only ${\mathcal O}(|S({\mathcal W})|)$ space to store all Steiner vertices for all $m$-cuts. So, overall space occupied by ${\mathcal N}_{\mathcal W}^{\ge 4}(u)$ is ${\mathcal O}(|\overline{S({\mathcal W})}|+|S({\mathcal W})|)$. \\

\noindent
\textbf{Answering existential query:} Let us mark a vertex $v$ in ${\mathcal W}$ by $i$ if $u$ and $v$ are separated by $i^{th}$ cut in $N_{S({\mathcal W})}(u)$. Suppose $v$ and $w$ are any pair of vertices in ${\mathcal W}$. By verifying whether marks of $v$ intersect marks of $w$, the data structure ${\mathcal N}^{\ge 4}_{\mathcal W}(u)$ can answer the following query. 
\begin{equation*}
\textsc{belong}(u,v,w)=\begin{cases}
            1 \quad &\text{if there exists a cut $C\in N_{S({\mathcal W})}(u)$ such that $v,w\in \overline{C}$}  \\
            0 \quad &\text{otherwise.}
     \end{cases}
\end{equation*}
It follows from the construction (Lemma \ref{lem : three sets are nonempty}) of ${\mathcal N}^{\ge 4}_{\mathcal W}(u)$ that every vertex can be marked with at most two values. Hence, for any fixed vertex $u\in {\mathcal W}$, the query \textsc{belong} can be answered using ${\mathcal N}^{\ge 4}_{\mathcal W}(u)$ in ${\mathcal O}(1)$ time. This establishes the following lemma.
\begin{lemma} \label{lem : data structure for a single vertex at least 4}
    Suppose the capacity of the global mincut of graph $G$ is at least $4$. Let ${\mathcal W}$ be any $(\lambda_S+1)$ class in $G$ and $s,u\in {\mathcal W}$ where $s\in S$. There exists an $\mathcal{O}(|\overline{S({\mathcal W})}|+|S({\mathcal W})|)$ space data structure ${\mathcal N}^{\ge 4}_{\mathcal W}(u)$ that can determine in ${\mathcal O}(k)$ time whether there exists a $(\lambda_S+1)$ cut $C$ such that $s,u\in C$ and $v_1,\ldots,v_k\in \overline{C}$ for any $v_1,\ldots,v_k \in {\mathcal W}$. If $C$ exists, the data structure can output $\overline{C}\cap {\mathcal{W}}$ in ${\mathcal{O}}(|\overline{C}\cap {\mathcal{W}}|)$ time and $\overline{C}\cap S({\mathcal{W}})$ in ${\mathcal{O}}(|\overline{C}\cap S({\mathcal{W}})|)$ time. 
\end{lemma}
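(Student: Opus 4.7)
The plan is to combine the construction in Algorithm \ref{alg : nearest cut data structure} with the two structural lemmas proved just above the statement, and then add a simple marking scheme on top so that the existential query runs in $\mathcal{O}(k)$ time. The data structure stores, for each cut $C_i \in N_{S({\mathcal W})}(u)$ containing $s$, an array ${\mathcal A}_i^u$ that holds $\overline{C_i}\cap {\mathcal W}$ and $\overline{C_i}\cap S({\mathcal W})$; I would keep the two halves as separate sub-arrays so that each of the reporting queries can be served by scanning the relevant half in output-sensitive time.

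For the space bound, I split the accounting into a nonSteiner part and a Steiner part. On the nonSteiner side, Lemma \ref{lem : three sets are nonempty} directly gives that no vertex of ${\mathcal W}$ can lie in $\overline{C_i}\cap\overline{C_j}\cap\overline{C_\ell}$ for three distinct cuts of $N_{S({\mathcal W})}(u)$, since $G({\mathcal W})$ has global mincut capacity at least $4$; hence each vertex of $\overline{S({\mathcal W})}$ appears in at most two arrays, contributing $\mathcal{O}(|\overline{S({\mathcal W})}|)$ in total. On the Steiner side, I use the $l$-cut/$m$-cut classification introduced above: every $l$-cut stores at most one Steiner vertex in its array, and since $|N_{S({\mathcal W})}(u)|=\mathcal{O}(|\overline{S({\mathcal W})}|)$, the $l$-cuts jointly store $\mathcal{O}(|\overline{S({\mathcal W})}|)$ Steiner entries. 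For the $m$-cuts, Lemma \ref{lem : a bunch belong to at most two cuts} says that each Steiner vertex occurs in the complement of at most two $m$-cuts of $N_{S({\mathcal W})}(u)$, so the $m$-cut Steiner entries are bounded by $\mathcal{O}(|S({\mathcal W})|)$. Summing the three contributions yields the claimed $\mathcal{O}(|\overline{S({\mathcal W})}|+|S({\mathcal W})|)$ bound.

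For the existential query, I attach to every $v \in {\mathcal W}\cup S({\mathcal W})$ the (at most two) indices $i$ for which $v\in{\mathcal A}_i^u$; this can be built during the construction at no extra asymptotic cost and adds only $\mathcal{O}(|{\mathcal W}|+|S({\mathcal W})|)$ storage. Given $v_1,\dots,v_k\in{\mathcal W}$, I initialize the candidate set as the mark-list of $v_1$ (size at most $2$) and, for each subsequent $v_j$, discard any candidate index not appearing in the mark-list of $v_j$; since both lists have constant size, each step runs in $\mathcal{O}(1)$ time, and the whole test completes in $\mathcal{O}(k)$ time. If a candidate index $i$ survives, then $C_i$ is a witness with $s,u\in C_i$ and $v_1,\dots,v_k\in\overline{C_i}$; otherwise no such cut in $N_{S({\mathcal W})}(u)$ exists, and by the definition of nearest $(\lambda_S+1)$ cuts no $(\lambda_S+1)$ cut separates $\{s,u\}$ from $\{v_1,\dots,v_k\}$ at all.

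Finally, once a witness index $i$ is identified, reporting $\overline{C}\cap{\mathcal W}$ and $\overline{C}\cap S({\mathcal W})$ is just a linear scan over the two sub-arrays of ${\mathcal A}_i^u$, giving the stated output-sensitive times. The only subtlety I foresee is keeping the per-vertex mark-list strictly bounded by two throughout the analysis; this is precisely what Lemma \ref{lem : three sets are nonempty} secures for nonSteiner vertices under the global-mincut-$\ge 4$ hypothesis, and what Lemma \ref{lem : a bunch belong to at most two cuts} secures for Steiner vertices against $m$-cuts, while the $l$-cut case is handled by the one-Steiner-vertex structure of its complement. With these invariants in place, every claim of the lemma follows directly.
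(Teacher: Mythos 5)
Your proof is correct and follows essentially the same approach as the paper: construct the arrays via Algorithm~\ref{alg : nearest cut data structure}, bound nonSteiner entries by Lemma~\ref{lem : three sets are nonempty}, split Steiner entries into $l$-cut entries (one per cut, so $\mathcal{O}(|\overline{S(\mathcal{W})}|)$ total) and $m$-cut entries (bounded by Lemma~\ref{lem : a bunch belong to at most two cuts}), and answer queries by intersecting per-vertex mark-lists. One small overstatement in your final paragraph is harmless but worth noting: the per-vertex mark-list is bounded by two only for vertices of $\mathcal{W}$ (by Lemma~\ref{lem : three sets are nonempty}); a Steiner vertex of $S(\mathcal{W})$ may be the unique complementary Steiner vertex of arbitrarily many $l$-cuts, so its mark-list can be large. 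Since the query only ever consults mark-lists of $v_1,\ldots,v_k\in\mathcal{W}$, and the total size of all mark-lists is still bounded by the array storage, this does not affect either the query time or the space bound — but you should only claim the two-mark bound for $\mathcal{W}$ vertices, as the paper itself is careful to do.
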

\subsection{Extension to General Graphs} \label{app : extension to general graphs}
Suppose the capacity of global mincut of $G({\mathcal W})$ is less than $4$. It turns out that more than two cuts may exist from $N_{S({\mathcal W})}(u)$ such that the complement of their union has a nonempty intersection with ${\mathcal W}$ (refer to Figure \ref{fig : 1st figure in overview}$(ii)$). The following lemma ensures that even if multiple cuts exist, storing at most two cuts from $N_{S({\mathcal W})}(u)$ for every vertex is sufficient for answering query $\textsc{cut}_{\mathcal W}$. 

\begin{lemma} \label{lem : storing at most two cuts are sufficient}
    For any three cuts $C_0,C_1,$ and $C_2$ in $N_{S({\mathcal W})}(u)$, if $\overline{C_0}\cap \overline{C_1}\cap \overline{C_2}\cap {\mathcal W}$ is nonempty, then there exists an $i\in \{0,1,2\}$ such that $\overline{C_i}\cap C_{(i+1)\%3}\cap C_{(i+2)\%3}\cap {\mathcal W}$ is an empty set. 
\end{lemma}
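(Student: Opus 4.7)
The plan is to prove the statement by contradiction, reducing it to an application of \textsc{Gen-3-Star Lemma} (Theorem \ref{thm : gen 3 star}) to the three $(\lambda_S+1)$ cuts $\overline{C_0}, \overline{C_1}, \overline{C_2}$. I would first fix a vertex $v\in \overline{C_0}\cap \overline{C_1}\cap \overline{C_2}\cap {\mathcal W}$ promised by the hypothesis, and suppose for contradiction that each of the three star cuts $V_i := \overline{C_i}\cap C_{(i+1)\%3}\cap C_{(i+2)\%3}$ meets ${\mathcal W}$.

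The next step is to check that each $V_i$ also contains a Steiner vertex, so that Theorem \ref{thm : gen 3 star} is applicable. Since $C_i$ is a $(\lambda_S+1)$ Steiner cut with $s\in C_i$, the complement $\overline{C_i}$ must contain some Steiner vertex $t_i\ne s$, and I would claim $t_i\in V_i$. Indeed, for each $j\ne i$ the vertex $v$ witnesses $\overline{C_i}\cap \overline{C_j}\cap {\mathcal W}\ne \emptyset$; therefore, the contrapositive of \textsc{Property ${\mathcal P}_1$} (Lemma \ref{lem : pair of bunches}), applied to the pair $C_i,C_j\in N_{S({\mathcal W})}(u)$, forbids any Steiner vertex from lying in $\overline{C_i}\cap \overline{C_j}$. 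Hence $t_i\in C_j$ for both $j\ne i$, placing $t_i$ in $V_i$ as required.

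With the Steiner-vertex hypothesis of Theorem \ref{thm : gen 3 star} satisfied for $\overline{C_0}, \overline{C_1}, \overline{C_2}$, I would then apply part $(2)$ of that theorem. The role of $V_\Phi$ with respect to $\overline{C_0}, \overline{C_1}, \overline{C_2}$ is played by $C_0\cap C_1\cap C_2$, which contains $u\in {\mathcal W}$, so condition $(i)$ holds; the contradiction assumption supplies $k=3$ star cuts that subdivide ${\mathcal W}$ (each $V_i$ contains a ${\mathcal W}$-vertex, and $\overline{V_i}$ contains $u$), so condition $(ii)$ holds with $k=3$. The theorem then forces
\[
c(\overline{C_0}\cap \overline{C_1}\cap \overline{C_2}) \le 3-k = 0.
\]
However, $\overline{C_0}\cap \overline{C_1}\cap \overline{C_2}$ is nontrivial (it contains $v$ yet excludes $u$ and $s$), so by connectivity of $G({\mathcal W})$ its capacity is at least $1$, which is the desired contradiction.

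The only genuinely delicate step is the placement of each Steiner vertex $t_i$ inside its star cut $V_i$; this is where \textsc{Property ${\mathcal P}_1$} must be leveraged in its contrapositive form. Everything else amounts to carefully matching the notation of Theorem \ref{thm : gen 3 star} to the complemented cuts $\overline{C_0}, \overline{C_1}, \overline{C_2}$ (so that its ``outer'' region becomes the ``inner'' intersection $C_0\cap C_1\cap C_2$ here), after which the capacity bound combined with a connectivity argument closes the proof.
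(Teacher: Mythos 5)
Your proposal is correct and follows essentially the same route as the paper's proof: both place a Steiner vertex in each star set $V_i$ via the contrapositive of \textsc{Property ${\mathcal P}_1$}, then invoke the generalized 3-star machinery on $\overline{C_0},\overline{C_1},\overline{C_2}$ to force $\overline{C_0}\cap\overline{C_1}\cap\overline{C_2}$ to be empty, contradicting the hypothesis. The only cosmetic difference is that the paper cites the prepackaged Lemma~\ref{lem : cor of gen 3 star}, whereas you apply Theorem~\ref{thm : gen 3 star}(2) directly with $k=3$ after noting that $u\in C_0\cap C_1\cap C_2$ supplies the $V_\Phi$ witness — but Lemma~\ref{lem : cor of gen 3 star} is itself just that application, so the two are interchangeable.
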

\begin{proof}
    Suppose there does not exist any $i\in [0,2]$ such that $\overline{C_i}\cap C_{(i+1)\%3}\cap C_{(i+2)\%3}\cap {\mathcal W}$ is an empty set. Let $P=C_0\cap C_1\cap \overline{C_2}$, $Q=C_0\cap \overline{C_1}\cap C_2$, and $R=\overline{C_0}\cap C_1\cap C_2$. Since for every $i,j\in\{0,1,2\}$, $\overline{C_i}\cap \overline{C_j}$ is nonempty, therefore, by Lemma \ref{lem : pair of bunches}, there cannot exist any Steiner vertex $t$ such that $t\in \overline{C_i}\cap \overline{C_j}$. Moreover, cuts $C_0,C_1,$ and $C_2$ are Steiner cuts.  Hence, each of the three sets $P,Q,$ and $R$ contains a Steiner vertex. This implies that each of the three sets $P,Q,$ and $R$ defines a Steiner cut. Since we assumed that each of the three cuts $P,Q,$ and $R$ contains at least one vertex from ${\mathcal W}$, so each of them subdivides ${\mathcal W}$. By Lemma \ref{lem : cor of gen 3 star}, for cuts $\overline{C_0}, \overline{C_1},$ and $\overline{C_2}$, we have $\overline{C_0}\cap \overline{C_1}\cap \overline{C_2}=\emptyset$, a contradiction.
\end{proof}
By Lemma \ref{lem : storing at most two cuts are sufficient}, for any three cuts $C_0,C_1,$ and $C_2$ in $N_{S({\mathcal W})}(u)$, if $\overline{C_0}\cap \overline{C_1}\cap \overline{C_2}\cap {\mathcal W}$ is nonempty, then we do not have to store cut $C_i$, $i\in \{0,1,2\}$, for which $\overline{C_i}\cap C_{(i+1)\%3}\cap C_{(i+2)\%3}\cap {\mathcal W}$ is an empty set. This is because for each vertex $u\in \overline{C_i}\cap {\mathcal W}$, we have already stored at least one of the two cuts $C_{(i+1)\%3}$ and $ C_{(i+2)\%3}$. The remaining construction and analysis is the same as that of data structure ${\mathcal N}^{\ge 4}_{\mathcal W}(u)$, stated in Algorithm \ref{alg : nearest cut data structure}, because they do not depend on the capacity of global mincut of $G({\mathcal W})$. 
This leads to the data structure in the following lemma. 

\begin{lemma} \label{lem : data structure for a single vertex at most 3}
    Suppose the capacity of the global mincut of graph $G$ is at most $3$. Let ${\mathcal W}$ be any $(\lambda_S+1)$ class in $G$ and $s,u\in {\mathcal W}$ where $s\in S$. There exists an $\mathcal{O}(|\overline{S({\mathcal W})}|+|S({\mathcal W})|)$ space data structure ${\mathcal N}^{\le 3}_{\mathcal W}(u)$ that can determine in ${\mathcal O}(1)$ time whether there exists a $(\lambda_S+1)$ cut $C$ such that $s,u\in C$ and $v\in \overline{C}$ for any $v \in {\mathcal W}$. If $C$ exists, the data structure can output $\overline{C}\cap {\mathcal W}$ in ${\mathcal{O}}(|\overline{C}\cap {\mathcal W}|)$ time and $\overline{C}\cap S({\mathcal{W}})$ in ${\mathcal{O}}(|\overline{C}\cap S({\mathcal{W}})|)$ time. 
\end{lemma}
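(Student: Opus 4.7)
The plan is to mirror the construction of ${\mathcal N}_{\mathcal W}^{\ge 4}(u)$ in Algorithm \ref{alg : nearest cut data structure}, preceded by a greedy pruning step that reduces $N_{S({\mathcal W})}(u)$ to a subfamily on which the disjointness property driving Lemma \ref{lem : three sets are nonempty} is recovered, even though the capacity of the global mincut of $G({\mathcal W})$ is now at most $3$.

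\textbf{Pruning step.} I would initialize the stored family ${\mathcal F}$ to $N_{S({\mathcal W})}(u)$, and while there exist three cuts $C_0,C_1,C_2\in{\mathcal F}$ with $\overline{C_0}\cap\overline{C_1}\cap\overline{C_2}\cap{\mathcal W}\ne\emptyset$, invoke Lemma \ref{lem : storing at most two cuts are sufficient} to select an index $i\in\{0,1,2\}$ for which $\overline{C_i}\cap C_{(i+1)\%3}\cap C_{(i+2)\%3}\cap{\mathcal W}=\emptyset$, and remove $C_i$ from ${\mathcal F}$. The deletion is safe for single-vertex queries because every $v\in \overline{C_i}\cap{\mathcal W}$ must lie in $\overline{C_{(i+1)\%3}}\cup\overline{C_{(i+2)\%3}}$, so $u$ remains separated from $v$ by a cut still present in ${\mathcal F}$. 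This is precisely why the lemma only promises a single-vertex query interface, matching Note \ref{note : k vertices}.

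\textbf{Space analysis.} Upon termination, ${\mathcal F}$ satisfies the conclusion of Lemma \ref{lem : three sets are nonempty}: for every triple of cuts of ${\mathcal F}$, the triple intersection of their complements with ${\mathcal W}$ is empty. Hence every vertex of ${\mathcal W}$ lies in $\overline{C}$ for at most two cuts of ${\mathcal F}$, and storing $\overline{C}\cap{\mathcal W}$ across all $C\in{\mathcal F}$ costs ${\mathcal O}(|\overline{S({\mathcal W})}|)$ space. For Steiner vertices, I would reuse the $l$-cut vs. $m$-cut classification: the $l$-cut contribution is ${\mathcal O}(|\overline{S({\mathcal W})}|)$ trivially, while the $m$-cut contribution is ${\mathcal O}(|S({\mathcal W})|)$ by Lemma \ref{lem : a bunch belong to at most two cuts}. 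Crucially, the proof of Lemma \ref{lem : a bunch belong to at most two cuts} invokes only \textsc{Property ${\mathcal P}_1$}, \textsc{Property ${\mathcal P}_2$} and Lemma \ref{lem : cor of gen 3 star}, none of which require the capacity of the global mincut of $G({\mathcal W})$ to be at least $4$, so it carries over unchanged. Summing, the total space is ${\mathcal O}(|\overline{S({\mathcal W})}|+|S({\mathcal W})|)$.

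\textbf{Answering the query.} I would mark each $v\in{\mathcal W}$ by the (at most two) indices of cuts in ${\mathcal F}$ whose complement contains $v$. The query ``does there exist $C$ with $s,u\in C$ and $v\in\overline{C}$?'' reduces to checking whether the mark of $v$ is nonempty, answerable in ${\mathcal O}(1)$. Once a witness index is produced, $\overline{C}\cap{\mathcal W}$ and $\overline{C}\cap S({\mathcal W})$ can be read out of the corresponding array in time linear in the output size. The main obstacle I expect is maintaining throughout the pruning loop the invariant that every vertex of ${\mathcal W}$ originally separable from $u$ by some cut in $N_{S({\mathcal W})}(u)$ remains separable by some cut of the current ${\mathcal F}$. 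The argument above secures this for single-vertex queries, but it is exactly the step that forbids the multi-vertex (``$v_1,\dots,v_k$'') interface of Lemma \ref{lem : data structure for a single vertex at least 4}, since a pruned cut could have been the unique member of $N_{S({\mathcal W})}(u)$ separating $u$ from a whole $k$-subset at once.
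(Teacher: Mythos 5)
Your proposal is correct and takes essentially the same route as the paper: invoke Lemma \ref{lem : storing at most two cuts are sufficient} to discard cuts until the triple-intersection property of Lemma \ref{lem : three sets are nonempty} is recovered over the retained family, and then carry over the $l$-cut/$m$-cut space analysis, noting that Lemma \ref{lem : a bunch belong to at most two cuts} never relied on the global mincut capacity of $G({\mathcal W})$ being at least $4$. Your explicit iterative pruning loop, the induction securing the coverage invariant under deletions, and your remark that exactly this pruning is what forfeits the multi-vertex ($v_1,\ldots,v_k$) query interface merely make precise what the paper states informally.
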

We now design data structure ${\mathcal Q}_S({\mathcal W})$ using data structures in Lemma \ref{lem : data structure for a single vertex at least 4} and Lemma \ref{lem : data structure for a single vertex at most 3}.\\

\noindent
\textbf{Description of Data Structure ${\mathcal Q}_S({\mathcal W})$:} For every vertex $u\in {\mathcal W}$, store ${\mathcal N}^{\le 3}_{\mathcal W}(u)$ if capacity of global mincut is at most $3$, otherwise store ${\mathcal N}^{\ge 4}_{\mathcal W}(u)$.\\

Let us analyze the space occupied by ${\mathcal Q}_S({\mathcal W})$. For the only Steiner vertex $s\in {\mathcal W}$, by Lemma \ref{lem : data structure for a single vertex at most 3} and Lemma \ref{lem : data structure for a single vertex at least 4}, the space occupied by $N_{S({\mathcal W})}(s)$ is ${\mathcal O}(|\overline{S({\mathcal W})}|+|S({\mathcal W})|)$. Similarly, ${\mathcal O}(|\overline{S({\mathcal W})}|^2+|\overline{S({\mathcal W})}||S({\mathcal W}|))$ space is occupied by all nonSteiner vertices in $\overline{S({\mathcal W})}$ of ${\mathcal W}$.
Therefore, the data structure ${\mathcal Q}_S({\mathcal W})$ occupies ${\mathcal O}(|\overline{S({\mathcal W})}|^2+|\overline{S({\mathcal W})}||S({\mathcal W})|)$ space.\\

\noindent
\textbf{Answering query} $\textsc{cut}_{\mathcal W}:$ Let $u$ and $v$ are any given pair of vertices from ${\mathcal W}$. It follows from Lemma \ref{lem : data structure for a single vertex at least 4}, Lemma \ref{lem : data structure for a single vertex at most 3}, and the construction of data structure ${\mathcal Q}_S({\mathcal W})$ that, using Query \textsc{belong}, data structure $Q_S({\mathcal W})$ can determine in ${\mathcal O}(1)$ time if there is a cut $C_i\in N_{S({\mathcal W})}(u)$ such that $v \in \overline{C_i}\cap {\mathcal W}$ and $s,u\in C_i\cap {\mathcal W}$. By using array ${\mathcal A}_i^u$ (Algorithm \ref{alg : nearest cut data structure}), it can report $C_i$ in $G({\mathcal W})$ in ${\mathcal O}(|{\overline{S({\mathcal W})}}|+|S({\mathcal W})|)$ time.
So, 
by storing an ${\mathcal O}(n)$ space mapping of vertices from $G$ to $G({\mathcal W})$, we can report the cut $C_i$ in $G$ in ${\mathcal O}(n)$ time. Therefore, the data structure for answering query $\textsc{cut}_{\mathcal W}$ occupies ${\mathcal O}(n+|\overline{S({\mathcal W})}|^2+|\overline{S({\mathcal W})}||S({\mathcal W})|)$ space. The results of this section are summarized in the following theorem. 
\begin{theorem} \label{thm : data structure for singleton class}
    Let $G=(V,E)$ be an undirected multi-graph on $n$ vertices with a Steiner set $S\subseteq V$. Let ${\mathcal W}$ be a $(\lambda_S+1)$ class of $G$ containing exactly one Steiner vertex $s$ and let $\overline{S({\mathcal W})}$ be the set of nonSteiner vertices of $G$ belonging to ${\mathcal W}$. Let $\lambda$ be the capacity of global mincut of $G$. For  $(\lambda_S+1)$ class ${\mathcal W}$, there is an ${\mathcal O}(|\overline{S({\mathcal W})}|^2+|S||\overline{S({\mathcal W})}|)$ space data structure ${\mathcal Q}_S({\mathcal W})$ that,
    \begin{enumerate}
        \item given vertices $u,v_1,\ldots,v_k$ from ${\mathcal W}$, can determine in ${\mathcal O}(k)$ time whether there exists a $(\lambda_S+1)$ cut $C$ such that $s,u\in C$ and $v_1,\ldots,v_k\in \overline{C}$ if $\lambda\ge 4$, and
        \item given vertices $u,v\in {\mathcal W}$, can determine in ${\mathcal O}(1)$ time whether there exists a $(\lambda_S+1)$ cut $C$ such that $s,u\in C$ and $v\in \overline{C}$ if $\lambda\le 3$.
    \end{enumerate}
    If $C$ exists, the data structure can report $\overline{C}\cap {\mathcal W}$ in ${\mathcal O}(|\overline{C}\cap {\mathcal W}|)$ time, and cut $C$ in ${\mathcal O}(n)$ time using an auxiliary ${\mathcal O}(n)$ space. 
\end{theorem}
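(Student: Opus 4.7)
The plan is to assemble ${\mathcal Q}_S({\mathcal W})$ as the disjoint union, over every vertex $u \in {\mathcal W}$, of the per-vertex structure already constructed in Lemma \ref{lem : data structure for a single vertex at least 4} or Lemma \ref{lem : data structure for a single vertex at most 3}, applied inside the contracted graph $G({\mathcal W})$ rather than $G$. Concretely, I would store ${\mathcal N}^{\ge 4}_{\mathcal W}(u)$ when $\lambda \ge 4$ and ${\mathcal N}^{\le 3}_{\mathcal W}(u)$ when $\lambda \le 3$, for every $u \in {\mathcal W}$. Because Lemma \ref{lem : G(W)} transfers any $(\lambda_S+1)$ cut subdividing ${\mathcal W}$ from $G$ to $G({\mathcal W})$ and back with the same partition of ${\mathcal W}$, answering the query in $G({\mathcal W})$ is equivalent to answering it in $G$.

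For the space bound, each per-vertex structure occupies ${\mathcal O}(|\overline{S({\mathcal W})}| + |S({\mathcal W})|)$ space by the two lemmas. Since ${\mathcal W}$ consists of the single Steiner vertex $s$ together with the $|\overline{S({\mathcal W})}|$ nonSteiner vertices in $\overline{S({\mathcal W})}$, the total over all $u \in {\mathcal W}$ is ${\mathcal O}(|\overline{S({\mathcal W})}|(|\overline{S({\mathcal W})}| + |S({\mathcal W})|))$. The Steiner set $S({\mathcal W})$ of $G({\mathcal W})$ arises by contracting every $(\lambda_S+1)$ class of $G$ other than ${\mathcal W}$ and then further contracting along the bunch procedure, so $|S({\mathcal W})| \le |S|$; substituting yields the claimed ${\mathcal O}(|\overline{S({\mathcal W})}|^2 + |S| \cdot |\overline{S({\mathcal W})}|)$.

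For the existential query, given $u, v_1, \ldots, v_k \in {\mathcal W}$, I invoke query \textsc{belong} on the appropriate ${\mathcal N}_{\mathcal W}(u)$: the two lemmas yield an ${\mathcal O}(k)$-time answer when $\lambda \ge 4$ and an ${\mathcal O}(1)$-time answer when $\lambda \le 3$. If the answer is positive, the relevant array ${\mathcal A}^u_i$ from Algorithm \ref{alg : nearest cut data structure} lists $\overline{C_i} \cap {\mathcal W}$ explicitly, so reporting $\overline{C} \cap {\mathcal W}$ takes ${\mathcal O}(|\overline{C} \cap {\mathcal W}|)$ time. To recover the full cut $C$ in $G$, I would keep, as the ${\mathcal O}(n)$ auxiliary structure, the forward mapping $V \to V(G({\mathcal W}))$ that records which vertex of $G({\mathcal W})$ each $x \in V$ was contracted into; by scanning this mapping once and emitting the preimage of every element of ${\mathcal A}^u_i$, the cut $\overline{C}$ in $G$ is produced in ${\mathcal O}(n)$ time.

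The heavy lifting has already been carried out by the two per-vertex lemmas via \textsc{Gen-3-Star Lemma}, \textsc{Property}~${\mathcal P}_1$, and \textsc{Property}~${\mathcal P}_2$, which bound how many nearest cuts a single vertex can simultaneously avoid. Hence the only step that still warrants attention at the level of this theorem is the case distinction on $\lambda$: when $\lambda \ge 4$ the ${\mathcal O}(k)$ bound requires that intersecting the at-most-two cut marks per queried vertex (guaranteed by Lemma \ref{lem : three sets are nonempty}) soundly decides the joint separation condition, while when $\lambda \le 3$ the underlying structure only supports a single-pair query, which matches the weakened guarantee (2) stated in the theorem. I expect this case split, together with the correctness of the mapping-based expansion used to translate a cut from $G({\mathcal W})$ back to $G$, to be the most delicate portion of the write-up.
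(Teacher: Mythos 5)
Your proposal is correct and follows essentially the same route as the paper: store, for each $u\in{\mathcal W}$, the per-vertex structure ${\mathcal N}^{\ge 4}_{\mathcal W}(u)$ or ${\mathcal N}^{\le 3}_{\mathcal W}(u)$ built inside $G({\mathcal W})$, answer existence via \textsc{belong}, report $\overline{C}\cap{\mathcal W}$ directly from the array ${\mathcal A}^u_i$, and lift the cut to $G$ via an ${\mathcal O}(n)$ contraction map; the space analysis ($|{\mathcal W}|$ copies of an ${\mathcal O}(|\overline{S({\mathcal W})}|+|S({\mathcal W})|)$ structure, plus $|S({\mathcal W})|\le {\mathcal O}(|S|)$) is exactly the paper's. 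The one point worth making explicit is that the preimage lift is sound because contraction preserves contributing edge sets, so the preimage of a $(\lambda_S+1)$ cut of $G({\mathcal W})$ that subdivides ${\mathcal W}$ is again a $(\lambda_S+1)$ cut of $G$ with the same partition of ${\mathcal W}$ --- this is what licenses the simple one-pass expansion you describe.
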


\section{Data Structure for Reporting $(\lambda_S+1)$ cuts} \label{sec : data structure complete}
In this section, for graph $G$, we design a compact data structure that can answer query \textsc{cut} for $(\lambda_S+1)$ cuts efficiently.
Exploiting the data structure in Theorem \ref{thm : data structure for singleton class}, we first construct a data structure ${\mathcal D}_s$ for all Singleton $(\lambda_S+1)$ classes of $G$. Later, using data structure ${\mathcal D}_s$, we design data structures for generic $(\lambda_S+1)$ classes of $G$. A generic $(\lambda_S+1)$ class can contain either multiple Steiner vertices or no Steiner vertices, which are mapped to Steiner and NonSteiner nodes of Flesh graph ${\mathcal F}_S$ of $G$, respectively.
\subsection{A Data Structure for All Singleton $(\lambda_S+1)$ classes} \label{sec : data structure for all singleton classes}
We construct the data structure ${\mathcal D}_s$ for all Singleton $(\lambda_S+1)$ classes of $G$ as follows.\\

\noindent
\textbf{Construction of Data Structure ${\mathcal D}_s$:} For every Singleton $(\lambda_S+1)$ class ${\mathcal W}_s$ of $G$, we store a data structure ${\mathcal Q}_S({\mathcal W}_s)$ occupying ${\mathcal O}(|\overline{S({\mathcal W}_s)}|^2+|\overline{S({\mathcal W}_s)}||S({\mathcal W}_s)|)$ space from Theorem \ref{thm : data structure for singleton class}, where $S({\mathcal W}_s)$ is the Steiner set and $\overline{S({\mathcal W}_s)}$ is the set of nonSteiner vertices of $G({\mathcal W}_s)$. For each Singleton $(\lambda_S+1)$ class of $G$, storing an auxiliary ${\mathcal O}(n)$ space mapping of vertices of Theorem \ref{thm : data structure for singleton class} leads to ${\mathcal O}(n^2)$ space in the worst case. Hence, we do not store this mapping. To report a $(\lambda_S+1)$ cut, we store the quotient mapping ($\phi$) and projection mapping $(\pi)$ of units of Flesh graph of $G$ (refer to Appendix \ref{sec : extended preliminaries}). Finally, for every Singleton $(\lambda_S+1)$ class ${\mathcal W}_s$ of $G$, we store the following mapping. Let $N$ be the node in Skeleton ${\mathcal H}_S$ that represents the Singleton $(\lambda_S+1)$ class ${\mathcal W}_s$. Observe that, by construction of $G({\mathcal W}_s)$, for every bunch ${\mathcal B}$ adjacent to node $N$, there is a Steiner vertex $t$ in $G({\mathcal W}_s)$. We store at vertex $t$ the pair of cycle-edges or the tree-edge of ${\mathcal H}_S$ that represents bunch ${\mathcal B}$. \\ 

\noindent
\textbf{Analyzing the Space occupied by ${\mathcal D}_s:$} Storing projection mapping $\pi$ and quotient mapping $\phi$ occupies ${\mathcal O}(n)$ space (refer to Appendix \ref{sec : extended preliminaries} and Lemma \ref{lem : projection mapping}). \\
Let ${\mathbb W}$ be the set of all Singleton $(\lambda_S+1)$ classes of $G$. Let ${\mathcal W}_i\in {\mathbb W}$. By construction of Connectivity Carcass, there is a node $N_i$ in ${\mathcal H}_S$ that represents ${\mathcal W}_i$. Moreover, for a pair of Singleton $(\lambda_S+1)$ classes, they are mapped to different nodes in ${\mathcal H}_S$.
Therefore, by Lemma \ref{lem : sum of deg is order S}, it is ensured that storing a pair of cycle-edges or a tree-edge for each adjacent bunch of node $N_i$ for every Singleton $(\lambda_S+1)$ class ${\mathcal W}_i\in {\mathbb W}$ requires ${\mathcal O}(|S|)$ space.\\
For a $(\lambda_S+1)$ class ${\mathcal W}\in {\mathbb W}$, observe that, except the auxiliary ${\mathcal O}(n)$ space for reporting a cut in $G$, the data structure ${\mathcal Q}_S({\mathcal W})$ from Theorem \ref{thm : data structure for singleton class} occupies ${\mathcal O}(|\overline{S({\mathcal W})}|^2+|S||\overline{S({\mathcal W})}|)$ space. 
Since $\Sigma_{{\mathcal W}\in {\mathbb W}}(|\overline{S({\mathcal W})}|) ={\mathcal O}(n-|S|)$ and  $\Sigma_{{\mathcal W}\in {\mathbb W}}(|\overline{S({\mathcal W})}|^2)\le (\Sigma_{{\mathcal W}\in {\mathbb W}}(|\overline{S({\mathcal W})}|))^2={\mathcal O}((n-|S|)^2)$, therefore, overall space occupied by ${\mathcal D}_s$ is ${\mathcal O}(n+(n-|S|)^2+|S|(n-|S|))$, which is ${\mathcal O}(n(n-|S|+1))$.\\

\noindent
\textbf{Answering Query} \textsc{cut} \textbf{for all Singleton $(\lambda_S+1)$ class:} Let ${\mathcal W}$ be any Singleton $(\lambda_S+1)$ class with a Steiner vertex $s$. Let $N$ be the node in ${\mathcal H}_S$ that represents the $(\lambda_S+1)$ class ${\mathcal W}$. For any given pair of vertices $u$ and $v$ from ${\mathcal W}$, using Theorem \ref{thm : data structure for singleton class}, the data structure ${\mathcal D}_s$ can determine in ${\mathcal O}(1)$ time whether there is a $(\lambda_S+1)$ cut $C$ such that $s,u\in C$ and $v\in \overline{C}$. \\ 
Suppose there exists a $(\lambda_S+1)$ cut $C$ such that $s,u\in C$ and $v\in \overline{C}$. Recall that cut $C$ is for graph $G({\mathcal W})$. By using cut $C$, we now explain the procedure of reporting a $(\lambda_S+1)$ cut $C'$ of $G$ such that $s,u\in C'$ and $v\notin C'$.
It follows from Theorem \ref{thm : data structure for singleton class} that ${\mathcal D}_s$ can report all vertices of $G$ belonging to $\overline{C}\cap {\mathcal W}$ in ${\mathcal O}(|\overline{C}\cap {\mathcal W}|)$ time. It can also report each Steiner vertex $t$ of $G({\mathcal W})$ such that  $t\in \overline{C}$. Steiner vertex $t\in \overline{C}$ corresponds to a minimal cut (tree-edge or a pair of cycle-edges from the same cycle adjacent to node $N$) in ${\mathcal H}_S$, which is also stored at $t$. 
We remove the tree-edge/pair of cycle-edges from ${\mathcal H}_S$ and mark the subgraph of the Skeleton that does not contain node $N$ (containing Steiner vertex $s$). We repeat this process for every Steiner vertex $t$ for which $t\in  \overline{C}$. It is easy to observe that marking of every subgraph requires overall ${\mathcal O}(|S|)$ time because subgraphs not containing node $N$ are disjoint. Finally, we get a set of marked subgraphs of ${\mathcal H}_S$.

Let ${\mathcal B}$ be a bunch represented by a minimal cut corresponding to a Steiner vertex $t\in \overline{C}$ of $G({\mathcal W})$. 
By using the mapping of vertices from $G$ to $G({\mathcal W})$, for cut $C$ in $G({\mathcal W})$, let $C_G$ be the corresponding $(\lambda_S+1)$ cut in $G$. The following lemma helps in reporting a $(\lambda_S+1)$ cut in $G$ given cut $C$.
\begin{lemma} \label{lem : intersection is also a desired cut}
    Let $C_m=C(S_{\mathcal B})$. If $C_m$ and $C_G$ crosses then $C_m\cap C_G$ is a $(\lambda_S+1)$ cut in $G$ such that $s,u\in C_m\cap C_G$ and $v\in \overline{C_m\cap C_G}$.
\end{lemma}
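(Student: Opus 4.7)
The plan is to locate the vertices $s,u,v$ and the pre-image of the Steiner vertex $t$ on both sides of $C_m$ and $C_G$, then invoke sub-modularity (Lemma \ref{submodularity of cuts}) together with the Steiner-cut capacity lower bounds to pin down $c(C_m\cap C_G)$ exactly to $\lambda_S+1$.

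First I would establish the membership of the relevant vertices. By the convention adopted in this section, $s\in S_{\mathcal B}$, and since the tight cut satisfies $S_{\mathcal B}\subseteq C(S_{\mathcal B})=C_m$, we get $s\in C_m$. Next, recall that $C_m$ is an $S$-mincut and that the $(\lambda_S+1)$ class ${\mathcal W}$ is, by Definition \ref{def : lambda+1 S-class}, not subdivided by any $S$-mincut; since $s,u,v\in {\mathcal W}$ and $s\in C_m$, both $u$ and $v$ must lie in $C_m$. The hypothesis already gives $s,u\in C_G$ and $v\in \overline{C_G}$. Now consider the Steiner vertex $t$ of $G({\mathcal W})$ associated with bunch ${\mathcal B}$: by the construction of $G({\mathcal W})$, the pre-image in $G$ of $t$ consists exactly of vertices of $\overline{C(S_{\mathcal B})}=\overline{C_m}$ (those Steiner vertices lying in $S\setminus S_{\mathcal B}$ together with nonSteiner vertices contracted into $t$). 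Moreover, since $t\in\overline{C}$ in $G({\mathcal W})$, the same pre-image lies in $\overline{C_G}$. Hence there exists a Steiner vertex of $G$, call it $t^\ast$, with $t^\ast\in \overline{C_m}\cap \overline{C_G}$.

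Next I would verify that both $C_m\cap C_G$ and $C_m\cup C_G$ are Steiner cuts. For $C_m\cap C_G$: it contains $s$ (Steiner) while $t^\ast\in\overline{C_m\cap C_G}$. For $C_m\cup C_G$: it contains $s$ while $t^\ast\in \overline{C_m}\cap \overline{C_G}=\overline{C_m\cup C_G}$. Thus each has capacity at least $\lambda_S$. Because $v\in C_m\cap \overline{C_G}\subseteq \overline{C_m\cap C_G}$ and $s,u\in C_m\cap C_G$, the cut $C_m\cap C_G$ subdivides the $(\lambda_S+1)$ class ${\mathcal W}$, so it cannot be an $S$-mincut; hence $c(C_m\cap C_G)\ge \lambda_S+1$.

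Finally, applying Lemma \ref{submodularity of cuts}(1) to $C_m$ and $C_G$,
\[
c(C_m)+c(C_G)\;\ge\; c(C_m\cap C_G)+c(C_m\cup C_G).
\]
Substituting $c(C_m)=\lambda_S$, $c(C_G)=\lambda_S+1$, and the bounds $c(C_m\cap C_G)\ge \lambda_S+1$, $c(C_m\cup C_G)\ge \lambda_S$ yields equality throughout; in particular $c(C_m\cap C_G)=\lambda_S+1$. All four corner sets being nonempty (the crossing hypothesis) ensures $C_m\cap C_G$ is a genuine proper cut. Combined with the memberships established above, $C_m\cap C_G$ is a $(\lambda_S+1)$ cut of $G$ with $s,u\in C_m\cap C_G$ and $v\in \overline{C_m\cap C_G}$.

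The only delicate step I anticipate is the identification of the pre-image of $t$ with $\overline{C_m}$; this depends on tracing through the contraction procedure used to build $G({\mathcal W})$ and on noting that the Steiner partition of the bunch ${\mathcal B}$ is preserved under that contraction, so the description of $t$ as ``the Steiner vertex of $G({\mathcal W})$ encoding the side $S\setminus S_{\mathcal B}$ of $C_m$'' is unambiguous and gives the needed Steiner vertex $t^\ast$ in $\overline{C_m}\cap\overline{C_G}$. Everything else is a clean application of sub-modularity.
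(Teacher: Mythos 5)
Your proposal is correct and follows essentially the same outline as the paper's proof: locate a Steiner vertex in $\overline{C_m}\cap\overline{C_G}$ via the construction of $G({\mathcal W})$, conclude both $C_m\cap C_G$ and $C_m\cup C_G$ are Steiner cuts, observe $C_m\cap C_G$ subdivides ${\mathcal W}$ so $c(C_m\cap C_G)\ge\lambda_S+1$, and close with sub-modularity. You spell out a couple of steps the paper leaves implicit (that $u,v\in C_m$ because $C_m$ is an $S$-mincut and cannot subdivide ${\mathcal W}$, and the precise tracing of the pre-image of $t$), but the argument is the same.
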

\begin{proof}
   We have $s\in C_m\cap C_G$. Since $t\in \overline{C}$, by following our construction of $G({\mathcal W})$, $\overline{S_{\mathcal B}}\subseteq \overline{C_m\cup C_G}$. Therefore, $C_m\cap C_G$ and $C_m\cup C_G$ are Steiner cuts. $C_G$ separates vertices $u$ and $v$ belonging to ${\mathcal W}$. So, $C_m\cap C_G$ subdivides ${\mathcal W}$. Therefore, $c(C_G\cap C_m)\ge \lambda_S+1$. Since $C_m\cap C_G$ is a Steiner cut, $c(C_m\cup C_G)\ge \lambda_S$. By sub-modularity of cuts (Lemma \ref{submodularity of cuts}(1)), $c(C_m\cap C_G)+c(C_m\cup C_G)\le 2\lambda_S+1$. It follows that $C_m\cap C_G$ is a $(\lambda_S+1)$ cut. We also have $s,u\in C_m$ and $v\notin C_G$. Therefore, $C_m\cap C_G$ is a $(\lambda_S+1)$ cut such that $s,u\in C_m\cap C_G$ and $v\in \overline{C_m\cap C_G}$.   
\end{proof}
In graph $G({\mathcal W})$, suppose there are $k$ Steiner vertices that belong to $\overline{C}$ and the corresponding bunches are ${\mathcal B}_1,{\mathcal B}_2,\ldots, {\mathcal B}_k$. Let $A=\bigcap_{i=1}^k C(S_{{\mathcal B}_i})$. By using Lemma \ref{lem : intersection is also a desired cut} and an induction on the number of Steiner vertices belonging to $\overline{C}$ in $G({\mathcal W})$, it is a simple exercise to establish the following result.
\begin{lemma} \label{lem : reporting desired cut}
    Cut $C_G\cap A$ is a $(\lambda_S+1)$ cut such that $s,u\in C_G\cap A$ and $v\in \overline{C_G\cap A}$.
\end{lemma}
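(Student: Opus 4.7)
The plan is to prove this by induction on the number $k$ of Steiner vertices of $G(\mathcal{W})$ lying in $\overline{C}$, showing that the partial intersection $C_G\cap A_j$, where $A_j=\bigcap_{i=1}^{j}C(S_{\mathcal{B}_i})$, is a $(\lambda_S+1)$ cut of $G$ with $s,u\in C_G\cap A_j$ and $v\in\overline{C_G\cap A_j}$ for every $j\in\{1,\ldots,k\}$. The base case $j=1$ is exactly Lemma \ref{lem : intersection is also a desired cut}, provided one first checks that the hypotheses of that lemma are automatically in force: $C_G$ is a Steiner cut of capacity $\lambda_S+1$ (the contractions used to build $G(\mathcal{W})$ preserve edges and keep Steiner vertices on both sides), and the non-crossing case $C(S_{\mathcal{B}_1})\subseteq C_G$ cannot occur because $v\in \overline{C_G}$ while $v\in C(S_{\mathcal{B}_1})$ (established below).

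For the inductive step, I would set $D=C_G\cap A_{j-1}$ and $C_m=C(S_{\mathcal{B}_j})$. By the induction hypothesis, $D$ is a $(\lambda_S+1)$ cut with $s,u\in D$ and $v\notin D$, while $C_m$ is an $S$-mincut with $s\in C_m$. The two key membership facts I would record are: (i) both $u$ and $v$ lie in $C_m$, because $s,u,v\in\mathcal{W}$ all belong to the same $(\lambda_S+1)$-class and hence no $S$-mincut separates them from $s\in C_m$; and (ii) $\overline{S_{\mathcal{B}_j}}\subseteq\overline{C_m}\cap\overline{C_G}\subseteq\overline{C_m}\cap\overline{D}$, where the inclusion $\overline{S_{\mathcal{B}_j}}\subseteq\overline{C_G}$ holds because the Steiner vertices of $\overline{S_{\mathcal{B}_j}}$ in $G$ were contracted into the Steiner vertex $t_j\in\overline{C}$ of $G(\mathcal{W})$, as in the proof of Lemma \ref{lem : intersection is also a desired cut}.

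From (i) and (ii) I would repeat the submodularity argument used in the base case: $\overline{S_{\mathcal{B}_j}}\subseteq\overline{D\cup C_m}$ makes $D\cup C_m$ a Steiner cut, so $c(D\cup C_m)\ge\lambda_S$; and $s,u\in D\cap C_m$ together with $v\in C_m\setminus D\subseteq\overline{D\cap C_m}$ makes $D\cap C_m$ a Steiner cut that subdivides the $(\lambda_S+1)$-class $\mathcal{W}$, so $c(D\cap C_m)\ge\lambda_S+1$. Sub-modularity (Lemma \ref{submodularity of cuts}(1)) then yields
\[
(\lambda_S+1)+\lambda_S \;=\; c(D)+c(C_m)\;\ge\;c(D\cap C_m)+c(D\cup C_m)\;\ge\;(\lambda_S+1)+\lambda_S,
\]
forcing equality throughout and in particular $c(D\cap C_m)=\lambda_S+1$. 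Since $C_G\cap A_j=D\cap C_m$, this closes the induction.

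I do not expect a genuine obstacle here, because the work was already absorbed into Lemma \ref{lem : intersection is also a desired cut}; the only delicate point worth highlighting is the membership fact $v\in C(S_{\mathcal{B}_j})$, which rules out the degenerate non-crossing configurations and is exactly what permits the submodular inequality to be applied cleanly at each step. The role of the ambient $(\lambda_S+1)$-class $\mathcal{W}$ is therefore doing all the work: it simultaneously forces $u,v,s$ onto the same side of every $S$-mincut $C(S_{\mathcal{B}_j})$ and forces $D\cap C_m$ to remain a $(\lambda_S+1)$ cut rather than collapsing to an $S$-mincut.
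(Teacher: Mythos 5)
Your proof follows exactly the route the paper declares -- induction on the number $k$ of Steiner vertices of $G(\mathcal{W})$ lying in $\overline{C}$, with Lemma \ref{lem : intersection is also a desired cut} as the base case and the same submodularity computation rerun at each step, using $\overline{S_{\mathcal{B}_j}}\subseteq\overline{D\cup C_m}$ to make $D\cup C_m$ a Steiner cut and $s,u\in D\cap C_m$ together with $v\in C_m\setminus D$ to force $c(D\cap C_m)\ge\lambda_S+1$ -- and the details are sound. The paper leaves this as a simple exercise, and your write-up is precisely the intended filling-in of it.
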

Given cut $C$, by Lemma \ref{lem : reporting desired cut}, our aim is now to report $\overline{C_G\cap A}$. To achieve this goal, we use the marked subgraphs of the Skeleton as follows. For every vertex $u$ of $G$, if $\pi(\phi(u))$ intersects any marked subgraph corresponding to a bunch ${\mathcal B}$, then report $u$. This is because, by definition of projection mapping $\pi$, $u\in \overline{C(S_{\mathcal B})}$, and hence, $u\in \overline{A}$. By Lemma \ref{lem : projection mapping}, this process requires ${\mathcal O}(n)$ time for all vertices of $G$. Therefore, we can report every vertex of $G$ that belongs to $\overline{C_G\cap A}$ in ${\mathcal O}(n)$ time. \\


We now state the following lemma for all Singleton $(\lambda_S+1)$ classes of $G$.
\begin{lemma} \label{lem : data structure for singleton class}
    There exists an ${\mathcal O}(n(n-|S|+1))$ space data structure ${\mathcal D}_s$ that, given any pair of vertices $u$ and $v$ from any Singleton $(\lambda_S+1)$ class containing one Steiner vertex $s$, can determine whether there exists a $(\lambda_S+1)$ cut $C$ such that $s,u\in C$ and $v\in \overline{C}$ in ${\mathcal O}(1)$ time. Moreover, it can report cut $C$ in ${\mathcal O}(n)$ time.
\end{lemma}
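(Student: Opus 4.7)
The plan is to verify the space and query-time bounds claimed by the lemma by bundling together the three ingredients prescribed in the construction of ${\mathcal D}_s$. I would first catalogue what is stored: (i) the quotient and projection mappings $\phi$ and $\pi$ for the Connectivity Carcass of $G$, occupying ${\mathcal O}(n)$ space by Lemma \ref{lem : projection mapping}; (ii) for each Singleton $(\lambda_S+1)$ class ${\mathcal W}$ of $G$, one copy of the data structure ${\mathcal Q}_S({\mathcal W})$ from Theorem \ref{thm : data structure for singleton class} (without its auxiliary ${\mathcal O}(n)$-mapping, which we avoid to prevent a $\Theta(n^2)$ blow-up across all Singleton classes); and (iii) a stamp at each ``outer'' Steiner vertex $t$ of $G({\mathcal W})$ recording the tree-edge or pair of cycle-edges of ${\mathcal H}_S$ that defines the bunch corresponding to $t$.

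For the space analysis, let ${\mathbb W}$ denote the set of Singleton $(\lambda_S+1)$ classes. The nonSteiner sets $\overline{S({\mathcal W})}$ for distinct ${\mathcal W}\in{\mathbb W}$ are pairwise disjoint subsets of the nonSteiner vertices of $G$, so $\sum_{{\mathcal W}\in{\mathbb W}} |\overline{S({\mathcal W})}|\le n-|S|$. From the per-class bound of Theorem \ref{thm : data structure for singleton class}, the aggregate cost of component (ii) is
\[
\sum_{{\mathcal W}\in{\mathbb W}}\bigl(|\overline{S({\mathcal W})}|^2+|S|\cdot|\overline{S({\mathcal W})}|\bigr)\le \Bigl(\sum_{{\mathcal W}\in{\mathbb W}}|\overline{S({\mathcal W})}|\Bigr)^{\!2}+|S|\!\sum_{{\mathcal W}\in{\mathbb W}}|\overline{S({\mathcal W})}|={\mathcal O}\bigl((n-|S|)^2+|S|(n-|S|)\bigr).
\]
The bunch stamps in component (iii) add up to ${\mathcal O}(|S|)$ by Lemma \ref{lem : sum of deg is order S}, since distinct Singleton classes correspond to distinct nodes of ${\mathcal H}_S$. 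Adding (i), (ii), (iii) gives ${\mathcal O}(n+(n-|S|)^2+|S|(n-|S|))={\mathcal O}(n(n-|S|+1))$, as claimed.

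For the query procedure, given $u,v\in{\mathcal W}$ where ${\mathcal W}$ has the unique Steiner vertex $s$, I would invoke ${\mathcal Q}_S({\mathcal W})$ to decide in ${\mathcal O}(1)$ time whether a $(\lambda_S+1)$ cut $C$ of $G({\mathcal W})$ with $s,u\in C$ and $v\in\overline{C}$ exists. If it does, I enumerate the vertices of $\overline{C}\cap{\mathcal W}$ and the Steiner vertices of $\overline{C}\cap S({\mathcal W})$ (again via Theorem \ref{thm : data structure for singleton class}), look up the stamped tree-edge or cycle-edge pair at each such Steiner vertex, and mark the corresponding subgraph of ${\mathcal H}_S$ that does not contain the node $N$ representing ${\mathcal W}$; since these marked subgraphs are pairwise disjoint, this preprocessing fits in ${\mathcal O}(|S|)$ time. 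Then, using Lemma \ref{lem : reporting desired cut}, the required $(\lambda_S+1)$ cut of $G$ is $C_G\cap A$ with $A=\bigcap_i C(S_{{\mathcal B}_i})$, and $\overline{C_G\cap A}$ can be reported by scanning each vertex $x\in V$ and testing whether $\pi(\phi(x))$ intersects any marked subgraph; Lemma \ref{lem : projection mapping} and Lemma \ref{lem : tight cut reporting} make each test ${\mathcal O}(1)$, yielding ${\mathcal O}(n)$ total time.

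The main obstacle is certifying correctness of the reported set rather than the bookkeeping itself: one must argue that the ${\mathcal O}(n)$ skeleton-traversal really yields $C_G\cap A$, neither a proper subset nor a superset. This reduces to two checks I would make explicit: first, Lemma \ref{lem : reporting desired cut} (which follows from Lemma \ref{lem : intersection is also a desired cut} by induction on the number of Steiner vertices of $G({\mathcal W})$ lying in $\overline{C}$); and second, that $x\in\overline{C(S_{\mathcal B})}$ if and only if $\pi(\phi(x))$ meets the marked subgraph associated with ${\mathcal B}$, which is the content of Lemma \ref{lem : tight cut reporting}. Everything else is accounting.
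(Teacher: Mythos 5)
Your proposal matches the paper's proof essentially step by step: same three stored components (Connectivity Carcass mappings, one $\mathcal{Q}_S(\mathcal{W})$ per Singleton class stripped of its per-class $\mathcal{O}(n)$ auxiliary mapping, and bunch stamps on outer Steiner vertices of each $G(\mathcal{W})$), the same disjointness-plus-Cauchy–Schwarz-style space aggregation giving $\mathcal{O}((n-|S|)^2+|S|(n-|S|)+n)$, and the same query procedure hinging on Lemmas \ref{lem : intersection is also a desired cut}, \ref{lem : reporting desired cut}, \ref{lem : tight cut reporting}, and \ref{lem : projection mapping} to recover $C_G\cap A$ via the marked subgraphs of $\mathcal{H}_S$ in $\mathcal{O}(n)$ time. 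The proposal is correct and takes the same route as the paper.
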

In the following, we consider the case when a $(\lambda_S+1)$ class contains no Steiner vertex. Later, in Appendix \ref{sec : generic steiner node}, we discuss the case when a $(\lambda_S+1)$ class contains more than one Steiner vertices.
\subsection{A Data Structure for All NonSteiner Nodes}
A $(\lambda_S+1)$ class containing no Steiner vertex is mapped to a nonSteiner node in the Flesh graph (refer to quotient mapping in Appendix \ref{sec : extended preliminaries}). Let $\mu$ be a NonSteiner node of the Flesh graph and ${\mathcal W}_{\mu}$ be the corresponding $(\lambda_S+1)$ class. By Definition \ref{def : terminal and nonterminal unit}, there are two possible cases -- $(i)$ $\mu$ is a Stretched unit or $(ii)$ $\mu$ is a terminal unit. We now discuss each case separately in Appendix \ref{sec : stretched unit data structure} and Appendix \ref{sec : terminal unit data structure}. 
\subsubsection{A Data Structure for All Stretched Units} \label{sec : stretched unit data structure}
 Suppose $\mu$ is a Stretched unit of $G$. 
It follows from Lemma \ref{lem : projection mapping of terminal and stretched units}(2) that $\mu$ is projected to a proper path $P$ in the Skeleton. Let us consider an edge $e_{\mu}$ from the proper path $P$. There exists a bunch ${\mathcal B}$ corresponding to a minimal cut of the Skeleton ${\mathcal H}_S$ containing edge $e_{\mu}$. 
By definition of projection mapping, there exists a pair of $S$-mincuts $C,C'$ in ${\mathcal B}$ such that ${\mathcal W}_{\mu}\subseteq C\setminus C'$. We now establish the following lemma to construct a compact data structure for Stretched units.
\begin{lemma} \label{lem : cuts are stored}
    There is a $(\lambda_S+1)$ cut $C$ that subdivides $(\lambda_S+1)$ class ${\mathcal W}_{\mu}$ into $({\mathcal W}_1,{\mathcal W}_{\mu}\setminus {\mathcal W}_1)$ if and only if there exists a $(\lambda_S+1)$ cut $C'$ that subdivides ${\mathcal W}_{\mu}$ into $({\mathcal W}_1,{\mathcal W}_{\mu}\setminus {\mathcal W}_1)$ and $C'$ does not subdivide both $C(S_{\mathcal B})$ and $C(S\setminus S_{\mathcal B})$.
\end{lemma}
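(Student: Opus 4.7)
The backward direction is immediate: take $C = C'$. For the forward direction, given a $(\lambda_S+1)$ cut $C$ that subdivides ${\mathcal W}_{\mu}$ into $({\mathcal W}_1,{\mathcal W}_{\mu}\setminus {\mathcal W}_1)$, the plan is to produce the desired $C'$ by two successive uncrossing operations, first with $C(S_{\mathcal B})$ and then with $C(S\setminus S_{\mathcal B})$. The key structural input is that $\mu$ is distinguished by bunch ${\mathcal B}$, so ${\mathcal W}_\mu \subseteq \overline{C(S_{\mathcal B})}\cap \overline{C(S\setminus S_{\mathcal B})}$; in particular ${\mathcal W}_1$ and ${\mathcal W}_\mu\setminus {\mathcal W}_1$ are each disjoint from both tight cuts, which gives flexibility to uncross without disturbing the induced partition of ${\mathcal W}_\mu$.

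Step 1: Uncross with $C_t := C(S_{\mathcal B})$. If $C$ does not subdivide $C_t$, there is nothing to do. Otherwise, the two candidate modifications are $M_1 = C\setminus C_t$ and $M_2 = C\cup C_t$; since ${\mathcal W}_\mu \cap C_t = \emptyset$, both still place ${\mathcal W}_1$ on one side and ${\mathcal W}_\mu\setminus {\mathcal W}_1$ on the other. Sub-modularity (Lemma \ref{submodularity of cuts}) gives
\[
c(M_2) + c(C\cap C_t) \le c(C)+c(C_t) = 2\lambda_S+1,\qquad c(M_1)+c(C_t\setminus C) \le 2\lambda_S+1.
\]
Since $S_{\mathcal B}\subseteq C_t$ and $S\setminus S_{\mathcal B}\subseteq \overline{C_t}$, a short case analysis on where the Steiner vertices of $S\setminus S_{\mathcal B}$ lie with respect to $C$ shows that at least one of $M_1, M_2$ is a Steiner cut, and moreover the corresponding right-hand set ($C_t\setminus C$ or $C\cap C_t$) is itself a Steiner cut of capacity at least $\lambda_S$. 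Combined with the fact that $M_1$ or $M_2$ subdivides ${\mathcal W}_\mu$ (so capacity at least $\lambda_S+1$), submodularity pins the capacity at exactly $\lambda_S+1$. Let $C^{(1)}$ be the resulting cut; by construction it does not subdivide $C(S_{\mathcal B})$.

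Step 2: Apply the symmetric argument with $C_t' := C(S\setminus S_{\mathcal B})$ to $C^{(1)}$. The role of $S_{\mathcal B}$ and $S\setminus S_{\mathcal B}$ is swapped, and the same sub-modular/case-analytic argument yields $C' = C^{(2)}$, a $(\lambda_S+1)$ Steiner cut that subdivides ${\mathcal W}_\mu$ in the desired way and does not subdivide $C(S\setminus S_{\mathcal B})$. Crucially, this second uncrossing preserves the Step 1 guarantee: because the two tight cuts are disjoint ($C(S_{\mathcal B})\cap C(S\setminus S_{\mathcal B})=\emptyset$ since they sit on opposite sides of the bunch), if $C^{(1)}$ either contains or is disjoint from $C_t$, then replacing $C^{(1)}$ by $C^{(1)}\setminus C_t'$ or $C^{(1)}\cup C_t'$ does not alter that relation.

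\textbf{Anticipated main obstacle.} The delicate part is Step 1: neither $M_1$ nor $M_2$ is always the right choice, because each can fail to be a Steiner cut depending on how Steiner vertices of $S\setminus S_{\mathcal B}$ split between $C$ and $\overline{C}$ (for instance, $M_2$ is not a Steiner cut when $\overline{C}\cap (S\setminus S_{\mathcal B})=\emptyset$). Handling this cleanly requires recording, in each sub-case, both that the chosen candidate is a Steiner cut and that the partner set appearing on the right-hand side of submodularity is a Steiner cut (so its capacity is at least $\lambda_S$); only then does the $\lambda_S+1$ upper bound on the chosen candidate follow. Organizing this case analysis uniformly—so that Step 2 can be invoked as a verbatim symmetric repetition—will be the bulk of the technical work.
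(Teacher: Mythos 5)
Your proposal is correct and takes a genuinely different route from the paper. The paper argues by contradiction: it first proves an auxiliary claim that if $C$ subdivides $C(S_{\mathcal B})$, then $C$ must also subdivide the Steiner set $S_{\mathcal B}$ (using tightness of $C(S_{\mathcal B})$ to rule out a strictly smaller $S$-mincut in the same bunch), then splits into the case where $C$ subdivides exactly one tight cut versus both, and in each case invokes the contradiction hypothesis to force $c(C\cup C(S_{\mathcal B}))\ge \lambda_S+2$ and thereby $c(C\cap C(S_{\mathcal B}))<\lambda_S$, a contradiction. Your proof is direct and constructive: two successive uncrossings with the tight cuts. The case analysis you outline is exactly right — writing the pair~1 condition as $\overline{C}\cap(S\setminus S_{\mathcal B})\ne\emptyset \wedge C\cap S_{\mathcal B}\ne\emptyset$ and the pair~2 condition as $C\cap(S\setminus S_{\mathcal B})\ne\emptyset \wedge \overline{C}\cap S_{\mathcal B}\ne\emptyset$, one checks that their simultaneous failure forces either $C\cap S=\emptyset$, $\overline{C}\cap S=\emptyset$, $S_{\mathcal B}=\emptyset$, or $S\setminus S_{\mathcal B}=\emptyset$, none of which can hold, so one pair always consists of two Steiner cuts. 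Your Step~2 observation that $C(S_{\mathcal B})\cap C(S\setminus S_{\mathcal B})=\emptyset$ makes the second uncrossing not disturb the first is also correct. One small clarification on your "anticipated obstacle": the case analysis hinges on where Steiner vertices of both $S_{\mathcal B}$ and $S\setminus S_{\mathcal B}$ lie with respect to $C$, not just $S\setminus S_{\mathcal B}$, though your requirement that both the chosen candidate and its partner be Steiner cuts captures this. What your approach buys: you avoid the paper's auxiliary subdividing claim entirely (the only scenario that claim rules out — $C$ crosses the tight cut but keeps all of $S_{\mathcal B}$ on one side — turns out to be vacuous by tightness, but your argument never needs to observe this), and the two-step structure makes the invariance under the second uncrossing explicit, which the paper's Case~(b)-reduces-to-Case~(a) argument handles more implicitly.
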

\begin{proof}
     Suppose there is a $(\lambda_S+1)$ cut $C$ that subdivides $(\lambda_S+1)$ class ${\mathcal W}_{\mu}$ into $({\mathcal W}_1,{\mathcal W}_{\mu}\setminus {\mathcal W}_1)$. Assume to the contrary that there is no $(\lambda_S+1)$ cut that subdivides ${\mathcal W}_{\mu}$ into $({\mathcal W}_1,{\mathcal W}_{\mu}\setminus {\mathcal W}_1)$ and does not subdivide both $C(S_{\mathcal B})$ and $C(S\setminus S_{\mathcal B})$. So, cut $C$ must subdivide at least one of $C(S_{\mathcal B})$ and $C(S\setminus S_{\mathcal B})$. We first show that if $C$ subdivides $C(S_{\mathcal B})$, then $C$ must subdivide the Steiner set $S_{\mathcal B}$. The statement holds for $C(S\setminus S_{\mathcal B})$ as well. Assume to the contrary that $C$ does not subdivide $S_{\mathcal B}$. Let $C'=C(S_{\mathcal B})$. By Definition \ref{def : tight cut}, both $C(S_{\mathcal B})$ and $C(S\setminus S_{\mathcal B})$ are $S$-mincuts. Without loss of generality, assume that $S_{\mathcal B}\subseteq C$; otherwise, consider $\overline{C}$. So, there must exist a Steiner vertex $t\in \overline{C}$ such that $t$ is from $S\setminus S_{\mathcal B}$. This implies that $S_{\mathcal B}\subseteq C\cap C'$ and there is a Steiner vertex $t$ such that $t\notin C\cup C'$. Therefore, both $C\cap C'$ and $C\cup C'$ are Steiner cuts. For $S$-mincut $C'$ and $(\lambda_S+1)$ cut $C$, by sub-modularity of cuts (Lemma \ref{submodularity of cuts}(1)), we have the following equation.
    \begin{equation} \label{eq : in lemma 51}
        c(C\cap C')+c(C\cup C')\le 2\lambda_S+1
    \end{equation}     
    Moreover, observe that $C\cup C'$ subdivides ${\mathcal W}_{\mu}$. So, $C\cup C'$ has capacity at least $\lambda_S+1$. By Equation \ref{eq : in lemma 51}, it implies that $C\cap C'$ is an $S$-mincut. We also have $S_{\mathcal B}\subseteq C\cap C'$ and since $S\setminus S_{\mathcal B}\subseteq \overline{C'}$, there is no Steiner vertex from $S\setminus S_{\mathcal B}$ that belongs to $C\cap C'$. Therefore, $C\cap C'$ belongs to bunch ${\mathcal B}$. However, we have at least one vertex from $C'=C(S_{\mathcal B})$ that is not in $C\cap C'$ because $C$ subdivides $C(S_{\mathcal B})$. Therefore, $C\cap C'$ is a proper subset of $C(S_{\mathcal B})$ that belongs to ${\mathcal B}$, which is a contradiction because of Definition \ref{def : tight cut}.

    It follows that if $C$ subdivides $C(S_{\mathcal B})$ (likewise $C(S\setminus S_{\mathcal B})$), then $C$ also subdivides $S_{\mathcal B}$ (likewise $S\setminus S_{\mathcal B}$). Now there are two possibilities -- (a) $C$ subdivides exactly one of $C(S_{\mathcal B})$ and $C(S\setminus S_{\mathcal B})$ and (b) $C$ subdivides both $C(S_{\mathcal B})$ and $C(S\setminus S_{\mathcal B})$. We now prove for each case.\\

    \noindent
    \textbf{Case (a):} Suppose $C$ subdivides $C(S_{\mathcal B})$. The proof is along a similar line if $C$ subdivides $S_{\mathcal B}$. We know that $C$ must subdivide $S_{\mathcal B}$. Therefore, for cuts $C$ and $C'$, there is a Steiner vertex $s\in (C\cap C') \cap S_{\mathcal B}$ and $S\setminus S_{\mathcal B}\subseteq \overline{C\cup C'}$. Also, cut $C\cup C'$  subdivides ${\mathcal W}_{\mu}$. Since $C'$ does not subdivide ${\mathcal W}_{\mu}$, it is easy to observe that the partition of ${\mathcal W}_{\mu}$ formed by $C\cup C'$ is the same as the partition formed by $C$. Observe that $S$-mincut $C'$ and $(\lambda_S+1)$ cut $C$ satisfies Equation \ref{eq : in lemma 51}.
    Since there is no $(\lambda_S+1)$ cut with the Steiner partition ${\mathcal W}_{\mu}$ same as $C$ and  $C\cup C'$ does not subdivide any of $C(S_{\mathcal B})$ and $C(S\setminus S_{\mathcal B})$, $C\cup C'$ must have capacity at least $\lambda_S+2$. It follows from Equation \ref{eq : in lemma 51} that $c(C\cap C')<\lambda_S$. So, we have a Steiner cut of capacity strictly less than the capacity of $S$-mincut, which is a contradiction. \\

    \noindent
    \textbf{Case (b):}  Suppose $C$ subdivides both $C(S_{\mathcal B})$ and $C(S\setminus S_{\mathcal B})$. This implies that $C$ must subdivide $S_{\mathcal B}$ as well as $S\setminus S_{\mathcal B}$. Let us consider cut $C'=C(S_{\mathcal B})$. It follows that we have a Steiner vertex $s$ from $S_{\mathcal B}$ such that $s\in C\cup C'$. Similarly, since $S\setminus S_{\mathcal B}\subseteq \overline{C'}$, there also exist a Steiner vertex $t$ from $S\setminus S_{\mathcal B}$ such that $t\in \overline{C\cup C'}$. This ensures that $C\cap C'$ and $C\cup C'$ are Steiner cuts. Cut $C$ subdivides ${\mathcal W}_{\mu}$. So, $C\cup C'$ has capacity at least $\lambda_S+1$. Moreover, $C\cap C'$ has capacity at least $\lambda_S$. Observe that $C$ and $C'$ satisfies Equation \ref{eq : in lemma 51}. So, $C\cap C'$ is an $S$-mincut and $C\cup C'$ is a $(\lambda_S+1)$ cut. So, we have a $(\lambda_S+1)$ cut that subdivides ${\mathcal W}_{\mu}$ and does not subdivide $C(S_{\mathcal B})$. Therefore, $C\cup C'$ subdivide only $C(S\setminus S_{\mathcal B})$. In this case, we have shown in Case (a) that there must exist at least one $(\lambda_S+1)$ cut $C''$ with the same partition of ${\mathcal W}_{\mu}$ as $C$ such that $C''$ does not subdivide any of $C(S_{\mathcal B})$ and $C(S\setminus S_{\mathcal B})$. This completes the proof of the forward direction. 

        The proof of the converse part is immediate. 
\end{proof} 
Similar to the construction of graph $G({\mathcal W})$ for a Singleton $(\lambda_S+1)$ class ${\mathcal W}$ in Appendix \ref{sec : data structure for singleton class}, we construct graph $G({\mathcal W}_{\mu})$ for Stretched unit $\mu$ as follows. Graph $G({\mathcal W}_{\mu})$ is obtained from graph $G$ by contracting set $C(S_{\mathcal B})$ into a single vertex $s$ and $C(S\setminus S_{\mathcal B})$ into a single vertex $t$. Note that $s$ and $t$ are Steiner vertices in $G({\mathcal W}_{\mu})$. Lemma \ref{lem : cuts are stored} ensures that, for every partition of ${\mathcal W}_{\mu}$ formed by a $(\lambda_S+1)$ cut of $G$, there is a $(\lambda_S+1)$ cut in $G({\mathcal W}_{\mu})$ with the same partition. However, observe that the $(\lambda_S+1)$ class ${\mathcal W}_{\mu}$ in $G({\mathcal W}_{\mu})$ still remains a Stretched unit. Note that if we can convert ${\mathcal W}_{\mu}$ to a Singleton $(\lambda_S+1)$ class, then by using Lemma \ref{lem : data structure for singleton class}, a data structure can be constructed for $\mu$. To materialize this idea, we apply a covering technique of Baswana, Bhanja, and Pandey \cite{DBLP:journals/talg/BaswanaBP23} on graph $G({\mathcal W}_{\mu})$. A pair of graphs  $G({\mathcal W}_{\mu})^I$ and $G({\mathcal W}_{\mu})^U$ are constructed from $G({\mathcal W}_{\mu})$ as follows.\\

\noindent
\textbf{Construction of $G({\mathcal W}_{\mu})^I$ and $G({\mathcal W}_{\mu})^U$:} Let $x$ be any vertex in ${\mathcal W}_{\mu}$. Graph $G({\mathcal W}_{\mu})^I$ is obtained by adding a pair of edges between vertex $s$ and vertex $x$. Similarly, graph $G({\mathcal W}_{\mu})^U$ is obtained by adding a pair of edges between vertex $x$ and vertex $t$. Without causing any ambiguity, we denote the Steiner set by $S$ for each of the three graphs $G({\mathcal W}_{\mu})$, $G({\mathcal W}_{\mu})^I$, and $G({\mathcal W}_{\mu})^U$. The following property is ensured by Baswana, Bhanja, and Pandey \cite{DBLP:journals/talg/BaswanaBP23}.
\begin{lemma} [Lemma 3.1 in \cite{DBLP:journals/talg/BaswanaBP23}]
    $C$ is $(\lambda_S+1)$ cut in $G({\mathcal W}_{\mu})$ if and only if $C$ is a $(\lambda_S+1)$ cut either in $G({\mathcal W}_{\mu})^I$ or in $G({\mathcal W}_{\mu})^U$.
\end{lemma}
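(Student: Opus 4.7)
The plan is to argue via a short cut-capacity bookkeeping, relying on a single key observation: adding a pair of parallel edges between two vertices $p,q$ increases $c(C)$ by exactly $2$ if $C$ separates $p$ from $q$, and leaves $c(C)$ unchanged otherwise. Crucially, since the contractions defining $G({\mathcal W}_\mu)$ collapse $S_{\mathcal B}$ into $s$ and $S\setminus S_{\mathcal B}$ into $t$, and the covering edges do not introduce any new Steiner vertex, all three graphs $G({\mathcal W}_\mu)$, $G({\mathcal W}_\mu)^I$, and $G({\mathcal W}_\mu)^U$ share the same Steiner set $\{s,t\}$; a Steiner cut in any of them is the same combinatorial object, namely a cut separating $s$ from $t$.

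The first step is to verify that the Steiner mincut capacity of both augmented graphs equals $\lambda_S$, so that ``$\lambda_S+1$'' refers to the same threshold in all three graphs. Because $\mu$ is a Stretched unit with respect to ${\mathcal B}$, the chosen auxiliary vertex $x\in {\mathcal W}_\mu$ lies in $\overline{C(S_{\mathcal B})}\cap \overline{C(S\setminus S_{\mathcal B})}$. In particular, $s$ and $x$ lie on the same side of the $S$-mincut $C(S\setminus S_{\mathcal B})$, so the two added $s$-$x$ edges are not contributing edges of this cut and $C(S\setminus S_{\mathcal B})$ retains capacity $\lambda_S$ in $G({\mathcal W}_\mu)^I$. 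Since adding edges can only inflate cut capacities, $\lambda_S$ is precisely the Steiner mincut capacity of $G({\mathcal W}_\mu)^I$. A symmetric argument, using $C(S_{\mathcal B})$ (where $t$ and $x$ sit together), gives the same conclusion for $G({\mathcal W}_\mu)^U$.

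With this calibration in hand, both directions are routine. For the forward direction, let $C$ be a $(\lambda_S+1)$ cut of $G({\mathcal W}_\mu)$; then $C$ separates $s$ from $t$, and $x$ lies on exactly one of the two sides. If $x$ is on the $s$-side, the added $s$-$x$ edges do not contribute to $C$, so $c(C)$ is unchanged in $G({\mathcal W}_\mu)^I$ and $C$ is a $(\lambda_S+1)$ cut there; otherwise $x$ is on the $t$-side and the analogous computation makes $C$ a $(\lambda_S+1)$ cut of $G({\mathcal W}_\mu)^U$. For the converse, suppose $C$ is a $(\lambda_S+1)$ cut of $G({\mathcal W}_\mu)^I$. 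Then $C$ separates $s$ from $t$ and is therefore also a Steiner cut of $G({\mathcal W}_\mu)$; its capacity in $G({\mathcal W}_\mu)$ is either $\lambda_S+1$ or $\lambda_S-1$, but the second value is ruled out because every Steiner cut of $G({\mathcal W}_\mu)$ has capacity at least $\lambda_S$. The same argument with $G({\mathcal W}_\mu)^U$ closes the equivalence.

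The main obstacle is really the calibration step rather than either direction of the biconditional: one has to carefully track on which side of each tight cut the auxiliary vertex $x$ sits, and this is exactly where the Stretched-unit hypothesis -- that ${\mathcal W}_\mu$ is squeezed strictly between $C(S_{\mathcal B})$ and $C(S\setminus S_{\mathcal B})$ -- is indispensable, furnishing the two ``witness'' $S$-mincuts whose capacities are preserved under the augmentation. Once the Steiner mincut capacity of each augmented graph is pinned at $\lambda_S$, the rest of the proof is a two-line capacity computation.
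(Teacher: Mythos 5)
Your proposal is correct, and the approach is the natural one for the covering technique. Note that this lemma is cited from Baswana, Bhanja, and Pandey rather than proved in the present paper, so the comparison is with their Lemma~3.1; your argument is the standard one for that result. The two-step structure is exactly right: first calibrate that adding the two parallel edges does not raise the Steiner mincut value above $\lambda_S$, and then do the parity bookkeeping for the added edges on both sides of the biconditional. The calibration is the only place where the Stretched-unit hypothesis is used, and you invoke it correctly: $x\in{\mathcal W}_\mu\subseteq\overline{C(S_{\mathcal B})}\cap\overline{C(S\setminus S_{\mathcal B})}$, so in $G({\mathcal W}_\mu)$ the cut $\{t\}$ keeps $s$ and $x$ together (giving a surviving $S$-mincut of capacity $\lambda_S$ in $G({\mathcal W}_\mu)^I$), and symmetrically $\{s\}$ keeps $t$ and $x$ together for $G({\mathcal W}_\mu)^U$. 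With the threshold pinned at $\lambda_S$ in all three graphs, the forward direction splits on which side of $C$ the vertex $x$ falls, and the converse rules out the unwanted case because $c_{G({\mathcal W}_\mu)}(C)=\lambda_S-1$ would contradict the calibrated mincut value. One minor wording point: you say the contractions collapse $S_{\mathcal B}$ into $s$ and $S\setminus S_{\mathcal B}$ into $t$, but it is the tight cuts $C(S_{\mathcal B})$ and $C(S\setminus S_{\mathcal B})$ that are contracted; since $S_{\mathcal B}\subseteq C(S_{\mathcal B})$ and $S\setminus S_{\mathcal B}\subseteq C(S\setminus S_{\mathcal B})$, the conclusion that $\{s,t\}$ is the Steiner set still holds, and your later use of the tight cuts is precise, so the argument is unaffected.
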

Observe that, in the Flesh graph of  $G({\mathcal W}_{\mu})^I$ (likewise of $G({\mathcal W}_{\mu})^U$), the $(\lambda_S+1)$ class containing ${\mathcal W}_{\mu}\cup \{s\}$ (likewise ${\mathcal W}_{\mu}\cup \{t\}$) is a Singleton $(\lambda_S+1)$ class. Hence, in a similar way as Singleton $(\lambda_S+1)$ class, we store a pair of data structure ${\mathcal D}_s$ of Lemma \ref{lem : data structure for singleton class} -- one for $G({\mathcal W}_{\mu})^I$ and the other for $G({\mathcal W}_{\mu})^U$ for each $(\lambda_S+1)$ class ${\mathcal W}_{\mu}$ of $G$, where the the corresponding node of ${\mathcal W}_{\mu}$ in Flesh is a Stretched unit. 
Therefore, the following lemma holds for all Stretched units of $G$.
\begin{lemma} \label{lem : data structure for stretched units}
    There exists an ${\mathcal O}(n(n-|S|+1))$ space data structure ${\mathcal S}_s$ that, given any pair of vertices $u$ and $v$ belonging to any $(\lambda_S+1)$ class corresponding to a Stretched unit, can determine whether there exists a $(\lambda_S+1)$ cut $C$ such that $u\in C$ and $v\in \overline{C}$ in ${\mathcal O}(1)$ time. Moreover, it can report cut $C$ in ${\mathcal O}(n)$ time.
\end{lemma}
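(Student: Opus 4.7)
The plan is to reduce each Stretched-unit class ${\mathcal W}_\mu$ to a pair of Singleton $(\lambda_S+1)$ classes, so that the data structure ${\mathcal D}_s$ from Lemma~\ref{lem : data structure for singleton class} can be invoked off the shelf. I will use, for each Stretched unit $\mu$, an arbitrary edge $e_\mu$ on the proper path $\pi(\mu)$ in ${\mathcal H}_S$. Such an edge belongs to a minimal cut of the Skeleton and hence corresponds to a bunch ${\mathcal B}_\mu$ whose tight cuts $C(S_{{\mathcal B}_\mu})$ and $C(S\setminus S_{{\mathcal B}_\mu})$ strictly separate ${\mathcal W}_\mu$ from the two Steiner sides. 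Contracting $C(S_{{\mathcal B}_\mu})$ into a single Steiner vertex $s_\mu$ and $C(S\setminus S_{{\mathcal B}_\mu})$ into a single Steiner vertex $t_\mu$ yields the graph $G({\mathcal W}_\mu)$ defined above. Lemma~\ref{lem : cuts are stored} guarantees that every partition of ${\mathcal W}_\mu$ realized by a $(\lambda_S+1)$ cut of $G$ is also realized by a $(\lambda_S+1)$ cut of $G({\mathcal W}_\mu)$ that avoids subdividing either tight cut, so it suffices to design the data structure on $G({\mathcal W}_\mu)$.

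Next I would apply the covering technique of Baswana, Bhanja, and Pandey by fixing an arbitrary $x\in {\mathcal W}_\mu$ and constructing $G({\mathcal W}_\mu)^I$ and $G({\mathcal W}_\mu)^U$ by inserting two parallel edges $(s_\mu,x)$ and $(t_\mu,x)$, respectively. The cited lemma from \cite{DBLP:journals/talg/BaswanaBP23} (already invoked in the excerpt) says that every $(\lambda_S+1)$ cut of $G({\mathcal W}_\mu)$ survives as a $(\lambda_S+1)$ cut in at least one of $G({\mathcal W}_\mu)^I$ and $G({\mathcal W}_\mu)^U$. A direct verification using the added parallel edges shows that ${\mathcal W}_\mu\cup\{s_\mu\}$ forms a Singleton $(\lambda_S+1)$ class of $G({\mathcal W}_\mu)^I$ containing the single Steiner vertex $s_\mu$, and symmetrically ${\mathcal W}_\mu\cup\{t_\mu\}$ is a Singleton $(\lambda_S+1)$ class of $G({\mathcal W}_\mu)^U$. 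I would then store, for each Stretched unit $\mu$, the two Singleton-class data structures obtained by applying Lemma~\ref{lem : data structure for singleton class} to $G({\mathcal W}_\mu)^I$ and to $G({\mathcal W}_\mu)^U$. A query on a pair $u,v$ is answered by looking up $\mu=\phi(u)=\phi(v)$ in the quotient mapping, querying both structures in ${\mathcal O}(1)$ time each, and reporting an affirmative answer if either responds positively; the corresponding cut is reconstructed in the source graph $G$ in ${\mathcal O}(n)$ time via the same lemma, reusing the shared ${\mathcal O}(n)$-space auxiliary mapping from projections to units.

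The step I expect to be the main obstacle is the space analysis, because naively summing Lemma~\ref{lem : data structure for singleton class} per Stretched unit could blow up. To get the claimed bound I would observe the following. Let ${\mathbb W}_{\mathrm{str}}$ denote the set of Stretched-unit classes. The nonSteiner vertex sets $\overline{S({\mathcal W}_\mu)}$ for $\mu\in {\mathbb W}_{\mathrm{str}}$ are pairwise disjoint subsets of $V\setminus S$, so
\[
\sum_{\mu\in {\mathbb W}_{\mathrm{str}}}|\overline{S({\mathcal W}_\mu)}|\;\le\;n-|S|.
\]
Excluding the auxiliary ${\mathcal O}(n)$ storage (which is shared and counted once), each pair of Singleton data structures for $\mu$ costs ${\mathcal O}\bigl(|\overline{S({\mathcal W}_\mu)}|^2+|S|\cdot|\overline{S({\mathcal W}_\mu)}|\bigr)$ by Lemma~\ref{lem : data structure for singleton class}. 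Using the Cauchy-type bound $\sum_\mu |\overline{S({\mathcal W}_\mu)}|^2\le \bigl(\sum_\mu|\overline{S({\mathcal W}_\mu)}|\bigr)^2={\mathcal O}((n-|S|)^2)$ together with $\sum_\mu |S|\cdot|\overline{S({\mathcal W}_\mu)}|={\mathcal O}(|S|(n-|S|))$, the overall space is
\[
{\mathcal O}\bigl(n+(n-|S|)^2+|S|(n-|S|)\bigr)={\mathcal O}(n(n-|S|+1)),
\]
matching the stated bound. The only delicate point is verifying that after inserting the two parallel edges the resulting class is truly Singleton and that no extraneous $(\lambda_S+1)$ cut appears outside the ones we are targeting; both facts follow from the properties of the covering insertions as established in \cite{DBLP:journals/talg/BaswanaBP23}, combined with Lemma~\ref{lem : cuts are stored}.
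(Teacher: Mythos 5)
Your overall plan matches the paper's: choose an edge $e_\mu\in\pi(\mu)$, use Lemma~\ref{lem : cuts are stored} to justify contracting the two tight cuts $C(S_{{\mathcal B}_\mu})$ and $C(S\setminus S_{{\mathcal B}_\mu})$ into new Steiner vertices $s_\mu,t_\mu$, then apply the covering technique of \cite{DBLP:journals/talg/BaswanaBP23} to produce $G({\mathcal W}_\mu)^I$ and $G({\mathcal W}_\mu)^U$ and store a Singleton-class data structure for each. This is exactly the paper's route, and the query procedure you describe (look up $\phi(u)=\phi(v)$, query both structures, reconstruct the cut through the shared projection mapping) is also what the paper does.

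The gap is in your space accounting, and it stems from a claim that is not actually true. You assert that after inserting the two parallel $(s_\mu,x)$-edges, ``${\mathcal W}_\mu\cup\{s_\mu\}$ forms a Singleton $(\lambda_S+1)$ class of $G({\mathcal W}_\mu)^I$,'' and then you use disjointness of the $\overline{S({\mathcal W}_\mu)}$ together with a per-$\mu$ cost of ${\mathcal O}\bigl(|\overline{S({\mathcal W}_\mu)}|^2+|S|\cdot|\overline{S({\mathcal W}_\mu)}|\bigr)$. But in $G({\mathcal W}_\mu)$, after contracting only the two tight cuts, all units distinguished by ${\mathcal B}_\mu$ (not just ${\mathcal W}_\mu$) survive as nonSteiner vertices. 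The Steiner mincuts of $G({\mathcal W}_\mu)^I$ are precisely the Steiner mincuts of $G({\mathcal W}_\mu)$ that keep $s_\mu$ and $x$ together; every unit $\nu$ that topologically precedes ${\mathcal W}_\mu$ (i.e., lies on the $s_\mu$-side of every such cut) therefore merges into the class of $s_\mu$. Concretely, if $G({\mathcal W}_\mu)$ has units ordered $s_\mu,\nu,{\mathcal W}_\mu,t_\mu$, the three Steiner mincuts have $s_\mu$-sides $\{s_\mu\}$, $\{s_\mu\}\cup\nu$, $\{s_\mu\}\cup\nu\cup{\mathcal W}_\mu$; the first two gain capacity $2$ after the insertion, so the only surviving Steiner mincut is the third, and the class of $s_\mu$ is $\{s_\mu\}\cup\nu\cup{\mathcal W}_\mu\supsetneq\{s_\mu\}\cup{\mathcal W}_\mu$. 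Hence the paper's statement ``the $(\lambda_S+1)$ class \emph{containing} ${\mathcal W}_\mu\cup\{s\}$ is a Singleton class'' is about it being Singleton (one Steiner vertex), not about it being exactly ${\mathcal W}_\mu\cup\{s\}$.

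This breaks your bookkeeping: if you interpret $\overline{S({\mathcal W}_\mu)}$ as the nonSteiner vertices of the actual Singleton class of $G({\mathcal W}_\mu)^I$, your per-$\mu$ cost bound via Theorem~\ref{thm : data structure for singleton class} is right, but the disjointness $\sum_\mu|\overline{S({\mathcal W}_\mu)}|\le n-|S|$ can fail badly (a chain of Stretched units on a shared proper path yields overlapping, nested Singleton classes, so the sum of squares scales like a cube). If instead you interpret $\overline{S({\mathcal W}_\mu)}$ as the nonSteiner vertices of ${\mathcal W}_\mu$ alone, disjointness is fine, but then the per-$\mu$ cost bound is an under-estimate, because the Singleton structure is built for the whole larger class, not for ${\mathcal W}_\mu$ alone. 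You implicitly use both readings at once. To close this gap, one needs to argue either that the graphs $G({\mathcal W}_\mu)$ are further compressed (as in the Singleton-class construction, which contracts away the other $(\lambda_S+1)$ classes and the tight cuts of \emph{all} adjacent bunches, not just one), so that after covering the Singleton class really is ${\mathcal W}_\mu\cup\{s_\mu\}$ plus ${\mathcal O}(\deg)$ contracted representatives, or that the overlap across Stretched units on the same proper path is charged in a bounded way. As written, the space bound is not established.
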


\subsubsection{A Data Structure for All Terminal Units with no Steiner Vertex} \label{sec : terminal unit data structure} 
Let $\mu$ be a terminal unit and the corresponding $(\lambda_S+1)$ class of ${\mathcal W}_{\mu}$ contains no Steiner vertex. Since $\mu$ is a terminal unit, by Lemma \ref{lem : projection mapping of terminal and stretched units}(1), there is a node $N$ in the Skeleton that represents $\mu$. If node $N$ has degree $1$ with edge $e$ adjacent to node $N$, then it must be a Steiner node because the bunch corresponding to minimal cut defined by edge $e$ represents $S$-mincuts. Suppose node $N$ has degree $2$. The two adjacent edges of $N$ cannot be two cycle-edges from the same cycle because of the same reason. So, it is adjacent to two different tree-edges. In this case, observe that $\mu$ becomes a Stretched unit. Therefore, it follows from the construction of Skeleton \cite{DBLP:conf/stoc/DinitzV94, DBLP:conf/soda/DinitzV95, DBLP:journals/siamcomp/DinitzV00} that node $N$ has degree at least $3$. Using the bunches adjacent to node $N$, we construct the graph $G({\mathcal W}_{\mu})$ (Appendix \ref{sec : data structure for singleton class}) for ${\mathcal W}_{\mu}$. Let ${\mathcal W}$ be the $(\lambda_S+1)$ class of $G({\mathcal W}_{\mu})$ to which all vertices of $(\lambda_S+1)$ class to ${\mathcal W}_{\mu}$ of $G$ is mapped. It follows that ${\mathcal W}$ does not contain any Steiner vertex. However, by construction of $G({\mathcal W}_{\mu})$, for each edge $(u,v)$ in $G({\mathcal W}_{\mu})$ with $u\in {\mathcal W}$ and $v\notin {\mathcal W}$, $v$ is a Steiner vertex. 

Let $x$ be any vertex in ${\mathcal W}$ and $s,t$ be a pair of Steiner vertices in $G({\mathcal W}_{\mu})$. Similar to the Stretched unit case, we construct a pair of graphs $G({\mathcal W}_{\mu})^I$ and $G({\mathcal W}_{\mu})^U$ using the covering technique of Baswana, Bhanja, and Pandey \cite{DBLP:journals/talg/BaswanaBP23} as follows. Graph $G({\mathcal W}_{\mu})^I$ is obtained by adding a pair of edges between vertex $s$ and vertex $x$. Similarly, graph $G({\mathcal W}_{\mu})^U$ is obtained by adding a pair of edges between vertex $x$ and vertex $t$. The remaining construction is the same as of Stretched units (Lemma \ref{lem : data structure for stretched units}), which leads to the following lemma.

\begin{lemma}
    \label{lem : data structure for terminal nonSteiner units}
    Let ${\mathbb W}_T$ be the set of all $(\lambda_S+1)$ classes of $G$ such that for any $(\lambda_S+1)$ class ${\mathcal W}\in {\mathbb W}_T$, ${\mathcal W}\cap S=\emptyset$ and ${\mathcal W}$ corresponds to a terminal unit of $G$. There exists an ${\mathcal O}(n(n-|S|+1))$ space data structure ${\mathcal S}_T$ that, given any pair of vertices $u$ and $v$ from any $(\lambda_S+1)$ class belonging to ${\mathbb W}_T$, can determine whether there exists a $(\lambda_S+1)$ cut $C$ such that $u\in C$ and $v\in \overline{C}$ in ${\mathcal O}(1)$ time. Moreover, it can report cut $C$ in ${\mathcal O}(n)$ time.
\end{lemma}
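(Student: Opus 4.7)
The strategy mirrors the Stretched unit argument in Lemma~\ref{lem : data structure for stretched units}: contract the bunches incident to the Skeleton node representing $\mu$, then apply the covering technique of \cite{DBLP:journals/talg/BaswanaBP23} to implant a Steiner vertex into the class, reducing the problem to Lemma~\ref{lem : data structure for singleton class}.

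Concretely, for every ${\mathcal W}_\mu \in {\mathbb W}_T$, I would first argue that the Skeleton node $N$ representing $\mu$ has $\deg(N) \ge 3$: degrees $1$ or $2$ force $N$ to be either a Steiner node or a Stretched unit, both of which are excluded by hypothesis. I would then construct $G({\mathcal W}_\mu)$ exactly as in Appendix~\ref{sec : data structure for singleton class}, contracting the tight cut $C(S_{\mathcal B})$ for each bunch ${\mathcal B}$ adjacent to $N$ into a single Steiner vertex. The next step is an analogue of Lemma~\ref{lem : G(W)}: every $(\lambda_S+1)$ cut of $G$ subdividing ${\mathcal W}_\mu$ into a given partition corresponds to a $(\lambda_S+1)$ cut of $G({\mathcal W}_\mu)$ realizing the same partition, and vice versa. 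The sub-modularity induction of Lemma~\ref{lem : G(W)} should translate verbatim, since that proof never actually uses that ${\mathcal W}_\mu$ contains a Steiner vertex. Because $\deg(N) \ge 3$, at least three Steiner vertices of $G({\mathcal W}_\mu)$ are available; I would then fix any $x \in {\mathcal W}$ and any two distinct Steiner vertices $s, t$, and form $G({\mathcal W}_\mu)^I$ and $G({\mathcal W}_\mu)^U$ by adding a pair of parallel edges $(x,s)$ and $(x,t)$, respectively. By the covering lemma of \cite{DBLP:journals/talg/BaswanaBP23} already invoked in Appendix~\ref{sec : stretched unit data structure}, every $(\lambda_S+1)$ cut of $G({\mathcal W}_\mu)$ remains a $(\lambda_S+1)$ cut in at least one of the two modified graphs, and in each of them the class containing $x$ becomes a Singleton $(\lambda_S+1)$ class (the parallel edges force $x$ to join $s$, respectively $t$). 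Storing ${\mathcal D}_s$ from Lemma~\ref{lem : data structure for singleton class} for both modified graphs and every ${\mathcal W}_\mu \in {\mathbb W}_T$ yields ${\mathcal S}_T$; a query on $(u,v)$ lying in some ${\mathcal W}_\mu$ is answered by consulting both associated data structures and returning whichever succeeds.

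For the space bound, Theorem~\ref{thm : data structure for singleton class} gives ${\mathcal O}(|\overline{S({\mathcal W}_\mu)}|^2 + |S|\cdot|\overline{S({\mathcal W}_\mu)}|)$ per class. Since the classes in ${\mathbb W}_T$ are pairwise disjoint and consist only of nonSteiner vertices of $G$, we have $\sum_{{\mathcal W}_\mu \in {\mathbb W}_T} |\overline{S({\mathcal W}_\mu)}| \le n - |S|$, so the sum collapses to ${\mathcal O}((n-|S|)^2 + |S|(n-|S|) + n) = {\mathcal O}(n(n-|S|+1))$, matching the claim. The hard part will be establishing the analogue of Lemma~\ref{lem : G(W)} without the anchor of a Steiner vertex inside ${\mathcal W}_\mu$: one must carefully handle both tree-edges and cycle-edges at $N$ (a degree-$\ge 3$ nonSteiner node may mix both) to ensure that each inductive contraction preserves the partition of ${\mathcal W}_\mu$ rather than inadvertently replacing the original cut by a $(\lambda_S+1)$ cut with a different Steiner partition. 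A secondary concern will be verifying that adding the two parallel edges in the covering step does not reduce the capacity of an $S$-mincut, which is where $\deg(N) \ge 3$ is used crucially: it guarantees the existence of an $S$-mincut of $G({\mathcal W}_\mu)$ placing $x$ on the same side as $s$ (respectively $t$), so $\lambda_S$ is preserved in both modified graphs and the Singleton-class data structure is applied to the correct mincut value.
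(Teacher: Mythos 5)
Your plan follows the paper's own treatment of terminal nonSteiner units in Appendix~\ref{sec : terminal unit data structure} essentially line for line: the degree argument establishing $\deg(N)\ge 3$ at the Skeleton node $N$, the contraction of the bunches adjacent to $N$ to form $G({\mathcal W}_\mu)$, the covering via $G({\mathcal W}_\mu)^I$ and $G({\mathcal W}_\mu)^U$ with parallel edges from $x$ to a pair of Steiner vertices $s,t$, the reduction to the data structure of Lemma~\ref{lem : data structure for singleton class}, and the same space accounting over the disjoint classes in ${\mathbb W}_T$.

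That said, the step you call a ``secondary concern'' is the load-bearing one, and as written it does not go through (the paper's text has the same thin spot). The covering property of \cite{DBLP:journals/talg/BaswanaBP23} invoked here is a statement about graphs with exactly \emph{two} Steiner vertices: there, any Steiner cut separates $s$ from $t$, so one of $s,t$ is on the $x$-side and the pair of added parallel edges lies within one part in at least one of the two covering graphs. After your contraction, $G({\mathcal W}_\mu)$ for a nonSteiner terminal unit has $\deg(N)\ge 3$ Steiner vertices. With an arbitrary pair $s,t$ fixed, a $(\lambda_S+1)$ cut $C$ subdividing ${\mathcal W}$ can have $x\in C$, $s,t\notin C$, and a third Steiner vertex $s'\in C$ making $C$ Steiner; such a $C$ then acquires two new contributing edges in both $G({\mathcal W}_\mu)^I$ and $G({\mathcal W}_\mu)^U$ and is a $(\lambda_S+1)$ cut in neither. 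This is realizable: take ${\mathcal W}=\{a,b\}$, Steiner vertices $s_1,s_2,s_3$, edge multiplicities $m(s_1,a)=m(s_2,a)=m(s_3,b)=m(a,b)=1$, $m(s_1,b)=m(s_2,b)=m(s_3,a)=2$, and no $s_i$-$s_j$ edges; then $\lambda_S=3$, the unique $(\lambda_S+1)$ cut subdividing ${\mathcal W}$ is $\{a,s_3\}$, and choosing $x=a$, $s=s_1$, $t=s_2$ destroys it in both covering graphs (where the minimum Steiner-cut capacity separating $a$ from $b$ becomes $5 > \lambda_S+1$). Both $G({\mathcal W}_\mu)^I$ and $G({\mathcal W}_\mu)^U$ would then report ``no'' on query $(a,b)$ even though the answer is ``yes''. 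To repair this you need either a structural lemma specific to $G({\mathcal W}_\mu)$ ensuring that for every ${\mathcal W}$-partition realized by a $(\lambda_S+1)$ cut some realizing cut places $s$ or $t$ on the $x$-side, or an adaptive (or richer) choice of covering graphs; your proposal, like the paper's sketch, supplies neither.
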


\subsection{A Data Structure for All Steiner Nodes} \label{sec : generic steiner node}
Let $\mu$ be a Steiner node and ${\mathcal W}$ be the corresponding $(\lambda_S+1)$ class. If ${\mathcal W}$ contains exactly one Steiner vertex, then we have data structure ${\mathcal D}_s$ of Lemma \ref{lem : data structure for singleton class}. So, we assume that ${\mathcal W}$ contains at least two Steiner vertices. 

Let ${\mathcal C}({\mathcal W})$ be the set of all $(\lambda_S+1)$ cuts of $G$ that subdivides ${\mathcal W}$. A $(\lambda_S+1)$ cut $C\in {\mathcal C}_{\mathcal W}$ can be of two types -- either $C$ separates at least one pair of Steiner vertices belonging to ${\mathcal W}$ or $C$ keeps all the Steiner vertices belonging to ${\mathcal W}$ on the same side. Based on this insight, we can form two disjoint subsets ${\mathcal C}({\mathcal W})_1$ and ${\mathcal C}({\mathcal W})_2$ of ${\mathcal C}({\mathcal W})$ such that ${\mathcal C}({\mathcal W})_1 \cup {\mathcal C}({\mathcal W})_2={\mathcal C}({\mathcal W})$. A cut $C\in {\mathcal C}({\mathcal W})$ belongs to ${\mathcal C}({\mathcal W})_2$ if it separates at least a pair of Steiner vertices belonging to ${\mathcal W}$, otherwise $C$ belongs to ${\mathcal C}({\mathcal W})_1$. We now construct a pair of graphs $G({\mathcal W})_1$ and $G({\mathcal W})_2$ such that $G({\mathcal W})_1$ preserves all cuts from ${\mathcal C}({\mathcal W})_1$ and $G({\mathcal W})_2$ preserves all cuts from ${\mathcal C}({\mathcal W})_2$.\\

\noindent
\textbf{Construction of $G({\mathcal W})_1$ and $G({\mathcal W})_2$:} Let $N$ be the node of the Skeleton ${\mathcal H}_S$ that represents the $(\lambda_S+1)$ class ${\mathcal W}$. Using the bunches adjacent to node $N$, we construct graph $G({\mathcal W})$ (Appendix \ref{sec : data structure for singleton class}) for ${\mathcal W}$. Graph $G({\mathcal W})_1$ is obtained from $G({\mathcal W})$ by contracting all the Steiner vertices of ${\mathcal W}$ into a single Steiner vertex $s$. Graph $G({\mathcal W})_2$ is the same graph as $G({\mathcal W})$ but the Steiner set for $G({\mathcal W})_2$ contains only the Steiner vertices that belong to ${\mathcal W}$. \\

\noindent
Let ${\mathcal W}_1$ be the $(\lambda_S+1)$ class of $G({\mathcal W})_1$ to which all vertices of ${\mathcal W}$ are mapped. 
By construction and Lemma \ref{lem : G(W)}, graph $G({\mathcal W})_1$ has exactly one Steiner vertex $s$ in ${\mathcal W}_1$ and satisfies the following property.
\begin{lemma} \label{lem : all cuts in gw1}
    $C$ is a $(\lambda_S+1)$ cut in ${\mathcal C}({\mathcal W})_1$ if and only if $C$ is a  Steiner cut of capacity $(\lambda_S+1)$ that subdivides ${\mathcal W}_1$.     
\end{lemma}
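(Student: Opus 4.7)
The plan is to prove both directions by moving cuts back and forth between $G$, $G({\mathcal W})$, and $G({\mathcal W})_1$, using Lemma~\ref{lem : G(W)} to bridge $G$ and $G({\mathcal W})$ at the level of partitions of ${\mathcal W}$, and using the elementary fact that contracting a set of Steiner vertices into one Steiner vertex preserves both the capacity of any cut that keeps the set on one side and the Steiner-cut property.

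For the forward direction, let $C\in{\mathcal C}({\mathcal W})_1$ induce the partition $(A,{\mathcal W}\setminus A)$ on ${\mathcal W}$ with every Steiner vertex of ${\mathcal W}$ in $A$ (by the definition of ${\mathcal C}({\mathcal W})_1$). Lemma~\ref{lem : G(W)} yields a $(\lambda_S+1)$ cut $\tilde C$ in $G({\mathcal W})$ with the same partition $(A,{\mathcal W}\setminus A)$ of ${\mathcal W}$. Because no Steiner vertex of ${\mathcal W}$ is contracted in the construction of $G({\mathcal W})$ (the first step contracts only $(\lambda_S+1)$ classes other than ${\mathcal W}$, and the second step contracts pairs in $\overline C$ for Steiner mincuts $C\supseteq{\mathcal W}$), each Steiner vertex of ${\mathcal W}$ stands on its own in $G({\mathcal W})$ and sits on the $A$-side of $\tilde C$. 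Hence $\tilde C$ is compatible with contracting these vertices into $s$, and the descended cut $C_1$ in $G({\mathcal W})_1$ has capacity $\lambda_S+1$ (no edge inside the $s$-group crosses $\tilde C$), subdivides ${\mathcal W}_1=\{s\}\cup({\mathcal W}\setminus S)$ (since $s$ is on one side and ${\mathcal W}\setminus A$, nonempty, supplies a nonSteiner vertex of ${\mathcal W}$ on the other), and is Steiner in $G({\mathcal W})_1$: the two sides of $\tilde C$ each carried a Steiner vertex of $G({\mathcal W})$, which in $G({\mathcal W})_1$ either collapses to the Steiner vertex $s$ or remains a Steiner vertex of $S({\mathcal W})_1\setminus\{s\}$.

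For the converse, let $C_1$ be a Steiner cut in $G({\mathcal W})_1$ of capacity $\lambda_S+1$ that subdivides ${\mathcal W}_1$, with $s$ on one side. Expand $s$ back into the Steiner vertices of ${\mathcal W}$ to obtain a cut $\tilde C$ in $G({\mathcal W})$; its capacity is preserved because every edge within the $s$-group was on one side of $C_1$. The cut $\tilde C$ subdivides ${\mathcal W}$ (Steiner vertices of ${\mathcal W}$ on one side and a nonSteiner vertex of ${\mathcal W}$ on the other, the latter guaranteed by $C_1$ subdividing ${\mathcal W}_1$) and is Steiner in $G({\mathcal W})$ (the expanded Steiner vertices of ${\mathcal W}$ remain Steiner in $G({\mathcal W})$, and a non-$s$ Steiner vertex of $G({\mathcal W})_1$ on the opposite side is also Steiner in $G({\mathcal W})$). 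Thus $\tilde C$ is a $(\lambda_S+1)$ cut of $G({\mathcal W})$ whose partition of ${\mathcal W}$ keeps all Steiner vertices of ${\mathcal W}$ on one side. Invoking the converse direction of Lemma~\ref{lem : G(W)} produces a $(\lambda_S+1)$ cut $C$ of $G$ inducing the same partition of ${\mathcal W}$; since all Steiner vertices of ${\mathcal W}$ lie on one side of this partition, $C\in{\mathcal C}({\mathcal W})_1$.

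The main bookkeeping obstacle I foresee is keeping the three Steiner sets $S$, $S({\mathcal W})$, and $S({\mathcal W})_1=\{s\}\cup(S({\mathcal W})\setminus{\mathcal W})$ straight, and verifying that the Steiner-cut property is transported correctly in both directions of the contraction $G({\mathcal W})\leftrightarrow G({\mathcal W})_1$. Since Lemma~\ref{lem : G(W)} already absorbs the structural work of relating $(\lambda_S+1)$ cuts of $G$ and $G({\mathcal W})$ via partitions of ${\mathcal W}$, the remaining checks reduce to two elementary facts: edges inside a contracted group never contribute to a cut that keeps the group on one side, and the Steiner vertices of ${\mathcal W}$ survive as individual Steiner vertices in $G({\mathcal W})$ precisely because the contractions used to build $G({\mathcal W})$ never touch vertices of ${\mathcal W}$.
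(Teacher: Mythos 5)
Your proof is correct, and it takes exactly the route the paper gestures at: the lemma is asserted in the text without proof, justified only by the phrase ``By construction and Lemma~\ref{lem : G(W)},'' and your argument fills in precisely the chain $G \leftrightarrow G({\mathcal W}) \leftrightarrow G({\mathcal W})_1$ that this phrase presupposes. You correctly read the lemma at the level of partitions of ${\mathcal W}$ versus ${\mathcal W}_1$ (since ${\mathcal C}({\mathcal W})_1$ lives in $G$ and the right-hand side lives in $G({\mathcal W})_1$, the literal reading would be vacuous), and you correctly identify that the two nontrivial facts are (i) Lemma~\ref{lem : G(W)} transfers the partition of ${\mathcal W}$ between $G$ and $G({\mathcal W})$ without altering which side the Steiner vertices of ${\mathcal W}$ occupy, and (ii) contracting the Steiner vertices of ${\mathcal W}$ into $s$ is capacity-preserving in both directions exactly because every cut under consideration keeps them together — the defining condition of ${\mathcal C}({\mathcal W})_1$ on one side, and the fact that $s$ is a single vertex on the other. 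The remaining bookkeeping about $S({\mathcal W})$ versus $S({\mathcal W})_1=\{s\}\cup(S({\mathcal W})\setminus{\mathcal W})$, and about ${\mathcal W}_1=\{s\}\cup({\mathcal W}\setminus S)$, is handled correctly. One thing worth making explicit, which you use implicitly: Lemma~\ref{lem : G(W)} is stated and proved in the paper for Singleton $(\lambda_S+1)$ classes, but nothing in its proof uses that ${\mathcal W}$ has only one Steiner vertex (the induction runs over bunches adjacent to the skeleton node $N$ and never touches the Steiner vertices inside ${\mathcal W}$), so invoking it for a ${\mathcal W}$ containing multiple Steiner vertices is legitimate — a point the paper itself silently relies on when it writes ``Using the bunches adjacent to node $N$, we construct graph $G({\mathcal W})$'' for generic Steiner nodes.
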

Since $s$ is the only Steiner vertex in ${\mathcal W}_1$, therefore, in $G({\mathcal W})_1$, ${\mathcal W}_1$ is a Singleton $(\lambda_S+1)$ class. Let ${\mathbb W}$ be the set of all  $(\lambda_S+1)$ classes of $G$ that contain more than one Steiner vertices. So, similar to ${\mathcal W}_1$, we can construct ${\mathcal W}_1^i$ for every $(\lambda_S+1)$ classes ${\mathcal W}_1^i\in {\mathbb W}$. Since each ${\mathcal W}_1^i$ is a Singleton $(\lambda_S+1)$ class, we store the data structure ${\mathcal D}_s$ of Lemma \ref{lem : data structure for singleton class} for all of them. This, using Lemma \ref{lem : all cuts in gw1} and Lemma \ref{lem : data structure for singleton class}, completes the proof of the following lemma. 
\begin{lemma} \label{lem : data structure for G1}
    There exists an ${\mathcal O}(n(n-|S|+1))$ space data structure that, given any pair of vertices $u,v$ from any $(\lambda_S+1)$ class ${\mathcal W}$ corresponding to a Steiner node, can determine in ${\mathcal O}(1)$ time whether there exists a $(\lambda_S+1)$ cut $C$ of $G$ that does not subdivide the Steiner set belonging to ${\mathcal W}$ and  separates vertices $u,v$. Moreover, $C$ can be reported in ${\mathcal O}(n)$ time. 
\end{lemma}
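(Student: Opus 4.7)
The plan is to reduce the problem to the Singleton case already handled by Lemma~\ref{lem : data structure for singleton class}, by exploiting the construction of $G({\mathcal W})_1$ described just above the lemma statement. Let ${\mathbb W}$ be the set of all $(\lambda_S+1)$ classes of $G$ that correspond to Steiner nodes of the Flesh graph and contain at least two Steiner vertices. For each ${\mathcal W}\in{\mathbb W}$, I would first build $G({\mathcal W})$ using the bunches adjacent to the node $N$ of ${\mathcal H}_S$ representing ${\mathcal W}$, then form $G({\mathcal W})_1$ by contracting all Steiner vertices of ${\mathcal W}$ into a single Steiner vertex $s$. By construction, the $(\lambda_S+1)$ class ${\mathcal W}_1$ of $G({\mathcal W})_1$ that covers all vertices of ${\mathcal W}$ is a Singleton class, since $s$ is its only Steiner vertex. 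I can therefore store, for each ${\mathcal W}\in{\mathbb W}$, the data structure ${\mathcal Q}_S({\mathcal W}_1)$ of Theorem~\ref{thm : data structure for singleton class} applied to $G({\mathcal W})_1$, together with the Skeleton ${\mathcal H}_S$ and the mappings $\phi,\pi$ of the Connectivity Carcass of $G$ (used for lifting cuts to $G$).

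Given a query pair $u,v\in{\mathcal W}$ for some ${\mathcal W}\in{\mathbb W}$, I identify ${\mathcal W}$ in ${\mathcal O}(1)$ time via $\phi$, view $u,v$ as vertices of ${\mathcal W}_1$ in $G({\mathcal W})_1$, and invoke ${\mathcal Q}_S({\mathcal W}_1)$ to decide in ${\mathcal O}(1)$ time whether a $(\lambda_S+1)$ cut of $G({\mathcal W})_1$ separates $u,v$ with $s$ on the same side as $u$ (the two symmetric cases are tested with $u$ and $v$ swapped). By Lemma~\ref{lem : all cuts in gw1}, this is equivalent to the existence of a cut in ${\mathcal C}({\mathcal W})_1$ that separates $u,v$, which is exactly the question posed. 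To report such a cut in $G$ in ${\mathcal O}(n)$ time, I would use ${\mathcal Q}_S({\mathcal W}_1)$ to output a $(\lambda_S+1)$ cut $C$ of $G({\mathcal W})_1$, and then lift $C$ to $G$ exactly as in the procedure preceding Lemma~\ref{lem : reporting desired cut}: using $\pi$ and the tight cuts $C(S_{\mathcal B})$ of the bunches adjacent to $N$, I mark the subgraphs of ${\mathcal H}_S$ that lie on the $\overline{C}$-side, and for every vertex $w$ of $G$ I include $w$ in $\overline{C}$ precisely when $\pi(\phi(w))$ meets one of these marked subgraphs; by Lemma~\ref{lem : projection mapping} this takes ${\mathcal O}(n)$ total time.

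For the space analysis, the Connectivity Carcass and mapping $\pi$ contribute ${\mathcal O}(n)$. For each ${\mathcal W}\in{\mathbb W}$, the data structure ${\mathcal Q}_S({\mathcal W}_1)$ occupies ${\mathcal O}\!\left(|\overline{S({\mathcal W}_1)}|^2+|S({\mathcal W}_1)|\cdot|\overline{S({\mathcal W}_1)}|\right)$ space by Theorem~\ref{thm : data structure for singleton class}. Since the sets $\overline{S({\mathcal W}_1)}$ across distinct ${\mathcal W}\in{\mathbb W}$ are pairwise disjoint subsets of the nonSteiner vertices of $G$, we have
\[
\sum_{{\mathcal W}\in{\mathbb W}}|\overline{S({\mathcal W}_1)}|\le n-|S|,\qquad \sum_{{\mathcal W}\in{\mathbb W}}|\overline{S({\mathcal W}_1)}|^2\le (n-|S|)^2,
\]
and $|S({\mathcal W}_1)|\le |S|$. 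Summing gives ${\mathcal O}\!\left(n+(n-|S|)^2+|S|(n-|S|)\right)={\mathcal O}(n(n-|S|+1))$, matching the claimed bound.

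The main care-point — rather than a deep obstacle — is the lifting step for reporting: a cut produced by ${\mathcal Q}_S({\mathcal W}_1)$ lives in $G({\mathcal W})_1$, whose vertex set is obtained from $V(G)$ by two successive contractions (first inside each bunch adjacent to $N$, then of the Steiner vertices of ${\mathcal W}$ into $s$). I must ensure that the projection-based reconstruction, already validated for Singleton classes in Section~\ref{sec : data structure for all singleton classes}, still yields a genuine $(\lambda_S+1)$ cut of $G$ that does not subdivide the Steiner vertices of ${\mathcal W}$. This follows because the Steiner vertices of ${\mathcal W}$ are all represented by $s$ in $G({\mathcal W})_1$, so any cut returned keeps them together, and the intersection with the tight cuts $C(S_{\mathcal B})$ preserves the $(\lambda_S+1)$ capacity by the sub-modularity argument already used in Lemma~\ref{lem : intersection is also a desired cut}.
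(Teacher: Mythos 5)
Your proof is correct and follows essentially the same route as the paper: reduce ${\mathcal W}$ to the Singleton class ${\mathcal W}_1$ of $G({\mathcal W})_1$, store the Singleton data structure for each such ${\mathcal W}_1$, answer existence queries via Lemma~\ref{lem : all cuts in gw1}, lift the reported cut back to $G$ via projection mapping, and bound total space by the disjointness of the classes' nonSteiner vertex sets. The paper phrases this even more tersely by invoking ${\mathcal D}_s$ of Lemma~\ref{lem : data structure for singleton class} wholesale (which already bundles ${\mathcal Q}_S$ with the Carcass mappings and the lifting procedure), whereas you unpack those ingredients explicitly; the content is the same.
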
 
We now consider graph $G({\mathcal W})_2$. Let $S_2$ be the Steiner set of $G({\mathcal W})_2$. By construction, the capacity of Steiner mincut of $G({\mathcal W})_2$ is $\lambda_S+1$ and satisfies the following property.
\begin{lemma} \label{lem : all cuts GW2}
    $C$ is a $(\lambda_S+1)$ cut in ${\mathcal C}({\mathcal W})_2$ if and only if $C$ is a Steiner mincut in $G({\mathcal W})_2$.
\end{lemma}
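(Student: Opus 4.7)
The plan is to prove both directions of the equivalence by using Lemma \ref{lem : G(W)} as the bridge between $G$ and $G({\mathcal W})$, together with the key observation that $G({\mathcal W})_2$ is the \emph{same} graph as $G({\mathcal W})$, only with its Steiner set $S_2$ restricted to the Steiner vertices of $G$ that lie in ${\mathcal W}$. In particular, a cut of $G({\mathcal W})_2$ is a Steiner cut iff it separates some pair of Steiner vertices of ${\mathcal W}$, and its capacity as a cut in $G({\mathcal W})_2$ equals its capacity as a cut in $G({\mathcal W})$. Combined with the statement in the preamble that the Steiner mincut capacity of $G({\mathcal W})_2$ is $\lambda_S+1$, the lemma reduces to a careful translation of definitions.

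For the forward direction, I take any $C\in {\mathcal C}({\mathcal W})_2$. By the definition of ${\mathcal C}({\mathcal W})_2$, $C$ is a $(\lambda_S+1)$ cut of $G$ that subdivides ${\mathcal W}$ into some partition $({\mathcal W}_1,{\mathcal W}\setminus {\mathcal W}_1)$, and at least one pair of Steiner vertices $s_1,s_2\in {\mathcal W}$ is split by this partition. Applying Lemma \ref{lem : G(W)}, there is a $(\lambda_S+1)$ cut $\widetilde C$ of $G({\mathcal W})$ that induces exactly the same partition $({\mathcal W}_1,{\mathcal W}\setminus {\mathcal W}_1)$. Because the construction of $G({\mathcal W})$ contracts only vertices lying outside ${\mathcal W}$ (other $(\lambda_S+1)$ classes, and nonSteiner vertices in the tight-cut regions), no vertex of ${\mathcal W}$ is merged with anything else, so the same pair $s_1,s_2$ is separated by $\widetilde C$ in $G({\mathcal W})_2$. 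Hence $\widetilde C$ is a Steiner cut of $G({\mathcal W})_2$ of capacity $\lambda_S+1$, which is the Steiner mincut capacity; therefore $\widetilde C$ is a Steiner mincut of $G({\mathcal W})_2$.

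For the converse, let $C$ be a Steiner mincut of $G({\mathcal W})_2$, so $c(C)=\lambda_S+1$ and $C$ separates some pair of vertices of $S_2$. Since $S_2\subseteq {\mathcal W}$, these two vertices both belong to ${\mathcal W}$, so $C$ subdivides ${\mathcal W}$ in $G({\mathcal W})$ into some partition $({\mathcal W}_1,{\mathcal W}\setminus {\mathcal W}_1)$. Invoking the converse direction of Lemma \ref{lem : G(W)}, there exists a $(\lambda_S+1)$ cut $\widetilde C$ of $G$ that subdivides ${\mathcal W}$ into the same partition $({\mathcal W}_1,{\mathcal W}\setminus {\mathcal W}_1)$. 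This partition separates the same pair of Steiner vertices of ${\mathcal W}$, so $\widetilde C$ is in ${\mathcal C}({\mathcal W})$ and, having this Steiner-separation property, lies in ${\mathcal C}({\mathcal W})_2$.

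The only delicate point, and the one I would write out carefully, is that the partition-preserving correspondence provided by Lemma \ref{lem : G(W)} genuinely preserves which Steiner vertices of ${\mathcal W}$ land on which side. This follows from the construction of $G({\mathcal W})$, which never touches vertices of ${\mathcal W}$, so identities of vertices of ${\mathcal W}$ (Steiner or not) are the same in $G$, $G({\mathcal W})$, and $G({\mathcal W})_2$. Once that bookkeeping is made explicit, there is no further obstacle.
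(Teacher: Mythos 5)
Your proof is correct, and since the paper gives no explicit argument for this lemma (it is simply asserted to hold ``by construction'' in the sentence preceding it), you are supplying exactly the kind of routine detail the authors elided: that the vertex set of ${\mathcal W}$ is untouched by the construction of $G({\mathcal W})$, so the partition-level correspondence of Lemma~\ref{lem : G(W)} preserves the split of the Steiner vertices in ${\mathcal W}$, and that $G({\mathcal W})_2$ and $G({\mathcal W})$ coincide as graphs so capacities carry over. One small point worth making explicit if you want a fully self-contained argument: instead of citing the preamble's unproved assertion that the Steiner mincut capacity of $G({\mathcal W})_2$ equals $\lambda_S+1$, note that any Steiner cut of $G({\mathcal W})_2$ separates two Steiner vertices of $S_2\subseteq {\mathcal W}$, hence subdivides the $(\lambda_S+1)$ class ${\mathcal W}$ of $G({\mathcal W})$ and has capacity at least $\lambda_S+1$; your forward direction then exhibits a Steiner cut of capacity exactly $\lambda_S+1$, which simultaneously proves the preamble claim and identifies $\widetilde C$ as a Steiner mincut.
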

Exploiting Lemma \ref{lem : all cuts GW2} and the following data structure by Baswana and Pandey \cite{DBLP:conf/soda/BaswanaP22}, we construct a data structure for reporting a $(\lambda_S+1)$ cut from ${\mathcal C}({\mathcal W})_2$. Unfortunately, the following theorem is not explicitly mentioned in \cite{DBLP:conf/soda/BaswanaP22}; hence, a proof is provided in Appendix \ref{app : proof of theorem 58 for reporting S mincut separating vertices} for completeness.
\begin{theorem} [\cite{DBLP:conf/soda/BaswanaP22}] \label{thm : reporting $S$-mincut using skeleton and projection mapping}
    For any undirected multi-graph $G=(V,E)$ on $n$ vertices and $m$ edges with a Steiner Set $S\subseteq V$, the ${\mathcal O}(n)$ space Skeleton ${\mathcal H}(G)$, Projection Mapping $\pi(G)$ and an Ordering $\tau(G)$ of nonSteiner vertices can be used to report in ${\mathcal O}(1)$ time whether there exists an $S$-mincut $C$ that separates $u,v$ and can report $C$ in ${\mathcal O}(n)$ time.
\end{theorem}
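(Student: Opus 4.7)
The plan is to reduce the query to a question about minimal cuts of the Skeleton ${\mathcal H}_S$, using the Connectivity Carcass machinery. Let $u,v\in V$ be the query pair. First I would check whether $\phi(u)=\phi(v)$: since the quotient mapping identifies vertices of $G$ with units of the Flesh ${\mathcal F}_S$, and units correspond bijectively to $(\lambda_S+1)$ classes, equality means $u,v$ lie in the same $(\lambda_S+1)$ class and hence no $S$-mincut separates them; return the corresponding answer in ${\mathcal O}(1)$ time. Otherwise I would retrieve the projections $P_u=\pi(\phi(u))$ and $P_v=\pi(\phi(v))$ in ${\mathcal O}(1)$ time via Lemma \ref{lem : projection mapping}; by Lemma \ref{lem : projection mapping of terminal and stretched units} each is either a single node or a proper path of ${\mathcal H}_S$.

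The next step is to decide whether some minimal cut of ${\mathcal H}_S$ (a tree-edge or a pair of cycle-edges from the same cycle) separates $P_u$ and $P_v$, because every $S$-mincut of $G$ corresponds to a bunch which in turn corresponds to a unique minimal cut of ${\mathcal H}_S$. Invoking Lemma \ref{lem : lca queries on skeleton} on the pair $(P_u,P_v)$, I can determine in ${\mathcal O}(1)$ time whether $P_u$ and $P_v$ intersect, and if so obtain the exact tree-edge, cycle, or shared node of intersection. From this information the existence of a separating minimal cut is decidable in ${\mathcal O}(1)$ time: a separating tree-edge lies on the proper path between $P_u$ and $P_v$, or a separating pair of cycle-edges exists on a common cycle that only one of the projections fully traverses; when neither configuration is present, no such minimal cut exists and no $S$-mincut separates $u,v$.

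Once a specific separating minimal cut is identified, its associated bunch ${\mathcal B}$ is known, together with the two tight cuts $C(S_{\mathcal B})$ and $C(S\setminus S_{\mathcal B})$. All $S$-mincuts in ${\mathcal B}$ form a laminar chain sandwiched between these tight cuts, and the nonSteiner vertices distinguishing consecutive cuts of this chain are exactly those projected onto the defining edge (or cycle) of ${\mathcal B}$; the ordering $\tau(G)$ places them in a total order consistent with the chain. Using $\tau$, I can locate in ${\mathcal O}(1)$ time a particular $S$-mincut $C\in{\mathcal B}$ that places $u$ and $v$ on opposite sides (by comparing the ranks of $u,v$ against the thresholds induced by their projections). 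To report $C$ explicitly, I scan all vertices of $G$: for each $w\in V$, using $\phi(w)$, $\pi(\phi(w))$, and the rank of $w$ in $\tau$ together with Lemma \ref{lem : tight cut reporting}, decide whether $w\in C$ in ${\mathcal O}(1)$ time, giving ${\mathcal O}(n)$ total time. The combined data structures occupy ${\mathcal O}(n)$ space since Skeleton, projection mapping and $\tau$ each take ${\mathcal O}(n)$.

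The main obstacle I anticipate is the case analysis when $P_u$ and $P_v$ are both proper paths that partially share a cycle of ${\mathcal H}_S$: one must carefully distinguish when a pair of cycle-edges of that cycle actually places $\phi(u)$ and $\phi(v)$ into different components, versus when every minimal cut of the cycle keeps them together. Lemma \ref{lem : lca queries on skeleton} provides the low-level primitive, but weaving it into a correctness argument for every configuration of stretched-versus-stretched and stretched-versus-terminal queries, and then aligning the identified minimal cut with the correct position in the ordering $\tau$ so that the selected cut within ${\mathcal B}$ actually separates $u$ from $v$, is the technically delicate portion of the proof.
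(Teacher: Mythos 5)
Your plan correctly identifies the $\mathcal{O}(1)$ existence test ($\phi(u)=\phi(v)$ iff no $S$-mincut separates $u,v$), and the reporting strategy via minimal cuts of the Skeleton covers the paper's cases (a) ($\pi(\mu)$ and $\pi(\nu)$ disjoint) and (b.1) ($\pi(\mu)\cap\pi(\nu)\neq\emptyset$ but $\pi(\mu)\neq\pi(\nu)$): there a tight cut $C(A,e)$ separates $u$ from $v$ and is reportable in $\mathcal{O}(n)$ time via Lemma~\ref{lem : tight cut reporting} with no ordering needed.

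The genuine gap is the case $\pi(\phi(u))=\pi(\phi(v))$, i.e.\ $u$ and $v$ lie in distinct stretched units with identical projections in the Skeleton. You explicitly write that ``when neither configuration is present, no such minimal cut exists and no $S$-mincut separates $u,v$'' --- but since you have already established $\phi(u)\neq\phi(v)$, a separating $S$-mincut \emph{does} exist; it simply corresponds to no minimal cut of $\mathcal{H}_S$ that places the two (equal) projections on opposite sides. Your algorithm would falsely declare non-separability in exactly this case, which is the only case the ordering $\tau(G)$ is actually for. Your subsequent paragraph places $\tau$ in the wrong slot: you invoke it \emph{after} locating a separating minimal cut, to pick a cut inside the bunch $\mathcal{B}$, but at that point the tight cut already separates and $\tau$ adds nothing. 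In the paper, $\tau$ stores a topological order of the Picard--Queyranne DAG $\mathcal{D}_{PQ}(G(e))$ restricted to each projection class and is invoked precisely when the two projections coincide; the reported cut is then $C(M,e)\cup\Gamma_1\cup\Gamma_2$ where $\Gamma_1$ is a rank-prefix within the shared projection class and $\Gamma_2$ is determined via path extendability (Lemma~\ref{lem : lemma to report s mincut separating pair of vertices}). Relatedly, your assertion that the $S$-mincuts of a bunch form a ``laminar chain'' is false in general --- they form a distributive lattice and the units projecting to an edge are only partially ordered, which is why the paper uses a topological order of the DAG together with the $\Gamma_1,\Gamma_2$ construction rather than a simple rank-threshold test.
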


\noindent
\textbf{Construction of Data structure $R({\mathcal W})$:} For graph $G({\mathcal W})_2$, we store the Skeleton ${\mathcal H}({G({\mathcal W})_2})$ on Steiner set $S_2$. Similarly, the projection mapping $\pi({G({\mathcal W})_2})$ and the Ordering $\tau({G({\mathcal W})_2})$ of all nonSteiner vertices is also stored for graph ${G({\mathcal W})_2}$ from Theorem \ref{thm : reporting $S$-mincut using skeleton and projection mapping}. Evidently, Skeleton ${\mathcal H}({G({\mathcal W})_2})$ occupies ${\mathcal O}(|S_2|)$ space. The number of nonSteiner vertices of $G({\mathcal W})_2$ is the sum of the number of the nonSteiner vertices belonging to ${\mathcal W}$ in $G$ and $\text{deg}(N)$ in ${\mathcal H}_S$, where node $N$ represents ${\mathcal W}$ in ${\mathcal H}_S$. Let $n_2$ be the number of nonSteiner vertices of ${\mathcal W}$. Therefore, projection mapping $\pi({G({\mathcal W})_2})$ and the Ordering $\tau({G({\mathcal W})_2})$ occupies ${\mathcal O}(n_2+\text{deg}(N))$ space. Recall that, in the construction of graph $G({\mathcal W})$, for every bunch ${\mathcal B}$ adjacent to node $N$ in ${\mathcal H}_S$, there is a Steiner vertex $b$ in $G({\mathcal W})$. By construction of $G({\mathcal W})_2$, $b$ is a nonSteiner vertex in $G({\mathcal W})_2$. For each such vertex $b$, similar to data structure ${\mathcal D}_s$ in Lemma \ref{lem : data structure for singleton class}, we also keep the mapping from $b$ to the tree-edge/a pair of cycle-edges that represents the bunch ${\mathcal B}$. 
It occupies ${\mathcal O}(\text{deg}(N))$ space. So, the space occupied by the obtained data structure $R({\mathcal W})$ is ${\mathcal O}(|S_2|+n_2+\text{deg}(N))$.  

By Theorem \ref{thm : reporting $S$-mincut using skeleton and projection mapping} and Lemma \ref{lem : all cuts GW2}, the data structure $R({\mathcal W})$, given any pair of vertices $u,v$, can determine in ${\mathcal O}(1)$ time whether there exists a Steiner mincut $C$ of $G({\mathcal W})_2$ that separates $u$ and $v$. Moreover, it can report $C$ in ${\mathcal O}(|S_2|+n_2+\text{deg}(N))$ time for graph $G({\mathcal W})_2$. By using cut $C$, we now report a $(\lambda_S+1)$ cut $C'$ in $G$ such that $C'$ separates $u,v$.   
Cut $C$ contains every nonSteiner vertex $x$ of $G({\mathcal W})_2$ which stores tree-edge or pair of cycle-edges of ${\mathcal H}_S$. Cut $C'$ contains all the vertices belonging to $C$ except these nonSteiner vertices. If $x\in C$ and $x$ stores a tree-edge or a pair of cycle-edges of ${\mathcal H}_S$, then we remove those edges from ${\mathcal H}_S$ and mark the subgraph of ${\mathcal H}_S$ that does not contain node $N$. It requires ${\mathcal O}(n)$ time as discussed in Appendix \ref{sec : data structure for all singleton classes}. Finally, for every vertex $x$ of $G$, if $\pi(\phi(x))$ has a nonempty intersection with any marked subgraph, then report $x$ as $x\in C'$. Therefore, the overall time taken for reporting $C'$ is ${\mathcal O}(n)$.

Let ${\mathbb W}$ be the set of all $(\lambda_S+1)$ classes of $G$ that contain more than one Steiner vertices. We augment data structure $R({\mathcal W})$ for graph $G({\mathcal W})_2$ for every $(\lambda_S+1)$ class ${\mathcal W}\in {\mathbb W}$ of $G$. For each ${\mathcal W}_i\in {\mathbb W}$, let $S_i$ be the number of Steiner vertices in ${\mathcal W}_i$, $n_i$ be the number of nonSteiner vertices of ${\mathcal W}_i$, and $N_i$ be the node of the Skeleton ${\mathcal H}_S$ that represents ${\mathcal W}_i$. Since $\Sigma_{\forall{\mathcal W}\in {\mathbb W}} |S_i|={\mathcal O}(|S|)$, $\Sigma_{\forall{\mathcal W}\in {\mathbb W}} |n_i|={\mathcal O}(n-|S|)$, and $\Sigma_{\forall{\mathcal W}\in {\mathbb W}} \text{deg}(N_i)={\mathcal O}(|S|)$, the overall space occupied by the data structure is ${\mathcal O}(n)$. This completes the proof of the following lemma. 
\begin{lemma} \label{lem : data structure for G2}
    There exists an ${\mathcal O}(n)$ space data structure that, given any pair of vertices $u,v$ from any $(\lambda_S+1)$ class ${\mathcal W}$, can determine in ${\mathcal O}(1)$ time whether there exists a $(\lambda_S+1)$ cut $C$ of $G$ that subdivides the Steiner set belonging to ${\mathcal W}$ and separates $u,v$. Moreover, $C$ can be reported in ${\mathcal O}(n)$ time. 
\end{lemma}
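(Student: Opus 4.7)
The plan is to package the construction of $R({\mathcal W})$ laid out just above the statement into a single global data structure, indexed by $(\lambda_S+1)$ classes, and then to verify the three claims of the lemma in turn: the $O(n)$ space bound, the $O(1)$ separation query, and the $O(n)$ cut reporting. Throughout, I will only need to handle $(\lambda_S+1)$ classes ${\mathcal W} \in {\mathbb W}$ that contain at least two Steiner vertices, since any cut that subdivides the Steiner set of ${\mathcal W}$ must separate two such vertices.

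First, the correctness of the queries rests on Lemma \ref{lem : all cuts GW2}, which gives an exact bijection between $(\lambda_S+1)$ cuts of $G$ that subdivide $S\cap {\mathcal W}$ and Steiner mincuts of $G({\mathcal W})_2$. Thus, to answer the existential query for a pair $u,v\in {\mathcal W}$, I would (i) use a precomputed $O(n)$-size mapping from vertices of $G$ to their $(\lambda_S+1)$ class to identify ${\mathcal W}$ in $O(1)$ time, and (ii) invoke Theorem \ref{thm : reporting $S$-mincut using skeleton and projection mapping} on $G({\mathcal W})_2$ through the stored components $({\mathcal H}(G({\mathcal W})_2),\pi(G({\mathcal W})_2),\tau(G({\mathcal W})_2))$ of $R({\mathcal W})$; that theorem answers the separation-by-Steiner-mincut question in $O(1)$ time, and by Lemma \ref{lem : all cuts GW2} this is exactly what we need.

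Second, for the space bound, I would sum the contribution of $R({\mathcal W}_i)$ across all ${\mathcal W}_i\in{\mathbb W}$. Writing $|S_i|$, $n_i$, and $\deg(N_i)$ for the Steiner vertices, nonSteiner vertices, and Skeleton-degree of the node representing ${\mathcal W}_i$, each $R({\mathcal W}_i)$ takes $O(|S_i|+n_i+\deg(N_i))$ space, which already accounts for storing, at each nonSteiner vertex of $G({\mathcal W}_i)_2$ that came from contracting a bunch, a pointer to the tree-edge or cycle-edge pair in ${\mathcal H}_S$ defining that bunch. Summing, the Steiner vertices of $G$ partition across the ${\mathcal W}_i$'s giving $\sum_i|S_i|=O(|S|)$; the nonSteiner vertices partition likewise, giving $\sum_i n_i=O(n-|S|)$; and $\sum_i \deg(N_i)=O(|S|)$ by Lemma \ref{lem : sum of deg is order S}. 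The total is $O(n)$, as required, and the vertex-to-class map adds another $O(n)$.

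Third, for reporting the cut $C$, the plan is to first obtain the corresponding Steiner mincut $C^\ast$ in $G({\mathcal W})_2$ using Theorem \ref{thm : reporting $S$-mincut using skeleton and projection mapping}, and then to lift $C^\ast$ to a $(\lambda_S+1)$ cut $C$ of $G$ using the same marked-subgraph trick employed in Appendix \ref{sec : data structure for all singleton classes}. Namely, for every nonSteiner vertex $x\in C^\ast$ of $G({\mathcal W})_2$ that stores a tree-edge or cycle-edge pair of ${\mathcal H}_S$, remove those edges from ${\mathcal H}_S$ and mark the component of the resulting subgraph that does not contain $N$; this costs $O(|S|)$ in total because the marked pieces are disjoint. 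Then, for every vertex $x$ of $G$, output $x$ into $C$ precisely when $\pi(\phi(x))$ meets some marked subgraph or when $x$ corresponds to an uncontracted vertex lying in $C^\ast$. By Lemma \ref{lem : projection mapping}, each projection can be inspected in $O(1)$, so the sweep takes $O(n)$.

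The step I expect to be most delicate is the lifting: one must check that vertices of $G$ whose entire $\phi$-image is contracted away in forming $G({\mathcal W})_2$ are placed on the correct side of $C$. This is exactly the content of Lemma \ref{lem : tight cut reporting} applied at each bunch whose tight cut was contracted to a nonSteiner vertex of $G({\mathcal W})_2$; since $R({\mathcal W})$ stores the edge(s) of ${\mathcal H}_S$ defining that bunch, the marked-subgraph sweep is a faithful realisation of this lifting.
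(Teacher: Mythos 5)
Your proposal is correct and follows essentially the same approach as the paper: package $R({\mathcal W})$ (Skeleton, projection mapping, ordering, plus bunch-edge pointers) for each $(\lambda_S+1)$ class with at least two Steiner vertices, invoke Lemma~\ref{lem : all cuts GW2} and Theorem~\ref{thm : reporting $S$-mincut using skeleton and projection mapping} for the $O(1)$ query, sum $O(|S_i|+n_i+\deg(N_i))$ using Lemma~\ref{lem : sum of deg is order S} to get $O(n)$ total space, and lift the mincut of $G({\mathcal W})_2$ back to $G$ via the marked-subgraph sweep over ${\mathcal H}_S$. Your explicit appeal to Lemma~\ref{lem : tight cut reporting} to justify the lifting step is the same mechanism the paper relies on implicitly.
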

Lemma \ref{lem : data structure for G1} and Lemma \ref{lem : data structure for G2} leads to the following lemma for Steiner nodes.
\begin{lemma} \label{lem : data structure for generic steiner node}
       There exists an ${\mathcal O}(n(n-|S|+1))$ space data structure that, given any pair of vertices $u$ and $v$ belonging to any $(\lambda_S+1)$ class corresponding to a Steiner node, can determine whether there exists a $(\lambda_S+1)$ cut $C$ such that $u\in C$ and $v\in \overline{C}$ in ${\mathcal O}(1)$ time. Moreover, it can report cut $C$ in ${\mathcal O}(n)$ time.
\end{lemma}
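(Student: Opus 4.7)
The plan is to obtain Lemma~\ref{lem : data structure for generic steiner node} by composing the two data structures already established in Lemma~\ref{lem : data structure for G1} and Lemma~\ref{lem : data structure for G2}. Recall the dichotomy set up just before the lemma: for any $(\lambda_S+1)$ class ${\mathcal W}$ whose corresponding node in the Flesh graph is a Steiner node, the set ${\mathcal C}({\mathcal W})$ of all $(\lambda_S+1)$ cuts subdividing ${\mathcal W}$ is partitioned into ${\mathcal C}({\mathcal W})_1$ (cuts that keep all Steiner vertices of ${\mathcal W}$ on the same side) and ${\mathcal C}({\mathcal W})_2$ (cuts that separate at least one pair of Steiner vertices of ${\mathcal W}$). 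Thus, for any pair $u,v \in {\mathcal W}$, a $(\lambda_S+1)$ cut separating $u$ from $v$ exists if and only if either such a cut exists in ${\mathcal C}({\mathcal W})_1$ or such a cut exists in ${\mathcal C}({\mathcal W})_2$.

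The construction I would store is just the union of the two data structures: the one from Lemma~\ref{lem : data structure for G1} for handling cuts in ${\mathcal C}({\mathcal W})_1$ across all Steiner-node classes, and the one from Lemma~\ref{lem : data structure for G2} for handling cuts in ${\mathcal C}({\mathcal W})_2$ across all Steiner-node classes. The space accounting is immediate: the first is ${\mathcal O}(n(n-|S|+1))$ and the second is ${\mathcal O}(n)$, so the total remains ${\mathcal O}(n(n-|S|+1))$. For the existential query on a pair $u,v$, I would first use an ${\mathcal O}(n)$-space mapping from vertices to their $(\lambda_S+1)$ class (already maintained throughout the paper) to locate ${\mathcal W}$, then issue one ${\mathcal O}(1)$-time query to each of the two data structures and return ``yes'' iff at least one answers ``yes''. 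This preserves the ${\mathcal O}(1)$ query time.

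For reporting, whichever of the two data structures answered ``yes'' is invoked to produce the actual cut $C$ in ${\mathcal O}(n)$ time, matching the bound of Lemma~\ref{lem : data structure for G1} and Lemma~\ref{lem : data structure for G2}. Correctness is a direct consequence of the fact that ${\mathcal C}({\mathcal W}) = {\mathcal C}({\mathcal W})_1 \cup {\mathcal C}({\mathcal W})_2$ together with Lemma~\ref{lem : all cuts in gw1} and Lemma~\ref{lem : all cuts GW2}, which guarantee that the two auxiliary graphs $G({\mathcal W})_1$ and $G({\mathcal W})_2$ faithfully preserve the two respective families of cuts.

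There is no genuine obstacle here; the lemma is a packaging step. The only mildly delicate point I would double-check is that the ``yes/no'' answers from the two components cannot give a false positive, i.e., that both components only report cuts of capacity exactly $\lambda_S+1$ that subdivide ${\mathcal W}$ in the direction $u$ versus $v$. This is immediate from the specifications of Lemma~\ref{lem : data structure for G1} and Lemma~\ref{lem : data structure for G2}, so the composition goes through without additional analysis.
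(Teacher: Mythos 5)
Your proposal is exactly the paper's own argument: the paper derives Lemma~\ref{lem : data structure for generic steiner node} by directly combining Lemma~\ref{lem : data structure for G1} (handling ${\mathcal C}({\mathcal W})_1$, space ${\mathcal O}(n(n-|S|+1))$) and Lemma~\ref{lem : data structure for G2} (handling ${\mathcal C}({\mathcal W})_2$, space ${\mathcal O}(n)$), with the same ${\mathcal O}(1)$ existential query by issuing one query to each component and the same ${\mathcal O}(n)$ reporting bound. Your correctness justification via the disjoint decomposition ${\mathcal C}({\mathcal W})={\mathcal C}({\mathcal W})_1\cup{\mathcal C}({\mathcal W})_2$ together with Lemma~\ref{lem : all cuts in gw1} and Lemma~\ref{lem : all cuts GW2} matches the paper's reasoning.
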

Finally, data structures in Lemma \ref{lem : data structure for singleton class}, Lemma \ref{lem : data structure for stretched units}, Lemma \ref{lem : data structure for terminal nonSteiner units}, and Lemma \ref{lem : data structure for generic steiner node} complete the proof of Theorem \ref{thm : data structure}.

\section{Sensitivity Oracle: Dual Edge Failure} \label{sec : dual edge oracle}
In this Section, we address the problem of designing a compact data structure that, after the failure of any pair of edges in $G$, can efficiently report an $S$-mincut and its capacity. 






Let $e=(x,y)$ and $e'=(x',y')$ be the two failed edges in $G$. As stated in Section \ref{sec : introduction}, the capacity of $S$-mincut decreases after the failure of a pair of edges if and only if at least one of the failed edges contributes to an $S$-mincut or both failed edges contribute to a single $(\lambda_S+1)$ cut. 
An edge is said to belong to a node in Flesh ${\mathcal F}_S$ if both endpoints of the edge are mapped to the node.
Depending on the quotient mapping ($\phi$) of the endpoints of both failed edges, the following four cases are possible.
\begin{itemize}
    \item \textbf{Case 1:} $\phi(x)=\phi(y)=\phi(x')=\phi(y')$, both edges belong to the same node of the Flesh. 
    \item \textbf{Case 2:} $(\phi(x)=\phi(y))\ne(\phi(x')=\phi(y'))$, both edges belong to different nodes of the Flesh.
    \item \textbf{Case 3:} Either $\phi(x)\ne \phi(y)$ or $\phi(x') \ne \phi(y')$, exactly one of the two edges belongs to a node of the Flesh.
    \item \textbf{Case 4:} $\phi(x) \ne \phi(y)$ and $\phi(x') \ne \phi(y')$, none of the edges belong to any node of the Flesh.
\end{itemize}
We now handle each case separately.\\

\noindent
In \textbf{Case 1}, the failed edges are belonging to the same node or the same $(\lambda_S+1)$ class. In this case, the $S$-mincut capacity decreases by $1$ if and only if both failed edges are contributing to a single $(\lambda_S+1)$ cut. Recall that a node of the Flesh graph can be a Steiner or a nonSteiner node. We handle Steiner and nonSteiner nodes separately in Appendix \ref{sec : Steiner node} and Appendix \ref{sec : nonSteiner node}.\\


\noindent
In \textbf{Case 2}, the two failed edges belong to different $(\lambda_S+1)$ classes.
It follows from the construction of the Flesh graph that any Steiner cut that subdivides a $(\lambda_S+1)$ class must have capacity at least $\lambda_S+1$. Exploiting this fact and sub-modularity of cuts (Lemma \ref{submodularity of cuts}(1)), we show that, in Case 2, the capacity of $S$-mincut remains unchanged.
\begin{lemma} \label{lem : subdivides at most one class}
    There is no $(\lambda_S+1)$ cut that subdivides more than one $(\lambda_S+1)$ class.
\end{lemma}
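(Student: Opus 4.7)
The plan is to proceed by contradiction. Suppose a $(\lambda_S+1)$ cut $C$ subdivides two distinct $(\lambda_S+1)$ classes ${\mathcal W}_1$ and ${\mathcal W}_2$, and pick witnesses $a_i\in {\mathcal W}_i\cap C$, $b_i\in {\mathcal W}_i\cap \overline{C}$ for $i=1,2$. Since $a_1$ and $a_2$ lie in different $(\lambda_S+1)$ classes, Definition~\ref{def : lambda+1 S-class} guarantees an $S$-mincut $C'$ that separates them. Moreover, no $S$-mincut can subdivide a $(\lambda_S+1)$ class (again by Definition~\ref{def : lambda+1 S-class}), so each of ${\mathcal W}_1, {\mathcal W}_2$ lies entirely on one side of $C'$; without loss of generality ${\mathcal W}_1\subseteq C'$ and ${\mathcal W}_2\subseteq \overline{C'}$.

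Next I would observe that $C$ and $C'$ cross, with all four corners nonempty: $a_1\in C\cap C'$, $b_1\in C'\setminus C$, $a_2\in C\setminus C'$, $b_2\in \overline{C\cup C'}$. Since both $C$ and $C'$ are Steiner cuts, a short case analysis on where Steiner vertices can lie among the four corners shows that at least one of the following holds: \emph{(P)} some Steiner vertex lies in $C\cap C'$ and some in $\overline{C\cup C'}$, or \emph{(Q)} some Steiner vertex lies in $C\setminus C'$ and some in $C'\setminus C$. (Otherwise one of $C, C'$ would lose a Steiner vertex on one of its two sides, contradicting that it is a Steiner cut.)

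The key observation is that in either case, both corner sets on the active diagonal already subdivide a $(\lambda_S+1)$ class. In case \emph{(P)}: $C\cap C'$ contains $a_1$ and excludes $b_1$, both in ${\mathcal W}_1$, so it subdivides ${\mathcal W}_1$; and $C\cup C'$ contains $a_2$ and excludes $b_2$, both in ${\mathcal W}_2$, so it subdivides ${\mathcal W}_2$. In case \emph{(Q)}: analogously, $C\setminus C'$ subdivides ${\mathcal W}_2$ and $C'\setminus C$ subdivides ${\mathcal W}_1$. Combined with the Steiner-vertex distribution, this makes each cut on the active diagonal a Steiner cut that subdivides a $(\lambda_S+1)$ class, so by Definition~\ref{def : lambda+1 S-class} each has capacity at least $\lambda_S+1$. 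Thus the two cuts on the active diagonal contribute total capacity at least $2\lambda_S+2$.

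Finally, I would apply sub-modularity (Lemma~\ref{submodularity of cuts}): in case \emph{(P)}, $c(C\cap C')+c(C\cup C')\le c(C)+c(C')=2\lambda_S+1$, and in case \emph{(Q)}, $c(C\setminus C')+c(C'\setminus C)\le 2\lambda_S+1$. Either inequality contradicts the lower bound $2\lambda_S+2$ obtained above, finishing the proof. The only delicate point is the case split (P) vs.~(Q): justifying it requires checking that whenever a corner on one diagonal has no Steiner vertex, the Steiner-cut property of $C$ or $C'$ forces the other diagonal to be populated on both sides; this is a pure counting exercise on the four corner sets.
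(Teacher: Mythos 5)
Your proof is correct and follows essentially the same approach as the paper's: obtain an $S$-mincut $C'$ separating the two classes, observe that one diagonal of the crossing $\{C,C'\}$ must consist of Steiner cuts each subdividing a $(\lambda_S+1)$ class, and derive a contradiction from sub-modularity. The only cosmetic difference is that you carry out the case split (P)/(Q) explicitly (using both parts of Lemma~\ref{submodularity of cuts}), while the paper collapses case (Q) into case (P) by replacing $C$ with $\overline{C}$.
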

\begin{proof}
    Assume to the contrary that $C$ is a $(\lambda_S+1)$ cut that subdivides two $(\lambda_S+1)$ classes ${\mathcal W}_1$ and $\mathcal W_2$. Since ${\mathcal W}_1$ is different from ${\mathcal W}_2$, there must exist an $S$-mincut $C'$ that separates them. Observe that either $C\cap C'$ and $C\cup C'$ are Steiner cuts or $C\setminus C'$ and $C'\setminus C$ are Steiner cuts; otherwise, it can be shown that one of the two cuts $C$ and $C'$ is not a Steiner cut. Without loss of generality, assume that $C\cap C'$ and $C\cup C'$ are Steiner cuts; otherwise, we can consider the Steiner cut $\overline{C}$ instead of $C$. Since $C'$ keeps ${\mathcal W}_1$ and ${\mathcal W}_2$ on different sides, it follows that one of the two Steiner cuts $C\cap C'$ and $C\cup C'$ subdivides the $(\lambda_S+1)$ class ${\mathcal W}_1$ and the other Steiner cut subdivides the $(\lambda_S+1)$ class ${\mathcal W}_2$. Hence, both cuts $C\cap C'$ and $C\cup C'$ have capacity at least $\lambda_S+1$. Therefore, $c(C\cap C')+c(C\cup C')\ge 2\lambda+2$. However, for cuts $C,C'$, by sub-modularity of cuts (Lemma \ref{submodularity of cuts}(1)), $c(C\cap C')+c(C\cup C')\le 2\lambda_S+1$, a contradiction.  
\end{proof}

 By Lemma \ref{lem : subdivides at most one class}, since there is no $S$-mincut or $(\lambda_S+1)$ cut that subdivides more than one $(\lambda_S+1)$ class, therefore, it is immediate that the capacity of $S$-mincut remains the same. \\

\noindent
In \textbf{Case 3}, exactly one of the two failed edges does not belong to any $(\lambda_S+1)$ class. By Fact \ref{fact : quotient mapping}, there is an $S$-mincut in which the other failed edge, say $e'$, is contributing. Therefore, the capacity of $S$-mincut decreases by exactly $1$. To report an $S$-mincut for the resulting graph, we use the following single edge Sensitivity Oracle given by Baswana and Pandey \cite{DBLP:conf/soda/BaswanaP22}. (complete details are also provided in Appendix \ref{app : single edge insertion} and Appendix \ref{app : proof of theorem 58 for reporting S mincut separating vertices}).
\begin{theorem} [\cite{DBLP:conf/soda/BaswanaP22}] \label{thm : single edge Sensitivity Oracle} 
    For any undirected multi-graph $G$ on $n$ vertices with a Steiner set $S$, there is an ${\mathcal O}(n)$ space data structure that can report the capacity of $S$-mincut and an $S$-mincut in ${\mathcal O}(1)$ time and ${\mathcal O}(n)$ time, respectively, after the failure of any edge in $G$. 
\end{theorem}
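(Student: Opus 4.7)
The plan is to piggyback on the Connectivity Carcass machinery that has already been recalled in Appendix \ref{sec : extended preliminaries}, and in particular on Theorem \ref{thm : reporting $S$-mincut using skeleton and projection mapping}, which already delivers an ${\mathcal O}(n)$ space data structure capable of reporting an $S$-mincut separating any given pair of vertices. The oracle will simply store the quotient mapping $\phi$ (${\mathcal O}(n)$ space), the Skeleton ${\mathcal H}_S$ (${\mathcal O}(|S|)$ by Lemma \ref{lem : sum of deg is order S}), the projection mapping $\pi$ (${\mathcal O}(n)$ by Lemma \ref{lem : projection mapping}) and the ordering $\tau$ of nonSteiner vertices needed by Theorem \ref{thm : reporting $S$-mincut using skeleton and projection mapping}, plus a precomputed value of $\lambda_S$. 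The total is ${\mathcal O}(n)$.

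On query with failed edge $e=(x,y)$, I would first consult $\phi$. By Fact \ref{fact : quotient mapping}, $e$ contributes to some $S$-mincut of $G$ if and only if $\phi(x) \neq \phi(y)$. If $\phi(x) = \phi(y)$, then $e$ is internal to a $(\lambda_S+1)$ class and does not contribute to any $S$-mincut, so both the value $\lambda_S$ and the family of $S$-mincuts are preserved in $G\setminus e$. I then report $\lambda_S$ in ${\mathcal O}(1)$ time, and (for the reporting query) invoke Theorem \ref{thm : reporting $S$-mincut using skeleton and projection mapping} on any pair of Steiner vertices that the Skeleton tells us lie on opposite sides of some minimal cut, obtaining an $S$-mincut in ${\mathcal O}(n)$ time.

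If instead $\phi(x)\neq \phi(y)$, then Fact \ref{fact : quotient mapping} guarantees that some $S$-mincut $C$ of $G$ has $e$ as a contributing edge. In $G\setminus e$ the edge-set of $C$ loses exactly one edge, so $C$ becomes a Steiner cut of capacity $\lambda_S-1$ in $G\setminus e$. Since every Steiner cut of $G\setminus e$ has capacity at least $\lambda_S - 1$ (any such cut has capacity at most one less than the corresponding cut in $G$, which is at least $\lambda_S$), the new Steiner-mincut capacity is exactly $\lambda_S - 1$, reported in ${\mathcal O}(1)$ time. To output an actual $S$-mincut, I query Theorem \ref{thm : reporting $S$-mincut using skeleton and projection mapping} to obtain in ${\mathcal O}(n)$ time an $S$-mincut of $G$ that separates $x$ from $y$; removing $e$ from this cut yields an $S$-mincut of $G\setminus e$.

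The only substantive work lies in Theorem \ref{thm : reporting $S$-mincut using skeleton and projection mapping} itself, which is already granted in the excerpt; the main obstacle in the argument above is merely pedantic: for the case $\phi(x)\neq \phi(y)$ I must ensure that an $S$-mincut separating $x,y$ actually exists and is returned in ${\mathcal O}(n)$ time, which is exactly what the Connectivity Carcass (reduced to Skeleton, projection mapping, and ordering $\tau$) is designed to do. A small sanity check is the edge case $\lambda_S = 1$, where the returned cut may collapse to a trivial component after deletion; this is handled uniformly since the reporting procedure of Theorem \ref{thm : reporting $S$-mincut using skeleton and projection mapping} outputs vertices of a cut rather than its edge-set.
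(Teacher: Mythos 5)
Your proposal is correct and follows essentially the same approach as the paper: use the quotient mapping $\phi$ together with Fact \ref{fact : quotient mapping} to decide in ${\mathcal O}(1)$ time whether the failed edge contributes to any $S$-mincut (hence whether the new capacity is $\lambda_S$ or $\lambda_S - 1$), and then invoke Theorem \ref{thm : reporting $S$-mincut using skeleton and projection mapping} to report an $S$-mincut separating the edge's endpoints in ${\mathcal O}(n)$ time. The paper states this exact reduction at the end of Appendix C.3 (``for handling the failure of an edge, Theorem \ref{thm : single edge Sensitivity Oracle} is a simple corollary''), so there is nothing to add.
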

Theorem \ref{thm : single edge Sensitivity Oracle} ensures that in Case 3, there is an ${\mathcal O}(n)$ space data structure that 
can report the capacity of $S$-mincut in ${\mathcal O}(1)$ time and an $S$-mincut in ${\mathcal O}(n)$ time.\\

\noindent
In \textbf{Case 4},
similar to Case $3$, the capacity of $S$-mincut definitely decreases by at least $1$.
 Note that the capacity of $S$-mincut decreases by $2$ if and only if both failed edges are contributing to a single $S$-mincut. Therefore, we want to determine if both failed edges $e,e'$ are contributing to a single $S$-mincut. Otherwise, if $S$-mincut decreases by exactly $1$, then single edge Sensitivity Oracle in Theorem \ref{thm : single edge Sensitivity Oracle} is sufficient to report an $S$-mincut for the resulting graph in ${\mathcal O}(n)$ time. We design a data structure in Appendix \ref{sec : projected to skeleton} that can determine whether there exists a single $S$-mincut $C$ such that both $e,e'$ are contributing to $C$. Moreover, it can also efficiently report a resulting $S$-mincut, if $C$ exists. 




\subsection{Both Failed Edges are Projected to Skeleton} \label{sec : projected to skeleton}
Suppose none of the failed edges $e,e'$ belong to any node of the Flesh graph. 
By Lemma \ref{lem : edge is projected to a proper path}, both of them are projected to proper paths in the Skeleton. To determine whether both failed edges are contributing to a single $S$-mincut, by Lemma \ref{lem : property of skeleton} and Lemma \ref{lem : edge is projected to a proper path}, the necessary condition is the following. There is at least one tree-edge/cycle-edge of ${\mathcal H}_S$ that belongs to both $\pi(e)$ and $\pi(e')$ or there is a pair of cycle-edges from the same cycle in ${\mathcal H}_S$ such that one edge belongs to $\pi(e)$ and the other edge belongs to $\pi(e')$. By Lemma \ref{lem : lca queries on skeleton}, this can be determined in ${\mathcal O}(1)$ time.
Moreover, the data structure in Lemma \ref{lem : lca queries on skeleton} can also report in ${\mathcal O}(1)$ time one of the required tree-edge $e_1$ or a pair of cycle-edges $e_1=(A,A'),~e_2=(B,B')$ from the same cycle. 

We first show that if the pair of cycle-edges $e_1,e_2$ are different, then there is a single $S$-mincut in which both failed edges $e,e'$ are contributing.  
However, in the other cases, it is not guaranteed that they always contribute to a single $S$-mincut. Let ${\mathcal B}$ be the bunch corresponding to a minimal cut $\tilde C$ of ${\mathcal H}_S$. Without loss of generality, assume that $A,B\in \tilde C$, $A',B'\notin \tilde C$, and Steiner vertices in $S_{\mathcal B}$ are mapped to nodes that belong to $\tilde C$. We say that an $S$-mincut $C'$ from bunch ${\mathcal B}$ contains a node $P$ of ${\mathcal H}_S$ if $P\in \tilde C$, otherwise, $C'$ does not contain $P$.
\begin{lemma} \label{lem : different cycle edges}
    Let there be a pair of cycle-edges $e_1=(A,A'),~e_2=(B,B')$, $e_1\ne e_2$, from the same cycle $O$ of Skeleton ${\mathcal H}_S$ such that $e_1\in \pi(e)$ and $e_2\in \pi(e')$. Then, there is an $S$-mincut of $G$ in which both edges $e,e'$ are contributing. 
\end{lemma}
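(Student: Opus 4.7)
The plan is to prove that $C(S_{{\mathcal B}})$, the tight cut of the bunch ${\mathcal B}$ corresponding to the minimal cut $\{e_1, e_2\}$ of ${\mathcal H}_S$, is an $S$-mincut to which both $e$ and $e'$ contribute. Removing $\{e_1, e_2\}$ from ${\mathcal H}_S$ decomposes the skeleton into two subgraphs ${\mathcal H}_S^1$ and ${\mathcal H}_S^2$; without loss of generality, the Steiner vertices of $S_{{\mathcal B}}$ map into ${\mathcal H}_S^1$.

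First I would apply Lemma \ref{lem : edge is projected to a proper path} to $e$: $\pi(e)$ is a proper path that has $\pi(\phi(x))$ as its prefix and $\pi(\phi(y))$ as its suffix, and it contains the cycle-edge $e_1$ of $O$. Because $\pi(e)$ is a proper path using $e_1$, it can use no other edge of cycle $O$; in particular, it does not contain $e_2$. Hence removing $\{e_1, e_2\}$ from ${\mathcal H}_S$ splits $\pi(e)$ into exactly two parts, one in each of ${\mathcal H}_S^1$ and ${\mathcal H}_S^2$, so the prefix $\pi(\phi(x))$ and the suffix $\pi(\phi(y))$ lie in opposite components. Invoking Lemma \ref{lem : tight cut reporting}, one of $x, y$ belongs to $C(S_{{\mathcal B}})$ and the other to $C(S \setminus S_{{\mathcal B}}) \subseteq \overline{C(S_{{\mathcal B}})}$, so $e$ contributes to $C(S_{{\mathcal B}})$. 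Running the identical argument on $e'$ with $e_2$ in place of $e_1$ (and using that $\pi(e')$ is proper, contains $e_2$, and therefore avoids $e_1$) shows that $e'$ also contributes to $C(S_{{\mathcal B}})$, and we are done.

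The main technical obstacle is the possibility that some flesh endpoint among $\phi(x), \phi(y), \phi(x'), \phi(y')$ is a stretched unit whose projection proper path itself already contains $e_1$ (respectively $e_2$); such a unit is distinguished by ${\mathcal B}$ and lies in the middle region $\overline{C(S_{{\mathcal B}})} \cap \overline{C(S \setminus S_{{\mathcal B}})}$, so that endpoint belongs neither to $C(S_{{\mathcal B}})$ nor to $C(S \setminus S_{{\mathcal B}})$, and the simple argument above breaks. The plan to resolve this sub-case is to exploit the chain structure of $S$-mincuts within a single bunch: they can be linearly ordered as $C(S_{{\mathcal B}}) = D_0 \subsetneq D_1 \subsetneq \cdots \subsetneq D_p = \overline{C(S \setminus S_{{\mathcal B}})}$, and for any edge $f$ that contributes to some cut in ${\mathcal B}$, the set of indices $\{j : f \text{ contributes to } D_j\}$ is a contiguous interval $I_f$. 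Since $e_1 \neq e_2$ are distinct cycle-edges of $O$, and $\pi(e)$ (respectively $\pi(e')$) is a proper path using only $e_1$ (respectively $e_2$), any distinguished endpoint of $e$ has projection spanning $e_1$ while any distinguished endpoint of $e'$ has projection spanning $e_2$. Using this asymmetry one can show that the intervals $I_e$ and $I_{e'}$ must overlap, and any $D_j$ with $j \in I_e \cap I_{e'}$ is a common $S$-mincut to which both $e$ and $e'$ contribute.
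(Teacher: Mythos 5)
Your opening step is correct: when none of $\phi(x),\phi(y),\phi(x'),\phi(y')$ is distinguished by $\mathcal{B}$, Lemma \ref{lem : edge is projected to a proper path} and the properness of $\pi(e),\pi(e')$ place the prefix and suffix projections in opposite components of ${\mathcal H}_S\setminus\{e_1,e_2\}$, and Lemma \ref{lem : tight cut reporting} then makes $C(S_{\mathcal B})$ an $S$-mincut to which both $e$ and $e'$ contribute. You also correctly pinpoint the hard case.

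The resolution of that hard case, however, hinges on a false structural claim: the $S$-mincuts of a single bunch are \emph{not} linearly ordered by inclusion. They are closed under intersection and union (Fact \ref{fact : closed under intersection and union}) and in general form a lattice with incomparable elements. Already for $S=\{s,t\}$ with two nonSteiner vertices $a,b$ joined by two parallel copies of each of $(s,a),(s,b),(a,t),(b,t)$, the four sets $\{s\},\{s,a\},\{s,b\},\{s,a,b\}$ are all $S$-mincuts of the unique bunch, and $\{s,a\},\{s,b\}$ are incomparable; nothing about a cycle-edge bunch removes this possibility. With the chain gone, the interval sets $I_e,I_{e'}$ are undefined, and even relative to one fixed maximal chain you give no argument for the claimed overlap --- ``using this asymmetry one can show'' is exactly the step that would have to be proved. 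The paper's proof escapes the difficulty by leaving $\mathcal{B}$ altogether: it fixes $C_1\in\mathcal{B}$ to which $e$ contributes and, when $e'$ misses $C_1$, picks a third cycle-edge $e_3$ of $O$, passes to the bunch $\mathcal{B}'$ corresponding to $\{e_2,e_3\}$, obtains some $C_3\in\mathcal{B}'$ to which $e'$ contributes, and applies sub-modularity (Lemma \ref{submodularity of cuts}) to conclude that $C_1\cap C_3$ or $C_1\cup C_3$ is an $S$-mincut hit by both. That cut need not lie in $\mathcal{B}$, so a strategy confined to cuts of a single bunch cannot in general reach it.
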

\begin{proof}
    Let ${\mathcal B}$ be the bunch corresponding to the minimal cut $\tilde C$ defined by the cycle-edges $\{e_1,e_2\}$ of Skeleton ${\mathcal H}_S$.  Without loss of generality, assume that $A,B\in \tilde C$ and $A',B'\notin {\tilde C}$ and the nodes of the cycle appear in the order $A,A',B'$, and $B$.
    Since $e_1\in \pi(e)$ and $e_2\in \pi(e')$, by definition of projection of an edge, 
    there is at least one $S$-mincut $C_1$ from ${\mathcal B}$ in which edge $e$ is contributing and at least one $S$-mincut $C_2$ from ${\mathcal B}$ in which $e'$ is contributing. 
    If $e'$ is contributing to $C_1$ (likewise $e$ is contributing to $C_2$), then $C_1$ (likewise $C_2$) is an $S$-mincut in which both edges are contributing. Suppose this is not the case. 
    Without loss of generality, assuming node $A',B'\in \overline{C_1}$, observe that $C_1$ must keep both $x'$ and $y'$ either in $C_1$ or in $\overline{C_1}$. 
    
    Suppose $x',y'\in C_1$. Since $e_1,e_2$ are two different cycle-edges from the same cycle, without loss of generality, assume that $A'\ne B'$. Let $e_3$ be a cycle-edge belonging to the path between $A'$ and $B'$ that does not have nodes $A$ and $B$. Let ${\mathcal B}'$ be the bunch corresponding to the minimal cut $C''$ 
    defined by the cycle-edges $e_2$ and $e_3$.
    By definition of projection mapping of an edge, there is an $S$-mincut $C_3$ belonging to ${\mathcal B}'$ such that $e'$ is contributing to $C_3$. Without loss of generality, assume that $x',B$ is in $C_3$ and $y',B'$ does not belong to $C_3$. Observe that $A,B\in C_1\cap C_3$ and $B'\notin C_1\cup C_3$. So, Steiner vertices belonging to tight cut $C(B,(B,B'),(B,B''))$ (refer to Definition \ref{lem : tight cut reporting}) also belong to $C_1\cap C_3$, where $(B,B'')$ is the other adjacent cycle-edge of node $B$ in cycle $O$. Similarly, there also exist Steiner vertices in $C(B',(B',B),(B',C))$ that belong to $\overline{C_1\cup C_3}$, where $(B',C)$ is the other adjacent cycle-edge of node $B'$ in cycle $O$. Hence, both $C_1\cap C_3$ and $C_1\cup C_3$ are Steiner cuts. Moreover, by sub-modularity of cuts (Lemma \ref{submodularity of cuts}), we have $c(C_1\cap C_3)+c(C_1\cup C_3)\le \lambda_S$. So, $C_1\cap C_3$ and $C_1\cup C_3$ are $S$-mincuts. Therefore, we get an $S$-mincut $C_1\cap C_3$ in which both edges $e$ and $e'$ are contributing. \\ 
    Now suppose $x',y'\in \overline{C_1}$. Without loss of generality, assume that $A,B\in C_2$. There are two possibilities -- either $x,y\in C_2$ or $x,y\notin C_2$. If $C_2$ does not contain $x,y$ then, in a similar way as we established that $C_1\cup C_3$ is an $S$-mincut, we can also argue that $C_1\cup C_2$ is an $S$-mincut in which both edges $e,e'$ are contributing. We now consider the other case when $C_2$ contains both $x,y$. We can consider edge $e_3$ again. By definition of projection mapping of an edge, there is an $S$-mincut $C_4$ belonging to a bunch ${\mathcal B}''$ such that edge $e$ is contributing. In a similar way as the proof of the case when $x',y'\in C_1$, 
    we can also show that 
    $C_1\cap C_4$ is an $S$-mincut in which both edges $e,e'$ are contributing. Hence, it completes the proof.  
\end{proof}
It follows from Lemma \ref{lem : different cycle edges} that if $e$ and $e'$ are projected to two different cycle-edges of the same cycle in ${\mathcal H}_S$, then the capacity of $S$-mincut decreases by exactly $2$. In this case, we now explain the procedure for reporting an $S$-mincut in which both edges $e$ and $e'$ are contributing. Let ${\mathcal B}$ be the bunch represented by the minimal cut $C_{e_1,e_2}$, where $C_{e_1,e_2}$ is defined by the two cycle-edges $e_1$ and $e_2$ of the Skeleton ${\mathcal H}_S$ from the same cycle. By Lemma \ref{lem : tight cut reporting}, we can report the set $C(S_{\mathcal B})$ in ${\mathcal O}(n)$ time. 
If $x,x'\in C(S_{\mathcal B})$ or $y,y'\in C(S_{\mathcal B})$, then $C(S_{\mathcal B})$ is an $S$-mincut in which both edges $e,e'$ are contributing. Similarly, if $x,x'\in C(S\setminus S_{\mathcal B})$ or $y,y'\in C(S\setminus S_{\mathcal B})$, then $C(S\setminus S_{\mathcal B})$ is an $S$-mincut in which both edges $e,e'$ are contributing.
It is quite possible that either both endpoints of at least one of the two failed edges $e,e'$ are projected to the two edges $e_1,e_2$ or exactly one endpoint of one edge belongs to $C(S_{\mathcal B})$ and exactly one endpoint of the other edge belongs to $C(S\setminus S_{\mathcal B})$.
However, we do not have any nontrivial data structure to report an $S$-mincut for these two cases. To overcome this hurdle, we design a compact data structure as follows.
We begin by defining the nearest $S$-mincut of a vertex from a Steiner vertex $s$ to a Steiner vertex $t$.
\begin{definition} [Nearest $(s,t)$-mincut of a vertex] \label{def : nearest s,t mincut of a vertex}
     An $S$-mincut $C$ is said to be a nearest $(s,t)$-mincut for a vertex $u$ if $s,u\in C$ and $t\in \overline{C}$ and there exists no $S$-mincut $C'$ such that $C'\subset C$ with $s,u\in C'$ and $t\in \overline{C'}$.
\end{definition}
For a pair of $S$-mincut, the following fact is immediate.
\begin{fact}[Fact 2.2 in \cite{DBLP:journals/siamcomp/DinitzV00}] \label{fact : closed under intersection and union}
    For a pair of $S$-mincuts $C_1,C_2$ with a Steiner vertex $s_1\in C_1\cap C_2$ and a Steiner vertex $s_2\in \overline{C_1\cup C_2}$, both $C_1\cap C_2$ and $C_1\cup C_2$ are $S$-mincuts. 
\end{fact}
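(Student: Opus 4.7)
The plan is to prove \Cref{fact : closed under intersection and union} by a short, two-step argument: first verify that both $C_1\cap C_2$ and $C_1\cup C_2$ are Steiner cuts (and hence each has capacity at least $\lambda_S$), and then apply sub-modularity of cuts to squeeze their capacities down to exactly $\lambda_S$.

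First I would observe that the Steiner witnesses $s_1,s_2$ immediately certify that both candidate sets separate some pair in $S$. Since $s_1\in C_1\cap C_2$ and $s_2\in \overline{C_1\cup C_2}\subseteq \overline{C_1\cap C_2}$, the set $C_1\cap C_2$ is a Steiner cut by \Cref{def : steiner cut}; symmetrically, $s_1\in C_1\cap C_2\subseteq C_1\cup C_2$ and $s_2\in \overline{C_1\cup C_2}$, so $C_1\cup C_2$ is also a Steiner cut. Therefore
\[
c(C_1\cap C_2)\ \ge\ \lambda_S,\qquad c(C_1\cup C_2)\ \ge\ \lambda_S.
\]

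Next I would invoke sub-modularity of cuts (\Cref{submodularity of cuts}(1)) on $C_1$ and $C_2$, which are $S$-mincuts, to obtain
\[
2\lambda_S\ =\ c(C_1)+c(C_2)\ \ge\ c(C_1\cap C_2)+c(C_1\cup C_2)\ \ge\ 2\lambda_S.
\]
The chain of inequalities is forced to be tight, so $c(C_1\cap C_2)=c(C_1\cup C_2)=\lambda_S$, and both corner sets are $S$-mincuts, completing the proof.

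There is essentially no obstacle here: the entire argument follows once one notices that the Steiner witnesses $s_1,s_2$ are exactly what is needed to promote the trivial inequalities $c(\cdot)\ge 0$ (which are useless for corner sets that might not be Steiner cuts) to $c(\cdot)\ge \lambda_S$. The fact may be viewed as the natural ``Steiner'' analogue of the classical closure of global mincuts under intersection and union, and in fact the statement is a direct specialization of the same reasoning used in \Cref{lem : mincuts closed under int and uni}; the Steiner witnesses simply provide an explicit sufficient condition under which the hypothesis of \Cref{lem : mincuts closed under int and uni} (that both $C_1\cap C_2$ and $C_1\cup C_2$ are Steiner cuts) is automatically met.
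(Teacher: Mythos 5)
Your proof is correct and uses exactly the approach the paper intends: the paper cites this as a known fact (Fact 2.2 in Dinitz--Vainshtein) and, for the closely related Lemma~\ref{lem : mincuts closed under int and uni}, explicitly states that it "can be easily derived using Lemma~\ref{submodularity of cuts}(1)," which is precisely the sub-modularity squeeze you perform. Nothing further is needed.
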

Exploiting Fact \ref{fact : closed under intersection and union}, it is a simple exercise to show that nearest $(s,t)$-mincut of a vertex is always unique.

Suppose $u$ and $v$ are two vertices in $G$ and there exists a pair of edges $f_1=(A_1,A_1'),~f_2=(B_2,B_2')$ of the Skeleton such that $f_1,f_2\in \pi(\phi(u)) \cap \pi(\phi(v))$ and $f_1\ne f_2$. Since vertices $u,v$ are projected to both $f_1$ and $f_2$, by Lemma \ref{lem : projection mapping of terminal and stretched units}(2), $f_1$ and $f_2$ belong to a single proper path. Without loss of generality, assume that the proper path between $A_1$ and $B_2'$ contains the nodes $A_1'$ and $B_2$ (note that $A_1'$ can be same as $B_2$). Let $C_{f_1}$ (likewise $C_{f_2}$) be a minimal cut of ${\mathcal H}_S$ containing edge $f_1$ (likewise $f_2$) such that $A_1\in C_{f_1}$ (likewise $B_2\in C_{f_2}$). Let ${\mathcal B}_{f_1}$ (likewise ${\mathcal B}_{f_2}$) be the corresponding bunch of $C_{f_1}$ (likewise $C_{f_2}$). 
Suppose $s_1,s_2$ are a pair of Steiner vertices such that $\pi(\phi(s_1))$ is a node belonging to $C_{f_1}$ and $\pi(\phi(s_2))$ is a node belonging to $C_{f_1}$. Observe that $s_1,s_2$ always exists because there exist at least one $S$-mincut in each of the two bunches ${\mathcal B}_{f_1}$ and ${\mathcal B}_{f_2}$ corresponding to minimal cuts $C_{f_1}$ and $C_{f_2}$, respectively. The following property on nearest $(s_1,s_2)$-mincut of vertices follows immediately from the works of Dinitz and Vainshtein \cite{DBLP:conf/stoc/DinitzV94, DBLP:conf/soda/DinitzV95, DBLP:journals/siamcomp/DinitzV00} 
 (proof is provided in Appendix \ref{app : respecting nearest cuts} for the sake of completeness). 
\begin{lemma}[\cite{DBLP:conf/stoc/DinitzV94, DBLP:conf/soda/DinitzV95, DBLP:journals/siamcomp/DinitzV00}] \label{lem : respecting nearest cuts}
     Let $C_1$ and $C_2$ are nearest $(s_1,s_2)$-mincuts of $u$ belonging to bunch ${\mathcal B}_{f_1}$ and ${\mathcal B}_{f_2}$ respectively. Then, $v\in C_1$ if and only if $v\in C_2$.  
\end{lemma}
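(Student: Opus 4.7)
The plan is to leverage the two foundational tools available at this point: Fact~\ref{fact : closed under intersection and union} (closure of $S$-mincuts under intersection and union when appropriate Steiner vertices sit on the correct sides) and the uniqueness of the nearest cut that it induces. I begin by computing Steiner partitions. Since $f_1,f_2$ lie on a single proper path in the node order $A_1,A_1',B_2,B_2'$, the minimal cuts of the Skeleton satisfy $C_{f_1}\subsetneq C_{f_2}$ as node-sets, so their Steiner partitions satisfy $S_{f_1}\subsetneq S_{f_2}$. Consequently $(C_1\cap C_2)\cap S=S_{f_1}$ and $(C_1\cup C_2)\cap S=S_{f_2}$; together with $s_1\in C_1\cap C_2$ and $s_2\in\overline{C_1\cup C_2}$, Fact~\ref{fact : closed under intersection and union} gives that both $C_1\cap C_2$ and $C_1\cup C_2$ are $S$-mincuts, belonging to ${\mathcal B}_{f_1}$ and ${\mathcal B}_{f_2}$ respectively.

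The forward direction $v\in C_1\Rightarrow v\in C_2$ reduces to $C_1\subseteq C_2$: the cut $C_1\cap C_2$ is an $(s_1,s_2)$-mincut of $u$ in ${\mathcal B}_{f_1}$ (it contains $s_1,u$, excludes $s_2$, and lies in ${\mathcal B}_{f_1}$), so the minimality in Definition~\ref{def : nearest s,t mincut of a vertex} yields $C_1\subseteq C_1\cap C_2$, hence $C_1=C_1\cap C_2\subseteq C_2$. For the converse $v\in C_2\Rightarrow v\in C_1$, I apply the same machinery with $v$ playing the role of $u$: since $f_1,f_2\in\pi(\phi(v))$, the vertex $v$ is distinguished by both bunches, so the nearest $(s_1,s_2)$-mincuts $C_1^v,C_2^v$ of $v$ in ${\mathcal B}_{f_1}$ and ${\mathcal B}_{f_2}$ are well defined, and an identical argument gives $C_1^v\subseteq C_2^v$. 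Assuming $v\in C_2$, the cut $C_2$ is itself an $(s_1,s_2)$-mincut of $v$ in ${\mathcal B}_{f_2}$, so by the minimality of $C_2^v$ we obtain $C_2^v\subseteq C_2$.

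The concluding step is to pull this inclusion back into ${\mathcal B}_{f_1}$, i.e.\ to conclude $C_1^v\subseteq C_1$ (whence $v\in C_1^v\subseteq C_1$), and this is the main obstacle. It is here that the shared projection of $u$ and $v$ through both $f_1$ and $f_2$ is essential: it ensures that the relative order of the nearest cuts of $u$ and $v$ is consistent between the bunches ${\mathcal B}_{f_1}$ and ${\mathcal B}_{f_2}$, a structural consistency already implicit in the Connectivity Carcass framework of Dinitz and Vainshtein. Concretely, I expect to apply Fact~\ref{fact : closed under intersection and union} once more to the pair $C_1,C_1^v$ (both in ${\mathcal B}_{f_1}$, both containing $s_1$ and excluding $s_2$), and couple the resulting intersection/union with the already-established $C_1\subseteq C_2$ and $v\in C_2$ to rule out the case $C_1\subsetneq C_1^v$; the minimality of $C_1$ as the nearest cut of $u$ in ${\mathcal B}_{f_1}$ then forces $C_1^v\subseteq C_1$, completing the proof.
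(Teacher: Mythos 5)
Your forward direction is sound, and the observation that $S_{f_1}\subsetneq S_{f_2}$ yielding the unconditional inclusion $C_1\subseteq C_2$ is in fact slightly cleaner than the paper's presentation. However, the converse has a genuine gap at exactly the spot you flag, and your proposed fix does not close it.

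You establish $C_1^v\subseteq C_2^v\subseteq C_2$ and $C_1\subseteq C_2$, and then want $C_1^v\subseteq C_1$ so that $v\in C_1^v\subseteq C_1$. But applying Fact~\ref{fact : closed under intersection and union} to $C_1,C_1^v$ cannot "rule out $C_1\subsetneq C_1^v$": the configuration $C_1\subsetneq C_1^v$ with $v\notin C_1$ is fully consistent with every constraint you have. In that case $u\in C_1\subsetneq C_1^v$, $v\in C_1^v\setminus C_1$, and the intersection $C_1\cap C_1^v=C_1$ and union $C_1\cup C_1^v=C_1^v$ are both $S$-mincuts in ${\mathcal B}_{f_1}$; neither violates the minimality of $C_1$ (which only constrains cuts that contain $u$, and $C_1\cap C_1^v=C_1$ does) nor the minimality of $C_1^v$ (which only constrains cuts that contain $v$, and $C_1=C_1\cap C_1^v$ does not). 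Since everything is nested inside $C_2$, there is also no tension with $C_1\subseteq C_2$, $v\in C_2$. The structural obstacle is that all the inclusions you produce point inward toward the $s_1$-side and give no constraint from the $s_2$-side.

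The missing ingredient — and what the paper's proof actually uses — is the nearest cut of $v$ from the \emph{opposite} side, the nearest $(s_2,s_1)$-mincut $C_v$ of $v$ in ${\mathcal B}_{f_2}$ (Definition~\ref{def : nearest s,t mincut of a vertex} with $s_2,s_1$ interchanged). One first shows $u\in C_v$: otherwise $\overline{C_v}$ is an $S$-mincut of ${\mathcal B}_{f_2}$ containing $s_1,u$ and excluding $s_2,v$, and crossing it with $C_2$ produces an $S$-mincut of ${\mathcal B}_{f_2}$ strictly contained in $C_2$ that still contains $s_1,u$, contradicting the minimality of $C_2$. Then, assuming $v\notin C_1$, one applies the \emph{difference} form of submodularity (Lemma~\ref{submodularity of cuts}(2)) to $C_1$ and $C_v$: since $s_1\in C_1\setminus C_v$ and $s_2\in C_v\setminus C_1$, both $C_1\setminus C_v$ and $C_v\setminus C_1$ are Steiner cuts and hence $S$-mincuts, and $C_v\setminus C_1$ lies in ${\mathcal B}_{f_2}$, contains $s_2,v$, excludes $s_1$, and is a \emph{proper} subset of $C_v$ because $u\in C_v$ but $u\notin C_v\setminus C_1$ — contradicting the minimality of $C_v$. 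Your toolkit (intersections and unions of cuts all on the $s_1$-side of the same bunch, all nested inside $C_2$) cannot manufacture such a proper-subset violation; the $(s_2,s_1)$-mincut and the second form of submodularity are both essential.
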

Without causing any ambiguity, we say that $u$ belongs to \textit{nearest $(A_1,B_2')$-mincut of $v$} if $u$ belongs to nearest $(s_1,s_2)$-mincut of $v$. Exploiting Lemma \ref{lem : respecting nearest cuts}, we construct the following matrix for every Stretched unit of $G$.\\

\noindent
\textbf{Construction of Matrix ${\mathcal M}$:} Let $Z_1$ and $Z_2$ be two sets of Stretched units. Matrix ${\mathcal M}$ has $|Z_1|$ rows and $|Z_2|$ columns. Suppose $\mu\in Z_1$ and $\nu\in Z_2$ are a pair of Stretched units such that there is at least one edge in the Skeleton that belongs to both $\pi(\mu)$ and $\pi(\nu)$. ${\mathcal M}[\mu][\nu]$ contains ordered pair $(N,M)$ if $\nu$ belongs to a nearest $(N,M)$-mincut of $\mu$, where $<N,M>$ is $\pi(\mu)\cap \pi(\nu)$. 
Otherwise, we store ${\mathcal M}[\mu][\nu]=0$. For graph $G$, matrix ${\mathcal M}$ is a $(n-|S|)\times (n-|S|)$ matrix, where both $Z_1$ and $Z_2$ are the set of Stretched units in $G$. \\


\noindent
Lemma \ref{lem : respecting nearest cuts} ensures that, even if $\pi(\mu)$ and $\pi(\nu)$ intersects with multiple edges, storing only one ordered pair in ${\mathcal M}[\mu][\nu]$ is sufficient.  Therefore, the space occupied by matrix ${\mathcal M}$ is ${\mathcal O}((n-|S|)^2)$. It leads to the following lemma.
\begin{lemma} \label{lem : matrix M}
    There is an ${\mathcal O}((n-|S|)^2)$ space matrix ${\mathcal M}$ such that, for any pair of stretched units $\mu,\nu$ of $G$, if there is at least one edge in $\pi(\mu) \cap \pi(\nu)=<N,M>$ and $\nu$ belongs to the nearest $(N,M)$-mincut of $\mu$, then ${\mathcal M}[\mu][\nu]$ stores the ordered pair of nodes $(N,M)$ of Skeleton; otherwise, ${\mathcal M}[\mu][\nu]=0$. 
\end{lemma}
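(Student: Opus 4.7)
The plan is to formalize the construction given immediately before the statement, verify its well-definedness, and then count the space. First I index the rows and columns of ${\mathcal M}$ by stretched units of the Flesh graph ${\mathcal F}_S$. Since every Steiner node of ${\mathcal F}_S$ is a terminal unit (Steiner vertices cannot be distinguished by any bunch, as noted after Definition \ref{def : terminal and nonterminal unit}), every stretched unit is a nonSteiner unit and hence represents a $(\lambda_S+1)$ class containing only nonSteiner vertices of $G$. As every $(\lambda_S+1)$ class is nonempty, there are at most $n-|S|$ stretched units, so ${\mathcal M}$ has at most $(n-|S|)\times(n-|S|)$ entries.

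For each ordered pair $(\mu,\nu)$ of stretched units, I would first query the oracle of Lemma \ref{lem : lca queries on skeleton} on the proper paths $\pi(\mu)$ and $\pi(\nu)$ obtained via Lemma \ref{lem : projection mapping}. If no Skeleton edge lies in the intersection, set ${\mathcal M}[\mu][\nu]=0$. Otherwise $\pi(\mu)\cap\pi(\nu)$ is itself a proper subpath of ${\mathcal H}_S$ whose endpoints I denote $N$ and $M$; pick Steiner vertices $s_1,s_2$ whose projections sit in the two cactus components separated by removing this subpath, and store $(N,M)$ in ${\mathcal M}[\mu][\nu]$ precisely when $\nu$ belongs to the nearest $(s_1,s_2)$-mincut of $\mu$ (Definition \ref{def : nearest s,t mincut of a vertex}), which is unique by Fact \ref{fact : closed under intersection and union}.

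The main obstacle is that $\pi(\mu)\cap\pi(\nu)$ can contain many edges, so one must justify recording only a single ordered pair. This is exactly where Lemma \ref{lem : respecting nearest cuts} is essential: for any two edges $f_1,f_2$ of the intersection, the nearest $(s_1,s_2)$-mincuts of $\mu$ belonging to the bunches corresponding to $f_1$ and to $f_2$ contain exactly the same vertices. Hence the membership test ``$\nu$ belongs to the nearest cut'' does not depend on which edge of the intersection one uses; the pair of endpoints $(N,M)$ is a faithful summary and ${\mathcal M}[\mu][\nu]$ is well-defined.

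Finally, the space bound is immediate: ${\mathcal M}$ has at most $(n-|S|)^2$ cells, each holding either $0$ or an ordered pair of Skeleton nodes, i.e.\ $O(1)$ words. Combined with the Connectivity Carcass already used to retrieve projections in constant time, this yields the claimed $O((n-|S|)^2)$ space bound.
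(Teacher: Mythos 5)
Your proposal is correct and takes essentially the same approach as the paper: both bound the number of stretched units by $n-|S|$ (using that every Steiner node is a terminal unit), and both invoke Lemma~\ref{lem : respecting nearest cuts} as the single tool that makes recording one ordered pair per cell well-defined even when $\pi(\mu)\cap\pi(\nu)$ spans several Skeleton edges. The additional details you supply (the intersection query via Lemma~\ref{lem : lca queries on skeleton}, uniqueness of the nearest $(s_1,s_2)$-mincut via Fact~\ref{fact : closed under intersection and union}) are consistent with, but not beyond, what the paper's construction requires.
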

We now use matrix ${\mathcal M}$ of Lemma \ref{lem : matrix M} to report an $S$-mincut in which both failed edges $e=(x,y)$ and $e'=(x',y')$ are contributing. We have cycle edges $e_1=(A,A')$ and $e_2=(B,B')$ from the same cycle of ${\mathcal H}_S$ such that $e_1\in \pi(e)$ and $e_2\in \pi(e')$. Let $\mu$ be a Stretched unit projected to a proper path $<P,Q>$ such that cycle-edge $e_1=(A,A')$ also belongs to the path $<P,Q>$. \\

\noindent
\textbf{Determine if $\mu$ belong to the nearest $(A,A')$-mincut of $\phi(x)$:} We can determine if $\mu$ belongs to the nearest $(A,A')$-mincut of $\phi(x)$ in ${\mathcal O}(1)$ time as follows. We look at the information stored at ${\mathcal M}[\phi(x)][\mu]$. If ${\mathcal M}[\phi(x)][\mu]=0$, then $\mu$ does not belong to the nearest $(A,A')$-mincut of $\phi(x)$. Otherwise, let ${\mathcal M}[\phi(x)][\mu]=(N,M)$. If path $<N,A>$ does not contain node $A'$, then $\mu$ belongs to the nearest $(A,A')$-mincut of $\phi(x)$, otherwise, it does not. It follows from Lemma \ref{lem : lca queries on skeleton} that we can verify in ${\mathcal O}(1)$ time whether path $<N,A>$ does not contain node $A'$. 
We now use this procedure to report an $S$-mincut in which both edges $e$ and $e'$ are contributing.\\

\noindent
\textbf{Reporting an $S$-mincut in which $e,e'$ are contributing:}
For edge $e_1=(A,A')$, we can determine in ${\mathcal O}(1)$ time whether $\phi(y)$ belongs to the nearest $(A,A')$-mincut of $\phi(x)$. If it is true, then assign $p\gets y$; otherwise, assign $p\gets x$. Similarly, for edge $e_2=(B,B')$, we assign either $q\gets x'$ or $q\gets y'$. We now construct a set $P$ that contains every vertex $u$ such that $e_1\in \pi(\phi(u))$ and $\phi(u)$ belongs to the nearest $(A,A')$-mincut of $\phi(p)$. Similarly, we construct a set $Q$ that contains every vertex $u$ such that $e_2\in \pi(\phi(u))$ and $\phi(u)$ belongs to the nearest $(B,B')$-mincut of $\phi(q)$. It is easy to observe that, using Lemma \ref{lem : projection mapping} and Lemma \ref{lem : lca queries on skeleton}, sets $P$ and $Q$ are constructed in ${\mathcal O}(n)$ time. 
It follows from Fact \ref{fact : closed under intersection and union} that $P\cup Q\cup C(S_{\mathcal B})$ is an $S$-mincut in which both edges $e,e'$ are contributing. This ensures that, using Skeleton, Projection mapping, and matrix ${\mathcal M}$, we can report an $S$-mincut $C$ in ${\mathcal O}(n)$ time such that both $e$ and $e'$ are contributing to $C$.\\

\noindent
Let us now consider the case when $\pi(e)$ intersects $\pi(e')$ at a cycle-edge or a tree-edge, say edge $f=(M,N)$ of Skeleton. Without loss of generality, let $C_1$ be the nearest $(M,N)$-mincut of $x$ and $C_2$  be the nearest $(M,N)$-mincut of $x'$.
We also consider the other three possible cases ($\{x,y\}$, $\{x,y'\}$, and $\{x',y'\}$) if $C_1$ or $C_2$ do not satisfy the conditions of the following lemma. This lemma can be established immediately using Fact \ref{fact : closed under intersection and union}.
\begin{lemma} \label{lem : single $S$-mincut}
    Both edges $e=(x,y)$ and $e'=(x',y')$ are contributing to a single $S$-mincut if and only if $y'\notin C_1$ and $y\notin C_2$.
\end{lemma}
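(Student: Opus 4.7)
\textbf{Proof plan for Lemma \ref{lem : single $S$-mincut}.}

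The plan is to prove both directions by leveraging the uniqueness/minimality of nearest $(s_1,s_2)$-mincuts and Fact \ref{fact : closed under intersection and union}. Throughout, let $s_1$ be a Steiner vertex with $\pi(\phi(s_1))$ on the $M$-side of the minimal cut defined by the edge $f=(M,N)$, and $s_2$ a Steiner vertex on the $N$-side; these exist because $f$ lies on a minimal cut whose bunch contains an $S$-mincut separating the two sides. By construction, $C_1$ (resp.\ $C_2$) is the nearest $(s_1,s_2)$-mincut of $x$ (resp.\ $x'$) in the sense of Definition \ref{def : nearest s,t mincut of a vertex}, so $s_1,x\in C_1$, $s_1,x'\in C_2$, and $s_2\notin C_1\cup C_2$.

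For the forward direction, assume an $S$-mincut $C$ exists to which both $e$ and $e'$ contribute, and normalize so that $x,x'\in C$ and $y,y'\in \overline{C}$. Since $f\in \pi(e)\cap \pi(e')$, edges $e,e'$ contribute to at least one $S$-mincut in the bunch $\mathcal B$ defined by $f$; I would first argue that the specific cut $C$ itself must lie in $\mathcal B$, which forces $s_1\in C$ and $s_2\in \overline C$. Now $C$ is an $S$-mincut containing $s_1,x$ and excluding $s_2$; by the uniqueness of the nearest $(s_1,s_2)$-mincut of $x$ (an easy consequence of Fact \ref{fact : closed under intersection and union}), the minimality clause of Definition \ref{def : nearest s,t mincut of a vertex} yields $C_1\subseteq C$. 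Since $y'\notin C$, this gives $y'\notin C_1$. The symmetric argument with $x'$ and $C_2$ yields $y\notin C_2$.

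For the backward direction, assume $y'\notin C_1$ and $y\notin C_2$. First observe that because $f\in \pi(e)$ and $C_1$ belongs to the bunch $\mathcal B$, the edge $e=(x,y)$ contributes to $C_1$; since $x\in C_1$, this forces $y\notin C_1$. Symmetrically $y'\notin C_2$. Combining with the hypothesis, neither $y$ nor $y'$ lies in $C_1\cup C_2$, while both $x,x'\in C_1\cup C_2$. Since $s_1\in C_1\cap C_2$ and $s_2\notin C_1\cup C_2$, Fact \ref{fact : closed under intersection and union} applies and shows $C_1\cup C_2$ is an $S$-mincut, which by the above observations has both $e$ and $e'$ among its contributing edges.

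The only nontrivial step is checking that the $S$-mincut $C$ in the forward direction must actually belong to the bunch $\mathcal B$ (so that $s_1\in C$, $s_2\in \overline C$ and the minimality comparison with $C_1$ is well-posed). I expect this to follow from Lemma \ref{lem : edge is projected to a proper path}: because both $e$ and $e'$ project to proper paths containing $f$, any $S$-mincut to which they simultaneously contribute and in which $x,x'$ lie on one side must partition $S$ exactly as the minimal cut through $f$ does; otherwise, intersecting/unioning $C$ with a representative $S$-mincut of $\mathcal B$ and invoking submodularity (Lemma \ref{submodularity of cuts}) produces a proper $S$-mincut subset of $C$ contradicting minimality. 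With this verification in place, both directions go through cleanly and the lemma follows.
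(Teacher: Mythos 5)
You have the right scaffolding: Fact~\ref{fact : closed under intersection and union} together with minimality of the nearest $(s_1,s_2)$-mincut handles the forward direction once you know $s_1\in C$ and $s_2\notin C$, and $C_1\cup C_2$ is the witness in the reverse direction --- this is precisely what the paper's one-line appeal to Fact~\ref{fact : closed under intersection and union} is shorthand for. However, the step you flag as ``the only nontrivial one'' is exactly where the proof breaks, so there is a genuine gap.

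Your argument that $C$ must lie in the bunch ${\mathcal B}$ defined by $f$ (so that $s_1\in C$ and $s_2\notin C$) has two flaws. First, the claim is not true as stated: since both $e$ and $e'$ contribute to $C$, the defining edge(s) $g$ of $C$'s bunch must lie in $\pi(e)\cap\pi(e')$ (Lemma~\ref{lem : edge is projected to a proper path}), but nothing forces $g=f$; a different edge $g$ on the shared proper path corresponds to a different Steiner partition, so $C$ does \emph{not} in general ``partition $S$ exactly as the minimal cut through $f$ does.'' Second, the proposed justification is vacuous: you say that intersecting/unioning $C$ with a representative of ${\mathcal B}$ and invoking submodularity ``produces a proper $S$-mincut subset of $C$ contradicting minimality,'' but $C$ is an \emph{arbitrary} $S$-mincut here with no minimality property, so having a proper $S$-mincut subset contradicts nothing. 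What one actually has to show is that $s_1\in C$ and $s_2\notin C$ hold even when $g\ne f$, which requires tracking where $s_1$, $s_2$, and $x$ sit relative to the two sides of $g$ in the cactus and invoking the tight-cut containment of Lemma~\ref{lem : tight cut reporting} to pin down $C\cap S$; your sketch does not do this. A related, smaller slip appears in the reverse direction: ``$f\in\pi(e)$ and $C_1$ belongs to ${\mathcal B}$, so $e$ contributes to $C_1$'' is not a valid inference --- $f\in\pi(e)$ only guarantees that $e$ contributes to \emph{some} $S$-mincut of ${\mathcal B}_f$, not to $C_1$ in particular (nor is $C_1\in{\mathcal B}_f$ automatic under Definition~\ref{def : nearest s,t mincut of a vertex}). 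The conclusion $y\notin C_1$ does hold --- take any $D\in{\mathcal B}_f$ to which $e$ contributes with $x\in D$; then $C_1\subseteq D$ by minimality and $y\notin D$ --- but you need that argument rather than the one you give. Once $y\notin C_1$ and $y'\notin C_2$ are secured, your application of Fact~\ref{fact : closed under intersection and union} to $C_1\cup C_2$ is correct.
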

Observe that, in a similar way as explained for the case when $e$ and $e'$ projected to different cycle-edges of the same cycle, using matrix ${\mathcal M}$ in Lemma \ref{lem : matrix M}, the conditions of Lemma \ref{lem : single $S$-mincut} can be verified in ${\mathcal O}(1)$ time. Moreover, an $S$-mincut in which both $e,e'$ are contributing can be reported in ${\mathcal O}(n)$ time, if $e,e'$ contributes to an $S$-mincut. The Skeleton ${\mathcal H}_S$, quotient mapping $\phi$, projection mapping $\pi$, and Matrix ${\mathcal M}$ occupy overall ${\mathcal O}((n-|S|)^2+n)$ space. 
So, this completes the proof of the following lemma.
\begin{lemma}  \label{lem : dual failure edges are projected to proper paths}
    There exists an ${\mathcal O}((n-|S|)^2+n)$ space data structure that, after the failure of any pair of edges $e,e'$ projected to proper paths of the Skeleton ${\mathcal H}_S$, can report in ${\mathcal O}(1)$ time the capacity of $S$-mincut. Moreover, an $S$-mincut can be reported in ${\mathcal O}(n)$ time for the resulting graph. 
\end{lemma}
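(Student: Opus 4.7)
The plan is to bundle the pieces already developed in this subsection into one data structure, and then resolve the query by a short case analysis on how $\pi(e)$ and $\pi(e')$ interact in the Skeleton ${\mathcal H}_S$. I would store the following four components: the Skeleton ${\mathcal H}_S$ with the augmented structure supporting path-intersection queries from Lemma~\ref{lem : lca queries on skeleton} (${\mathcal O}(|S|)$ space), the quotient mapping $\phi$ and the projection mapping $\pi$ from Lemma~\ref{lem : projection mapping} (${\mathcal O}(n)$ space), and the matrix ${\mathcal M}$ of Lemma~\ref{lem : matrix M} (${\mathcal O}((n-|S|)^2)$ space). Summing gives the claimed ${\mathcal O}((n-|S|)^2+n)$ space bound.

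For the query after the failure of $e=(x,y),e'=(x',y')$, first I would use Lemma~\ref{lem : lca queries on skeleton} to obtain in ${\mathcal O}(1)$ time the intersection of $\pi(e)$ and $\pi(e')$. By Fact~\ref{fact : quotient mapping} and the definition of projection of an edge, this intersection governs the existence of a single $S$-mincut containing both $e$ and $e'$: if there is no common (tree- or cycle-) edge and no pair of cycle-edges from a common cycle, then by Lemma~\ref{lem : property of skeleton} and Lemma~\ref{lem : edge is projected to a proper path} no $S$-mincut can contain both failed edges, so the capacity drops by exactly one and an $S$-mincut of the resulting graph can be reported in ${\mathcal O}(n)$ time using the single edge Sensitivity Oracle of Theorem~\ref{thm : single edge Sensitivity Oracle} applied to the edge in $\{e,e'\}$ whose removal alone reduces $\lambda_S$.

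If the intersection returned is a pair of distinct cycle-edges $e_1=(A,A'), e_2=(B,B')$ of a common cycle, Lemma~\ref{lem : different cycle edges} guarantees that $\lambda_S$ decreases by exactly $2$. To report an $S$-mincut containing both $e,e'$, I would use the matrix ${\mathcal M}$: for each of the two endpoints of $e$ I can decide in ${\mathcal O}(1)$ time, by reading ${\mathcal M}[\phi(x)][\cdot]$ and applying the path-test of Lemma~\ref{lem : lca queries on skeleton}, whether it lies in the nearest $(A,A')$-mincut of the other endpoint; this selects a ``source side'' endpoint $p\in\{x,y\}$, and analogously $q\in\{x',y'\}$ on the $(B,B')$ side. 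Then I would sweep all vertices $u$ of $G$, including in the output set $P$ every $u$ with $e_1\in\pi(\phi(u))$ that is in the nearest $(A,A')$-mincut of $\phi(p)$, and analogously build $Q$ with respect to $e_2$ and $\phi(q)$; by Fact~\ref{fact : closed under intersection and union}, $P\cup Q\cup C(S_{{\mathcal B}})$ is an $S$-mincut containing both $e$ and $e'$, and by Lemma~\ref{lem : projection mapping} this sweep takes ${\mathcal O}(n)$ time in total.

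If the intersection returned is a single edge $f=(M,N)$, I would verify the condition of Lemma~\ref{lem : single $S$-mincut}: looping over the four possible assignments of which endpoint of $e$ and which endpoint of $e'$ plays the role of $x,x'$, I check in ${\mathcal O}(1)$ time via ${\mathcal M}$ and Lemma~\ref{lem : lca queries on skeleton} whether $y'$ lies outside the nearest $(M,N)$-mincut $C_1$ of $x$ and $y$ lies outside the nearest $(M,N)$-mincut $C_2$ of $x'$. If some assignment satisfies this, the capacity drops by $2$ and an $S$-mincut containing both edges can be produced in ${\mathcal O}(n)$ time by the same ${\mathcal M}$-based sweep as above; otherwise the capacity drops by $1$ and Theorem~\ref{thm : single edge Sensitivity Oracle} supplies a resulting $S$-mincut in ${\mathcal O}(n)$ time. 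In every branch capacity is reported in ${\mathcal O}(1)$ time and a cut in ${\mathcal O}(n)$ time. The only delicate step is the correctness of the ${\mathcal M}$-based identification of membership in the nearest $(A,A')$-mincut across different intersection edges; this is where Lemma~\ref{lem : respecting nearest cuts} is essential, since it ensures that the single ordered pair stored in ${\mathcal M}[\mu][\nu]$ faithfully represents membership relative to every edge in $\pi(\mu)\cap\pi(\nu)$.
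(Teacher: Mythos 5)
Your proposal is correct and follows essentially the same route as the paper: the same four stored components (Skeleton with the Lemma~\ref{lem : lca queries on skeleton} structure, $\phi$, $\pi$, matrix ${\mathcal M}$) giving the ${\mathcal O}((n-|S|)^2+n)$ space bound, the same three-way case split on the intersection of $\pi(e)$ and $\pi(e')$ (empty intersection, a pair of distinct cycle-edges of a common cycle, a single common edge), the same invocation of Lemma~\ref{lem : different cycle edges} for the distinct cycle-edge case and Lemma~\ref{lem : single $S$-mincut} for the common-edge case, the same ${\mathcal M}$-based sweep with $P\cup Q\cup C(S_{{\mathcal B}})$ via Fact~\ref{fact : closed under intersection and union} to report the cut, and the same reliance on Lemma~\ref{lem : respecting nearest cuts} to justify that storing one ordered pair per entry of ${\mathcal M}$ suffices across the whole intersection path. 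The only cosmetic difference is that you present the check over the four endpoint assignments as an explicit loop, whereas the paper fixes one assignment without loss of generality and notes the other three are symmetric.
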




\subsection{Both Failed Edges are Belonging to a Steiner Node} \label{sec : Steiner node}
Suppose both failed edges belong to a Steiner node and the corresponding $(\lambda_S+1)$ class be ${\mathcal W}$. We first consider the case when ${\mathcal W}$ is a Singleton $(\lambda_S+1)$ class. Later, in this section, we extend the result to $(\lambda_S+1)$ classes that contain more than one Steiner vertices.


\subsubsection{Handling Dual Edge Failures in a Singleton $(\lambda_S+1)$ class}
Suppose there exists exactly one Steiner vertex $s$ in ${\mathcal W}$. For any $(\lambda_S+1)$ cut $C$, we assume that $s\in C$; otherwise we can consider $\overline{C}$ instead of $C$. To determine whether both $e$ and $e'$ contribute to a single $(\lambda_S+1)$ cut, by Lemma \ref{lem : G(W)}, it is sufficient to work with graph $G({\mathcal W})$. Hence, we consider graph $G({\mathcal W})$ instead of $G$.
Note that there are four possibilities depending on which vertex of $\{x,y\}$ and which vertex of $\{x',y'\}$ belong to $C$, where $C$ is a $(\lambda_S+1)$ cut in which both edge $e,e'$ are contributing. Without loss of generality, here we give a procedure for verifying the existence of a $(\lambda_S+1)$ cut $C$ such that $x,x'\in C$ and $y,y'\in \overline{C}$. The other cases can also be verified using the same procedure.

Data structure ${\mathcal Q}_S({\mathcal W})$ from Theorem \ref{thm : data structure for singleton class} can determine in ${\mathcal O}(1)$ time whether there exists a $(\lambda_S+1)$ cut in which one of the two failed edges is contributing. However, it fails to determine whether both edges contribute to a single $(\lambda_S+1)$ cut.

Let $C_1$ be a nearest $(\lambda_S+1)$ cut from vertex $x$ to vertex $y$ (refer to Definition \ref{def : nearest minimum+1 cut}). Similarly, let $C_2$ be a nearest $(\lambda_S+1)$ cut from vertex $x'$ to vertex $y'$.  Observe that the following is the necessary condition for verifying whether both $e,e'$ contribute to a single $(\lambda_S+1)$ cut.
\begin{lemma} [Necessary Condition] \label{lem : new necessary condition} 
    If there is a single $(\lambda_S+1)$ cut $C$ such that $x,x'\in C$ and $y,y'\in \overline{C}$, then $y'\notin C_1$ and $y\notin C_2$. 
\end{lemma}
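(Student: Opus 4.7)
I argue by contradiction. Suppose $y' \in C_1$; I will exhibit a $(\lambda_S+1)$ cut strictly contained in $C_1$ and still containing $x$, contradicting the minimality of $C_1$ among $(\lambda_S+1)$ cuts containing $x$. The natural candidate is $C' := C \cap C_1$. By the convention $s \in C$ and $s \in C_1$, so $s, x \in C'$; since $y \notin C_1$ one has $y \in \overline{C'}$; and since $y' \in C_1$ (the contradiction hypothesis) while $y' \in \overline{C}$, we get $y' \in C_1 \setminus C'$, so the inclusion $C' \subsetneq C_1$ is strict.

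\textbf{Lower bound on $c(C')$.} The cut $C'$ is a Steiner cut because $s \in C'$ and a Steiner witness of $\overline{C}$ (which exists since $C$ is itself a Steiner cut) lies in $\overline{C'}$. Moreover, $C'$ subdivides the $(\lambda_S+1)$-class ${\mathcal W}$ since $x$ and $y$ both lie in ${\mathcal W}$ on opposite sides of $C'$. A Steiner cut that subdivides a $(\lambda_S+1)$-class cannot be an $S$-mincut, so $c(C') \ge \lambda_S + 1$.

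\textbf{Upper bound, easy case.} If $\overline{C \cup C_1}$ contains a Steiner vertex, then $C \cup C_1$ is itself a Steiner cut; since $s, x$ lie inside while $y$ lies outside, $C \cup C_1$ also subdivides ${\mathcal W}$, giving $c(C \cup C_1) \ge \lambda_S + 1$. Sub-modularity (Lemma \ref{submodularity of cuts}(1)) then yields $c(C') + c(C \cup C_1) \le c(C) + c(C_1) = 2\lambda_S + 2$, forcing $c(C') = \lambda_S + 1$. This produces the desired $(\lambda_S+1)$ cut inside $C_1$ and completes the contradiction.

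\textbf{The hard case and main obstacle.} The principal obstacle is the case in which $\overline{C \cup C_1}$ contains no Steiner vertex, so $C \cup C_1$ need not be a Steiner cut and sub-modularity (1) alone is too weak. Here, because $C$ and $C_1$ are both Steiner cuts, the Steiner witness of $\overline{C_1}$ must lie in $C \setminus C_1$ and the Steiner witness of $\overline{C}$ must lie in $C_1 \setminus C$. Since $y' \in C_1 \setminus C$ lies in ${\mathcal W}$, the cut $C_1 \setminus C$ is a Steiner cut subdividing ${\mathcal W}$, so $c(C_1 \setminus C) \ge \lambda_S + 1$; and $C \setminus C_1$ is at least a Steiner cut, so $c(C \setminus C_1) \ge \lambda_S$. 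I then switch to the symmetric-difference form of sub-modularity (Lemma \ref{submodularity of cuts}(2)) and exploit the observation that the failed edge $e = (x,y)$ runs from $x \in C \cap C_1$ to $y \in \overline{C \cup C_1}$; this edge contributes $2$ to the sub-modularity slack, tightening the inequality to
\begin{equation*}
c(C \setminus C_1) + c(C_1 \setminus C) \;\le\; c(C) + c(C_1) - 2 \;=\; 2\lambda_S.
\end{equation*}
This contradicts the lower bound $c(C \setminus C_1) + c(C_1 \setminus C) \ge \lambda_S + (\lambda_S + 1) = 2\lambda_S + 1$, so the hard case cannot occur. Hence $y' \notin C_1$; the symmetric claim $y \notin C_2$ follows by running the same argument with the roles of $(e, C_1)$ and $(e', C_2)$ interchanged.
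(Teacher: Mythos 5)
Your proof is correct. A few notes.

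The paper never supplies an explicit proof of this lemma; it is stated as an observation and then subsumed by Lemma~\ref{lem : necessary condition} (via Corollary~\ref{cor : necessary condition}), whose proof invokes \textsc{Property}~$\mathcal{P}_1$ (Lemma~\ref{lem : pair of bunches}) and \textsc{Property}~$\mathcal{P}_3$ (Lemma~\ref{lem : property p3}) as black boxes. Your two-case split faithfully reproduces the content of exactly those two properties in unpacked form: the ``easy case'' (a Steiner vertex exists in $\overline{C\cup C_1}$) is, step for step, the sub-modularity argument in the proof of Lemma~\ref{lem : pair of bunches}, producing $C\cap C_1$ as a $(\lambda_S+1)$ cut that violates minimality of $C_1$; the ``hard case'' is the edge-deletion tightening used in the proof of Lemma~\ref{lem : property p3}. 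So the underlying technique is the same as the paper's, just inlined rather than routed through the two named properties, which is a perfectly legitimate trade-off. One presentational caveat: the inequality $c(C\setminus C_1)+c(C_1\setminus C)\le c(C)+c(C_1)-2$ is not literally Lemma~\ref{submodularity of cuts}(2); to stay within what the paper provides, phrase it as the paper does in Lemma~\ref{lem : property p3} --- apply Lemma~\ref{submodularity of cuts}(2) in $G(\mathcal{W})\setminus e$ where $c(C)=c(C_1)=\lambda_S$, and then observe that $e$ contributes to neither $C\setminus C_1$ nor $C_1\setminus C$, so the bound $c(C\setminus C_1)+c(C_1\setminus C)\le 2\lambda_S$ carries over to $G(\mathcal{W})$. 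With that small adjustment the argument is airtight, and it even establishes the lemma for an \emph{arbitrary} nearest $(\lambda_S+1)$ cut $C_1$ from $x$ to $y$ (not merely some choice with $C_1\subseteq C$), which is the strength the data structure actually needs.
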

If $e'$ contributes to $C_1$ or $e$ contributes to $C_2$ then we are done. Henceforth we consider that $x\in C_1\setminus C_2$ and $x'\in C_2\setminus C_1$.

Suppose the capacity of global mincut is at least $4$.
The data structure ${\mathcal Q}_S({\mathcal W})$ in Theorem \ref{thm : data structure for singleton class}$(1)$ can verify this necessary condition (Lemma \ref{lem : new necessary condition}) in ${\mathcal O}(1)$ time using \textsc{belong} query.
However, for any vertex $u$, when the capacity of global mincut is at most $3$, recall that the data structure ${\mathcal Q}_S({\mathcal W})$ in Theorem \ref{thm : data structure for singleton class} may not store every cut $C\in N_{S({\mathcal W})}(u)$. So, there might exist a cut $C\in N_{S({\mathcal W})}(x)$ such that $y,y'\notin C$ and the data structure ${\mathcal N}_{\mathcal W}^{\le 3}(x)$ in Lemma \ref{lem : data structure for a single vertex at most 3} does not store $C$. Hence, data structure ${\mathcal Q}_S({\mathcal W})$ in Theorem \ref{thm : data structure for singleton class}$(2)$ might fail to verify the necessary condition (Lemma \ref{lem : new necessary condition}). We now establish the following lemma that helps in answering the necessary condition (Lemma \ref{lem : new necessary condition}) even when global mincut is at most $3$ and also plays a key role in answering dual edge failure query. 

\begin{lemma} [\textsc{Property ${\mathcal P}_3$}] \label{lem : property p3}
    Let $e_1$ be an edge that belongs to ${\mathcal W}$ and contributes to a pair of crossing $(\lambda_S+1)$ cuts $C,C'$ of $G({\mathcal W})$. Neither $C\setminus C'$ nor $C'\setminus C$ contains a vertex from ${\mathcal W}$ if and only if each of the two sets $C\setminus C'$ and $C'\setminus C$ contains a Steiner vertex from $S({\mathcal W})$.
\end{lemma}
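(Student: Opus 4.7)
I will work with the four corner sets $A=C\cap C'$, $B=C\setminus C'$, $D=C'\setminus C$, $E=\overline{C\cup C'}$, after the standing normalisation placing the unique Steiner vertex $s$ of $\mathcal W$ in $A$; crossing of $C,C'$ then makes all four sets non-empty. Since $e_1=(p,q)$ lies in $\mathcal W$ and contributes to both cuts, exactly one of two configurations occurs: \emph{Case A}, in which $\{p,q\}$ is split between $A$ and $E$ (so the edge $e_1$ is counted in $e(A,E)$), or \emph{Case B}, in which $\{p,q\}$ is split between $B$ and $D$ (so $e_1$ is counted in $e(B,D)$). This dichotomy, combined with where $e_1$ sits relative to the two flavours of the submodular identity, is the central pivot.

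\textbf{Forward direction.} The hypothesis $(B\cup D)\cap\mathcal W=\emptyset$ immediately rules out Case B, which would force $p$ or $q$ into $(B\cup D)\cap\mathcal W$. So we are in Case A, and since the vertex set of $G(\mathcal W)$ is $\mathcal W\cup S(\mathcal W)$ with $s\in A$ being the only Steiner vertex of $\mathcal W$, every vertex of the non-empty sets $B,D$ that is not in $\mathcal W$ lies in $S(\mathcal W)\setminus\{s\}$, providing the required Steiner witnesses.

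\textbf{Backward direction.} Assume $t_1\in B\cap S(\mathcal W)$ and $t_2\in D\cap S(\mathcal W)$. The plan is to contradict (i) any hypothetical $\mathcal W$-vertex $u\in B\cap\mathcal W$ in Case A (the $D$-side being symmetric), and (ii) the very occurrence of Case B. For (i), the three $\mathcal W$-vertices $s\in A$, $q\in E$, $u\in B$ upgrade $c(A),c(E),c(B)$ to $\ge\lambda_S+1$ via the $(\lambda_S+1)$-class property, while $t_2\in D$ gives $c(D)\ge\lambda_S$; the symmetric-difference submodular identity $c(C)+c(C')=c(B)+c(D)+2e(A,E)$, with $e(A,E)\ge 1$ because $e_1\in e(A,E)$, forces $c(B)+c(D)\le 2\lambda_S$, contradicting $c(B)+c(D)\ge 2\lambda_S+1$. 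For (ii), the endpoints $p\in B\cap\mathcal W$ and $q\in D\cap\mathcal W$ make both $B$ and $D$ subdivide $\mathcal W$, so $c(B),c(D)\ge\lambda_S+1$ and likewise $c(A)\ge\lambda_S+1$; the intersection--union identity $c(C)+c(C')=c(A)+c(E)+2e(B,D)$ with $e(B,D)\ge 1$ then squeezes $c(E)\le\lambda_S-1$, and the contradiction is closed by extracting a matching lower bound on $c(E)$ from the structure of $G(\mathcal W)$ — using, for instance, that each singleton $\{t_i\}$ is an $S$-mincut of capacity $\lambda_S$ and that any Steiner cut separating $s$ from a $\mathcal W$-vertex has capacity at least $\lambda_S+1$, applied to augmented cuts such as $E\cup\{t_1,t_2\}$.

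\textbf{Main obstacle.} Case A is routine once the $(\lambda_S+1)$-class upgrade is coupled to the correct submodular identity, because the edge $e_1$ sits directly in the slack term $2e(A,E)$. Case B is the genuinely hard half: the analogous submodular squeeze only yields an upper bound on $c(E)$, and $E$ may legitimately contain no Steiner vertex at all, so the bare $(\lambda_S+1)$-class upgrade does not apply to $E$ directly. Restoring the missing Steiner structure requires careful use of the fact that $\mathcal W$ remains a $(\lambda_S+1)$-class in $G(\mathcal W)$ together with the singleton $S$-mincuts $\{t_i\}$, and tightening the resulting augmented-cut inequality enough to clash with $c(E)\le\lambda_S-1$ is where the main technical work of the proof lies.
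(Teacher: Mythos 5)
Your forward direction is the same as the paper's: every vertex of $G(\mathcal{W})$ outside $\mathcal{W}$ lies in $S(\mathcal{W})$, so a non-empty corner set with no $\mathcal{W}$-vertex must hold a Steiner vertex. Your Case~A argument for the converse is also right and is in substance the paper's: deleting $e_1$ and applying Lemma~\ref{submodularity of cuts}(2) in $G(\mathcal{W})\setminus e_1$, then transferring the inequality back because $e_1$ contributes to neither $C\setminus C'$ nor $C'\setminus C$, is exactly your identity $c(C)+c(C')=c(C\setminus C')+c(C'\setminus C)+2\gamma$ applied with $e_1$ counted in $\gamma$, the number of edges between $C\cap C'$ and $\overline{C\cup C'}$.

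You are right that Case~B (endpoints of $e_1$ split between $C\setminus C'$ and $C'\setminus C$) resists the same squeeze, and your sketch for it will not close, because in Case~B the backward implication is actually false. Take $G(\mathcal{W})$ on $\{s,p,q,w,t_1,t_2\}$ with multi-edge multiplicities $(s,p):3$, $(s,q):3$, $(p,q):1$, $(p,w):1$, $(q,w):1$, $(p,t_2):4$, $(q,t_1):4$, and Steiner set $S(\mathcal{W})=\{s,t_1,t_2\}$. Then $\lambda_S=4$ (attained only by $\{t_1\}$ and $\{t_2\}$), $\mathcal{W}=\{s,p,q,w\}$ is the Singleton $(\lambda_S+1)$ class, and $C=\{s,p,t_2\}$, $C'=\{s,q,t_1\}$ are crossing Steiner cuts of capacity $5=\lambda_S+1$. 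The edge $e_1=(p,q)$ lies in $\mathcal{W}$ and contributes to both cuts; $C\setminus C'=\{p,t_2\}$ and $C'\setminus C=\{q,t_1\}$ each contain a Steiner vertex, yet each also contains a $\mathcal{W}$-vertex. The paper's own proof has the same blind spot: its step ``observe that, in graph $G(\mathcal{W})$, edge $e_1$ contributes neither to cut $C\setminus C'$ nor to cut $C'\setminus C$'' is precisely the assertion that Case~B does not occur, which is left unjustified and fails in the example above. The correct fix is to add to the lemma the hypothesis that one endpoint of $e_1$ lies in $C\cap C'$ and the other in $\overline{C\cup C'}$; this is the configuration in which $\mathcal{P}_3$ is actually invoked in Lemma~\ref{lem : necessary condition} and Lemma~\ref{lem : sufficient condition}, so the downstream results are unaffected. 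In short, your Case~A is correct and matches the paper, and your inability to close Case~B is not a flaw in your argument but a flaw in the lemma statement itself.
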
 
\begin{proof}
    Let edge $e_1=(p,q)$ belongs to ${\mathcal W}$ and $e_1$ contributes to $C,C'$ of $G({\mathcal W})$ (refer to Figure \ref{fig : proof}$(i)$). Suppose each of the two sets $C\setminus C'$ and $C'\setminus C$ contains no vertex from ${\mathcal W}$. Since they are crossing, there must exist at least one vertex from $S({\mathcal W})$ in each of the two sets $C\setminus C'$ and $C'\setminus C$.

    We now prove the converse part. Suppose each of the two sets $C\setminus C'$ and $C'\setminus C$ contains a Steiner vertex from $S({\mathcal W})$. Let $t_1\in C'\setminus C$ and $t_2\in C\setminus C'$. So, both $C\setminus C'$ and $C'\setminus C$ are Steiner cuts. Assume to the contrary that there is a vertex $u\in {\mathcal W}$ such that $u\in C\setminus C'$. Let us now consider graph $G({\mathcal W})\setminus e_1$ (refer to Section \ref{sec : Preliminaries} for the definition). Since $e_1$ contributes to both $C$ and $C'$, the capacity of cut $C$ and $C'$ in $G({\mathcal W})\setminus e_1$ is $\lambda_S$. By sub-modularity of cuts (Lemma \ref{submodularity of cuts}(2)), in $G({\mathcal W})\setminus e_1$, $c(C\setminus C')+c(C'\setminus C)\le 2\lambda_S$. Observe that, in graph $G({\mathcal W})$, edge $e_1$ contributes neither to cut $C\setminus C'$ nor to cut $C'\setminus C$. Hence, the following equation holds for graph $G({\mathcal W})$ as well. 
    \begin{equation} \label{eq : sum of diagonal cut capacities}
        c(C\setminus C')+c(C'\setminus C)\le 2\lambda_S
    \end{equation}
    It is also given that $u\in (C\setminus C')\cap {\mathcal W}$. So, $C\setminus C'$ is a Steiner cut that subdivides ${\mathcal W}$ because $s\notin C\setminus C'$. Therefore, in $G({\mathcal W})$, the capacity of $C\setminus C'$ is at least $\lambda_S+1$. It follows from Equation \ref{eq : sum of diagonal cut capacities} that $c(C'\setminus C)\le \lambda_S-1$. Since $t_1\in C'\setminus C$ and $s\notin C'\setminus C$, therefore, we have a Steiner cut $C'\setminus C$ of capacity at most $\lambda_S-1$ in $G({\mathcal W})$, a contradiction.       
\end{proof}
Exploiting Lemma \ref{lem : property p3}, we establish the following result (Lemma \ref{lem : necessary condition}) for nearest $(\lambda_S+1)$ cuts of an edge belonging to ${\mathcal W}$. This ensures that ${\mathcal Q}_S({\mathcal W})$ is indeed sufficient for verifying the necessary condition (Lemma \ref{lem : new necessary condition}) even when the capacity of global mincut is at most $3$.
\begin{figure}
 \centering
    \includegraphics[width=\textwidth]{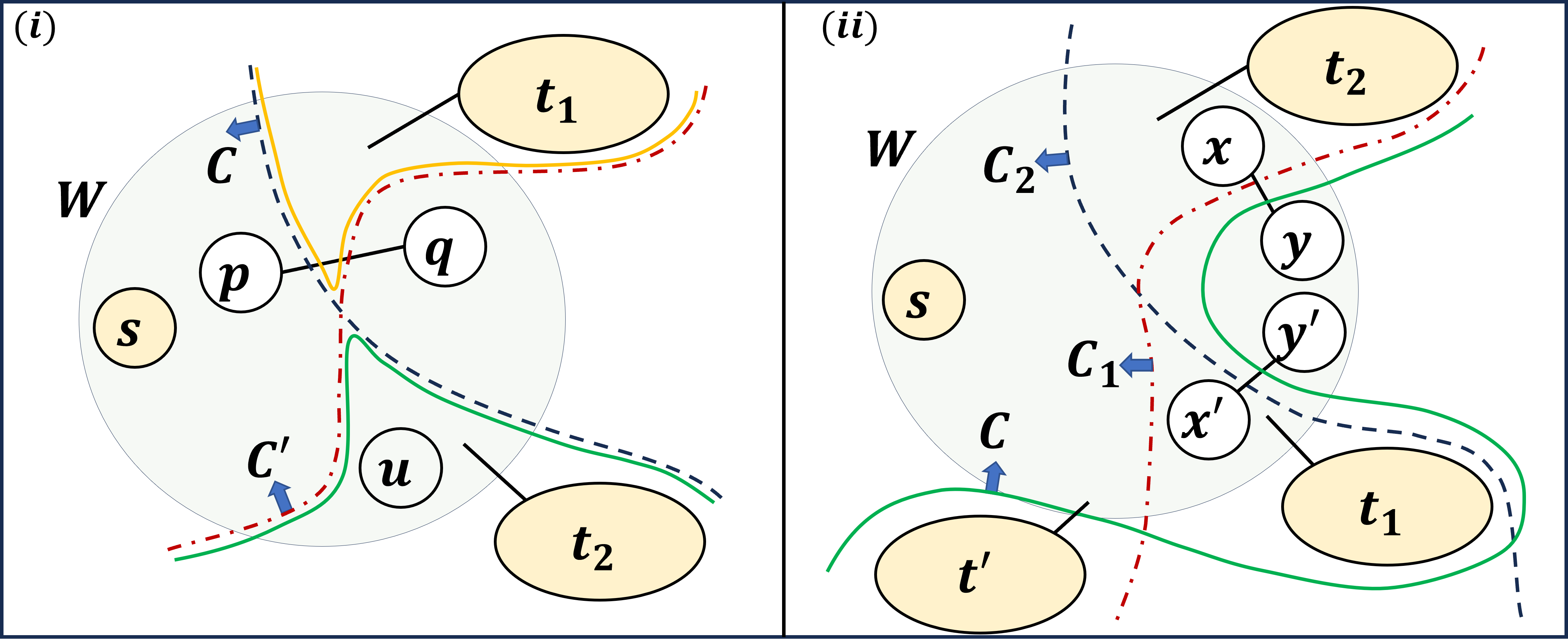} 
   \caption{(i) Illustration of the proof of Lemma \ref{lem : property p3} and Lemma \ref{lem : necessary condition}. Green and Yellow curves represent the Steiner cuts $C\setminus C'$ and $C'\setminus C$, respectively. $(ii)$ Illustration of the proof of Lemma \ref{lem : sufficient condition}.}
  \label{fig : proof}. 
\end{figure}
\begin{lemma} \label{lem : necessary condition}
    Let $(p,q)$ be any edge and $u$ be a vertex such that $p,q,u\in {\mathcal W}$ and $q\ne u\ne s$. Let $C$ be any cut from $N_{S({\mathcal W})}(p)$ with $q\notin C$. Then, $u\in C$ if and only if there is no $(\lambda_S+1)$ cut $C'$ such that $s,p\in C'$ and $q,u\in \overline{C'}$.  
\end{lemma}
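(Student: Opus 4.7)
The plan is to handle the backward direction with a one-line observation and prove the forward direction by contradiction, using Properties $\mathcal{P}_1$ and $\mathcal{P}_3$ in tandem to force one side of $C$ or $C'$ to lack Steiner vertices, violating the Steiner-cut requirement.

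First I would dispose of the easy direction. If $u\notin C$, then $C$ itself witnesses the existence claim (take $C'=C$, which is a $(\lambda_S+1)$ cut with $s,p\in C$ and $q,u\in \overline{C}$). For the forward direction I assume for contradiction that $u\in C$ and some $(\lambda_S+1)$ cut $C'$ with $s,p\in C'$ and $q,u\in \overline{C'}$ exists. The first preparatory step is a reduction: by iteratively shrinking $C'$ to a minimal $(\lambda_S+1)$ subcut containing $p$, I may assume $C'\in N_{S({\mathcal W})}(p)$, since taking subsets preserves $s\in C'$ (orientation convention) and $q,u\notin C'$. Next I verify that $C,C'$ cross. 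Three of the four corner sets are immediately nonempty: $s,p\in C\cap C'$, $u\in C\setminus C'$, and $q\in \overline{C\cup C'}$. If $C'\setminus C=\emptyset$, then $C'\subsetneq C$ (strict since $u\in C\setminus C'$), producing a $(\lambda_S+1)$ cut strictly inside $C$ and containing $p$, which contradicts $C\in N_{S({\mathcal W})}(p)$. Hence all four corners are nonempty.

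The heart of the argument would then combine the two properties. Since $C,C'\in N_{S({\mathcal W})}(p)$ and $q\in \overline{C}\cap \overline{C'}\cap {\mathcal W}$, the contrapositive of Property $\mathcal{P}_1$ (Lemma~\ref{lem : pair of bunches}) forces $\overline{C\cup C'}$ to contain no Steiner vertex. Separately, the edge $(p,q)$ lies inside ${\mathcal W}$ and contributes to both the crossing $(\lambda_S+1)$ cuts $C$ and $C'$ (because $p\in C\cap C'$ and $q\in \overline{C\cup C'}$), so Property $\mathcal{P}_3$ (Lemma~\ref{lem : property p3}) applies; the presence of $u\in (C\setminus C')\cap {\mathcal W}$ negates the ``each side has a Steiner vertex'' half of its biconditional, forcing at least one of $C\setminus C'$ or $C'\setminus C$ to contain no Steiner vertex. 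The closing move is that every $(\lambda_S+1)$ cut is by definition a Steiner cut, hence both of its sides must contain a Steiner vertex. If $C\setminus C'$ lacks one, then $\overline{C'}=(C\setminus C')\cup \overline{C\cup C'}$ lacks one too, contradicting that $C'$ is a Steiner cut; symmetrically, if $C'\setminus C$ lacks a Steiner vertex, then $\overline{C}=(C'\setminus C)\cup \overline{C\cup C'}$ lacks one, contradicting that $C$ is a Steiner cut.

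The expected main obstacle is that a direct sub-modularity attack stalls: although $C\cap C'$ is a Steiner cut subdividing ${\mathcal W}$ and so has capacity $\ge \lambda_S+1$, nearness of $C$ only pushes this capacity up to $\ge \lambda_S+2$, and the resulting bound $c(C\cup C')\le \lambda_S$ is not in itself a contradiction because $C\cup C'$ need not be a Steiner cut. The structural route above circumvents this: $\mathcal{P}_1$ pinpoints the absence of Steiner vertices in $\overline{C\cup C'}$, $\mathcal{P}_3$ (applied along the edge $(p,q)$) pinpoints their absence in one of the two diagonals, and these two facts jointly dispose of the Steiner mass that ought to live in $\overline{C}$ or $\overline{C'}$. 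The reduction to $C'\in N_{S({\mathcal W})}(p)$ is what unlocks $\mathcal{P}_1$ and is therefore essential to the whole scheme.
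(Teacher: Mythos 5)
Your proof is correct and takes essentially the same route as the paper's: both dispatch the converse by taking $C'=C$, and both derive the forward-direction contradiction by first invoking \textsc{Property $\mathcal{P}_1$} (Lemma~\ref{lem : pair of bunches}) to exclude Steiner vertices from $\overline{C\cup C'}$ and then invoking \textsc{Property $\mathcal{P}_3$} (Lemma~\ref{lem : property p3}) along the edge $(p,q)$ to conclude that a diagonal also lacks a Steiner vertex, contradicting that $C$ and $C'$ are Steiner cuts. Your write-up is in fact a bit more careful than the paper's, which quietly asserts $C'\in N_{S({\mathcal W})}(p)$ without the shrinking argument you make explicit, and which leaves the crossing of $C$ and $C'$ to be inferred from the positions of $t_1,t_2$ rather than checking the four corner sets directly.
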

\begin{proof}
    Suppose $u\in C$ for a cut $C\in N_{S({\mathcal W})}(p)$ with $q\notin C$. Assume to the contrary that there is a $(\lambda_S+1)$ cut $C'$ such that $s,p\in C'$ and $q,u\in \overline{C'}$ (refer to Figure \ref{fig : proof}$(i)$). Since $C$ and $C'$ are Steiner cuts, there must exist Steiner vertices $t_1,t_2$ such that $t_1\in \overline{C}$ and $t_2\in \overline{C'}$. It is given that $q\in \overline{C}\cap \overline{C'}\cap {\mathcal W}$ and $C,C'\in N_{S({\mathcal W})}(p)$. So, by Lemma \ref{lem : pair of bunches}, for each $i\in \{1,2\}$, $t_i$ cannot belong to $\overline{C}\cap \overline{C'}$. This implies that $t_1\in C'\setminus C$ and $t_2\in C\setminus C'$. So, we have a pair of crossing $(\lambda_S+1)$ cuts $C,C'$ such that edge $(p,q)$ contributes to both of them. Since $u\in (C\setminus C')\cap {\mathcal W}$, by Lemma \ref{lem : property p3}, $C\setminus C'$ or $C'\setminus C$ cannot contain any Steiner vertex, a contradiction.

    Suppose there is no $(\lambda_S+1)$ cut $C'$ such that $s,p\in C'$ and $q,u\in \overline{C'}$. Assume that there is a cut $C\in N_{S({\mathcal W})}(p)$ with $q\notin C$ such that $u\notin C$. Then $C$ is a $(\lambda_S+1)$ cut such that $s,p\in C$ and $q,u\notin C$, a contradiction. 
\end{proof}
Lemma \ref{lem : necessary condition} immediately leads to the following corollary.
\begin{corollary} \label{cor : necessary condition}
    Let $\lambda$ be the capacity of global mincut. For any $\lambda>0$, $y\in C_2$ (likewise $y'\in C_1$) if and only if there is no cut $C'\in N_{S({\mathcal W})}(x')$ with $y'\in \overline{C'}$ (likewise $C'\in N_{S({\mathcal W})}(x)$ with $y\in \overline{C'}$) that has $y\in \overline{C'}$ (likewise $y'\in \overline{C'}$). 
\end{corollary}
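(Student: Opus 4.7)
\textbf{Proof proposal for Corollary \ref{cor : necessary condition}.}
The plan is to derive the corollary as an almost immediate consequence of Lemma \ref{lem : necessary condition} by instantiating that lemma on the failed edge $e'=(x',y')$ together with the vertex $y$, and then rephrasing the quantifier over arbitrary $(\lambda_S+1)$ cuts as a quantifier over the (smaller) set $N_{S({\mathcal W})}(x')$ of nearest $(\lambda_S+1)$ cuts. We treat only the first statement ($y \in C_2$); the bracketed claim on $y' \in C_1$ follows from the identical argument after swapping the roles of $(e,e')$ and $(x,y,x',y')$.

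First, I would observe that the hypotheses of Lemma \ref{lem : necessary condition} are satisfied with $(p,q,u)=(x',y',y)$. Indeed, $(x',y')$ is an edge, both failed edges belong to the Singleton $(\lambda_S+1)$ class ${\mathcal W}$, so $x',y',y\in {\mathcal W}$, and the nontrivial case of the corollary requires $y\ne y'$ as well as $y\ne s$ (if $y=s$ then $y\in C_2$ by the convention $s\in C_2$ and no $C'\in N_{S({\mathcal W})}(x')$ can have $y=s\in \overline{C'}$, so both sides of the biconditional are vacuously true). Applying Lemma \ref{lem : necessary condition} with $C=C_2$ (a valid choice since $C_2\in N_{S({\mathcal W})}(x')$ with $y'\notin C_2$ by the definition of the nearest $(\lambda_S+1)$ cut from $x'$ to $y'$) yields
\[
y\in C_2 \iff \text{there is no $(\lambda_S+1)$ cut $C'$ with $s,x'\in C'$ and $y,y'\in \overline{C'}$.}
\]

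The remaining step is to show that the right-hand side of the above equivalence is identical to the right-hand side of the corollary, i.e.\ that the following two statements are equivalent: (a) there is no $(\lambda_S+1)$ cut $C'$ with $s,x'\in C'$ and $y,y'\in \overline{C'}$; (b) there is no cut $C'\in N_{S({\mathcal W})}(x')$ with $y,y'\in \overline{C'}$. The direction (a)$\Rightarrow$(b) is immediate: every $C'\in N_{S({\mathcal W})}(x')$ is, by Definition \ref{def : nearest minimum+1 cut} and the convention $s\in C'$, a $(\lambda_S+1)$ cut satisfying $s,x'\in C'$. For (b)$\Rightarrow$(a), suppose such a $(\lambda_S+1)$ cut $C'$ exists. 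Starting from $C'$, iteratively replace it by any proper $(\lambda_S+1)$ subcut still containing $x'$; this process terminates in finitely many steps (the vertex set is finite) and produces a minimal $(\lambda_S+1)$ cut $C''\subseteq C'$ with $x'\in C''$. By Definition \ref{def : nearest minimum+1 cut} we have $C''\in N_{S({\mathcal W})}(x')$, and since $C''\subseteq C'$ we still have $y,y'\in \overline{C''}$, contradicting (b). Combining the two equivalences yields the corollary.

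The argument is essentially mechanical, so there is no serious obstacle; the only point that merits a line of care is the cut-shrinking step, where one must argue that restricting to a minimal $(\lambda_S+1)$ cut containing $x'$ preserves the property $y,y'\in \overline{\cdot}$, which is automatic from set containment. The corollary therefore extends the usability of the data structure ${\mathcal Q}_S({\mathcal W})$ from Theorem \ref{thm : data structure for singleton class} to the necessary condition of Lemma \ref{lem : new necessary condition} in ${\mathcal O}(1)$ time regardless of the global mincut capacity of $G({\mathcal W})$, since the question ``does there exist $C'\in N_{S({\mathcal W})}(x')$ with $y,y'\in \overline{C'}$?'' is exactly the existential query \textsc{belong}$(x',y',y)$ answered by ${\mathcal Q}_S({\mathcal W})$.
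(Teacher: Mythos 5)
Your proposal is correct and follows the same route the paper has in mind when it says Lemma~\ref{lem : necessary condition} ``immediately leads to'' the corollary: you instantiate the lemma at $(p,q,u)=(x',y',y)$ with $C=C_2$, then show that quantifying over arbitrary $(\lambda_S+1)$ cuts $C'$ with $s,x'\in C'$ is equivalent to quantifying over $N_{S({\mathcal W})}(x')$, via the standard cut-shrinking argument for getting a minimal representative. Two small cleanup notes. First, you flag that the nontrivial case needs $y\ne y'$ and $y\ne s$ (matching the hypotheses of the lemma) but only verify $y=s$; the $y=y'$ case also works out (both sides are false: $y=y'\in\overline{C_2}$ gives $y\notin C_2$, and $C_2$ itself witnesses the existential on the right), and since the corollary is stated unconditionally you should say so. Second, the closing remark that the RHS ``is exactly the existential query $\textsc{belong}(x',y',y)$ answered by ${\mathcal Q}_S({\mathcal W})$'' is the paper's intended application but is a bit glib: for global mincut $\le 3$ the data structure ${\mathcal N}_{\mathcal W}^{\le 3}$ stores only a pruned subset of $N_{S({\mathcal W})}(x')$, and why the pruned family suffices to answer this two-vertex query is itself the nontrivial content of the surrounding discussion — it is not part of proving the corollary and does not follow from the corollary alone.
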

It follows from Corollary \ref{cor : necessary condition} that, for any capacity of global mincut, the necessary condition (Lemma \ref{lem : new necessary condition}) is verified in ${\mathcal O}(1)$ time by executing \textsc{belong} queries using data structure ${\mathcal Q}_S({\mathcal W})$ from Theorem \ref{thm : data structure for singleton class}.  

Suppose the necessary condition (Lemma \ref{lem : new necessary condition}) holds for edges $e$ and $e'$. Moreover, we have $s\in C_1\cap C_2$. This implies that $c(C_1\cap C_2)$ is at least $\lambda_S+1$. Unfortunately, the union, that is $C_1\cup C_2$, may not be a $(\lambda_S+1)$ cut. This is because $(\lambda_S+1)$ cuts are not closed under union.
To arrive at a sufficient condition, we first establish the following observation.
\begin{lemma} \label{lem : existence of a steiner vertex}
    If there exists a Steiner vertex $t$ such that $t\in \overline{C_1\cup C_2}$, then there is a single $(\lambda_S+1)$ cut in which both edges $e,e'$ are contributing. 
\end{lemma}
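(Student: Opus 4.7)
\medskip
\noindent
\textbf{Proof proposal.} The plan is to show that the cut $C_1\cup C_2$ itself is a $(\lambda_S+1)$ cut to which both failed edges contribute. Throughout, we work in $G({\mathcal W})$, and we carry the standing assumptions of this subsection: $s\in C_1\cap C_2$, edge $e=(x,y)$ with $x\in C_1\setminus C_2$, $y\notin C_1$, edge $e'=(x',y')$ with $x'\in C_2\setminus C_1$, $y'\notin C_2$, and the necessary condition from Lemma \ref{lem : new necessary condition} holds, namely $y'\notin C_1$ and $y\notin C_2$. The hypothesis of the lemma gives a Steiner vertex $t\in \overline{C_1\cup C_2}$.

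First I would pin down the endpoint memberships for $C_1\cup C_2$. From the above we get $\{x,x'\}\subseteq C_1\cup C_2$ and $\{y,y'\}\cap (C_1\cup C_2)=\emptyset$. Hence, once we establish that $C_1\cup C_2$ is a $(\lambda_S+1)$ cut, both $e$ and $e'$ will automatically contribute to it, completing the proof. So the task reduces to controlling the capacity of $C_1\cup C_2$.

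Next I would check that $C_1\cap C_2$ and $C_1\cup C_2$ are both Steiner cuts that subdivide ${\mathcal W}$. Since $s\in C_1\cap C_2$ and the hypothesized Steiner vertex $t$ lies in $\overline{C_1\cup C_2}$, both $C_1\cap C_2$ and $C_1\cup C_2$ do separate two Steiner vertices; so they are Steiner cuts. Moreover $s\in {\mathcal W}$ lies in $C_1\cap C_2$ while $y\in {\mathcal W}$ lies in $\overline{C_1\cup C_2}\subseteq \overline{C_1\cap C_2}$, so each of $C_1\cap C_2$ and $C_1\cup C_2$ subdivides the $(\lambda_S+1)$ class ${\mathcal W}$. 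Therefore each of them has capacity at least $\lambda_S+1$.

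Finally I would invoke sub-modularity of cuts (Lemma \ref{submodularity of cuts}(1)) to close the argument:
\begin{equation*}
c(C_1\cap C_2)+c(C_1\cup C_2)\;\le\; c(C_1)+c(C_2)\;=\;2(\lambda_S+1).
\end{equation*}
Combined with the lower bound $c(C_1\cap C_2),\,c(C_1\cup C_2)\ge \lambda_S+1$ from the previous step, this forces equality, so $C_1\cup C_2$ is indeed a $(\lambda_S+1)$ cut. Together with the endpoint accounting of the first step, this exhibits a single $(\lambda_S+1)$ cut to which both failed edges contribute. I do not expect a serious obstacle here: the only delicate point is verifying that the "Steiner cut" and "subdivides ${\mathcal W}$" conditions hold for both the intersection and the union simultaneously, which is precisely where the new hypothesis $t\in \overline{C_1\cup C_2}$ is used (previously, $C_1\cup C_2$ was not guaranteed to be a Steiner cut at all).
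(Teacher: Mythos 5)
Your proof is correct and follows essentially the same route as the paper's: show that $C_1\cup C_2$ is a Steiner cut subdividing ${\mathcal W}$ (hence of capacity at least $\lambda_S+1$), note the same for $C_1\cap C_2$, and apply sub-modularity to force $c(C_1\cup C_2)=\lambda_S+1$, after which the endpoint accounting gives that both $e$ and $e'$ contribute. The paper establishes $c(C_1\cap C_2)\ge\lambda_S+1$ in the text just before the lemma rather than inside the proof, but the logical content is identical.
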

\begin{proof}
    Suppose Steiner vertex $t$ belongs to $\overline{C_1\cup C_2}$. Since $s\in (C\cap C)\cap {\mathcal W}$, $C_1\cup C_2$ is a Steiner cut. Observe that $y,y'\notin C\cup C'$ and $y,y'$ belongs to ${\mathcal W}$. So, $C\cup C'$ subdivides ${\mathcal W}$. Hence, $C_1\cup C_2$ has capacity at least $\lambda_S+1$. For cuts $C_1$ and $C_2$, by sub-modularity of cuts (Lemma \ref{submodularity of cuts}(1)), $c(C_1\cup C_2)+c(C_1\cap C_2)\le \lambda_S+2$. It follows that $C_1\cup C_2$ must be a $(\lambda_S+1)$ cut. Therefore, we have an $(\lambda_S+1)$ cut $C_1\cup C_2$ in which both edges $e,e'$ are contributing. 
\end{proof}
It follows from Lemma \ref{lem : existence of a steiner vertex} that the existence of a Steiner vertex $t$ with $t\in \overline{C_1 \cup C_2}$ guarantees that $C_1\cup C_2$ is a Steiner cut. Unfortunately, it is not always the case that $C_1\cup C_2$ contains a Steiner vertex. In addition, it seems quite possible that even if there does not exist any Steiner vertex in set $\overline{C_1\cup C_2}$, there is a $(\lambda_S+1)$ cut to which both edges $e,e'$ are contributing (e.g., refer to $(\lambda_S+1)$ cut $C$ in Figure \ref{fig : proof}$(ii)$). Interestingly, exploiting Lemma \ref{lem : property p3}, in the following lemma, we establish that the existence of a Steiner vertex in set $\overline{C_1\cup C_2}$ is also a sufficient condition.  

\begin{lemma} \label{lem : sufficient condition}
    There exists no Steiner vertex $t$ such that $t\in \overline{C_1\cup C_2}$ if and only if there is no $(\lambda_S+1)$ cut $C$ such that both edges $e,e'$ contribute to $C$.
\end{lemma}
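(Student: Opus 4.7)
The lemma is an equivalence. The easy direction is the contrapositive of Lemma~\ref{lem : existence of a steiner vertex}: if no $(\lambda_S+1)$ cut has both $e,e'$ contributing, then in particular $C_1\cup C_2$ is not such a cut, so Lemma~\ref{lem : existence of a steiner vertex} tells us $\overline{C_1\cup C_2}$ has no Steiner vertex.

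For the hard direction, assume $\overline{C_1\cup C_2}\cap S({\mathcal W})=\emptyset$ and suppose, for contradiction, that a $(\lambda_S+1)$ cut $C$ with both failed edges contributing exists. By symmetry I fix the orientation $x,x'\in C$, $y,y'\in \overline{C}$, $s\in C$. The necessary condition (Lemma~\ref{lem : new necessary condition}) places $y,y'$ in $\overline{C_1\cup C_2}\cap {\mathcal W}$; and since $C_1,C_2$ are Steiner cuts while $\overline{C_1\cup C_2}$ is Steiner-free, there exist Steiner $t_1\in C_2\setminus C_1$, $t_2\in C_1\setminus C_2$, and a Steiner $t^*\in \overline{C}\cap(C_1\cup C_2)$.

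The first reduction is a case split from the nearest-cut property: each $C_i\cap C$ is Steiner, subdivides ${\mathcal W}$ and is contained in $C_i$, so by the minimality of $C_i$ either $C_i\subseteq C$, or $c(C_i\cap C)\ge \lambda_S+2$. If both $C_1\subseteq C$ and $C_2\subseteq C$, then $t^*\in \overline{C}\subseteq \overline{C_1\cup C_2}$, an immediate contradiction. In every other case, I use sub-modularity to deduce $c(C_i\cup C)\le\lambda_S$, and since $C_i\cup C$ subdivides ${\mathcal W}$ it cannot be Steiner. This forces all Steiner vertices to lie in $(C_1\cap C_2)\cup C$, so $t^*\in C_1\cap C_2\cap \overline{C}$, $t_1\in C_2\cap C\setminus C_1$, $t_2\in C_1\cap C\setminus C_2$; both pairs $(C_1,C)$ and $(C_2,C)$ then cross. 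Applying \textsc{Property ${\mathcal P}_3$} to each pair with the corresponding failed edge ($e$ for $(C_1,C)$, $e'$ for $(C_2,C)$) gives no ${\mathcal W}$-vertex in either pair of diagonals, and a third application of P3 to the crossing pair $(C_1,C_2)$ with one of the failed edges (which by then contributes to both) likewise excludes ${\mathcal W}$-vertices from $C_1\setminus C_2$ and $C_2\setminus C_1$. The net structural conclusion is that $x,x'\in C_1\cap C_2\cap C$, and the three "off-diagonal" regions $C_1\cap \overline{C_2}\cap \overline{C}$, $\overline{C_1}\cap C_2\cap \overline{C}$, $\overline{C_1\cup C_2}\cap C$ contain neither Steiner nor ${\mathcal W}$-vertex, hence are empty.

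The closing step is a numerical contradiction. I apply the \textsc{Gen-3-Star Lemma} to the $(\lambda_S+1)$ cuts $\overline{C_1},\overline{C_2},\overline{C}$: their three star cuts $C_1\cap C_2\cap\overline{C}$, $C_1\cap\overline{C_2}\cap C$, $\overline{C_1}\cap C_2\cap C$ each contain a Steiner vertex and (by the P3 applications) no ${\mathcal W}$-vertex, while $V_\Phi=C_1\cap C_2\cap C$ contains $x\in{\mathcal W}$; the lemma then bounds $c(\overline{C_1\cup C_2})\le 3$. However, the two failed edges both contribute to $\overline{C_1\cup C_2}$, giving $c(\overline{C_1\cup C_2})\ge 2$. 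The \emph{main obstacle} is turning this tight window into an outright contradiction: I plan to exploit the structural fact (from the construction of $G({\mathcal W})$) that no $(\lambda_S+1)$ cut may cross an $S$-mincut in $G({\mathcal W})$, applied to the Steiner cut $(C_1\triangle C_2)\cap C$, which crosses both $C_1$ and $C_2$. This pushes $c((C_1\triangle C_2)\cap C)$ above $\lambda_S$ and, via the sub-modularity identity $c(C_1)+c(C_2)=c((C_1\triangle C_2)\cap C)+2\cdot e(C_1\cap C_2,\overline{C_1\cup C_2})+(\text{nonneg.\ terms})$, combined with $e(C_1\cap C_2,\overline{C_1\cup C_2})\ge 2$ from the failed edges, yields the final arithmetic contradiction with the Gen-3-Star bound; getting this bookkeeping right, uniformly in the capacity of the global mincut of $G({\mathcal W})$ (including the small-capacity regime), is the most delicate part of the argument.
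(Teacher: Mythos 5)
Your easy direction matches the paper. For the hard direction, you set up the right ingredients but the argument has a genuine gap and does not close, and along the way it misses a shortcut the paper takes.

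First, your case split is incomplete. You split into ``both $C_1\subseteq C$ and $C_2\subseteq C$'' versus ``every other case'', but the ``every other case'' analysis only works when \emph{neither} containment holds. If exactly one holds, say $C_1\subseteq C$, then $C_1\cup C=C$ \emph{is} a Steiner cut, so you cannot conclude that all Steiner vertices lie in $(C_1\cap C_2)\cup C$; indeed in that sub-case any Steiner vertex of $\overline C$ lies in $\overline C\cap C_2\setminus C_1$, which sits outside $(C_1\cap C_2)\cup C$. So the ``net structural conclusion'' that $t^*\in C_1\cap C_2\cap \overline C$, $t_1\in C_2\cap C\setminus C_1$, $t_2\in C_1\cap C\setminus C_2$ is not established for this sub-case, and the rest of the argument does not apply there.

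Second, even in the ``neither contained'' case your closing numerical step does not yield a contradiction. The \textsc{Gen-3-Star Lemma} applied to $\overline{C_1},\overline{C_2},\overline C$ bounds $c(\overline{C_1\cup C_2\cup C})\le 3$ (not $c(\overline{C_1\cup C_2})$ as you write), and the two failed edges give $c(\overline{C_1\cup C_2\cup C})\ge 2$; these are compatible. You identify this tight window as ``the main obstacle'' and sketch a further route via $(C_1\triangle C_2)\cap C$, but that remains a sketch. More to the point, the obstacle is self-inflicted: once you have applied Property ${\mathcal P}_3$ to the crossing pair $(C_2,C)$ with edge $e'$ (both diagonals Steiner thanks to $t^*$ and $t_2$), the vertex $x\in C\setminus C_2$ is a ${\mathcal W}$-vertex in a diagonal, which is already the contradiction, because the standing assumption in this part of the paper (stated just before the necessary-condition lemma) is $x\in C_1\setminus C_2$ and $x'\in C_2\setminus C_1$. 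Your own deduction ``$x,x'\in C_1\cap C_2\cap C$'' contradicts that assumption on its face; you simply do not notice. The paper's proof is shorter and avoids the case split entirely: it picks a Steiner vertex $t'\in\overline C$, places it WLOG in $C_1$, shows the pair $(C_1,C)$ cross with Steiner vertices in both diagonals (deriving $t_1\in C\setminus C_1$ from the nearest-cut minimality via a submodularity argument), and then applies ${\mathcal P}_3$ once, with $x'\in C\setminus C_1$ supplying the forbidden ${\mathcal W}$-vertex. No use of \textsc{Gen-3-Star} is needed.
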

\begin{proof}
Suppose there exists no Steiner vertex $t$ such that $t\in \overline{C_1 \cup C_2}$. Assume to the contrary that there is a $(\lambda_S+1)$ cut $C$ such that both edges $e,e'$ contribute to $C$ (refer to Figure \ref{fig : proof}$(ii)$). Since $C$ is a Steiner cut, there is a Steiner vertex $t'$ such that $t'\in \overline{C}$. Since $t'$ cannot belong to $\overline{C_1}\cap \overline{C_2}$, it follows that $t'$ belongs to at least one of $C_1$ and $C_2$. Without loss of generality, assume that $t'\in C_1$. So $t'\in C_1\setminus C$, and hence, $C_1\setminus C$ defines a Steiner cut. Since $C_1$ and $C_2$ are Steiner cuts, there exist Steiner vertices $t_1$ and $t_2$ such that $t_1\notin C_1$ and $t_2\notin C_2$. We show that $t_1\in C\setminus C_1$. Assume to the contrary that $t_1$ belongs to $\overline{C}$. 
For cuts $C,C_1$, there are vertices $s,x\in C\cap C_1$ and vertices $y,t_1\in \overline{C\cup C_1}$ such that $s,x,y\in {\mathcal W}$. Hence, both $C\cap C_1$ and $C\cup C_1$ subdivides ${\mathcal W}$ and they are Steiner cuts. By Definition \ref{def : lambda+1 S-class}, cuts $C\cap C_1$ and $C\cup C_1$ is at least $\lambda_S+1$. By sub-modularity of cuts (Lemma \ref{submodularity of cuts}(1)), for cuts $C$ and $C_1$, we have $c(C\cap C_1)+c(C\cup C_1)\le 2\lambda_S+2$. It follows that $C\cap C_1$ and $C\cup C_1$ both are $(\lambda_S+1)$ cuts.  
Moreover, $C\cap C_1$ is a proper subset of $C_1$ as $C\cap C_1$ does not contain $t'$. In addition, since $x\in C\cap C_1$, therefore, $C_1$ cannot belong to $N_{S({\mathcal W})}(x)$, a contradiction due to Definition \ref{def : nearest minimum+1 cut}. Therefore, we have $t_1\in C\setminus C_1$. Observe that edge $(x,y)$ contributes to both $C_1$ and $C$. Since both edges $e,e'$ contribute to $C$, so we have $x'\in C$. Moreover, it is known that $x'\notin C_1$. Therefore, $x'\in C\setminus C_1$. Finally we have an edge $(x,y)$ that contributes to a pair of crossing $(\lambda_S+1)$ cuts $C_1,C$ since $s\in C\cap C_1$, $t_1\in C\setminus C_1$, $t'\in C_1\setminus C$, and $y,y'\in \overline{C\cup C_1}$. We have also shown that both $C_1\setminus C$ and $C\setminus C_1$ contain a Steiner vertex from $S({\mathcal W})$. However, there is a vertex $x'\in {\mathcal W}$ that belongs to $C\setminus C_1$, a contradiction due to Lemma \ref{lem : property p3}.

The converse part holds because of Lemma \ref{lem : existence of a steiner vertex}.
\end{proof}
Based on Lemma \ref{lem : sufficient condition}, we now state the following lemma that helps in efficiently verifying whether the failure of $e,e'$ from a Singleton $(\lambda_S+1)$ class reduces the capacity of $S$-mincut. 
\begin{lemma} \label{lem : condition of dual failure}
    Both failed edges $e=(x,y)$ and $e'=(x',y')$ contribute to a single $(\lambda_S+1)$ cut $C$ in $G({\mathcal W})$ such that $x,x'\in C$ and $y,y'\in \overline{C}$ if and only if there exists a $(\lambda_S+1)$ cut $C_1\in N_{S({\mathcal W})}(x)$, a $(\lambda_S+1)$ cut $C_2\in N_{S({\mathcal W})}(x')$, and a Steiner vertex $t\in S({\mathcal W})$ such that $y,y'\notin C_1 \cup C_2$ and $t\in \overline{C_1 \cup C_2}$.
\end{lemma}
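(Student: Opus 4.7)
The plan is to derive this characterization by stitching together three ingredients that have already been established in this section: the necessary condition (Lemma~\ref{lem : new necessary condition}), the contrapositive of the sufficient condition (Lemma~\ref{lem : sufficient condition}), and a direct sub-modularity argument in the spirit of Lemma~\ref{lem : existence of a steiner vertex}. In both directions, the Steiner vertex $s$ of ${\mathcal W}$ belongs to every cut under consideration (by our standing convention), which supplies the ``intersection-is-nonempty'' side of every sub-modular argument.

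For the reverse (if) direction, I would show that $C := C_1 \cup C_2$ is itself the desired cut. Given the hypothesized $C_1 \in N_{S({\mathcal W})}(x)$, $C_2 \in N_{S({\mathcal W})}(x')$ with $y, y' \notin C_1 \cup C_2$ and a Steiner vertex $t \in \overline{C_1 \cup C_2}$, observe first that $s \in C_1 \cap C_2$ while $t \notin C_1 \cup C_2$, so $C_1 \cap C_2$ and $C_1 \cup C_2$ are both Steiner cuts. Next, since $y \notin C_1$, the intersection $C_1 \cap C_2$ contains $s \in {\mathcal W}$ but excludes $y \in {\mathcal W}$; similarly $C_1 \cup C_2$ contains $x \in {\mathcal W}$ (because $x \in C_1$) but excludes $y \in {\mathcal W}$. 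Hence both subdivide ${\mathcal W}$, so each has capacity at least $\lambda_S + 1$. The sub-modularity inequality (Lemma~\ref{submodularity of cuts}(1)) gives $c(C_1 \cap C_2) + c(C_1 \cup C_2) \le 2\lambda_S + 2$, forcing both to equal $\lambda_S + 1$. Since $x, x' \in C_1 \cup C_2$ and $y, y' \in \overline{C_1 \cup C_2}$, the cut $C_1 \cup C_2$ has capacity $\lambda_S+1$ and both failed edges contribute to it.

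For the forward (only if) direction, suppose such a $(\lambda_S+1)$ cut $C$ exists with $x, x' \in C$ and $y, y' \in \overline{C}$. I would pick $C_1$ to be a nearest $(\lambda_S+1)$ cut from $x$ to $y$; its existence is guaranteed by $C$ itself. By Definition~\ref{def : nearest minimum+1 cut}, $C_1 \in N_{S({\mathcal W})}(x)$ and $y \notin C_1$. Choose $C_2$ analogously for $x', y'$, giving $C_2 \in N_{S({\mathcal W})}(x')$ with $y' \notin C_2$. Applying Lemma~\ref{lem : new necessary condition} to the hypothesized $C$ together with these specific $C_1, C_2$ yields $y' \notin C_1$ and $y \notin C_2$, so in fact $y, y' \notin C_1 \cup C_2$. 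Finally, because a $(\lambda_S+1)$ cut to which both $e$ and $e'$ contribute exists, the contrapositive of Lemma~\ref{lem : sufficient condition} applied to these $C_1, C_2$ produces the required Steiner vertex $t \in \overline{C_1 \cup C_2}$.

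The main technical content has already been absorbed into \textsc{Property}~${\mathcal P}_3$ (Lemma~\ref{lem : property p3}), which powered Lemma~\ref{lem : sufficient condition}; the present lemma is essentially a repackaging whose three atomic conditions -- the two nearest-cut memberships plus the presence of a Steiner vertex in the complement of the union -- are exactly what can be tested by \textsc{belong}-queries on the data structure ${\mathcal Q}_S({\mathcal W})$ of Theorem~\ref{thm : data structure for singleton class} together with a search over $S({\mathcal W})$. The only real subtlety to double-check during writing is the asymmetry between the hypotheses of the two directions: the forward direction is allowed to use the specific nearest $(\lambda_S+1)$ cut \emph{from $x$ to $y$} in order to invoke Lemmas~\ref{lem : new necessary condition} and~\ref{lem : sufficient condition}, whereas the reverse direction must work with arbitrary elements of $N_{S({\mathcal W})}(x)$ and $N_{S({\mathcal W})}(x')$ and therefore relies purely on sub-modularity rather than any minimality of $C_1, C_2$.
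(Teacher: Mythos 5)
Your proof is correct and follows essentially the same route as the paper: the forward direction invokes Lemma~\ref{lem : new necessary condition} together with the contrapositive of Lemma~\ref{lem : sufficient condition}, exactly as the paper does. Your reverse direction simply inlines the sub-modularity argument that the paper packages as Lemma~\ref{lem : existence of a steiner vertex} and cites through Lemma~\ref{lem : sufficient condition}, so the mathematical content is the same.
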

\begin{proof}
    Suppose $C$ is a $(\lambda_S+1)$ cut in which both failed edges $e$ and $e'$ contribute. So, we have $x,x'\in C$ and $y,y'\in \overline{C}$. By Lemma \ref{lem : new necessary condition}, this implies that there exist a nearest $(\lambda_S+1)$ cut $C_i$ from $x$ to $y$ and a nearest $(\lambda_S+1)$ cut $C_j$ from $x'$ to $y'$ such that $C_i\subseteq C$, $C_j\subseteq C$, and for $C_i$ and $C_j$, we have $y'\notin C_i$ and $y\notin C_j$. So, $y,y'\notin C_i\cup C_j$. Moreover, by Lemma \ref{lem : sufficient condition}, there must exist a Steiner vertex $t$ such that $t\in \overline{C_i \cup C_j}$. 

    The proof of the converse part holds immediately from 
    Lemma \ref{lem : sufficient condition}.
\end{proof}

\noindent
\textbf{Answering Query for Dual Edge Failure:}
Upon failure of edge $e=(x,y)$ and $e'=(x',y')$ from  Singleton $(\lambda_S+1)$ class ${\mathcal W}$, the data structure ${\mathcal Q}_S({\mathcal W})$ from Theorem \ref{thm : data structure for singleton class} verify in ${\mathcal O}(1)$ time whether the first condition of Lemma \ref{lem : condition of dual failure}, that is, whether $C_1$ and $C_2$ exists such that $y,y'\notin C_1\cup C_2$. For the second condition, observe that, for cut $C_1$ and $C_2$, we have to find out whether there is a Steiner vertex $t$ such that $t\in \overline{C_1 \cup C_2}$. Trivially, it takes ${\mathcal O}(|S|)$ time to verify this condition. We now design a data structure that can determine this in ${\mathcal O}(1)$ time. 

\subsubsection*{An $O((n-|S|)^{2}+n)$ Space Data Structure with O(1) Query Time}
To determine whether there exists a Steiner vertex $t$ in $\overline{C_1\cup C_2}$, it requires ${\mathcal O}(|S|)$ time because \textsc{Property ${\mathcal P}_2$} (Lemma \ref{lem : a bunch belong to at most two cuts}) fails to hold for a pair of vertices in $G({\mathcal W})$. In other words, for a pair of vertices $u_1$ and $u_2$ in ${\mathcal W}$, there can be $\Omega(|S({\mathcal W})|)$ number of Steiner vertices that can belong to $\overline{C\cup C'}$, where $C\in N_{S({\mathcal W})}(u_1)$ and $C'\in N_{S({\mathcal W})}(u_2)$ (refer to Figure \ref{fig : 2nd figure in overview}).

  \begin{figure}
 \centering
    \includegraphics[width=200pt]{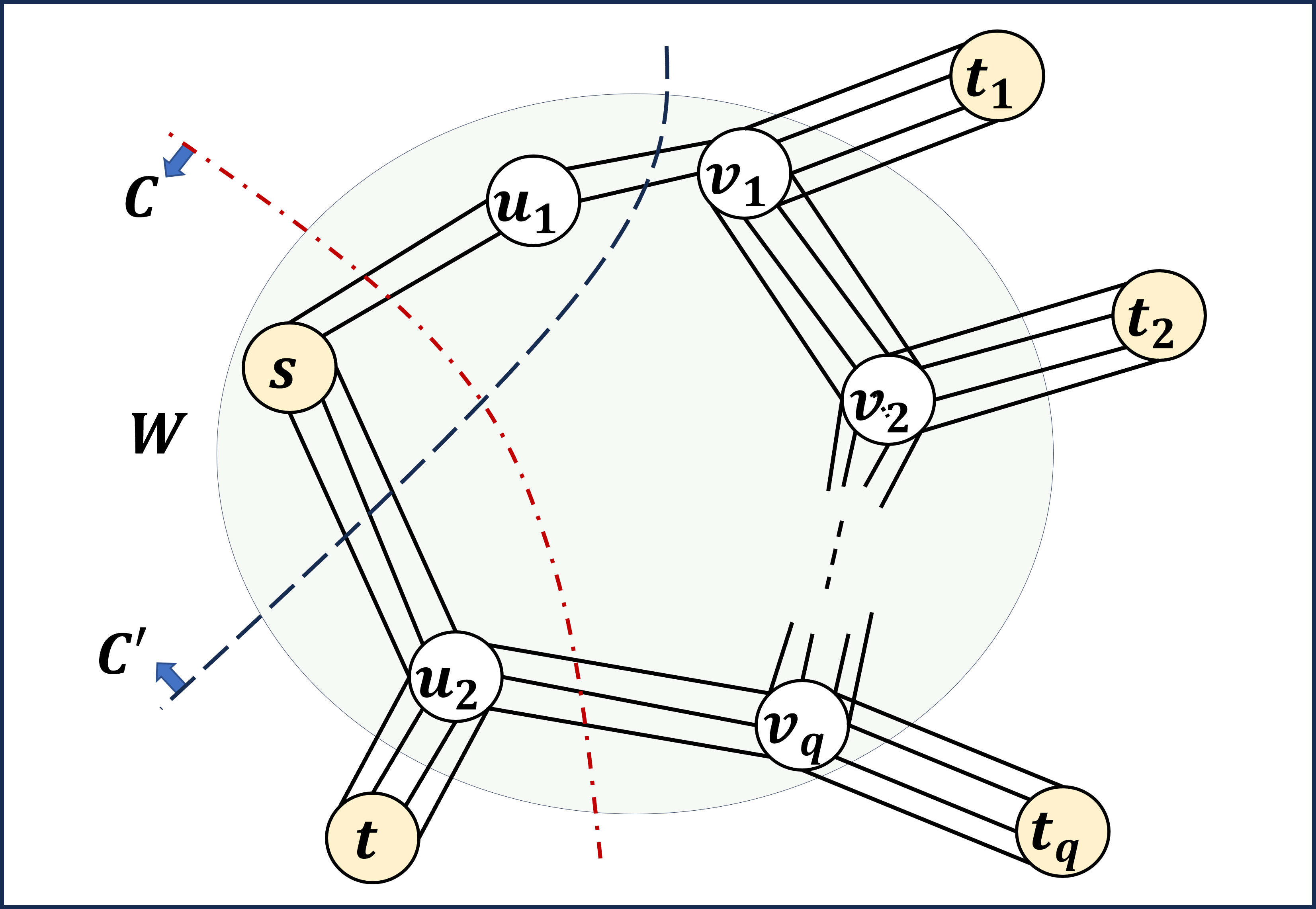} 
   \caption{Yellow vertices are Steiner vertices of $G({\mathcal W})$ and white vertices belong to ${\mathcal W}$. Let $q={\Omega}(|S({\mathcal W})|)$. In this graph, the capacity of S-mincut is $4$, $C'\in N_{S({\mathcal W})}(u_1)$ and $C\in N_{S({\mathcal W})}(u_2)$. There are $q$ Steiner vertices that belong to $\overline{C_1\cup C_2}$. }
  \label{fig : 2nd figure in overview}. 
\end{figure}

Let us consider the set ${\mathcal C}_{\lambda_S+1}$ consisting of all $(\lambda_S+1)$ cuts that subdivides the Singleton $(\lambda_S+1)$ class ${\mathcal W}$ and set ${\mathcal C}_s$ contains every cut from $N_{S({\mathcal W})}(u)$ for each $u\in {\mathcal W}$. Using ${\mathcal C}_{\lambda_S+1}$, we present a procedure (pseudocode is given in Algorithm \ref{alg : constructing S*}) for constructing a set $S^*\subseteq S({\mathcal W})$ such that every cut $C$ in ${\mathcal C}_S$ is marked with exactly one vertex from $S^*$. \\

\noindent
\textbf{Construction of set $S^*$:} Set $S^*$ is initially empty. Let ${\mathcal D}\gets {\mathcal C}_{\lambda_S+1}$ and $P\gets S({\mathcal W})$. Consider any Steiner vertex $s^*\in P$. For every cut $C\in {\mathcal D}$, if $s^*\in \overline{C}$, then we do the four steps -- (1) remove $C$ from ${\mathcal D}$, (2) mark $C$ with $s^*$, (3) $S^*\gets S^*\cup \{s^*\}$, and (4) for every Steiner vertex $t$ in $\overline{C}$, remove $t$ from $P$. We repeat this procedure till $P$ or ${\mathcal D}$ becomes an empty set.\\

\begin{algorithm}[h]
\caption{Construction of a Steiner Set $S^*$ for ${\mathcal W}$ given $S({\mathcal W})$}
\label{alg : constructing S*}
\begin{algorithmic}[1]
\Procedure{\textsc{ConstructionOfSteinerSet}$(S({\mathcal W}))$}{}
    \State Let $P\gets S({\mathcal W})\setminus \{s\}$, $Q\gets \emptyset$, and ${\mathcal D}\gets {\mathcal C}_{\lambda_S+1}$;
    \While{$P\ne \emptyset$ and ${\mathcal D}\ne \emptyset$}
    \State select a vertex $s^*$ from $P$;
    \State $Q\gets Q\cup \{s^*\}$;
    \For{every cut $C\in {\mathcal D}$}
        \If {$s^*\in \overline{C}$}
            \State Mark $C$ with $s^*$;
            \State ${\mathcal D}\gets {\mathcal D}\setminus \{C\}$;
            \For{every Steiner vertex $t\in \overline{C}\cap P$}
                \State Assign $P\gets P\setminus \{t\}$;
            \EndFor
            \Else \State do nothing;
        \EndIf
    \EndFor
    \EndWhile
    \State \Return $Q$;
\EndProcedure
\end{algorithmic}
\end{algorithm}

The following lemma holds from the construction of $S^*$.
\begin{lemma} \label{lem : a cut is marked with exactly one steiner vertex}
    Every cut of ${\mathcal C}_{\lambda_S+1}$ is marked with exactly one vertex of $S^*$. 
\end{lemma}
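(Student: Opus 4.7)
My plan is to split the claim into two parts: (i) no cut in ${\mathcal C}_{\lambda_S+1}$ is marked more than once, and (ii) every cut in ${\mathcal C}_{\lambda_S+1}$ is marked at least once. Part (i) is immediate from inspection of Algorithm \ref{alg : constructing S*}: whenever a cut $C$ is marked with some $s^*$ at line 8, the same iteration removes $C$ from ${\mathcal D}$ at line 9, so $C$ cannot be encountered in any later iteration.

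For part (ii), I will show that the algorithm terminates with ${\mathcal D} = \emptyset$ by maintaining the loop invariant: at the start of every iteration of the while-loop, for every $C \in {\mathcal D}$, the set $\overline{C}\cap P$ contains at least one Steiner vertex of $S({\mathcal W})\setminus\{s\}$. This invariant forces the loop to exit via ${\mathcal D}=\emptyset$ rather than via $P=\emptyset$. The base case is direct: every $C\in{\mathcal C}_{\lambda_S+1}$ is a Steiner cut with $s\in C$, so $\overline{C}$ contains some Steiner vertex $t\neq s$, and $t\in P$ initially because $P=S({\mathcal W})\setminus\{s\}$.

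The inductive step is the crux. Consider an iteration that selects $s^*$, marks the cuts ${\mathcal M}=\{C_1,\ldots,C_k\}\subseteq{\mathcal D}$ with $s^*\in\overline{C_i}$, and removes $\{s^*\}\cup\bigcup_i(\overline{C_i}\cap P)$ from $P$. For any $C\in{\mathcal D}$ that survives this iteration we have $s^*\in C$, so removing $s^*$ alone is harmless. Assume for contradiction that all Steiner vertices of $\overline{C}\cap P_{\text{old}}$ lie in $\bigcup_i\overline{C_i}$. Pick any such $t\in\overline{C}\cap\overline{C_i}$: then $s\in C\cap C_i$, $t\in\overline{C\cup C_i}$, and $s^*\in C\cap\overline{C_i}$. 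Since both $C$ and $C_i$ are $(\lambda_S+1)$ cuts subdividing ${\mathcal W}$, one can argue via sub-modularity (Lemma \ref{submodularity of cuts}) that $C\cap C_i$ and $C\cup C_i$ are both themselves $(\lambda_S+1)$ cuts in ${\mathcal C}_{\lambda_S+1}$, and the cut $C\cap(\bigcap_i C_i)$ obtained by iterating this intersection is a proper subset of $C$ lying in ${\mathcal D}$ whose complement still contains unremoved Steiner vertices. Re-running the analysis on this smaller cut forces a descent that cannot continue indefinitely, yielding the required contradiction.

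The main obstacle is making the descent in the inductive step rigorous when $k>1$: a single marked cut $C_i$ can cover several Steiner vertices of $\overline{C}$, and the cuts $\{C_i\}$ may interact nontrivially with one another. To control this, I anticipate needing \textsc{Property ${\mathcal P}_1$} (Lemma \ref{lem : pair of bunches}) to preclude pathological placements of ${\mathcal W}$-vertices inside $\overline{C}\cap\overline{C_i}$, and \textsc{Gen-3-Star Lemma} (Theorem \ref{thm : gen 3 star}) to bound the joint capacity of $C,C_i,C_j$ when three cuts simultaneously share complementary Steiner vertices. If the iterated sub-modularity argument proves too delicate, a fallback plan is to reformulate the invariant to track, for each $C\in{\mathcal D}$, a distinguished ``witness'' Steiner vertex in $\overline{C}\cap P$ whose preservation under the removal rules follows from a single application of sub-modularity to $C$ and an appropriately chosen marked $C_i$.
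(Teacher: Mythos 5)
You correctly split the claim and your part (i) is right; the paper itself gives no argument beyond ``holds from the construction,'' so flagging ``at least once'' as the non-trivial content and proposing the invariant $\overline{C}\cap P\neq\emptyset$ for every surviving $C\in{\mathcal D}$ is the right setup (it would force the loop to terminate with ${\mathcal D}=\emptyset$ rather than $P=\emptyset$). But your inductive step has two concrete holes. First, the assertion that $C\cup C_i$ lies in ${\mathcal C}_{\lambda_S+1}$ is false in general: $\overline{C\cup C_i}$ contains the Steiner vertex $t$, so $C\cup C_i$ is a Steiner cut, but nothing forces $\overline{C\cup C_i}$ to contain a vertex of ${\mathcal W}$, so $C\cup C_i$ need not subdivide ${\mathcal W}$ and can have capacity $\lambda_S$; sub-modularity alone pins only $c(C\cap C_i)\in\{\lambda_S+1,\lambda_S+2\}$ and $c(C\cup C_i)\in\{\lambda_S,\lambda_S+1\}$. (This failure of closure under union is exactly the obstruction that motivates the paper's $l$-cut/$m$-cut split and \textsc{Gen-3-Star}; it cannot be waved away by sub-modularity.) Second, the descent on $D:=C\cap\bigcap_i C_i$ collapses: $\overline{D}=\overline{C}\cup\bigcup_i\overline{C_i}$, and under your contradiction hypothesis every Steiner vertex of $\overline{C}\cap P_{\text{old}}$ lies in some $\overline{C_i}$, while every Steiner vertex of each $\overline{C_i}\cap P_{\text{old}}$ is removed in the current iteration by the algorithm's own rule. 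Hence $\overline{D}\cap P_{\text{new}}=\emptyset$ --- the exact negation of your claim that its complement ``still contains unremoved Steiner vertices.'' There is nothing to descend to, and even if $D$ were in ${\mathcal D}$ it would be marked with $s^*$ in this same iteration (since $s^*\in\overline{D}$), which does not help $C$.

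What is actually required for ``at least once'' is a structural constraint on the footprint family $\{\overline{C}\cap S({\mathcal W}) : C\in{\mathcal C}_{\lambda_S+1}\}$: for an unconstrained hypergraph the greedy procedure can exhaust $P$ while ${\mathcal D}$ is still nonempty (e.g.\ footprints $\{t_1,t_2\}$, $\{t_2,t_3\}$, $\{t_1,t_3\}$ with $s^*=t_1$ chosen first leaves $\{t_2,t_3\}$ unmarked). Whether \textsc{Property ${\mathcal P}_1$}, \textsc{Gen-3-Star}, or the fact that in $G({\mathcal W})$ each Steiner vertex $t$ sits behind a unique, non-crossed Steiner mincut rules out such a configuration is precisely the argument that has to be supplied; your proposal gestures at these tools in its fallback plan but does not deploy them, and the paper omits the argument entirely. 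As written, the proposal does not close the ``at least once'' gap.
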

Since ${\mathcal C}_s$ is a subset of ${\mathcal C}_{\lambda_S+1}$, therefore, by Lemma \ref{lem : a cut is marked with exactly one steiner vertex}, every cut of ${\mathcal C}_s$ is also marked with exactly one vertex from $S^*$. It follows from the proof of Lemma \ref{lem : data structure for a single vertex at least 4} and Lemma \ref{lem : data structure for a single vertex at most 3} that there are ${\mathcal O}(|\overline{S({\mathcal W})}|^2)$ cuts in ${\mathcal C}_s$, where $\overline{S({\mathcal W})}$ is the set of all nonSteiner vertices belonging to ${\mathcal W}$. Therefore, it occupies ${\mathcal O}(|\overline{S({\mathcal W})}|^2)$ space to mark all cuts from ${\mathcal C}_s$. We now establish the following lemma that helps in answering dual edge failure query in ${\mathcal O}(1)$ time.
\begin{lemma} \label{lem : verifying third condition in constant}
    Suppose the nearest $(\lambda_S+1)$ cut $C_1$ from $x$ to $y$ and the nearest $(\lambda_S+1)$ cut $C_2$ from $x'$ to $y'$ satisfies the necessary condition, that is, $y'\notin  C_1$ and $y\notin C_2$. Then, there is a Steiner vertex $t$ in $\overline{C_1\cup C_2}$ if and only if $C_1$ and $C_2$ are marked with the same Steiner vertex from $S^*$.
\end{lemma}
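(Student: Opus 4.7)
The plan is to prove the forward direction by inspection of Algorithm \ref{alg : constructing S*}, and the converse by contradiction, squeezing the geometry of $(C_1,C_2)$ through sub-modularity twice.

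The forward direction is essentially a bookkeeping observation: if $C_1$ and $C_2$ are both marked with the same $s^*\in S^*$, then by construction $s^*\in \overline{C_1}\cap \overline{C_2}=\overline{C_1\cup C_2}$, and since $s^*\in S^*\subseteq S({\mathcal W})$ it is the required Steiner vertex. So the substance lies entirely in the converse.

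For the converse, I would argue by contradiction. Suppose $C_1$ is marked with $s^*_1$ at iteration $i$, $C_2$ is marked with $s^*_2$ at iteration $j$, and $s^*_1\ne s^*_2$; suppose further there is a Steiner vertex $t\in \overline{C_1\cup C_2}$. Reading the algorithm gives, WLOG $i<j$: if $s^*_1\in \overline{C_2}$ then $C_2\in {\mathcal D}$ at iteration $i$ would have been marked with $s^*_1$, contradicting $s^*_1\ne s^*_2$; hence $s^*_1\in C_2$. Similarly, every Steiner vertex of $\overline{C_1}$ is removed from $P$ at iteration $i$, so $s^*_2\in P$ at iteration $j$ forces $s^*_2\in C_1$. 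Combined with $s\in C_1\cap C_2$ and $t\in \overline{C_1\cup C_2}$, each of the four corner sets of $(C_1,C_2)$ contains a Steiner vertex, so $C_1$ and $C_2$ cross and every corner set is a Steiner cut.

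The first analytic step uses the necessary condition. From $y\notin C_2$, $y'\notin C_1$, together with $y\in \overline{C_1}$ and $y'\in \overline{C_2}$, we get $y,y'\in \overline{C_1\cup C_2}\cap {\mathcal W}$. Since $s\in (C_1\cap C_2)\cap {\mathcal W}$, both $C_1\cap C_2$ and $C_1\cup C_2$ subdivide ${\mathcal W}$ and are therefore Steiner cuts of capacity at least $\lambda_S+1$; sub-modularity (Lemma \ref{submodularity of cuts}(1)) then forces $c(C_1\cap C_2)=c(C_1\cup C_2)=\lambda_S+1$. So $C_1\cap C_2$ is a $(\lambda_S+1)$ cut strictly inside $C_1$. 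By the nearest property of $C_1$ (Definition \ref{def : nearest minimum+1 cut}), $x$ cannot lie in $C_1\cap C_2$, so $x\in C_1\setminus C_2$; symmetrically $x'\in C_2\setminus C_1$.

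The final contradiction comes from repeating the subdivision argument on the diagonal corners. Now $C_1\setminus C_2$ subdivides ${\mathcal W}$ (via $x$ inside and $y$ outside) and $C_2\setminus C_1$ subdivides ${\mathcal W}$ (via $x'$ inside and $y'$ outside), so both have capacity at least $\lambda_S+1$. Applying the second form of sub-modularity (Lemma \ref{submodularity of cuts}(2)) yields $c(C_1\setminus C_2)+c(C_2\setminus C_1)\le c(C_1)+c(C_2)=2(\lambda_S+1)$, forcing $c(C_1\setminus C_2)=\lambda_S+1$. Hence $C_1\setminus C_2$ is a $(\lambda_S+1)$ cut; it is strictly contained in $C_1$ (as $s\in C_1\cap C_2$) and contains $x$, contradicting $C_1$ being a nearest $(\lambda_S+1)$ cut of $x$. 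I expect the main obstacle to be the algorithmic bookkeeping step that turns ``$C_1$ and $C_2$ get different marks'' into the clean structural statement ``$s^*_1,s^*_2$ sit in opposite diagonal corners of $(C_1,C_2)$''; once the four Steiner vertices $s,s^*_1,s^*_2,t$ are correctly placed, the rest is a disciplined double application of sub-modularity driven by the nearest-cut property of $C_1$ (and $C_2$).
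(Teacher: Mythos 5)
Your proof is correct, and your treatment of the easy direction (same mark implies a common Steiner vertex in $\overline{C_1\cup C_2}$) matches the paper's one-line argument. The substantive direction is where you genuinely diverge. Both you and the paper open by using the necessary condition together with Lemma~\ref{submodularity of cuts}(1) to establish that $C_1\cap C_2$ is a $(\lambda_S+1)$ cut. The paper then stops immediately: since $s_1,s_2\in\overline{C_1\cap C_2}$ and both lie in $S^*$, the marking invariant of Algorithm~\ref{alg : constructing S*} --- formalized as Lemma~\ref{lem : a cut is marked with exactly one steiner vertex}, i.e., no $(\lambda_S+1)$ cut can have two vertices of $S^*$ in its complement --- gives the contradiction in one step. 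You never invoke that lemma. Instead, you read the marking algorithm directly to place $s^*_1\in C_2\setminus C_1$ and $s^*_2\in C_1\setminus C_2$, then use the fact that $C_1\cap C_2$ is a $(\lambda_S+1)$ cut strictly inside $C_1$ (strictness supplied by $s^*_2\in C_1\setminus C_2$) to push $x$ into $C_1\setminus C_2$ via the nearest-cut property (Definition~\ref{def : nearest minimum+1 cut}), and symmetrically $x'\in C_2\setminus C_1$; a second application of sub-modularity (Lemma~\ref{submodularity of cuts}(2)) then forces $C_1\setminus C_2$ to be a $(\lambda_S+1)$ cut strictly inside $C_1$ containing $x$, again contradicting nearestness of $C_1$. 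Your route is longer --- it needs the diagonal-corner bookkeeping, a second sub-modularity step, and the nearest-cut argument --- but it is self-contained with respect to the algorithm and sidesteps the marking-uniqueness lemma entirely. The paper's route is shorter and more modular, using Lemma~\ref{lem : a cut is marked with exactly one steiner vertex} as a black box. Both are sound.
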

\begin{proof}
    Suppose there is a Steiner vertex $t$ in $\overline{C_1\cup C_2}$ but $C_1$ and $C_2$ are marked with different Steiner vertices $s_1,s_2$ from $S^*$. So, $s_1,s_2$ also belongs to $\overline{C_1\cap C_2}$. Observe that $s_1$ can belong to $C_2\setminus C_1$ and $s_2$ can belong to $C_1\setminus C_2$ or vice versa. We know that $s\in C_1\cup C_2$. Since $C_1$ and $C_2$ satisfies the necessary condition, so, $y,y'\in (\overline{C_1\cup C_2})\cap {\mathcal W}$. Therefore, $C_1\cup C_2$ subdivides ${\mathcal W}$ and has a capacity at least $\lambda_S+1$. We also have $s$ in both $C_1\cap C_2$ and $C_1\cup C_2$, similarly, $t$ is in both  $\overline{C_1\cap C_2}$ and $\overline{C_1\cup C_2}$. Moreover, for cuts $C_1,C_2$, by sub-modularity of cuts (Lemma \ref{submodularity of cuts}(1)), $c(C_1\cap C_2)+c(C_1\cup C_2)\le 2\lambda_2+2$. It follows that $C_1\cap C_2$ is a $(\lambda_S+1)$ cut. Then $C_1\cap C_2$ is marked with both Steiner vertices $s_1,s_2$ from $S^*$, a contradiction due to Lemma \ref{lem : a cut is marked with exactly one steiner vertex}.
    

    If $C_1$ and $C_2$ are marked with the same Steiner vertex $s^*$ from $S^*$ then $s^*\in \overline{C_1\cup C_2}$.
\end{proof}
It follows from Lemma \ref{lem : verifying third condition in constant} that the second condition of Lemma \ref{lem : condition of dual failure}, that is, whether there exists a Steiner vertex $t\in S({\mathcal W})$ such that $t\in \overline{C_1\cup C_2}$, can be verified in ${\mathcal O}(1)$ time by determining whether $C_1$ and $C_2$ are marked with the same Steiner vertex.

Recall that the data structure ${\mathcal Q}_S({\mathcal W})$ in Theorem \ref{thm : data structure for singleton class} occupies ${\mathcal O}(|\overline{S({\mathcal W})}|^2)$ space if, for every cut $C$ in $N_S(u),~\forall u\in {\mathcal W}$, it does not store Steiner vertices of $G({\mathcal W})$ belonging to $\overline{C}$. Moreover, as discussed above, marking of every cut in ${\mathcal C}_s$ with exactly one vertex of $S^*$ occupies ${\mathcal O}(|\overline{S({\mathcal W})}|^2)$ space. Therefore, we have an ${\mathcal O}(|\overline{S({\mathcal W})}|^2)$ space data structure that can verify the conditions of Lemma \ref{lem : condition of dual failure} in ${\mathcal O}(1)$ time. Therefore, similar to the analysis of the space occupied by data structure ${\mathcal D}_s$ in Lemma \ref{lem : data structure for singleton class}, we can establish that there is an ${\mathcal O}((n-|S|)^2+n)$ space data structure that can report the capacity of $S$-mincut in ${\mathcal O}(1)$ time after the failure of any pair of edges belonging to any Singleton $(\lambda_S+1)$ class.
Moreover, to report an $S$-mincut after failure of edges $e,e'$, we store data structure ${\mathcal D}_s$ from Lemma \ref{lem : data structure for singleton class} for all Singleton $(\lambda_S+1)$ classes of $G$. Data structure ${\mathcal D}_s$ can report $\overline{C_1}$ and $\overline{C_2}$ in ${\mathcal O}(n)$ time. Therefore, we can report $\overline{C_1}\cap \overline{C_2}$ in ${\mathcal O}(n)$ time as well. This completes the proof of the following lemma.
\begin{lemma} \label{lem : singleton class final lemma}
    There is an ${\mathcal O}((n-|S|)^2+n)$ space data structure that, after the failure of any pair of edges belonging to a Singleton $(\lambda_S+1)$ class, can report the capacity of $S$-mincut in ${\mathcal O}(1)$ time. Moreover, there is an ${\mathcal O}(n(n-|S|+1))$ space data structure that can report an $S$-mincut for the resulting graph in ${\mathcal O}(n)$ time.
\end{lemma}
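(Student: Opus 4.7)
The plan is to combine the characterization already established in Lemma \ref{lem : condition of dual failure} with $O(1)$-time checkers for each of its two requirements. Upon receiving a dual failure $(e,e')=((x,y),(x',y'))$ with both edges lying inside a Singleton $(\lambda_S+1)$ class $\mathcal{W}$, the $S$-mincut capacity either remains $\lambda_S$ or drops to $\lambda_S-1$; the drop occurs iff $e$ and $e'$ together contribute to a common $(\lambda_S+1)$ cut that subdivides $\mathcal{W}$. By Lemma \ref{lem : condition of dual failure}, for the orientation $x,x'\in C$, $y,y'\in \overline{C}$, this is equivalent to the conjunction of (a) existence of nearest $(\lambda_S+1)$ cuts $C_1\in N_{S(\mathcal{W})}(x)$ and $C_2\in N_{S(\mathcal{W})}(x')$ with $y'\notin C_1$ and $y\notin C_2$, and (b) existence of a Steiner vertex $t\in S(\mathcal{W})$ lying in $\overline{C_1\cup C_2}$. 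The four possible orientations of $e$ and $e'$ are handled symmetrically.

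For part (1), the $O((n-|S|)^2+n)$ space data structure answering capacity queries in $O(1)$ time, I would store for every Singleton $(\lambda_S+1)$ class $\mathcal{W}$ a trimmed version of $\mathcal{Q}_S(\mathcal{W})$ (Theorem \ref{thm : data structure for singleton class}) in which, for each cut $C\in N_{S(\mathcal{W})}(u)$, only $\overline{C}\cap \mathcal{W}$ is recorded and the Steiner vertices $\overline{C}\cap S(\mathcal{W})$ are omitted. The analysis behind Lemma \ref{lem : data structure for a single vertex at least 4} and Lemma \ref{lem : data structure for a single vertex at most 3} shows this trimmed variant occupies $O(|\overline{S(\mathcal{W})}|^2)$ space and, together with Corollary \ref{cor : necessary condition}, resolves condition (a) in $O(1)$ time through the \textsc{belong} primitive. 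To decide (b) in $O(1)$ time, I would augment the structure with the marking scheme of Algorithm \ref{alg : constructing S*}: precompute $S^*\subseteq S(\mathcal{W})$ and label every cut of $\mathcal{C}_s$ with its unique marker from $S^*$, which by Lemma \ref{lem : a cut is marked with exactly one steiner vertex} is well defined. Lemma \ref{lem : verifying third condition in constant} then reduces (b) to a single equality test between the labels of $C_1$ and $C_2$. Because $|\mathcal{C}_s|=O(|\overline{S(\mathcal{W})}|^2)$, storing labels costs $O(|\overline{S(\mathcal{W})}|^2)$ per class. Summing over all Singleton classes, $\sum_{\mathcal{W}} |\overline{S(\mathcal{W})}|^2 \le \bigl(\sum_{\mathcal{W}} |\overline{S(\mathcal{W})}|\bigr)^2 = O((n-|S|)^2)$, and the quotient/projection mappings plus the Skeleton contribute the additive $O(n)$.

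For part (2), the $O(n(n-|S|+1))$ space data structure reporting an $S$-mincut in $O(n)$ time, I would additionally store the full data structure $\mathcal{D}_s$ of Lemma \ref{lem : data structure for singleton class}. Once conditions (a) and (b) are verified in $O(1)$ time with the structure from part (1), I would invoke $\mathcal{D}_s$ twice to report $\overline{C_1}$ and $\overline{C_2}$ in $O(n)$ time each and take their intersection to obtain $\overline{C_1\cup C_2}$ in $O(n)$ time. Condition (b) guarantees a Steiner vertex on the complement side of $C_1\cup C_2$, so $C_1\cup C_2$ is a Steiner cut; since $y,y'\in \overline{C_1\cup C_2}\cap \mathcal{W}$, it also subdivides $\mathcal{W}$, and sub-modularity (Lemma \ref{submodularity of cuts}(1)) applied to the $(\lambda_S+1)$ cuts $C_1,C_2$ forces $C_1\cup C_2$ itself to be a $(\lambda_S+1)$ cut. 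Both failed edges contribute to $C_1\cup C_2$, so after deletion it realizes an $S$-mincut of capacity $\lambda_S-1$ in the resulting graph.

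The principal obstacle is the $O(1)$ verification of (b): a naive scan of $\overline{C_1\cup C_2}$ for Steiner vertices costs $\Omega(|S|)$, and the absence of an analogue of \textsc{Property ${\mathcal P}_2$} for pairs of vertices means many Steiner vertices can simultaneously appear in $\overline{C\cup C'}$ for $C\in N_{S(\mathcal{W})}(u_1),\, C'\in N_{S(\mathcal{W})}(u_2)$. The partition of $\mathcal{C}_{\lambda_S+1}$ induced by $S^*$, coupled with the uniqueness guarantee of Lemma \ref{lem : a cut is marked with exactly one steiner vertex} and the implication in Lemma \ref{lem : verifying third condition in constant}, is exactly what bridges this gap while keeping the total space at $O((n-|S|)^2)$. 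All remaining ingredients — the $O(1)$ check of (a), the $O(n)$ reconstruction of a cut via projection mappings and marked Skeleton subgraphs, and the trimming of $\mathcal{Q}_S(\mathcal{W})$ — follow directly from the machinery established in the preceding sections.
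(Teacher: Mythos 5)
Your proposal is correct and follows essentially the same route the paper takes: verifying the two conditions of Lemma \ref{lem : condition of dual failure} in $O(1)$ time by (a) querying the trimmed $\mathcal{Q}_S(\mathcal{W})$ (storing only $\overline{C}\cap\mathcal{W}$) via Corollary \ref{cor : necessary condition}, and (b) comparing the $S^*$-labels assigned by Algorithm \ref{alg : constructing S*} and justified by Lemmas \ref{lem : a cut is marked with exactly one steiner vertex} and \ref{lem : verifying third condition in constant}; the space accounting via $\sum_{\mathcal{W}}|\overline{S(\mathcal{W})}|^2=O((n-|S|)^2)$ and the $O(n)$-time cut reporting via $\mathcal{D}_s$ and $\overline{C_1}\cap\overline{C_2}$ likewise match the paper.
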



\subsubsection{Handling Dual Edge Failures for Generic Steiner node}
Suppose ${\mathcal W}$ contains more than one Steiner vertices. Similar to the construction of data structure in Lemma \ref{lem : data structure for generic steiner node} for all Steiner nodes, we construct the pair of graphs $G({\mathcal W})_1$ and $G({\mathcal W})_2$ from $G({\mathcal W})$ that satisfies the property of Lemma \ref{lem : all cuts in gw1} and Lemma \ref{lem : all cuts GW2} respectively. 

By construction, the $(\lambda_S+1)$ class ${\mathcal W}_1$ of graph $G({\mathcal W})_1$ is a Singleton $(\lambda_S+1)$ class and all vertices of ${\mathcal W}$ are mapped to ${\mathcal W}_1$. Let ${\mathbb W}$ be the set of all $(\lambda_S+1)$ classes of $G$ that contain at least two Steiner vertices. We construct graph $G({\mathcal W}')_1$ for every ${\mathcal W}'\in {\mathbb W}$. 
We store the data structures of Lemma \ref{lem : singleton class final lemma} for handling dual edge failures for $G({\mathcal W}')_1$ for every $(\lambda_S+1)$ class ${\mathcal W}'$ of ${\mathbb W}$. Thus, using Lemma \ref{lem : singleton class final lemma}, it leads to the following result.
\begin{lemma} \label{lem : dual failure and does not subdivide steiner set}
     There exists an ${\mathcal O}((n-|S|)^2+n)$ space data structure that, after the failure of any pair of edges $e,e'$ from any $(\lambda_S+1)$ class ${\mathcal W}$ corresponding to a Steiner node, can determine in ${\mathcal O}(1)$ time whether both $e,e'$ contribute to a single $(\lambda_S+1)$ cut $C$ that does not subdivide the Steiner set belonging to ${\mathcal W}$. Moreover, there is an ${\mathcal O}(n(n-|S|+1))$ space data structure that can report $S$-mincut $C$ for the resulting graph in ${\mathcal O}(n)$ time. 
\end{lemma}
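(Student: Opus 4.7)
The plan is to reduce the problem to the Singleton $(\lambda_S+1)$ class case already solved in Lemma \ref{lem : singleton class final lemma}. For each $(\lambda_S+1)$ class ${\mathcal W} \in {\mathbb W}$ corresponding to a Steiner node with more than one Steiner vertex, I would first construct $G({\mathcal W})$ exactly as in Appendix \ref{sec : data structure for singleton class}, and then form $G({\mathcal W})_1$ by contracting \emph{all} Steiner vertices of $G({\mathcal W})$ belonging to ${\mathcal W}$ into a single Steiner vertex $s$. In $G({\mathcal W})_1$, the image ${\mathcal W}_1$ of ${\mathcal W}$ becomes a Singleton $(\lambda_S+1)$ class whose only Steiner vertex is $s$, and by Lemma \ref{lem : all cuts in gw1} a cut of $G$ subdivides ${\mathcal W}$ without subdividing the Steiner set of ${\mathcal W}$ if and only if it appears as a $(\lambda_S+1)$ cut of $G({\mathcal W})_1$ that subdivides ${\mathcal W}_1$.

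Next I would apply Lemma \ref{lem : singleton class final lemma} to $G({\mathcal W})_1$. Since both failed edges $e,e'$ belong to ${\mathcal W}$, their endpoints survive every contraction used to build $G({\mathcal W})$ and $G({\mathcal W})_1$, so the failure translates directly to $G({\mathcal W})_1$. The capacity data structure of Lemma \ref{lem : singleton class final lemma} answers in ${\mathcal O}(1)$ time whether the two edges jointly contribute to a common $(\lambda_S+1)$ cut subdividing ${\mathcal W}_1$, and its companion cut-reporting data structure returns such a cut $C_1$ of $G({\mathcal W})_1$ in ${\mathcal O}(n)$ time. To lift $C_1$ to a $(\lambda_S+1)$ cut of $G$, I would reuse the un-contraction procedure from Appendix \ref{sec : data structure for all singleton classes} verbatim: for every contracted Steiner vertex $t \in \overline{C_1}$, retrieve the tree-edge or pair of cycle-edges of ${\mathcal H}_S$ it encodes, mark the corresponding subtree of ${\mathcal H}_S$, and use projection mapping $\pi$ to enumerate every vertex of $G$ that projects into the marked region; intersecting $C_1$ with the resulting tight cuts $C(S_{\mathcal B})$ gives the required cut of $G$ in ${\mathcal O}(n)$ total time by Lemma \ref{lem : intersection is also a desired cut} and Lemma \ref{lem : reporting desired cut}.

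For the space aggregation over ${\mathbb W}$, the nonSteiner vertices of $G({\mathcal W}'_1)$ that lie in ${\mathcal W}'_1$ are precisely the nonSteiner vertices of $G$ belonging to ${\mathcal W}'$, and the Steiner set $S({\mathcal W}'_1)$ consists of $s$ together with one auxiliary vertex per bunch adjacent to the node $N_{{\mathcal W}'}$ of ${\mathcal H}_S$ representing ${\mathcal W}'$. Vertex-disjointness of $(\lambda_S+1)$ classes gives $\sum_{{\mathcal W}'} |\overline{S({\mathcal W}'_1)}| = {\mathcal O}(n-|S|)$, and Lemma \ref{lem : sum of deg is order S} yields $\sum_{{\mathcal W}'} |S({\mathcal W}'_1)| = {\mathcal O}(|S|)$. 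Plugging these into the per-class bounds of Lemma \ref{lem : singleton class final lemma} and using $\sum_i a_i^2 \le (\sum_i a_i)^2$ delivers the target ${\mathcal O}((n-|S|)^2+n)$ for capacity reporting and ${\mathcal O}(n(n-|S|+1))$ for cut reporting.

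The hard part I expect is not the algorithm but bookkeeping around the reduction. I must verify that the $S$-mincut capacity of $G({\mathcal W})_1$ genuinely equals $\lambda_S$, so that ``$(\lambda_S+1)$'' carries the same meaning in $G$ and $G({\mathcal W})_1$ and Lemma \ref{lem : singleton class final lemma} applies verbatim; and I must check that Lemma \ref{lem : all cuts in gw1} delivers a bijection strong enough to transport both the existence decision and the actual cut, rather than merely matching subdivision patterns. A more delicate subtlety is that the tight-cut intersection argument in Appendix \ref{sec : data structure for all singleton classes} was written for a Singleton class of $G$ itself; I need to confirm it transports cleanly when ${\mathcal W}_1$ is only a Singleton inside the auxiliary graph $G({\mathcal W})_1$, which is where the encoding of each contracted Steiner vertex by its minimal cut in ${\mathcal H}_S$ does the real work.
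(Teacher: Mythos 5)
Your approach matches the paper's proof exactly: both construct $G({\mathcal W})_1$ by contracting the Steiner vertices of ${\mathcal W}$ into one Steiner vertex $s$, observe that ${\mathcal W}_1$ then becomes a Singleton $(\lambda_S+1)$ class of $G({\mathcal W})_1$, invoke Lemma \ref{lem : all cuts in gw1} for the cut correspondence, and then apply Lemma \ref{lem : singleton class final lemma} per class in ${\mathbb W}$, aggregating space over the classes. The subtleties you flag (that $\lambda_S$ is preserved in $G({\mathcal W})_1$ and that the un-contraction procedure from Appendix \ref{sec : data structure for all singleton classes} transports to the auxiliary graph) are genuine verification points left implicit in the paper, so your write-up is, if anything, more careful than the original.
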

We now consider graph $G({\mathcal W})_2$. By Lemma \ref{lem : all cuts GW2}, every Steiner mincut of $G({\mathcal W})_2$ is a $(\lambda_S+1)$ cut of $G$ that subdivides the Steiner set of ${\mathcal W}$. Therefore, to determine whether both failed edges $e,e'$ contribute to a single $(\lambda_S+1)$ cut of $G({\mathcal W})$ that subdivides Steiner set of ${\mathcal W}$, it is sufficient to verify whether edges $e,e'$ in $G({\mathcal W})_2$ contribute to a single Steiner mincut of $G({\mathcal W})_2$. 
For every $(\lambda_S+1)$ classes ${\mathcal W}_i$ of ${\mathbb W}$, we store the data structure of Lemma \ref{lem : dual failure edges are projected to proper paths} with the following modification. We store an instance, denoted by ${\mathcal M}(G({\mathcal W}_i)_2)$ of matrix ${\mathcal M}$ in Lemma \ref{lem : matrix M}. For matrix ${\mathcal M}({G({\mathcal W}_i)_2})$ both rows and columns are only the set of nonSteiner vertices of $G({\mathcal W}_i)_2$ that belong to ${\mathcal W}_i$; which is because both failed edges $e,e'$ belong to ${\mathcal W}_i$. By Lemma \ref{lem : matrix M}, this data structure helps in determining in ${\mathcal O}(1)$ time whether both edges are contributing to a Steiner mincut of $G({\mathcal W})_2$.

Let ${\mathcal W}_i\in {\mathbb W}$. The number of Steiner vertices of $G({\mathcal W}_i)_2$, say $|S_i|$, is the number of Steiner vertices in $G$ that belongs to ${\mathcal W}_i$. Let $N_i$ be the node of the Skeleton ${\mathcal H}_S$ that represents ${\mathcal W}_i$. The number of nonSteiner vertices of $G({\mathcal W}_i)_2$ is the sum of the number of nonSteiner vertices, say $n_i$, of $G$ that belong to ${\mathcal W}_i$ and the number of bunches adjacent to node $N_i$, which is $\text{deg}(N_i)$. So, in a similar way of analyzing the space occupied by the data structure in Lemma \ref{lem : dual failure edges are projected to proper paths}, we now analyze the space occupied by this data structure for graph $G({\mathcal W}_i)_2$. The Skeleton for $G({\mathcal W}_i)_2$ occupies ${\mathcal O}(|S_i|)$. The projection mapping and Quotient mapping of every vertex occupies ${\mathcal O}(n_i+|S_i|+\deg (N_i))$ space. Finally, matrix ${\mathcal M}({G({\mathcal W}_i)_2})$ for graph $G({\mathcal W}_i)_2$ occupies ${\mathcal O}(n_i^2)$ space. Hence, the space occupied by this data structure is ${\mathcal O}(\Sigma_{{\mathcal W}\in {\mathbb W}}(n_i^2+|S_i|+n_i+\text{deg}(N_i)))$ for ${\mathcal W}_i$. Therefore, for all $(\lambda_S+1)$ classes of ${\mathbb W}$, the space occupied by the data structure is ${\mathcal O}((n-|S|)^2+n)$ since $\Sigma_{{\mathcal W}\in {\mathbb W}}n_i=(n-|S|)$, $\Sigma_{{\mathcal W}\in {\mathbb W}}n_i^2={\mathcal O}((n-|S|)^2)$, and $\Sigma_{{\mathcal W}\in {\mathbb W}}|S_i|={\mathcal O}(|S|)$. This completes the proof of the following lemma.
\begin{lemma} \label{lem : dual failure and subdivide steiner set}
     There exists an ${\mathcal O}((n-|S|)^2+n)$ space data structure that, after the failure of any pair of edges $e,e'$ from any $(\lambda_S+1)$ class ${\mathcal W}$ corresponding to a Steiner node, can determine in ${\mathcal O}(1)$ time whether both $e,e'$ contributes to a single $(\lambda_S+1)$ cut $C$ that subdivides the Steiner set belonging to ${\mathcal W}$.
\end{lemma}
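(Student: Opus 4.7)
The plan is to reduce the question about $(\lambda_S+1)$ cuts that subdivide the Steiner set of ${\mathcal W}$ to a question about Steiner mincuts in the auxiliary graph $G({\mathcal W})_2$, and then invoke the dual edge machinery of Lemma~\ref{lem : dual failure edges are projected to proper paths} in a restricted form. By Lemma~\ref{lem : all cuts GW2}, a $(\lambda_S+1)$ cut of $G$ that subdivides the Steiner set of ${\mathcal W}$ corresponds to a Steiner mincut of $G({\mathcal W})_2$, so it suffices, for each $(\lambda_S+1)$ class ${\mathcal W}_i$ of the set ${\mathbb W}$ of $(\lambda_S+1)$ classes containing at least two Steiner vertices, to construct a dual edge data structure for $G({\mathcal W}_i)_2$ that decides whether $e,e'$ contribute to a single Steiner mincut of $G({\mathcal W}_i)_2$.

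Next, for each ${\mathcal W}_i\in{\mathbb W}$ I would instantiate the Skeleton ${\mathcal H}(G({\mathcal W}_i)_2)$, the projection mapping $\pi(G({\mathcal W}_i)_2)$, the quotient mapping $\phi(G({\mathcal W}_i)_2)$, and a tailored instance ${\mathcal M}(G({\mathcal W}_i)_2)$ of the matrix from Lemma~\ref{lem : matrix M}. The compactness hinges on the following observation: since both failed edges $e,e'$ belong to ${\mathcal W}_i$, their endpoints are exactly the nonSteiner vertices of $G({\mathcal W}_i)_2$ that lie in ${\mathcal W}_i$. Therefore, I only need to index ${\mathcal M}(G({\mathcal W}_i)_2)$ by those $n_i$ vertices rather than by all stretched units of $G({\mathcal W}_i)_2$. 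Combined with the Skeleton, projection, and $\mathrm{lca}$-style queries from Lemma~\ref{lem : lca queries on skeleton}, the argument of Lemma~\ref{lem : dual failure edges are projected to proper paths} then decides in ${\mathcal O}(1)$ time whether $e$ and $e'$ contribute to a single Steiner mincut of $G({\mathcal W}_i)_2$.

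Finally, I carry out the space accounting across all ${\mathcal W}_i\in{\mathbb W}$. Letting $|S_i|$ be the number of Steiner vertices of $G$ in ${\mathcal W}_i$, $n_i$ the number of nonSteiner vertices in ${\mathcal W}_i$, and $N_i$ the node of the global Skeleton ${\mathcal H}_S$ representing ${\mathcal W}_i$, per class the Skeleton costs ${\mathcal O}(|S_i|)$, the projection and quotient mappings cost ${\mathcal O}(n_i+|S_i|+\deg(N_i))$, and the restricted matrix costs ${\mathcal O}(n_i^2)$. Summing and using $\sum_i n_i\le n-|S|$, $\sum_i n_i^2\le(\sum_i n_i)^2={\mathcal O}((n-|S|)^2)$, $\sum_i |S_i|={\mathcal O}(|S|)$, and $\sum_i \deg(N_i)={\mathcal O}(|S|)$ (the $N_i$ are distinct nodes of ${\mathcal H}_S$, so Lemma~\ref{lem : sum of deg is order S} applies), the overall bound is ${\mathcal O}((n-|S|)^2+n)$, matching the claim.

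The main obstacle I anticipate is justifying the restriction of ${\mathcal M}$: one must verify that indexing only by the $n_i$ vertices of ${\mathcal W}_i$ is truly sufficient to answer the cycle-edge and tree-edge cases in the proof of Lemma~\ref{lem : dual failure edges are projected to proper paths}. This reduces to noting that every nearest $(N,M)$-mincut lookup used in that proof is anchored at endpoints of $e$ or $e'$, all four of which lie in ${\mathcal W}_i$; hence entries ${\mathcal M}[\mu][\nu]$ with $\mu\notin{\mathcal W}_i$ are never queried. Once this is checked, the $\sum_i n_i^2={\mathcal O}((n-|S|)^2)$ bound and the routine subadditivity for the Skeleton and projection structures deliver the claimed space.
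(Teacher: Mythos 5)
Your proposal matches the paper's own proof step for step: the reduction via Lemma~\ref{lem : all cuts GW2} to Steiner mincuts of $G({\mathcal W})_2$, the restriction of the matrix ${\mathcal M}$ to the nonSteiner vertices of $G({\mathcal W}_i)_2$ lying in ${\mathcal W}_i$ (justified by the fact that all four endpoints of $e,e'$ lie in ${\mathcal W}_i$), and the identical space accounting using $\sum_i n_i^2 = {\mathcal O}((n-|S|)^2)$ together with Lemma~\ref{lem : sum of deg is order S}. This is essentially the paper's argument.
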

For a ${\mathcal W}\in {\mathbb W}$, suppose there is a Steiner mincut $C$ of $G({\mathcal W})_2$ in which both edges $e,e'$ are contributing. Let $S'$ be the set of Steiner vertices of $G({\mathcal W})$ that do not belong to ${\mathcal W}$. Observe that the data structure in Lemma \ref{lem : dual failure and subdivide steiner set} can report $C\cap {\mathcal W}$ in ${\mathcal O}(n)$ time. However, it cannot report vertices of $S'$ that belong to $C$. 
This is because it does not store matrix ${\mathcal M}$ for vertices in $S'$. Therefore, to report a Steiner mincut of $G({\mathcal W})_2$ in which both edges are contributing, we also store another instance, denoted as ${\mathcal M}'$, of matrix ${\mathcal M}$ from Lemma \ref{lem : matrix M}. Matrix ${\mathcal M}'$ is constructed for set $S'$ and the set of nonSteiner vertices belonging to ${\mathcal W}$ as follows. The rows of matrix ${\mathcal M}'$ are the nonSteiner vertices belonging to ${\mathcal W}$ and columns are the vertices of $S'$. It follows from Lemma \ref{lem : matrix M} that matrix ${\mathcal M}'$ helps to determine in ${\mathcal O}(1)$ time whether a vertex $s'\in S'$ belongs to the nearest $(A,A')$-mincut of a nonSteiner vertex $u$ of ${\mathcal W}$, where $(A,A')$ is an edge of the Skeleton to which both $s'$ and $u$ are projected. Therefore, matrix ${\mathcal M}'$ can report $C\cap S'$ in ${\mathcal O}(n)$ time in the worst case. Similar to the data structure in Lemma \ref{lem : data structure for G2}, we can store the corresponding tree-edge or the pair of cycle-edges of ${\mathcal H}_S$ for each vertex in $S'$. So, it can be used in reporting all the vertices of $G$ that are mapped to $C\cap S'$ in ${\mathcal O}(n)$ time. Finally, the obtained data structure can report the cut $C$ in ${\mathcal O}(n)$ time. For each ${\mathcal W}_i\in {\mathbb W}$, observe that matrix ${\mathcal M}_i'$ occupies ${\mathcal O}(\text{deg}(N_i)n_i)$ space. Therefore, ${\mathcal O}(|S|(n-|S|))$ space is occupied by all $(\lambda_S+1)$ classes of ${\mathbb W}$. So, using Lemma \ref{lem : dual failure and subdivide steiner set}, to report $C$ in ${\mathcal O}(n)$ time, 
the obtained data structure occupies overall ${\mathcal O}((n-|S|)^2+|S|(n-|S|)+n)$ space, which is ${\mathcal O}(n(n-|S|+1))$.  Thus,  it leads to the following lemma.
\begin{lemma} \label{lem : dual failure and subdivide steiner set reporting cut}
     There exists an ${\mathcal O}((n-|S|)^2+n))$ space data structure that, after the failure of any pair of edges $e,e'$ from any $(\lambda_S+1)$ class ${\mathcal W}$ corresponding to a Steiner node, can determine in ${\mathcal O}(1)$ time whether both $e,e'$ contributes to a single $(\lambda_S+1)$ cut $C$ that subdivides the Steiner set belonging to ${\mathcal W}$. Moreover, there is an ${\mathcal O}(n(n-|S|+1))$ space data structure that can report cut $C$ in ${\mathcal O}(n)$ time.
\end{lemma}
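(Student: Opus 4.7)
The plan is to augment the data structure of Lemma~\ref{lem : dual failure and subdivide steiner set} with just enough additional information to be able to materialize, rather than merely detect, the $(\lambda_S+1)$ cut $C$ in which both failed edges contribute. Recall that Lemma~\ref{lem : dual failure and subdivide steiner set} already gives the $\mathcal{O}(1)$-time detection part within $\mathcal{O}((n-|S|)^2+n)$ space; what it does not give is a way to enumerate the vertices of $C$, in particular the part of $C$ lying outside ${\mathcal W}$.

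First I would observe that in graph $G({\mathcal W})_2$ the failed edges $e,e'$ both belong to ${\mathcal W}$, and by Lemma~\ref{lem : all cuts GW2} the desired cut $C$ is a Steiner mincut of $G({\mathcal W})_2$. The matrix ${\mathcal M}(G({\mathcal W}_i)_2)$ used in the proof of Lemma~\ref{lem : dual failure and subdivide steiner set} was indexed by nonSteiner vertices of $G({\mathcal W}_i)_2$ lying in ${\mathcal W}_i$; by Lemma~\ref{lem : matrix M} together with the discussion preceding Lemma~\ref{lem : dual failure edges are projected to proper paths}, it already suffices to report $C\cap {\mathcal W}_i$ in $\mathcal{O}(n)$ time through the procedure that uses nearest $(N,M)$-mincuts on intersections of projection paths.

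Next I would introduce a second matrix ${\mathcal M}'_i$ for each ${\mathcal W}_i\in {\mathbb W}$, constructed in the style of Lemma~\ref{lem : matrix M}, whose rows are the nonSteiner vertices of $G({\mathcal W}_i)_2$ belonging to ${\mathcal W}_i$ and whose columns are the vertices of the Steiner set $S'_i$ of $G({\mathcal W}_i)_2$ that do not lie in ${\mathcal W}_i$. For each $s'\in S'_i$ we additionally store, as in the data structure ${\mathcal D}_s$ of Lemma~\ref{lem : data structure for singleton class} and in Lemma~\ref{lem : data structure for G2}, the tree-edge (or pair of cycle-edges from the same cycle) of the Skeleton ${\mathcal H}_S$ that represents the bunch of $G$ corresponding to $s'$. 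Using ${\mathcal M}'_i$ exactly as ${\mathcal M}$ was used in Appendix~\ref{sec : projected to skeleton}, we can identify every $s'\in S'_i$ that lies in $C$ in $\mathcal{O}(n)$ time; then, exactly as in the reporting procedure of Lemma~\ref{lem : data structure for singleton class} and Lemma~\ref{lem : data structure for G2}, we remove the stored minimal cut edges from ${\mathcal H}_S$, mark the subgraphs on the far side of $N_i$, and output every vertex $u\in V$ whose projection $\pi(\phi(u))$ meets some marked subgraph. Combined with the vertices of $C\cap {\mathcal W}_i$ returned by ${\mathcal M}(G({\mathcal W}_i)_2)$, this yields the cut $C$ in $G$ in $\mathcal{O}(n)$ time.

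For the space bound, I would sum over all ${\mathcal W}_i\in {\mathbb W}$. Let $n_i$ be the number of nonSteiner vertices of $G$ in ${\mathcal W}_i$, let $N_i$ be the node of ${\mathcal H}_S$ representing ${\mathcal W}_i$, and let $|S'_i|=\deg(N_i)$ by construction of $G({\mathcal W}_i)_2$. Then each ${\mathcal M}'_i$ occupies $\mathcal{O}(n_i\deg(N_i))$ space, and $\sum_i n_i\deg(N_i)\le (n-|S|)\cdot\sum_i \deg(N_i)=\mathcal{O}((n-|S|)|S|)$ by Lemma~\ref{lem : sum of deg is order S}. Adding the $\mathcal{O}((n-|S|)^2+n)$ space of Lemma~\ref{lem : dual failure and subdivide steiner set} and the auxiliary mapping of $S'_i$-vertices to Skeleton edges, the total is $\mathcal{O}((n-|S|)^2+|S|(n-|S|)+n)=\mathcal{O}(n(n-|S|+1))$, as claimed. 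The main obstacle I anticipate is not conceptual but bookkeeping: verifying that the marked-subgraph reporting procedure, transplanted from the Singleton case, is faithful here because the vertices of $S'_i$ are precisely the contracted bunches adjacent to $N_i$, so that $u\in V$ belongs to $C$ in $G$ iff $\pi(\phi(u))$ meets one of the marked subgraphs; this is a direct consequence of Lemma~\ref{lem : tight cut reporting} applied bunch-by-bunch and requires no new structural insight.
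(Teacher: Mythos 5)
Your proposal is correct and essentially reproduces the paper's own argument: the same auxiliary matrix ${\mathcal M}'_i$ (rows $=$ nonSteiner vertices of ${\mathcal W}_i$, columns $=$ vertices of $S'_i$), the same per-$s'$ storage of the corresponding tree-edge or cycle-edge pair of ${\mathcal H}_S$, the same marked-subgraph reporting procedure via $\pi(\phi(\cdot))$, and the same space accounting giving ${\mathcal O}(|S|(n-|S|))$ for the new matrices on top of Lemma~\ref{lem : dual failure and subdivide steiner set}.
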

Lemma \ref{lem : singleton class final lemma}, Lemma \ref{lem : dual failure and does not subdivide steiner set},  and Lemma \ref{lem : dual failure and subdivide steiner set reporting cut} establish the following lemma for handling dual edge failures for all Steiner nodes of $G$.
\begin{lemma} \label{lem : generic Steiner node dual edge failure}
    There exists an ${\mathcal O}((n-|S|)^2+n)$ space data structure that, after the failure of any pair of edges belonging to a $(\lambda_S+1)$ class corresponding to a Steiner node, can report the capacity of $S$-mincut in ${\mathcal O}(1)$ time. Moreover, there is an ${\mathcal O}(n(n-|S|+1))$ space data structure that can report an $S$-mincut for the resulting graph in ${\mathcal O}(n)$ time.
\end{lemma}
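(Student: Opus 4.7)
The plan is to prove this lemma by a case split on the structure of the $(\lambda_S+1)$ class ${\mathcal W}$ to which both failed edges belong, and then to invoke the three preceding component lemmas—Lemma \ref{lem : singleton class final lemma}, Lemma \ref{lem : dual failure and does not subdivide steiner set}, and Lemma \ref{lem : dual failure and subdivide steiner set reporting cut}—as black boxes. First, I would pre-store an ${\mathcal O}(n)$ space lookup that, given a vertex, returns its $(\lambda_S+1)$ class together with a bit indicating whether that class is Singleton; combined with the quotient mapping $\phi$ of the Connectivity Carcass, this lets us decide in ${\mathcal O}(1)$ time which sub-case we are in as soon as the failed pair $e,e'$ arrives.

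If ${\mathcal W}$ is a Singleton $(\lambda_S+1)$ class, Lemma \ref{lem : singleton class final lemma} already delivers the required bounds directly. Otherwise ${\mathcal W}$ contains at least two Steiner vertices of $G$, and any $(\lambda_S+1)$ cut of $G$ that subdivides ${\mathcal W}$ either subdivides the Steiner set of ${\mathcal W}$ or keeps it on one side; these two disjoint regimes are handled exactly by Lemma \ref{lem : dual failure and subdivide steiner set reporting cut} and Lemma \ref{lem : dual failure and does not subdivide steiner set}, respectively. I would query both data structures on $(e,e')$, and if at least one of them reports that $e,e'$ together contribute to a $(\lambda_S+1)$ cut $C$ that subdivides ${\mathcal W}$, we conclude that the post-failure $S$-mincut capacity is $\lambda_S-1$ and report $C$ in ${\mathcal O}(n)$ time using that very data structure.

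If neither data structure returns such a cut, then by Fact \ref{fact : quotient mapping} (together with the fact that both failed edges belong to ${\mathcal W}$, so that neither contributes to any $S$-mincut of $G$), every $S$-mincut of $G$ survives in the resulting graph with unchanged capacity $\lambda_S$. Hence the capacity remains $\lambda_S$, and any pre-stored $S$-mincut of $G$ reportable via the Connectivity Carcass (for example, through Theorem \ref{thm : reporting $S$-mincut using skeleton and projection mapping}) continues to serve as a valid answer in ${\mathcal O}(n)$ time; this auxiliary Connectivity Carcass takes only ${\mathcal O}(n)$ additional space and is absorbed into the claimed bounds.

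For the space accounting, each of the three component lemmas already amortizes its storage over all the relevant $(\lambda_S+1)$ classes of $G$, with per-lemma totals of ${\mathcal O}((n-|S|)^2+n)$ for capacity reporting and ${\mathcal O}(n(n-|S|+1))$ for cut reporting; summing a constant number of them preserves both bounds. I do not expect any substantive obstacle, since the nontrivial combinatorial work (the use of \textsc{Gen-3-Star Lemma}, \textsc{Property}~${\mathcal P}_3$, the covering construction, and the matrix ${\mathcal M}$) has already been absorbed into the earlier lemmas; the only mild subtlety is verifying, via Fact \ref{fact : quotient mapping}, that the post-failure capacity really can take only one of the two values $\lambda_S$ or $\lambda_S-1$, so that an \textsc{or} of the sub-case outputs suffices to resolve the query in ${\mathcal O}(1)$ time.
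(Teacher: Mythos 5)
Your proposal is correct and takes essentially the same approach as the paper, which simply invokes Lemma~\ref{lem : singleton class final lemma}, Lemma~\ref{lem : dual failure and does not subdivide steiner set}, and Lemma~\ref{lem : dual failure and subdivide steiner set reporting cut} as a one-line corollary; you supply the omitted routing logic (the ${\mathcal O}(n)$ class/Singleton lookup, the dichotomy between subdividing and non-subdividing cuts, and the use of Fact~\ref{fact : quotient mapping} to certify that the capacity can only be $\lambda_S$ or $\lambda_S-1$). The space accounting by summing a constant number of component structures also matches the paper.
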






\subsection{Both Failed Edges are Belonging to a NonSteiner Node} \label{sec : nonSteiner node}
 Suppose $\mu$ be a nonSteiner node of the Flesh graph of $G$ and the corresponding $(\lambda_S+1)$ class be ${\mathcal W}_{\mu}$. So, $\mu$ can either be a terminal unit or a Stretched unit. \\

\noindent
\textbf{Both Failed Edges are in a Stretched Unit:} Suppose $\mu$ is a Stretched unit. Similar to the case for designing a data structure for Stretched units in Appendix \ref{sec : stretched unit data structure}, we construct the pair of graphs $G({\mathcal W}_{\mu})^I$ and $G({\mathcal W}_{\mu})^U$ using covering technique of Baswana, Bhanja, and Pandey \cite{DBLP:journals/talg/BaswanaBP23}. By construction, there is a Singleton $(\lambda_S+1)$ class ${\mathcal W}_{\mu}\cup \{s\}$ in $G({\mathcal W}_{\mu})^I$ and there is a Singleton $(\lambda_S+1)$ class ${\mathcal W}_{\mu}\cup \{t\}$ in $G({\mathcal W}_{\mu})^U$. Therefore, we store the data structure of Lemma \ref{lem : singleton class final lemma} for $G({\mathcal W}_{\mu})^I$, as well as for $G({\mathcal W}_{\mu})^U$, for each $(\lambda_S+1)$ class ${\mathcal W}_{\mu}$ of $G$ for which the corresponding node is a Stretched unit. This leads to the following lemma.
\begin{lemma} \label{lem : stretched unit}
    There is an ${\mathcal O}((n-|S|)^2)$ space data structure that, after the failure of any pair of edges belonging to a $(\lambda_S+1)$ class corresponding to a Stretched unit, can report the capacity of $S$-mincut in ${\mathcal O}(1)$ time. Moreover, there is an ${\mathcal O}(n(n-|S|+1))$ space data structure that can report an $S$-mincut for the resulting graph in ${\mathcal O}(n)$ time.
\end{lemma}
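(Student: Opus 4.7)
The plan is to reduce the dual edge failure query on a Stretched-unit class ${\mathcal W}_\mu$ to two dual edge failure queries on Singleton $(\lambda_S+1)$ classes, so that Lemma \ref{lem : singleton class final lemma} can be invoked as a black box. The key reusable ingredients are the reduction $G \rightsquigarrow G({\mathcal W}_\mu)$ of Appendix \ref{sec : stretched unit data structure} and the covering technique of Baswana, Bhanja, and Pandey \cite{DBLP:journals/talg/BaswanaBP23}.

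First I would recall the construction: fix an edge $e_\mu\in\pi(\mu)$ of the Skeleton with corresponding bunch ${\mathcal B}$, and form $G({\mathcal W}_\mu)$ by contracting $C(S_{\mathcal B})$ to a single Steiner vertex $s$ and $C(S\setminus S_{\mathcal B})$ to a single Steiner vertex $t$. By Lemma \ref{lem : cuts are stored}, the partitions of ${\mathcal W}_\mu$ realised by $(\lambda_S+1)$ cuts of $G$ are exactly those realised by $(\lambda_S+1)$ cuts of $G({\mathcal W}_\mu)$. Then, picking any vertex $x\in{\mathcal W}_\mu$, I apply covering to obtain $G({\mathcal W}_\mu)^I$ (two fresh parallel edges between $s$ and $x$) and $G({\mathcal W}_\mu)^U$ (two fresh parallel edges between $x$ and $t$). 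The augmentation makes ${\mathcal W}_\mu\cup\{s\}$ and ${\mathcal W}_\mu\cup\{t\}$ Singleton $(\lambda_S+1)$ classes in $G({\mathcal W}_\mu)^I$ and $G({\mathcal W}_\mu)^U$ respectively, and Lemma~3.1 of \cite{DBLP:journals/talg/BaswanaBP23} tells us that a cut is a $(\lambda_S+1)$ cut of $G({\mathcal W}_\mu)$ iff it is a $(\lambda_S+1)$ cut of at least one of the two auxiliary graphs.

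For each Stretched unit $\mu$ of $G$, I would store the pair of data structures given by Lemma \ref{lem : singleton class final lemma}, one for $G({\mathcal W}_\mu)^I$ and one for $G({\mathcal W}_\mu)^U$. Since both endpoints of the failed edges $e,e'$ lie in ${\mathcal W}_\mu$, they continue to lie inside the Singleton class of each auxiliary graph, and in particular $e,e'$ cannot coincide with any of the freshly added parallel edges. The query procedure then simply asks each of the two singleton oracles whether $e,e'$ jointly contribute to a single $(\lambda_S+1)$ cut; by the covering property above, the failure of $e,e'$ lowers $\lambda_S$ iff at least one of the two queries answers yes. The capacity reply takes ${\mathcal O}(1)$ time and an actual cut is produced in ${\mathcal O}(n)$ time.

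For the space bound, let $n_\mu=|{\mathcal W}_\mu|$. Lemma \ref{lem : singleton class final lemma} charges ${\mathcal O}(n_\mu^2+n)$ for the capacity oracle and ${\mathcal O}(n(n_\mu+1))$ for the cut oracle on each auxiliary instance, and the global components of size ${\mathcal O}(n)$ (Skeleton, quotient mapping, projection mapping, single-edge oracle) are shared across all Stretched units. Since the classes ${\mathcal W}_\mu$ are disjoint and nonSteiner, $\sum_\mu n_\mu \le n-|S|$, so $\sum_\mu n_\mu^2 \le (n-|S|)^2$ and $\sum_\mu n(n_\mu+1) = {\mathcal O}(n(n-|S|+1))$ once the number of Stretched units is ${\mathcal O}(n-|S|)$. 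The main obstacle I expect is the bookkeeping to ensure that the covering reduction, originally stated for the single edge case, remains faithful under dual edge failures; this reduces to checking that $e,e'$ never interact with the contracted vertices $s,t$ or the two fresh parallel edges, which is immediate because $s,t$ do not correspond to any original vertex of $G$ and the parallel edges are new.
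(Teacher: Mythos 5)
Your proposal takes essentially the same route as the paper: construct $G({\mathcal W}_\mu)^I$ and $G({\mathcal W}_\mu)^U$ via the covering reduction, store the Singleton dual-edge oracle of Lemma~\ref{lem : singleton class final lemma} for each, and answer by querying both. Your more explicit space accounting ($\sum_\mu n_\mu^2 \le (n-|S|)^2$, shared ${\mathcal O}(n)$ global components) and the sanity check that the failed edges never touch the contracted endpoints or the fresh parallel edges are both correct and fill in details that the paper leaves implicit.
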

 
\noindent
\textbf{Both Failed Edges are in a Terminal Unit:} Suppose $\mu$ be a terminal unit. We construct the graph $G({\mathcal W}_{\mu})^I$ and $G({\mathcal W}_{\mu})^U$
as discussed in Appendix \ref{sec : terminal unit data structure}. The remaining construction is the same as of the Stretched unit in Lemma \ref{lem : stretched unit}, which leads to the following lemma. 
\begin{lemma} \label{lem : terminal units with no steiner vertex}
     Let ${\mathbb W}_T$ be the set of all $(\lambda_S+1)$ classes of $G$ such that for any $(\lambda_S+1)$ class ${\mathcal W}\in {\mathbb W}_T$, ${\mathcal W}\cap S=\emptyset$ and ${\mathcal W}$ corresponds to a terminal unit of $G$. There is an ${\mathcal O}((n-|S|)^2)$ space data structure that, after the failure of any pair of edges belonging to a $(\lambda_S+1)$ class in ${\mathbb W}$, can report the capacity of $S$-mincut in ${\mathcal O}(1)$ time. Moreover, there is an ${\mathcal O}(n(n-|S|+1))$ space data structure that can report an $S$-mincut for the resulting graph in ${\mathcal O}(n)$ time.
\end{lemma}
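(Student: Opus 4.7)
The plan is to reduce the terminal unit case to the Singleton $(\lambda_S+1)$ class case handled in Lemma \ref{lem : singleton class final lemma}, using exactly the same covering construction that was used in Appendix \ref{sec : terminal unit data structure} to design the query \textsc{cut} data structure for terminal units with no Steiner vertex. This parallels the way Lemma \ref{lem : stretched unit} extends Lemma \ref{lem : singleton class final lemma} to Stretched units via the covering technique.

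Let ${\mathcal W}_{\mu}\in {\mathbb W}_T$ with corresponding terminal unit $\mu$ and let $N$ be its node in the Skeleton ${\mathcal H}_S$; as noted in Appendix \ref{sec : terminal unit data structure}, $\deg(N)\ge 3$ and every bunch adjacent to $N$ contributes a distinct Steiner vertex in the graph $G({\mathcal W}_{\mu})$ constructed from $G$ by contracting along these bunches. First, I would fix an arbitrary vertex $x\in {\mathcal W}_{\mu}$ and two distinct Steiner vertices $s,t$ of $G({\mathcal W}_{\mu})$ that are adjacent to ${\mathcal W}_{\mu}$, and construct the pair $G({\mathcal W}_{\mu})^I,G({\mathcal W}_{\mu})^U$ by adding a pair of parallel edges between $x,s$ and between $x,t$, respectively. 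By the covering technique of \cite{DBLP:journals/talg/BaswanaBP23} (Lemma 3.1 therein) — already invoked in Appendix \ref{sec : terminal unit data structure} in exactly this setting — every $(\lambda_S+1)$ cut $C$ that subdivides ${\mathcal W}_{\mu}$ in $G({\mathcal W}_{\mu})$ is a $(\lambda_S+1)$ cut either in $G({\mathcal W}_{\mu})^I$ or in $G({\mathcal W}_{\mu})^U$, and in the former (resp.\ latter) graph the class ${\mathcal W}_{\mu}\cup\{s\}$ (resp.\ ${\mathcal W}_{\mu}\cup\{t\}$) is a Singleton $(\lambda_S+1)$ class.

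Second, by Lemma \ref{lem : G(W)} (and its straightforward extension used for terminal units in Appendix \ref{sec : terminal unit data structure}), working with $G({\mathcal W}_{\mu})$ in place of $G$ is valid for determining whether both failed edges $e,e'\in{\mathcal W}_{\mu}$ contribute to a common $(\lambda_S+1)$ cut of $G$. Therefore, for each ${\mathcal W}_{\mu}\in{\mathbb W}_T$ I would instantiate the dual edge failure data structure of Lemma \ref{lem : singleton class final lemma} on each of $G({\mathcal W}_{\mu})^I$ and $G({\mathcal W}_{\mu})^U$ with respect to its unique Singleton class. On query, I first determine (via the quotient mapping $\phi$, stored in ${\mathcal O}(n)$ auxiliary space) the class ${\mathcal W}_{\mu}$ to which $e,e'$ belong, and then query both instances; the capacity decreases by two iff at least one instance reports a witnessing $(\lambda_S+1)$ cut, and the resulting $S$-mincut can be reported in ${\mathcal O}(n)$ time using whichever instance produces the witness.

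For the space analysis, I sum the bounds of Lemma \ref{lem : singleton class final lemma} over all classes in ${\mathbb W}_T$. Writing $|\overline{S({\mathcal W}_{\mu})}|$ for the nonSteiner vertices of $G({\mathcal W}_{\mu})$ inside ${\mathcal W}_{\mu}$, the two augmentations add only $O(1)$ new vertices each; the capacity-reporting instance occupies ${\mathcal O}(|\overline{S({\mathcal W}_{\mu})}|^2+|\overline{S({\mathcal W}_{\mu})}|)$ and the cut-reporting instance occupies ${\mathcal O}(|\overline{S({\mathcal W}_{\mu})}|^2+|S||\overline{S({\mathcal W}_{\mu})}|+n)$, using the same accounting as in the proofs of Lemma \ref{lem : data structure for singleton class} and Lemma \ref{lem : data structure for terminal nonSteiner units}. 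Since the sets $\{\overline{S({\mathcal W}_{\mu})}\}_{{\mathcal W}_{\mu}\in{\mathbb W}_T}$ are disjoint subsets of $V\setminus S$, we have $\sum_{{\mathcal W}_{\mu}}|\overline{S({\mathcal W}_{\mu})}|\le n-|S|$ and $\sum_{{\mathcal W}_{\mu}}|\overline{S({\mathcal W}_{\mu})}|^2\le (n-|S|)^2$; this yields the claimed ${\mathcal O}((n-|S|)^2)$ and ${\mathcal O}(n(n-|S|+1))$ bounds. The main obstacle I anticipate is verifying that the covering-step reduction transports not only the existential query but also the combined necessary--and--sufficient conditions of Lemma \ref{lem : condition of dual failure} (in particular, the existence of a Steiner witness in $\overline{C_1\cup C_2}$) to the augmented graphs; but since $G({\mathcal W}_{\mu})^I$ and $G({\mathcal W}_{\mu})^U$ retain all original Steiner vertices of $G({\mathcal W}_{\mu})$ and the only structural change is turning ${\mathcal W}_{\mu}$ into a Singleton class containing the distinguished Steiner vertex, Lemma \ref{lem : condition of dual failure} applies verbatim on each side, so this verification should be routine.
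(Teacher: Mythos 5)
Your proof is correct and follows essentially the same approach as the paper: construct $G({\mathcal W}_{\mu})^I$ and $G({\mathcal W}_{\mu})^U$ via the covering technique (exactly as done in Appendix~\ref{sec : terminal unit data structure}), observe that ${\mathcal W}_{\mu}\cup\{s\}$ and ${\mathcal W}_{\mu}\cup\{t\}$ become Singleton $(\lambda_S+1)$ classes there, instantiate the dual-edge-failure data structure of Lemma~\ref{lem : singleton class final lemma} on both augmented graphs for every class in ${\mathbb W}_T$, and sum the space bounds over disjoint sets of nonSteiner vertices. The paper's proof is one line pointing to the Stretched-unit argument of Lemma~\ref{lem : stretched unit}; you have simply unfolded that reference and added the space accounting and a sanity check that Lemma~\ref{lem : condition of dual failure} still applies after the covering step, all of which matches the intended argument.
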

Finally, Lemma \ref{lem : dual failure edges are projected to proper paths}, Lemma \ref{lem : generic Steiner node dual edge failure}, Lemma \ref{lem : stretched unit}, and Lemma \ref{lem : terminal units with no steiner vertex} completes the proof of Theorem \ref{thm : dual edge failure} for dual edge failures.

\section{Sensitivity Oracle: Dual Edge Insertions} \label{app : dual edge insertion}
In this section, we design a compact data structure ${\mathcal I}_S$ that, upon insertion of any pair of edges $e=(x,y)$ and $e'=(x',y')$ in $G$ with $x,y,x',y'\in V$, can efficiently report an $S$-mincut for the resulting graph. To arrive at data structure ${\mathcal I}_S$, we crucially exploit the Connectivity Carcass of Dinitz and Vainshtain \cite{DBLP:conf/stoc/DinitzV94, DBLP:conf/soda/DinitzV95, DBLP:journals/siamcomp/DinitzV00}, the Covering technique of Baswana, Bhanja, and Pandey \cite{DBLP:journals/talg/BaswanaBP23}, and the data structure for Singleton $(\lambda_S+1)$ classes from Theorem \ref{thm : data structure for singleton class}. 

For $S=\{s,t\}$, Picard and Queyranne \cite{DBLP:journals/mp/PicardQ80} showed that, upon insertion of any edge $(u,v)$, the capacity of $(s,t)$-mincut increases by $1$ if and only if $u$ belongs to the $(\lambda_S+1)$ class containing $s$ and $v$ belongs to the $(\lambda_S+1)$ class containing $t$ or vice versa. This property can be extended to $S$-mincut as follows.  
\begin{lemma} \label{lem : $S$-mincuts of a bunch increases}
    Upon insertion of any edge $(u,v)$, for any bunch ${\mathcal B}$, the capacity of every $S$-mincut in  ${\mathcal B}$ increases by $1$ if and only if $x\in C(S_{\mathcal B})$ and $y\in C(S\setminus S_{\mathcal B})$ or vice versa.
\end{lemma}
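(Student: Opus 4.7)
The plan is to reduce both directions to the extremal description of $S$-mincuts within the bunch ${\mathcal B}$: every $C' \in {\mathcal B}$ is sandwiched as $C(S_{\mathcal B}) \subseteq C' \subseteq \overline{C(S \setminus S_{\mathcal B})}$, where $C(S_{\mathcal B})$ is the (unique) inclusion-minimum member of ${\mathcal B}$ and $\overline{C(S \setminus S_{\mathcal B})}$ is the (unique) inclusion-maximum member of ${\mathcal B}$. Both of these extremal cuts are themselves elements of ${\mathcal B}$, and I shall use them as the two test cuts in the ``only if'' direction.

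For the easier ``if'' direction, assume $x \in C(S_{\mathcal B})$ and $y \in C(S \setminus S_{\mathcal B})$ (the symmetric case is identical). Let $C' \in {\mathcal B}$ be arbitrary. The inclusions $C(S_{\mathcal B}) \subseteq C'$ and $C(S \setminus S_{\mathcal B}) \subseteq \overline{C'}$, which follow directly from Definition \ref{def : tight cut}, give $x \in C'$ and $y \in \overline{C'}$. Hence the inserted edge $(x,y)$ is a contributing edge of $C'$ in the new graph, so the capacity of $C'$ increases by exactly $1$.

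For the ``only if'' direction, assume that the capacity of every $S$-mincut in ${\mathcal B}$ strictly increases after inserting $(x,y)$. Equivalently, $(x,y)$ separates $x$ and $y$ across every cut in ${\mathcal B}$. Apply this to the two extremal members of ${\mathcal B}$. Since $(x,y)$ contributes to $C(S_{\mathcal B})$, exactly one endpoint lies in $C(S_{\mathcal B})$; without loss of generality $x \in C(S_{\mathcal B})$ and $y \notin C(S_{\mathcal B})$. Since $(x,y)$ also contributes to $\overline{C(S \setminus S_{\mathcal B})} \in {\mathcal B}$ and we already know $x \in C(S_{\mathcal B}) \subseteq \overline{C(S \setminus S_{\mathcal B})}$, the other endpoint $y$ must lie in $C(S \setminus S_{\mathcal B})$, which is what we wanted.

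The main (and really the only) obstacle is justifying the two structural facts about $\overline{C(S \setminus S_{\mathcal B})}$ that I invoked: that it belongs to ${\mathcal B}$ and that it is the inclusion-maximum member of ${\mathcal B}$. The first is immediate, because the complement of the $S$-mincut $C(S \setminus S_{\mathcal B})$ is an $S$-mincut whose intersection with $S$ is exactly $S_{\mathcal B}$. The second follows from Definition \ref{def : tight cut} applied from the $S \setminus S_{\mathcal B}$ side: if some $C' \in {\mathcal B}$ contained a vertex outside $\overline{C(S \setminus S_{\mathcal B})}$, then $\overline{C'}$ would be an $S$-mincut strictly inside $C(S \setminus S_{\mathcal B})$ and still containing $S \setminus S_{\mathcal B}$ on the correct side, contradicting the minimality that defines $C(S \setminus S_{\mathcal B})$. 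No machinery beyond the preliminaries of Section \ref{sec : Preliminaries} and Appendix \ref{sec : extended preliminaries} is required.
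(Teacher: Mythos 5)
Your proof is correct, and it addresses a gap the paper leaves implicit: the paper states this lemma with no proof, presenting it as a direct extension of the Picard--Queyranne criterion for $(s,t)$-mincuts. Your reduction to the two extremal members of ${\mathcal B}$ and the sandwich $C(S_{\mathcal B})\subseteq C'\subseteq \overline{C(S\setminus S_{\mathcal B})}$ for all $C'\in{\mathcal B}$ is precisely the structure that makes the extension go through, and both directions of your argument are sound.

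One step is phrased imprecisely. To show $\overline{C(S\setminus S_{\mathcal B})}$ is the inclusion-maximum of ${\mathcal B}$, you suppose $C'\in{\mathcal B}$ has a vertex $w\notin\overline{C(S\setminus S_{\mathcal B})}$ and claim that $\overline{C'}$ is then an $S$-mincut ``strictly inside $C(S\setminus S_{\mathcal B})$.'' That containment does not follow: $\overline{C'}$ merely fails to contain $w\in C(S\setminus S_{\mathcal B})$; the two cuts could in principle be incomparable. The clean route is a direct appeal to Definition \ref{def : tight cut} applied to the tight cut $C(S\setminus S_{\mathcal B})$ of the complementary bunch: for any $C'\in{\mathcal B}$, the cut $\overline{C'}$ is an $S$-mincut with $S\setminus S_{\mathcal B}\subseteq\overline{C'}$ and $S_{\mathcal B}\subseteq C'$, so the definition gives $C(S\setminus S_{\mathcal B})\subseteq\overline{C'}$ outright, i.e., $C'\subseteq\overline{C(S\setminus S_{\mathcal B})}$, with no contradiction needed. (Existence and uniqueness of the two tight cuts, which you also use, follow from Fact \ref{fact : closed under intersection and union}: ${\mathcal B}$ is closed under intersection and union, hence has a unique minimum and maximum.) With this substitution your argument is complete and correct.
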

Baswana and Pandey \cite{DBLP:conf/soda/BaswanaP22}, exploiting Lemma \ref{lem : $S$-mincuts of a bunch increases}, designed an ${\mathcal O}(n)$ space data structure that, after insertion of any edge in $G$, can report  the capacity of $S$-mincut and an $S$-mincut in ${\mathcal O}(1)$ time and ${\mathcal O}(n)$ time, respectively (the proof is provided in Appendix \ref{app : single edge insertion} for the sake of completeness). Therefore, for each edge $e_1$ in $\{e,e'\}$, this data structure is sufficient to determine whether the capacity of $S$-mincut increases by $1$ by separately adding edge $e$ and $e'$. We now consider the insertion of both edges $e,e'$ to determine whether $S$-mincut increases.


Skeleton ${\mathcal H}_S$ is a cactus graph.  By construction of ${\mathcal H}_S$ \cite{DBLP:conf/stoc/DinitzV94, DBLP:conf/soda/DinitzV95, DBLP:journals/siamcomp/DinitzV00}, every cycle in ${\mathcal H}_S$ contains at least four nodes. This is because a cycle is created in the Skeleton only if there is a pair of crossing $S$-mincuts such that every corner set contains at least one Steiner vertex. Moreover, every node that appears in a cycle is an empty node and has degree exactly $3$ -- one adjacent edge is a tree-edge and the other two adjacent edges are the cycle-edges from the cycle (refer to Section 4.1 in \cite{DBLP:conf/soda/DinitzV95}). 
 Let us denote a cycle $C$ of ${\mathcal H}_S$ by $k$-cycle if cycle $C$ has exactly $k$ nodes.
For a tree-edge $e_1$ (likewise a pair of cycle-edges $e_1,e_2$) adjacent to node $N$ in ${\mathcal H}_S$, we say that a \textit{node $N$ of ${\mathcal H}_S$ belongs to a tight cut} $C(A,e_1)$ (likewise $C(A,e_1,e_2)$) (refer to Definition \ref{def : tight cut}) if upon removal of edge(s) $e_1$ (likewise $e_1,e_2$) from ${\mathcal H}_s$, node $N$ and node $A$ remains connected in one of the two resulting subgraphs of ${\mathcal H}_S$. 
By construction of Connectivity Carcass, the following lemma holds for a cycle or a junction in ${\mathcal H}_S$ (proof is given in Appendix \ref{app : proof of cycle property} for completeness). 
\begin{lemma} [\cite{DBLP:conf/stoc/DinitzV94, DBLP:conf/soda/DinitzV95, DBLP:journals/siamcomp/DinitzV00}] \label{lem : cycle property}
    Let $C_k$ be a $k$-cycle, $k\ge 4$, (likewise $k$-junction, $k\ge 3$) in ${\mathcal H}_S$. Let $A$ and $B$ be a pair of nodes that belong to $C_k$ (likewise, for $k$-junction $A$ and $B$ are not the core node) and $A\ne B$. Suppose $A$ is adjacent to edges $e_A$, $e'_A$ (likewise edge $e_A$) and  $B$ is adjacent to edges $e_B$, $e_B'$ (likewise edge $e_B$) in $C_k$. Then, there is no vertex $u$ in $G$ such that $u\in C(A,e_A,e_A')\cap C(B,e_B,e_B')$ (likewise, $u\in C(A,e_A)\cap C(B,e_B)$).
\end{lemma}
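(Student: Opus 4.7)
The plan is to reduce the claim to showing that the two subgraphs of the Skeleton induced by the respective minimal cuts are node-disjoint, after which Lemma \ref{lem : tight cut reporting} together with Lemma \ref{lem : projection mapping of terminal and stretched units} will make the rest routine. Recall that, by the construction of ${\mathcal H}_S$ (as noted in the paragraph preceding the lemma), any node that lies on a cycle has degree exactly three -- one tree-edge and two cycle-edges belonging to that cycle -- so $e_A, e_A'$ are precisely the two cycle-edges adjacent to $A$ in $C_k$, and analogously for $B$.

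First I would carefully identify the two subgraphs ${\mathcal H}_A$ and ${\mathcal H}_B$ of ${\mathcal H}_S$ containing $A$ (resp.\ $B$) obtained by removing $\{e_A, e_A'\}$ (resp.\ $\{e_B, e_B'\}$), and show that their node sets are disjoint. The key ingredient is the cactus property: every edge lies in at most one cycle, so the block-cut decomposition of ${\mathcal H}_S$ separates the branch emanating from $A$ through its unique remaining tree-edge from the branch emanating from $B$. Since $A$ and $B$ are distinct nodes of the same cycle $C_k$ with $k \ge 4$, the only routes from $A$'s side to $B$ in ${\mathcal H}_S$ go through $C_k$; once $\{e_A, e_A'\}$ are removed these routes are severed, so ${\mathcal H}_A \cap {\mathcal H}_B = \emptyset$ at the node level. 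The argument for a $k$-junction is even more direct: both $e_A$ and $e_B$ are bridges of ${\mathcal H}_S$, so removing $e_A$ detaches $A$ together with its side-branch from the core $u$ (and hence from $B$ and its side-branch), yielding node-disjoint components ${\mathcal H}_A$ and ${\mathcal H}_B$.

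Next I would invoke Lemma \ref{lem : tight cut reporting} to translate membership $u \in C(A, e_A, e_A')$ into the equivalent statement that $\pi(\phi(u))$ is a subgraph of ${\mathcal H}_A$, and likewise for $C(B, e_B, e_B')$ and ${\mathcal H}_B$. Suppose for contradiction there is a vertex $u \in V$ in both tight cuts; then $\pi(\phi(u))$ is simultaneously contained in ${\mathcal H}_A$ and in ${\mathcal H}_B$. By Lemma \ref{lem : projection mapping of terminal and stretched units}, $\pi(\phi(u))$ is a nonempty connected subgraph of ${\mathcal H}_S$ (a single node if $\phi(u)$ is a terminal unit, a proper path otherwise). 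A nonempty subgraph cannot sit inside two node-disjoint subgraphs, giving the contradiction that proves the lemma.

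The main obstacle I anticipate is making the node-disjointness claim airtight rather than just visually plausible. One has to rule out both that the branch at $A$ could re-enter $C_k$ at some node other than $A$, and that the branches at $A$ and $B$ could meet at a common descendant further out in the cactus. The cleanest way I foresee is a short block-cut tree argument: $C_k$ (respectively, the core $u$ of the $k$-junction) is a single block (respectively, cut-vertex), so the components of ${\mathcal H}_S \setminus \{e_A, e_A'\}$ containing $A$ and of ${\mathcal H}_S \setminus \{e_B, e_B'\}$ containing $B$ lie in distinct subtrees of the block-cut tree and hence share no nodes.
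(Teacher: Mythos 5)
Your proof is correct, but it takes a genuinely different route from the paper's. The paper argues entirely inside $G$ using submodularity: assuming some $u$ lies in both tight cuts, it sets $P=\overline{C(B,e_B,e_B')}$, observes that $C(A,e_A,e_A')\cap P$ and $C(A,e_A,e_A')\cup P$ are Steiner cuts (because $C(A,e_A,e_A')\cap S\subset P\cap S$), applies Lemma~\ref{submodularity of cuts}(1) to force $C(A,e_A,e_A')\cap P$ to be an $S$-mincut belonging to the same bunch, and then contradicts the tightness (Definition~\ref{def : tight cut}) of $C(A,e_A,e_A')$ since $u\notin C(A,e_A,e_A')\cap P$ makes it a strictly smaller such cut. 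In contrast, you route everything through Lemma~\ref{lem : tight cut reporting}, translating $u\in C(A,e_A,e_A')$ into $\pi(\phi(u))\subseteq {\mathcal H}_A$ and likewise for $B$, reducing the claim to the purely combinatorial fact that ${\mathcal H}_A$ and ${\mathcal H}_B$ are node-disjoint, which the cactus (or block-cut-tree) structure supplies, together with Lemma~\ref{lem : projection mapping of terminal and stretched units} to ensure $\pi(\phi(u))$ is nonempty. Both are valid: the paper's argument is self-contained and elementary (pure submodularity), while yours is shorter at the top level and makes the geometric reason transparent, at the cost of leaning on a non-trivial structural theorem about the projection mapping that itself encodes the same kind of submodular reasoning. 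The disjointness claim you flag as the main risk does hold and your block-cut-tree sketch is the right way to make it airtight; the degree-exactly-three observation for cycle nodes that you invoke is also correct and stated in the paper just before the lemma.
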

Exploiting Lemma \ref{lem : cycle property}, we establish the following lemma that states the conditions in which the $S$-mincut capacity never increases upon insertion of any pair of edges.
\begin{lemma} \label{lem : condition of insertion}
    Upon insertion of any pair of edges in $G$, the capacity of $S$-mincut remains the same if Skeleton ${\mathcal H}_S$ has $(1)$ at least one $k$-cycle/$k$-junction where $k>4$, $(2)$ at least one $4$-cycle/$4$-junction and at least one $3$-junction, or $(3)$ at least three $3$-junctions.
\end{lemma}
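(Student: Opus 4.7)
The plan is to show that under each of the three structural conditions on ${\mathcal H}_S$, there exists some tight cut $C(S_{\mathcal B})$ arising from the cycle/junction structure such that neither $e$ nor $e'$ satisfies the increase-condition of Lemma~\ref{lem : $S$-mincuts of a bunch increases} for ${\mathcal B}$; equivalently, at least one $S$-mincut (namely $C(S_{\mathcal B})$) retains its original capacity $\lambda_S$, so the $S$-mincut capacity cannot strictly increase. The key general fact I will use is: for any edge $e=(x,y)$, the projections $\pi(\phi(x))$ and $\pi(\phi(y))$ each lie on at most one of the pairwise-disjoint tight sets $C(A_i,e_{A_i},e_{A_i}')$ (respectively $C(B_i,e_{B_i})$) guaranteed by Lemma~\ref{lem : cycle property}; consequently, $e$ can ``cross'' (in the sense of Lemma~\ref{lem : $S$-mincuts of a bunch increases}) at most $2$ of the tight cuts coming from the non-core nodes of a single $k$-cycle or $k$-junction.

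For \textbf{Case (1)}, the $k$-cycle (respectively $k$-junction) contributes $k$ mutually disjoint tight cuts $C(A_i,\cdot)$ (respectively $C(B_i,\cdot)$); the two inserted edges together can cross at most $2+2=4$ of them, so whenever $k>4$ at least one such tight cut is uncrossed by both $e,e'$, and the corresponding $S$-mincut is preserved.

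For \textbf{Case (2)}, let $C_4$ denote the $4$-cycle and $J_3$ the $3$-junction, giving $4+3=7$ tight cuts $C(A_1,\cdot),\ldots,C(A_4,\cdot)$ and $C(B_1,\cdot),C(B_2,\cdot),C(B_3,\cdot)$. Since a junction core has only tree-edges, $J_3$ sits entirely inside a single cycle-subtree, say the one hanging off $A_{\iota}$, so $C(B_j,\cdot)\subseteq C(A_{\iota},\cdot)$ for every $j$ (the only mild variant is when one junction-leaf coincides with $A_{\iota}$, which I handle symmetrically). Combining this nesting with the disjointness inside each structure, I will perform a case analysis on the positions of $\pi(\phi(x))$ and $\pi(\phi(y))$ relative to the regions $\{C(A_i,\cdot),C(B_j,\cdot)\}$ and establish the refined per-edge bound $a(e)+b(e)\le 3$, where $a(e)$ counts cycle cuts and $b(e)$ counts junction cuts crossed by $e$. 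Summing over $e,e'$ yields at most $6<7$ covered cuts, so some tight cut is preserved. Equivalently, the same conclusion follows from a ``required endpoint'' argument: the tight cuts at $A_j$ ($j\ne\iota$) and at $B_2,B_3$ all have singleton (or disjoint-from-each-other) Skeleton regions, giving at least $5$ distinct required projection-endpoints, while the two inserted edges supply only $4$ such projections.

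For \textbf{Case (3)}, I will analyze the sub-forest $G_J$ of ${\mathcal H}_S$ consisting of the junction tree-edges of the three $3$-junctions with distinct cores $u_1,u_2,u_3$ (if any two cores coincided or shared enough tree-edges, we would already be in Case (1) or Case (2)). In $G_J$ each core has degree exactly $3$, so for any number $c$ of components one obtains, by handshaking,
\begin{equation*}
\sum_{\text{leaves }B}\deg_{G_J}(B)\;=\;2\bigl(|V(G_J)|-c\bigr)-9\;=\;2L-3-2c,
\end{equation*}
where $L$ is the number of distinct junction-leaf nodes. Splitting leaves into pure ($\deg_{G_J}=1$) and non-pure ($\deg_{G_J}\ge 2$) and using that non-pure leaves contribute at least $2$ each gives $p\ge 3+2c\ge 5$ pure junction-leaves. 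Each pure junction-leaf $B$ has degree $1$ in the junction sub-forest, and a careful check (using that junction cores carry only tree-edges and that cycles in ${\mathcal H}_S$ have length $\ge 4$, so introducing additional Skeleton edges at $B$ would create either a larger junction or a cycle already covered by Cases (1) or (2)) shows that the region $R(C(B,e_B))$ is disjoint from the regions of the other pure-leaf tight cuts. Hence covering all $\ge 5$ such tight cuts requires $\ge 5$ distinct projection-endpoints, but $e,e'$ supply only $4$; some tight cut is uncovered, proving the claim. The main obstacle will be the last bullet—justifying the disjointness/independence of the $5$ pure-leaf regions across the three junctions when shared leaves or peripheral Skeleton structure are present—which is why the reduction to ``otherwise Case (1) or (2) applies'' is a crucial first step.
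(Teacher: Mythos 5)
Your Case (1) argument matches the paper's: the $k$ disjoint tight cuts $C(A_i,\cdot)$ can each receive at most one of the $4$ edge endpoints, so for $k>4$ some tight cut (hence some bunch) is left untouched and that $S$-mincut retains capacity $\lambda_S$.

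For Cases (2) and (3), you are on the right conceptual track, but your proposal introduces considerably more machinery than the paper needs, and your Case (3) scaffolding has a gap you yourself flag. The paper's proof is a direct instance of the same pigeon-hole principle: in Case (2) it fixes a WLOG orientation (the $3$-junction hangs off the $A_3$ branch of the $4$-cycle, and the $4$-cycle hangs off the $B_1$ branch of the junction) and then simply names \emph{five} pairwise-disjoint pendant tight cuts, namely $C(A_1,\cdot),C(A_2,\cdot),C(A_4,\cdot)$ on the cycle side and two leaf tight cuts of the junction on the junction side; since each of these five disjoint regions must absorb one of only four endpoints for every bunch to be crossed, one region is spared and $\lambda_S$ is preserved. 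Your ``required endpoint'' aside is exactly this argument, whereas the $a(e)+b(e)\le 3$ bound needs a case analysis on projections relative to $C(A_3,\cdot)$ and the junction leaves that the paper avoids entirely. In Case (3) the paper again just produces $\ge 5$ disjoint free-branch tight cuts directly from the three $3$-junctions after a WLOG fixing which branch of each core points inward (towards the other two cores); the handshaking/component-counting detour is unnecessary, and more importantly your inequality $p\ge 3+2c$ counts combinatorial leaves of the tree-edge sub-forest $G_J$ rather than pendant Skeleton regions, so the needed conclusion that the corresponding tight cuts have pairwise-disjoint vertex sets does not follow from the count alone — precisely the obstacle you identified. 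That disjointness is exactly the content of Lemma~\ref{lem : cycle property} generalized to pendant regions of distinct structures (disjoint Steiner projections force disjoint tight cuts by the submodularity argument in its proof), and it is cleaner to invoke it per explicitly-chosen free branch than to derive a leaf count and then retrofit disjointness. A correct completion of your Case (3) would be to replace the handshaking step with the observation that, among the $9$ incident tree-edges of the three cores, at most $4$ can be ``inward'' (on proper paths between two cores in the cactus), leaving $\ge 5$ outward branches with pairwise-disjoint tight cuts; this is what the paper's WLOG setup accomplishes.
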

\begin{proof}
    We prove each case by contradiction. So, assume to the contrary that the capacity of $S$-mincut increases upon insertion of edges $e,e'$.\\
    (1) Suppose there is a $k$-cycle $C_{k}$, $k>4$, in ${\mathcal H}_S$.  Therefore, the capacity of every $S$-mincut belonging to the bunch represented by each pair of cycle-edges of $C_k$ increases. 
    Using pigeon hole principle, it follows from Lemma \ref{lem : $S$-mincuts of a bunch increases} that there exists at least one endpoint of edges $e,e'$ that belongs to at least two tight cuts, say $C(M_1,(M_1,N_1),(M_1,N_1'))$ and $C(M_2,(M_2,N_2),(M_2,N_2'))$ where node $M_1$ of $C_k$ is adjacent to nodes $N_1,N_1'$ of $C_k$, node $M_2$ of $C_k$ is adjacent to nodes $N_2,N_2'$ of $C_k$, and $M_1\ne M_2$, a contradiction due to Lemma \ref{lem : cycle property}. The proof is similar for the case of $k$-junction, $k>4$. 

    \noindent
    (2) Suppose Skeleton ${\mathcal H}_S$ has a $4$-cycle $C_4=(A_1,A_2,A_3,A_4)$ and a $3$-junction $C_3=(B_1,B_2,B_3,B)$, where $B$ is the core node of $C_3$. Without loss of generality, assume that minimal cut represented by edges $(A_2,A_3)$ and $(A_3,A_4)$ separates nodes $A_1,A_2,A_4$ of $C_4$ from the node $A_3$ and the nodes of $C_3$. Similarly, without loss of generality, assume that minimal cut represented by edge $(B_1,B)$ separates nodes $B_2,B_3$ of $C_3$ from the node $B_1$ and the nodes of $C_4$.     
    Therefore, Using pigeon hole principle, it follows from Lemma \ref{lem : $S$-mincuts of a bunch increases} that there exist $5$ tight cuts ($C(A_1, (A_1,A_2), (A_1,A_4))$, $C(A_2, (A_2,A_1), (A_2,A_3))$, $C(A_4,(A_4,A_1),(A_4,A_3))$, $C(B_2,(B_2,B))$, and $C(B_3,(B_3, B))$) such that at least two tight cuts of them must contain at least one endpoint of the edges $e,e'$, a contradiction due to Lemma \ref{lem : cycle property}. The proof is along a similar line if $C_k$ is a $k$-junction, $k>4$.

   \noindent
   $(3)$ Suppose there are three $3$-junctions $C_3^A=(A_1,A_2,A_3,A)$,  $C_3^B=(B_1,B_2,B_3,B)$, and $C_3^D=(D_1,D_2,D_3,D)$ where $A,B$ and $D$ are the core nodes of $C_1,C_2,$ and $C_3$, respectively. 
   Without loss of generality, assume that minimal cut represented by edge $(A,A_3)$ separates nodes $A_1,A_2$ of $C_3^A$ from the node $A_3$ and the nodes of $C_3^B$, $C_3^D$. Since Skeleton is a cactus graph, without loss of generality, assume that minimal cut represented by edge $(D_2,D)$ separates nodes $D_1,D_3$ of $C_3^D$ from the node $D_2$ and the nodes of $C_3^A$, $C_3^B$. Since every edge of a $3$-junction is a tree-edge, therefore, there must exist at least one edge, say $(B_1,B)$, of $C_3^B$ such that the minimal cut defined by $(B_1,B)$ separates $B_1$ from $B_2,B_3$ and the nodes of $C_3^A$, $C_3^B$. Using pigeon hole principle, it follows from Lemma \ref{lem : $S$-mincuts of a bunch increases}, there exists $5$ tight cuts ($C(A_1,(A_1,A))$, $C(A_2,(A_2,A))$, $C(B_2,(B_2,B))$, $C(B_3, (B_3,B))$, $C(D_1,(D_1,D))$, $C(D_3,(D_3,D))$) such that at least two of them contains at least one endpoint of the edges $e,e'$, a contradiction due to Lemma \ref{lem : cycle property}.
   %
\end{proof}
Suppose ${\mathcal H}_S$ contains at least one $k$-cycle, $k\ge 5$. It follows from Lemma \ref{lem : condition of insertion} that the $S$-mincut remains the same after the insertion of any pair of edges in $G$. Let $C_k$ be a $k$-cycle in ${\mathcal H}_S$, where $k>4$. Let $\{A_1,A_2,\ldots,A_5\}$ be any five nodes of $C_k$ such that $A_i\ne A_j$, $i,j\in [5]$. Let us store five tight cuts $C(A_i,e_1^i,e_2^i)$, for $i\in [5]$, where $e_1^i$ and $e_2^i$ are the adjacent cycle-edges of node $A_i$ in $C_k$. It follows from the proof of Lemma \ref{lem : condition of insertion} that, upon insertion of any pair of edges, the capacity of at least one of these five cuts remains the same. We can report the tight cut for which the capacity remains $\lambda_S$ after the insertion of a pair of edges. 
The space occupied by this structure is ${\mathcal O}(n)$, and by Lemma \ref{lem : tight cut reporting}, ${\mathcal O}(n)$ time is required to report the tight cut. We do the same for $k$-junction, $k\ge 5$ and for the other cases of Lemma \ref{lem : condition of insertion} ($(2)$ and $(3)$). This leads to the following lemma.
\begin{lemma} \label{lem : does not increase conditions}
    Suppose Skeleton ${\mathcal H}_S$ has $(1)$ at least one $k$-cycle/$k$-junction, $k>4$, $(2)$ at least one $4$-cycle/$4$-junction and at least one $3$-junction, or $(3)$ at least three $3$-junctions. There is an ${\mathcal O}(n)$ space data structure that, upon insertion of any pair of edges, can report the capacity of $S$-mincut and an $S$-mincut for the resulting graph in ${\mathcal O}(1)$ and ${\mathcal O}(n)$ time, respectively, for each of the three cases $(1)$, $(2)$, and $(3)$.   
\end{lemma}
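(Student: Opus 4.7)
The plan is to leverage Lemma \ref{lem : condition of insertion} directly: in each of the three listed cases, the capacity of the $S$-mincut cannot increase after inserting any pair of edges, so it remains $\lambda_S$. The capacity query is thus trivial, as $\lambda_S$ is a stored constant returned in $O(1)$ time. The real task is to output an actual $S$-mincut of the resulting graph in $O(n)$ time, which amounts to exhibiting a tight cut from a surviving bunch whose capacity is untouched by either inserted edge.

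The data structure stores the Connectivity Carcass of $G$ ($O(n)$ space for the Skeleton ${\mathcal H}_S$, the quotient mapping $\phi$, and the projection mapping $\pi$) together with a constant-size list of \emph{candidate tight cuts}. For case $(1)$, pick any five distinct nodes $A_1,\dots,A_5$ from the chosen $k$-cycle or $k$-junction ($k>4$) and record, for each $A_i$, the adjacent pair of cycle-edges (or the single tree-edge) defining the minimal cut $C(A_i,\cdot)$. For case $(2)$, pick three distinct nodes of the $4$-cycle/$4$-junction and the two non-core nodes of the $3$-junction, again giving five candidates. For case $(3)$, take the two non-core nodes of each of the three $3$-junctions, giving six candidates. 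The pigeonhole argument inside the proof of Lemma \ref{lem : condition of insertion}, combined with Lemma \ref{lem : cycle property}, shows that the endpoints of $e$ and $e'$ together can touch (in the sense of Lemma \ref{lem : $S$-mincuts of a bunch increases}) at most four of these candidates, so at least one candidate tight cut $\tilde C$ retains capacity $\lambda_S$ in the resulting graph.

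To locate a surviving candidate in $O(1)$ time, for each candidate $C(A_i,\cdot)$ and each endpoint $z\in\{x,y,x',y'\}$, we need only decide on which side of the relevant Skeleton split $\pi(\phi(z))$ lies; by Lemma \ref{lem : tight cut reporting} and Lemma \ref{lem : projection mapping} this is an $O(1)$-time check on the precomputed two coordinates of $\phi(z)$ in ${\mathcal H}_S$. Since the number of candidates is $O(1)$, we iterate through them and select the first $\tilde C$ for which both endpoints of $e$ lie on the same side and both endpoints of $e'$ lie on the same side, as guaranteed to exist by Lemma \ref{lem : condition of insertion}. Having identified $\tilde C$, we invoke Lemma \ref{lem : tight cut reporting}: for each vertex $u\in V$, decide in $O(1)$ time whether $\pi(\phi(u))$ lies in the $A_i$-side subgraph of ${\mathcal H}_S$, and report exactly those vertices. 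This reconstructs $\tilde C$ in $O(n)$ total time, and $\tilde C$ is an $S$-mincut of the graph after inserting $e$ and $e'$.

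The only subtlety I anticipate is the case where $\pi(\phi(z))$ is a proper path that straddles the Skeleton edge(s) defining a candidate $C(A_i,\cdot)$; by Lemma \ref{lem : tight cut reporting}, such a $z$ lies outside $C(A_i,\cdot)$ regardless of which endpoint of the path is considered, so the side-check is unambiguous and still $O(1)$. Nothing else needs to be proved: the stored objects sum to $O(n)$ space, capacity queries cost $O(1)$, and cut reporting costs $O(n)$, matching the statement.
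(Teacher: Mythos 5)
Your proposal matches the paper's approach almost exactly: Lemma \ref{lem : condition of insertion} gives that $\lambda_S$ is preserved, you store an $O(1)$-sized family of tight cuts (really just the Skeleton edges defining them, reconstructing via Lemma \ref{lem : tight cut reporting}), and the pigeonhole over the four endpoints of $e,e'$ together with the pairwise disjointness of those tight cuts yields a surviving cut with no inserted endpoint inside it. Your observation that the surviving candidate can be located in $O(1)$ time by side-checks on $\pi(\phi(z))$, and that a unit whose projection straddles the defining edge(s) is unambiguously classified as outside by Lemma \ref{lem : tight cut reporting}, is a correct and welcome elaboration of something the paper glosses over.

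There is, however, a small but real imprecision in your candidate selection for cases (2) and (3), and it matters because the pigeonhole argument hinges on the chosen tight cuts being \emph{pairwise disjoint}. Lemma \ref{lem : cycle property} only guarantees disjointness for non-core nodes drawn from the \emph{same} cycle or junction; across distinct cycles/junctions you must explicitly exclude the branch through which the rest of the Skeleton (and hence the other cycle/junction) is reached, otherwise the tight cut on that branch will contain the tight cuts of the other structure and a single endpoint could "cover" several of your candidates. Concretely: a $3$-junction has \emph{three} non-core nodes, not two, so "the two non-core nodes" is a misstatement; in case (2) you must drop the $4$-cycle/$4$-junction node whose tight cut contains the $3$-junction and drop the $3$-junction node whose tight cut contains the $4$-cycle; and in case (3) the $3$-junction lying on the Skeleton path between the other two has only \emph{one} branch free of the other junctions, so you can take only one candidate from it — giving five pairwise-disjoint candidates, not six. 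With the corrected selection the four endpoints cannot touch all five, and the rest of your argument goes through; the paper's own prose is similarly terse at this point, so the gap is small, but it should be filled.
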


We now consider the cases for which the capacity of $S$-mincut may increase upon insertion of a pair of edges $e,e'$. It follows from Lemma \ref{lem : condition of insertion} that the $S$-mincut can increase for the following four cases. 
\begin{itemize}
    \item \textbf{Case 1:} ${\mathcal H}_S$ has exactly one $4$-cycle or $4$-junction, and has no $3$-junction, $k$-cycle, or $k$-junction, where $k\ge 5$. 
    \item \textbf{Case 2:} ${\mathcal H}_S$ has exactly two $3$-junctions and has no $k$-cycles or $k$-junctions, where $k\ge 4$.
    \item \textbf{Case 3:} ${\mathcal H}_S$ has exactly one $3$-junction and has no $k$-cycle or $k$-junction, where $k\ge 4$. 
    \item \textbf{Case 4:} ${\mathcal H}_S$ is a simple path.     
\end{itemize}
We now handle each case separately as follows. \\
\begin{figure}
 \centering
    \includegraphics[width=0.6\textwidth]{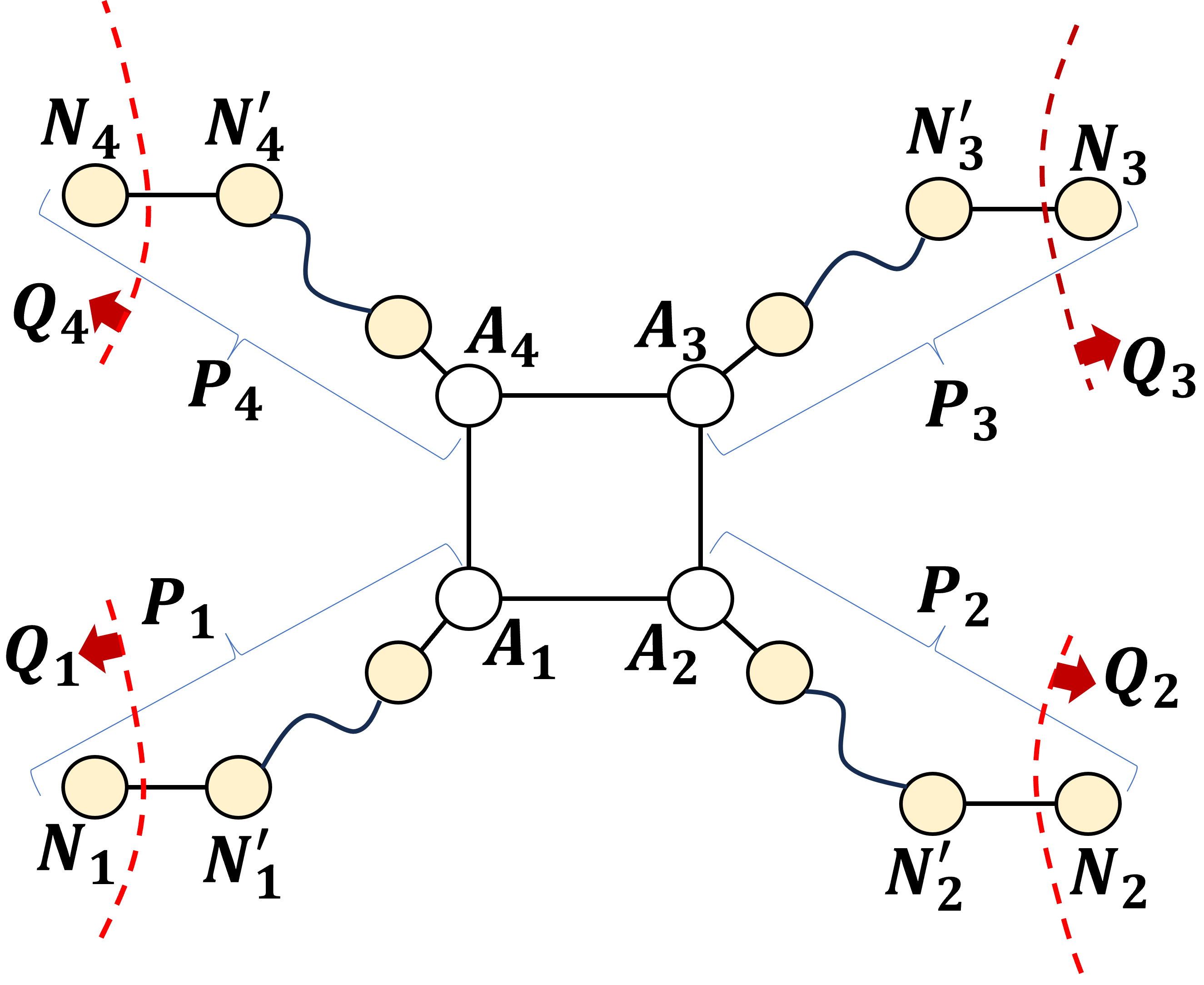} 
   \caption{A Skeleton containing exactly one 4-cycle and there are no $3$-junction, $k$-junction, and $k$-cycles, where $k>3$. Yellow vertices are Steiner vertices and white vertices are empty nodes. }
  \label{fig : dual insertion, four cycle}. 
\end{figure}

\noindent
\textbf{Case 1:} Let $C_4$ be a $4$-cycle in ${\mathcal H}_S$ such that the nodes of $C_4$ appears in the order $A_1,A_2,A_3,A_4,A_1$ (refer to Figure \ref{fig : dual insertion, four cycle}). Let us fix any $i\in [4]$. By construction of ${\mathcal H}_S$, since $A_i$ belongs to a cycle, node $A_i$ is an empty node. For node $A_i$, and its adjacent edges $e_1^i,e_2^i$ in $C_4$, there is an $S$-mincut $C(A_i,e_1^i,e_2^i)$. Since there is no other $3$-junction, $k$-junction, and $k$-cycle, where $k\ge 4$, observe that every edge not belonging to ${\mathcal H}_S$ is a tree-edge and also does not belong to any $k$-junction, $k>2$. So, there exists a path $P_i$ adjacent to $A_i$ containing at least one edge (the tree-edge adjacent to $A_i$). One endpoint of path $P_i$ is $A_i$. Let $N_i$ be the other endpoint of $P_i$ and adjacent to edge $(N_i,N_i')$ in $P_i$. Let $Q_i=C(N_i,(N_i,N_i'))$. Since $N_i$ is a node of degree one and $Q_i$ is an $S$-mincut, then there is a Steiner node of Flesh that is mapped to $N_i$. 
\begin{lemma} \label{lem : 4 cycle insertions}
    Upon insertion of edges $e,e'$, the capacity of $S$-mincut increases by $1$ if and only if one edge of $e,e'$ is between $Q_1=C(N_1,(N_1,N_1'))$ and $Q_3=C(N_3,(N_3,N_3'))$ and the other edge of $e,e'$ is between $Q_2=C(N_2,(N_2,N_2'))$ and $Q_4=C(N_4,(N_4,N_4'))$. 
\end{lemma}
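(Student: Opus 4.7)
The plan is to translate the capacity-increase condition into a statement about the bunches of ${\mathcal H}_S$ and then argue combinatorially on $C_4$. By Lemma~\ref{lem : $S$-mincuts of a bunch increases}, and the fact that the insertion of two edges can raise the capacity of any given cut by at most two, the $S$-mincut capacity rises (necessarily by exactly $1$) if and only if, for every bunch ${\mathcal B}$ of ${\mathcal H}_S$, at least one of $e,e'$ \emph{strictly separates} ${\mathcal B}$, i.e., has one endpoint in $C(S_{\mathcal B})$ and the other in $C(S\setminus S_{\mathcal B})$. Under the hypotheses of Case~1, the bunches of ${\mathcal H}_S$ fall into three families: the six cycle-bunches of $C_4$ (the four isolation bunches, each singling out one $A_i$, and the two diagonal bunches, separating $\{A_1,A_2\}$ from $\{A_3,A_4\}$ and $\{A_1,A_4\}$ from $\{A_2,A_3\}$), the four leaf tree-bunches $B_i$ defined by the edges $(N_i,N_i')$, and the remaining internal tree-bunches along the paths $P_i$.

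For sufficiency, I would assume $e$ has endpoints in $Q_1$ and $Q_3$ and $e'$ has endpoints in $Q_2$ and $Q_4$. Any vertex in $Q_i$ is projected to the single node $N_i$ and therefore lies in the tight cut on the $N_i$-side of every tree-bunch along $P_i$, and in $R_i=C(A_i,(A_i,A_{i-1}),(A_i,A_{i+1}))$ for every cycle-bunch. Consequently, $e$ strictly separates every tree-bunch along $P_1$ and $P_3$, the isolation bunches of $A_1$ and $A_3$, and both diagonal cycle-bunches (since $R_1$ and $R_3$ lie on opposite sides of each diagonal); symmetrically, $e'$ strictly separates every bunch along $P_2,P_4$ and the isolation bunches of $A_2,A_4$. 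Thus every bunch of ${\mathcal H}_S$ is strictly separated by at least one of the two inserted edges.

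For necessity, I would focus first on the four leaf bunches. A strict separator of $B_i$ must have at least one endpoint in $C(S_{B_i})=Q_i$; since the sets $Q_1,\dots,Q_4$ are pairwise disjoint by Lemma~\ref{lem : cycle property}, each endpoint of each edge lies in at most one $Q_i$. With only four endpoints available to strictly separate four distinct leaf bunches, equality must hold throughout: each endpoint lies in some $Q_i$, the two endpoints of each edge lie in two distinct $Q_i$'s, and the four endpoints together occupy all of $Q_1,Q_2,Q_3,Q_4$. This leaves three possible pairings of the endpoints into the two edges; the two ``adjacent'' pairings $(Q_1Q_2,\,Q_3Q_4)$ and $(Q_1Q_4,\,Q_2Q_3)$ are ruled out immediately by the diagonal cycle-bunches, because in each such pairing both endpoints of each edge lie on a common side of one of the two diagonal bunches, so that bunch is not strictly separated. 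Only the ``diagonal'' pairing $(Q_1Q_3,\,Q_2Q_4)$ survives, yielding the conclusion. The main subtlety I anticipate is in applying Lemma~\ref{lem : tight cut reporting} to confirm that endpoints lying outside every $Q_i$ --- in particular, vertices whose projection is a proper path crossing $C_4$ --- cannot strictly separate any $B_i$; this is what keeps the four-endpoint counting argument tight.
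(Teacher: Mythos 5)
Your argument is correct and follows the paper's general scheme (reduce a capacity increase to strict separation of every bunch via Lemma~\ref{lem : $S$-mincuts of a bunch increases}, then exploit the leaf tight cuts $Q_i$ and the cycle-bunches of $C_4$), but your necessity direction is more careful than the paper's own. The paper only argues, by contradiction on the four leaf bunches, that each $Q_i$ must receive an endpoint of $e$ or $e'$; this alone does not exclude the two ``adjacent'' pairings, say $e$ between $Q_1$ and $Q_2$ and $e'$ between $Q_3$ and $Q_4$. In that configuration every $Q_i$ still receives exactly one endpoint, yet one of the two diagonal cycle-bunches of $C_4$ is not strictly separated (both endpoints of each inserted edge land on the same side of that diagonal minimal cut), so by Lemma~\ref{lem : $S$-mincuts of a bunch increases} the Steiner mincut capacity cannot rise. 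Your explicit invocation of the diagonal cycle-bunches to rule out the adjacent pairings supplies precisely what the paper's forward direction glosses over, and it is needed to arrive at the exact diagonal conclusion of the lemma. Finally, the ``subtlety'' you flag at the end is a non-issue: since each $N_i$ is a degree-one node, the component of ${\mathcal H}_S$ containing $N_i$ after deleting $(N_i,N_i')$ is the single node $\{N_i\}$, so Lemma~\ref{lem : tight cut reporting} forces $Q_i=C(N_i,(N_i,N_i'))$ to consist exactly of the vertices projected to $N_i$. Hence no endpoint projected onto a longer proper path can belong to $C(S_{B_i})$, and the four-endpoint counting is automatically tight.
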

\begin{proof}
    Suppose the capacity of $S$-mincut increases by $1$. Assume to the contrary that at least one tight cut $Q$ of the four tight cuts $Q_1,Q_2,Q_3,$ and $Q_4$ does not have any point of the two inserted edges. Then, by Lemma \ref{lem : $S$-mincuts of a bunch increases}, it is easy to observe that the capacity of tight cut $Q$ remains unchanged after the insertion of edges $e,e'$. Therefore, the capacity of $S$-mincut remains the same, a contradiction.  
    
    Let us consider any bunch represented by any tree-edge of path $P_i$. Observe that $C(S_{\mathcal B})$ (likewise $C(S\setminus S_{\mathcal B})$) contains one endpoint of one of the two edges $e,e'$. It follows from Lemma \ref{lem : $S$-mincuts of a bunch increases} that every $S$-mincut of bunch ${\mathcal B}$ increases by $1$. In a similar way, we can show that the capacity of every $S$-mincut belonging to the bunch ${\mathcal B}$ represented by a pair of cycle-edges of $C_4$ increases by at least $1$. Moreover, the capacity of tight cuts $Q_1,Q_2,Q_3$, and $Q_4$ increases by exactly $1$. This completes the proof.  
\end{proof}
Let us now consider a $4$-junction $C_4$ containing nodes $A_1,A_2,A_3,A_4,A$, where $A$ is the core node. Let us fix any $i\in[4]$. In a similar way to the $4$-cycle case, we can define tight cut $Q_i$ for node $A_i$. The following lemma shows the condition when the capacity of $S$-mincut increases for this case.
\begin{lemma} \label{lem : 4 junction insertions}
    Upon insertion of edges $e,e'$, the capacity of $S$-mincut increases by $1$ if and only if every tight cut $Q_i$, $i\in [4]$, contains exactly one endpoint of edges $e,e'$.
\end{lemma}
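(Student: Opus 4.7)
The plan is to mirror the proof of Lemma \ref{lem : 4 cycle insertions} for the 4-cycle case and adapt it to the 4-junction by replacing the cycle-edge analysis with a tree-edge analysis. Under the Case 1 hypothesis, every edge of ${\mathcal H}_S$ is a tree-edge, so bunches of $G$ correspond bijectively to tree-edges of ${\mathcal H}_S$: the four junction edges $(A,A_i)$ and the interior edges of the paths $P_1,P_2,P_3,P_4$. For the $S$-mincut of $G$ to rise by exactly $1$ after insertion of $e,e'$, it is necessary and sufficient that every such bunch's $S$-mincut capacity increase by at least $1$, and I will use Lemma \ref{lem : $S$-mincuts of a bunch increases} as the working characterization of when a bunch increases.

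For the forward direction (necessity), I will fix $i\in[4]$ and apply Lemma \ref{lem : $S$-mincuts of a bunch increases} to the leaf bunch ${\mathcal B}_i^{\text{leaf}}$ associated with the tree-edge $(N_i,N_i')$. Its tight cuts are $Q_i$ and $C(N_i',(N_i,N_i'))$, so an increase in this bunch's $S$-mincut forces at least one endpoint of $\{x,y,x',y'\}$ to lie in $Q_i$. Next I will observe that $Q_i\subseteq C(A_i,(A,A_i))$, because the removal of $(N_i,N_i')$ isolates strictly less of ${\mathcal H}_S$ than the removal of $(A,A_i)$; Lemma \ref{lem : cycle property} applied to the 4-junction then gives $C(A_i,(A,A_i))\cap C(A_j,(A,A_j))=\emptyset$ for $i\ne j$, so $Q_1,Q_2,Q_3,Q_4$ are pairwise disjoint. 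Pigeonholing the $4$ endpoints of $\{e,e'\}$ into these $4$ disjoint nonempty targets yields exactly one endpoint per $Q_i$.

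For the backward direction (sufficiency), I will assume each $Q_i$ contains exactly one endpoint; by disjointness, the two endpoints of $e$ lie in two different $Q_i$'s, and so do the two endpoints of $e'$. I will then verify bunch-by-bunch that each bunch is straddled appropriately: for a junction bunch ${\mathcal B}$ at $(A,A_i)$, its $A_i$-side tight cut contains $Q_i$ and its $A$-side tight cut contains every $Q_j$ with $j\ne i$, so whichever of $e,e'$ has an endpoint in $Q_i$ also has its other endpoint in some $Q_j$ with $j\ne i$, hence straddles ${\mathcal B}$; for a bunch ${\mathcal B}_f$ at an interior tree-edge $f$ of $P_i$, the ``$N_i$-side'' tight cut of ${\mathcal B}_f$ contains $Q_i$ and the outward tight cut contains $Q_j$ for every $j\ne i$, so the same argument applies. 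Invoking Lemma \ref{lem : $S$-mincuts of a bunch increases} once per bunch finishes the sufficiency direction.

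The main obstacle I anticipate is the sufficiency check for bunches \emph{interior} to the paths $P_i$, whose tight cuts can genuinely omit stretched-unit vertices distinguished by the bunch. I need to confirm the nested containment $Q_i\subseteq R_f^{i}\subseteq C(A_i,(A,A_i))$, where $R_f^{i}$ is the $N_i$-side tight cut at tree-edge $f\in P_i$, and simultaneously that any endpoint lying in $Q_j$ for $j\ne i$ belongs to the outward tight cut of ${\mathcal B}_f$ rather than being ``stretched'' across $f$. Both facts follow from the leaf-terminal character of $Q_i$ (vertices there are projected only to the leaf $N_i$, hence never distinguished by any bunch along $P_i$) and from applying Lemma \ref{lem : cycle property} once more to separate contributions across different $P_i$'s; carefully packaging this nesting is the only technical step beyond standard bookkeeping, and no genuinely new structural insight beyond Lemma \ref{lem : cycle property} and Lemma \ref{lem : $S$-mincuts of a bunch increases} should be required.
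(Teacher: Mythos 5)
Your proof follows essentially the same route as the paper: the forward direction is a pigeonhole argument relying on pairwise disjointness of $Q_1,\ldots,Q_4$ (derived from Lemma~\ref{lem : cycle property} via the containment $Q_i\subseteq C(A_i,(A,A_i))$) so that a doubled-up $Q_i$ forces an empty $Q_j$ whose capacity is then preserved; the converse is the same bunch-by-bunch straddling check against Lemma~\ref{lem : $S$-mincuts of a bunch increases}, using the leaf-terminality of $N_i$ (so that each $Q_i$ sits wholly inside one side of every tight-cut pair along $P_i$ and of every junction bunch). The extra care you flag about nesting across interior tree-edges of $P_i$ is a legitimate detail the paper subsumes under ``along a similar line to Lemma~\ref{lem : 4 cycle insertions}''; it is resolved exactly as you sketch by Lemma~\ref{lem : tight cut reporting} together with the fact that $\pi(\phi(v))=N_i$ for $v\in Q_i$.
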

\begin{proof}
    Suppose the capacity of $S$-mincut increases by $1$. Assume to the contrary there is a tight cut $Q$ of the four tight cuts $Q_1,Q_2,Q_3$, and $Q_4$ does not contain any endpoint of edges $e,e'$ or contains more than one endpoint of edges $e,e'$. If $Q$ contains more than one endpoint of $e,e'$, then, by Definition \ref{def : tight cut}, there must exist another tight cut $Q'$ of the four tight cuts $Q_1,Q_2,Q_3$, and $Q_4$ such that $Q'$ contains no endpoint of $e,e'$. So we consider the case when $Q$ does not contain any endpoint of $e,e'$. In this case, by Lemma \ref{lem : $S$-mincuts of a bunch increases}, the capacity of tight cut $Q$ remains the same, a contradiction.  

    The proof of the converse part is along a similar line to the proof of the converse part of Lemma \ref{lem : 4 cycle insertions}. 
\end{proof}
It follows from Lemma \ref{lem : 4 cycle insertions} and Lemma \ref{lem : 4 junction insertions} that we only need to store the four tight cuts $Q_1,Q_2,Q_3,$ and $Q_4$ to verify whether the capacity of $S$-mincut increases by $1$ or does not increase upon insertion of edges $e,e'$. An $S$-mincut can also be reported by reporting one of the four tight cuts $Q_i, i\in [4]$, which can be done in ${\mathcal O}(n)$ time using Lemma \ref{lem : tight cut reporting}. Therefore, it leads to the following lemma.
\begin{lemma} \label{lem : 4-junction or 4-cycles}
    Suppose Skeleton ${\mathcal H}_S$ has exactly one $4$-cycle or $4$-junction, and has no $3$-junction, $k$-cycle, or $k$-junction, where $k\ge 5$. Then, there is an ${\mathcal O}(n)$ space data structure that, upon insertion of any pair of edges, can report the capacity of $S$-mincut and an $S$-mincut for the resulting graph in ${\mathcal O}(1)$ and ${\mathcal O}(n)$ time respectively.
\end{lemma}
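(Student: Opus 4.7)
\noindent
\textbf{Proof Proposal for Lemma \ref{lem : 4-junction or 4-cycles}.}

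The plan is to convert Lemmas \ref{lem : 4 cycle insertions} and \ref{lem : 4 junction insertions} into a concrete data structure. I will store the Connectivity Carcass in $O(n)$ space (Skeleton ${\mathcal H}_S$, Quotient Mapping $\phi$, and Projection Mapping $\pi$ of Lemma \ref{lem : projection mapping}), together with standard $O(|S|)$ auxiliary preprocessing on the cactus ${\mathcal H}_S$ that supports $O(1)$-time ``side-of-minimal-cut'' queries of the form ``is $\pi(\phi(u))$ contained in ${\mathcal H}_S^1$?'' of Lemma \ref{lem : tight cut reporting}. Additionally I precompute a label $L(v)\in\{0,1,2,3,4\}$ for every vertex $v\in V$, where $L(v)=i$ iff $v\in Q_i$ and $L(v)=0$ otherwise. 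Since the hypothesis rules out $3$-junctions as well as $k$-cycles and $k$-junctions with $k\ge 5$, each path $P_i$ terminates at a leaf $N_i$ of degree one, so $v\in Q_i$ is equivalent to $\pi(\phi(v))=\{N_i\}$; this test takes $O(1)$ per vertex by Lemma \ref{lem : projection mapping}, so preprocessing uses $O(n)$ time and the labels occupy $O(n)$ space. Disjointness of $Q_1,\ldots,Q_4$ (and hence well-definedness of $L$) is immediate from Lemma \ref{lem : cycle property}.

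For the capacity query after insertion of $e=(x,y)$ and $e'=(x',y')$, I read off $L(x),L(y),L(x'),L(y')$. In the $4$-cycle subcase, by Lemma \ref{lem : 4 cycle insertions}, the capacity increases by $1$ exactly when $\{\{L(x),L(y)\},\{L(x'),L(y')\}\}=\{\{1,3\},\{2,4\}\}$; in the $4$-junction subcase, by Lemma \ref{lem : 4 junction insertions}, exactly when $\{L(x),L(y),L(x'),L(y')\}=\{1,2,3,4\}$. Either test is $O(1)$; otherwise the capacity remains $\lambda_S$.

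For reporting an $S$-mincut of the resulting graph in $O(n)$ time, I distinguish two cases. If the capacity increases, the proofs of Lemmas \ref{lem : 4 cycle insertions} and \ref{lem : 4 junction insertions} show that each $Q_i$ has capacity exactly $\lambda_S+1$ in the new graph and is therefore an $S$-mincut; I output $Q_1$ in $O(n)$ time via Lemma \ref{lem : tight cut reporting}. If the capacity does not increase, I must exhibit a bunch ${\mathcal B}$ of ${\mathcal H}_S$ whose minimal cut is crossed by neither $e$ nor $e'$ and then report $C(S_{\mathcal B})$. The main obstacle is that such a bunch need not lie among the four leaf-bunches: for instance, in the $4$-cycle subcase with $\{L(x),L(y)\}=\{1,2\}$ and $\{L(x'),L(y')\}=\{3,4\}$, all four $Q_i$'s are simultaneously crossed, yet the capacity is preserved by an internal cycle-pair bunch of $C_4$ separating $\{A_1,A_2\}$ from $\{A_3,A_4\}$. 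I handle all configurations uniformly by iterating over the $O(|S|)$ bunches of ${\mathcal H}_S$ and, for each, checking in $O(1)$ via $\phi$, $\pi$, and the auxiliary preprocessing whether either inserted edge has its two endpoints on opposite sides of the defining minimal cut. The ``only if'' directions of Lemmas \ref{lem : 4 cycle insertions} and \ref{lem : 4 junction insertions} guarantee that at least one bunch passes the check; reporting its tight cut via Lemma \ref{lem : tight cut reporting} adds $O(n)$ more time, keeping the total query time at $O(|S|)+O(n)=O(n)$ and the overall space at $O(n)$.
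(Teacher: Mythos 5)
Your capacity test (via the label function $L$ and the iff conditions of Lemmas \ref{lem : 4 cycle insertions} and \ref{lem : 4 junction insertions}) is correct and uses essentially the same information the paper stores, namely membership in the four tight cuts $Q_1,\dots,Q_4$; your observation that $v\in Q_i$ iff $\pi(\phi(v))=\{N_i\}$ is right given the absence of other junctions. More importantly, you have correctly flagged a subtlety that the paper's one-line proof glosses over: in the $4$-cycle subcase with label pairs $\{1,2\}$ and $\{3,4\}$, the capacity is preserved while all four $Q_i$ simultaneously increase, so the paper's claim that ``An $S$-mincut can also be reported by reporting one of the four tight cuts $Q_i$'' is insufficient in the capacity-preserved case --- one must fall back to a cycle-pair bunch of $C_4$ or a tree-edge bunch on one of the hanging paths, exactly as you say.

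However, your fix --- iterate over all $O(|S|)$ bunches, find a bunch ${\mathcal B}$ for which neither inserted edge has its two endpoints projecting to opposite sides of the defining minimal cut, and report $C(S_{\mathcal B})$ --- still has a hole in the reporting step. Passing your test does not imply that $C(S_{\mathcal B})$ survives. Concretely, suppose $x\in C(S_{\mathcal B})$ while $\phi(y)$ is a Stretched unit distinguished by ${\mathcal B}$, so $\pi(\phi(y))$ straddles the minimal cut: then $y\notin C(S_{\mathcal B})$, hence $e=(x,y)$ contributes to $C(S_{\mathcal B})$, yet your test passes because $y$ is not strictly on the opposite side. Symmetrically, $e'=(x',y')$ with $x'\in C(S\setminus S_{\mathcal B})$ and $\phi(y')$ distinguished makes $e'$ contribute to the other tight cut $\overline{C(S\setminus S_{\mathcal B})}$. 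So neither tight cut of ${\mathcal B}$ survives even though ${\mathcal B}$ passes your test. The ``only if'' directions of Lemmas \ref{lem : 4 cycle insertions} and \ref{lem : 4 junction insertions} give you a bunch that contains \emph{some} surviving $S$-mincut, but not a bunch whose \emph{tight cut} survives, which is what your output step actually requires. To close the gap you would need either an argument (exploiting the one-$4$-cycle/one-$4$-junction structure of ${\mathcal H}_S$) that some bunch's tight cut must always survive, or a way to locate a surviving non-tight $S$-mincut inside a bunch in $O(n)$ time, for instance by invoking the ordering $\tau(G)$ of Theorem~\ref{thm : reporting $S$-mincut using skeleton and projection mapping}.
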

\noindent
\textbf{Case 2:} Suppose Skeleton ${\mathcal H}_S$ has exactly two $3$-junctions $C_3^A=(A_1,A_2,A_3,A)$, $C_3^B=(B_1,B_2,B_3,B)$, where $A,B$ are the core nodes of $C_3^A,C_3^B$ respectively. There are no $k$-cycles or $k$-junctions, where $k\ge 4$. Therefore, there must exist a path $P$ consisting of only tree-edges between the core node $A$ of $C_3^A$ and the core node $B$ of $C_3^B$. Without loss of generality, assume that node $A_3$ and node $B_3$ belong to $P$. For every $i\in [2]$, in a similar way to the $4$-cycle case in Case 1, we can define tight cut $Q_i^A$ for node $A_i$ and tight cut $Q_i^B$ for node $B_i$.

\begin{lemma} \label{lem : case 2 main lemma}
    Upon insertion of pair of edges $e,e'$ in $G$, the capacity of $S$-mincut increases by $1$ if and only if one edge is added between $Q_1^A,Q_2^B$ and the other edge is added between $Q_2^A,Q_1^B$ or one edge is added between $Q_1^A,Q_1^B$ and the other edge is added between $Q_2^A,Q_2^B$.  
\end{lemma}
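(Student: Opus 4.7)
\textbf{Proof plan for Lemma \ref{lem : case 2 main lemma}.}
The plan is to adapt the argument from Case~1 (Lemma \ref{lem : 4 cycle insertions}) by using the tree-path $P$ joining the cores $A$ and $B$ to pin down the orientation of the two inserted edges. The first structural fact I will record is that, since the only non-tree portions of ${\mathcal H}_S$ are the two $3$-junctions, every node outside $C_3^A \cup C_3^B$ that is not a core is reached from $A_1, A_2, B_1,$ or $B_2$ along a unique simple path of tree-edges. By Lemma \ref{lem : cycle property} applied to the two $3$-junctions, the four tight cuts $Q_1^A, Q_2^A, Q_1^B, Q_2^B$ are pairwise disjoint, and by their very definition each is an $S$-mincut.

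For the forward direction, assume that the $S$-mincut capacity rises by exactly $1$. Since each $Q_i^{\star}$ (with $\star \in \{A,B\}$) is an $S$-mincut of capacity $\lambda_S$ and the four are disjoint, exactly as in Lemma \ref{lem : 4 cycle insertions} each of them must be cut by at least one of the two inserted edges; with only four edge-endpoints available, each of $Q_1^A, Q_2^A, Q_1^B, Q_2^B$ receives exactly one endpoint. Next, consider any tree-edge on the path $P$ between $A$ and $B$; its bunch has tight cut on the $A$-side containing $Q_1^A \cup Q_2^A$ and tight cut on the $B$-side containing $Q_1^B \cup Q_2^B$. For every $S$-mincut in this bunch to gain capacity by at least $1$, Lemma \ref{lem : $S$-mincuts of a bunch increases} forces at least one inserted edge to have one endpoint on the $A$-side and the other on the $B$-side of $P$. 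Combined with the partition of the four endpoints into $Q_1^A, Q_2^A, Q_1^B, Q_2^B$, this leaves only the two pairings $\{Q_1^A, Q_1^B\}$ together with $\{Q_2^A, Q_2^B\}$, or $\{Q_1^A, Q_2^B\}$ together with $\{Q_2^A, Q_1^B\}$; any other configuration would leave some edge entirely on the $A$-side or entirely on the $B$-side, failing to cut some tree-edge bunch along $P$.

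For the reverse direction, suppose one of the two pairings holds. Every bunch of $G$ is represented either by a tree-edge of $C_3^A$, a tree-edge of $C_3^B$, a tree-edge on $P$, or a tree-edge on one of the four outward paths from $A_1, A_2, B_1, B_2$ to $N_1^A, N_2^A, N_1^B, N_2^B$. In each case, the hypothesis of Lemma \ref{lem : $S$-mincuts of a bunch increases} is satisfied for at least one of $e, e'$, so every bunch gains capacity at least $1$ and the new $S$-mincut capacity is at least $\lambda_S + 1$. To see that it is exactly $\lambda_S + 1$, observe that the $S$-mincut $Q_1^A$ is crossed only by the single inserted edge whose endpoint lies in $Q_1^A$, since the other inserted edge has both endpoints outside $Q_1^A$; hence $Q_1^A$ gains exactly one crossing edge and remains an $S$-mincut of capacity $\lambda_S + 1$ in the new graph.

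The main obstacle is handling the case when $A_1, A_2, B_1, B_2$ are not themselves leaves of ${\mathcal H}_S$, in which case each outer branch is a nontrivial tree-path and bunches exist all along it. The subtlety is that $Q_i^{\star}$ (the innermost tight cut at the distant leaf $N_i^{\star}$) must remain the correct witness that forces an edge-endpoint to lie inside it, whereas endpoints placed in the interior stretched units of the branch would satisfy some intermediate bunch but not cut $Q_i^{\star}$ itself. This is resolved by walking from the leaf inward along the branch and invoking Lemma \ref{lem : $S$-mincuts of a bunch increases} for the innermost bunch, whose $C(S_{\mathcal B})$ equals $Q_i^{\star}$, forcing an endpoint into $Q_i^{\star}$; the remaining bunches on the branch then automatically have their capacity increased by the same edge.
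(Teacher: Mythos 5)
Your proof is correct and takes essentially the same route as the paper: force one edge-endpoint into each of the four leaf tight cuts $Q_1^A,Q_2^A,Q_1^B,Q_2^B$, use a bunch along the path $P$ joining the two cores to eliminate the same-side pairing $\{Q_1^A,Q_2^A\},\{Q_1^B,Q_2^B\}$, and verify the converse bunch by bunch via Lemma~\ref{lem : $S$-mincuts of a bunch increases}. The only cosmetic differences are that the paper instantiates the $P$-argument at the two specific tight cuts $C(A_3,(A_3,A))$ and $C(B_3,(B_3,B))$ while you take an arbitrary tree-edge of $P$, and you spell out (where the paper merely gestures at the converse of Lemma~\ref{lem : 4 cycle insertions}) why the two admissible pairings increase the capacity by exactly one.
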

\begin{proof}
    Suppose upon insertion of pair of edges $e,e'$ in $G$, the capacity of $S$-mincut increases by $1$. Assume to the contrary that one edge is added between $Q_1^A,Q_2^A$ and the other edge is added between $Q_1^B,Q_2^B$ or at least one tight cut $Q$ of four tight cuts $Q_1^A,Q_2^A,Q_1^A,Q_1^B$ does not contain any endpoint of the two edges $e,e'$. In the latter case, it is easy to establish that the capacity of $S$-mincut $Q$ remains unchanged in $G$, a contradiction. In the former case, observe that the capacities of both tight cuts $C(A_3,(A_3,A))$ and $C(B_3,(B_3,B))$ in $G$ do not change upon insertion of edges $e,e'$, a contradiction.

    Suppose one edge is added between $Q_1^A,Q_2^B$ and the other edge is added between $Q_2^A,Q_1^B$ or one edge is added between $Q_1^A,Q_1^B$ and the other edge is added between $Q_2^A,Q_2^B$. By Lemma \ref{lem : $S$-mincuts of a bunch increases}, we can show in a similar way as in the proof of the converse part of Lemma \ref{lem : 4 cycle insertions}, the capacity of every $S$-mincut belonging to each of the bunch defined by every minimal cut of ${\mathcal H}_S$ increases by $1$. 
\end{proof}
It follows from Lemma \ref{lem : case 2 main lemma} that it is sufficient to store the five tight cuts $Q_1^A,Q_2^A,Q_1^A,Q_1^B$, and $C(A_3,(A_3,A))$ to determine if the capacity of $S$-mincut increases in ${\mathcal O}(1)$ time. It can report an $S$-mincut after the insertion of pair of edges in ${\mathcal O}(n)$ time using Lemma \ref{lem : tight cut reporting}. 
Hence, it completes the proof of the following lemma.
\begin{lemma} \label{lem : case 2}
    Suppose Skeleton ${\mathcal H}_S$ has exactly two $3$-junctions and has no $k$-cycles or $k$-junctions, where $k\ge 4$. Then, there is an ${\mathcal O}(n)$ space data structure that, upon insertion of any pair of edges, can report the capacity of $S$-mincut and an $S$-mincut for the resulting graph in ${\mathcal O}(1)$ and ${\mathcal O}(n)$ time respectively.
\end{lemma}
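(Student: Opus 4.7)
The plan is to turn the characterisation in Lemma~\ref{lem : case 2 main lemma} into a constant-size label per vertex. For each $v\in V$ I store which, if any, of the four corner tight cuts $Q_1^A,Q_2^A,Q_1^B,Q_2^B$ contains $v$, together with which side of the middle tight cut $C(A_3,(A_3,A))$ contains $v$. These labels are produced by one invocation of Lemma~\ref{lem : tight cut reporting} per tight cut, giving ${\mathcal O}(n)$ preprocessing time and ${\mathcal O}(n)$ total space.

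Given an insertion $e=(x,y),e'=(x',y')$, the capacity query is answered in ${\mathcal O}(1)$ time by reading the four labels $\ell(x),\ell(y),\ell(x'),\ell(y')$ and checking whether the endpoints realise one of the two perfect-matching patterns prescribed by Lemma~\ref{lem : case 2 main lemma}: either one edge pairs $Q_1^A$ with $Q_2^B$ and the other pairs $Q_2^A$ with $Q_1^B$, or one edge pairs $Q_1^A$ with $Q_1^B$ and the other pairs $Q_2^A$ with $Q_2^B$. A match outputs $\lambda_S+1$; otherwise $\lambda_S$ is unchanged. To report an $S$-mincut in ${\mathcal O}(n)$ time I select, in ${\mathcal O}(1)$ time, one of the five stored tight cuts that is an $S$-mincut of the new graph and then emit it via Lemma~\ref{lem : tight cut reporting}. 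When the capacity increases, the proof of Lemma~\ref{lem : case 2 main lemma} together with Lemma~\ref{lem : $S$-mincuts of a bunch increases} shows that every $S$-mincut of $G$ gains capacity exactly one, so each of the five stored tight cuts is a new $S$-mincut. When the capacity is unchanged, a short case analysis on the labels finds an untouched tight cut: if at least one of the four corner tight cuts contains none of $x,y,x',y'$, that cut is safe; otherwise each of the four corners contains exactly one endpoint and the absence of the two forbidden matchings forces the only remaining pairing $\{Q_1^A,Q_2^A\}\leftrightarrow\{Q_1^B,Q_2^B\}$, which places both inserted edges entirely on one side of $C(A_3,(A_3,A))$ and leaves this middle cut safe.

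The main subtlety is this final case analysis for the unchanged-capacity branch; the remaining components are direct applications of Lemma~\ref{lem : case 2 main lemma}, Lemma~\ref{lem : $S$-mincuts of a bunch increases}, and Lemma~\ref{lem : tight cut reporting}, which collectively supply both the structural characterisation and the ${\mathcal O}(n)$-time cut-reporting primitive I need.
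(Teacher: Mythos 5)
Your proposal is correct and matches the paper's approach: the paper likewise stores the five tight cuts $Q_1^A$, $Q_2^A$, $Q_1^B$, $Q_2^B$, and $C(A_3,(A_3,A))$, and answers the capacity query in ${\mathcal O}(1)$ time by testing the two ``bad'' matchings given by Lemma~\ref{lem : case 2 main lemma}. Your explicit case analysis for picking a surviving tight cut when the capacity is unchanged merely fills in a step the paper leaves terse, and it is sound because the four corner tight cuts are pairwise disjoint (by Lemma~\ref{lem : cycle property} and the cactus structure of ${\mathcal H}_S$), which justifies the pigeonhole step and the reduction to the third matching pattern, whereupon both inserted edges lie entirely within $\overline{C(A_3,(A_3,A))}$ and $C(A_3,(A_3,A))$ respectively.
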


\noindent
\textbf{Case 3:} Suppose the Skeleton has exactly one $3$-junction, and there is no $k$-cycle or $k$-junction for $k\ge 4$. Let $C_3=(A_1,A_3,A_3,A)$ be the $3$-junction.  For every $i\in [3]$, in a similar way to the $4$-cycle case in Case 1, we can define tight cut $Q_i$ for node $A_i$. 

Observe that the $S$-mincut increases by exactly $1$ if and only if the two edges are inserted in one of the three following ways -- (a) One edge is between $Q_1,Q_2$ and the other edge is between $Q_1,Q_3$, (b) one edge is between $Q_1,Q_3$ and the other edge is between $Q_2,Q_3$, and (c) one edge is between $Q_1,Q_2$ and the other edge is between $Q_2,Q_3$. So, similar to Case 1 and Case 2, it is sufficient to store $Q_1,Q_2,$ and $Q_3$ for determining whether the capacity of $S$-mincut increases by $1$ after insertion of edges $e,e'$. This leads to the following lemma.
\begin{lemma} \label{lem : case 3}
      Suppose the Skeleton has exactly one $3$-junction, and there is no $k$-cycle or $k$-junction for $k\ge 4$. Then, there is an ${\mathcal O}(n)$ space data structure that, upon insertion of any pair of edges, can report the capacity of $S$-mincut and an $S$-mincut for the resulting graph in ${\mathcal O}(1)$ and ${\mathcal O}(n)$ time respectively. 
\end{lemma}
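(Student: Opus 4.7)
The plan is to mirror the template developed in Cases~1 and~2 (Lemmas~\ref{lem : 4-junction or 4-cycles} and~\ref{lem : case 2}), since the structural setup is analogous. Let $C_3 = (A_1, A_2, A_3, A)$ be the unique $3$-junction of $\mathcal{H}_S$, with core $A$. Because no $k$-cycle or $k$-junction exists for $k \ge 4$, every non-core node of $\mathcal{H}_S$ has degree at most two, so each $A_i$ lies on a maximal path $P_i$ of tree-edges extending outward from $A$ and ending at a leaf $N_i$ with outermost adjacent edge $(N_i, N_i')$. I set $Q_i := C(N_i, (N_i, N_i'))$, so $Q_1, Q_2, Q_3$ are the outermost tight cuts of the three branches rooted at $A$, each an $S$-mincut of $G$.

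The crux is to justify the if-and-only-if asserted in the excerpt: insertion of $e, e'$ raises $\lambda_S$ by exactly one iff the four endpoints fall into one of the three configurations (a)--(c). For the forward direction, since each $Q_i$ is itself an $S$-mincut, Lemma~\ref{lem : $S$-mincuts of a bunch increases} applied to the bunch of the outermost tree-edge of branch $i$ forces at least one of $e, e'$ to have exactly one endpoint inside $Q_i$; moreover, no single edge may have both endpoints inside a single $Q_i$, for then the other edge alone would be responsible for all three of the junction tree-edges $(A_1, A), (A_2, A), (A_3, A)$ simultaneously, and the pigeon-hole argument from the proof of Lemma~\ref{lem : condition of insertion}(3) shows this is impossible. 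The only way to place four endpoints across $Q_1, Q_2, Q_3$ so that every $Q_i$ receives one and no edge is wholly inside a single $Q_i$ is precisely (a), (b), or (c). For the converse, in each of the three configurations a direct case analysis (mirroring Lemma~\ref{lem : 4 cycle insertions} and Lemma~\ref{lem : case 2 main lemma}) shows that every tree-edge bunch of $\mathcal{H}_S$ is crossed by at least one inserted edge; Lemma~\ref{lem : $S$-mincuts of a bunch increases} then lifts every $S$-mincut by at least one, and since the bunch corresponding to any branch used only once (for instance branch~$2$ in configuration~(a)) gains exactly one, sub-modularity together with the insertion of only two edges pins the increment at exactly one.

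On the data-structure side, it suffices to store the three tight cuts $Q_1, Q_2, Q_3$ together with, for each vertex $v \in V$, three indicator bits recording membership in $Q_1, Q_2, Q_3$. Using Lemma~\ref{lem : tight cut reporting} in conjunction with the ${\mathcal O}(n)$-space projection mapping of Lemma~\ref{lem : projection mapping}, all three sets can be identified by a single scan over $\phi$, so the representation takes ${\mathcal O}(n)$ space overall. Given an insertion query $(e, e')$, matching its configuration against (a)--(c) reduces to a constant number of indicator-bit look-ups, giving ${\mathcal O}(1)$ time for the capacity. To report an $S$-mincut of the updated graph in ${\mathcal O}(n)$ time: if $\lambda_S$ increases, any of $Q_1, Q_2, Q_3$ whose capacity rose by exactly one in the matched configuration is the answer and is output via the tight-cut reporting routine; if $\lambda_S$ is unchanged, any original $S$-mincut whose bunch is untouched by both inserted edges works and is again obtainable in ${\mathcal O}(n)$ time from the Connectivity Carcass.

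The step I expect to be the main obstacle is the forward direction, specifically excluding ``interior'' placements where an endpoint lies strictly inside a branch $P_i$ rather than inside the outermost tight cut $Q_i$. The clean finesse is to observe that $Q_i$ is itself an $S$-mincut in $G$, so by Lemma~\ref{lem : $S$-mincuts of a bunch increases} its post-insertion capacity equals $\lambda_S$ plus the number of inserted edges crossing $Q_i$; hence if some $Q_i$ receives no endpoint of either inserted edge, then $Q_i$ persists as a cut of capacity $\lambda_S$ in the new graph, contradicting the assumed increase. This localisation to the three outermost tight cuts is what makes the feasible-configuration space finite, pinpoints (a), (b), (c) as the only possibilities, and makes the ${\mathcal O}(1)$ query time attainable.
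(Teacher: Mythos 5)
Your proposal faithfully mirrors the paper's template (Cases~1 and~2), and the setup (branches $P_i$, leaf tight cuts $Q_i = C(N_i,(N_i,N_i'))$, the ``finesse'' that each $Q_i$ must receive an endpoint or it survives as an $S$-mincut of capacity $\lambda_S$) is all sound. But the key combinatorial step in the forward direction has a genuine gap, and it is exactly the spot where Case~3 stops being analogous to Cases~1 and~2.

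You conclude: ``The only way to place four endpoints across $Q_1, Q_2, Q_3$ so that every $Q_i$ receives one and no edge is wholly inside a single $Q_i$ is precisely (a), (b), or (c).'' This implicitly assumes all four endpoints lie in $Q_1 \cup Q_2 \cup Q_3$. In the $4$-cycle case that is forced by counting (four endpoints, four pairwise disjoint $Q_i$'s, each needing at least one), but here there are only three cuts and four endpoints; nothing you establish rules out an endpoint landing strictly inside a branch $P_i$ but outside $Q_i$, e.g.\ in a stretched unit distinguished by an interior tree-edge of $P_i$. And such placements do occur and do raise $\lambda_S$: take arms of length two, $N_i - M_i - A$, put $x \in Q_1$, $y \in Q_2$, $x' \in Q_3$, and let $y'$ be projected onto the edge $(M_1,A)$. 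Then $e = (x,y)$ alone covers every bunch on branches $1$ and $2$ (its endpoints sit in the extreme tight cuts on both sides), and $e' = (x',y')$ alone covers both bunches on branch $3$ (since $x' \in Q_3$ and $y'$ lies in $C(A,(A,M_3))$, being on the $A$-side of $(A,M_3)$). Every bunch is fully covered, so $\lambda_S$ increases by one, yet this is none of (a), (b), (c) — $y'$ is not in any $Q_i$. The pigeon-hole argument from Lemma~\ref{lem : condition of insertion}(3) only forbids an endpoint-free $Q_i$; it does not corral the fourth endpoint into the union.

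Concretely this means the ``$\Leftarrow$'' half you prove is fine (configurations (a), (b), (c) do increase $\lambda_S$), but the ``$\Rightarrow$'' half — and hence the claimed $\mathcal{O}(1)$ query via indicator bits on membership in $Q_1, Q_2, Q_3$ — is incomplete: storing only those three sets cannot distinguish my counterexample from a configuration where $y'$ is placed so that branch~$1$ (or~$3$) is left uncovered. A correct characterisation must speak in terms of which \emph{branch} of the junction each endpoint projects to (equivalently, which of the three tight cuts $C(A_i,(A_i,A))$ at the core it belongs to, plus whether it is deep enough, i.e.\ in $Q_i$), rather than membership in the leaf cuts $Q_i$ alone. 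I will note that the paper's own treatment of Case~3 gives exactly the same three configurations without further justification, so your proposal reproduces the paper's approach — but the step you explicitly flag as ``the main obstacle'' is precisely where the paper's sketch, and your completion of it, does not go through as written.
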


\noindent
\textbf{Case 4:} Suppose Skeleton ${\mathcal H}_S$ is a path between two nodes $S$ and $T$. It is easy to observe that both $S$ and $T$ are Steiner nodes. Without causing any ambiguity, we say that $S$ and $T$ are the corresponding $(\lambda_S+1)$ classes. We first assume that $S$ and $T$ are Singleton $(\lambda_S+1)$ classes and the corresponding Steiner vertices are $s$ and $t$, respectively. Later, we show how to extend it to general $(\lambda_S+1)$ classes containing zero or multiple Steiner vertices.

Upon insertion of the pair of edges $e,e'$, by Lemma \ref{lem : $S$-mincuts of a bunch increases}, the $S$-mincut definitely increases by $1$ if at least one edge lies between $S$ and $T$. Suppose one endpoint of one edge (likewise the other edge) belongs to $S$ (likewise $T$) and the other endpoint does not belong to $T$ (likewise $S$). Without loss of generality, assume that  $x\in S$ and $y'\in T$ but $x'\notin S$ and $y\notin T$. The following property holds for this case. 
\begin{lemma} 
\label{lem : belong to nearest cut}
    The capacity of $S$-mincut increases by $1$ if and only if $x'$ belongs to the nearest $(s,t)$-mincut of $y$.
\end{lemma}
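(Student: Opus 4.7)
My plan is to handle both directions through the unique nearest $(s,t)$-mincut $C^*$ of $y$ from Definition~\ref{def : nearest s,t mincut of a vertex}, which exists and is unique by the standard intersection argument via Fact~\ref{fact : closed under intersection and union}. The starting observation is that, because $s$ and $x$ lie in the same Singleton $(\lambda_S+1)$ class $S$, and similarly $t,y'$ lie in the Singleton class $T$, for every $S$-mincut $C$ oriented so that $s\in C$ we automatically have $x\in C$ and $y',t\notin C$. Consequently $e=(x,y)$ contributes to $C$ if and only if $y\notin C$, and $e'=(x',y')$ contributes to $C$ if and only if $x'\in C$. This reduces the whole question to a statement about $S$-mincuts and membership of $y$ and $x'$.

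The forward direction is then immediate. If the new $S$-mincut capacity equals $\lambda_S+1$, then the old $S$-mincut $C^*$, whose capacity in $G$ is $\lambda_S$, must receive at least one extra contributing edge from $\{e,e'\}$. Since $\{x,y\}\subseteq C^*$ by definition of $C^*$, edge $e$ cannot contribute, so $e'$ does; combined with $y'\notin C^*$ this forces $x'\in C^*$.

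For the reverse direction, I assume $x'\in C^*$ and verify two claims: (i) every old $S$-mincut $C$ of $G$ has its capacity raised by at least one after insertion, and (ii) $C^*$ itself achieves the value $\lambda_S+1$ in the new graph. For (i), orient $C$ so that $s\in C$. If $y\notin C$, then $e$ already contributes. Otherwise $y\in C$, so $C$ is an $S$-mincut containing $\{s,y\}$ and avoiding $t$; applying Fact~\ref{fact : closed under intersection and union} to the pair $(C^*,C)$ makes $C^*\cap C$ another such $S$-mincut, and the minimality of $C^*$ then forces $C^*\subseteq C$, hence $x'\in C^*\subseteq C$ and $e'$ contributes. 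Part (ii) is direct: $e$ has both endpoints inside $C^*$, while $e'$ has $x'\in C^*$ and $y'\notin C^*$, so the new capacity of $C^*$ is exactly $\lambda_S+1$. Combining (i) and (ii) gives the new $S$-mincut capacity exactly $\lambda_S+1$.

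I do not anticipate a hidden technical obstacle: the argument does not actually use the path structure of ${\mathcal H}_S$ assumed in Case~4, only that $x$ (respectively $y'$) behaves like $s$ (respectively $t$) with respect to containment in every $S$-mincut, which is a consequence of $x\in S$ and $y'\in T$ being Singleton classes. The one point needing a brief justification is the existence of $C^*$, which follows because the hypothesis $y\notin T$ guarantees that at least one $S$-mincut (for instance the tight cut closest to $T$ whose $s$-side contains all of $\pi(\phi(y))$) contains $\{s,y\}$ while avoiding $t$; uniqueness and minimality are then standard via Fact~\ref{fact : closed under intersection and union}.
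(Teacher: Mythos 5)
Your proof is correct and follows essentially the same route as the paper's: the forward direction shows $C^*$ would survive at capacity $\lambda_S$ if $x'\notin C^*$, and the reverse direction shows every $S$-mincut keeping $y$ on the $s$-side must contain $C^*$ (hence $x'$) by minimality. You fill in one step the paper elides (invoking Fact~\ref{fact : closed under intersection and union} to establish $C^*\subseteq C$), which is a welcome clarification.

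One small imprecision in your closing aside: you claim the argument does not use the path structure of ${\mathcal H}_S$, only that $x$ and $y'$ track $s$ and $t$. But your opening observation that every $S$-mincut, once oriented so that $s\in C$, satisfies $t\notin C$ does rely on ${\mathcal H}_S$ being a path: removing any tree-edge of a path separates the two terminal nodes, so every bunch separates $s$ from $t$. Without this, an $S$-mincut might keep $s$ and $t$ on the same side, and the membership of $e'$ in the edge-set of $C$ would no longer be controlled by the position of $x'$ alone, nor would Fact~\ref{fact : closed under intersection and union} apply to $(C^*,C)$ with the Steiner witnesses $s$ and $t$. This does not affect the correctness of the proof as written under the Case~4 hypothesis, but the remark that the path structure is unused should be dropped.
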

\begin{proof}
    Suppose $x'$ does not belong to the nearest $(s,t)$-mincut of $y$. In that case, the nearest $(s,t)$-mincut of $y$ remains unaffected.

    Suppose $x'$ belongs to the nearest $(s,t)$-mincut of $y$. Upon insertion of $(x,y)$, the capacity of every $S$-mincut that keeps $y$ on the side of $t$ increases by $1$. The $S$-mincut that are left are those that keeps $y$ on the side of $s$. Since $x'$ belongs to the nearest $(s,t)$-mincut of $y$, therefore, by Definition \ref{def : nearest s,t mincut of a vertex}, it follows that the capacity of every $S$-mincut increases by $1$.
\end{proof}
We store the matrix ${\mathcal M}$ (refer to Lemma \ref{lem : matrix M} in Appendix \ref{sec : projected to skeleton}) for all Stretched units of $G$. It occupies ${\mathcal O}((n-|S|)^2)$ space. By using quotient mapping $\phi$, matrix ${\mathcal M}$, and projection mapping $\pi$, it follows from Lemma \ref{lem : matrix M} that we can verify in ${\mathcal O}(1)$ time the condition of Lemma \ref{lem : belong to nearest cut}. The $S$-mincut for the resulting graph can be reported by reporting the nearest $(s,t)$-mincut of $y$ in ${\mathcal O}(n)$ time using ${\mathcal M}$, projection mapping $\pi$, and the quotient mapping $\phi$. This leads to the following lemma.
\begin{lemma} \label{lem : increase by 1 when in nearest cut of a vertex}
    Suppose one endpoint of one inserted edge (likewise the other inserted edge) belongs to $S$ (likewise $T$) and the other endpoint does not belong to $T$ (likewise $S$). Then, there is an ${\mathcal O}((n-|S|)^2+n)$ space data structure that can report the capacity of $S$-mincut and an $S$-mincut for the resulting graph in ${\mathcal O}(1)$ and ${\mathcal O}(n)$ time respectively.
\end{lemma}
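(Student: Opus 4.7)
The plan is to combine the characterization of Lemma \ref{lem : belong to nearest cut} with the matrix ${\mathcal M}$ of Lemma \ref{lem : matrix M} and the standard auxiliary data of the Connectivity Carcass. Under the hypothesis of the lemma, together with the standing assumption in Case 4 that $S=\{s\}$ and $T=\{t\}$, without loss of generality $x\in S$, $y'\in T$, and $x'\notin S$, $y\notin T$. By Lemma \ref{lem : belong to nearest cut}, the capacity of $S$-mincut grows by exactly $1$ after inserting $(x,y)$ and $(x',y')$ if and only if $x'$ lies in the nearest $(s,t)$-mincut of $y$. So I would reduce the whole task to an ${\mathcal O}(1)$ query and an ${\mathcal O}(n)$ reporting procedure for this single predicate.

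The stored data consists of: the Skeleton ${\mathcal H}_S$ (here a path; ${\mathcal O}(|S|)$ space), the quotient mapping $\phi$ (${\mathcal O}(n)$), the projection mapping $\pi$ (${\mathcal O}(n)$ via Lemma \ref{lem : projection mapping}), and the matrix ${\mathcal M}$ restricted to Stretched units, which by Lemma \ref{lem : matrix M} occupies ${\mathcal O}((n-|S|)^2)$ space. Summing these matches the claimed ${\mathcal O}((n-|S|)^2+n)$ bound.

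To answer the query for inserted edges $(x,y)$ and $(x',y')$, I first look up $\mu=\phi(y)$ and $\nu=\phi(x')$, then fetch $\pi(\mu),\pi(\nu)$ on the path ${\mathcal H}_S$. If both are Stretched units whose projections share an edge $(N,M)$ of ${\mathcal H}_S$, then reading ${\mathcal M}[\mu][\nu]$ tells me in ${\mathcal O}(1)$ whether $\nu$ lies in the nearest $(s,t)$-mincut of $\mu$; I only need to confirm that the stored pair is oriented toward $t$, which on a path reduces to a constant-time comparison of node positions. If $\mu$ or $\nu$ is a terminal unit, then $\pi(\cdot)$ is a single node, and the relevant nearest cut is fixed by the tree-edge adjacent to that node on the $t$-side, so the membership test is again a constant-time positional check on the path. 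Finally, to report the resulting $S$-mincut, I identify the Skeleton tree-edge $e^*$ corresponding to the nearest $(s,t)$-mincut $C$ of $y$ and, by Lemma \ref{lem : tight cut reporting}, report $C$ in ${\mathcal O}(n)$ time by scanning each $u\in V$ and testing whether $\pi(\phi(u))$ lies on the $s$-side of $e^*$.

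The main obstacle I anticipate is the careful case analysis when $\mu$ or $\nu$ is a terminal unit (so ${\mathcal M}$ does not directly store an entry for it), or when $\pi(\mu)$ and $\pi(\nu)$ do not share any Skeleton edge. In these situations the query has to be resolved purely from positional information on the Skeleton path, and one must verify that this really does capture nearest-cut membership in all corner cases. Once this case analysis is handled, the result is assembled directly from Lemma \ref{lem : belong to nearest cut}, Lemma \ref{lem : matrix M}, and Lemma \ref{lem : tight cut reporting}, giving the desired space and query-time bounds.
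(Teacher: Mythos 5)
Your approach matches the paper's: store the Skeleton ${\mathcal H}_S$, the quotient and projection mappings, and the matrix ${\mathcal M}$ of Lemma \ref{lem : matrix M}, apply the criterion from Lemma \ref{lem : belong to nearest cut}, and resolve the membership test in ${\mathcal O}(1)$ time via ${\mathcal M}$ supplemented by positional checks on the path Skeleton for terminal units. One small slip in the reporting step: the set produced by Lemma \ref{lem : tight cut reporting} is the tight cut $C(S_{\mathcal B})$ for the edge $e^{*}$ just past $\pi(\phi(y))$ on the $t$-side, not the nearest $(s,t)$-mincut $C_y$ of $y$ itself, which may be a proper subset and whose reconstruction also needs ${\mathcal M}$ (as the paper's wording suggests); nevertheless, since $x'\in C_y\subseteq C(S_{\mathcal B})$ and $x,y\in C(S_{\mathcal B})$ while $y'\notin C(S_{\mathcal B})$, exactly one inserted edge contributes to $C(S_{\mathcal B})$, so the tight cut is still a valid $S$-mincut of the resulting graph and your reporting procedure is sound.
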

We now verify whether the capacity of $S$-mincut increases by $2$. Suppose both edges are between $S$ and $T$. Without loss of generality, assume that $x,x'\in S$ and $y,y'\in T$. Let $C$ be a $(\lambda_S+1)$ cut that has both $s$ and $t$ on the same side. In this case, after the insertion of both edges $e,e'$, the capacity of $S$-mincut becomes $\lambda_S+1$. Storing and reporting cut $C$ requires ${\mathcal O}(n)$ space and ${\mathcal O}(n)$ time respectively. Now, we assume that such a cut $C$ does not exist. Then, the following lemma is easy to establish. 
\begin{lemma} \label{lem : increase by 2 condition}
    The capacity of $S$-mincut increases by $2$ if and only if 
    \begin{enumerate}
        \item there is no $(\lambda_S+1)$ cut that separates $x,x'$ from $s$ and 
        \item there is no $(\lambda_S+1)$ cut that separates $y,y'$ from $t$.
    \end{enumerate}
\end{lemma}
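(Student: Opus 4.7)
The plan is to characterize, for each Steiner cut $C$ of $G$, how much its capacity increases once $e=(x,y)$ and $e'=(x',y')$ are inserted, and then show that ``new $S$-mincut $=\lambda_S+2$'' is equivalent to conditions $(1)$ and $(2)$. Since we insert two edges, the capacity reaches $\lambda_S+2$ if and only if $(A)$ every $S$-mincut gains $2$ (both $e$ and $e'$ contribute to it) and $(B)$ every $(\lambda_S+1)$ cut gains at least $1$ (at least one of $e,e'$ contributes).

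First I would show that $(A)$ is automatic in Case~$4$. Since ${\mathcal H}_S$ is a simple path from $S$ to $T$, every minimal cut of ${\mathcal H}_S$ is a single tree-edge whose removal splits the path into two parts, one containing node $S$ and the other containing node $T$. Hence, for every bunch ${\mathcal B}$, we have $s\in S_{\mathcal B}$ and $t\in S\setminus S_{\mathcal B}$ (up to swapping). Since every $S$-mincut belongs to some bunch, every $S$-mincut $C$ separates $s$ from $t$. Because $S$-mincuts do not subdivide $(\lambda_S+1)$ classes, we get $S\subseteq C$ and $T\subseteq \overline{C}$ (or vice versa). In particular $x,x'\in S$ lie on the side of $s$ and $y,y'\in T$ on the side of $t$, so both $e$ and $e'$ contribute to $C$. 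This establishes $(A)$.

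Next I would prove the equivalence of $(B)$ with $(1)$ and $(2)$. By our standing assumption, every $(\lambda_S+1)$ cut $C$ separates $s$ from $t$; assume WLOG $s\in C$ and $t\in \overline{C}$. By Lemma~\ref{lem : subdivides at most one class}, $C$ subdivides at most one of $S, T$. For the forward direction, suppose $(1)$ fails, witnessed by a $(\lambda_S+1)$ cut $C$ with $s\in C$ and $x,x'\in \overline{C}$; then $C$ must subdivide $S$, hence $T\subseteq \overline{C}$ and $y,y'\in \overline{C}$, giving $x,x',y,y'\in \overline{C}$. Thus neither $e$ nor $e'$ contributes to $C$, so the new capacity of $C$ is still $\lambda_S+1$, contradicting a mincut value of $\lambda_S+2$. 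The case when $(2)$ fails is symmetric, with the roles of $S,T$ and $s,t$ interchanged. For the converse, fix any $(\lambda_S+1)$ cut $C$ (WLOG $s\in C$, $t\in \overline{C}$) and split into three subcases. If $C$ subdivides neither $S$ nor $T$, then $x,x'\in C$ and $y,y'\in \overline{C}$, and both edges contribute. If $C$ subdivides $S$ only, then $T\subseteq \overline{C}$ forces $y,y'\in \overline{C}$, and $(1)$ rules out $\{x,x'\}\subseteq \overline{C}$, so at least one of $x,x'$ lies in $C$ and the corresponding edge contributes. The case when $C$ subdivides $T$ only is symmetric, using $(2)$ in place of $(1)$. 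Thus $(B)$ holds in every subcase, and combining with $(A)$ we conclude that the new $S$-mincut capacity equals $\lambda_S+2$.

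The main technical hurdle is the clean three-way subcase analysis based on which, if any, of $S$ and $T$ a given $(\lambda_S+1)$ cut subdivides; this relies crucially on Lemma~\ref{lem : subdivides at most one class}, together with the structural fact that in Case~$4$ the path topology of ${\mathcal H}_S$ forces every $S$-mincut to separate $s$ from $t$. Once these two ingredients are in place the rest is straightforward bookkeeping on which side of $C$ each of the four vertices $x,x',y,y'$ lies.
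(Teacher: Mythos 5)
Your proposal is correct, and since the paper declares this lemma ``easy to establish'' and omits the proof, your argument supplies exactly what is missing. The decomposition into the two properties (A) and (B) --- that every $S$-mincut must gain both edges, and that every $(\lambda_S+1)$ cut must gain at least one --- is the right accounting, since two inserted edges can raise any cut by at most $2$, so the new minimum equals $\lambda_S+2$ precisely when those two conditions hold. Your verification of (A) correctly exploits the Case~4 structure: the path topology of ${\mathcal H}_S$ forces every minimal cut of the Skeleton to separate node $S$ from node $T$, hence every bunch has $s\in S_{\mathcal B}$, $t\notin S_{\mathcal B}$, so every $S$-mincut places the whole class $S$ (including $x,x'$) on one side and the whole class $T$ (including $y,y'$) on the other. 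Your reduction of (B) to conditions $(1)$ and $(2)$ also correctly invokes the standing assumption made in the paragraph before the lemma (no $(\lambda_S+1)$ cut keeps $s,t$ on the same side) together with Lemma~\ref{lem : subdivides at most one class}, and the three-way subcase split (subdivides neither, only $S$, or only $T$) is exhaustive and each subcase is handled correctly. In short, this is a valid and clean proof of the claim the paper left to the reader.
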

We can trivially store, for every pair $u,v$ of nonSteiner vertices belonging to $S$, if $s$ is separated from $u,v$ by a $(\lambda_S+1)$ cut. Similarly, we store for every pair of nonSteiner vertices belonging to $T$. The space occupied is ${\mathcal O}((n-|S|)^2)$. It requires ${\mathcal O}(1)$ time to verify the condition of Lemma \ref{lem : increase by 2 condition}. If condition of Lemma \ref{lem : increase by 2 condition} is satisfied, then $S$-mincut increases by $2$, otherwise by $1$.\\

\noindent
\textbf{Reporting an $S$-mincut for Resulting Graph:} Let us consider a $(\lambda_S+1)$ class ${\mathcal W}\in \{S,T\}$. Observe that there is only one Steiner vertex other than the Steiner vertex belonging to ${\mathcal W}$. We store the data structure ${\mathcal Q}_S({\mathcal W})$ of Theorem \ref{thm : data structure for singleton class} only for the Steiner vertex belonging to ${\mathcal W}$. Therefore, it follows from Theorem \ref{thm : data structure for singleton class} that the space occupied by ${\mathcal Q}_S({\mathcal W})$ is ${\mathcal O}(|\overline{S({\mathcal W})}|)$. Suppose ${\mathcal W}=S$. If for the pair of vertices $x,x'$, there is a $(\lambda_S+1)$ cut $C$ of $G$ such that $s\in C$ and $x,x'\in \overline{C}$, then we report $C\cap S$. To report the cut $C$ completely, we also report every vertex $x$ of $G$ such that $\phi(x)$ that does not belong to $S$. We do the same for $(\lambda_S+1)$ class $T$. Now suppose there is no pair of $(\lambda_S+1)$ cuts $C,C'$ such that $s\in C$, $x,x'\in \overline{C}$ and  $t\in C'$, $y,y'\in \overline{C'}$. This means that the condition of Lemma \ref{lem : increase by 2 condition} is satisfied. In this case, each $S$-mincut of $G$ becomes a Steiner cut of capacity $\lambda_S+2$ and there is no $(\lambda_S+1)$ cut that separates $s$ from $t$. Therefore, we can report set $S$ as a Steiner cut of capacity $\lambda_S+2$ in ${\mathcal O}(n)$ time. This leads to the following lemma.
\begin{lemma} \label{lem : data structure for dual edge insertion in singleton class}
    Suppose ${\mathcal H}_S$ is a path. Let the first and last node of the path be $S$ and $T$, respectively. If both $S$ and $T$ correspond to Singleton $(\lambda_S+1)$ class, then there is an ${\mathcal O}((n-|S|)^2+n)$ space data structure that, upon insertion of any pair of edges, can report the capacity of $S$-mincut and an $S$-mincut for the resulting graph in ${\mathcal O}(1)$ and ${\mathcal O}(n)$ time respectively.
\end{lemma}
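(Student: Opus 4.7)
The plan is to combine Lemma~\ref{lem : belong to nearest cut}, Lemma~\ref{lem : increase by 1 when in nearest cut of a vertex}, Lemma~\ref{lem : increase by 2 condition}, the single-edge insertion oracle of Theorem~\ref{thm : single edge Sensitivity Oracle}, and the data structure ${\mathcal Q}_S(\cdot)$ of Theorem~\ref{thm : data structure for singleton class} via an exhaustive case analysis on the positions of the four endpoints of $e=(x,y)$ and $e'=(x',y')$ relative to the Singleton classes $S$ (with the unique Steiner vertex $s$) and $T$ (with the unique Steiner vertex $t$). Because ${\mathcal H}_S$ is a path joining the node representing $S$ to the node representing $T$, every bunch of $G$ corresponds to a tree-edge of this path, and every tight cut for such a bunch separates $s$ from $t$. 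Hence Lemma~\ref{lem : $S$-mincuts of a bunch increases} implies that an inserted edge contributes to a bunch's $S$-mincut iff exactly one of its endpoints lies in the $s$-side of the relevant tight cut. Using the quotient mapping $\phi$ we can classify each endpoint of $e,e'$ as lying in $S$, in $T$, or in ``neither'' in ${\mathcal O}(1)$ time.

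With this classification in hand, I would enumerate five essentially distinct configurations. If both endpoints of some inserted edge lie in the same $(\lambda_S+1)$ class, that edge is non-contributing and the problem reduces to a single-edge insertion handled by Theorem~\ref{thm : single edge Sensitivity Oracle}. If both edges are fully between $S$ and $T$ (each endpoint in $S\cup T$), I invoke Lemma~\ref{lem : increase by 2 condition}: the capacity jumps to $\lambda_S+2$ iff no $(\lambda_S+1)$ cut separates the two $S$-side endpoints from $s$ and no $(\lambda_S+1)$ cut separates the two $T$-side endpoints from $t$; both conditions are checkable in ${\mathcal O}(1)$ using two pair-separability tables of total size ${\mathcal O}((n-|S|)^2)$. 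For the ``diagonal'' configuration handled by Lemma~\ref{lem : belong to nearest cut} (one endpoint of $e$ in $S$, one endpoint of $e'$ in $T$, and the other endpoints free), I use the matrix ${\mathcal M}$ of Lemma~\ref{lem : matrix M} to decide in ${\mathcal O}(1)$ whether the ``free'' endpoint of $e'$ lies in the nearest $(s,t)$-mincut of the ``free'' endpoint of $e$. If one edge touches $S$ alone and the other touches $T$ alone (with no endpoint crossing), combine the previous tools on the two edges separately. Finally, if neither edge touches $S\cup T$, Lemma~\ref{lem : $S$-mincuts of a bunch increases} implies the capacity stays at $\lambda_S$.

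For \emph{reporting} an $S$-mincut I maintain the data structures ${\mathcal Q}_S(S)$ and ${\mathcal Q}_S(T)$ of Theorem~\ref{thm : data structure for singleton class} (each instantiated only for its single Steiner vertex, contributing ${\mathcal O}(|\overline{S(S)}|+|\overline{S(T)}|)$ space plus the shared ${\mathcal O}(n)$ auxiliary vertex mapping), the Skeleton together with projection and quotient mappings for ${\mathcal O}(n)$ tight-cut retrieval via Lemma~\ref{lem : tight cut reporting}, the matrix ${\mathcal M}$, and the two pair-separability tables. When the capacity remains $\lambda_S$, output any preselected $S$-mincut of $G$. When it rises by $1$, reconstruct in ${\mathcal O}(n)$ either the appropriate bunch's tight cut via $\pi$ or the nearest $(s,t)$-mincut of $y$ via ${\mathcal M}$. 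When it rises by $2$, query ${\mathcal Q}_S(S)$ for a $(\lambda_S+1)$ cut separating $\{x,x'\}$ from $s$ (and symmetrically for $T$); if no such cuts exist, the Singleton class $S$ itself is a Steiner cut of capacity $\lambda_S+2$ and can be emitted in ${\mathcal O}(n)$ time through $\phi$ and $\pi$. Summing all components yields the claimed ${\mathcal O}((n-|S|)^2+n)$ space.

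The main obstacle is the bookkeeping at the boundary between cases, particularly certifying the capacity rise when the two edges interact across the path: the ${\mathcal O}(1)$ verification of Lemma~\ref{lem : increase by 2 condition} must be performed for the full pairs $\{x,x'\}$ and $\{y,y'\}$, which is why explicit pair-separability tables, and not the on-the-fly data structure of Theorem~\ref{thm : data structure}, are required. A secondary subtlety is proving that the cut returned for the ``$+2$'' case indeed has the correct capacity in $G+e+e'$, i.e.\ that both inserted edges are contributing to it; this follows from the defining property of the $(\lambda_S+1)$ cut supplied by ${\mathcal Q}_S(\cdot)$ together with the fact that, in this regime, $C_1\cap C_2$ remains a cut in the resulting graph, but must be spelled out carefully.
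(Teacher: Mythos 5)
Your overall architecture matches the paper's: the case analysis on where the four endpoints of $e,e'$ fall relative to $S$, $T$, and the middle of the path, the use of Lemma~\ref{lem : belong to nearest cut} and matrix ${\mathcal M}$ for the diagonal case, Lemma~\ref{lem : increase by 2 condition} with pair-separability tables for the both-edges-between-$S$-and-$T$ case, and ${\mathcal Q}_S(S)$, ${\mathcal Q}_S(T)$ plus the Connectivity Carcass for reporting. The space accounting also matches.

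However, there is a concrete gap in your treatment of the ``$+2$'' case. You invoke Lemma~\ref{lem : increase by 2 condition} unconditionally, but that lemma is stated in the paper only after an explicit pre-check: \emph{before} the lemma, the paper tests whether there exists a $(\lambda_S+1)$ cut $C$ having both $s$ and $t$ on the same side. Such a $C$ is a Steiner cut that does not separate $s$ from $t$; one can show (using submodularity against the tight $S$-mincut $S$) that $C$ subdivides neither $S$ nor $T$, so $S\cup T$ lies entirely on one side of $C$, and hence neither inserted edge contributes to $C$. After inserting both edges, $c(C)$ remains $\lambda_S+1$, so the Steiner mincut rises by only $1$ --- \emph{even though conditions $(1)$ and $(2)$ of Lemma~\ref{lem : increase by 2 condition} may both hold}, since a cut $C$ with $S\cup T\subseteq C$ separates neither $\{x,x'\}$ from $s$ nor $\{y,y'\}$ from $t$. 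So your procedure would answer $\lambda_S+2$ when the true answer is $\lambda_S+1$, and would fail to report the surviving cut $C$. The fix is cheap (store one such $C$, if it exists, in ${\mathcal O}(n)$ space and test it first), but without it both the capacity query and the reporting query are incorrect in this configuration. A related consequence is that your remark about the ``secondary subtlety'' for the $+2$ case is aimed at the wrong worry: the correctness of the reported cut (the set $S$) is straightforward once the $+2$ regime is certified; what actually needs certifying is that no lingering non-$(s,t)$ Steiner cut of capacity $\lambda_S+1$ remains, which is precisely the pre-check you omit.
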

\noindent
\textbf{Handling Dual Edge Insertion for Generic Steiner nodes:} Let ${\mathcal W}\in \{S,T\}$ be a $(\lambda_S+1)$ class containing more than one Steiner vertices. Similar to the construction of data structure in Lemma \ref{lem : data structure for generic steiner node} for all Steiner nodes, we construct the pair of graphs $G({\mathcal W})_1$ and $G({\mathcal W})_2$ from $G$ that satisfy the properties of Lemma \ref{lem : all cuts in gw1} and Lemma \ref{lem : all cuts GW2}.

The $(\lambda_S+1)$ class ${\mathcal W}_1$ of $G({\mathcal W})_1$ is a Singleton $(\lambda_S+1)$ class to which all vertices of ${\mathcal W}$ are mapped. We construct data structure of Lemma \ref{lem : data structure for dual edge insertion in singleton class} for ${\mathcal W}_1$ for graph $G({\mathcal W})_1$. 

We now consider graph $G({\mathcal W})_2$. By Lemma \ref{lem : all cuts GW2}, every Steiner mincut of $G({\mathcal W})_2$ is a $(\lambda_S+1)$ cut of $G$ that subdivides the Steiner set of ${\mathcal W}$. Therefore, to determine if there is no $(\lambda_S+1)$ cut that separates a Steiner vertex of ${\mathcal W}$ and $x,x'$ in $S$ (likewise $y,y'$ in $T$), it is sufficient to determine if the capacity of Steiner mincut of $G({\mathcal W})_2$ increases by at least $1$ upon insertion of pair of edges $e,e'$. So, we store the data structures of Lemma \ref{lem : does not increase conditions}, Lemma \ref{lem : 4-junction or 4-cycles}, Lemma \ref{lem : case 2}, Lemma \ref{lem : case 3}, and Lemma \ref{lem : increase by 1 when in nearest cut of a vertex} for graph $G({\mathcal W})_2$. Since ${\mathcal W}$ is only $S$ or $T$, this leads to the following lemma.

\begin{lemma} \label{lem : case 4}
    Suppose ${\mathcal H}_S$ is a path. Let the first and last node of the path be $S$ and $T$, respectively. Then there is an ${\mathcal O}((n-|S|)^2+n)$ space data structure that, upon insertion of any pair of edges, can report the capacity of $S$-mincut and an $S$-mincut for the resulting graph in ${\mathcal O}(1)$ and ${\mathcal O}(n)$ time respectively.
\end{lemma}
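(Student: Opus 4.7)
The plan is to reduce the generic case to the singleton case already handled in Lemma \ref{lem : data structure for dual edge insertion in singleton class}, using the decomposition $G({\mathcal W})_1, G({\mathcal W})_2$ introduced in Section \ref{sec : generic steiner node} for each Steiner class ${\mathcal W}\in\{S,T\}$ that contains more than one Steiner vertex of $G$. Recall that, by Lemma \ref{lem : all cuts in gw1} and Lemma \ref{lem : all cuts GW2}, every $(\lambda_S+1)$ cut of $G$ subdividing such a class ${\mathcal W}$ either (i) keeps all Steiner vertices of ${\mathcal W}$ together and is preserved in $G({\mathcal W})_1$, or (ii) subdivides the Steiner set of ${\mathcal W}$ and appears as a Steiner mincut of $G({\mathcal W})_2$. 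It thus suffices, for each of $S$ and $T$, to handle the two flavours separately and then combine the answers.

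For each ${\mathcal W}\in\{S,T\}$, I would first build $G({\mathcal W})_1$ and equip it with the oracle of Lemma \ref{lem : data structure for dual edge insertion in singleton class}. This is legitimate because the class ${\mathcal W}_1$ inside $G({\mathcal W})_1$ is a singleton $(\lambda_S+1)$ class whose only Steiner vertex represents all of ${\mathcal W}$, and the Skeleton of $G({\mathcal W})_1$ remains a path between the two Steiner nodes (contracting inside one endpoint of ${\mathcal H}_S$ preserves path shape). The oracle then decides in ${\mathcal O}(1)$ time whether a $(\lambda_S+1)$ cut of type (i) survives the insertion of $(e,e')$, and returns one in ${\mathcal O}(n)$ time when it does.

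For $G({\mathcal W})_2$, using Lemma \ref{lem : all cuts GW2} and the fact that the Steiner mincut capacity of $G({\mathcal W})_2$ equals $\lambda_S+1$, detecting whether a cut of type (ii) remains of capacity $\lambda_S+1$ after the insertion is equivalent to deciding whether the Steiner mincut of $G({\mathcal W})_2$ strictly increases upon inserting $(e,e')$. The Skeleton of $G({\mathcal W})_2$ must fall into exactly one of the configurations catalogued by Lemma \ref{lem : does not increase conditions}, Lemma \ref{lem : 4-junction or 4-cycles}, Lemma \ref{lem : case 2}, Lemma \ref{lem : case 3}, and Lemma \ref{lem : increase by 1 when in nearest cut of a vertex}, so I would install the corresponding data structure on $G({\mathcal W})_2$; by Lemma \ref{lem : $S$-mincuts of a bunch increases} this decision is correct and lifts back to $G$ via the contraction used to construct $G({\mathcal W})_2$. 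A query is then answered by running the two oracles on each ${\mathcal W}\in\{S,T\}$ and returning the minimum capacity produced, together with the corresponding reported cut.

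For the space bound, summing the ${\mathcal O}((n-|S|)^2+n)$ budget of each invoked lemma over at most two Steiner classes keeps the total at ${\mathcal O}((n-|S|)^2+n)$, since the nonSteiner vertices of each $G({\mathcal W})_j$ live in essentially disjoint subsets of $V\setminus S$ plus a number of contracted supervertices bounded by the degree of the corresponding Skeleton node. The main subtlety I anticipate is the bookkeeping across the contractions $G\to G({\mathcal W})_j$: an endpoint of $e$ or $e'$ that lies inside a contracted supervertex of $G({\mathcal W})_j$ must be remapped consistently so that the hypotheses of Lemma \ref{lem : $S$-mincuts of a bunch increases} can be invoked inside $G({\mathcal W})_j$. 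This is handled by keeping, alongside each $G({\mathcal W})_j$, the quotient and projection mappings already maintained in Theorem \ref{thm : data structure for singleton class} and Section \ref{sec : generic steiner node}, and this bookkeeping is the only nontrivial step beyond chaining together the earlier lemmas.
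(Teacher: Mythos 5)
Your proposal follows essentially the same route as the paper: decompose each endpoint class ${\mathcal W}\in\{S,T\}$ that contains multiple Steiner vertices into $G({\mathcal W})_1$ and $G({\mathcal W})_2$, apply Lemma~\ref{lem : data structure for dual edge insertion in singleton class} to the singletonized $G({\mathcal W})_1$, and reuse the Skeleton-shape case analysis (Lemmas~\ref{lem : does not increase conditions}, \ref{lem : 4-junction or 4-cycles}, \ref{lem : case 2}, \ref{lem : case 3}, \ref{lem : increase by 1 when in nearest cut of a vertex}) on $G({\mathcal W})_2$, combining the two answers. Your explicit space-accounting and the remark about remapping edge endpoints through the contractions are mild refinements of bookkeeping the paper leaves implicit, but the decomposition, the key lemmas invoked, and the overall argument are the same as the paper's.
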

Finally, we have designed a data structure ${\mathcal I}_S$ for all four cases for handling dual edge insertion. Therefore, the following theorem follows from Lemma \ref{lem : does not increase conditions}, Lemma \ref{lem : 4-junction or 4-cycles}, Lemma \ref{lem : case 2}, Lemma \ref{lem : case 3}, and Lemma \ref{lem : case 4}. 
\begin{theorem} [Dual Edge Insertion] \label{thm : dual edge insertion}
     Let $G=(V,E)$ be an undirected multi-graph on $n=|V|$ vertices and $m=|E|$ edges. For any Steiner set $S\subseteq V$, there is an ${\mathcal O}((n-|S|)^2+n)$ space data structure that, after the insertion of any pair of edges in $G$, can report the capacity of $S$-mincut in ${\mathcal O}(1)$ time and an $S$-mincut in ${\mathcal O}(n)$ time. 
\end{theorem}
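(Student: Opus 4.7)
The proof strategy is a case analysis driven by the combinatorial structure of the Skeleton ${\mathcal H}_S$, starting from Lemma \ref{lem : $S$-mincuts of a bunch increases}: inserting an edge $(u,v)$ raises the capacity of the $S$-mincuts in a bunch ${\mathcal B}$ exactly when $u,v$ lie on different sides of the tight cuts of ${\mathcal B}$. Hence inserting a pair of edges $e,e'$ raises $\lambda_S$ only if every bunch is ``crossed'' by at least one endpoint of $e$ or $e'$. The plan is first to rule out all rich-Skeleton cases via a pigeonhole argument, and then handle the few remaining small Skeleton shapes explicitly, leveraging Lemma \ref{lem : cycle property}, which states that different corner nodes of a cycle or junction have pairwise disjoint tight cuts.

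First I would show that $\lambda_S$ cannot increase when ${\mathcal H}_S$ contains either (a) some $k$-cycle or $k$-junction with $k\ge 5$, (b) a $4$-cycle/$4$-junction together with a $3$-junction, or (c) three $3$-junctions. In each of these cases, five pairwise-disjoint tight cuts can be exhibited, so the four endpoints of $e,e'$ miss at least one of them; by Lemma \ref{lem : $S$-mincuts of a bunch increases} the corresponding bunch is untouched and its $S$-mincuts retain capacity $\lambda_S$. To \emph{report} a preserved $S$-mincut, I store (in ${\mathcal O}(n)$ space) a constant number of precomputed tight cuts — one per relevant corner node — and output the surviving one in ${\mathcal O}(n)$ time via Lemma \ref{lem : tight cut reporting}. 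For the remaining small shapes — exactly one $4$-cycle/$4$-junction, exactly two or exactly one $3$-junction, or a single path — the ``cross every bunch'' criterion reduces to checking where the endpoints of $e,e'$ land relative to a constant collection of terminal tight cuts $Q_1,\dots,Q_k$ associated with the ends of each tail path; the condition for the capacity to increase is then an explicit ${\mathcal O}(1)$-time test on the quotient images $\phi(x),\phi(y),\phi(x'),\phi(y')$, and an $S$-mincut is reported in ${\mathcal O}(n)$ time as a surviving $Q_i$.

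The main obstacle, and the only place where storage exceeds ${\mathcal O}(n)$, is the path case in which ${\mathcal H}_S$ is a simple path between Steiner-node endpoints $S$ and $T$. When exactly one of $e,e'$ straddles $(S,T)$ while the other has only one endpoint in $S\cup T$ (say $x\in S$, $y'\in T$, but $x'\notin S$ and $y\notin T$), increasing $\lambda_S$ by $1$ is equivalent to $x'$ lying on the nearest $(s,t)$-mincut of $y$; I would resolve this in ${\mathcal O}(1)$ using the matrix ${\mathcal M}$ of Lemma \ref{lem : matrix M} built over Stretched units (${\mathcal O}((n-|S|)^2)$ space). When both edges straddle $S$ and $T$, the capacity increases by $2$ exactly when neither $\{x,x'\}$ is separated from $s$ by a $(\lambda_S+1)$ cut nor $\{y,y'\}$ from $t$ by a $(\lambda_S+1)$ cut; this is answered in ${\mathcal O}(1)$ by the Singleton-class data structure ${\mathcal Q}_S({\mathcal W})$ of Theorem \ref{thm : data structure for singleton class} applied to ${\mathcal W}\in\{S,T\}$, and the corresponding cut is reported in ${\mathcal O}(n)$ time with the help of the projection mapping $\pi$. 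When $S$ or $T$ contains several Steiner vertices, I would apply the $G({\mathcal W})_1,G({\mathcal W})_2$ split from Appendix \ref{sec : generic steiner node} so that the Singleton machinery still applies to $G({\mathcal W})_1$ and a Steiner-mincut oracle on $G({\mathcal W})_2$ (from Theorem \ref{thm : reporting $S$-mincut using skeleton and projection mapping}) handles cuts that split the Steiner set of ${\mathcal W}$. Summing over all cases gives the claimed ${\mathcal O}((n-|S|)^2+n)$ space with ${\mathcal O}(1)$-time capacity and ${\mathcal O}(n)$-time cut queries.
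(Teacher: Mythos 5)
Your proposal follows the paper's proof architecture closely: a pigeonhole argument over pairwise-disjoint tight cuts rules out Skeletons with a long cycle/junction or too many branch points (Lemma \ref{lem : condition of insertion}, Lemma \ref{lem : does not increase conditions}); constant collections of tight cuts handle the single $4$-cycle/$4$-junction and one- or two-$3$-junction shapes (Lemmas \ref{lem : 4 cycle insertions}, \ref{lem : 4 junction insertions}, \ref{lem : case 2 main lemma}); and the path case uses matrix $\mathcal M$ (Lemmas \ref{lem : matrix M}, \ref{lem : belong to nearest cut}) together with Singleton-class machinery, plus the $G({\mathcal W})_1/G({\mathcal W})_2$ split when the endpoint classes are not Singleton.

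Two details, however, would give wrong answers as written. First, in the subcase where both inserted edges straddle the endpoint classes $S$ and $T$, you want ${\mathcal Q}_S({\mathcal W})$ to decide in $O(1)$ time whether some $(\lambda_S+1)$ cut separates $s$ from \emph{both} $x$ and $x'$. Theorem \ref{thm : data structure for singleton class}(1) supports a multi-target query only when the global mincut of $G({\mathcal W})$ is at least $4$; for the low-connectivity regime, part (2) handles a single target only and the underlying ${\mathcal N}^{\le 3}_{\mathcal W}$ data structure does not retain every nearest cut, so the two-target query can return a false negative. The paper avoids this by storing, for each pair of nonSteiner vertices in $S$ (and in $T$), a bit recording whether they are jointly separated from $s$ by a $(\lambda_S+1)$ cut, which costs $O((n-|S|)^2)$ space and stays in budget. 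Second, for a non-Singleton endpoint class you route $G({\mathcal W})_2$ to the oracle of Theorem \ref{thm : reporting $S$-mincut using skeleton and projection mapping}, but that is a static ``is $u,v$ separated by an $S$-mincut?'' structure and cannot decide whether the Steiner mincut of $G({\mathcal W})_2$ increases under a pair of edge insertions. The paper instead re-instantiates the dual-edge-insertion data structures (the small-Skeleton lemmas together with Lemma \ref{lem : increase by 1 when in nearest cut of a vertex}) on $G({\mathcal W})_2$. Both issues are patchable within the claimed space and time, but the proposal should be amended accordingly.
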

Theorem \ref{thm : dual edge insertion} leads to Theorem \ref{thm : dual edge failure} for insertion of a pair of edges.

\section{Lower Bound} \label{sec : lower bound oracle}


In this section, we provide a lower bound on space and query time for any dual edge Sensitivity Oracle for $S$-mincut. We first establish a lower bound in directed graphs and then extend it to undirected graphs. In directed multi-graphs, the edge-set of any cut $C\subset V$ is defined as the number of edges that leave $C$ and enter $\overline{C}$. Similarly, the capacity of $C$ is also defined as the number of edges in the edge-set of cut $C$.

We begin by defining a graph $H=(V_H,E_H)$ on $n$ vertices and $m$ edges as follows. The vertex set $V_H$ is divided into three sets $S,L,R$ where $S$ is a set of isolated vertices, $L$ and $R$ forms a $L\times R$ bipartite graph on $m$ edges. The number of vertices in $S$ is at least $2$ and at most $n$, the number of vertices in $L$ is $\lfloor \frac{n-|S|}{2} \rfloor$, and the number of vertices in $R$ is $\lfloor\frac{n-|S|+1}{2}\rfloor$. Each edge is unweighted and oriented from set $L$ to set $R$. Let ${\mathcal B}$ be the class of all graphs $H$. Given an instance $B$ of ${\mathcal B}$, we construct the following dag ${\mathcal D}(B)$ on ${\mathcal O}(n)$ vertices and ${\mathcal O}(m)$ edges. \\

\noindent
\textbf{Construction of ${\mathcal D}(B)$:} Let $t$ be any vertex from $S$. Contract the set of vertices in $S\setminus \{t\}$ into a single vertex $s$. Let $s$ and $t$ form the Steiner set of ${\mathcal D}(B)$. Consider vertex $s$ as a source vertex and $t$ as a sink vertex. For each vertex $u\in L$ with $p>0$ outgoing edges, add $p$ edges from $s$ to $u$. Similarly, for each vertex $u\in R$ with $p>0$ incoming edges, add $p$ edges from $u$ to $t$. For each vertex $u$ in $L\cup R$, add two edges -- one from $s$ to $u$ and the other from $u$ to $t$.\\    

 By construction of ${\mathcal D}(B)$, the following lemma holds.
\begin{lemma} \label{lem : path exists in u,v iff u,v edge exists}
    For any pair of vertices in $B$, $(u,v)$ edge is in $B$ if and only if there exists a path from $u$ to $v$ in ${\mathcal D}(B)$.
\end{lemma}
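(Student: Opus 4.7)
The plan is to exploit the layered structure of ${\mathcal D}(B)$. By inspecting the construction, the edges of ${\mathcal D}(B)$ fall into exactly four types: (i) outgoing edges from $s$ to vertices in $L\cup R$, (ii) incoming edges to $t$ from vertices in $L\cup R$, (iii) the original $L\to R$ edges of $B$, which survive contraction because $S$ is isolated in $B$ and so the contracted vertex $s$ inherits no $L$-$R$ edges, and (iv) nothing else. Thus ${\mathcal D}(B)$ is a four-layer dag with layers $\{s\},L,R,\{t\}$ (in topological order), plus the bypass edges $s\to R$ and $L\to t$. The claim then reduces to showing that these bypasses create no new $L$-to-$R$ reachability beyond what is already present in $B$.

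For the forward direction, I would observe that if $(u,v)\in B$ then $u\in L$ and $v\in R$ (since $S$ is an independent set in $B$), and the construction explicitly retains this edge in ${\mathcal D}(B)$. Hence $(u,v)$ itself is a path from $u$ to $v$ in ${\mathcal D}(B)$.

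For the converse, I would do a short topological case analysis on the location of $u$ (the pair $u,v$ being taken in $L\cup R$, as intended by the context in which the lemma will be applied). If $u\in R$, then by the edge taxonomy the only outgoing edges from $u$ go to the sink $t$, so no path from $u$ reaches any $v\in L\cup R$; and indeed $(u,v)\notin B$ because $B$-edges originate in $L$. If instead $u\in L$, any first edge of a $u\leadsto v$ path goes either to $t$ (a sink, hence dead-end) or to some $v'\in R$ via an original $B$-edge; from $v'\in R$ the only outgoing edge is $v'\to t$, so the only way the path can terminate at a vertex $v\in L\cup R$ is $v=v'$, yielding $(u,v)\in B$.

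I do not anticipate any genuine obstacle: the entire argument is a reachability analysis in a dag with a bipartite middle layer and trivial source/sink padding. The only subtle point worth double-checking is that the ``add two edges'' clause of the construction introduces neither $L\to L$, $R\to L$, nor $R\to R$ edges, so the four-layer description is complete; this follows directly from the wording of the construction. With that in hand, the two directions of the biconditional are immediate.
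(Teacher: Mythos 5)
Your proof is correct and simply fleshes out the ``by construction'' observation the paper invokes without writing out any detail: the four-layer edge taxonomy of ${\mathcal D}(B)$ (only $s\to L$, $s\to R$, $L\to R$, $L\to t$, $R\to t$ edges) together with the case split on whether $u\in L$ or $u\in R$ is exactly the intended argument. Your observation that the statement must be read over $u,v\in L\cup R$ (rather than literally over all of $V_H$, where it would fail for, say, $u\in L$ and $v=t$) is also correct and matches the only contexts in which the lemma is applied, namely $u\in L$ and $v\in R$ in Lemma \ref{lem : relation between edge and dual edge failure} and Lemma \ref{lem : relation between edge and dual edge insertion}.
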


For any directed graph $G'$ with Steiner set $\{s',t'\}$, the following graph is constructed by Picard and Queyranne in \cite{DBLP:journals/mp/PicardQ80} for storing and characterizing all $(s',t')$-mincut of $G'$.\\

\noindent
Description of ${\mathcal D}_{PQ}(G')$: A directed acyclic graph with source $T'$ and sink $S'$ that stores all $(s',t')$-mincuts of $G'$ and characterize them as follows (Lemma \ref{lem : characterization of s,t mincut}). A $1$-transversal cut is an $(s',t')$-cut whose edge-set does not intersect any simple path more than once.
\begin{lemma}\label{lem : characterization of s,t mincut}
    An $(s',t')$-cut $C$ is an $(s',t')$-mincut if and only if $C$ is a $1$-transversal cut in ${\mathcal D}_{PQ}(G')$,
\end{lemma}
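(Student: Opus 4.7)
The plan is to leverage the max-flow min-cut theorem together with flow decomposition, following the classical Picard--Queyranne framework. Throughout, let $\lambda$ denote the capacity of an $(s',t')$-mincut and fix a max flow $f^*$ of value $\lambda$. Recall that ${\mathcal D}_{PQ}(G')$ is obtained by contracting the strongly connected components (SCCs) of the residual graph $G'_{f^*}$ into single nodes, so every node of ${\mathcal D}_{PQ}(G')$ corresponds to a subset of vertices of $G'$. In particular, every cut of ${\mathcal D}_{PQ}(G')$ lifts canonically to an $(s',t')$-cut of $G'$, and we shall use this identification throughout.

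For the forward direction ($\Rightarrow$), suppose $C$ is an $(s',t')$-mincut of capacity $\lambda$. Every max flow must saturate every edge from $C$ to $\overline{C}$ and send no flow from $\overline{C}$ to $C$; hence $G'_{f^*}$ contains no residual edge from $C$ to $\overline{C}$. Consequently, $C$ is a union of SCCs of $G'_{f^*}$ and gives a well-defined cut in ${\mathcal D}_{PQ}(G')$. Decompose $f^*$ into $\lambda$ unit $s'$-$t'$ paths: each such path must cross $C$ at least once (since $C$ separates $s',t'$) while the total number of crossings is bounded by $|C| = \lambda$, so each path crosses $C$ exactly once. For an arbitrary simple path $P$ from the source-SCC to the sink-SCC in ${\mathcal D}_{PQ}(G')$, lift $P$ to a walk in $G'_{f^*}$; because no residual edge leaves $C$, this walk cannot leave and re-enter $C$, so $P$ crosses $C$ exactly once. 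This establishes the $1$-transversal property.

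For the backward direction ($\Leftarrow$), suppose $C$ is a $1$-transversal $(s',t')$-cut in ${\mathcal D}_{PQ}(G')$. By max-flow min-cut, $|C| \geq \lambda$; we need to show $|C| \leq \lambda$. Decompose $f^*$ into $\lambda$ edge-disjoint unit paths $P_1,\dots,P_\lambda$ in $G'$, and project each $P_i$ to a simple path $\pi(P_i)$ in ${\mathcal D}_{PQ}(G')$ from the source-SCC to the sink-SCC (collapsing consecutive vertices lying in the same SCC). By the $1$-transversal hypothesis, each $\pi(P_i)$ uses exactly one edge of the cut defined by $C$. Conversely, every edge in the edge-set of $C$ is saturated by $f^*$ and hence carries one unit of flow from some $P_i$, giving a surjection from $\{P_1,\dots,P_\lambda\}$ onto the edge-set of $C$; thus $|C|\le\lambda$.

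The main obstacle will be pinning down the exact correspondence between simple paths in ${\mathcal D}_{PQ}(G')$ and flow paths in $G'$, in particular ensuring that ``simple paths in the DAG'' is rich enough to witness the $1$-transversal property against every max-flow decomposition. The subtlety is that edges of ${\mathcal D}_{PQ}(G')$ represent equivalence classes of edges of $G'$ once SCCs are contracted, so some care is needed in the lifting step to show that any walk that leaves $C$ and returns would create a residual cycle, contradicting the DAG structure. Once this lifting/projection dictionary between $G'$ and ${\mathcal D}_{PQ}(G')$ is fixed, both directions reduce to counting crossings along a flow decomposition.
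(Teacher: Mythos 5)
The paper does not actually prove this lemma; it is cited directly from Picard and Queyranne \cite{DBLP:journals/mp/PicardQ80}, so your argument is a reconstruction rather than an alternative to a paper proof. Your forward direction is essentially sound: a minimum cut $C$ admits no residual edge from $C$ to $\overline{C}$, so $C$ is a union of SCCs and once a simple path in ${\mathcal D}_{PQ}(G')$ enters the $C$-side it can never leave, giving at most one crossing. The middle paragraph about decomposing $f^*$ into $\lambda$ unit paths is, however, irrelevant to this direction -- it plays no role in establishing the $1$-transversal property and can be dropped.

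The backward direction has a genuine gap. You assert that ``every edge in the edge-set of $C$ is saturated by $f^*$,'' but this is not derived from the $1$-transversal hypothesis: together with the absence of flow on $\overline{C}\to C$ edges, this statement is \emph{equivalent} to $C$ being a minimum cut, which is exactly what you are trying to prove, so the argument as written is circular. The missing step is to show that $1$-transversality forces $C$ to be a closed set in ${\mathcal D}_{PQ}(G')$, i.e.\ no DAG edge leaves the $C$-side: if some DAG edge $(X,Y)$ had $X\in C$ and $Y\in\overline{C}$, one must produce a simple source-to-sink path through $(X,Y)$ that crosses $C$ more than once, and this relies on the structural fact that in the Picard--Queyranne DAG every node lies on a path from the source $T'$ to the sink $S'$. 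Your proposal neither states nor invokes this; your concluding paragraph gestures at a ``lifting'' subtlety but does not identify closedness as the missing ingredient. Once closedness is established, saturation of every $C\to\overline{C}$ edge and the vanishing of flow on $\overline{C}\to C$ edges both follow immediately (either violation would create a forbidden residual edge leaving $C$), and the flow-decomposition surjection you describe becomes unnecessary.
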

The dag ${\mathcal D}_{PQ}(G')$ can be obtained from $G'$ as follows. Let $G'_r$ be the residual graph of $G$ for maximum $(s',t')$-flow. By contracting every strongly connected component of the residual graph $G'_r$, we arrive at ${\mathcal D}_{PQ}(G')$. The source vertex $s'$ belongs to node $S'$ and sink vertex $t'$ belongs to node $T'$ of ${\mathcal D}_{PQ}(G')$.\\

Observe that each vertex, except source $s$ and sink $t$, in ${\mathcal D}(B)$ has the number of incoming edges the same as the number of outgoing edges. Exploiting this fact, Baswana, Bhanja, and Pandey in \cite{DBLP:journals/talg/BaswanaBP23} established an important property of ${\mathcal D}(B)$ as follows.
\begin{lemma}[Lemma 8.3 and Theorem B.3 in \cite{DBLP:journals/talg/BaswanaBP23}] \label{lem : D_PQ is same as D}
    Let ${\mathcal D}_{R}(B)$ be the graph obtained from ${\mathcal D}(B)$ after flipping the orientation of each edge in ${\mathcal D}(B)$. ${\mathcal D}_{PQ}({\mathcal D}(B))$ is the same as graph ${\mathcal D}_R(B)$.
\end{lemma}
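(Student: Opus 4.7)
\medskip

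\noindent
\textbf{Proof proposal.} The plan is to identify an explicit maximum $(s,t)$-flow in ${\mathcal D}(B)$ that saturates every edge, so that the residual graph becomes precisely ${\mathcal D}_R(B)$; since ${\mathcal D}(B)$ is a DAG, ${\mathcal D}_R(B)$ is also a DAG, and therefore every strongly connected component of the residual is a singleton vertex. Contracting strongly connected components in ${\mathcal D}_{PQ}(\cdot)$ then leaves ${\mathcal D}_R(B)$ unchanged, proving the claim.

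First, I will exhibit a flow $f$ of value $m+|L|+|R|$ using three families of unit-capacity augmenting paths: (i) for each bipartite edge $(u,v)\in E_H$ with $u\in L$, $v\in R$, route one unit along $s\to u\to v\to t$, using one of the $\deg_B(u)$ parallel $s\to u$ edges, the edge $(u,v)$, and one of the $\deg_B(v)$ parallel $v\to t$ edges; (ii) for each $u\in L$, route one unit along $s\to u\to t$ using the single ``universal'' $s\to u$ edge and the universal $u\to t$ edge; (iii) similarly, for each $v\in R$, route one unit along $s\to v\to t$. A direct bookkeeping using the construction of ${\mathcal D}(B)$ shows that each edge of ${\mathcal D}(B)$ is used by exactly one of these paths, so $f$ is feasible and saturates all edges.

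Next, I will match this flow value by the cut $\{s\}$. The out-edges of $s$ in ${\mathcal D}(B)$ consist of $\deg_B(u)+1$ parallel edges $s\to u$ for each $u\in L$ and one edge $s\to v$ for each $v\in R$; summing gives $\sum_{u\in L}\deg_B(u)+|L|+|R|=m+|L|+|R|$. Thus $f$ is a maximum $(s,t)$-flow, and by the max-flow/min-cut inequality every edge is indeed saturated at optimality. Consequently, the residual graph $G'_r$ of ${\mathcal D}(B)$ contains no forward arcs and contains exactly one reverse arc for every edge of ${\mathcal D}(B)$, so $G'_r$ is obtained from ${\mathcal D}(B)$ by reversing all edges, i.e.\ $G'_r={\mathcal D}_R(B)$.

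Finally, since ${\mathcal D}(B)$ is acyclic, so is ${\mathcal D}_R(B)$; therefore each strongly connected component of the residual is a single vertex. Applying the Picard--Queyranne contraction to $G'_r$ leaves the graph intact, so ${\mathcal D}_{PQ}({\mathcal D}(B))={\mathcal D}_R(B)$, as required. The only mildly delicate step will be the bookkeeping in (i)--(iii) confirming that every parallel edge from $s$, every bipartite edge, and every parallel edge to $t$ is used exactly once; once this saturation statement is in hand, the rest is immediate because acyclicity of ${\mathcal D}(B)$ trivializes the SCC structure of the residual.
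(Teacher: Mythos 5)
Your proof is correct. The paper itself does not supply a proof of this lemma; it is imported wholesale from Baswana--Bhanja--Pandey (Lemma~8.3 and Theorem~B.3 of that reference), so there is no ``paper's argument'' to line up against. What the surrounding text in the paper does emphasize, just before the lemma and again in Lemma~\ref{lem : undirected graph exists}, is that ${\mathcal D}(B)$ is a \emph{balanced} DAG (each internal node has equal in-degree and out-degree), and that this balancedness is the structural reason the Picard--Queyranne DAG turns out to be the edge-reversal of the input. Your argument makes that reason explicit: the three families of paths pair off the $\deg_B(u)$ parallel $s\to u$ arcs with the $\deg_B(u)$ bipartite arcs out of $u$ (and symmetrically the $\deg_B(v)$ bipartite arcs into $v$ with the $\deg_B(v)$ parallel $v\to t$ arcs), while the two ``universal'' arcs per vertex carry the length-two paths of families (ii) and (iii). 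The bookkeeping that each arc is used exactly once is precisely the check that a fully saturating, flow-conserving assignment exists, which is where balancedness enters. Once the residual equals ${\mathcal D}_R(B)$, acyclicity of ${\mathcal D}(B)$ gives trivial SCCs and the conclusion follows. One minor phrasing point: the sentence invoking ``the max-flow/min-cut inequality'' to assert that ``every edge is saturated at optimality'' is a bit muddled---you do not need that \emph{every} maximum flow saturates all arcs (which is a stronger and unnecessary claim); it suffices that \emph{your particular} flow $f$ both saturates all arcs and is maximum, since the SCC-contracted residual is the same for any choice of maximum flow. You already established both facts, so nothing is missing; just tighten that sentence.
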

Lemma \ref{lem : D_PQ is same as D} plays a key role in establishing the following close relations (Lemma \ref{lem : relation between edge and dual edge failure} and Lemma \ref{lem : relation between edge and dual edge insertion}) between the existence of an edge in graph $B$ and the increase/reduction of capacity in $(s,t)$-mincut in graph ${\mathcal D}(B)$ after the insertion/failure of a pair of edges in/from ${\mathcal D}(B)$. 
\begin{lemma} \label{lem : relation between edge and dual edge failure}
    Upon failure of a pair of edges $(s,u)$ and $(v,t)$ in ${\mathcal D}(B)$, the capacity of $(s,t)$-mincut in ${\mathcal D}(B)$ reduces by exactly $1$ if and only if edge $(u,v)$ is present in graph $B$. 
\end{lemma}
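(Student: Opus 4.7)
The plan is to compute the $(s,t)$-max flow in ${\mathcal D}(B)$ explicitly, and then use the Picard--Queyranne characterization (Lemma \ref{lem : characterization of s,t mincut}) together with Lemma \ref{lem : D_PQ is same as D} to decide exactly when an $(s,t)$-mincut of ${\mathcal D}(B)$ contains both $(s,u)$ and $(v,t)$ in its edge-set.

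First, I would establish that the $(s,t)$-mincut capacity in ${\mathcal D}(B)$ equals $\lambda := m + |L| + |R|$. The out-degree of $s$ is $m + |L| + |R|$, counting the $p_w+1$ parallel edges $s\to w$ for each $w\in L$ (where $p_w$ is the out-degree of $w$ in $B$) and the single edge $s\to w$ for each $w\in R$; so $\lambda$ upper-bounds the mincut. A feasible flow of value $\lambda$ is obtained by sending one unit along each direct path $s\to w\to t$ for $w\in L\cup R$, and one unit along each path $s\to u'\to v'\to t$ for every $(u',v')\in B$. Saturation of every edge follows from the in-degree/out-degree balance at every interior vertex.

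Next, upon failing one copy of $(s,u)$ and one copy of $(v,t)$, the cut $\{s\}$ has capacity $\lambda - 1$ in the resulting graph, so the new mincut is at most $\lambda - 1$. An $(s,t)$-cut $C$ loses $2$ units under these failures if and only if both failed edges lie in its edge-set, which is equivalent to $s,v\in C$ and $u,t\notin C$. Since every non-mincut of ${\mathcal D}(B)$ has original capacity at least $\lambda + 1$, the new $(s,t)$-mincut equals $\lambda - 2$ if and only if some $(s,t)$-mincut $C$ of ${\mathcal D}(B)$ satisfies $s,v\in C$ and $u,t\notin C$; otherwise it equals $\lambda - 1$.

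It therefore suffices to show: such an $(s,t)$-mincut of ${\mathcal D}(B)$ exists if and only if $(u,v)\notin B$. By Lemma \ref{lem : D_PQ is same as D} and Lemma \ref{lem : characterization of s,t mincut}, the $(s,t)$-mincuts of ${\mathcal D}(B)$ correspond bijectively to vertex sets $C$ with $s\in C,\ t\notin C$ that are closed under outgoing edges of ${\mathcal D}_R(B)$. A short case analysis of the (reversed) edge types in ${\mathcal D}_R(B)$ reduces this closure property to the single nontrivial condition: for every $(u',v')\in B$, $v'\in C \Rightarrow u'\in C$. Applying this: if $(u,v)\in B$, any cut $C$ with $v\in C$ must contain $u$, forbidding $s,v\in C,\ u,t\notin C$, so the mincut drops by exactly $1$. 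Conversely, if $(u,v)\notin B$, then the candidate $C^\star := \{s,v\}\cup \{u'\in L : (u',v)\in B\}$ satisfies the closure condition (every reversed edge leaving an $L$-vertex goes to $s\in C^\star$), contains $s,v$ but not $u,t$, and a direct edge count gives capacity $\lambda$; hence the mincut drops by $2$.

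The main obstacle will be the explicit edge-count for $C^\star$ and the enumeration of closure constraints in ${\mathcal D}_R(B)$: both are routine but require careful multi-graph bookkeeping over the parallel $s\to u$, $s\to v$, $u\to t$, $v\to t$ edges and the bipartite edges of $B$. Once these are done, the equivalence of the lemma falls out directly from the Picard--Queyranne structure that is already available through Lemma \ref{lem : D_PQ is same as D}.
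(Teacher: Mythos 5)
Your proof is correct and follows essentially the same route as the paper: reduce the question to whether some $(s,t)$-mincut of ${\mathcal D}(B)$ has both $(s,u)$ and $(v,t)$ in its edge-set, then use the identification ${\mathcal D}_{PQ}({\mathcal D}(B))={\mathcal D}_R(B)$ (Lemma~\ref{lem : D_PQ is same as D}) and the $1$-transversal characterization (Lemma~\ref{lem : characterization of s,t mincut}) to turn this into the reachability of $u$ from $v$ in ${\mathcal D}_R(B)$, with the set reachable from $v$ as the witness cut. The explicit maxflow value $\lambda=m+|L|+|R|$ and the direct edge count verifying $c(C^{\star})=\lambda$ are unnecessary bookkeeping: since $C^{\star}$ is exactly the set of vertices reachable from $v$ in the residual dag it is automatically a $1$-transversal cut, hence an $(s,t)$-mincut by Lemma~\ref{lem : characterization of s,t mincut}, which is the shortcut the paper takes.
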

\begin{proof}
    Suppose the capacity of $(s,t)$-mincut in ${\mathcal D}(B)$ reduces by exactly $1$. Assume to the contrary that there is no edge $(u,v)$ in $B$. So, it follows from Lemma \ref{lem : path exists in u,v iff u,v edge exists} and Lemma \ref{lem : D_PQ is same as D} that $u$ is not reachable from $v$ in ${\mathcal D}_{PQ}({\mathcal D}(B))$. As a result, observe that $(s,u)$ edge is a contributing edge of $1$-transversal cut $C$ defined by the set of vertices reachable from $v$ in ${\mathcal D}_{PQ}({\mathcal D}(B))$. Since, by Lemma \ref{lem : characterization of s,t mincut}, each $1$-transversal cut is an $(s,t)$-mincut in ${\mathcal D}_{PQ}({\mathcal D}(B))$, $C$ is an $(s,t)$-mincut in ${\mathcal D}(B)$. Therefore, the capacity of $(s,t)$-mincut reduces by $2$ as both edges $(s,u)$ and $(v,t)$ are contributing to $C$, a contradiction.      
    
    We now prove the converse part. It follows from Lemma \ref{lem : D_PQ is same as D} that, in ${\mathcal D}(B)$, there is an $(s,t)$-mincut to which $(s,u)$ is contributing, and there is also an $(s,t)$-mincut to which $(v,t)$ is contributing. So, the capacity of $(s,t)$-mincut in ${\mathcal D}(B)$ definitely reduces by at least $1$ after the failure of both the edges $(s,u)$ and $(v,t)$. Now, suppose edge $(u,v)$ is present in $B$. 
    By Lemma \ref{lem : path exists in u,v iff u,v edge exists} and Lemma \ref{lem : D_PQ is same as D}, 
    $u$ is reachable from $v$ in ${\mathcal D}_{PQ}({\mathcal D}(B))$. So, for each 1-transversal cut in ${\mathcal D}_{PQ}({\mathcal D}(B))$ to which edge $(v,t)$ is contributing, vertex $u$ must lie on the side of $s$. Therefore, there is no $1$-transversal cut in ${\mathcal D}_{PQ}({\mathcal D}(B))$ in which both edges $(s,u)$ and $(v,t)$ are contributing. Since, by Lemma \ref{lem : characterization of s,t mincut}, only $1$-transversal cuts in ${\mathcal D}_{PQ}({\mathcal D}(B))$ are $(s,t)$-mincuts, therefore, the capacity of $(s,t)$-mincut in ${\mathcal D}(B)$ reduces by exactly $1$.
\end{proof}

\begin{lemma} \label{lem : relation between edge and dual edge insertion}
    Upon insertion of a pair of edges $(s,v)$ and $(u,t)$ in ${\mathcal D}(B)$, the capacity of $(s,t)$-mincut in ${\mathcal D}(B)$ increases by exactly $1$ if and only if edge $(u,v)$ is present in graph $B$. 
\end{lemma}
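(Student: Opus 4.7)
The plan is to mirror the proof of Lemma~\ref{lem : relation between edge and dual edge failure} for the insertion setting, leveraging the same Picard--Queyranne structural facts. I will first use Lemma~\ref{lem : D_PQ is same as D} (that ${\mathcal D}_{PQ}({\mathcal D}(B)) = {\mathcal D}_R(B)$, so every vertex of ${\mathcal D}(B)$ forms its own singleton node in the PQ dag) together with Lemma~\ref{lem : characterization of s,t mincut} to conclude that a set $C$ is the $s$-side of an $(s,t)$-mincut of ${\mathcal D}(B)$ iff $s\in C$, $t\notin C$, and $C$ is closed under predecessors in ${\mathcal D}(B)$ (equivalently, an ideal in the reachability order). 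Under this characterization, the inserted edges $(s,v)$ and $(u,t)$ contribute to a cut $C$ iff $v\notin C$ and $u\in C$, respectively, so the new capacity of any $(s,t)$-cut $C$ is $c(C)+\mathbf{1}[v\notin C]+\mathbf{1}[u\in C]$.

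For the backward direction, I will assume $(u,v)\in B$, so that $(u,v)$ is an edge of ${\mathcal D}(B)$ by construction. The upper bound comes from the singleton $C_0=\{s\}$, which is trivially an ideal and hence an $(s,t)$-mincut; under it only $(s,v)$ contributes, giving new capacity $\lambda+1$ and thus a new mincut of at most $\lambda+1$. For the matching lower bound, I will show that every $(s,t)$-mincut $C$ has at least one inserted edge contributing: otherwise $v\in C$ and $u\notin C$, yet since the edge $(u,v)$ lies in ${\mathcal D}(B)$ and $C$ is closed under predecessors, $v\in C$ forces $u\in C$, a contradiction.

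For the forward direction I will argue the contrapositive: if $(u,v)\notin B$, then the new mincut remains $\lambda$ and in particular does not increase by exactly $1$. It suffices to exhibit a single mincut on which neither inserted edge contributes, and I will take $C_1 := \{s\}\cup\{v\}\cup\{w\in L : (w,v)\in B\}$. The key verification, which I expect to be the most delicate step, is that $C_1$ is closed under predecessors in ${\mathcal D}(B)$: the predecessors of $v$ are precisely $s$ (via the standard and the extra $s\to v$ edges) together with exactly those $w\in L$ with $(w,v)\in B$, all of which lie in $C_1$; each $w\in L$ has only $s$ as a predecessor; and $s$ has none. Since $(u,v)\notin B$ and $u\neq s,v$, we get $v\in C_1$ and $u\notin C_1$, so $C_1$'s capacity stays at $\lambda$; because edge insertions cannot decrease any cut capacity, the new mincut must equal $\lambda$. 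Once the ideal characterization is in hand, both directions follow symmetrically, so the main obstacle is simply the clean bookkeeping of predecessor edges in ${\mathcal D}(B)$ when verifying the two explicit ideals.
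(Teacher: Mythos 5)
Your proposal is correct and follows essentially the same route as the paper. The set you name $C_1$ is precisely the paper's $R_v$ (the reachability closure of $v$ in ${\mathcal D}_{PQ}({\mathcal D}(B))={\mathcal D}_R(B)$), restated in terms of predecessor-closure in ${\mathcal D}(B)$, and your backward direction, using $\{s\}$ for the upper bound and the ``every old mincut has at least one contributing inserted edge'' observation for the lower bound, is the same argument the paper makes, only slightly more explicitly decomposed.
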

\begin{proof}
    Suppose the capacity of $(s,t)$-mincut in ${\mathcal D}(B)$ increases by exactly $1$. Assume to the contrary that $(u,v)$ edge is not present in $B$. In graph ${\mathcal D}_{PQ}({\mathcal D}(B))$, by Lemma \ref{lem : path exists in u,v iff u,v edge exists} and Lemma \ref{lem : D_PQ is same as D}, $u$ is not reachable from $v$. If follows from Lemma \ref{lem : characterization of s,t mincut} that the set of vertices $R_v$ that are reachable from $v$ defines a $(s,t)$-mincut since it is a $1$-transversal cut, and $u\notin R_v$. Therefore, observe that the capacity of $R_v$ is not affected by the insertion of the pair of edges $(s,v)$ and $(u,t)$. So, the capacity of $(s,t)$-mincut remains the same, a contradiction.

    Suppose $(u,v)$ edge is present in $B$. By Lemma \ref{lem : path exists in u,v iff u,v edge exists} and Lemma \ref{lem : D_PQ is same as D}, in graph ${\mathcal D}_{PQ}({\mathcal D}(B))$, $u$ is reachable from $v$. Upon insertion of edge $(s,v)$, every $(s,t)$-mincut $C$ survives that keeps $v$ on the side of $s$. The capacity of other $(s,t)$-mincuts are increased by $1$. We know that $u$ is reachable from $v$ in ${\mathcal D}_{PQ}({\mathcal D}(B))$. So, for every $1$-transversal cut $C$ of ${\mathcal D}_{PQ}({\mathcal D}(B))$ that keeps $v$ on the side of $s$, $C$ also keeps $u$ on the side of $s$. Therefore, upon insertion of edge $(u,t)$, the capacity of every $1$-transversal cut in ${\mathcal D}_{PQ}({\mathcal D}_B)$ with $v$ is on the side of $s$ increases by $1$. So, the capacity of every $(s,t)$-mincut increases by exactly $1$.  
\end{proof}
Let ${\mathcal F}(B)$ be a data structure that can report the capacity of $(s,t)$-mincut after the failure of any pair of edges in graph ${\mathcal D}(B)$. For any given pair of vertices $u\in L$ and $v\in R$, it follows from Lemma \ref{lem : relation between edge and dual edge failure} that ${\mathcal F}(B)$ can determine whether $(u,v)$ edge is present in $B$. For a pair of instances $B_1$ and $B_2$ from ${\mathcal B}$, there must exist at least one edge $(u,v)$ such that $(u,v)$ is present in $B_1$ but not in $B_2$ or vice versa.  
Therefore, it follows from Lemma \ref{lem : relation between edge and dual edge failure} that the encoding of ${\mathcal F}(B_1)$ must differ from the encoding of ${\mathcal F}(B_2)$. Observe that there can be $\Omega(2^{(n-|S|)^2})$ possible instances in ${\mathcal B}$. Hence, there exists an instance $B$ from ${\mathcal B}$ such that ${\mathcal F}(B)$ requires $\Omega((n-|S|)^2)$ bits of space. 
We can argue in the similar way using Lemma \ref{lem : relation between edge and dual edge insertion} for insertion of a pair of edges. This completes the proof of the following lemma.
\begin{lemma} \label{lem : dual edge failure directed}
    Let $G=(V,E)$ be a directed multi-graph on $n$ vertices and $S\subseteq V$ be a Steiner set. Any data structure that, after the insertion/failure of any pair of edges from $E$, can report the capacity of $S$-mincut must occupy $\Omega((n-|S|)^2)$ bits of space in the worst case, irrespective of query time.  
\end{lemma}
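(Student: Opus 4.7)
The plan is to prove Lemma~\ref{lem : dual edge failure directed} by an information-theoretic encoding argument, re-using the two reductions already captured in Lemma~\ref{lem : relation between edge and dual edge failure} (for failures) and Lemma~\ref{lem : relation between edge and dual edge insertion} (for insertions). First I would fix the family $\mathcal{B}$ of bipartite orientations built in the excerpt: an isolated Steiner set of size $|S|$, a left class $L$ of size $\lfloor (n-|S|)/2\rfloor$, and a right class $R$ of size $\lfloor (n-|S|+1)/2\rfloor$, with every subset $E(B) \subseteq L\times R$ giving a distinct member. The number of admissible adjacency matrices is $2^{|L|\cdot|R|} = 2^{\Omega((n-|S|)^2)}$, so $\log_2|\mathcal{B}| = \Omega((n-|S|)^2)$.

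Next, for each $B\in\mathcal{B}$ I would pass to the directed multi-graph $\mathcal{D}(B)$ on $n$ vertices, exactly as constructed before the lemma statement: contract $S\setminus\{t\}$ into $s$, add edges $s\to u$ for each out-edge of $u\in L$, add edges $v\to t$ for each in-edge of $v\in R$, and for every $u\in L\cup R$ add the two padding edges $(s,u)$ and $(u,t)$ so that each pair $(s,u),(v,t)$ is guaranteed to be present in $\mathcal{D}(B)$. For any $|S|$ in the range $[2,n]$, the remaining $|S|-2$ vertices of $S$ are left as isolated Steiner vertices; they do not affect any $(s,t)$-cut, so the analysis of Picard--Queyranne applied to $\mathcal{D}(B)$ in Lemma~\ref{lem : D_PQ is same as D} continues to hold, and $\mathcal{D}(B)$ has exactly $n$ vertices with Steiner set of size $|S|$.

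Then I would run an encoding argument. Suppose $\mathcal{F}$ is any data structure that, after the failure of any pair of edges in $\mathcal{D}(B)$, reports the capacity of the $S$-mincut (regardless of query time). Associate to each $B\in\mathcal{B}$ the bit-string $\mathcal{F}(\mathcal{D}(B))$. I claim this encoding is injective on $\mathcal{B}$: with knowledge of the fixed vertex labels $L$, $R$ and $s,t$, one can recover $B$ from $\mathcal{F}(\mathcal{D}(B))$ by iterating over every pair $(u,v)\in L\times R$, simulating the failure of the two edges $(s,u)$ and $(v,t)$ (both present by construction), and querying $\mathcal{F}$; by Lemma~\ref{lem : relation between edge and dual edge failure}, the returned capacity equals $\lambda_S - 1$ if and only if $(u,v)\in E(B)$. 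Hence the entire adjacency matrix of $B$ can be read out of $\mathcal{F}(\mathcal{D}(B))$, so distinct $B$'s produce distinct encodings, and some $B$ must force $|\mathcal{F}(\mathcal{D}(B))| \geq \log_2|\mathcal{B}| = \Omega((n-|S|)^2)$ bits. The insertion case is handled verbatim, using the pair of inserted edges $(s,v),(u,t)$ and invoking Lemma~\ref{lem : relation between edge and dual edge insertion} in place of Lemma~\ref{lem : relation between edge and dual edge failure}.

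There is no serious obstacle; the two hard lemmas (\ref{lem : relation between edge and dual edge failure} and \ref{lem : relation between edge and dual edge insertion}) have already done the combinatorial work. The only genuinely delicate step is checking that the construction $\mathcal{D}(B)$ can be realized on \emph{exactly} $n$ vertices with \emph{exactly} $|S|$ Steiner vertices for every $|S|\in[2,n]$; this is absorbed by carrying the $|S|-2$ residual Steiner vertices as isolated nodes, which neither contribute to any edge-set nor to the strongly-connected-component contraction that underlies Lemma~\ref{lem : D_PQ is same as D}. With this book-keeping in place, the encoding argument delivers the claimed $\Omega((n-|S|)^2)$-bit lower bound, independent of the query time of $\mathcal{F}$.
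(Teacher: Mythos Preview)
Your encoding argument mirrors the paper's almost exactly: both invoke Lemma~\ref{lem : relation between edge and dual edge failure} and Lemma~\ref{lem : relation between edge and dual edge insertion} on the family~$\mathcal{B}$, observe that the oracle distinguishes all $2^{\Omega((n-|S|)^2)}$ instances, and conclude by pigeonhole. That part is fine.

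The genuine gap is your padding step. You propose to realise the hard instance on exactly $n$ vertices with $|S|$ Steiner vertices by leaving $|S|-2$ of them as isolated Steiner vertices, arguing that ``they do not affect any $(s,t)$-cut.'' But the oracle in the lemma reports the capacity of the $S$-mincut, not the $(s,t)$-mincut. An isolated Steiner vertex $w$ gives the Steiner cut $\{w\}$ of capacity~$0$, so $\lambda_S=0$ in your padded graph, and it remains~$0$ after failing $(s,u),(v,t)$ or after inserting $(s,v),(u,t)$. The oracle then returns~$0$ on every query and you cannot decode $E(B)$; the injectivity argument collapses. The paper itself does not attempt this padding: it simply works with $\mathcal{D}(B)$, whose Steiner set is $\{s,t\}$. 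If you want to carry out the extension explicitly, the correct fix is to attach each of the $|S|-2$ extra Steiner vertices to $s$ with enough parallel edges (in both directions, exceeding the $(s,t)$-mincut capacity of $\mathcal{D}(B)$) so that every $S$-mincut is forced to keep them on the $s$-side; then the $S$-mincut of the padded graph coincides with the $(s,t)$-mincut of $\mathcal{D}(B)$ and Lemmas~\ref{lem : relation between edge and dual edge failure}--\ref{lem : relation between edge and dual edge insertion} transfer intact.
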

Baswana, Bhanja, and Pandey \cite{DBLP:journals/talg/BaswanaBP23} also established the following relation between directed and undirected graphs. A dag with a single source vertex and a single sink vertex is said to be a \textit{balanced dag} if each node, except the source and sink, has the same number of incoming and outgoing edges. 
\begin{lemma}[Theorem 8.5 and Theorem B.3 in \cite{DBLP:journals/talg/BaswanaBP23}] \label{lem : undirected graph exists}
    Let $D$ be a balanced dag with a single source vertex and a single sink vertex. Let $D_R$ be the graph obtained after reversing the direction of each edge of $D$. Let $H$ be an undirected graph obtained by replacing every directed edge of $D$ with an undirected edge. Then, ${\mathcal D}_{PQ}(H)$ is the same as graph $D_R$. 
\end{lemma}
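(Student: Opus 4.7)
The plan is to exhibit a canonical maximum $(s,t)$-flow in $H$ whose residual graph coincides, as a directed multigraph, with $D_R$, and then observe that the SCC-contraction step of the Picard--Queyranne construction acts trivially because $D_R$ is acyclic. So the PQ dag of $H$ is simply $D_R$ itself.

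The first step is to identify the max-flow. The key observation is that, because $D$ is a balanced dag, assigning one unit of flow to every directed edge of $D$ (along its original orientation) yields a feasible $(s,t)$-flow in $H$: at every internal vertex the balance condition is exactly Kirchhoff's conservation law, and each undirected edge of $H$ comes from a unique edge of $D$ and so carries flow at most its unit capacity. The value of this flow equals $\deg_D^{\text{out}}(s)$, which equals $\deg_H(s)$ since the dag source has no in-edges. On the other hand the trivial cut $\{s\}$ in $H$ has capacity $\deg_H(s)$, so weak duality forces equality and this flow is a maximum $(s,t)$-flow in $H$.

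The second step is to describe the residual graph under this flow. For each edge $(u,v)$ of $D$, the corresponding undirected edge $\{u,v\}$ in $H$ carries one unit of flow from $u$ to $v$; the residual therefore offers zero capacity in the direction $u\to v$ (the forward capacity is saturated) and capacity two in the direction $v\to u$ (one unit from cancelling the existing flow, plus one unit from the never-used reverse orientation of the undirected edge). Since every edge of $H$ arises from an edge of $D$, the residual graph of $H$ is precisely $D_R$ as a directed multigraph (with doubled capacities, which are irrelevant for the SCC computation). Finally, because $D$ is a dag, so is $D_R$; every strongly connected component of the residual graph is therefore a singleton and the contraction step in the definition of ${\mathcal D}_{PQ}$ is the identity. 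This delivers ${\mathcal D}_{PQ}(H) = D_R$, and one can also check the orientation convention matches: $s$ (a source in $D$) has only incoming residual edges and so is a sink in ${\mathcal D}_{PQ}(H)$, while $t$ becomes a source.

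The only nontrivial step, and thus the main obstacle, is a correct accounting of residual capacities for undirected edges in a multigraph setting. The modeling of an undirected unit-capacity edge as two antiparallel unit-capacity directed edges must be handled carefully: after pushing one unit on the ``forward'' directed copy, the residual contributions from both the forward edge (backward cancellation, capacity $1$) and the reverse copy (unused forward, capacity $1$) must be summed, yielding $c+f=2$ in the $v\to u$ direction and $c-f=0$ in the $u\to v$ direction. Once this is pinned down, feasibility comes from the balanced property, optimality comes from the $\{s\}$-cut, and the collapse of the SCC step comes from acyclicity of $D$; no further structural argument is required.
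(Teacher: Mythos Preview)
Your proof is correct. Note that the paper does not supply its own proof of this lemma---it is quoted as Theorem~8.5 and Theorem~B.3 of \cite{DBLP:journals/talg/BaswanaBP23}---so there is nothing in the present paper to compare your argument against; your approach (exhibit the canonical saturating flow from the balanced condition, certify optimality via the cut $\{s\}$, and observe that acyclicity of $D_R$ makes the SCC contraction trivial) is the natural direct verification. One minor point of precision: the residual multigraph you obtain has, for each arc $(u,v)$ of $D$, \emph{two} parallel reverse arcs $(v,u)$ (one from cancellation of the saturated forward copy, one from the unused antiparallel copy), so strictly it is $D_R$ with doubled edge multiplicities rather than $D_R$ on the nose; as you correctly note, this is irrelevant for the SCC computation and hence for the structure of $\mathcal{D}_{PQ}(H)$, which is all the lemma and its downstream use (Lemma~\ref{lem : undirected lower bound}) require.
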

It follows from the construction that graph ${\mathcal D}(B)$ is a balanced dag. Therefore, by Lemma \ref{lem : undirected graph exists}, there exists an undirected graph $H$ of ${\mathcal D}(B)$ such that ${\mathcal D}_{PQ}(H)$ is the same as the graph ${\mathcal D}_R(B)$, where ${\mathcal D}_R(B)$ is obtained from ${\mathcal D}(B)$ after reversing the direction of each edge. 
Hence, along a similar line of Lemma \ref{lem : relation between edge and dual edge failure} and Lemma \ref{lem : relation between edge and dual edge insertion}, and using ${\mathcal D}_R(B)$, the following lemma can be established.
\begin{lemma} \label{lem : undirected lower bound}
    Upon failure (likewise insertion) of any pair of edges $(s,u)$ and $(v,t)$ (likewise $(s,v)$ and $(u,t)$) in undirected graph $H$, the capacity of $S$-mincut in $H$ decreases (likewise increases) by $1$ if and only if $(u,v)$ edge is present in $B$.
\end{lemma}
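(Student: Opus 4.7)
\textbf{Proof Proposal for Lemma \ref{lem : undirected lower bound}.} The plan is to mirror the directed-case arguments of Lemma \ref{lem : relation between edge and dual edge failure} and Lemma \ref{lem : relation between edge and dual edge insertion}, with ${\mathcal D}_{PQ}(H)$ playing the role that ${\mathcal D}_{PQ}({\mathcal D}(B))$ played in the directed setting. The bridge is Lemma \ref{lem : undirected graph exists}: since ${\mathcal D}(B)$ is balanced by construction (every non-source, non-sink vertex has equal in- and out-degree), there is an undirected graph $H$, obtained by replacing every arc of ${\mathcal D}(B)$ with an undirected edge, satisfying ${\mathcal D}_{PQ}(H)={\mathcal D}_R(B)$. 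I will first record the reachability dictionary I need: by Lemma \ref{lem : path exists in u,v iff u,v edge exists} the edge $(u,v)$ lies in $B$ iff there is a directed $u$-$v$ path in ${\mathcal D}(B)$, which is equivalent to a directed $v$-$u$ path in ${\mathcal D}_R(B)={\mathcal D}_{PQ}(H)$, i.e.\ $u$ is reachable from $v$ in ${\mathcal D}_{PQ}(H)$.

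For the failure statement, I would argue as follows. Both $(s,u)$ and $(v,t)$ correspond to undirected edges of $H$ that project, via contraction of SCCs in the residual graph, onto edges of ${\mathcal D}_{PQ}(H)$ directed (in the reversed convention) from the side of $t$ to the side of $s$. By Lemma \ref{lem : characterization of s,t mincut}, the $(s,t)$-mincuts of $H$ are exactly the $1$-transversal cuts of ${\mathcal D}_{PQ}(H)$. Each of the two edges individually contributes to some $1$-transversal cut, so the capacity drops by at least $1$. It drops by exactly $2$ iff some single $1$-transversal cut contains both edges in its edge-set, which happens iff one can place $u$ on the $t$-side, $v$ on the $s$-side, and still have a $1$-transversal cut. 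Since $u$-$v$ reachability in ${\mathcal D}_{PQ}(H)$ would forbid this, the drop is exactly $1$ iff $u$ is reachable from $v$ in ${\mathcal D}_{PQ}(H)$, i.e.\ iff $(u,v)\in B$. Concretely: if $(u,v)\notin B$, the set of vertices reachable from $v$ in ${\mathcal D}_{PQ}(H)$ is a $1$-transversal cut whose edge-set contains both $(s,u)$ and $(v,t)$, forcing a drop of $2$; conversely if $(u,v)\in B$, then in every $1$-transversal cut containing $(v,t)$, the vertex $u$ must lie on the $s$-side, so $(s,u)$ does not contribute to that cut, and the drop is exactly $1$.

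For the insertion statement the argument is symmetric. The edges $(s,v)$ and $(u,t)$ are both present as undirected edges of $H$, and inserting each increments the capacity of every $(s,t)$-mincut that it crosses. Inserting $(s,v)$ alone kills all $(s,t)$-mincuts that place $v$ on the $t$-side, raising their capacity by $1$; the surviving mincuts are those keeping $v$ on the $s$-side. Inserting $(u,t)$ further eliminates those with $u$ on the $s$-side. If $(u,v)\in B$, then $u$ is reachable from $v$ in ${\mathcal D}_{PQ}(H)$, so every $1$-transversal cut that puts $v$ on the $s$-side also puts $u$ there, hence is killed by the second insertion, and the overall $(s,t)$-mincut rises by exactly $1$. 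If $(u,v)\notin B$, the $1$-transversal cut corresponding to the set reachable from $v$ in ${\mathcal D}_{PQ}(H)$ has $v$ on the $s$-side and $u$ on the $t$-side, so it is untouched by either insertion, and the capacity does not change.

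I expect the only delicate point to be the correspondence between edge-sets in $H$ and edge-sets of $1$-transversal cuts in ${\mathcal D}_{PQ}(H)$, since $H$ is undirected whereas ${\mathcal D}_{PQ}(H)$ is a dag obtained by contracting strongly connected components of the residual graph. This correspondence is already encoded in the Picard--Queyranne construction used to prove Lemma \ref{lem : characterization of s,t mincut} and in Lemma \ref{lem : undirected graph exists}, so once I invoke these, the rest is a direct translation of the directed proofs; no new structural insight is required. Lemma \ref{lem : undirected lower bound} then combines with the counting argument already applied in Lemma \ref{lem : dual edge failure directed} (there are $2^{\Omega((n-|S|)^2)}$ instances in $\mathcal B$, each distinguishable by a dual-edge query in $H$) to yield the unconditional $\Omega((n-|S|)^2)$ bit bound claimed in Theorem \ref{thm : lower bound on dual edge failure}.
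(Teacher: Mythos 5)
Your proposal is correct and follows the same approach as the paper: the paper states Lemma \ref{lem : undirected lower bound} without a detailed proof, noting only that it follows along similar lines to Lemma \ref{lem : relation between edge and dual edge failure} and Lemma \ref{lem : relation between edge and dual edge insertion} once ${\mathcal D}_{PQ}(H) = {\mathcal D}_R(B)$ is secured through the balanced-dag hypothesis and Lemma \ref{lem : undirected graph exists}. Your write-up is exactly that translation, with the reachability dictionary and the treatment of edge-sets in $H$ versus in ${\mathcal D}_{PQ}(H)$ handled correctly.
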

Exploiting Lemma \ref{lem : undirected lower bound} and, along similar lines of the proof of Lemma \ref{lem : dual edge failure directed}, we can establish Theorem \ref{thm : lower bound on dual edge failure}.



\section{Conclusion} \label{sec : conclusion}
The problem of designing a compact data structure for minimum+1 Steiner cuts and designing dual edge Sensitivity Oracle for Steiner mincut is addressed for the first time in this manuscript. Our results provide a complete generalization of the two important existing results while matching their bounds -- for global cut (Steiner set is the vertex set $V$) by Dinitz and Nutov \cite{DBLP:conf/stoc/DinitzN95} and for $(s,t)$-cut (Steiner set is only a pair of vertices $\{s,t\}$) by Baswana, Bhanja, and Pandey \cite{DBLP:journals/talg/BaswanaBP23}. The space occupied by the dual edge Sensitivity Oracle for Steiner min-cut, presented in this manuscript, is optimal almost for the whole range of $S$. In addition, the query time for reporting Steiner mincut and its capacity is also worst case optimal.

An immediate future work would be to design a Sensitivity Oracle for Steiner mincut that handles the insertions or failures of any $f>2$ edges. For every Steiner set, our result improves the space by a factor of $\Omega(\frac{m}{n})$ over the trivial result 
for handling any $f$ edge insertions or failures for Steiner mincut.
We have crucially exploited the fact that a $(\lambda_S+1)$ cut cannot subdivide more than one $(\lambda_S+1)$ classes for designing dual edge Sensitivity Oracle. However, this no longer holds for cuts of capacity greater than $\lambda_S+1$, which are essential for handling any $f>2$ edge insertions or failures. In conclusion, while we believe that our results on $(\lambda_S+1)$ cuts must be quite useful for addressing the generalized version of the problem, some nontrivial insights/techniques are also required.

\bibliography{main}
\newpage
\appendix

\section{Proof of Theorem \ref{thm : gen 3 star}(3)} \label{app : gen 3 star extended}

  \begin{figure}
 \centering
    \includegraphics[width=\textwidth]{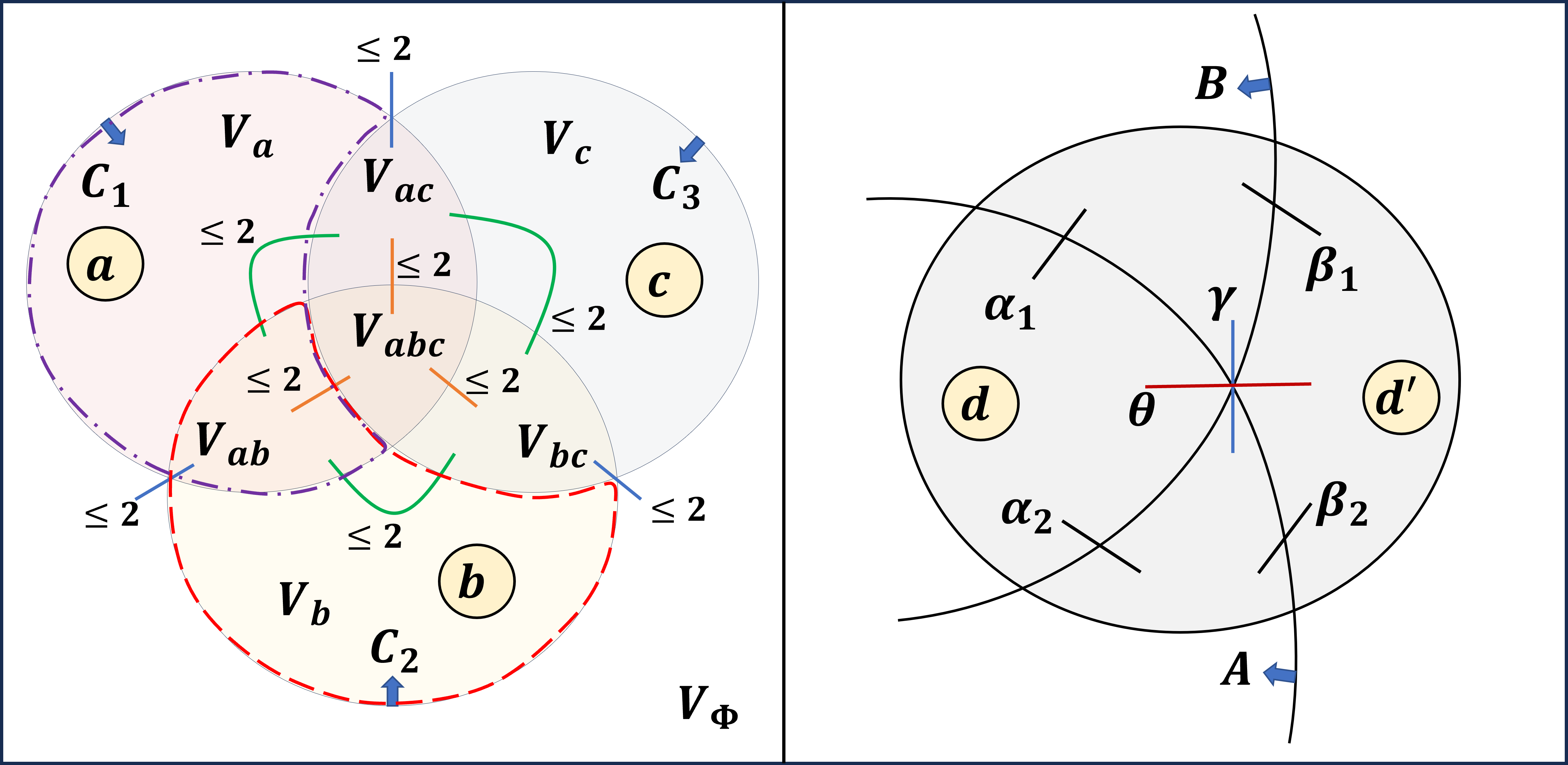} 
   \caption{$(i)$ Eight disjoint vertex sets formed using three $(\lambda_S+1)$ cuts. The red dashed curve represents $D_1$ and the blue dashed-dot curve represents $D_2$. $(ii)$ A pair of Steiner cuts forming four subsets of $V$, and their adjacency information.} 
  \label{fig : extended gen 3 star}. 
\end{figure}

We refer to Figure \ref{fig : extended gen 3 star} for better understanding. Let $A$ and $B$ be a pair of Steiner cuts in $G$. Let there be a pair of Steiner vertices $d,d'$ such that $d\in A\cap B$ and $d'\in \overline{A\cup B}$. For any four integers $p,q,i,j\ge 0$,  $c(A)=\lambda_S+p$, $c(B)=\lambda_S+q$, $c(A\cap B)=\lambda_S+i$, and $c(A\cup B)=\lambda_S+j$. We now establish the following lemma for the number of edges that lie between $A\setminus B$ and $B\setminus A$, which can be seen as a generalization of Lemma 5.4 in \cite{DBLP:journals/talg/BaswanaBP23} in undirected graphs.
\begin{lemma} \label{lem : gamma value}
 Let $\gamma$ denote the number of edges between $A\setminus B$ and $B\setminus A$. Then, $\gamma=\frac{p+q-i-j}{2}$.
\end{lemma}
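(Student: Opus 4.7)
The plan is to carry out a direct edge-counting argument across the four-piece partition of $V$ induced by the pair of Steiner cuts $A, B$. The sub-modularity inequality in Lemma~\ref{submodularity of cuts}(1) is, in undirected (multi-)graphs, an equality modulo a correction term counted by the edges between the two ``diagonal'' pieces $A\setminus B$ and $B\setminus A$; the claim is essentially an explicit version of that correction.

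First I would introduce six non-negative integer variables, one per pair of the four disjoint regions
\[
A\cap B,\quad A\setminus B,\quad B\setminus A,\quad \overline{A\cup B},
\]
namely the numbers of edges across each pair (see Figure~\ref{fig : extended gen 3 star}$(ii)$). Using these variables, I would write down the four equalities
\begin{align*}
c(A) &= (\text{edges } A\cap B \leftrightarrow B\setminus A) + (\text{edges } A\cap B \leftrightarrow \overline{A\cup B}) + \gamma + (\text{edges } A\setminus B \leftrightarrow \overline{A\cup B}), \\
c(B) &= (\text{edges } A\cap B \leftrightarrow A\setminus B) + (\text{edges } A\cap B \leftrightarrow \overline{A\cup B}) + \gamma + (\text{edges } B\setminus A \leftrightarrow \overline{A\cup B}), \\
c(A\cap B) &= (\text{edges } A\cap B \leftrightarrow A\setminus B) + (\text{edges } A\cap B \leftrightarrow B\setminus A) + (\text{edges } A\cap B \leftrightarrow \overline{A\cup B}), \\
c(A\cup B) &= (\text{edges } A\cap B \leftrightarrow \overline{A\cup B}) + (\text{edges } A\setminus B \leftrightarrow \overline{A\cup B}) + (\text{edges } B\setminus A \leftrightarrow \overline{A\cup B}),
\end{align*}
which are immediate from the definition of cut capacity as the number of contributing edges.

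Next I would add the first two equalities and subtract the sum of the last two. On the left this yields $c(A)+c(B)-c(A\cap B)-c(A\cup B)$. On the right, every one of the five single-variable terms that is not $\gamma$ cancels: each of the four ``adjacent'' edge counts (for instance, edges between $A\cap B$ and $A\setminus B$) appears once in the top sum and once in the bottom sum, while the edges between $A\cap B$ and $\overline{A\cup B}$ appear twice on each side. What remains is precisely $2\gamma$. Substituting the stipulated values $c(A)=\lambda_S+p$, $c(B)=\lambda_S+q$, $c(A\cap B)=\lambda_S+i$, $c(A\cup B)=\lambda_S+j$ gives $2\gamma=p+q-i-j$, hence $\gamma=(p+q-i-j)/2$.

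There is no real obstacle here beyond bookkeeping; the only care needed is to ensure that the undirected nature of $G$ is what makes the equality $c(A)+c(B) = c(A\cap B)+c(A\cup B)+2\gamma$ exact (as opposed to the directed-graph version, where the correction has a different form). I would remark in passing that the Steiner vertices $d\in A\cap B$ and $d'\in \overline{A\cup B}$ play no role in this identity, but they guarantee that $A\cap B$ and $A\cup B$ are Steiner cuts so that the capacities $\lambda_S+i$ and $\lambda_S+j$ are well-defined quantities of the form stated.
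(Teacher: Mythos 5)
Your proof is correct and uses essentially the same approach as the paper: express the four cut capacities as sums of the six pairwise edge counts across the regions $A\cap B$, $A\setminus B$, $B\setminus A$, $\overline{A\cup B}$, then eliminate everything but $\gamma$. Your version is slightly more streamlined since you observe directly that $c(A)+c(B)-c(A\cap B)-c(A\cup B)=2\gamma$, whereas the paper reaches the same conclusion through the two intermediate identities $\alpha_1-\beta_1+\gamma=p-j$ and $\beta_1-\alpha_1+\gamma=q-i$.
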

\begin{proof}
    The following equations follow directly from Figure \ref{fig : extended gen 3 star}$(ii)$.
    \begin{align}
    & c(A\cap B)= \alpha_1+\alpha_2+\theta=\lambda_S+i  \label{eq : A cap B} \\
    & c(A\cup B)= \beta_1+\beta_2+\theta=\lambda_S+j\label{eq : A cup B} \\
    & c(B)=\beta_1+\alpha_2+\gamma+\theta=\lambda_S+q \label{eq : B} \\
    & c(A)= \beta_2+\alpha_1+\gamma+\theta=\lambda_S+p \label{eq : A} 
\end{align}
Adding equations \ref{eq : A} and \ref{eq : A cap B} and adding equations \ref{eq : A cap B} and \ref{eq : A cup B}, we arrive at the following two equations. 
    \begin{align}
    & 2\alpha_1+\beta_2+\alpha_2+\gamma+2\theta=2\lambda_S+i+p \label{eq : A plus A cap B} \\
    & \alpha_1+\alpha_2+\beta_1+\beta_2+2\theta=2\lambda_S+i+j \label{eq : A cap B plus A cup B}
\end{align}
Subtracting Equation \ref{eq : A cap B plus A cup B} from Equation \ref{eq : A plus A cap B}, we get the following. 
   \begin{align}
    & \alpha_1-\beta_1+\gamma=p-j
    \label{eq : first one for gamma}
\end{align}
Similarly, the addition of equations \ref{eq : B} and \ref{eq : A cup B} gives us $2\beta_1+\beta_2+\alpha_2+\gamma+2\theta=2\lambda_S+q+j$. Subtracting equation \ref{eq : A cap B plus A cup B} from this, we arrive at the following. 
   \begin{align}
    & \beta_1-\alpha_1+\gamma=q-i \label{eq : second one for gamma}
\end{align}
Finally, the addition of equation \ref{eq : first one for gamma} and equation \ref{eq : second one for gamma} provides the following.
   \begin{align}
    & \gamma=\frac{p+q-i-j}{2} \label{eq : gamma}
\end{align}
\end{proof}
Exploiting Lemma \ref{lem : gamma value}, we now prove property (3) of Theorem \ref{thm : gen 3 star}. Let us consider the following two Steiner cuts: $D_1=C_2\cap \overline{C_3}=V_b\cup V_{ab}$ and $D_2=\overline{C_1\cap \overline{C_3}}=V_b\cup V_{bc} \cup V_{c}\cup V_{ac}\cup V_{abc} \cup V_{\Phi}$. Observe that $b\in C_2\cap \overline{C_3}$ and $c\notin C_2\cap \overline{C_3}$. By sub-modularity of cuts (Lemma \ref{submodularity of cuts}(1)), $c(C_2\cap \overline{C_3})+c(C_2\cup \overline{C_3})\le 2\lambda_S+2$. Since $c(C_2\cup \overline{C_3})\ge \lambda_S$, therefore, $c(C_2\cap \overline{C_3})\le \lambda_S+2$. So, we get $c(D_1)\le \lambda_S+2$. In a similar way, we can show that $c(D_2)\le \lambda_S+2$. 

For cuts $D_1$ and $D_2$, $D_1\cap D_2$ is $V_b$ and $D_1\cup D_2$ is $V_b \cup V_c \cup V_{ab} \cup V_{bc} \cup V_{ac}\cup V_{abc} \cup V_{\Phi}$. Since, $b\in D_1\cap D_2$ and $a\in \overline{D_1\cup D_2}$, therefore, $D_1\cap D_2$ and $D_1\cup D_2$ are Steiner cuts. We also have $D_1\setminus D_2=V_{ab}$ and $D_2\setminus D_1=V_c\cup V_{bc}\cup V_{ac}\cup V_{abc} \cup V_{\Phi}$. It follows from Lemma \ref{lem : gamma value} that the number of edges between $D_1\setminus D_2$ and $D_2\setminus D_1$ is maximized if, in equation \ref{eq : gamma}, $p,q$ are maximized and $i,j$ are minimized. Therefore, to maximize $p$ and $q$, we consider $D_1$ and $D_2$ to be Steiner cuts of capacity $\lambda_S+2$. Also, to minimize $i$ and $j$, we consider that $D_1\cap D_2$ and $D_1\cup D_2$ are S-mincuts. So, we have $p=q=2$ and $i=j=0$. Therefore, by Lemma \ref{lem : gamma value}, we get $\gamma= \frac{2+2-0-0}{2}=2$. This ensures that the number of edges between $V_{ab}$ and $V_c\cup V_{bc}\cup V_{ac}\cup V_{abc} \cup V_{\Phi}$ is at most $2$. 
Along a similar line, we can establish that there are at most two edges between $V_{bc}$ and $V_a\cup V_{ac}\cup V_{ab}\cup V_{abc} \cup V_{\Phi}$, and at most two edges between $V_{ac}$ and $V_b\cup V_{ab}\cup V_{bc}\cup V_{abc} \cup V_{\Phi}$. This completes proof of property (3). 

\section{Sensitivity Oracle for Single Edge Insertion} \label{app : single edge insertion}
 Baswana and Pandey \cite{DBLP:conf/soda/BaswanaP22} designed an ${\mathcal O}(n)$ space data structure that, after insertion of any edge in $G$ and given a pair of Steiner vertices $s,t\in S$, can report an $S$-mincut and its capacity in ${\mathcal O}(n)$ and ${\mathcal O}(1)$ time. However, to determine if the $S$-mincut has increased, we have to query if the $S$-mincut has increased between every pair of vertices, which requires ${\mathcal O}(|S|^2)$ time.

In this section, we show that Skeleton ${\mathcal H}_S$, projection mapping $\pi$, and Quotient mapping $\phi$ are sufficient to design an ${\mathcal O}(n)$ space data structure for reporting an $S$-mincut after the insertion of any edge $e=(x,y)$ in $G$. 

Let $P$ be a path in the Skeleton ${\mathcal H}_S$ such that there is no path $P'$ with $P$ as a proper subpath of $P'$. Suppose the two endpoints of path $P$ are node $N$ and node $N'$ of ${\mathcal H}_S$. Let $e_1$ and $e_2$ be the pair of edges of ${\mathcal H}_S$ such that $e_1$ is adjacent to $N$ and $e_2$ is adjacent to $N'$ in path $P$. Now, there are two possible cases -- (a) every node of the Skeleton belongs to path $P$ and (b) there is at least one node $M$ that does not belong to path $P$. \\

\noindent
\textbf{Case (a):} By Lemma \ref{lem : $S$-mincuts of a bunch increases}, the capacity of each $S$-mincut belonging to the bunch represented by every minimal cut of path $P$ if and only if $x\in C(N,e_1)$ and $y\in C(N',e_2)$ or vice versa. So, by using quotient mapping $\phi$ and projection mapping $\pi$, we can verify this condition in ${\mathcal O}(1)$ time. To report an $S$-mincut, using Lemma \ref{lem : tight cut reporting}, we can report $C(N,e_1)$ in ${\mathcal O}(n)$ time.  \\

\noindent
\textbf{Case (b):} Let us consider a minimal cut $\tilde C$ of Skeleton ${\mathcal H}_S$ such that $N,N'\in \tilde C$ and $M\notin \tilde C$. Let ${\mathcal B}$ be the corresponding bunch and $C$ be an $S$-mincut in ${\mathcal B}$. Let us now establish the following lemma.
\begin{lemma} \label{lem : insertion of an edge}
    Upon insertion of edge $(x,y)$, the capacity of $S$-mincut remains the same in Case (b). 
\end{lemma}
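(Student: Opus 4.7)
The plan is to exhibit an $S$-mincut $C^*$ of capacity $\lambda_S$ in $G$ to which the inserted edge $(x,y)$ does not contribute, i.e., both endpoints lie on the same side of $C^*$. Since an insertion can only increase cut capacities, the existence of such a $C^*$ immediately gives the claim.

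To build $C^*$, I first exploit that $M$ lies outside the maximal proper path $P$ while $\mathcal{H}_S$ is a connected cactus. Consequently there is a node $J$ on $P$ whose Skeleton-neighbourhood branches in three distinct directions: toward $N$ along $P$, toward $N'$ along $P$, and toward $M$ along a sub-path whose first edge is not in $P$. Depending on the local cactus structure, this branching is realised either as a $k$-junction with $k \geq 3$ at $J$, or as a $k$-cycle with $k \geq 4$ passing through $J$ (recall that every cycle of $\mathcal{H}_S$ has at least four nodes). In either case I select three minimal cuts of $\mathcal{H}_S$ whose tight cuts $C^N,C^{N'},C^M$ respectively contain the ``$N$-side'', ``$N'$-side'' and ``$M$-side'' of the branching at $J$. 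Each of these is an $S$-mincut of $G$, and by Lemma \ref{lem : cycle property} applied to the junction or cycle at $J$ they are pairwise vertex-disjoint.

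I then combine these with Lemma \ref{lem : $S$-mincuts of a bunch increases} via a pigeonhole argument. If the $S$-mincut capacity were to strictly increase, then for each of the three bunches associated with $C^N,C^{N'},C^M$ exactly one endpoint of $(x,y)$ must lie inside the corresponding tight cut. But the three tight cuts are pairwise disjoint and $(x,y)$ has only two endpoints, so at least one of them, say $C^M$, contains neither $x$ nor $y$. Thus both endpoints lie in $\overline{C^M}$, edge $(x,y)$ does not contribute to $C^M$, and $C^M$ serves as the required $C^*$.

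The principal obstacle is the case analysis needed to select the three minimal cuts when the branching at $J$ is mediated by a cycle rather than a pure junction, when some cycle-edges of $\mathcal{H}_S$ are shared between $P$ and the branch toward $M$, or when $J$ coincides with $N$ or $N'$. In each of these sub-cases the three minimal cuts may need to be pairs of cycle-edges rather than single tree-edges, and one must verify that Lemma \ref{lem : cycle property} still yields the disjointness of $C^N,C^{N'},C^M$; this is possible precisely because that lemma covers both $k$-junctions with $k \geq 3$ and $k$-cycles with $k \geq 4$, which are exactly the two structures arising here.
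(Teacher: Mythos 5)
Your proof rests on the same underlying principle as the paper's: find three pairwise-disjoint $S$-mincuts (tight cuts), each of which by Lemma~\ref{lem : $S$-mincuts of a bunch increases} would have to absorb exactly one endpoint of $(x,y)$ if the Steiner mincut were to increase, which is impossible with only two endpoints. But the route is genuinely different. The paper never invokes Lemma~\ref{lem : cycle property} nor searches for a branching node: it observes that $C(N,e_1)$ and $C(N',e_2)$ lie on opposite ends of the proper path $P$ (hence are disjoint) and both sit inside the side $\tilde C$ of the minimal cut separating $M$ from $\{N,N'\}$, while $C(S\setminus S_{\mathcal B})$ sits outside $\tilde C$. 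The three bunches and their disjointnesses come directly out of the setup and the contradiction is immediate. You instead hunt for a branching node $J$, classify it as a $k$-junction or $k$-cycle, and pull disjointness from Lemma~\ref{lem : cycle property} plus a pigeonhole. That is a detour: the only structural fact you actually need is that ${\mathcal H}_S$ is not a simple path in Case~(b) (otherwise the maximal proper path would cover all nodes and there would be no $M$), which already guarantees the existence of a $k$-junction with $k\ge 3$ or a cycle of length $\ge 4$; there is no need to tie the three tight cuts to the $N$-, $N'$-, and $M$-directions.

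As written, though, the route has real gaps. First, a ``$3$-way branching'' at $J$ is not automatically a pure $k$-junction or a pure $k$-cycle: $J$ can be a cycle node of degree exactly $3$, with one tree-edge toward $M$ and two cycle-edges carrying $P$. For that node, Lemma~\ref{lem : cycle property} applies only to cycle tight cuts; it does not assert disjointness between the tree-edge tight cut $C(A_0,(J,A_0))$ on the $M$-side and a cycle tight cut. You would need to take $C^M$ to be the cycle tight cut at $J$ itself (which still captures $M$, since $M$ hangs off $J$'s tree-subtree), so that all three tight cuts are taken at three distinct nodes of a single cycle. Second, when $J$ coincides with an endpoint $N$ (or $N'$) of $P$, there is no ``toward $N$'' direction and the three-direction description breaks; you acknowledge this but do not resolve it. Both issues vanish if you drop the alignment with $N$, $N'$, $M$ and simply pick any three distinct non-core nodes of the junction (or three distinct nodes of the cycle): Lemma~\ref{lem : cycle property} then gives pairwise disjointness unconditionally and the pigeonhole closes the argument, just as in the paper's proof of Lemma~\ref{lem : condition of insertion}.
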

\begin{proof}
    Assume to the contrary that the capacity of $S$-mincut increases by $1$. Therefore, edge $(x,y)$ contributes to each $S$-mincut belonging to the bunch corresponding to every minimal cut of path $P$. By Lemma \ref{lem : $S$-mincuts of a bunch increases}, $x\in C(N,e_1)$ and $y\in C(N',e_2)$ or vice versa. Moreover, every $S$-mincut belonging to bunch ${\mathcal B}$ also increases. So, either $x\in C(S_{\mathcal B})$ and $y\in C(S\setminus S_{\mathcal B})$ or $y\in C(S_{\mathcal B})$ and $x\in C(S\setminus S_{\mathcal B})$. It follows that the projection of at least one endpoint of edge $e$ belongs to $\tilde C$ and also does not belong to $\tilde C$, a contradiction.   
\end{proof}
It follows from Lemma \ref{lem : insertion of an edge} that the capacity can be reported in ${\mathcal O}(1)$ time. Moreover, storing the following three tight cuts is sufficient to report the resulting $S$-mincut. These are $C(N,e_1)$, $C(N',e_2)$, and $C(M,e_t)$ (or $C(M,e_c,e_{c'})$), where $e_t$ is the tree-edge (likewise $e_c,e_{c'}$ are cycle-edges from the same cycle) adjacent to node $M$ whose removal separates $N,N'$ from $M$. Therefore, it is now easy to see that an $S$-mincut can be reported in ${\mathcal O}(n)$ time using an ${\mathcal O}(n)$ space data structure.

This completes the proof of Theorem \ref{thm : single edge Sensitivity Oracle} for handling insertion of a single edge.

\section{Proofs Pertaining to \textit{Connectivity Carcass}}
We have used a set of results that follow from the existing results \cite{DBLP:conf/soda/BaswanaP22, DBLP:conf/stoc/DinitzV94, DBLP:conf/soda/DinitzV95, DBLP:journals/siamcomp/DinitzV00}. Unfortunately, these results are not mentioned explicitly in those papers. In this section, for the sake of completeness and to eliminate any doubt about the correctness of our results, we provide proof of every such result. 

\subsection{Proof of Lemma \ref{lem : respecting nearest cuts}} \label{app : respecting nearest cuts}
\begin{proof}
    Suppose $v\in C_1$ but $v\notin C_2$. Observe that $C_1\cap C_2$ contains Steiner vertex $s_1$, vertex $u$, and does not contain vertex $v$ and Steiner vertex $s_2$. Similarly, $\overline{C_1\cup C_2}$ contains Steiner vertex $s_2$ and does not contain Steiner vertex $s_1$. This implies that both $C_1\cap C_2$ and $C_1\cup C_2$ are Steiner cuts. Also, since $C_1$ and $C_2$ are $S$-mincuts, by sub-modularity of cuts (\ref{submodularity of cuts}(1)), $c(C_1\cap C_2)+c(C_1\cup C_2)\le 2\lambda_S$. Therefore, $C_1\cap C_2$ and $C_1\cup C_2$ are $S$-mincuts. It is easy to observe that $C_1\cap C_2$ belongs to bunch ${\mathcal B}_{f_1}$ because $C_1$ belongs to ${\mathcal B}_{f_1}$ and $C_1\cap S\subseteq C_2\cap S$. We have a $S$-mincut $C_1\cap C_2$ in bunch ${\mathcal B}_{f_1}$ such that $C_1\cap C_2\subset C_1$ because $v\notin C_1\cap C_2$. So, $C_1$ is not the nearest $(s_1,s_2)$-mincut of $u$ belonging to ${\mathcal B}_{f_1}$, a contradiction. 

    Suppose $v\in C_2$, but $v\notin C_1$. We first show that if $v\in C_2$, then $u$ belongs to the nearest $(s_2,s_1)$-mincut $C_v$ of $v$ belonging to ${\mathcal B}_{f_2}$. Assume to the contrary that $C_v$ does not contain $u$. Let $C_3=\overline{C_v}$. Observe that $C_3$ cannot be a proper subset of $C_2$. Therefore, $C_2$ and $C_3$ are crossing cuts. Since $C_2$ and $C_3$ are $S$-mincuts, by sub-modularity of cuts (Lemma \ref{submodularity of cuts}(1)), we can show, in exactly the same way as $C_1\cap C_2$ in the forward direction of the proof, that $C_2\cap C_3$ is an $S$-mincut. However, since $v\notin C_2\cap C_3$, $C_2$ is not the nearest $(s_1,s_2)$-mincut of $u$ belonging to ${\mathcal B}_{f_2}$, a contradiction.\\
    Now, we consider the nearest $(s_2,s_1)$-mincut $C_v$ of $v$ that contains $u$. Assume to the contrary that $v\notin C_1$. By considering $S$-mincut $C_v$ and $S$-mincut $C_1$, in a similar way to the proof of the forward direction, we can show that there is a $S$-mincut $\overline{C_1\cup C_v}$ belonging to ${\mathcal B}_{f_2}$ such that $\overline{C_1\cup C_v}\subset C_v$, a contradiction.  
\end{proof}

\subsection{Proof of Lemma \ref{lem : cycle property}} \label{app : proof of cycle property}
\begin{proof}
    Let us prove the statement considering $C_k$ to be a $k$-cycle. The proof is along similar lines if $C_k$ is a $k$-junction. For pair of nodes $A$ and $B$, since $A\ne B$, there is a node $A_1$ belonging to $C_k$ and $A_1$ is lying in a path between $A$ and $B$. Observe that $C(A,e_A,e_A')\cap S$, $C(B,e_B,e_B')\cap S$, and $C(A_1,e_{A_1},e_{A_1}')\cap S$ are disjoint nonempty sets, where $e_{A_1}$ and $e_{A_1}'$ are the adjacent edges of $A'$ belonging to $C_k$. Therefore, it follows from the construction of Skeleton that $C(A,e_A,e_A')\cap S \subset \overline{C(B,e_B,e_B')} \cap S$ and also $C(B,e_B,e_B')\cap S  \subset \overline{C(A,e_A,e_A')}\cap S$.

    We assume to the contrary that there is a vertex $u$ of $G$ such that $u\in C(A,e_A,e_A')\cap C(B,e_B,e_B')$. Let $P=\overline{C(B,e_B,e_B')}$. Since $C(A,e_A,e_A')\cap S\subset P\cap S$, $P$ does not subdivide the Steiner set $C(A,e_A,e_A')\cap S$. By sub-modularity of cuts (Lemma \ref{submodularity of cuts}(1)), $c(C(A,e_A,e_A')\cap P)+c(C(A,e_A,e_A')\cup P)\le 2\lambda_S$. Also, since $C(A,e_A,e_A')\cap S\subset P\cap S$ and $P$ is a Steiner cut, it is easy to observe that both $C(A,e_A,e_A')\cap P$ and $C(A,e_A,e_A')\cup P$ are Steiner cuts. It follows that $C(A,e_A,e_A')\cap P$ is an $S$-mincut and belongs to the same bunch defined by the pair of edges $e_A,e_A'$. So, we have a $S$-mincut $C(A,e_A,e_A')\cap P$ which is a proper subset of $C(A,e_A,e_A')$ from the same bunch, a contradiction due to the definition of tight cuts.
\end{proof}

\subsection{Proof of Theorem \ref{thm : reporting $S$-mincut using skeleton and projection mapping} (An S-mincut Separating a Pair of Vertices)} \label{app : proof of theorem 58 for reporting S mincut separating vertices}
Let $u$ and $v$ be any pair of vertices in $G$. Determining whether there is an $S$-mincut separates $u$ and $v$ can be done in ${\mathcal O}(1)$ time by storing only the ${\mathcal O}(n)$ space quotient mapping $\phi$ (refer to Fact \ref{fact : quotient mapping} in Appendix \ref{sec : extended preliminaries}). In this section, we show that there is an ${\mathcal O}(n)$ space data structure that can report an $S$-mincut separating $u$ and $v$ in ${\mathcal O}(n)$ time if exists.
The data structure was designed by Baswana and Pandey \cite{DBLP:conf/soda/BaswanaP22} for answering single edge Sensitivity Oracle for Steiner mincuts. To report an $S$-mincut separating $u$ and $v$, we use Skeleton ${\mathcal H}(G)$, projection mapping $\pi(G)$, and an ordering of nonSteiner vertices $\tau(G)$. They occupy overall ${\mathcal O}(n)$ space. We begin by defining the following concept.
\begin{definition} [Minimal cut of Skeleton separating projections of a pair of units]
    A minimal cut $\tilde C$ of the Skeleton is said to separate projections $<M_1,N_1>$ and $<M_2,N_2>$ of a pair of units $\mu_1$ and $\mu_2$ if $M_1,N_1\in \tilde C$ and $M_2,N_2\notin \tilde C$.    
\end{definition}
 Let $\tilde C$ be a minimal cut that separates $\pi(\mu)$ and $\pi(\nu)$ for a pair of units $\mu,\nu$ of the Flesh graph. Let ${\mathcal H}_S^1$ and ${\mathcal H}_S^2$ be the two subgraphs of Skeleton ${\mathcal H}_S$ is formed after the removal of the edge-set of $\tilde C$. It follows from Lemma \ref{lem : projection mapping} that projection of any unit of Flesh is a proper path in ${\mathcal H}_S$. Therefore, $\pi(\mu)$ is a subgraph of ${\mathcal H}_S^1$ and $\pi(\nu)$ is a subgraph of ${\mathcal H}_S^2$ or vice versa.
 
 Depending on the projection mappings of both units $\mu=\phi(u)$ and $\nu=\phi(v)$, there are two possible cases that arise -- (a) $\pi(\mu)$ is disjoint from $\pi(\nu)$ and (b) $\pi(\mu)$ has at least one node in common with $\pi(\nu)$. \\

\noindent 
\textbf{Case (a):} Let $e=(A,B)$ be an edge of the Skeleton such that a minimal cut $\tilde C$ containing edge $e$ separates $\pi(\mu)$ from $\pi(\nu)$. It follows from Lemma \ref{lem : lca queries on skeleton} that such an edge $e$ of the Skeleton can be reported in ${\mathcal O}(1)$ time. Therefore, to report an $S$-mincut separating a pair of vertices, it is sufficient to report one of the two tight cuts $C(A,e)$ or $C(B,e)$. It follows from Lemma \ref{lem : tight cut reporting} that this procedure takes ${\mathcal O}(n)$ time. \\

\noindent
\textbf{Case (b):} Let us consider two possible scenarios of this case -- (1) $\pi(\mu)$ is not same as $\pi(\nu)$ and (2) $\pi(\mu)$ is the same as $\pi(\nu)$.\\

\noindent
\textbf{Case (b.1):} Suppose $\pi(\mu)=<M_1,N_1>$ is not the same as $\pi(\nu)=<M_2,N_2>$. Let us consider the proper path $\pi(\mu)$. Without loss of generality, assume that $\pi(\mu)$ is not a proper subset of $\pi(\nu)$; otherwise, consider $\pi(\nu)$. In this case, there must exist at least one node of $\pi(\nu)$, say $N$, that appears in the proper path $\pi(\mu)$. Since $\pi(\mu)$ is not the same as $\pi(\nu)$, there is at least one edge $e=(N,N')$ of $\pi(\mu)$ such that $e\in \pi(\mu)$ but $e\notin \pi(\nu)$. By Lemma \ref{lem : lca queries on skeleton}, required edge $e$ can be obtained in ${\mathcal O}(1)$ time. Observe that either $M_2,N_2\in C(N,e)$ or $M_2,N_2\in C(N',e)$. It can be determined quite easily in ${\mathcal O}(1)$ time by using Lemma \ref{lem : projection mapping}.
So, if $M_2,N_2\in C(N,e)$, then report $C(N,e)$ in ${\mathcal O}(n)$ time using Lemma \ref{lem : tight cut reporting}; otherwise report $C(N',e)$.\\

\noindent
\textbf{Case (b.2):} The main challenge is in handling this case when $\pi(\mu)$ is the same as $\pi(\nu)$. We need the following concept of extendibility of two proper paths.
\begin{definition} [Definition 2.5 in \cite{DBLP:conf/soda/BaswanaP22})]
    Let $P_1$ and $P_2$ be a pair of proper paths in ${\mathcal H}_S$ such that there is an edge $e=(N,N')$ that belongs to both $P_1$ and $P_2$. Path $P_1$ is said to be extendable to $P_2$ in the direction from $N$ to $N'$ if there is a proper path $P$ with $P_1$ as its prefix and $P_2$ as its suffix. 
\end{definition}
Let $e=(N,N')$ be any tree-edge in ${\mathcal H}(G)$ (analysis is similar for a pair of cycle-edges). Let ${\mathcal B}$ be the bunch corresponding to a minimal cut of ${\mathcal H}(G)$ containing edge $e$ (consider any one minimal cut containing edge $e$ if $e$ is a cycle-edge). Let us construct a graph $G(e)$ from $G$ as follows. Contract all vertices belonging to $C(S_{\mathcal B})$ into a single vertex $s$. Similarly, contract all vertices belong to $C(S\setminus S_{\mathcal B})$
into a single vertex $t$. Finally, remove all the self-loops. The following lemma holds immediately from the construction of graph $G(e)$.
\begin{lemma} \label{lem : G(e) lemma}
    A cut $C$ in $G$ is an $S$-mincut in $G$ belonging to ${\mathcal B}$ if and only if $C$ is an $(s,t)$-mincut of $G(e)$.
\end{lemma}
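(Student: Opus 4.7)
\textbf{Proof proposal for Lemma \ref{lem : G(e) lemma}.} The plan is a clean two-direction argument built on the containment properties of the tight cuts $C(S_{\mathcal B})$ and $C(S\setminus S_{\mathcal B})$, combined with the observation that contracting a set that lies entirely on one side of a cut does not change the cut's capacity. First I would record the key containment: for any $S$-mincut $C\in{\mathcal B}$, the Steiner partition is $(S_{\mathcal B},\,S\setminus S_{\mathcal B})$ by definition of a bunch, and then Definition \ref{def : tight cut} (applied in both directions) gives $C(S_{\mathcal B})\subseteq C$ and $C(S\setminus S_{\mathcal B})\subseteq\overline{C}$. This is the cornerstone that makes the contraction in the construction of $G(e)$ well-behaved with respect to every $S$-mincut of ${\mathcal B}$.

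For the forward direction, suppose $C$ is an $S$-mincut in $G$ belonging to ${\mathcal B}$. By the containment above, the contracted vertex $s$ (representing $C(S_{\mathcal B})$) lies in $C$ and the contracted vertex $t$ (representing $C(S\setminus S_{\mathcal B})$) lies in $\overline{C}$; thus $C$ is an $(s,t)$-cut of $G(e)$. Since both contracted sets sit entirely on one side of $C$, no edge of the edge-set of $C$ in $G$ is lost to a self-loop during contraction, and conversely every crossing edge in $G(e)$ lifts to a crossing edge of $C$ in $G$; hence $c_{G(e)}(C)=c_G(C)=\lambda_S$. It only remains to verify that this is the $(s,t)$-mincut capacity in $G(e)$, which I show in parallel with the reverse direction.

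For the reverse direction, let $C$ be any $(s,t)$-cut of $G(e)$. Lifting $C$ back to $G$ gives a cut with $S_{\mathcal B}\subseteq C$ and $S\setminus S_{\mathcal B}\subseteq\overline{C}$; since both $S_{\mathcal B}$ and $S\setminus S_{\mathcal B}$ are nonempty, $C$ is a Steiner cut of $G$, so $c_G(C)\ge\lambda_S$. Because the contracted sets again lie on single sides of $C$, we have $c_{G(e)}(C)=c_G(C)\ge\lambda_S$. Combined with the forward direction, this pins down the $(s,t)$-mincut capacity of $G(e)$ to exactly $\lambda_S$. Now if $C$ is an $(s,t)$-mincut of $G(e)$, then $c_G(C)=\lambda_S$, so $C$ is an $S$-mincut of $G$, and its Steiner partition is precisely $(S_{\mathcal B},S\setminus S_{\mathcal B})$, i.e., $C\in{\mathcal B}$.

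The only subtlety — and thus the main thing to double-check — is the case in which $e$ is a cycle-edge: the construction picks an arbitrary minimal cut containing $e$, and I should confirm that the argument is insensitive to this choice. This reduces to the tight-cut containments $C(S_{\mathcal B})\subseteq C$ and $C(S\setminus S_{\mathcal B})\subseteq\overline{C}$ holding for every $S$-mincut belonging to the chosen bunch, which is exactly what Definition \ref{def : tight cut} guarantees. Beyond that, the proof is essentially a bookkeeping exercise on contractions, so I do not anticipate any further obstacle.
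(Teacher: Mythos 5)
Your proof is correct and spells out exactly what the paper leaves as ``follows immediately from the construction of $G(e)$'': the tight-cut containments $C(S_{\mathcal B})\subseteq C$ and $C(S\setminus S_{\mathcal B})\subseteq\overline{C}$ (from Definition \ref{def : tight cut}) make the contraction capacity-preserving on every $S$-mincut in ${\mathcal B}$, and the lift of any $(s,t)$-cut of $G(e)$ is a Steiner cut of $G$, pinning the $(s,t)$-mincut capacity of $G(e)$ at $\lambda_S$. This is the intended bookkeeping argument; no gap.
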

We now construct the DAG ${\mathcal D}_{PQ}(G(e))$ of Picard and Queyranne \cite{DBLP:journals/mp/PicardQ80} for graph $G(e)$, as discussed in Appendix \ref{sec : lower bound oracle}. Exploiting Lemma \ref{lem : characterization of s,t mincut}, Baswana and Pandey showed that an ${\mathcal O}(n)$ space topological ordering of ${\mathcal D}_{PQ}(G(e))$ is sufficient to report an $(s,t)$-mincut separating any pair of vertices in $G(e)$ (refer to Lemma 3.1 in \cite{DBLP:conf/soda/BaswanaP22} and a detailed proof is also provided in Section 4.1 in \cite{DBLP:journals/talg/BaswanaBP23} for directed graphs).  Unfortunately, storing a topological ordering for every tree-edge or every pair of cycle-edges from the same cycle of Skeleton would occupy ${\mathcal O}(|S|^2n)$ space in the worst case. We now explain the data structure of Baswana and Pandey \cite{DBLP:conf/soda/BaswanaP22} that occupies only ${\mathcal O}(n)$ space \cite{DBLP:conf/soda/BaswanaP22}.\\

\noindent
\textbf{Description of the data structure:} The set of all Stretched units of Flesh graph is partitioned into groups as follows. A pair of Stretched units $\mu_1$ and $\mu_2$ belong to the same group if $\pi(\mu_1)=\pi(\mu_2)$. Let ${\mathcal G}$ be the set of all obtained groups. For any stretched unit $\mu_1$ belonging to any group $\alpha\in {\mathcal G}$, the proper path to which $\mu$ is projected is denoted by $P_{\alpha}$. Let $e=(M,M')$ be any tree-edge (the analysis is similar if $e$ is a cycle-edge) belonging to $P_{\alpha}$ for any group $\alpha\in {\mathcal G}$. Let $\tau_e$ be a topological ordering of the DAG ${\mathcal D}_{PQ}(G(e))$. The Stretched units belonging to group $\alpha$ are stored in the same order as they appear in $\tau_e$. Also, the direction of order is stored, that is, whether they are stored in the direction from $M$ to $M'$ or from $M'$ to $M$. So, ordering $\tau(G)$ is the set of orderings of all Stretched units from every group in ${\mathcal G}$. Since the set of vertices belonging to a group in ${\mathcal G}$ is disjoint from any another group in ${\mathcal G}$, therefore, the space occupied by the data structure is ${\mathcal O}(n)$. \\

\noindent
\textbf{Reporting an $S$-mincut separating $u,v$ in case (b.2):} We know that $\pi(\mu)$ is the same as $\pi(\nu)$. Therefore, they appear in the same group $\alpha\in {\mathcal G}$. Suppose the ordering of Stretched units belonging to $\alpha$ is done using a tree-edge $e=(M,M')$ (the analysis is the same if $e$ is a cycle-edge) belonging to $P_{\alpha}$. We construct two sets of vertices $\Gamma_1$ and $\Gamma_2$ as follows. Without loss of generality, assume that Stretched units in $\alpha$ are stored in the direction from $M$ to $M'$ and $\mu$ appears before $\nu$ in the ordering $\tau_e$. Set $\Gamma_1$ contains every vertex $x$ such that $\phi(x)\in \alpha$ and $\phi(x)$ appears before $\mu$ in the ordering $\tau_e$. Set $\Gamma_2$ contains every vertex $x$ such that $\pi(\mu)$ is extendable to $\pi(\phi(x))$ in the direction from $M'$ to $M$. The following lemma is established in \cite{DBLP:conf/soda/BaswanaP22}.
\begin{lemma} [Lemma 3.3 in \cite{DBLP:conf/soda/BaswanaP22}] \label{lem : lemma to report s mincut separating pair of vertices}
    The set of vertices belonging to $C(M,e)\cup \Gamma_1 \cup \Gamma_2$ defines an $S$-mincut that contains vertex $u$ and does not contain vertex $v$.  
\end{lemma}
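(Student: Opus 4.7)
\medskip
\noindent
\textbf{Proof proposal.}
The plan is to transport the claim into the graph $G(e)$ and invoke the Picard--Queyranne structure. By Lemma~\ref{lem : G(e) lemma}, a set $X$ of vertices of $G$ defines an $S$-mincut belonging to the bunch ${\mathcal B}$ if and only if, after contracting $C(S_{\mathcal B})$ to $s$ and $C(S\setminus S_{\mathcal B})$ to $t$, the image of $X$ in $G(e)$ is an $(s,t)$-mincut of $G(e)$. Lemma~\ref{lem : tight cut reporting} identifies $C(M,e)$ with the preimage of $s$ under this contraction, and both $u$ and $v$ survive as distinct vertices of $G(e)$ because $\pi(\mu)=\pi(\nu)=P_\alpha$ contains the edge $e=(M,M')$. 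Hence the lemma reduces to proving that $X := C(M,e)\cup \Gamma_1 \cup \Gamma_2$, pushed into $G(e)$, is an $(s,t)$-mincut containing $u$ but not $v$.

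\medskip
\noindent
The next step is to view everything through ${\mathcal D}_{PQ}(G(e))$ and use the Picard--Queyranne correspondence: $(s,t)$-mincuts of $G(e)$ are in bijection with the down-closed subsets of ${\mathcal D}_{PQ}(G(e))$ that contain the source SCC $S'$ and avoid the sink SCC $T'$ (this is the content of Lemma~\ref{lem : characterization of s,t mincut}). The ordering $\tau_e$ stored in our data structure is inherited from a fixed topological ordering $\tilde\tau_e$ of ${\mathcal D}_{PQ}(G(e))$ in which $S'$ comes first and $T'$ comes last; $\tau_e$ is simply $\tilde\tau_e$ restricted to those SCC nodes containing a stretched unit of $\alpha$. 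I will place a threshold at the position $k$ of $\mu$'s SCC in $\tilde\tau_e$, and let $U_k$ be the set of SCCs in positions $1,\dots,k$. Since $\mu$ appears before $\nu$ in $\tau_e$, the SCC of $\nu$ lies strictly after $k$ in $\tilde\tau_e$, so $U_k$ contains $S'$ (first position) and avoids both $T'$ and the SCC of $\nu$. Because $U_k$ is a prefix of a topological ordering of a DAG it is automatically down-closed, and therefore yields an $(s,t)$-mincut $X'$ of $G(e)$ with $u\in X'$ and $v\notin X'$.

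\medskip
\noindent
It remains to match $X'$ with $X$. Unwinding the contraction, $X'$ consists of: (i) the preimage of $s$, which is exactly $C(M,e)$; (ii) vertices $x$ of $G$ with $\phi(x)\in\alpha$ whose SCC lies at position $\le k$ in $\tilde\tau_e$, i.e.\ whose unit appears at or before $\mu$ in $\tau_e$; and (iii) vertices $y$ whose unit $\phi(y)\notin\alpha$ but whose SCC in ${\mathcal D}_{PQ}(G(e))$ occurs within $U_k$. For (ii), note that $\mu$ itself is at position $k$, while $\Gamma_1$ is defined by \emph{strict} precedence; the single vertex $u$ (and, more generally, any vertex whose unit equals $\mu$) will therefore be absorbed by the clause (iii) analysis because $\pi(\mu)$ is trivially extendable to $\pi(\mu)=\pi(\phi(u))$ in any direction. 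For (iii), I claim that the SCC of $\phi(y)$ lies before $\mu$ in $\tilde\tau_e$ if and only if $\pi(\mu)$ is extendable to $\pi(\phi(y))$ in the direction from $M'$ to $M$, i.e.\ $\phi(y)$'s projection shares edges with $P_\alpha$ that lie further along the $M$-side than the edge $e$. This is the geometric condition defining $\Gamma_2$. The ``only if'' direction follows from Lemma~\ref{lem : respecting nearest cuts}: if the projections do not extend past $M$ in the claimed way, then a nearest $S$-mincut argument produces a cut separating $\phi(y)$ from the $s$-side of our cut, forcing $\phi(y)$'s SCC to lie after $\mu$'s in $\tilde\tau_e$. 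The ``if'' direction is obtained by the dual sub-modularity argument (Lemma~\ref{submodularity of cuts}(1)) applied to nearest tight $S$-mincuts at the edges shared between $\pi(\phi(y))$ and $P_\alpha$.

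\medskip
\noindent
The main obstacle I anticipate is the geometric equivalence in step (iii)---matching the combinatorial notion of a prefix of $\tilde\tau_e$ with the Skeleton-based notion of ``extendable from $M'$ to $M$''. The delicate part is to show that the edges of $\pi(\phi(y))$ that go past $M$ are \emph{exactly} the ones that place $\phi(y)$ into $U_k$, independent of which particular topological ordering $\tilde\tau_e$ was chosen. This independence is what makes the ${\mathcal O}(n)$ space storage possible: only the relative order of stretched units within each group needs to be stored, while cross-group relationships are decoded from the Skeleton and projection mapping. The proof will therefore hinge on showing that these cross-group predecessor relations in ${\mathcal D}_{PQ}(G(e))$ are forced by the topology of proper paths in ${\mathcal H}_S$, which should follow by iterating the sub-modularity-based argument of Lemma~\ref{lem : respecting nearest cuts} along any shared edge of the two proper paths.
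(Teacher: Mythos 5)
The paper does not prove this lemma at all; it is cited verbatim from Baswana and Pandey as Lemma~3.3 of \cite{DBLP:conf/soda/BaswanaP22}, so there is no internal proof to compare your attempt against. Evaluating your proposal on its own merits, the high-level plan (pass to $G(e)$ via Lemma~\ref{lem : G(e) lemma}, move to the Picard--Queyranne DAG, and identify the claimed set with a closed set of ${\mathcal D}_{PQ}(G(e))$) is the natural one, but as written there are genuine gaps.

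The central problem is step~(iii). You want to show that $C(M,e)\cup\Gamma_1\cup\Gamma_2$ equals a specific prefix $U_k$ of one fixed topological ordering $\tilde\tau_e$, and to do so you assert a biconditional: a cross-group unit $\phi(y)$ lies before $\mu$ in $\tilde\tau_e$ if and only if $\pi(\mu)$ is extendable to $\pi(\phi(y))$ in the direction from $M'$ to $M$. You flag this yourself as ``the main obstacle,'' and indeed it is not established. Worse, the prefix identification is more than what the lemma requires and can fail: a $1$-transversal cut of ${\mathcal D}_{PQ}(G(e))$ only needs to be \emph{downward closed}, not a prefix of any particular topological ordering, and SCCs of cross-group units that are incomparable to $\mu$'s SCC in the DAG partial order can appear on either side of $\mu$ in $\tilde\tau_e$. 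Since $\Gamma_2$ is defined purely geometrically from the Skeleton (independently of which $\tilde\tau_e$ was stored), whereas $U_k$ depends on the arbitrary tie-breaking inside $\tilde\tau_e$, the exact equality $X=U_k$ need not hold. The argument should instead verify directly that $C(M,e)\cup\Gamma_1\cup\Gamma_2$ is a closed set (every predecessor of a node in the set is in the set), then check $u\in X$ and $v\notin X$; trying to pin it to a prefix introduces a false rigidity. The appeals to Lemma~\ref{lem : respecting nearest cuts} and sub-modularity for the ``only if'' and ``if'' directions are plausible ingredients, but no actual chain of deductions is given. Two further loose ends: the paper's ${\mathcal D}_{PQ}(G')$ has source $T'$ and sink $S'$, while your write-up treats $S'$ as the source, so orientations need to be straightened; and the lemma silently covers the case where $e$ is a cycle-edge (the ``analysis is similar'' remark in the paper), which your reduction to $G(e)$ never addresses since a single cycle-edge alone does not define a bunch.

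Finally, the claim that $u\in\Gamma_2$ because ``$\pi(\mu)$ is trivially extendable to $\pi(\mu)$ in any direction'' is asserted without checking the directional clause ``from $M'$ to $M$'' in the definition of extendability; if $\Gamma_1$ uses strict precedence this is the only way $u$ can end up in the cut, so it deserves a one-line verification rather than a parenthetical remark.
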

Exploiting Lemma \ref{lem : projection mapping} and Lemma \ref{lem : tight cut reporting}, it is easy to observe that the set $C(M,e)\cup \Gamma_1 \cup \Gamma_2$ can be reported in ${\mathcal O}(n)$ time. So, it follows from Lemma \ref{lem : lemma to report s mincut separating pair of vertices} that there is an ${\mathcal O}(n)$ space data structure that can report an $S$-mincut separating any given pair of vertices in ${\mathcal O}(n)$ time. This completes the proof of Theorem \ref{thm : reporting $S$-mincut using skeleton and projection mapping}. 

Suppose there is an edge between $u$ and $v$. By Theorem \ref{thm : reporting $S$-mincut using skeleton and projection mapping}, we can report an $S$-mincut in which edge $(u,v)$ is contributing in ${\mathcal O}(n)$ time. So, for handling the failure of an edge, Theorem \ref{thm : single edge Sensitivity Oracle} is a simple corollary of this result.

\end{document}